\DeclareRobustCommand{\Udots}{%
  \vcenter{\offinterlineskip
    \halign{%
      \hbox to .8em{##}\cr
      \hfil.\cr\noalign{\kern.2ex}
      \hfil.\hfil\cr\noalign{\kern.2ex}
      .\hfil\cr}%
  }%
}
\newcommand{\eq}[1]{\hyperref[eq:#1]{(\ref*{eq:#1})}}
\renewcommand{\sec}[1]{\hyperref[sec:#1]{Section~\ref*{sec:#1}}}
\newcommand{\thm}[1]{\hyperref[thm:#1]{Theorem~\ref*{thm:#1}}}
\newcommand{\lem}[1]{\hyperref[lem:#1]{Lemma~\ref*{lem:#1}}}
\newcommand{\cor}[1]{\hyperref[cor:#1]{Corollary~\ref*{cor:#1}}}
\newcommand{\app}[1]{\hyperref[app:#1]{Appendix~\ref*{app:#1}}}
\newcommand{\tabl}[1]{\hyperref[tab:#1]{Table~\ref*{tab:#1}}}
\newcommand{\defin}[1]{\hyperref[def:#1]{Definition~\ref*{def:#1}}}
\newcommand{\fig}[1]{\hyperref[fig:#1]{Figure~\ref*{fig:#1}}}
\newcommand{\clm}[1]{\hyperref[clm:#1]{Claim~\ref*{clm:#1}}}
\newcommand{\conj}[1]{\hyperref[conj:#1]{Conjecture~\ref*{conj:#1}}}
\newcommand{\rem}[1]{\hyperref[rem:#1]{Remark~\ref*{rem:#1}}}
\newcommand{\para}[1]{\hyperref[para:#1]{Paragraph~\ref*{para:#1}}}
\def\ket#1{{\lvert}#1\rangle}
\def\bra#1{{\langle}#1\rvert}
\def\braket#1#2{{{\langle}#1\vert}#2\rangle}
\def\abs#1{\left| #1 \right|}
\def\norm#1{\left\| #1 \right\|}
\newcommand{\eps}{\varepsilon}
\def\w{{\sf w}}
\title{Multidimensional Quantum Walks, with Application to \texorpdfstring{$k$}{k}-Distinctness}
\begin{document}

\maketitle

\begin{abstract}
While the quantum query complexity of $k$-distinctness is known to be $O(n^{\frac{3}{4}-\frac{1}{4}\frac{1}{2^k-1}})$ for any constant $k\geq 4$ [Belovs, FOCS 2012], the best previous upper bound on the time complexity was $\widetilde{O}(n^{1-1/k})$. We give a new upper bound of $\widetilde{O}(n^{\frac{3}{4}-\frac{1}{4}\frac{1}{2^k-1}})$ on the time complexity, matching the query complexity up to polylogarithmic factors. In order to achieve this upper bound, we give a new technique for designing quantum walk search algorithms, which is an extension of the electric network framework. We also show how to solve the welded trees problem in $O(n)$ queries and $O(n^2)$ time using this new technique, showing that the new quantum walk framework can achieve exponential speedups.  
\end{abstract}

\section{Introduction}

In the problem of \emph{element distinctness}, the input is a list of $n$ integers, and the output is a bit indicating whether the integers are all distinct, or there exists a pair of integers that are the same, called a \emph{collision}. This problem has been studied as a fundamental problem in query complexity, but also for its relationship to other more practical problems, such as sorting, or \emph{collision finding}, which is similar, but one generally assumes there are many collisions and one wants to find one. In the worst case, element distinctness requires $\Theta(n)$ classical queries~\cite{ajtai2005EDlowerbound}. 

The first quantum algorithm to improve on this was a $O(n^{3/4})$ query algorithm~\cite{buhrman2001ElementDistinctness}, which is a variation of an optimal quantum algorithm for collision finding~\cite{brassard1997collision}, whose main technique is amplitude amplification~\cite{brassard2002AmpAndEst}. The algorithm of~\cite{buhrman2001ElementDistinctness} could also be implemented time efficiently, in $\widetilde{O}(n^{3/4})$ steps, with a log factor overhead from storing large subsets of the input in a sorted data structure. This was later improved to $O(n^{2/3})$ queries, and $\widetilde{O}(n^{2/3})$ time by Ambainis~\cite{ambainis2004QWalkForElementDist}, which is optimal~\cite{aaronson2004QLowerBndCollisionAndElementDistinct}. Ambainis' algorithm has been modified to solve other problems in various domains, from $k$-sum~\cite{childs2005quantum}, to path finding in isogeny graphs~\cite{tani2009claw,
costello2016isogeny}. Moreover, this algorithm was a critical step in our understanding of quantum query complexity, and quantum algorithms in general, as the algorithm used a new technique that was later generalised by Szegedy into a generic speedup for random walk search algorithms of a particular form~\cite{szegedy2004QMarkovChainSearch}. 

For any constant integer $k\geq 2$, the problem \emph{$k$-distinctness} is to decide if an input list of integers contains $k$ copies of the same integer. When $k=2$, this is exactly element distinctness. Ambainis~\cite{ambainis2004QWalkForElementDist} actually gave a quantum algorithm for $k$-distinctness for any $k\geq 2$, with query complexity $O(n^{1-1/(k+1)})$, and time complexity $\widetilde{O}(n^{1-1/(k+1)})$. For $k\geq 3$, Belovs gave an improved quantum query upper bound of $O(n^{3/4-\frac{1}{4}\frac{1}{2^k-1}})$~\cite{belovs2012kDist}, however, this upper bound was not constructive. Belovs proved this upper bound by exhibiting a dual adversary solution, which can be turned into a quantum algorithm that relies on controlled calls to a particular unitary. This unitary can be implemented in one query, but actually implementing this algorithm requires giving an efficient circuit for the unitary, which is not possible in general. This is analogous to being given a classical table of values, but no efficient circuit description. While it seems reasonable to guess that the time complexity of $k$-distinctness should not be significantly higher than the query complexity -- what could one possibly do aside from querying and sorting well-chosen sets of inputs? -- the problem of finding a matching time upper bound was open for ten years.

In the meantime, lower bounds of $\Omega(n^{\frac{3}{4}-\frac{1}{2k}})$ for $k\geq 3$~\cite{bun2018PolyMethodStrikesBack} and $\Omega(n^{\frac{3}{4}-\frac{1}{4k}})$ for $k\geq 4$~\cite{mande2020kDistLB} were exhibited. Progress was also made for the $k=3$ case. Two simultaneous works,~\cite{belovs2013ElectricWalks} and~\cite{childs2013arXivTimeEfficientQW3Distintness} (published together as~\cite{belovs2013TimeEfficientQW3Distintness}),
gave a $\widetilde{O}(n^{5/7})$ time upper bound for 3-distinctness. Ref.~\cite{belovs2013ElectricWalks} achieved this bound using a generalization of Szegedy's quantum walk framework, called the \emph{electric network framework}. Ref.~\cite{childs2013arXivTimeEfficientQW3Distintness} used the MNRS quantum walk framework~\cite{magniez2006SearchQuantumWalk}, and could also be generalised to give a slight improvement on the time upper bound to $\widetilde{O}(n^{1-1/k})$ for any $k>3$~\cite{jeffery2014thesis}. 

In this work, we give an upper bound of $\widetilde{O}(n^{\frac{3}{4}-\frac{1}{4}\frac{1}{2^k-1}})$ on the time complexity of $k$-distinctness, matching the best known query upper bound up to polylogarithmic factors. We do this using ideas from Belovs' query upper bound in a new framework for quantum walk algorithms, the \emph{multidimensional quantum walk framework}, which is an extension of the electric network framework -- the most general of the quantum walk frameworks~\cite{apers2019UnifiedFrameworkQWSearch}. We give a high-level overview of this extension in \sec{intro-QW}. 

Quantum walk search frameworks, discussed more in \sec{intro-QW}, are important because they allow one to design a quantum algorithm by first designing a classical random walk algorithm of a particular form, which can be compiled into an often faster quantum algorithm.  
While quantum walk frameworks make it extremely easy to design quantum algorithms, even without an in-depth knowledge of quantum computing, as evidenced by their wide application across domains, the major drawback is that they can achieve at most a quadratic speedup over the best classical algorithm. This is because a quantum walk search algorithm essentially takes a classical random walk algorithm, and produces a quantum algorithm that is up to quadratically better. 

This drawback does not hold for the multidimensional quantum walk framework. We give a quantum algorithm in our framework that solves the \emph{welded trees} problem in $O(n)$ queries and $O(n^2)$ time, which is an exponential speedup over the classical lower bound of $2^{\Omega(n)}$~\cite{childs2003ExpSpeedupQW}. While a poly$(n)$ quantum algorithm based on continuous-time quantum walks was already known, this proof-of-concept application shows that our framework is capable of exponential speedups. We emphasise that unlike the quantum walk search frameworks mentioned here that give generic speedups over classical random walk algorithms, continuous-time quantum walks are not easily designed and analysed, and their applications have been limited (with some exceptions based on converting quantum walk search algorithms into continuous-time quantum walks, such as~\cite{apers2022quadratic}). Our multidimensional quantum walk framework, as a generalization of the electric network framework, is in principle similarly easy to apply, but with the potential for significantly more dramatic speedups.

\subsection{Quantum Walks}\label{sec:intro-QW}

We give a brief overview of previous work on quantum walk search algorithms, with sufficient detail to understand, at a high level, the improvements we make, before describing these improvements at the end of this section.

The first quantum walk search framework is due to Szegedy~\cite{szegedy2004QMarkovChainSearch}, and is a generalization of the technique used by Ambainis in his element distinctness algorithm~\cite{ambainis2004QWalkForElementDist}. The framework can be described in analogy to a classical random walk algorithm that first samples an initial vertex according to the stationary distribution $\pi$ of some random walk (equivalently, reversible Markov process) $P$, and repeatedly takes a step of the random walk by sampling a neighbour of the current vertex, checking each time if the current vertex belongs to some \emph{marked set} $M$. Let $HT(P,M)$ be the hitting time, or the expected number of steps needed by a walker starting from $\pi$ to reach a vertex in $M$.
If ${\sf S}$ is the cost of sampling from $\pi$, ${\sf U}$ is the cost of sampling a neighbour of any vertex, ${\sf C}$ is the cost of checking if a vertex is marked, and $H$ is an upper bound on $HT(P,M)$ assuming $M\neq\emptyset$, then this classical algorithm finds a marked vertex with bounded error in complexity:
$$O({\sf S}+H({\sf U}+{\sf C})).$$
Szegedy showed that given such a $P$ and $M$, if ${\sf S}$ is the cost of coherently\footnote{Technically the classical ${\sf S}$ and ${\sf U}$ might be different from the quantum ones, but in practice they are often similar.} sampling from $\pi$, i.e.~generating $\sum_u\sqrt{\pi(u)}\ket{u}$, and ${\sf U}$ is the cost of generating, for any $u$, the superposition over its neighbours $\sum_v\sqrt{P_{u,v}}\ket{v}$, then there is a quantum algorithm that detects if $M\neq \emptyset$ with bounded error in complexity:
$$O({\sf S}+\sqrt{H}({\sf U}+{\sf C})).$$
This result was extended to the case of \emph{finding} a marked vertex, rather than just \emph{detecting} a marked vertex in~\cite{ambainis2019QuadSpeedupFindingMarkedQW}. This framework and subsequent related frameworks have been widely applied, because this is a very simple way to design a quantum algorithm. 

Belovs generalised this framework to the \emph{electric network framework}, by allowing the initial state to be $\ket{\sigma}=\sum_u\sqrt{\sigma(u)}\ket{u}$ for \emph{any} distribution $\sigma$, analogous to starting a random walk in some arbitrary initial distribution. Then if ${\sf S}_\sigma$ is the cost to generate $\ket{\sigma}$, there is a quantum algorithm that detects a marked vertex with bounded error in complexity:
$$O({\sf S}_\sigma+\sqrt{C}({\sf U}+{\sf C})),$$
where $C$ is a quantity that may be the same, or much larger than the hitting time of the classical random walk starting at $\sigma$. For example, if $\sigma=\pi$, then $C=H$ as above, but when $\sigma$ is supported on a single vertex $s$, and $M=\{t\}$, $C$ is the \emph{commute time} from $s$ to $t$~\cite{chandra1996ElectricalResAndCommute}, which is the expected number of steps needed to get from $s$ to $t$, and then back to $s$. If the hitting time from $s$ to $t$ is the same as the hitting time from $t$ to $s$, this is just twice that hitting time. However, in some cases the hitting time from $t$ to $s$ may be significantly larger than the hitting time from $s$ to $t$. 

A second incomparable quantum walk search framework that is similarly easy to apply is the MNRS framework~\cite{magniez2006SearchQuantumWalk}. Loosely speaking, this is the quantum analogue of a classical random walk that does not check if the current vertex is marked at every step, but rather, only after sufficiently many steps have been taken so that the current vertex is independent of the previously checked vertex. Ref.~\cite{apers2019UnifiedFrameworkQWSearch} extended the electric network framework to be able to \emph{find} a marked vertex, and also showed that the MNRS framework can be seen as a special case of the resulting framework. Thus, the finding version of the electric network framework captures all quantum walk search frameworks in one unified framework. 

We now discuss, at a high level, how a quantum walk search algorithm works -- particularly in the electric network framework (but others are similar)\footnote{We discuss the classic construction of such algorithms, without modifications that were more recently made in \cite{ambainis2019QuadSpeedupFindingMarkedQW} and \cite{apers2019UnifiedFrameworkQWSearch} to not only detect, but find.}. We will suppose for simplicity that $\sigma$ is supported on a single vertex $s$, and either $M=\emptyset$ or $M=\{t\}$. 
Fix a graph $G$, possibly with weighted edges, such that $s,t\in V(G)$. It is simplest if we imagine that $G$ is bipartite, so let $V(G)=V_{\cal A}\cup V_{\cal B}$ be a bipartition, with $s\in V_{\cal A}$. Let $G'$ be the graph $G$ with a single extra vertex $v_0$, connected to $s$, and connected to $t$ if and only if $t\in M$. 
For $u\in V_{\cal A}$, define \emph{star states}:
$$\ket{\psi_\star^{G'}(u)} = \sum_{v\in V_{\cal B}\cup\{v_0\}:\{u,v\}\in E(G')}\sqrt{\w_{u,v}}\ket{u,v},$$
where $\w_{u,v}$ is the weight of the edge $\{u,v\}$. If we normalise this state, we get $\sum_v\sqrt{P_{u,v}}\ket{u,v}$, where $P$ is the transition matrix of the random walk on $G'$. For $v\in V_{\cal B}$, define:
$$\ket{\psi_\star^{G'}(v)} = \sum_{u\in V_{\cal A}\cup\{v_0\}:\{u,v\}\in E(G')}\sqrt{\w_{u,v}}\ket{u,v}.$$
Let
\begin{align*}
{\cal A} &:= \mathrm{span}\{\ket{\psi_\star^{G'}(u)}: u\in V_{\cal A}\}
\;\mbox{ and }\;{\cal B} := \mathrm{span}\{\ket{\psi_\star^{G'}(v)}: v\in V_{\cal B}\}.
\end{align*}
Then a quantum walk algorithm works by performing phase estimation of the unitary 
$$U_{\cal AB}:=(2\Pi_{\cal A}-I)(2\Pi_{\cal B}-I)$$
on initial state $\ket{s,v_0}$ to some sufficiently high precision -- this precision determines the complexity of the algorithm. Let us consider why this algorithm can distinguish $M=\emptyset$ from $M=\{t\}$. 

First suppose $M=\{t\}$. Assume there is a path from $s$ to $t$ in $G$ (otherwise a random walk from $s$ will never find $t$), which means there is a cycle in $G'$ containing the edge $(v_0,s)$, obtained by adding $(t,v_0)$ and $(v_0,s)$ to the $st$-path in $G$. We can define a cycle state for a cycle $u_1,\dots,u_d=u_1$ as:
$$\sum_{i=1}^{d-1}\frac{\ket{e_{u_i,u_{i+1}}}}{\sqrt{\w_{u_i,u_{i+1}}}}
\mbox{ where }
\ket{e_{u,v}}:= \left\{\begin{array}{ll}
\ket{u,v} & \mbox{if }(u,v)\in V_{\cal A}\times V_{\cal B}\mbox{ or }v=v_0\\
-\ket{v,u} & \mbox{if }(u,v)\in V_{\cal B}\times V_{\cal A}\mbox{ or }u=v_0.
\end{array}\right.$$
A cycle state is orthogonal to all star states: if the cycle goes through a vertex $u$, it is supported on 2 of the edges adjacent to $u$: one contributing $-1$ because it goes into $u$, and the other $+1$ because it comes out of $u$. Thus, a cycle state is in the $(+1)$-eigenspace of $U_{\cal AB}$. 
If there is a cycle that uses the edge $(v_0,s)$, then it has non-zero overlap with the initial state $\ket{s,v_0}$, and so the initial state has non-zero overlap with the $(+1)$-eigenspace of $U_{\cal AB}$, and so the phase estimation algorithm will have a non-zero probability of outputting a phase estimate of 0. The shorter the cycle (i.e.~the shorter the $st$-path) the greater this overlap is relative to the size of the cycle state. We can make a similar argument if we take not just a single $st$-path in $G$, but a superposition of paths called an $st$-flow. Then the \emph{energy} of this flow (see \defin{flow}) controls the probability of getting Ta phase estimate of 0. The minimum energy of a unit flow from $s$ to $t$ is called the \emph{effective resistance} between $s$ and $t$, denoted ${\cal R}_{s,t}(G)$.

On the other hand, suppose $M=\emptyset$. Then we claim that 
$$\ket{s,v_0} = \sum_{u\in V_{\cal A}}\ket{\psi_\star^{G'}(u)} - \sum_{v\in V_{\cal B}}\ket{\psi_\star^{G'}(v)}\in {\cal A}+{\cal B} = ({\cal A}^\bot\cap {\cal B}^\bot)^\bot.$$
Since $\ket{s,v_0}$ also only overlaps with the star state of $s \in V_{\cal A}$, it is orthogonal to ${\cal B}$, and Thus, to ${\cal A}\cap {\cal B}$. Combined, this means that our initial state has no overlap with the $(+1)$-eigenspace of $U_{\cal AB}$, which is exactly $({\cal A}\cap{\cal B})\oplus ({\cal A}^\bot\cap {\cal B}^\bot)$, so if we could do phase estimation with infinite precision, the probability we would measure a phase estimate of 0 would be 0. Our precision is not infinite, but using a linear algebraic tool called the \emph{effective spectral gap lemma}, we can show that precision proportional to
$$\textstyle\norm{{\displaystyle\sum}_{u\in V_{\cal A}}\ket{\psi_\star^{G'}(u)}}^2 = {\displaystyle\sum}_{e\in G'}\w_e=:{\cal W}(G)$$
is sufficient. 

Combining these two analyses for the $M=\{t\}$ and $M=\emptyset$ case yield (in a non-obvious way) that approximately $\sqrt{\cal RW}$ steps of the quantum walk is sufficient, if ${\cal R}$ is an upper bound on ${\cal R}_{s,t}(G)$ whenever $M=\{t\}$, and ${\cal W}$ is an upper bound on ${\cal W}(G)$ whenever $M=\emptyset$. A nice way to  interpret this is that the quantity ${\cal R}_{s,t}(G){\cal W}(G)$ is equal to the \emph{commute time} from $s$ to $t$ -- the expected number of steps a random walker starting from $s$ needs to reach $t$, and then return to $s$. For a discussion of how to interpret this quantity in the case of more general $\sigma$ and $M$, see \cite{apers2019UnifiedFrameworkQWSearch}. 

\paragraph{The Multidimensional Quantum Walk Framework:} We extend this algorithm in two ways:
\begin{description}
\item[Edge Composition] To implement the unitary $U_{\cal AB}$, we perform a mapping that acts, for any $u\in V_{\cal A}$, as $\ket{u,0}\mapsto \ket{\psi_\star^{G'}(u)}$ (up to normalization), and a similar mapping for $v\in V_{\cal B}$. Loosely speaking, what this usually means is that we have a labelling of the edges coming out of $u$, and some way of computing $(u,v)$ from $(u,i)$, where $v$ is the $i$-th neighbour of $u$. If this computation costs ${\sf T}_{u,i}$ steps, then it takes $O(\max_{u,i}{\sf T}_{u,i})$ steps to implement $U_{\cal AB}$. However, in case this cost varies significantly over different $u,i$, we can do much better. We show how we can obtain a unitary with polylogarithmic cost, and essentially consider, in the analysis of the resulting algorithm, a quantum walk on a modified graph in which an edge $\{u,v\}$, where $v$ is the $i$-th neighbour of $u$, is replaced by a path of length ${\sf T}_{u,i}$.  A similar thing was already known for \emph{learning graphs}, when a transition could be implemented with ${\sf T}_{u,i}$ \emph{queries}~\cite{belovs2012LG}. This is an extremely useful, if not particularly surprising, feature of the framework, which we use in our application to $k$-distinctness. 
\item[Alternative Neighbourhoods] The more interesting way we augment the electric network framework is to allow the use of \emph{alternative neighbourhoods}. In order to generate the star state of a vertex $u$, which is a superposition of the edges coming out of $u$, one must, in some sense, know the neighbours of $u$, as well as their relative weights. In certain settings, the algorithm will know that the star state for $u$ is one of a small set of easily preparable states $\Psi_\star(u)=\{\ket{\psi_\star^1(u)},\ket{\psi^2_\star(u)},\dots\}$, but computing precisely which one of these is the correct state would be computationally expensive. In that case, we include all of $\Psi_\star(u)$ when constructing the spaces ${\cal A}$ and ${\cal B}$. In the case when $M=\emptyset$, the analysis is the same -- by increasing ${\cal A}+{\cal B}$, we have only made the analysis easier. However, in the case $M\neq \emptyset$, the analysis has become more constrained. For the analysis of this case, we used a circulation, because it is orthogonal to all star states. However, now there are some extra states in ${\cal A}+{\cal B}$, and we need to take extra care to find a circulation that is also orthogonal to these. 
\end{description}
The alternative neighbourhoods technique is best understood through examples, of which we shortly describe two. We first remark on the unifying idea from which both these techniques follow. 

If we let $\{\ket{\psi_\star(u)}\}_{u\in V}$ be \emph{any} set of states, we can make a graph $G$ on $V$ by letting $u$ and $v$ be adjacent if and only if $\braket{\psi_\star(u)}{\psi_\star(v)}\neq 0$. Then, if this graph is bipartite, and we can reflect around the span of each state individually, we can reflect around $\mathrm{span}\{\ket{\psi_\star(u)}:u\in V\}$. Quantum walk search algorithms can be seen as a special case of this, where we additionally exploit the structure of the graph to analyse the complexity of this procedure. One way of viewing alternative neighbourhoods is that we extend this reasoning to the case where we have \emph{spaces} $\{\mathrm{span}\{\Psi_\star(u)\}\}_{u\in V}$, each of which we can efficiently reflect around, and $G$ is now a bipartite graph encoding the overlap of the \emph{spaces}, hence the qualifier \emph{multidimensional}.

Edge composition also exploits this picture. We can define a sequence of subspaces $\{\Psi_t^{u,v}\}_{t=1}^{{\sf T}_{u,i}}$ that only overlap for adjacent $t$, and such that the subroutine computing $\ket{v,j}$ from $\ket{u,i}$ can be seen as moving through these spaces. Now the overlap graph of all these spaces will look like $G$, except with each edge $(u,v)$ replaced by a path of length ${\sf T}_{u,i}$.
See \fig{spaces-graph} and \fig{welded-spaces} for examples of such overlap graphs.

Before moving on to our examples, we comment that unlike the finding version of the electric network framework~\cite{apers2019UnifiedFrameworkQWSearch}, our extension does not allow one to find a marked vertex, but only to detect if there is one or not. We leave extending our framework to finding as future work. 

\subsection{Welded Trees}

We motivate the alternative neighbourhoods modification by an application to the welded trees problem~\cite{childs2003ExpSpeedupQW}. In the welded trees problem, the input is an oracle $O_G$ for a graph $G$ with $s,t\in V(G)\subset \{0,1\}^{2n}$. Each of $s$ and $t$ is the root of a full binary tree with $2^n$ leaves, and we connect these leaves with a pair of random matchings. This results in a graph in which all vertices except $s$ and $t$ have degree 3, and $s$ and $t$ each have degree 2. Given a string $u\in\{0,1\}^{2n}$, the oracle $O_G$ returns $\bot$ if $u\not\in V(G)$, which is true for all but at most a $2^{-n+2}$ fraction of strings, and otherwise it returns a list of the 2 or 3 neighbours of $u$. We assume $s=0^{2n}$, so we can use $s$ as our starting point, and the goal is to find $t$, which we can recognise since it is the only other vertex with only 2 neighbours. The classical query complexity of this problem is $2^{\Omega(n)}$~\cite{childs2003ExpSpeedupQW}. Intuitively that is because this problem is set up so that a classical algorithm has no option but to do a random walk, starting from $s$, until it hits $t$. However, this takes $2^{\Omega(n)}$ steps, because wherever a walker is in the graph, the probability of moving towards the centre, where the leaves of the two trees are connected, is twice the probability of moving away from the centre, towards $s$ or $t$. So a walker quickly moves from $s$ to the centre, but then it takes exponential time to escape to $t$. 

While we know there is a quantum algorithm that solves this problem in ${\sf poly}(n)$ queries\footnote{The best previous query complexity was $O(n^{1.5})$~\cite{atia2021welded}, although it is likely that continuous time quantum walks could also be used to solve this problem in $O(n)$ queries.} to $O_G$~\cite{childs2003ExpSpeedupQW}, if we try to reproduce this result in the electric network framework, we will get an exponential-time algorithm, essentially because the total weight of the graph is exponential. 

Suppose we could add weights to the edges of $G$, so that at any vertex $u$, the probability of moving towards the centre or away from the centre were the same: that is, if $\w$ is the weight on the edge from $u$ to its \emph{parent}, then the other two edges should have weight $\w/2$. This would already be very helpful for a classical random walk, however, a bit of thought shows that this is not possible to implement. By querying $u$, we learn the labels of its three neighbours, $v_1,v_2,v_3$,  which are random $2n$-bit strings, but we get no indication which is the parent. However, we know that the correct star state in the weighted graph that we would like to be able to walk on is proportional to one of the following:
\begin{align*}
&\ket{u,v_1}+\frac{1}{2}\ket{u,v_2}+\frac{1}{2}\ket{u,v_3},
\quad
\ket{u,v_2}+\frac{1}{2}\ket{u,v_1}+\frac{1}{2}\ket{u,v_3},
\quad\mbox{or}\quad
\ket{u,v_3}+\frac{1}{2}\ket{u,v_1}+\frac{1}{2}\ket{u,v_2}.
\end{align*}
Thus, we add all three states (up to some minor modifications) to $\Psi_\star(u)$, which yields an algorithm that can learn any bit of information about $t$ in $O(n)$ queries. By composing this with the Bernstein-Vazirani algorithm we can find $t$. For details, see \sec{welded}. 

We emphasise that our application to the welded trees problem does not use the edge composition technique. It would be trivial to embed any known exponential speedup in our framework by simply embedding the exponentially faster quantum algorithm in one of the edges of the graph, but we are able to solve the welded trees problem using only the alternative neighbourhoods idea.

\subsection{3-Distinctness}\label{sec:intro-3-dist}

We describe an attempt at a quantum walk algorithm for 3-distinctness, how it fails, and how the Multidimensional Quantum Walk Framework comes to the rescue. While our result for $k=3$ is not new, our generalization to $k>3$ is, and the case of $k=3$ is already sufficient to illustrate our techniques. 
Formally, the problem of 3-distinctness is: given a string $x\in [q]^n$, output a 1 if and only if there exist distinct $a_1,a_2,a_3\in [n]$ such that $x_{a_1}=x_{a_2}=x_{a_3}$. We make the standard simplifying assumptions (without loss of generality) that if such a 3-collision exists, it is unique, and moreover, there is an equipartition $[n]=A_1\cup A_2\cup A_3$ such that $a_1\in A_1$, $a_2\in A_2$ and $a_3\in A_3$. 

We now describe a graph that will be the basis for a quantum walk attempt. A vertex $v_{R_1,R_2}$ is described by a pair of sets $R_1\subset A_1$ and $R_2\subset A_2$.  $v_{R_1,R_2}$ stores these sets, as well as input-dependent \emph{data} consisting of the following:
\begin{itemize}
\item Queried values for all of $R_1$: $D_1(R):=\{(i,x_i):i\in R_1\}$.
\item Queried values for those elements of $R_2$ that have a match in $R_1$:
\[D_2(R):=\{(i_1,i_2,x_{i_1}): i_1\in R_1, i_2\in R_2, x_{i_1}=x_{i_2}\}.\]
\end{itemize}By only keeping track of the values in $R_2$ that have a match in $R_1$, we save the cost of initially querying the full set $R_2$.
The vertices will be in 4 different classes, for some parameters $r_1$ and $r_2$ with $r_1\ll r_2$:
\begin{align*}
V_0 &=\{v_{R_1,R_2}: |R_1|=r_1, |R_2|=r_2\}\\
V_1 &= \{v_{R_1,R_2}: |R_1|=r_1+1, |R_2|=r_2\}\\
V_2 &= \{v_{R_1,R_2}: |R_1|=r_1+1, |R_2|=r_2+1\}\\
V_3 &= \{v_{R_1,R_2,i_3}: |R_1|=r_1+1, |R_2|=r_2+1, i_3\in A_3\}.
\end{align*}
The vertices $v_{R_1,R_2,i_3}\in V_3$ are just like the vertices in $V_2$, except there is an additional index $i_3\in A_3$ stored. We connect vertices in $V_{\ell}$ and $V_{\ell+1}$ in the obvious way: $v_{R_1,R_2}\in V_{\ell}$ is adjacent to $v_{R_1',R_2'}\in V_{\ell+1}$ if and only if $R_1\subseteq R_1'$ and $R_2\subseteq R_2'$ (exactly one of these inclusions is proper); and $v_{R_1,R_2}\in V_2$ is adjacent to $v_{R_1,R_2,i_3}\in V_3$ for any $i_3\in A_3$ (see \fig{3-dist-graph}).

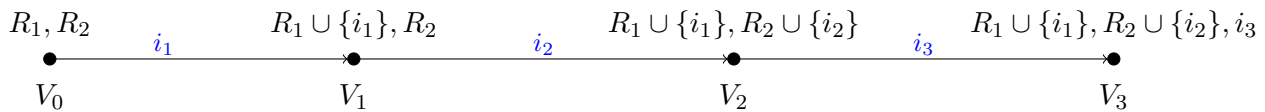
\begin{figure}
\centering
\begin{tikzpicture}[scale=1.2]
\filldraw (1,0) circle (.08);	\draw[-{Latex[length=2mm, width=2mm]}] (1,0) -- (4.92,0);
\filldraw (5,0) circle (.08);	\draw[-{Latex[length=2mm, width=2mm]}] (5,0) -- (9.92,0);
\filldraw (10,0) circle (.08);	\draw[-{Latex[length=2mm, width=2mm]}] (10,0) -- (14.92,0);
\filldraw (15,0) circle (.08);	

\node at (1,.45) {$R_1,R_2$};
\node at (1,-.5) {$V_0$};

	\node at (2.5,.2) {\small\color{blue}$i_1$};

\node at (5,.45) {$R_1\cup\{i_1\},R_2$};
\node at (5,-.5) {$V_1$};

	\node at (7.5,.2) {\small\color{blue}$i_2$};

\node at (10,.45) {$R_1\cup\{i_1\},R_2\cup\{i_2\}$};
\node at (10,-.5) {$V_2$};

	\node at (12.5,.2) {\small\color{blue}$i_3$};

\node at (15,.45) {$R_1\cup\{i_1\},R_2\cup\{i_2\},i_3$};
\node at (15,-.5) {$V_3$};

\end{tikzpicture}
\caption{A sample path from $V_0$ to $V_3$ in our first attempt at a quantum walk for 3-distinctness. The indices shown in blue can be seen to label the edges.}\label{fig:k-dist-graph}\label{fig:3-dist-graph}
\end{figure}

We say a vertex $v_{R_1,R_2,i_3}\in V_3$ is marked if $a_1\in R_1$, $a_2\in R_2$, and $a_3=i_3$, where $(a_1,a_2,a_3)$ is the unique 3-collision. Thus, a quantum walk that decides if there is a marked vertex or not decides 3-distinctness. 

We imagine a quantum walk that starts in a uniform superposition over $V_0$. To construct this initial state, we first take a uniform superposition over all sets $R_1$ of $r_1$ indices, and query them. Next we take a uniform superposition over all sets $R_2$ of size $r_2$, but rather than query everything in $R_2$, we search for all indices in $R_2$ that have a match in $R_1$. This saves us the cost of querying all $r_2$ elements of $R_2$, which is important because we will set $r_2$ to be larger than the total complexity we aim for (in this case, $r_2 \gg n^{5/7}$), so we could not afford to spend so much time. However, we do not only care about query complexity, but also the total time spent on non-query operations, so we also do not want to spend time writing down the set $R_2$, even if we do not query it, which is the first problem with this approach:
\begin{description}
\item[Problem 1:] Writing down $R_2$ would take too long. 
\end{description}
The fix for Problem 1 is rather simple: we will not let $R_2$ be a uniform random set of size $r_2$. Instead, we will assume that $A_2$ is partitioned into $m_2$ blocks, each of size $n/(3m_2)$, and $R_2$ will be made up of $t_2:=3m_2r_2/n$ of these blocks. This also means that when we move from $V_1$ to $V_2$, we will add an entire block, rather than just a single index. The main implication of this is that when we move from $V_1$ to $V_2$, we will have to search the new block of indices that we are adding to $R_2$ for any index that collides with $R_1$. This means that transitions from $V_1$ to $V_2$ have a non-trivial cost, $n^{\eps}$ for some small constant $\eps$, unlike all other transitions, which have polylogarithmic cost. Naively we would incur a multiplicative factor of $n^\eps$ on the whole algorithm, but we avoid this because the edge composition technique essentially allows us to only incur the cost $n^{\eps}$ on the edges that actually incur this cost, and not on every edge in the graph. Otherwise, our solution to Problem 1 is technical, but not deep, and so we gloss over Problem 1 and its solution for the remainder of this high-level synopsis. This is the only place we use the edge composition part of the framework in our applications, but we suspect it can be used in much more interesting ways.

Moving on, in order to take a step from a vertex $v_{R_1,R_2}\in V_0$ to a vertex $v_{R_1\cup\{i_1\},R_2}\in V_1$, we need to select a uniform new index $i_1$ to add to $R_1$, and then also update the data we store with each vertex. That means we have to query $i_1$ and add $(i_1,x_{i_1})$ to $D_1(R)$, which is simple, and can be done in $O(\log n)$ basic operations as long as we use a reasonable data structure to store $D_1(R)$; and we also have to update $D_2(R)$ by finding anything in $R_2$ that collides with $i_1$. Since $R_2$ has not been queried, this latter update would require an expensive search, which we do not have time for, so we want to avoid this. However, if we do not search $R_2$ for any $i_2$ such that $x_{i_2}=x_{i_1}$, then whenever we add some $i_1$ that has a match in $R_2$, the data becomes incorrect, and we have introduced what is referred to in~\cite{belovs2012kDist} as a \emph{fault}. This is a serious issue, because if $i_1$ is the unique index in $R_1$ such that there exists $i_2\in R_2$ with $x_{i_1}=x_{i_2}$, but this is not recorded in $D_2(R)$, then $i_1$ is ``remembered'' as having been added after $i_2$. That is, the resulting vertex does not only depend on $R_1\cup\{i_1\},R_2$, but on $i_1$ as well. For quantum interference to happen, it is crucial that when we are at a vertex $v$, the state does not remember anything about how we got there. 
\begin{description}
\item[Problem 2:] When we add $i_1$ to $R_1$ without searching for a match in $R_2$, we may introduce a \emph{fault}. 
\end{description}Our handling of this is inspired by the solution to an analogous problem in the query upper bound of~\cite{belovs2012kDist}. We partition $R_1$ into three sets: $R_1(\{1\})$, $R_1(\{2\})$, and $R_1(\{1,2\})$; and $R_2$ into two sets $R_1(1)$ and $R_1(2)$. Then $D_2(R)$ will only store collisions $(i_1,i_2,x_{i_1})$ such that $x_{i_1}=x_{i_2}$ if $i_1\in R_1(S)$ and $i_2\in R_2(s)$ for some $s\in S$. This is shown in \fig{set-partitions}.

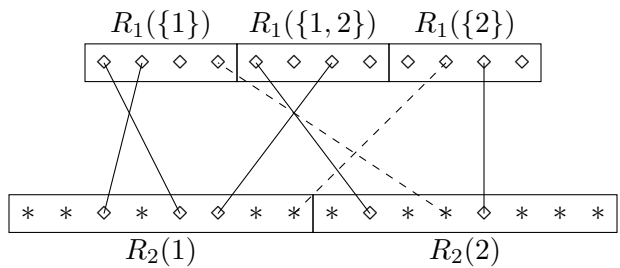
\begin{figure}
\centering
\begin{tikzpicture}[scale =1.2]
\node at (0,2) {$\diamond$};
\node at (.5,2) {$\diamond$};
\node at (1,2) {$\diamond$};
\node at (1.5,2) {$\diamond$};
\node at (2,2) {$\diamond$};
\node at (2.5,2) {$\diamond$};
\node at (3,2) {$\diamond$};
\node at (3.5,2) {$\diamond$};
\node at (4,2) {$\diamond$};
\node at (4.5,2) {$\diamond$};
\node at (5,2) {$\diamond$};
\node at (5.5,2) {$\diamond$};

\draw (-.25,2.25) rectangle (1.75,1.75);
\draw (1.75,2.25) rectangle (3.75,1.75);
\draw (3.75,2.25) rectangle (5.75,1.75);

\node at (.75,2.5) {$R_1(\{1\})$};
\node at (2.75,2.5) {$R_1(\{1,2\})$};
\node at (4.75,2.5) {$R_1(\{2\})$};

\node at (-1,0) {$*$};
\node at (-.5,0) {$*$};
\node at (0,0) {$\diamond$}; \draw (0,0)--(.5,2);
\node at (.5,0) {$*$};
\node at (1,0) {$\diamond$}; \draw (1,0)--(0,2);
\node at (1.5,0) {$\diamond$}; \draw (1.5,0)--(3,2);
\node at (2,0) {$*$};
\node at (2.5,0) {$*$}; \draw[dashed] (2.5,0)--(4.5,2);
\node at (3,0) {$*$};
\node at (3.5,0) {$\diamond$}; \draw (3.5,0)--(2,2);
\node at (4,0) {$*$};
\node at (4.5,0) {$*$};\draw[dashed] (4.5,0)--(1.5,2);
\node at (5,0) {$\diamond$}; \draw (5,0)--(5,2);
\node at (5.5,0) {$*$};
\node at (6,0) {$*$};
\node at (6.5,0) {$*$};

\draw (2.75,.25) rectangle (6.75,-.25);
\draw (-1.25,.25) rectangle (2.75,-.25);

\node at (.75,-.5) {$R_2(1)$};
\node at (4.75,-.5) {$R_2(2)$};
\end{tikzpicture}
\caption{The data we keep track of for a vertex $v_{R_1,R_2}$. $\diamond$ represents a queried index. $*$ represents an index whose query value is not stored. We only store the query value of an index in $R_2(s)$ if it collides with something in $R_1(\{s\})\cup R_1(\{1,2\})$, shown here by a solid line. If $i_2\in R_2(1)$ collides with some value in $R_1(\{2\})$, shown here by a dashed line, we do not record that, and do not store $x_{i_2}$.}\label{fig:set-partitions}
\end{figure}

Now when we add $i_1$ to $R_1$, we have three choices: we can add it to $R_1(\{1\})$, $R_1(\{2\})$, or $R_1(\{1,2\})$. Importantly, at least one of these choices does not introduce a fault. To see this, suppose there is some $i_2\in R_2$ such that $x_{i_1}=x_{i_2}$. We claim there can be at most one such index, because otherwise there would be a 3-collision in $A_1\cup A_2$, and we are assuming the unique 3-collision has one part in $A_3$. This leads to three possibilities:
\begin{description}
\item[Type 1:] $i_2\in R_2(2)$, in which case, adding $i_1$ to $R_1(\{1\})$ does not introduce a fault.
\item[Type 2:] $i_2\in R_2(1)$, in which case, adding $i_1$ to $R_1(\{2\})$ does not introduce a fault.
\item[Type 0:] There is no such $i_2$, in which case, adding $i_1$ to $R_1(\{1\})$ or $R_1(\{2\})$ or $R_1(\{1,2\})$ does not introduce a fault.
\end{description}
We modify the graph so that we first move from $v_{R_1,R_2}\in V_0$ to $v_{R_1,R_2,i_1}\in V_0^+$ by selecting a new $i_1\in A_1\setminus R_1$, and then move from $v_{R_1,R_2,i_1}$ to $v_{R_1\cup\{i_1\},R_2}\in V_1$ -- here there are three possibilities for $R_1\cup\{i_1\}$, depending on to which of the three parts of $R_1$ we add $i_1$. However, we will only add $i_1$ to a part of $R_1$ that does not introduce a fault. Thus, a vertex $v_{R_1,R_2,i_1}$ in $V_0^+$ has one edge leading back to $V_0$, and either one or three edges leading forward to $V_1$, as shown in \fig{alt-neighbourhoods-S}.

\begin{cfigure}
\centering
\begin{tikzpicture}[scale=1.2]
\node at (0,0) {
\begin{tikzpicture}[scale=1.2]
\draw[->] (0,0)--(.75,0); \draw (.75,0)--(1.5,0);		\draw[->] (1.5,0)--(2.25,.5); \draw (2.25,.5)--(3,1);
										\draw (1.5,0)--(2.25,0); \draw[<-] (2.25,0)--(3,0);
										\draw[->] (1.5,0)--(2.25,-.5); \draw (2.25,-.5)--(3,-1);

\filldraw (0,0) circle (.05);		\filldraw (1.5,0) circle (.08);	\filldraw (3,1) circle (.05);
												\filldraw (3,0) circle (.05);
												\filldraw (3,-1) circle (.05);

\node at (-.25,.2) {$v$};
\node at (1.5,.25) {$u$};
\node at (3.5,1) {$v^{\{1\}}$};
\node at (3.5,0) {$v^{\{1,2\}}$};
\node at (3.5,-1) {$v^{\{2\}}$};

%\node at (1.1,-.15) {\color{blue}$\leftarrow$};
\node at (1.9,.5) {\color{blue}\small ${}_{\{1\}}$};
\node at (2.2,.15) {\color{blue}\small ${}_{\{1,2\}}$};
\node at (1.9,-.5) {\color{blue}\small ${}_{\{2\}}$};
%\draw (0,0) -- (10,0);
\end{tikzpicture}
};

\node at (0,-1.5) {Type 0};

\node at (6,0) {
\begin{tikzpicture}[scale=1.2]
\draw[->] (0,0)--(.75,0); \draw (.75,0)--(1.5,0);		\draw[->] (1.5,0)--(2.25,.5); \draw (2.25,.5)--(3,1);
										%\draw (1.5,0)--(2.25,0); \draw[<-] (2.25,0)--(3,0);
										%\draw[->] (1.5,0)--(2.25,-.5); \draw (2.25,-.5)--(3,-1);

\filldraw (0,0) circle (.05);		\filldraw (1.5,0) circle (.08);	\filldraw (3,1) circle (.05);
												%\filldraw (3,0) circle (.05);
												%\filldraw (3,-1) circle (.05);

\node at (-.25,.2) {$v$};
\node at (1.5,.25) {$u$};
\node at (3.5,1) {$v^{\{1\}}$};
%\node at (3.5,0) {$v_{\{1,2\}}$};
\node at (3.5,-1) {\color{white} $v_{\{2\}}$};

%\node at (1.1,-.15) {\color{blue}$\leftarrow$};
\node at (1.9,.5) {\color{blue}\small ${}_{\{1\}}$};
%\node at (2.2,.15) {\color{blue}\small ${}_{\{1,2\}}$};
%\node at (1.9,-.5) {\color{blue}\small ${}_{\{2\}}$};
%\draw (0,0) -- (10,0);
\end{tikzpicture}
};

\node at (6,-1.5) {Type 1};

\node at (12,0) {
\begin{tikzpicture}[scale=1.2]
\draw[->] (0,0)--(.75,0); \draw (.75,0)--(1.5,0);		%\draw[->] (1.5,0)--(2.25,.5); \draw (2.25,.5)--(3,1);
										%\draw (1.5,0)--(2.25,0); \draw[<-] (2.25,0)--(3,0);
										\draw[->] (1.5,0)--(2.25,-.5); \draw (2.25,-.5)--(3,-1);

\filldraw (0,0) circle (.05);		\filldraw (1.5,0) circle (.08);	%\filldraw (3,1) circle (.05);
												%\filldraw (3,0) circle (.05);
												\filldraw (3,-1) circle (.05);

\node at (-.25,.2) {$v$};
\node at (1.5,.25) {$u$};
\node at (3.5,1) {\color{white}$v^{\{1\}}$};
%\node at (3.5,0) {$v_{\{1,2\}}$};
\node at (3.5,-1) {$v^{\{2\}}$};

%\node at (1.1,-.15) {\color{blue}$\leftarrow$};
%\node at (1.9,.5) {\color{blue}\small ${}_{\{1\}}$};
%\node at (2.2,.15) {\color{blue}\small ${}_{\{1,2\}}$};
\node at (1.9,-.5) {\color{blue}\small ${}_{\{2\}}$};
%\draw (0,0) -- (10,0);
\end{tikzpicture}
};

\node at (12,-1.5) {Type 2};

\end{tikzpicture}
\caption{The possible neighbourhoods of $u=v_{R_1,R_2,i_1}\in V_0^+$, depending on the type of vertex. $v^{S}\in V_1$ is obtained from $v$ by adding $i_1$ to $R_1(S)$. The backwards neighbour $v=v_{R_1,R_2}\in V_0$ is always the same.}\label{fig:alt-neighbourhoods-S}
\end{cfigure}

On its own, this is not a solution, because for a given $v_{R_1,R_2,i_1}$, in order to determine its type, we would have to search for an $i_2\in R_2$ such that $x_{i_1}=x_{i_2}$, which is precisely what we want to avoid. However, this is exactly the situation where the alternative neighbourhood technique is useful. For all $u\in V_0^+$, we will let $\Psi_\star(u)$ contain all three possibilities shown in \fig{alt-neighbourhoods-S}, of which exactly one is the correct state. We are then able to carefully construct a flow that is orthogonal to all three states, in our analysis. The idea is that all incoming flow from $v$ must leave along the edge $(u,v^{\{1\}})$ so that the result is a valid flow in case of Type 1. However, in order to be a valid flow in case of Type 2, all incoming flow from $v$ must leave along the edge $(u,v^{\{2\}})$. But now to ensure that we also have a valid flow in case of Type 0, we must have negative flow on the edge $(u,v^{\{1,2\}})$, or equivalently, flow from $v^{\{1,2\}}$ to $u$. This is indicated by the arrows on the edges in \fig{alt-neighbourhoods-S}. For details, see \sec{3-dist}. 

\paragraph{Model of Computation:} Our $k$-distinctness algorithm works in the same model as previous $k$-distinctness algorithms, which we try to make more explicit than has been done in previous work. In addition to arbitrary 1- and 2-qubit gates, we assume \emph{quantum random access} to a large \emph{quantum} memory (QRAM). This version of QRAM is fully quantum, whereas some previous works have used ``QRAM'' to refer to classical memory that can be read in superposition by a quantum machine.  We describe precisely what we mean by QRAM in \sec{model}.

\subsection{Organization}

The remainder of this article is organised as follows. In \sec{prelim}, we give preliminaries on graph theory, quantum subroutines, quantum data structures, and probability theory, including several non-standard definitions, which we encourage the experienced reader not to skip. In \sec{fwk}, we present the Multidimensional Quantum Walk Framework, which is stated as \thm{full-framework}. In \sec{welded}, we present our first application to the welded trees problem. This section is mostly self-contained, explicitly constructing and analysing an algorithm rather than referring to our new framework, which we feel gives an intuitive demonstration of the framework. In \sec{k-dist-full}, we present our new application to $k$-distinctness.

\section{Preliminaries}\label{sec:prelim}

\subsection{Graph Theory}\label{sec:prelim-graph}

In this section, we define graph theoretic concepts and notation.

\begin{definition}[Network]\label{def:network}
A network is a weighted graph $G$ with an (undirected) edge set $E(G)$, vertex set $V(G)$, and some weight function $\w:E(G)\rightarrow\mathbb{R}_{>0}$. Since edges are undirected, we can equivalently describe the edges by some set $\overrightarrow{E}(G)$ such that for all $\{u,v\}\in E(G)$, exactly one of $(u,v)$ or $(v,u)$ is in $\overrightarrow{E}(G)$. The choice of edge directions is arbitrary. Then we can view the weights as a function $\w:\overrightarrow{E}(G)\rightarrow\mathbb{R}_{> 0}$, and for all $(u,v)\in \overrightarrow{E}$, define $\w_{v,u}=\w_{u,v}$. For convenience, we will define $\w_{u,v}=0$ for every pair of vertices such that $\{u,v\}\not\in E(G)$. 
The \emph{total weight} of $G$ is 
$$\textstyle{\cal W}(G):={\displaystyle\sum}_{e\in \overrightarrow{E}(G)}\w_e.$$ 
\end{definition}

\noindent For an implicit network $G$, and $u\in V(G)$, we will let $\Gamma(u)$ denote the \emph{neighbourhood} of $u$:
$$\Gamma(u):=\{v\in V(G):\{u,v\}\in E(G)\}.$$
We use the following notation for \emph{the out- and in-neighbourhoods} of $u\in V(G)$:
\begin{equation}
\begin{split}
\Gamma^+(u) &:= \{v\in\Gamma(u):(u,v)\in \overrightarrow{E}(G)\}\\
\Gamma^-(u) &:= \{v\in\Gamma(u):(v,u)\in \overrightarrow{E}(G)\},
\end{split}\label{eq:neighbourhoods}
\end{equation}

\begin{definition}[Flow, Circulation]\label{def:flow}
A \emph{flow} on a network $G$ is a real-valued function $\theta:\overrightarrow{E}(G)\rightarrow\mathbb{R}$, extended to edges in both directions by $\theta(u,v)=-\theta(v,u)$ for all $(u,v)\in\overrightarrow{E}(G)$. 
For any flow $\theta$ on $G$, and vertex $u\in V(G)$, we define $\theta(u)=\sum_{v\in \Gamma(u)}\theta(u,v)$ as the flow coming out of $u$. If $\theta(u)=0$, we say flow is conserved at $u$. If flow is conserved at every vertex, we call $\theta$ a \emph{circulation}. 
If $\theta(u)>0$, we call $u$ a \emph{source}, and if $\theta(u)<0$ we call $u$ a \emph{sink}. A flow with unique source $s$ and unique sink $t$ is called an \emph{$st$-flow}. 
The \emph{energy} of $\theta$ is 
$${\cal E}(\theta):=\textstyle{\displaystyle\sum}_{(u,v)\in\overrightarrow{E}(G)}\displaystyle\frac{\theta(u,v)^2}{\w_{u,v}}.$$
\end{definition}

\paragraph{Accessing $G$:} In computations involving a (classical) random walk on a graph $G$, it is usually assumed that for any $u\in V(G)$, it is possible to sample a neighbour $v\in\Gamma(u)$ according to the distribution 
$$\Pr[v] = \frac{\w_{u,v}}{\w_u}\mbox{ where }\w_u:=\textstyle{\displaystyle\sum}_{ v'\in\Gamma(u)}\w_{u,v'}.$$
It is standard to assume this is broken into two steps: (1) sampling some $i\in [d_u]$, where $d_u:=|\Gamma(u)|$ is the degree of $u$, and (2) computing the $i$-th neighbour of $u$. That is, we assume that for each $u\in V(G)$, there is an efficiently computable function $f_u:[d_u]\rightarrow V(G)$ such that $\mathrm{im}(f_u)=\Gamma(u)$, and we call $f_u(i)$ the \emph{$i$-th neighbour of $u$}. In the quantum case (see \defin{QW-access} below), we assume that the sample (1) can be done coherently, and we use a reversible version of the map $(u,i)\mapsto f_u(i)$. We will also find it convenient to suppose the indices $i$ of the neighbours of $u$ come from some more general set $L(u)$, which may equal $[d_u]$, or some other convenient set, which we call the \emph{edge labels of $u$}. It is possible to have $|L(u)|>|\Gamma(u)|=d_u$, meaning that some elements of $L(u)$ do not label an edge adjacent to $u$ (these labels should be sampled with probability 0). 
We assume we have a partition of $L(u)$ into disjoint $L^+(u)$ and $L^-(u)$ such that:
\begin{equation*}
\begin{split}
L^+(u) &\supseteq \{i\in L(u): (u,f_u(i))\in\overrightarrow{E}(G)\} = \{i\in L(u):f_u(i)\in \Gamma^+(u)\}\\
L^-(u) &\supseteq \{i\in L(u): (f_u(i),u)\in\overrightarrow{E}(G)\} = \{i\in L(u):f_u(i)\in \Gamma^-(u)\}.
\end{split}
\end{equation*}
Note that for any $(u,v)\in\overrightarrow{E}(G)$, with $i=f_u^{-1}(v)$ and $j=f_v^{-1}(u)$, any of $(u,v)$, $(v,u)$, $(u,i)$, or $(v,j)$ fully specify the edge. Thus, it will be convenient to denote the weight of the edge using any of the alternatives:
$$\w_{u,v}=\w_{v,u}=\w_{u,i}=\w_{v,j}.$$
For any $i\in L(u)$, we set $\w_{u,i}=0$ if and only if $\{u,f_u(i)\}\not\in E(G)$.

\begin{definition}[Quantum Walk access to $G$]\label{def:QW-access}
For each $u\in V(G)$, let $L(u)=L^+(u)\cup L^-(u)$ be some finite set of \emph{edge labels}, and $f_u:L(u)\rightarrow V(G)$ a function such that $\Gamma(u)\subseteq \mathrm{im}(f_u)$. 
A quantum algorithm has \emph{quantum walk access} to $G$ if it has access to the following subroutines:
\begin{itemize}
\item A subroutine that ``samples'' from $L(u)$ by implementing a map $U_\star$ in cost ${\sf A}_\star$ that acts as:
\begin{equation*}
U_\star\ket{u,0} \propto \sum_{i\in L^+(u)}\sqrt{\w_{u,i}}\ket{u,i}-\sum_{i\in L^-(u)}\sqrt{\w_{u,i}}\ket{u,i} =: \ket{\psi_\star^G(u)}.
\end{equation*}
\item A subroutine that implements the \emph{transition map}
$
\ket{u,i}\mapsto \ket{v,j}
$
(possibly with some error) where $i=f^{-1}_u(v)$ and $j=f^{-1}_v(u)$, with costs $\{{\sf T}_{u,i}={\sf T}_{u,v}\}_{(u,v)\in\overrightarrow{E}(G)}$. 

\item Query access to the total vertex weights $\w_u=\sum_{v\in\Gamma(u)}\w_{u,v}$. 
\end{itemize}
We call $\{{\sf T}_e\}_{e\in\overrightarrow{E}(G)}$ the set of \emph{transition costs} and ${\sf A}_\star$ the \emph{cost of generating the star states}. 
\end{definition}

\begin{definition}[Networks with lengths]\label{def:nwk-length}
If $G$ is a network, and $\ell:\overrightarrow{E}(G)\rightarrow\mathbb{Z}_{\geq 1}$ a positive-integer-valued function on the edges of $G$, we define $G^{\ell}$ to be the graph obtained from replacing each edge $(u,v)\in \overrightarrow{E}(G)$ of $G$ with a path from $u$ to $v$ of length $\ell_{u,v}$, and giving each edge in the path the weight $\w_{u,v}$. We define:
\begin{equation*}
{\cal W}^{\ell}(G):={\cal W}(G^\ell)=\textstyle{\displaystyle\sum}_{e\in \overrightarrow{E}(G)}\w_e\ell_e,
\end{equation*}
and for any flow $\theta$ on $G$, we let $\theta^{\ell}$ be the flow on $G^{\ell}$ obtained by assigning flow $\theta(u,v)$ to any edge in the path from $u$ to $v$, and define:
\begin{equation*}
{\cal E}^{\ell}(\theta):={\cal E}(\theta^{\ell})=\textstyle{\displaystyle\sum}_{e\in \overrightarrow{E}(G)}\displaystyle\frac{\theta(e)^2}{\w_e}\ell_e.
\end{equation*}
\end{definition}

\subsection{Model of Computation and Quantum Subroutines}\label{sec:model}

We will work in the (fully quantum) QRAM model, which we now describe. By QRAM, we mean \emph{quantum} memory, storing an arbitrary quantum state, to which we can apply random access gates. By this, we mean we can implement, for $i\in [n]$, $b\in\{0,1\}$, $x\in \{0,1\}^n$, a random access read:
$$\text{READ}:\ket{i}\ket{b}\ket{x} \mapsto \ket{i}\ket{b\oplus x_i}\ket{x},$$
or a random access write:
$$\text{WRITE}:\ket{i}\ket{b}\ket{x} \mapsto \ket{i}\ket{b}\ket{x_1,\dots,x_{i-1},x_i\oplus b,x_{i+1},\dots,x_n},$$
on any superposition.
By applying $\text{READ}\cdot\text{WRITE}\cdot\text{READ}$, we can implement a controlled swap:
$$\sum_{i\in [n]}\ket{i}\bra{i}\otimes \text{SWAP}_{0,i}(\ket{i}\ket{b}\ket{x}) = \ket{i}\ket{x_i}\ket{x_1,\dots,x_{i-1},b,x_{i+1},\dots,x_n}.$$
%Such an operation is also needed for Ambainis' element distinctness algorithm, and subsequent $k$-distinctness algorithms, which assume the ability to run a superposition of different operations. %~\cite{Andras Masters}. 
Aside from these operations, we count the number of elementary gates, by which we mean arbitrary unitaries that act on $O(1)$ qubits.

We will be interested in running different iterations of a subroutine on different branches of a superposition, for which we use the concept of a quantum subroutine. We note that \defin{variable-time} is \emph{not} the most general definition, but it is sufficient for our purposes. 

\begin{definition}[Quantum Subroutine]\label{def:variable-time}
A \emph{quantum subroutine} is a sequence of unitaries $U_0$,\dots, $U_{{\sf T}_{\max}-1}$ on 
$H_{\cal Z}=\mathrm{span}\{\ket{z}:z\in{\cal Z}\}$
for some finite set ${\cal Z}$. For $X,Y\subseteq{\cal Z}$, we say the subroutine computes an injective function $f:X\rightarrow Y$ in times $\{{\sf T}_x\leq {\sf T}_{\max}\}_{x\in X}$ with errors $\{\epsilon_x\}_{x\in X}$ if:
\begin{enumerate}
\item The map $\sum_{t=0}^{{\sf T}_{\max}-1}\ket{t}\bra{t}\otimes U_t$ can be implemented in ${\sf polylog}({\sf T}_{\max})$ complexity.\label{item:select}
\item For all $x\in X$, $\norm{\ket{f(x)} - U_{{\sf T}_x-1}\dots U_0(\ket{x})}^2\leq \epsilon_x$. \label{item:error}
\item The maps $x\mapsto {\sf T}_x$ and $y\mapsto {\sf T}_{f^{-1}(y)}$ can both be implemented in ${\sf polylog}({\sf T}_{\max})$ complexity.\label{item:known-times}
\item There exists a decomposition ${\cal Z}=\bigcup_{x\in X} {\cal Z}_x$ such that $x,f(x)\in {\cal Z}_x$, and for all $t\in \{0,\dots,{\sf T}_{\max}-1\}$, $U_t\dots U_0\ket{x}\in \mathrm{span}\{\ket{z}:z\in {\cal Z}_x\}$. \label{item:non-confusion}

\end{enumerate}
\end{definition}

While not all of our assumptions are general, they are reasonable in our setting. Item \ref{item:select} is standard in subroutines that will be run in superposition (see~e.g.~\cite{ambainis2010VTSearch}), and is reasonable, for example, in settings where the algorithm is sufficiently structured to compute $U_t$ from a standard gate set on the fly, which we formalise in \lem{uniform-alg} below (see also the discussion in \cite[Section 2.2]{cornelissen2020SpanProgramTime}). 

Item \ref{item:known-times} is not always necessary, but it is often true, and simplifies things considerably. It means, in particular, that one can decide, based on the input, how many steps of the algorithm should be applied, and then, based on the output, uncompute this information.

Item \ref{item:non-confusion}
is not a standard assumption, but it is also not unreasonable. For example, if 
$X=X'\times\{0\}$ and $f(x,0)=(x,g(x))$ for some function $g$, the algorithm may simply use $x$ as a control, and so its state always encodes $x$, and therefore remains orthogonal for different $x$. 

\begin{lemma}\label{lem:uniform-alg}
Call unitaries $U_0,\dots,U_{{\sf T}_{\max}-1}$ on $H$ a \emph{uniform quantum algorithm} if
there exists $\ell={\sf polylog}({\sf T}_{\max})$, unitaries $W_1,\dots,W_{\ell}$, and maps $g:\{0,\dots,{\sf T}_{\max}-1\}\rightarrow[\ell]$ and $g':\{0,\dots,{\sf T}_{\max}-1\}\rightarrow 2^{[\log\dim H]}$ such that:
\begin{enumerate}
\item For each $j\in [\ell]$, $W_j$ can be implemented by ${\sf polylog}({\sf T}_{\max})$ gates from some implicit gate set (and therefore acts on $m={\sf polylog}({\sf T}_{\max})$ qubits).
\item $g$ and $g'$ can be computed in ${\sf polylog}({\sf T}_{\max})$ complexity.
\item For all $t\in\{0,\dots,{\sf T}_{\max}-1\}$, $U_t = W_{g(t)}(g'(t))$, where $W_{\ell}(S)$ denotes $W_{\ell}$ applied to the qubits specified by $S$.
\end{enumerate}
Then $\sum_{t=0}^{{\sf T}_{\max}-1}\ket{t}\bra{t}\otimes U_t$ can be implemented in ${\sf polylog}({\sf T}_{\max})$ gates. 
\end{lemma}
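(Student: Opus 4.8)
The plan is to implement $\sum_{t=0}^{{\sf T}_{\max}-1}\ket{t}\bra{t}\otimes U_t$ by decomposing it, via the hypothesis $U_t = W_{g(t)}(g'(t))$, into: (1) compute the "which gate" index $g(t)$ and the "which qubits" set $g'(t)$ from $t$; (2) apply a single controlled primitive that, conditioned on a register holding $j\in[\ell]$ and a register holding $S\in 2^{[\log\dim H]}$, applies $W_j$ to the qubits indexed by $S$; (3) uncompute $g(t)$ and $g'(t)$. Each of these three stages must be shown to cost ${\sf polylog}({\sf T}_{\max})$, and multiplying together the constant number of stages preserves the bound.

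First I would handle stages (1) and (3): by assumption $g$ and $g'$ are computable in ${\sf polylog}({\sf T}_{\max})$ complexity, so writing $g(t)$ into a fresh $\lceil\log\ell\rceil = {\sf polylog}({\sf T}_{\max})$-qubit register (controlled on $\ket{t}$) and $g'(t)$ into a fresh $O(\log\dim H) \cdot m$-sized register costs ${\sf polylog}({\sf T}_{\max})$; note $m = {\sf polylog}({\sf T}_{\max})$ and $g'(t)$ is a size-$m$ subset of $[\log\dim H]$, so its description is ${\sf polylog}({\sf T}_{\max})\cdot\log\log\dim H$ many bits, still ${\sf polylog}({\sf T}_{\max})$ if we treat $\log\dim H$ as ${\sf polylog}$ (which is implicit in the lemma's framing, since otherwise "${\sf polylog}({\sf T}_{\max})$ gates" could not even address the qubits). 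Uncomputation is the inverse and costs the same.

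The main obstacle is stage (2): building, once and for all, the doubly-controlled operation "apply $W_j$ to the qubit-subset $S$", with only ${\sf polylog}({\sf T}_{\max})$ gates total, where the control registers now hold $j$ and $S$ rather than $t$. For the $j$-control: since there are only $\ell = {\sf polylog}({\sf T}_{\max})$ distinct unitaries $W_1,\dots,W_\ell$, and each $W_j$ is itself a product of ${\sf polylog}({\sf T}_{\max})$ elementary gates, a controlled-$W_j$ (controlled on the $\log\ell$-qubit register equalling the constant $j$) costs ${\sf polylog}({\sf T}_{\max})$; summing over $j\in[\ell]$ multiplexes to ${\sf polylog}({\sf T}_{\max})^2 = {\sf polylog}({\sf T}_{\max})$ gates. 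For the $S$-control, the clean move is to insert, before and after the controlled-$W_j$ block, a routing step: use the QRAM controlled-swap primitive from \sec{model} (or a log-depth routing network of controlled swaps) to move the $m$ qubits addressed by $S$ into a fixed scratch location on which $W_j$ always acts, apply $W_j$ there, then swap back. Each individual qubit move is a controlled swap indexed by an $O(\log\log\dim H)$-bit address from the register holding (a component of) $S$, and there are $m = {\sf polylog}({\sf T}_{\max})$ of them, so the routing is ${\sf polylog}({\sf T}_{\max})$ gates. Composing stages (1)–(3) then yields the total ${\sf polylog}({\sf T}_{\max})$ bound, which is the claim; I would finish by remarking that correctness is immediate since on $\ket{t}\ket{0}\ket{0}\ket{\phi}$ the composite writes $g(t),g'(t)$, applies $W_{g(t)}$ to the qubits $g'(t)$ of $\ket{\phi}$ via the routing sandwich, and erases the ancillae, giving exactly $\ket{t}\otimes U_t\ket{\phi}$.
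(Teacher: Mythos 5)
Your proposal is correct and follows essentially the same route as the paper: compute $g(t)$ and $g'(t)$ into ancilla registers, use the address register $g'(t)$ to swap the targeted qubits into a fixed scratch location (the paper also relies on the controlled-swap primitive from \sec{model}), apply the select-$W$ unitary $\sum_{j}\ket{j}\bra{j}\otimes W_j$ by multiplexing the $\ell = {\sf polylog}({\sf T}_{\max})$ controlled-$W_j$ operations, and uncompute. Your write-up is somewhat more explicit about the swap-back and about why the addressing registers have ${\sf polylog}$ size, which the paper leaves implicit, but the decomposition and the accounting are the same.
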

\begin{proof}
We describe how to implement $\sum_{t=0}^{{\sf T}_{\max}-1}\ket{t}\bra{t}\otimes U_t$ on $\ket{t}\ket{z}$ for $\ket{z}\in H$. 
Append registers $\ket{0}_A\ket{0}_{A'}\in\mathrm{span}\{\ket{j,S}:j\in \{0,\dots,\ell\}, S\in {\cal S}\}$, where ${\cal S}$ is the set of subsets of $[\log\dim H]$ of size at most $m$. 
Compute $g(t)$ and $g'(t)$ to get:
$\ket{t}\ket{z}\ket{0}_A\ket{0}_{A'}\mapsto \ket{t}\ket{z}\ket{g(t)}_A\ket{g'(t)}_{A'}.$
Controlled on $g'(t)$, we can swap the qubits acted on by $U_t$ into the first $m$ positions. Then we can implement $\sum_{j=1}^{\ell}\ket{j}\bra{j}\otimes W_j+\ket{0}\bra{0}\otimes I$ by decomposing it into a sequence of $\ell$ controlled operations:
$$\textstyle{\displaystyle\prod}_{j=1}^{\ell}\left(\ket{j}\bra{j}\otimes W_j+(I-\ket{j}\bra{j})\otimes I\right).$$
The result follows from noticing that each of these $\ell={\sf polylog}({\sf T}_{\max})$ operations can be implemented with ${\sf polylog}({\sf T}_{\max})$ controlled gates.
\end{proof}

\begin{lemma}\label{lem:uniform-combine}
Fix a constant integer $c$, and for $j\in [c]$, let ${\cal S}_j$ be a quantum subroutine on $H_j=\mathrm{span}\{\ket{j}\}\otimes H$ for some space $H$ that takes time $\{{\sf T}_x={\sf T}_j\}_{x\in X_j}$ with errors $\{\epsilon_x=\epsilon_j\}_{x\in X_j}$. Then there is a quantum algorithm that implements $\sum_{j=1}^c\ket{j}\bra{j}\otimes {\cal S}_j$ in variable times ${\sf T}_{j,x}=O({\sf T}_j)$ and errors $\epsilon_{j,x}=\epsilon_j$ for all $x\in X_j$.
\end{lemma}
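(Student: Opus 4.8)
We must combine a constant number $c$ of quantum subroutines ${\cal S}_1,\dots,{\cal S}_c$, where each ${\cal S}_j$ lives on $H_j = \mathrm{span}\{\ket{j}\}\otimes H$ and runs in a single time ${\sf T}_j$ with a single error $\epsilon_j$ on its whole input domain $X_j$, into one subroutine on $\mathrm{span}\{\ket{1},\dots,\ket{c}\}\otimes H$ that applies ${\cal S}_j$ conditioned on the first register being $\ket{j}$, with times $O({\sf T}_j)$ and errors $\epsilon_j$. This is really a bookkeeping lemma: we need to check that $\sum_j \ket{j}\bra{j}\otimes {\cal S}_j$ satisfies all four conditions of \defin{variable-time}.

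**Approach.** Write ${\sf T}_{\max} := \max_j {\sf T}_j$, which is $O({\sf T}_j)$ for each $j$ since $c$ is constant. Let ${\cal S}_j$ consist of unitaries $U^{(j)}_0,\dots,U^{(j)}_{{\sf T}_{\max}-1}$ on $H_j$ (pad with identities if ${\cal S}_j$ originally had fewer than ${\sf T}_{\max}$ steps). Define the combined step unitaries $U_t := \sum_{j=1}^c \ket{j}\bra{j}\otimes U^{(j)}_t$ on $\mathrm{span}\{\ket{1},\dots,\ket{c}\}\otimes H$, and the combined times ${\sf T}_{(j,x)} := {\sf T}_j$, errors $\epsilon_{(j,x)} := \epsilon_j$, for $(j,x)$ ranging over $\bigcup_j \{j\}\times X_j$. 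The claim is that this sequence is a quantum subroutine computing the combined function $F(j,x) = (j, f_j(x))$ in these times and errors. First I would verify Item~\ref{item:error}: for fixed $j$ and $x\in X_j$, applying $U_{{\sf T}_j-1}\cdots U_0$ to $\ket{j}\ket{x}$ is exactly $\ket{j}\otimes(U^{(j)}_{{\sf T}_j-1}\cdots U^{(j)}_0\ket{x})$ by block-diagonality, which is $\epsilon_j$-close to $\ket{j}\ket{f_j(x)}$ by hypothesis on ${\cal S}_j$. Items~\ref{item:known-times} and~\ref{item:non-confusion} are immediate: the time maps $(j,x)\mapsto {\sf T}_j$ and $(j,y)\mapsto {\sf T}_j$ just read off the first register $j$ (a constant-size lookup, hence ${\sf polylog}$), and the decomposition ${\cal Z} = \bigcup_{(j,x)} {\cal Z}_{(j,x)}$ is obtained from the decompositions ${\cal Z}^{(j)} = \bigcup_{x\in X_j}{\cal Z}^{(j)}_x$ guaranteed for each ${\cal S}_j$, tagging each piece with $j$.

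**The one nontrivial point** is Item~\ref{item:select}: we need $\sum_{t=0}^{{\sf T}_{\max}-1}\ket{t}\bra{t}\otimes U_t$ implementable in ${\sf polylog}({\sf T}_{\max})$ complexity. By hypothesis each ${\cal S}_j$ satisfies this, i.e. $M_j := \sum_t \ket{t}\bra{t}\otimes U^{(j)}_t$ has a ${\sf polylog}({\sf T}_{\max})$ implementation. We want $\sum_t \ket{t}\bra{t}\otimes\sum_j \ket{j}\bra{j}\otimes U^{(j)}_t = \sum_{j=1}^c \ket{j}\bra{j}\otimes M_j$ (after reordering the $\ket{j}$ and $\ket{t}$ registers, which costs nothing). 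This is a controlled-on-$\ket{j}$ selection among $c$ known circuits, and since $c$ is a fixed constant we may write it as the product $\prod_{j=1}^c\big(\ket{j}\bra{j}\otimes M_j + (I-\ket{j}\bra{j})\otimes I\big)$ — exactly the trick used in the proof of \lem{uniform-alg}. Each factor is a controlled version of a ${\sf polylog}({\sf T}_{\max})$ circuit, hence still ${\sf polylog}({\sf T}_{\max})$, and there are only $c = O(1)$ of them, so the total is ${\sf polylog}({\sf T}_{\max})$. I expect this controlled-selection step to be the "main obstacle" only in the sense of needing to be stated carefully; there is no real difficulty, since $c$ being constant kills any overhead. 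Finally, since ${\sf T}_j \le {\sf T}_{\max} \le \max_k {\sf T}_k$ and $c$ is constant, one could equally well pad to the common bound and still have ${\sf T}_{(j,x)} = {\sf T}_j = O({\sf T}_j)$, so the claimed variable times hold.
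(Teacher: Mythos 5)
Your proof is correct, and it achieves the lemma by essentially the same mechanism as the paper, though with a slightly different choice of $U_t$. The paper interleaves the $c$ subroutines serially: it sets $U_t = \ket{r}\bra{r}\otimes U_q^{(r)} + (I-\ket{r}\bra{r})\otimes I$ for $t=qc+r$, so the combined subroutine has $c\,{\sf T}_{\max}$ steps and branch $j$ finishes after roughly $c\,{\sf T}_j$ steps. You instead combine the subroutines block-diagonally, $U_t = \sum_{j}\ket{j}\bra{j}\otimes U_t^{(j)}$, so the combined subroutine has ${\sf T}_{\max}$ steps and branch $j$ finishes after exactly ${\sf T}_j$ steps. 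Both decompositions satisfy Item~\ref{item:select} by the same $c$-fold controlled-product trick inherited from \lem{uniform-alg}, and in both the constant-$c$ assumption is what keeps the per-step overhead polylogarithmic; the paper's interleaving is not essential, and your parallel version is arguably cleaner because it avoids the extra factor $c$ in the running time. You also explicitly verify Items~\ref{item:error}--\ref{item:non-confusion}, which the paper leaves implicit after defining $U_t$, and the product-of-controlled-operators identity you invoke does hold by orthogonality of the $\ket{j}\bra{j}$.
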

\begin{proof}
Pad each algorithm with identities so that they all have the same number, ${\sf T}_{\max}=\max_{j\in [c]}{\sf T}_{\max}^{(c)}$ of unitaries. Then for $t=\{0,\dots,c{\sf T}_{\max}-1\}$, with $t=qc+r$ for $r\in\{0,\dots,c\}$, let $U_t= \ket{r}\bra{r}\otimes U_q^{(r)}+(I-\ket{r}\bra{r})\otimes I$. 
\end{proof}

\subsection{Quantum Data Structures}\label{sec:data}

We will assume we have access to a data structure that can store a set of keyed items,
$S\subset {\cal I}\times {\cal K},$
for finite sets ${\cal K}$ and ${\cal I}$. For such a stored set $S$, we assume the following can be implemented in ${\sf polylog}(|{\cal I}\times {\cal K}|)$ complexity:
\begin{enumerate}
\item For $(i,k)\in {\cal I}\times {\cal K}$, insert $(i,k)$ into $S$.
\item For $(i,k)\in S$, remove $(i,k)$ from $S$.
\item For $k\in {\cal K}$, query the number of $i\in {\cal I}$ such that $(i,k)\in S$.
\item For $k\in{\cal K}$, return the smallest $i$ such that $(i,k)\in S$.
\item Generate a uniform superposition over all $(i,k)\in S$.  
\end{enumerate}

In addition, for quantum interference to take place, we assume the data structure is coherent, meaning it depends only on $S$, and not on, for example, the order in which elements were added. See~\cite[Section 3.1]{buhrman2022limits} for an example of such a data structure.

\subsection{Probability Theory}\label{sec:prelim-prob}

\paragraph{Hypergeometric Distribution:} In the hypergeometric distribution with parameters $(N,K,d)$, we draw $d$ objects uniformly without replacement from a set of $N$ objects, $K$ of which are marked, and consider the number of marked objects that are drawn.  We will use the following:
\begin{lemma}[Hypergeometric Tail Bounds~\cite{Janson2011RandomGraphs}]\label{lem:hypergeo}
Let $Z$ be a hypergeometric random variable with parameters $(N,K,d)$, and $\mu=\frac{Kd}{N}$. Then for every $B\geq 7\mu$, $\Pr[Z\geq B]\leq e^{-B}$. Furthermore, for every $\epsilon>0$, 
$$\Pr[|Z-\mu|\geq \epsilon\mu]\leq 2\exp\{-((1+\epsilon)\log(1+\epsilon)-\epsilon)\mu\}.$$
\end{lemma}

\noindent We will make use of the second bound from \lem{hypergeo} in the following form, when $\mu=o(1)$:
\begin{corollary}\label{cor:hypergeo}
Let $Z$ be a hypergeometric random variable with parameters $(N,K,d)$, and $\mu=\frac{Kd}{N}$. Then $\Pr[Z\geq c]\leq 2e^c(c/\mu)^{-c}$. 
\end{corollary}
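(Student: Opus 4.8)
The plan is to derive the corollary directly from the second tail bound of \lem{hypergeo} by choosing the deviation parameter $\epsilon$ so that the upper tail point $(1+\epsilon)\mu$ lands exactly at $c$. Concretely, I would first dispose of the trivial regime: if $c\leq \mu$, then $(c/\mu)^{-c}=(\mu/c)^c\geq 1$, so $2e^c(c/\mu)^{-c}\geq 2e^c\geq 1\geq \Pr[Z\geq c]$ (recall $c\geq 0$ in the intended application, where in fact $\mu=o(1)$ and $c\geq 1$), and there is nothing to prove. So assume from now on $c>\mu$, which also guarantees $\mu>0$ (otherwise $Z\equiv 0$ and the claim is again trivial).

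Next I would set $\epsilon \colonequals c/\mu - 1 > 0$, so that $(1+\epsilon)\mu = c$ and $\epsilon\mu = c-\mu$. Since $Z\geq c$ forces $Z-\mu\geq c-\mu=\epsilon\mu$, we have $\Pr[Z\geq c]\leq \Pr[|Z-\mu|\geq \epsilon\mu]$, and applying the second inequality of \lem{hypergeo} gives
\[
\Pr[Z\geq c]\leq 2\exp\bigl\{-\bigl((1+\epsilon)\log(1+\epsilon)-\epsilon\bigr)\mu\bigr\}.
\]

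Then I would simplify the exponent using the substitutions above: $(1+\epsilon)\log(1+\epsilon)\,\mu = c\log(c/\mu)$ and $\epsilon\mu = c-\mu$, so the exponent equals $-\bigl(c\log(c/\mu)-(c-\mu)\bigr)$. Hence
\[
\Pr[Z\geq c]\leq 2\exp\{-c\log(c/\mu)+c-\mu\} = 2e^{c-\mu}(c/\mu)^{-c}\leq 2e^{c}(c/\mu)^{-c},
\]
where the last step uses $\mu\geq 0$, i.e.\ $e^{-\mu}\leq 1$. This completes the proof.

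\textbf{Main obstacle.} There is essentially no obstacle: the only things to get right are the bookkeeping in the exponent (making sure $(1+\epsilon)\mu=c$ is used consistently) and explicitly noting that the corollary is vacuous unless $c>\mu$, which is the regime of interest since the preceding discussion assumes $\mu=o(1)$.
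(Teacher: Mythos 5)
Your proof is correct and is exactly the derivation the paper implicitly intends: the corollary is stated without proof immediately after Lemma \ref{lem:hypergeo} as "the second bound from the lemma in the following form," and your substitution $\epsilon = c/\mu - 1$, the simplification of the exponent to $c\log(c/\mu) - (c-\mu)$, and the final drop of $e^{-\mu}\leq 1$ are the standard bookkeeping steps. Your handling of the trivial regime $c\leq\mu$ is a sensible addition that the paper elides.
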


\paragraph{$d$-wise Independence} It will be convenient to divide the input into blocks, which we will argue are random. In order to avoid the $\widetilde{\Theta}(n)$ cost of sampling a uniform random permutation to define these blocks, we use a $d$-wise independent family of permutations.
\begin{definition}\label{def:d-wise}
Let $\{\tau_s:[n]\rightarrow[n]\}_{s\in {\cal S}}$ for some finite seed set ${\cal S}$. For $d\in \mathbb{N}$ and $\delta\in (0,1)$, we say that $\tau_s$ is a $d$-wise $\delta$-independent permutation (family) if for $s$ chosen uniformly at random from ${\cal S}$, for any distinct $i_1,\dots,i_d\in [n]$ and distinct $i_1',\dots,i_d'\in [n]$,
$$\abs{\Pr[\tau_s(i_1)=i_1',\dots,\tau_s(i_d)=i_d'] - \frac{1}{n(n-1)\dots (n-d+1)}}\leq \delta.$$
\end{definition}

For $d\in {\sf polylog}(n)$, and any $\delta \in (0,1)$, there exist families of $d$-wise $\delta$-independent permutations with the following properties (see, for example, \cite{kaplan2009derandomized}):
\begin{itemize}
\item $s$ can be sampled in $O(d\log n \log\frac{1}{\delta})$ time and space.
\item For any $s\in {\cal S}$, $i\in [n]$, $\tau_s(i)$ can be computed in time ${\sf poly}(d\log n\log\frac{1}{\delta})$.
\item For any $s\in {\cal S}$, $i'\in [n]$, $\tau_s^{-1}(i')$ can be computed in time ${\sf poly}(d\log n\log\frac{1}{\delta})$.\footnote{For example, in $d$-wise independent permutation families based on Feistel networks applied to $d$-wise independent functions $h$, inverting $\tau_h$ is as easy as computing $h$.}
\end{itemize}
We will design our algorithms assuming such a construction for $\delta=0$, although this is not known to exist. By taking $\delta$ to be a sufficiently small inverse polynomial, our algorithm will not notice the difference.

\section{Framework}\label{sec:fwk}

In this section, we present the Multidimensional Quantum Walk Framework, which defines a quantum algorithm from a network, and certain subroutines. In \sec{phase-est}, we describe the type of algorithm that will be used to prove our main theorem. In \sec{full-framework}, we state and prove our main theorem, \thm{full-framework}. 

\subsection{Phase Estimation Algorithms}\label{sec:phase-est}

In this section, we formally define a particular kind of quantum algorithm that uses phase estimation~\cite{kitaev1996PhaseEst}, and describe ingredients sufficient to analyse such an algorithm. All algorithms in this paper are of this specific form. 

\begin{definition}[Parameters of a Phase Estimation Algorithm]\label{def:lin-alg}
For an implicit input $x\in\{0,1\}^*$, fix a finite-dimensional complex inner product space $H$, a unit vector $\ket{\psi_0}\in H$, and sets of vectors
$\Psi^{\cal A},\Psi^{\cal B}\subset H$. We further assume that $\ket{\psi_0}$ is orthogonal to every vector in $\Psi^{\cal B}$.  Let $\Pi_{\cal A}$ be the orthogonal projector onto ${\cal A}=\mathrm{span}\{\Psi^{\cal A}\}$, and similarly for $\Pi_{\cal B}$.
\end{definition}
Then $(H,\ket{\psi_0},\Psi^{\cal A},\Psi^{\cal B})$ defines a quantum algorithm as follows. Let 
\begin{equation}
U_{\cal AB}=(2\Pi_{\cal A}-I)(2\Pi_{\cal B}-I).\label{eq:U-AB}
\end{equation}
Do phase estimation\footnote{For what is meant precisely by ``phase estimation'', refer to the proof of \thm{lin-alg-fwk}.} of $U_{\cal AB}$ on initial state $\ket{\psi_0}$ to a certain precision, measure the phase register, and output 1 if the measured phase is 0, and output 0 otherwise. \thm{lin-alg-fwk} at the end of this section describes what precision is sufficient, and when we can expect the output to be 1 and when 0. 

In practice, unitaries like $U_{\cal AB}$ that are the product of two reflections are nice to work with because if each of $\Psi^{\cal A}$ and $\Psi^{\cal B}$ is a pairwise orthogonal set, implementing $U_{\cal AB}$ can be reduced to generating the states in $\Psi^{\cal A}$ and $\Psi^{\cal B}$ respectively, and a product of reflections has sufficient structure to analyse the relevant eigenspaces, as will become clear throughout this section.

\paragraph{Negative Analysis:}
The first of the two cases we want to distinguish with a phase estimation algorithm is the \emph{negative case}, in which there exists a negative witness, defined as follows.

\begin{definition}[Negative Witness]\label{def:neg-witness}
A $\delta$-\emph{negative witness} for $(H,\ket{\psi_0},\Psi^{\cal A},\Psi^{\cal B})$ is a pair of vectors $\ket{w_{\cal A}},\ket{w_{\cal B}}\in H$ such that $\ket{\psi_0}=\ket{w_{\cal A}}+\ket{w_{\cal B}}$; and $\ket{w_{\cal A}}$ is mostly in the space $\cal A$, and $\ket{w_{\cal B}}$ is mostly in the space $\cal B$, in the sense that $\norm{(I-\Pi_{\cal A})\ket{w_{\cal A}}}^2\leq \delta$ and $\norm{(I-\Pi_{\cal B})\ket{w_{\cal B}}}^2\leq \delta$. 
\end{definition}

For intuition, it is useful to think of the case when $\delta=0$. There exists a 0-negative witness precisely when $\ket{\psi_0}\in {\cal A}+{\cal B} = ({\cal A}^\bot\cap {\cal B}^\bot)^\bot$. For the rest of this section, we write  $\Lambda_\Theta$ for the orthogonal projector onto the span of the $e^{i\theta}$-eigenspaces of $U_{\cal AB}$ with $|\theta|\leq \Theta$. The negative analysis relies on the effective spectral gap lemma:
\begin{lemma}[Effective Spectral Gap Lemma~\cite{lee2011QQueryCompStateConv}]\label{lem:effective-spectral-gap}
Fix $\Theta\in (0,\pi)$. If $\ket{\psi_{\cal A}}\in {\cal A}$, then 
$$\norm{\Lambda_\Theta(I-\Pi_{\cal B})\ket{\psi_{\cal A}}} \leq \frac{\Theta}{2}\norm{\ket{\psi_{\cal A}}}.$$
\end{lemma}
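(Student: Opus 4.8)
The plan is to reduce the statement to a clean two-reflection fact by working in the invariant subspaces of $U_{\cal AB}$. Set $R_{\cal A}=2\Pi_{\cal A}-I$ and $R_{\cal B}=2\Pi_{\cal B}-I$, so $U_{\cal AB}=R_{\cal A}R_{\cal B}$. First I would recall Jordan's lemma: the space $H$ decomposes as an orthogonal direct sum of subspaces, each of dimension $1$ or $2$, invariant under both $\Pi_{\cal A}$ and $\Pi_{\cal B}$ (hence under $U_{\cal AB}$). On a one-dimensional piece each reflection acts as $\pm 1$, so $U_{\cal AB}$ acts as $+1$; these contribute only to the $\theta=0$ eigenspace. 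On a typical two-dimensional piece, ${\cal A}$ and ${\cal B}$ restrict to lines making some angle $\varphi\in(0,\pi/2]$, and a direct computation shows $U_{\cal AB}$ acts as a rotation by $\pm 2\varphi$, so its eigenvalues on that block are $e^{\pm 2i\varphi}$. The key quantitative point, which I would verify by an explicit $2\times 2$ calculation, is that within such a block, if $\ket{a}$ is a unit vector in ${\cal A}$ then $\norm{(I-\Pi_{\cal B})\ket{a}} = \sin\varphi$ while the component of $(I-\Pi_{\cal B})\ket{a}$ lying in the $e^{\pm 2i\varphi}$-eigenspaces has a prefactor that works out to be $\tfrac12\sqrt{2-2\cos 2\varphi}=|\sin\varphi|$ times $\tfrac{1}{?}$ — more precisely one finds the eigenspace components of $(I-\Pi_{\cal B})\ket{a}$ each have norm $\tfrac{1}{2}\norm{\ket{a}}$ times factors that are bounded by $\tfrac{|\theta|}{2}$ when $|\theta|=2\varphi$ is small.

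Concretely, here is the step I would carry out carefully. Decompose $\ket{\psi_{\cal A}}=\sum_k \ket{\psi_k}$ along the Jordan blocks, where $\ket{\psi_k}\in{\cal A}$ lies in block $k$. Because $\Lambda_\Theta$ and $(I-\Pi_{\cal B})$ both preserve each block, it suffices to prove $\norm{\Lambda_\Theta(I-\Pi_{\cal B})\ket{\psi_k}}\leq \tfrac{\Theta}{2}\norm{\ket{\psi_k}}$ block by block and then sum squares. For a two-dimensional block with angle $\varphi$ (so eigenphases $\pm 2\varphi$): if $2\varphi > \Theta$, then $\Lambda_\Theta$ kills that block entirely and the bound is trivial; if $2\varphi\leq\Theta$, then $\Lambda_\Theta$ acts as the identity on the block, and I need $\norm{(I-\Pi_{\cal B})\ket{\psi_k}}\leq\tfrac{\Theta}{2}\norm{\ket{\psi_k}}$. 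Since $\ket{\psi_k}\in{\cal A}$, we have $\norm{(I-\Pi_{\cal B})\ket{\psi_k}}=\sin\varphi\cdot\norm{\ket{\psi_k}}\leq\varphi\cdot\norm{\ket{\psi_k}} = \tfrac{2\varphi}{2}\norm{\ket{\psi_k}}\leq\tfrac{\Theta}{2}\norm{\ket{\psi_k}}$, using $\sin\varphi\leq\varphi$. One-dimensional blocks contained in ${\cal B}$ or ${\cal B}^\perp$ give $(I-\Pi_{\cal B})\ket{\psi_k}$ equal to $0$ or $\ket{\psi_k}$; in the latter case the block sits in the $+1$-eigenspace (phase $0\leq\Theta$) only if... actually a one-dimensional block in ${\cal A}$ with $\Pi_{\cal B}\ket{\psi_k}=0$ would force $\ket{\psi_k}$ orthogonal to ${\cal B}$, but then it need not be an eigenvector of $U_{\cal AB}$ unless it is also in ${\cal A}$ or ${\cal A}^\perp$ — in a genuine Jordan block this situation is exactly $\varphi=\pi/2$, handled by the two-dimensional analysis with $\sin\varphi=1$, which is $\leq\Theta/2$ only if $\Theta\geq 2$; but then $2\varphi=\pi>\Theta$ forces $\Lambda_\Theta=0$ on that block anyway. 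So every case collapses to one of the two above.

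The main obstacle, and the only place requiring genuine care, is setting up the Jordan-block decomposition precisely and pinning down the correspondence between the geometric angle $\varphi$ of the block, the eigenphase $\theta=\pm 2\varphi$ of $U_{\cal AB}$ on that block, and the quantity $\norm{(I-\Pi_{\cal B})\ket{\psi_k}}=\sin\varphi\,\norm{\ket{\psi_k}}$ for $\ket{\psi_k}\in{\cal A}$. Once that dictionary is in place, the inequality is just $\sin\varphi\leq\varphi$ together with the thresholding behaviour of $\Lambda_\Theta$. I would also double-check the degenerate blocks ($\varphi=0$, i.e. ${\cal A}$ and ${\cal B}$ coincide; and blocks lying in ${\cal A}^\perp$) to confirm they contribute nothing to the left-hand side. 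Summing the per-block estimates via Pythagoras, $\norm{\Lambda_\Theta(I-\Pi_{\cal B})\ket{\psi_{\cal A}}}^2=\sum_k\norm{\Lambda_\Theta(I-\Pi_{\cal B})\ket{\psi_k}}^2\leq\sum_k(\Theta/2)^2\norm{\ket{\psi_k}}^2=(\Theta/2)^2\norm{\ket{\psi_{\cal A}}}^2$, gives the claim.
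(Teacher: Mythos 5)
The paper does not prove this lemma; it cites it as a known result from Lee, Mittal, Reichardt, \v{S}palek, and Szegedy (2011). Your proof is a correct reconstruction of the standard argument via Jordan's lemma on the $(\Pi_{\cal A},\Pi_{\cal B})$-invariant decomposition into one- and two-dimensional blocks, using that a rank-$2$ block with principal angle $\varphi$ carries eigenphases $\pm 2\varphi$, that $\norm{(I-\Pi_{\cal B})\ket{a}}=\sin\varphi\,\norm{\ket{a}}$ for $\ket{a}$ in the ${\cal A}$-line of that block, and that $\sin\varphi\leq\varphi$. Your first paragraph has a dangling computation (the ``$\tfrac{1}{?}$'' placeholder), but the ``Concretely'' paragraph supersedes it and is correct as stated; the case analysis of the degenerate one-dimensional blocks (in particular ${\cal A}\cap{\cal B}^\perp$, eigenphase $\pi>\Theta$, killed by $\Lambda_\Theta$) is handled properly, and the final Pythagorean summation across blocks closes the argument. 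This is essentially the same proof that appears in the cited reference, so there is no methodological divergence to flag.
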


\begin{lemma}[Negative Analysis]\label{lem:lin-alg-negative-approx}
Fix $\delta\geq 0$ and $\Theta\in (0,\pi)$. Suppose there exists a $\delta$-negative witness, $\ket{w_{\cal A}}$, $\ket{w_{\cal B}}$, for $(H,\ket{\psi_0},\Psi^{\cal A},\Psi^{\cal B})$. Then we have:
\begin{equation*}
\norm{\Lambda_\Theta\ket{\psi_0}} \leq \frac{\Theta}{2}\norm{\ket{w_{\cal A}}}+2\sqrt{\delta}.
\end{equation*}
\end{lemma}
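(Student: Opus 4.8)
The plan is to decompose $\ket{\psi_0}$ using the negative witness and apply the effective spectral gap lemma to the $\Pi_{\cal A}$-part. First I would write $\ket{\psi_0} = \ket{w_{\cal A}} + \ket{w_{\cal B}}$ and split each witness into its projection onto the relevant subspace plus an error term: $\ket{w_{\cal A}} = \Pi_{\cal A}\ket{w_{\cal A}} + \ket{e_{\cal A}}$ with $\norm{\ket{e_{\cal A}}} \leq \sqrt{\delta}$, and $\ket{w_{\cal B}} = \Pi_{\cal B}\ket{w_{\cal B}} + \ket{e_{\cal B}}$ with $\norm{\ket{e_{\cal B}}} \leq \sqrt{\delta}$. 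Thus $\ket{\psi_0} = \Pi_{\cal A}\ket{w_{\cal A}} + \Pi_{\cal B}\ket{w_{\cal B}} + \ket{e_{\cal A}} + \ket{e_{\cal B}}$.

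Next I would apply $\Lambda_\Theta$ and use the triangle inequality. The two error terms contribute at most $2\sqrt{\delta}$ since $\Lambda_\Theta$ is a projector (norm $\leq 1$). The term $\Lambda_\Theta \Pi_{\cal B}\ket{w_{\cal B}}$ I would like to say vanishes or is harmless: since $\Pi_{\cal B}\ket{w_{\cal B}} \in {\cal B}$, it lies in the $(-1)$... actually more carefully, a vector in ${\cal B}$ is fixed by $2\Pi_{\cal B}-I$, so $U_{\cal AB}$ acts on it... hmm, this is not immediately an eigenvector. The cleaner route is to use the symmetric structure: note $\Lambda_\Theta(I - \Pi_{\cal B})$ is the operator controlled by the effective spectral gap lemma, so I should regroup as $\ket{\psi_0} = \Pi_{\cal A}\ket{w_{\cal A}} + (I-\Pi_{\cal B})$ applied to something. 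Actually, since $\ket{\psi_0} \perp {\cal B}$ (by hypothesis in \defin{lin-alg}), we have $\Pi_{\cal B}\ket{\psi_0} = 0$, hence $\ket{\psi_0} = (I - \Pi_{\cal B})\ket{\psi_0} = (I-\Pi_{\cal B})(\ket{w_{\cal A}} + \ket{w_{\cal B}})$. Then $(I-\Pi_{\cal B})\ket{w_{\cal B}} = -\ket{e_{\cal B}}$... wait, $(I-\Pi_{\cal B})\ket{w_{\cal B}}$ is exactly the error term with norm $\leq\sqrt\delta$. So $\ket{\psi_0} = (I-\Pi_{\cal B})\ket{w_{\cal A}} + (I-\Pi_{\cal B})\ket{w_{\cal B}}$, and applying $\Lambda_\Theta$: the first term is $(I-\Pi_{\cal B})\ket{w_{\cal A}} = (I-\Pi_{\cal B})(\Pi_{\cal A}\ket{w_{\cal A}} + \ket{e_{\cal A}})$, so $\Lambda_\Theta(I-\Pi_{\cal B})\ket{w_{\cal A}} = \Lambda_\Theta(I-\Pi_{\cal B})\Pi_{\cal A}\ket{w_{\cal A}} + \Lambda_\Theta(I-\Pi_{\cal B})\ket{e_{\cal A}}$. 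The effective spectral gap lemma bounds the first by $\frac{\Theta}{2}\norm{\Pi_{\cal A}\ket{w_{\cal A}}} \leq \frac{\Theta}{2}\norm{\ket{w_{\cal A}}}$, and the second by $\norm{\ket{e_{\cal A}}} \leq \sqrt{\delta}$; the term $\Lambda_\Theta(I-\Pi_{\cal B})\ket{w_{\cal B}}$ is bounded by $\norm{(I-\Pi_{\cal B})\ket{w_{\cal B}}} \leq \sqrt{\delta}$. Summing gives $\norm{\Lambda_\Theta\ket{\psi_0}} \leq \frac{\Theta}{2}\norm{\ket{w_{\cal A}}} + 2\sqrt{\delta}$.

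The main obstacle is organizing the decomposition so that the effective spectral gap lemma applies cleanly — it requires its input to be exactly of the form $(I-\Pi_{\cal B})\ket{\psi_{\cal A}}$ with $\ket{\psi_{\cal A}} \in {\cal A}$, so I must be careful to route the $\ket{w_{\cal A}}$ contribution through its projection $\Pi_{\cal A}\ket{w_{\cal A}} \in {\cal A}$ and treat the orthogonal defect $\ket{e_{\cal A}}$ separately via the trivial bound $\norm{\Lambda_\Theta} \leq 1$. The hypothesis $\ket{\psi_0}\perp{\cal B}$ from \defin{lin-alg} is what lets me insert the $(I-\Pi_{\cal B})$ for free; without it one would pick up an extra $\frac{\Theta}{2}$-type term. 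Everything else is triangle inequality and the fact that orthogonal projectors have operator norm $1$, so there is no real computation to grind through.
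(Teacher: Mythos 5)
Your proposal is correct and uses exactly the same ingredients as the paper's proof: the effective spectral gap lemma applied to $\Pi_{\cal A}\ket{w_{\cal A}}\in{\cal A}$, the orthogonality of $\ket{\psi_0}$ with ${\cal B}$ to insert $(I-\Pi_{\cal B})$ for free, the triangle inequality, and the $\sqrt{\delta}$ bounds on the two defects $(I-\Pi_{\cal A})\ket{w_{\cal A}}$ and $(I-\Pi_{\cal B})\ket{w_{\cal B}}$. The only difference is organizational: the paper starts from the ESGL conclusion and manipulates backwards with two applications of the reverse triangle inequality, whereas you decompose $\Lambda_\Theta\ket{\psi_0}$ forwards into three pieces and apply the ordinary triangle inequality once, which is arguably a touch cleaner but is the same argument.
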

\begin{proof}
We can apply the effective spectral gap lemma to $\Pi_{\cal A}\ket{w_{\cal A}}\in {\cal A}$, to get:
\begin{align*}
&\frac{\Theta}{2}\norm{\Pi_{\cal A}\ket{w_{\cal A}}} \geq \norm{\Lambda_\Theta(I-\Pi_{\cal B})\Pi_{\cal A}\ket{w_{\cal A}}} \\
&\frac{\Theta}{2}\norm{\ket{w_{\cal A}}} \geq \norm{\Lambda_\Theta\left( I-\Pi_{\cal B} -(I-\Pi_{\cal B})(I-\Pi_{\cal A}) \right)\ket{w_{\cal A}}}\\
&\geq \norm{\Lambda_\Theta( I-\Pi_{\cal B})\ket{w_{\cal A}}} - \norm{\Lambda_\Theta(I-\Pi_{\cal B})(I-\Pi_{\cal A}) \ket{w_{\cal A}}} & \mbox{by the triangle ineq.}\\
&\geq \norm{\Lambda_\Theta( I-\Pi_{\cal B})(\ket{\psi_0}-\ket{w_{\cal B}})} - \norm{(I-\Pi_{\cal A}) \ket{w_{\cal A}}} & \mbox{since }\ket{\psi_0}=\ket{w_{\cal A}}+\ket{w_{\cal B}}.\\
&\geq \norm{\Lambda_\Theta(I-\Pi_{\cal B})\ket{\psi_0}} - \norm{\Lambda_\Theta(I-\Pi_{\cal B})\ket{w_{\cal B}}} - \norm{(I-\Pi_{\cal A}) \ket{w_{\cal A}}}. & \mbox{by the triangle ineq.}
\end{align*}
Since $\ket{\psi_0}$ is orthogonal to ${\cal B}$, and $\norm{(I-\Pi_{\cal A})\ket{w_{\cal A}}}\leq \sqrt{\delta}$ and similarly for ${\cal B}$, the result follows. \end{proof}

\paragraph{Positive Analysis:} 
We want to distinguish the case where there exists a negative witness (the negative case) from the \emph{positive case}, which is the case where there exists a positive witness, defined as follows.

\begin{definition}[Positive Witness]\label{def:pos-witness}
A $\delta$-positive witness for $(H,\ket{\psi_0},\Psi^{\cal A},\Psi^{\cal B})$ is a vector $\ket{w}\in H$ such that $\braket{\psi_0}{w}\neq 0$ and $\ket{w}$ is almost orthogonal to all $\ket{\psi}\in \Psi^{\cal A}\cup \Psi^{\cal B}$, in the sense that $\norm{\Pi_{\cal A}\ket{w}}^2\leq \delta\norm{\ket{w}}^2$ and $\norm{\Pi_{\cal B}\ket{w}}^2\leq \delta\norm{\ket{w}}^2$.\footnote{We note that for technical reasons, positive witness error is defined multiplicatively (relative error), whereas negative witness error was defined additively. We could also have defined negative witness error as relative error, but it would have been relative to $\norm{\ket{w_{\cal A}}}^2$ for both $\norm{(I-\Pi_{\cal A})\ket{w_{\cal A}}}^2$ (makes perfect sense) and $\norm{(I-\Pi_{\cal B})\ket{w_{\cal B}}}^2$ (confusing).} 
\end{definition}

Again, for intuition, we consider the case where $\delta=0$. A 0-positive witness is exactly a component of $\ket{\psi_0}$ in $({\cal A}+{\cal B})^\bot$, which exists precisely when $\ket{\psi_0}\not\in {\cal A}+{\cal B}$. Thus, the case where there exists a 0-positive witness is the complement of the case where there exists a 0-negative witness, so it is theoretically possible to distinguish these two cases. When $\delta>0$, the two cases may or may not be distinct, depending on $\delta$, and the overlap between ${\cal A}$ and ${\cal B}$. 

When $\ket{w}$ is a $0$-positive witness, it is straightforward to see that 
\begin{equation*}
\norm{\Lambda_0\ket{\psi_0}} \geq \frac{|\braket{w}{\psi_0}|}{\norm{\ket{w}}},
\end{equation*}
where $\Lambda_0$ is the orthogonal projector onto the $(+1)$-eigenspace of $U_{\cal AB}$. For the case of $\delta>0$, we need the following lemma, analogous to the effective spectral gap lemma.

\begin{lemma}[Effectively Zero Lemma]\label{lem:effectively-zero}
Fix $\delta\geq0$ and $\Theta\in (0,\pi)$. For $\ket{\psi}\in H$ such that $\norm{\Pi_{\cal A}\ket{\psi}}^2\leq \delta\norm{\ket{\psi}}^2$ and  $\norm{\Pi_{\cal B}\ket{\psi}}^2\leq \delta\norm{\ket{\psi}}^2$, 
\begin{equation*}
\norm{(I-\Lambda_\Theta)\ket{\psi}}^2  \leq \frac{4\pi^2\delta\norm{\ket{\psi}}^2}{\Theta^2}.
\end{equation*}
\end{lemma}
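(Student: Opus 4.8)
The plan is to decompose $\ket{\psi}$ according to the eigenspaces of $U_{\cal AB}$ and control the ``large-phase'' part $(I-\Lambda_\Theta)\ket{\psi}$ using the near-orthogonality of $\ket{\psi}$ to both ${\cal A}$ and ${\cal B}$. Write $\ket{\psi} = \ket{\psi_{\mathrm{small}}} + \ket{\psi_{\mathrm{big}}}$ where $\ket{\psi_{\mathrm{small}}} = \Lambda_\Theta\ket{\psi}$ lies in the span of $e^{i\theta}$-eigenspaces with $|\theta|\leq\Theta$, and $\ket{\psi_{\mathrm{big}}} = (I-\Lambda_\Theta)\ket{\psi}$ lies in the span of those with $|\theta|>\Theta$. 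The quantity we want to bound is $\norm{\ket{\psi_{\mathrm{big}}}}^2$.

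The key step is to relate $\ket{\psi_{\mathrm{big}}}$ to the projections $\Pi_{\cal A}\ket{\psi}$ and $\Pi_{\cal B}\ket{\psi}$. The natural tool is Jordan's lemma / the standard two-reflection structure: on each two-dimensional invariant subspace of $U_{\cal AB}$ spanned by a vector in ${\cal A}$ and a vector in ${\cal B}$ making principal angle $\theta_j/2$ with each other, the eigenphases are $\pm\theta_j$, and a vector orthogonal to both ${\cal A}$ and ${\cal B}$ within that block is zero. More precisely, I would use that $(I - \Pi_{\cal A}) + (I - \Pi_{\cal B}) = 2I - \Pi_{\cal A} - \Pi_{\cal B}$ is, on the $e^{i\theta}$-eigenspace of $U_{\cal AB}$, a positive scalar bounded below: the operator $\Pi_{\cal A} + \Pi_{\cal B}$ restricted to the $e^{i\theta}$-eigenspace has eigenvalues $1 \pm \cos(\theta/2)$ (from the block structure), so $2I - \Pi_{\cal A} - \Pi_{\cal B} \succeq (1 - |\cos(\theta/2)|) I \succeq (1-\cos(\Theta/2)) I$ on the part with $|\theta| \geq \Theta$. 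Applying this to $\ket{\psi_{\mathrm{big}}}$ gives
\[
(1 - \cos(\Theta/2))\norm{\ket{\psi_{\mathrm{big}}}}^2 \leq \bra{\psi_{\mathrm{big}}}(2I - \Pi_{\cal A} - \Pi_{\cal B})\ket{\psi_{\mathrm{big}}} \leq \norm{(I-\Pi_{\cal A})\ket{\psi_{\mathrm{big}}}}\,\norm{\ket{\psi_{\mathrm{big}}}} + \norm{(I-\Pi_{\cal B})\ket{\psi_{\mathrm{big}}}}\,\norm{\ket{\psi_{\mathrm{big}}}}.
\]
Then I bound $\norm{(I-\Pi_{\cal A})\ket{\psi_{\mathrm{big}}}} \leq \norm{(I-\Pi_{\cal A})\ket{\psi}} + \norm{(I-\Pi_{\cal A})\ket{\psi_{\mathrm{small}}}}$; the first term is at most $\norm{\ket{\psi}}$ trivially but we don't want that — instead note $\norm{(I-\Pi_{\cal A})\ket{\psi}}^2 = \norm{\ket{\psi}}^2 - \norm{\Pi_{\cal A}\ket{\psi}}^2$, which is not small, so this naive split fails. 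The better route is to work directly: $\bra{\psi_{\mathrm{big}}}(2I - \Pi_{\cal A} - \Pi_{\cal B})\ket{\psi_{\mathrm{big}}}$ — expand $\ket{\psi_{\mathrm{big}}} = \ket{\psi} - \ket{\psi_{\mathrm{small}}}$ and use that $2I - \Pi_{\cal A} - \Pi_{\cal B}$ applied to $\ket{\psi}$ has small inner product with anything because $\norm{\Pi_{\cal A}\ket{\psi}}, \norm{\Pi_{\cal B}\ket{\psi}} \leq \sqrt{\delta}\norm{\ket{\psi}}$, combined with $\norm{\ket{\psi_{\mathrm{small}}}} \leq \norm{\ket{\psi}}$ and a Cauchy–Schwarz step. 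This yields $(1-\cos(\Theta/2))\norm{\ket{\psi_{\mathrm{big}}}}^2 = O(\sqrt{\delta}\,\norm{\ket{\psi}}\,\norm{\ket{\psi_{\mathrm{big}}}})$, hence $\norm{\ket{\psi_{\mathrm{big}}}} = O(\sqrt{\delta}\,\norm{\ket{\psi}}/(1-\cos(\Theta/2)))$, and finally squaring plus the elementary bound $1 - \cos(\Theta/2) \geq \Theta^2/(2\pi^2)$ (valid for $\Theta \in (0,\pi)$, since $1-\cos x \geq 2x^2/\pi^2$ on $[0,\pi/2]$) produces the stated $\frac{4\pi^2\delta\norm{\ket{\psi}}^2}{\Theta^2}$.

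The main obstacle is getting the constants to land exactly on $4\pi^2$ rather than something larger, and making the ``$2I - \Pi_{\cal A} - \Pi_{\cal B} \succeq (1-\cos(\Theta/2))I$ on the large-phase part'' claim fully rigorous — this requires invoking the Jordan-subspace decomposition of the pair of projectors $(\Pi_{\cal A},\Pi_{\cal B})$ and checking the eigenvalue computation on each block (including the one-dimensional blocks ${\cal A}\cap{\cal B}$, ${\cal A}\cap{\cal B}^\perp$, ${\cal A}^\perp\cap{\cal B}$, ${\cal A}^\perp\cap{\cal B}^\perp$, where the claim is easy). An alternative that avoids the block analysis entirely is to mimic the proof of the effective spectral gap lemma: write $\ket{\psi} = \Pi_{\cal A}\ket{\psi} + (I-\Pi_{\cal A})\ket{\psi}$, apply $\Lambda_\Theta$, and bound $\norm{(I-\Lambda_\Theta)(I-\Pi_{\cal A})\ket{\psi}}$ etc. — but I expect the cleanest writeup uses the symmetric operator inequality above. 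I would present the operator-inequality version, deferring the block-diagonalization to a one-line appeal to Jordan's lemma.
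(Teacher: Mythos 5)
Your plan has the right shape but contains two gaps, one of sign and one of degree. First, the operator you want is $\Pi_{\cal A}+\Pi_{\cal B}$, not $2I-\Pi_{\cal A}-\Pi_{\cal B}$. The hypothesis makes $\Pi_{\cal A}\ket{\psi}$ and $\Pi_{\cal B}\ket{\psi}$ small, so $(I-\Pi_{\cal A})\ket{\psi}\approx\ket{\psi}$ is \emph{not} small, and $\bra{\psi}(2I-\Pi_{\cal A}-\Pi_{\cal B})\ket{\psi}\approx 2\norm{\ket{\psi}}^2$; the second route you sketch, which needs this quantity to be small, genuinely fails. Fortunately the spectra of $\Pi_{\cal A}+\Pi_{\cal B}$ and $2I-\Pi_{\cal A}-\Pi_{\cal B}$ agree on each Jordan block (both are $\{1\pm\cos(\theta_j/2)\}$ on a 2D block), so the lower bound $\Pi_{\cal A}+\Pi_{\cal B}\succeq(1-\cos(\Theta/2))I$ on the $|\theta_j|>\Theta$ subspace still holds and you can flip the operator. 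Second, your Cauchy--Schwarz step gives $(1-\cos(\Theta/2))\norm{\ket{\psi_{\mathrm{big}}}}^2\leq C\sqrt\delta\norm{\ket{\psi}}\norm{\ket{\psi_{\mathrm{big}}}}$, hence $\norm{\ket{\psi_{\mathrm{big}}}}^2\leq C^2\delta\norm{\ket{\psi}}^2/(1-\cos(\Theta/2))^2=O(\delta/\Theta^4)$ --- weaker than the claimed $O(\delta/\Theta^2)$. The observation you are missing is that $\Lambda_\Theta$ \emph{commutes} with $\Pi_{\cal A}$ and $\Pi_{\cal B}$: a 2D Jordan block has eigenphases $\pm\theta_j$, which are kept or discarded by $\Lambda_\Theta$ together, so $\Lambda_\Theta$ is a constant ($0$ or $1$) on each block while $\Pi_{\cal A},\Pi_{\cal B}$ are block-diagonal. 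Commutation gives $\norm{\Pi_{\cal A}(I-\Lambda_\Theta)\ket{\psi}}=\norm{(I-\Lambda_\Theta)\Pi_{\cal A}\ket{\psi}}\leq\norm{\Pi_{\cal A}\ket{\psi}}\leq\sqrt\delta\norm{\ket{\psi}}$ directly (no Cauchy--Schwarz), so $\bra{\psi_{\mathrm{big}}}(\Pi_{\cal A}+\Pi_{\cal B})\ket{\psi_{\mathrm{big}}}\leq 2\delta\norm{\ket{\psi}}^2$, and $1-\cos(\Theta/2)=2\sin^2(\Theta/4)\geq\Theta^2/(2\pi^2)$ then lands exactly on $4\pi^2\delta\norm{\ket{\psi}}^2/\Theta^2$.

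For comparison, the paper's proof avoids Jordan's lemma entirely. It expands $U_{\cal AB}\ket{\psi}-\ket{\psi}=(4\Pi_{\cal A}\Pi_{\cal B}-2\Pi_{\cal A}-2\Pi_{\cal B})\ket{\psi}$, whose norm is at most $4\sqrt\delta\norm{\ket{\psi}}$ by the triangle inequality, and then spectral-decomposes $U_{\cal AB}=\sum_j e^{i\theta_j}\Pi_j$ so that $\norm{(U_{\cal AB}-I)\ket{\psi}}^2=\sum_j 4\sin^2(\theta_j/2)\norm{\Pi_j\ket{\psi}}^2\geq 4\sin^2(\Theta/2)\norm{(I-\Lambda_\Theta)\ket{\psi}}^2$; the bound $\sin^2(\Theta/2)\geq\Theta^2/\pi^2$ finishes. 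Both arguments exploit that $U_{\cal AB}-I$ factors through $\Pi_{\cal A}$ and $\Pi_{\cal B}$, but the paper's version needs only the algebraic expansion and the spectral theorem of $U_{\cal AB}$, with no block-diagonalization of the projector pair, which is why it is the cleaner writeup.
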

\begin{proof}
Let $\{\theta_j\}_{j\in J}\subset (-\pi,\pi]$ be the set of eigenphases of $U_{\cal AB}$, and let $\Pi_j$ be the orthogonal projector onto the $e^{i\theta_j}$-eigenspace of $U_{\cal AB}$, so we can write:
\begin{equation}
U_{\cal AB} = \textstyle{\displaystyle\sum}_{j\in J} e^{i\theta_j}\Pi_j.\label{eq:U-AB-form}
\end{equation}
We have (see \eq{U-AB})
\begin{align}
U_{\cal AB}\ket{\psi} &= \ket{\psi} + 4\Pi_{\cal A}\Pi_{\cal B}\ket{\psi}-2\Pi_{\cal A}\ket{\psi}-2\Pi_{\cal B}\ket{\psi},\label{eq:eff-0-a}
\end{align}
and using the triangle inequality, $\norm{\Pi_{\cal A}\ket{\psi}}^2\leq \delta\norm{\ket{\psi}}^2$ and $\norm{\Pi_{\cal B}\ket{\psi}}^2 \leq \delta\norm{\ket{\psi}}^2$, we can compute
\begin{equation}
\begin{split}
\norm{4\Pi_{\cal A}\Pi_{\cal B}\ket{\psi}-2\Pi_{\cal A}\ket{\psi}-2\Pi_{\cal B}\ket{\psi}}^2 
&= \norm{2(2\Pi_{\cal A}-I)\Pi_{\cal B}\ket{\psi}-2\Pi_{\cal A}\ket{\psi}}^2\\
&\leq \left(\norm{2(2\Pi_{\cal A}-I)\Pi_{\cal B}\ket{\psi}} + \norm{2\Pi_{\cal A}\ket{\psi}}\right)^2 
\leq 16\delta\norm{\ket{\psi}}^2.
\end{split}\label{eq:eff-0-b}
\end{equation}
Thus, by \eq{eff-0-a} and \eq{eff-0-b}:
\begin{align*}
\norm{U_{\cal AB}\ket{\psi}-\ket{\psi}}^2 &\leq 16\delta\norm{\ket{\psi}}^2\\
\sum_{j\in J}|e^{i\theta_j}-1|^2\norm{\Pi_j\ket{\psi}}^2 & \leq 16\delta\norm{\ket{\psi}}^2 & \mbox{by \eq{U-AB-form}}\\
\sin^2\frac{\Theta}{2}\sum_{j\in J:|\theta_j|>\Theta}\norm{\Pi_j\ket{\psi}}^2 & \leq 4\delta\norm{\ket{\psi}}^2 & \mbox{since }|e^{i\theta_j}-1|^2 = 4\sin^2\frac{\theta_j}{2}\\
\norm{(I-\Lambda_\Theta)\ket{\psi}}^2 & \leq \frac{4\delta\norm{\ket{\psi}}^2}{\sin^2\frac{\Theta}{2}}.
\end{align*}
Then since $\sin^2\frac{\Theta}{2}\geq \frac{4}{\pi^2}\frac{\Theta^2}{4}$ whenever $\Theta\in (-\pi,\pi)$, the result follows.
\end{proof}

\begin{lemma}[Positive Analysis] \label{lem:lin-alg-positive-approx}
Fix $\delta\geq 0$ and $\Theta\in (0,\pi)$.
Suppose there exists a $\delta$-positive witness $\ket{w}$ for $(H,\ket{\psi_0},\Psi^{\cal A},\Psi^{\cal B})$. Then
\begin{equation*}
\norm{\Lambda_\Theta\ket{\psi_0}} \geq \frac{|\braket{\psi_0}{w}|}{\norm{\ket{w}}} - \frac{2\sqrt{\delta}\pi}{\Theta}.
\end{equation*}
\end{lemma}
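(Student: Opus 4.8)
The plan is to mirror the structure of the negative analysis (\lem{lin-alg-negative-approx}) but using the \emph{Effectively Zero Lemma} (\lem{effectively-zero}) in place of the effective spectral gap lemma. The intuition is that a $\delta$-positive witness $\ket{w}$ is ``almost'' in $({\cal A}+{\cal B})^\bot$, which is contained in the $(+1)$-eigenspace of $U_{\cal AB}$; hence $\ket{w}$ is almost entirely inside the image of $\Lambda_\Theta$, and its nonzero overlap with $\ket{\psi_0}$ forces $\Lambda_\Theta\ket{\psi_0}$ to be large. Concretely, I would first apply \lem{effectively-zero} directly to $\ket{\psi} = \ket{w}$: since $\norm{\Pi_{\cal A}\ket{w}}^2 \leq \delta\norm{\ket{w}}^2$ and $\norm{\Pi_{\cal B}\ket{w}}^2 \leq \delta\norm{\ket{w}}^2$ by \defin{pos-witness}, the lemma gives $\norm{(I-\Lambda_\Theta)\ket{w}}^2 \leq \frac{4\pi^2\delta\norm{\ket{w}}^2}{\Theta^2}$, i.e. $\norm{(I-\Lambda_\Theta)\ket{w}} \leq \frac{2\pi\sqrt{\delta}}{\Theta}\norm{\ket{w}}$.

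Next I would lower-bound $\norm{\Lambda_\Theta\ket{\psi_0}}$ by testing it against the unit vector $\ket{w}/\norm{\ket{w}}$. Write
\[
|\braket{\psi_0}{w}| = |\bra{\psi_0}\Lambda_\Theta\ket{w} + \bra{\psi_0}(I-\Lambda_\Theta)\ket{w}|
\leq |\bra{\psi_0}\Lambda_\Theta\ket{w}| + |\bra{\psi_0}(I-\Lambda_\Theta)\ket{w}|.
\]
For the first term, since $\Lambda_\Theta$ is an orthogonal projector, $|\bra{\psi_0}\Lambda_\Theta\ket{w}| = |\bra{\Lambda_\Theta\psi_0}\Lambda_\Theta\ket{w}| \leq \norm{\Lambda_\Theta\ket{\psi_0}}\,\norm{\Lambda_\Theta\ket{w}} \leq \norm{\Lambda_\Theta\ket{\psi_0}}\,\norm{\ket{w}}$ by Cauchy--Schwarz and the fact that $\Lambda_\Theta$ is a contraction. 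For the second term, Cauchy--Schwarz and the bound from the previous paragraph give $|\bra{\psi_0}(I-\Lambda_\Theta)\ket{w}| \leq \norm{\ket{\psi_0}}\,\norm{(I-\Lambda_\Theta)\ket{w}} \leq \frac{2\pi\sqrt{\delta}}{\Theta}\norm{\ket{w}}$, using $\norm{\ket{\psi_0}}=1$. Combining and dividing through by $\norm{\ket{w}}$ yields exactly
\[
\frac{|\braket{\psi_0}{w}|}{\norm{\ket{w}}} \leq \norm{\Lambda_\Theta\ket{\psi_0}} + \frac{2\sqrt{\delta}\pi}{\Theta},
\]
which rearranges to the claimed inequality.

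I do not expect a serious obstacle here; the only points to be careful about are: (i) invoking \lem{effectively-zero} with the correct hypothesis (the positive-witness conditions are stated multiplicatively, matching the lemma exactly), (ii) keeping track of which factor absorbs $\norm{\ket{w}}$ so the final bound is the stated one rather than a version with an extra $\norm{\ket{w}}$, and (iii) the harmless facts that $\Lambda_\Theta$ is an orthogonal projector (so self-adjoint and idempotent, giving $\bra{\psi_0}\Lambda_\Theta\ket{w} = \bra{\Lambda_\Theta\psi_0}\Lambda_\Theta\ket{w}$) and a contraction. If anything is mildly delicate it is just the triangle-inequality/Cauchy--Schwarz bookkeeping, which is routine. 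An alternative phrasing avoids splitting $\ket{w}$ and instead writes $\braket{\psi_0}{w} = \bra{\psi_0}\Lambda_\Theta\ket{w} + \bra{(I-\Lambda_\Theta)\psi_0}(I-\Lambda_\Theta)\ket{w}$ using self-adjointness of $I-\Lambda_\Theta$ on both sides, then bounds the second term by $\norm{(I-\Lambda_\Theta)\ket{\psi_0}}\cdot\norm{(I-\Lambda_\Theta)\ket{w}} \le 1\cdot \frac{2\pi\sqrt\delta}{\Theta}\norm{\ket w}$; either route works and gives the same constant.
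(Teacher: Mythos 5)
Your proposal is correct and follows essentially the same route as the paper: apply the Effectively Zero Lemma to $\ket{w}$, then combine the triangle inequality and Cauchy--Schwarz with the idempotence and self-adjointness of $\Lambda_\Theta$ to transfer the overlap $\braket{\psi_0}{w}$ into a bound on $\norm{\Lambda_\Theta\ket{\psi_0}}$. The paper packages the final step as projecting $\ket{\psi_0}$ onto the rank-one subspace spanned by $\Lambda_\Theta\ket{w}$, but that is just a rephrasing of your Cauchy--Schwarz step and gives the identical constant.
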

\begin{proof}
We compute:
\begin{align*}
|\bra{\psi_0}\Lambda_\Theta\ket{w}| &\geq |\braket{\psi_0}{w}| - |\bra{\psi_0}(I-\Lambda_\Theta)\ket{w}| & \mbox{by the triangle ineq.}\\
&\geq |\braket{\psi_0}{w}| - \norm{\ket{\psi_0}}\cdot\norm{(I-\Lambda_\Theta)\ket{w}} & \mbox{by Cauchy-Schwarz}\\
&\geq |\braket{\psi_0}{w}| - \frac{2\sqrt{\delta}\pi}{\Theta}\norm{\ket{w}},
\end{align*}
where we used $\norm{\ket{\psi_0}}=1$ and \lem{effectively-zero}.
Then:
\begin{align*}
\norm{\Lambda_\Theta\ket{\psi_0}} &\geq \norm{\frac{\Lambda_\Theta\ket{w}\bra{w}\Lambda_\Theta}{\norm{\Lambda_\Theta\ket{w}}^2}\ket{\psi_0}}
= \frac{|\bra{w}\Lambda_\Theta\ket{\psi_0}|}{\norm{\Lambda_\Theta\ket{w}}}
\geq \frac{|\braket{\psi_0}{w}|-\frac{2\sqrt{\delta}\pi}{\Theta}\norm{\ket{w}}}{\norm{\ket{w}}}
= \frac{|\braket{\psi_0}{w}|}{\norm{\ket{w}}} - \frac{2\sqrt{\delta}\pi}{\Theta}.\qedhere
\end{align*}
\end{proof}

\paragraph{Phase Estimation Algorithm:}

By \lem{lin-alg-positive-approx}, if there exists a $\delta$-positive witness, which happens precisely when there is some component of $\ket{\psi_0}$ that is nearly orthogonal to ${\cal A}+{\cal B}$, then $\ket{\psi_0}$ overlaps the $e^{i\theta}$-eigenspaces of $U_{\cal AB}$ for small $\theta$, say with $|\theta|\leq \Theta_0$ for some small-ish choice of $\Theta_0$. The precise overlap depends on the size of this component, and allows us to lower bound the probability that phase estimation of $U_{\cal AB}$ on $\ket{\psi_0}$ will result in a 0 in the phase register. On the other hand, if $\ket{\psi_0}$ is actually in ${\cal A}+{\cal B}$, then \lem{lin-alg-negative-approx} upper bounds the overlap of $\ket{\psi_0}$ with small phase spaces, where ``small'' is determined by the parameter $\Theta>\Theta_0$. This allows us to upper bound the probability that phase estimation of $U_{\cal AB}$ on $\ket{\psi_0}$, to precision $\Theta$, will result in a 0 in the phase register. The key is then to choose the parameter $\Theta$ small enough so that there is a constant gap between the lower bound on the probability of a 0 phase in the positive case, and the upper bound on the probability of a 0 phase in the negative case. This leads to the following theorem.

\begin{theorem}\label{thm:lin-alg-fwk}
Fix $(H,\ket{\psi_0},\Psi^{\cal A},\Psi^{\cal B})$ as in \defin{lin-alg}.
Suppose we can generate the state $\ket{\psi_0}$ in cost ${\sf S}$, and implement $U_{\cal AB}=(2\Pi_{\cal A}-I)(2\Pi_{\cal B}-I)$ in cost ${\sf A}$.

\noindent Let $c_+\in [1,50]$ be some constant, and let ${\cal C}_-\geq 1$ be a positive real number that may scale with $|x|$. Let $\delta$ and $\delta'$ be positive real parameters such that 
\begin{equation*}
\delta\leq \frac{1}{(8c_+)^{3}\pi^8{{\cal C}_-}}\quad\mbox{ and }\quad\delta'\leq \frac{3}{4}\frac{1}{\pi^4c_+}.
\end{equation*}
Suppose we are guaranteed that exactly one of the following holds:
\begin{description}
\item[Positive Condition:] There is a $\delta$-positive witness $\ket{w}$ (see \defin{pos-witness}), s.t.~$\frac{|\braket{w}{\psi_0}|^2}{\norm{\ket{w}}^2}\geq \frac{1}{c_+}$. 
\item[Negative Condition:] There is a $\delta'$-negative witness $\ket{w_{\cal A}},\ket{w_{\cal B}}$ (\defin{neg-witness}), s.t.~$\norm{\ket{w_{\cal A}}}^2 \leq {\cal C}_-$.
\end{description}
Suppose we perform $T= \sqrt{8}\pi^4 c_+\sqrt{{\cal C}_-}$ steps of phase estimation of $U_{\cal AB}$ on initial state $\ket{\psi_0}$, and output 1 if and only if the measured phase is 0, otherwise we output 0. Then
\begin{description}
\item[Positive Case:] If the positive condition holds, the algorithm outputs 1 with probability 
$
\geq\frac{2.25}{\pi^2c_+}\geq \frac{2.25}{50\pi^2}.
$ 
\item[Negative Case:] If the negative condition holds, the algorithm outputs 1 with probability 
$
\leq\frac{2}{\pi^2c_+}.
$ 
\end{description}
Thus, the algorithm distinguishes between these two cases with bounded error, in cost
$$O\left({\sf S}+\sqrt{{\cal C}_-}{\sf A}\right).$$
\end{theorem}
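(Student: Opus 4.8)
The plan is to run standard phase estimation of $U_{\cal AB}$ on $\ket{\psi_0}$ with the number of steps $T$ chosen so that a phase-precision parameter $\Theta \approx \pi/T$ separates the positive and negative cases, and then to bound the probability of reading off a $0$ phase in each case using \lem{lin-alg-positive-approx} and \lem{lin-alg-negative-approx}. First I would recall the precise guarantee of Kitaev-style phase estimation: with $T$ applications of the controlled-$U_{\cal AB}$ one gets a phase estimate such that, writing $\Lambda_\Theta$ for the projector onto the $e^{i\theta}$-eigenspaces with $|\theta|\le\Theta$ (as in \lem{effective-spectral-gap}), the probability of outputting $0$ is at least $c_1\norm{\Lambda_0\ket{\psi_0}}^2$ for some universal constant $c_1$ (any eigenvector with true phase $0$ contributes a $0$ readout with constant probability), and at most $\norm{\Lambda_\Theta\ket{\psi_0}}^2 + (\text{small tail})$ once $\Theta$ is a constant times $1/T$; here I would fix $\Theta = \Theta(1/T)$ with the implied constant chosen to make the bookkeeping below work, so $\Theta = \sqrt{8}\pi^4 c_+\sqrt{{\cal C}_-}/T \cdot(\text{const}) $, i.e. a small constant divided by $\sqrt{{\cal C}_-}$ after substituting $T=\sqrt{8}\pi^4 c_+\sqrt{{\cal C}_-}$.

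Next, the positive case: by \lem{lin-alg-positive-approx} with this $\Theta$ and the hypothesis $|\braket{w}{\psi_0}|^2/\norm{\ket{w}}^2 \ge 1/c_+$, we get
\[
\norm{\Lambda_\Theta\ket{\psi_0}} \ge \frac{1}{\sqrt{c_+}} - \frac{2\sqrt{\delta}\pi}{\Theta}.
\]
Plugging in $\Theta = (\text{const})/\sqrt{{\cal C}_-}$ and the hypothesis $\delta \le \big((8c_+)^3\pi^8{\cal C}_-\big)^{-1}$, the subtracted term is at most a small fraction of $1/\sqrt{c_+}$ (the $(8c_+)^3\pi^8$ in the denominator is sized exactly so that $2\sqrt{\delta}\pi/\Theta \le \tfrac{1}{4}\cdot\tfrac{1}{\sqrt{c_+}}$ or so), hence $\norm{\Lambda_\Theta\ket{\psi_0}}^2 \ge (3/4)^2/c_+$, and by the phase-estimation lower bound the output is $1$ with probability $\ge \tfrac{2.25}{\pi^2 c_+}$. (I would be careful that phase estimation detects the true-$0$ eigenspace, not merely the $\Lambda_\Theta$ band; the standard argument is that the $\delta$-positive witness $\ket{w}$, being nearly orthogonal to ${\cal A}+{\cal B}$, is nearly a $(+1)$-eigenvector of $U_{\cal AB}$, so $\norm{\Lambda_0\ket{\psi_0}}$ is itself lower-bounded — this is the $\delta\to 0$ remark preceding \lem{effectively-zero}, made quantitative via \lem{effectively-zero} applied to $\ket{w}$.)

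For the negative case, apply \lem{lin-alg-negative-approx} with the $\delta'$-negative witness: since $\ket{\psi_0}\perp{\cal B}$ and $\norm{\ket{w_{\cal A}}}^2\le{\cal C}_-$,
\[
\norm{\Lambda_\Theta\ket{\psi_0}} \le \frac{\Theta}{2}\sqrt{{\cal C}_-} + 2\sqrt{\delta'}.
\]
With $\Theta = (\text{const})/\sqrt{{\cal C}_-}$ the first term is a small constant, and $\delta' \le \tfrac{3}{4}\tfrac{1}{\pi^4 c_+}$ controls the second; choosing the constant in $\Theta$ appropriately gives $\norm{\Lambda_\Theta\ket{\psi_0}}^2 \le \tfrac{2}{\pi^2 c_+}$, and the phase-estimation upper bound (probability of $0$ readout $\le \norm{\Lambda_\Theta\ket{\psi_0}}^2$ up to the precision's soft tail, absorbed by the slack between $2.25$ and $2$) finishes it. Since $\tfrac{2.25}{\pi^2 c_+} > \tfrac{2}{\pi^2 c_+}$ with a constant multiplicative gap, $O(1)$ repetitions and a majority vote drive the error below any constant; the cost is $T$ applications of $U_{\cal AB}$ (cost ${\sf A}$ each), plus one preparation of $\ket{\psi_0}$ (cost ${\sf S}$) per repetition, for a total of $O({\sf S} + \sqrt{{\cal C}_-}{\sf A})$, absorbing $c_+ \le 50$ and the $\pi^4$ into the constant.

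The main obstacle I anticipate is not any single inequality but pinning down the exact constant hidden in $\Theta$ versus $T$, and making precise the phase-estimation statement "probability of reading $0$ is sandwiched between $c_1\norm{\Lambda_0\ket{\psi_0}}^2$ and $\norm{\Lambda_\Theta\ket{\psi_0}}^2 + o(1)$": one must choose the number of ancilla qubits / iterations so that the soft tail of the Fourier-based estimate is smaller than the $0.25/(\pi^2 c_+)$ gap, and simultaneously so that $\Theta$ is small enough for both lemmas. The given numerical hypotheses on $\delta,\delta'$ and the value $T=\sqrt{8}\pi^4 c_+\sqrt{{\cal C}_-}$ are clearly reverse-engineered to make this work, so the real content of the proof is just carefully threading these constants through \lem{lin-alg-positive-approx}, \lem{lin-alg-negative-approx}, \lem{effectively-zero}, and a black-box phase-estimation bound.
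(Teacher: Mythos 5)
Your overall structure tracks the paper's proof closely: you invoke \lem{lin-alg-positive-approx} and \lem{lin-alg-negative-approx}, choose $\Theta = O(1/\sqrt{{\cal C}_-})$, and thread the numerical hypotheses on $\delta, \delta'$ through to the final constants. But there is a genuine conceptual gap in your parenthetical about the phase-estimation lower bound, and the fix you propose for it does not work.

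You first state the phase-estimation guarantee as: probability of readout $0$ is ``at least $c_1\norm{\Lambda_0\ket{\psi_0}}^2$.'' This is true (with $c_1 = 1$) but it is not the statement you need: the positive analysis via \lem{lin-alg-positive-approx} produces a lower bound on $\norm{\Lambda_{\Theta_0}\ket{\psi_0}}$ for a nonzero precision window $\Theta_0$, not on $\norm{\Lambda_0\ket{\psi_0}}$. You then notice this mismatch and try to patch it by arguing, via \lem{effectively-zero} applied to $\ket{w}$, that $\norm{\Lambda_0\ket{\psi_0}}$ itself is lower-bounded because $\ket{w}$ is ``nearly a $(+1)$-eigenvector.'' This patch is incorrect. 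What \lem{effectively-zero} actually gives is that $\ket{w}$ has small component outside the \emph{band} $\Lambda_\Theta$; it does not localize $\ket{w}$ onto the exact $+1$-eigenspace $\Lambda_0$. If $U_{\cal AB}$ has a dense spectrum near $1$ with no eigenvalue exactly equal to $1$, then $\norm{\Lambda_0\ket{\psi_0}}$ can be zero even though $\norm{\Lambda_\Theta\ket{\psi_0}}$ is large, and your proposed argument collapses.

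The correct resolution --- which is the paper's --- is that the phase-estimation lower bound you should invoke is stronger than the one you wrote: with $T$ steps of phase estimation, the probability of measuring readout $0$ is at least $\frac{4}{\pi^2}\norm{\Lambda_{\pi/T}\ket{\psi_0}}^2$, i.e., the entire band of eigenphases with $|\theta| \le \pi/T$ is ``seen as zero'' with constant relative probability. The paper derives this directly from the closed-form expression $p_0 = \frac{1}{T^2}\sum_j \frac{\sin^2(T\theta_j/2)}{\sin^2(\theta_j/2)}\norm{\Pi_j\ket{\psi_0}}^2 + \norm{\Lambda_0\ket{\psi_0}}^2$, lower-bounding the sum by restricting to $0 < |\theta_j| \le \pi/T$ and using $\sin^2(T\theta/2)/\sin^2(\theta/2) \ge \frac{4}{\pi^2}T^2$ there. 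Once you have this, your positive case goes through with $\Theta_0 = \pi/T$ and no reference to $\Lambda_0$ at all, and \lem{effectively-zero} is not needed in the top-level argument (it is only used inside the proof of \lem{lin-alg-positive-approx}). Secondary remark: the paper in fact uses two different precision windows --- $\Theta_0 = \pi/T$ for the positive lower bound and $\Theta = \pi^{-2}(c_+{\cal C}_-)^{-1/2}$ for the negative upper bound --- and the negative upper bound comes with a factor $\frac{\pi^2}{4}$ plus an explicit tail term $\frac{1}{T^2}\frac{\pi^2}{\Theta^2}$, not simply $p_0 \le \norm{\Lambda_\Theta\ket{\psi_0}}^2$ as you state; these refinements are what make the specific constants $2.25$ vs.\ $2$ land as claimed. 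Absorbing your ``soft tail'' informally would lose the clean gap between the two cases that the theorem asserts without any repetition/amplification.
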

\begin{proof}
Let $\{\theta_j\}_{j\in J}\subset (-\pi,\pi]$ be the set of phases of $U_{\cal AB}$, and let $\Pi_j$ be the orthogonal projector onto the $e^{i\theta_j}$-eigenspace of $U_{\cal AB}$, so we can write:
\begin{equation*}
U_{\cal AB} = \textstyle{\displaystyle\sum}_{j\in J} e^{i\theta_j}\Pi_j.
\end{equation*}

After making a superposition over $t$ from $0$ to $T-1$ in the phase register, and applying $U_{\cal AB}^t$ to the input register conditioned on the phase register,  as one does in phase estimation~\cite{kitaev1996PhaseEst}, we have the state:
\begin{equation*}
\begin{split}
\sum_{t=0}^{T-1}\frac{1}{\sqrt{T}}\ket{t}U_{\cal AB}^t\ket{\psi_0} &=  \sum_{j\in J}\sum_{t=0}^{T-1}\frac{1}{\sqrt{T}}\ket{t}e^{it\theta_j}\Pi_j\ket{\psi_0}.
\end{split}
\end{equation*}
The phase estimation algorithm then proceeds by applying an inverse Fourier transform, $F_T^\dagger$, to the first register, and then measuring the result, to obtain some $t\in \{0,\dots,T-1\}$. We choose the output bit based on whether $t=0$ or not. The probability of measuring 0 is:
\begin{equation}
\begin{split}
p_0&:=  \norm{\bra{0}F_T^\dagger \otimes I\left( \sum_{j\in J} \sum_{t=0}^{T-1}\frac{1}{\sqrt{T}}\ket{t}e^{it\theta_j}\Pi_j\ket{\psi_0} \right)}^2 
= \norm{\sum_{t=0}^{T-1}\frac{1}{\sqrt{T}}\bra{t}\otimes I \left( \sum_{j\in J} \sum_{t=0}^{T-1}\frac{1}{\sqrt{T}}\ket{t}e^{it\theta_j}\Pi_j\ket{\psi_0} \right)}^2\\
&= \frac{1}{T^2}\norm{\sum_{j\in J}\sum_{t=0}^{T-1}e^{it\theta_j}\Pi_j\ket{\psi_0}}^2
= \frac{1}{T^2}\sum_{j\in J:\theta_j\neq 0}\left| \frac{1-e^{i\theta_j T}}{1-e^{i\theta_j}} \right|^2 \norm{\Pi_j\ket{\psi_0}}^2 + \norm{\Lambda_0\ket{\psi_0}}^2\\
&=\frac{1}{T^2}\sum_{j\in J:\theta_j\neq 0}\frac{\sin^2(T\theta_j/2)}{\sin^2(\theta_j/2)}\norm{\Pi_j\ket{\psi_0}}^2+\norm{\Lambda_0\ket{\psi_0}}^2,
\end{split}\label{eq:phase-prob}
\end{equation}
since $\abs{\sum_{t=0}^{T-1}e^{i t\theta}} = \abs{\frac{1-e^{i\theta T}}{1-e^{i\theta}}}$,
and $|1-e^{i\theta}|^2=4\sin^2\frac{\theta}{2}$ for any $\theta\in\mathbb{R}$.
We will analyse the positive and negative cases one-by-one. 

\paragraph{Positive Case:} Assume the positive condition, which allows us to apply \lem{lin-alg-positive-approx}. In the following, we will use the identities $\sin^2\theta\leq \theta^2$ for all $\theta$, and $\sin^2\theta\geq 4\theta^2/\pi^2$ whenever $|\theta|\leq \pi/2$. 
Let $\Theta_0=\pi/T$. 
Continuing from \eq{phase-prob}, 
we can lower bound the probability of measuring a 0 in the phase register by:
\begin{align*}
p_0 &\geq \frac{1}{T^2}\sum_{j\in J:0<|\theta_j|\leq\Theta_0}\frac{\sin^2(T\theta_j/2)}{\sin^2(\theta_j/2)}\norm{\Pi_j\ket{\psi_0}}^2+\norm{\Lambda_0\ket{\psi_0}}^2\\
&\geq \frac{1}{T^2} \sum_{j\in J:0<|\theta_j|\leq\Theta_0}\frac{4(T\theta_j/2)^2/\pi^2}{(\theta_j/2)^2}\norm{\Pi_j\ket{\psi_0}}^2+\norm{\Lambda_0\ket{\psi_0}}^2 & \mbox{since }|T\theta_j/2|\leq T\Theta_0/2= \pi/2 \\
&\geq \frac{4}{\pi^2} \left(\sum_{j\in J:0<|\theta_j|\leq\Theta_0}\norm{\Pi_j\ket{\psi_0}}^2+\norm{\Lambda_0\ket{\psi_0}}^2 \right) \\
&\geq \frac{4}{\pi^2}\norm{\Lambda_{\Theta_0}\ket{\psi_0}}^2 \geq \frac{4}{\pi^2}\left(\frac{|\braket{\psi_0}{w}|}{\norm{\ket{w}}}-\frac{2\sqrt{\delta}\pi}{\Theta_0}\right)^2 & \mbox{by \lem{lin-alg-positive-approx}}\\
&\geq \frac{4}{\pi^2}\left(\frac{1}{\sqrt{c_+}}-\frac{2\pi T}{\pi}\frac{1}{(8c_+)^{3/2}\pi^4\sqrt{{\cal C}_-}}\right)^2 & \sqrt{\delta}\leq \frac{1}{(8c_+)^{3/2}\pi^4\sqrt{{\cal C}_-}}\\
&= \frac{4}{\pi^2}\left(\frac{1}{\sqrt{c_+}}-\frac{2\sqrt{8}\pi^4c_+\sqrt{{\cal C}_-}}{(8c_+)^{3/2}\pi^4\sqrt{{\cal C}_-}}\right)^2
= \frac{4}{\pi^2}\left(\frac{3}{4}\right)^2\frac{1}{c_+}
= \frac{2.25}{\pi^2}\frac{1}{c_+} \geq \frac{2.25}{50\pi^2}.\!\!\!\!\!\!\!\!
\end{align*}

\paragraph{Negative Case:} Assume the negative condition, which allows us to apply \lem{lin-alg-negative-approx}. In the following, we will use the identities $\sin^2\theta\leq \min\{1,\theta^2\}$ for all $\theta$, and $\sin^2(\theta/2)\geq \theta^2/\pi^2$ whenever $|\theta|\leq \pi$. Let $\Theta = {\pi^{-2}({c_+{\cal C}_-})^{-1/2}}$.
Continuing from \eq{phase-prob}, 
we can \emph{upper} bound the probability of measuring a 0 in the phase register by:
\begin{align*}
p_0
&=\frac{1}{T^2}\sum_{j\in J:0<|\theta_j|\leq \Theta}\frac{\sin^2(T\theta_j/2)}{\sin^2(\theta_j/2)}\norm{\Pi_j\ket{\psi_0}}^2
+\frac{1}{T^2}\sum_{j\in J:|\theta_j|> \Theta}\frac{\sin^2(T\theta_j/2)}{\sin^2(\theta_j/2)}\norm{\Pi_j\ket{\psi_0}}^2+\norm{\Lambda_0\ket{\psi_0}}^2\!\!\!\!\!\!\!\!\!\!\!\!\!\!\!\!\!\!\!\!\!\!\!\!\!\!\!\!\!\!\!\!\!\!\!\!\!\!\!\!\!\!\!\!\!\!\!\!\!\!\!\!\!\!\!\!\!\!\!\!\!\!\!\!\!\!\!\!\!\!\!\!\!\!\!\!\!\!\!\!\!\!\!\!\\
&\leq \frac{1}{T^2}\sum_{j\in J:0<|\theta_j|\leq \Theta}\frac{(T\theta_j/2)^2}{(\theta_j/\pi)^2}\norm{\Pi_j\ket{\psi_0}}^2+\frac{1}{T^2}\sum_{j\in J:|\theta_j|>\Theta}\frac{1}{(\theta_j/\pi)^2}\norm{\Pi_j\ket{\psi_0}}^2+\norm{\Lambda_0\ket{\psi_0}}^2\!\!\!\!\!\!\!\!\!\!\!\!\!\!\!\!\!\!\!\!\!\!\!\!\!\!\!\!\!\!\!\!\!\!\!\!\!\!\!\!\!\!\!\!\!\!\!\!\!\!\!\!\!\!\!\!\!\!\!\!\!\!\!\!\!\!\!\!\!\!\!\!\!\!\\
&\leq \frac{\pi^2}{4}\norm{\Lambda_\Theta\ket{\psi_0}}^2+\frac{1}{T^2}\frac{\pi^2}{\Theta^2}\\
&\leq \frac{\pi^2}{4}\left(\frac{\Theta}{2}\norm{\ket{w_{\cal A}}}+2\sqrt{\delta'}\right)^2+\frac{1}{8\pi^8c_+^2{\cal C}_-}\frac{\pi^2}{\Theta^2} & \mbox{by \lem{lin-alg-negative-approx}}\\
&\leq \frac{\pi^2}{4}\left(\frac{1}{\pi^2\sqrt{c_+{\cal C}_-}}\sqrt{\cal C}_-+2\frac{\sqrt{3}}{2\pi^2\sqrt{c_+}}\right)^2 + \frac{\pi^2}{8\pi^8c_+^2{\cal C}_-}\pi^4 c_+{\cal C}_- & \sqrt{\delta'}\leq \frac{\sqrt{3}}{2}\frac{1}{\pi^2\sqrt{c_+}}\mbox{ and } \norm{\ket{w_{\cal A}}}^2\leq {\cal C}_-\\
&\leq \frac{1}{4\pi^2c_+}\left(1+\sqrt{3}\right)^2 + \frac{1}{8\pi^2c_+}
 \leq \frac{2}{\pi^2c_+}.
\end{align*}
To complete the proof, it is easily verified that the described algorithm has the claimed cost.
\end{proof}

\subsection{Multidimensional Quantum Walks}\label{sec:full-framework}

Our new framework extends the electric network framework, so for intuition, we first describe this framework and sketch how it works, before stating our main theorem extending the electric network framework in \sec{alternative}.

\subsubsection{Sketch of the Electric Network Framework}\label{sec:sketch}

We begin by sketching the electric network framework of Belovs~\cite{belovs2013ElectricWalks}, on which our new framework is based. This explanation is for intuition, and the expert reader may skip it. 
Fix a network $G$ that can depend on an input $x\in\{0,1\}^n$, as in \defin{network}. Let $V_0,V_{\sf M}\subset V(G)$ be disjoint sets, $\sigma$ an \emph{initial distribution} on $V_0$, and $M\subseteq V_{\sf M}$ a \emph{marked set}. For simplicity, let us assume that $G$ is bipartite, which is always possible to ensure, for example, by replacing each edge with a path of length two.\footnote{In fact, we essentially do this in the proof of our new framework, since we replace each edge with a sort of ``algorithm gadget'', which is analogous to a path, and always has even length.}  Let $(V_{\cal A},V_{\cal B})$ be the bipartition, and assume for convenience that $V_0\subseteq V_{\cal A}$, and $V_{\sf M}\subseteq V_{\cal B}$.

Then the electric network framework is a way of designing a quantum algorithm that detects if $M\neq \emptyset$ with bounded error. 
To explain how this algorithm works conceptually, we will modify $G$ by adding a vertex $v_0$, which is connected to each vertex $u\in V_0$ by an edge of weight $\w_0\sigma(u)$ for some parameter $\w_0$, and to each vertex in $M$ by an edge of weight $\w_{\sf M}$.
Call this new graph $G'$. We assume the new edges are pointing into $v_0$, so 
$$V(G')=V(G)\cup\{v_0\}
\mbox{ and }
\overrightarrow{E}(G')=\overrightarrow{E}(G)\cup\{(u,v_0):u\in V_0\cup M\}.$$

\paragraph{The Algorithm:} We describe a phase estimation algorithm, of the form given in \sec{phase-est}. Let 
$$H=\mathrm{span}\{\ket{u,v}:(u,v)\in\overrightarrow{E}(G')\}.$$
For any $(u,v)\in \overrightarrow{E}(G')$, $H$ does not contain a vector $\ket{v,u}$, so we define:
$$\ket{v,u} := -\ket{u,v}.$$
It is conceptually convenient to think of negation as reversing the direction of an edge, but note that this means that $(u,v)$ and $(v,u)$ are \emph{not} labelling orthogonal states. 
For each $u\in V(G)\setminus (V_0\cup M)$, we define a \emph{star state}\footnote{To simplify things here, these are slightly different from the star states we use in \thm{full-framework} and \defin{alternative}. They are the same, up to sign, if $\overrightarrow{E}(G)\subset V_{\cal A}\times V_{\cal B}$.}:
\begin{equation*}
\ket{\psi_\star^{G'}(u)}:=\textstyle{\displaystyle\sum}_{v\in\Gamma(u)}\sqrt{\w_{u,v}}\ket{u,v}=\textstyle{\displaystyle\sum}_{v\in\Gamma^+(u)}\sqrt{\w_{u,v}}\ket{u,v} - \textstyle{\displaystyle\sum}_{v\in\Gamma^-(u)}\sqrt{\w_{u,v}}\ket{v,u},
\end{equation*}
where the last expression shows how to express $\ket{\psi_\star^{G'}(u)}$ in the standard basis of $H$, which only includes $(u,v)\in\overrightarrow{E}(G')$. Recall that $\Gamma^+(u)$ and $\Gamma^-(u)$ are the out- and in-neighbourhoods of $u$, defined in \eq{neighbourhoods}.
Similarly, for $u\in V_0$, we define
\begin{equation*}
\ket{\psi_\star^{G'}(u)}:=\textstyle{\displaystyle\sum}_{v\in\Gamma(u)}\sqrt{\w_{u,v}}\ket{u,v}+\sqrt{\w_0\sigma(u)}\ket{u,v_0},
\end{equation*}
and for $u\in M$,
\begin{equation*}
\ket{\psi_\star^{G'}(u)}:=\textstyle{\displaystyle\sum}_{v\in\Gamma(u)}\sqrt{\w_{u,v}}\ket{u,v}+\sqrt{\w_{\sf M}}\ket{u,v_0},
\end{equation*}
which simply includes the new edges we add when going from $G$ to $G'$. The star states are not normalised, but if we normalise $\ket{\psi_\star^{G'}(u)}$, we get a \emph{quantum walk state}: a state that, if measured, would allow one to sample from the neighbours of $u$ as in a random walk on $G'$. 

We use an initial state based on the initial distribution $\sigma$:
$$\ket{\psi_0}=\textstyle{\displaystyle\sum}_{u\in V_0}\sqrt{\sigma(u)}\ket{u,v_0}=\ket{\sigma}\ket{v_0}.$$

Finally, let 
$$\Psi^{\cal A}=\{\ket{\psi_\star^{G'}(u)}:u\in V_{\cal A}\}
\mbox{ and }
\Psi^{\cal B}=\{\ket{\psi_\star^{G'}(u)}:u\in V_{\cal B}\}.$$
Since $V_{\cal A}$ and $V_{\cal B}$ are each independent sets, each set is a pairwise orthogonal set of states. Thus, being able to generate these states\footnote{Note that $v_0\not\in V_{\cal A}\cup V_{\cal B}$. This is important, because generating the star state for $v_0$ would require knowing precisely which vertices of $V_{\cal B}$ are in $M$.} is sufficient to be able to reflect around ${\cal A}=\mathrm{span}\{\Psi^{\cal A}\}$ and ${\cal B}=\mathrm{span}\{\Psi^{\cal B}\}$, in order to implement:
$$U_{\cal AB}=(2\Pi_{\cal A}-I)(2\Pi_{\cal B}-I).$$
It can then be verified that $({\cal A}+{\cal B})^\bot$ is the span of all \emph{circulation} states on $G'$:
$$\ket{C}=\textstyle{\displaystyle\sum}_{(u,v)\in \overrightarrow{E}(G')}\displaystyle\frac{C(u,v)}{\sqrt{\w_{u,v}}}\ket{u,v},$$
where $C$ is a circulation (see \defin{flow}). Thus, a 0-positive witness for a phase estimation algorithm with these parameters (see \defin{pos-witness}) is always a circulation. 

\paragraph{Positive Case:} Suppose that whenever $M\neq\emptyset$, there exists a flow $\theta$ on $G$ (see \defin{flow}) with sources in $V_0$ and sinks in $M$, and suppose $\theta(u)=\sigma(u)$ for all $u\in V_0$. Then we can extend $\theta$ to a circulation, $C_\theta$, on $G'$ by sending all excess flow from $M$ into $v_0$, and then sending the flow out from $v_0$ to $V_0$, distributed according to $\sigma$. The state corresponding to $C_\theta$ will include a term $\sum_{u\in V_0}\frac{\sigma(u)}{\sqrt{\sigma(u) \w_0}}\ket{u,v_0}=\frac{1}{\sqrt{\w_0}}\ket{\psi_0}$ -- the part that distributes flow from $v_0$ to $V_0$ according to $\sigma$. Thus, $\ket{C_\theta}$ is a positive witness: a state in $({\cal A}+{\cal B})^\bot$ that has non-zero overlap with $\ket{\psi_0}$. In particular, $\braket{\psi_0}{C_\theta}=\frac{1}{\sqrt{\w_0}}$, and one can check that 
$\norm{\ket{C_\theta}}^2 \approx \frac{1}{\w_0}+{\cal E}(\theta),$
where ${\cal E}(\theta)$ is the energy of $\theta$ (see \defin{flow}). So in particular, 
$$\frac{\norm{\ket{C_\theta}}^2}{|\braket{\psi_0}{C_\theta}|^2} \approx 1+\w_0 {\cal E}(\theta).$$

\paragraph{Negative Case:} On the other hand, whenever $M=\emptyset$, if we add up \emph{all} star states $\ket{\psi_\star^{G'}(u)}$ for $u\in V(G)=V_{\cal A}\cup V_{\cal B}$ (this does not include $v_0$), for every edge $(u,v)\in \overrightarrow{E}(G)$, we will get a contribution of $\sqrt{\w_{u,v}}\ket{u,v}$ from $\ket{\psi_\star^{G'}(u)}$, and a contribution of $\sqrt{\w_{u,v}}\ket{v,u}=-\sqrt{\w_{u,v}}\ket{u,v}$ from $\ket{\psi_\star^{G'}(v)}$, which will add up to 0. However, the edges in $\overrightarrow{E}(G')\setminus\overrightarrow{E}(G)$, which are precisely the edges from $u\in V_0$ to $v_0$ (as $M=\emptyset$) will only appear in $\ket{\psi_\star^{G'}(u)}$, which contributes $\sqrt{\w_0\sigma(u)}\ket{u,v_0}$. Thus, adding up all star states results in the vector $\sqrt{\w_0}\ket{\psi_0}$, so if we let 
$$\ket{w_{\cal A}}=\frac{1}{\sqrt{\w_0}}\sum_{u\in V_{\cal A}}\ket{\psi_\star^{G'}(u)}
\mbox{ and }
\ket{w_{\cal B}}=\frac{1}{\sqrt{\w_0}}\sum_{u\in V_{\cal B}}\ket{\psi_\star^{G'}(u)}
$$
then these form a 0-negative witness (see \defin{neg-witness}), with 
$$\norm{\ket{w_{\cal A}}}^2 = \frac{1}{\w_0}\textstyle{\displaystyle\sum}_{e\in \overrightarrow{E}(G')}\w_e = \displaystyle\frac{1}{\w_0}{\cal W}(G') \approx \frac{1}{\w_0}{\cal W}(G).$$

\paragraph{Electric Network Framework:} By applying \thm{lin-alg-fwk}, we can get the following. Let 
${\cal R}$ be an upper bound on $\min_\theta {\cal E}(\theta)$ where $\theta$ runs over all flows from $\sigma$ to $M$, whenever $M\neq \emptyset$. Let ${\cal W}$ be an upper bound on ${\cal W}(G)$. Define:
$$c_+ = 1+\w_0{\cal R}
\mbox{ and }
{\cal C}_- = \frac{1}{\w_0}{\cal W},$$
and let $\w_0={\cal R}^{-1}$ so that $c_+=O(1)$. 
If ${\sf S}$ is the complexity of generating the state $\ket{\sigma}$, and ${\sf A}$ is the cost of generating the star states (which requires checking if a vertex is marked), then there is a quantum algorithm that decides if $M=\emptyset$ with bounded error in complexity 
$$O({\sf S}+\sqrt{{\cal C}_-}{\sf A}) = O({\sf S}+\sqrt{{\cal R}{\cal W}}{\sf A}).$$

\subsubsection{The Multidimensional Quantum Walk Framework}\label{sec:alternative}

We now state our extension of the electric network framework. 
In the electric network framework, we assume we can generate all star states in some cost ${\sf A}$, which implicitly assumes that for any vertex $u$, we can compute the neighbours of $u$ in time at most ${\sf A}$. However, in some cases, the actual \emph{transition cost} from $u$ to $v$, ${\sf T}_{u,v}$ may vary significantly for different $u$, and different neighbours $v$ of $u$. Our modified framework takes this variation into account, avoiding incurring a factor of the maximum ${\sf T}_{u,v}$. Instead, the complexity will scale as if we replaced each edge $\{u,v\}$ in $G$ by a path of length ${\sf T}_{u,v}$. 

The second modification we make to the electric network framework is to allow for \emph{alternative neighbourhoods}, which we now formally define.

\begin{definition}[Alternative Neighbourhoods]\label{def:alternative}
For a network $G$, as in \defin{network} and \defin{QW-access}, a set of \emph{alternative neighbourhoods} is a collection of states:
$$\Psi_\star=\{\Psi_\star(u)\subset\mathrm{span}\{\ket{u,i}:i\in L(u)\}: u\in V(G)\}$$
such that for all $u\in V(G)$, 
$$\ket{\psi_\star^G(u)}:=\sum_{i\in L^+(u)}\sqrt{\w_{u,i}}\ket{u,i}-\sum_{i\in L^-(u)}\sqrt{\w_{u,i}}\ket{u,i}\in \Psi_\star(u).$$ 
We view the states of $\Psi_\star(u)$ as different possibilities for $\ket{\psi_\star^G(u)}$, only one of which is ``correct.'' Let $d_{\max}=\max\{|L(u)|:u\in V(G)\}$. We say we can \emph{generate $\Psi_\star$ in complexity ${\sf A}_\star$}, for some ${\sf A}_{\star}=\Omega(\log d_{\max})$, if there is a map $U_\star$ such that:
\begin{itemize} 
\item for each $u\in V(G)$, there is an orthonormal basis $\overline{\Psi}(u)=\{\ket{\overline{\psi}_{u,0}},\dots,\ket{\overline{\psi}_{u,a_u-1}}\}$ for the span of $\Psi_\star(u)$, such that for all $k\in [a_u]$,
$U_{\star}\ket{u,k}=\ket{\bar\psi_{u,k}}$, and 
\item $U_\star$ can be implemented with complexity ${\sf A}_\star$.
\end{itemize}
\end{definition}

It may be possible to implement a set of alternative neighbourhoods $\Psi_\star$ for $G$ faster than it would be possible to generate the star states of $G$. This happens when, given $u$, it is expensive to determine the correct form of $\ket{\psi_\star^G(u)}$, but we do know that it is one of a set of easily generated states, say $\ket{\psi_\star^1(u)}$ or $\ket{\psi_\star^2(u)}$ (see the discussion in \sec{intro-QW}).

We now state the main result of this paper, from which the applications in \sec{welded} and \sec{k-dist-full} follow. 

\begin{theorem}[Multidimensional Quantum Walk Framework]\label{thm:full-framework}
Fix a family of networks $G$ that may depend on some implicit input $x$, with disjoint sets $V_0,V_{\sf M}\subset V(G)$ such that for any vertex, checking if $v\in V_0$ (resp. if $v\in V_{\sf M}$) can be done in at most ${\sf A}_\star$ complexity. Let $M\subseteq V_{\sf M}$ be the \emph{marked set}, and $\sigma$ an \emph{initial distribution} on $V_0$. Let $\Psi_\star=\{\Psi_\star(u):u\in V(G)\}$ be a set of alternative neighbourhoods for $G$ (see \defin{alternative}). 
For all $u\in V_0\cup V_{\sf M}$, assume that $\Psi_\star(u)=\{\ket{\psi_\star^G(u)}\}$. 
Fix some positive real-valued ${\cal W}^{\sf T}$ and ${\cal R}^{\sf T}$, that may scale with $|x|$.
Suppose the following conditions hold.
\begin{description}
\item[Setup Subroutine:] The state $\ket{\sigma}=\sum_{u\in V_0}\sqrt{\sigma(u)}\ket{u}$ can be generated in cost ${\sf S}$, and furthermore, for any $u\in V_0$, $\sigma(u)$ can be computed in $O(1)$ complexity. 
\item[Star State Generation Subroutine:] We can generate $\Psi_\star$ in complexity ${\sf A}_\star$.
\item[Transition Subroutine:] There is a quantum subroutine (see \defin{variable-time}) that implements the transition map of $G$ (see \defin{QW-access}) 
with errors $\{\epsilon_{u,v}\}_{(u,v)\in\overrightarrow{E}(G)}$ and costs $\{{\sf T}_{u,v}\}_{(u,v)\in\overrightarrow{E}(G)}$. We make the following assumptions on the errors $\epsilon_{u,v}$, where $\tilde{E}\subset\overrightarrow{E}(G)$ is some (possibly unknown) set of edges on which we allow the subroutine to fail:
	\begin{description}
	\item[TS1] For all $(u,v)\in\overrightarrow{E}(G)\setminus\tilde{E}$, $\epsilon_{u,v}\leq \epsilon$, where $\displaystyle\epsilon = o\left(\frac{1}{{\cal W}^{\sf T}{\cal R}^{\sf T}}\right)$. 
	\item[TS2] For all $(u,v)\in \tilde{E}$, there is no non-trivial upper bound on $\epsilon_{u,v}$, but $\displaystyle\widetilde{\cal W}:=\sum_{e\in \tilde{E}}\w_e=o\left(\frac{1}{{\cal R}^{\sf T}}\right)$. 
	\end{description}
\item[Checking Subroutine:] There is an algorithm that checks, for any $u\in V_{\sf M}$, if $u\in M$, in cost ${\sf A}_\star$.\footnote{This is without loss of generality. Suppose the checking cost is some higher value ${\sf C}>{\sf A}_\star$. Then we can simply put an outgoing edge on each vertex $u\in V_{\sf M}$ that ends at a new vertex $(u,b)$ that encodes whether $u\in M$ in the bit $b$. Such an edge can be implemented with transition cost ${\sf C}$.} 
\item[Positive Condition:] Interpreting ${\sf T}_{u,v}$ as a length function on $\overrightarrow{E}(G)$, $G^{\sf T}$ is the graph obtained by replacing each edge $(u,v)$ of $G$ with a path of length ${\sf T}_{u,v}$ (see \defin{nwk-length}). If $M\neq \emptyset$, then there exists a flow $\theta$ on $G$ (see \defin{flow}) such that
	\begin{description}
	\item[P1] For all $(u,v)\in \tilde{E}$, $\theta(u,v)=0$.
	\item[P2] For all $u\in V(G)\setminus (V_0\cup M)$ and $\ket{\psi_\star(u)}\in \Psi_\star(u)$, 
$$\sum_{i\in L^+(u)}\frac{\theta(u,f_u(i))\braket{\psi_\star(u)}{u,i}}{\sqrt{\w_{u,i}}}-\sum_{i\in L^-(u)}\frac{\theta(u,f_u(i))\braket{\psi_\star(u)}{u,i}}{\sqrt{\w_{u,i}}}=0.\footnote{Whenever $\ket{\psi_\star(u)}=\ket{\psi_\star^G(u)}$, this condition is simply saying that $\theta(u)=0$ (i.e.~flow is conserved at $u$), but we require that the condition also hold for all other states in $\Psi_\star(u)$ as well.}$$ \newpage 
	\item[P3] $\sum_{u\in V_0}\theta(u)=1$.\footnote{Intuitively, we want to think of the flow as coming in at $V_0$, and exiting at $M$. While we do not make it a strict requirement that all sources are in $V_0$ and all sinks in $M$, this condition implies that we do not simply have all the flow coming in at vertices in $V_0$ and then leaving again through other vertices in $V_0$.}
	\item[P4] $\sum_{u\in V_0}\frac{|\theta(u)-\sigma(u)|^2}{\sigma(u)}\leq 1$.\footnote{Intuitively, $\theta$ should be a $\sigma$-$M$ flow, meaning that for all $u\in V_0$, $\theta(u)=\sigma(u)$. We don't make this a strict requirement, but this condition means it should hold in some approximate sense.}
	\item[P5] ${\cal E}^{\sf T}(\theta)\leq {\cal R}^{\sf T}$.
	\end{description}
\item[Negative Condition:] If $M=\emptyset$, then ${\cal W}(G^{\sf T})\leq {\cal W}^{\sf T}$. 
\end{description}                                                             
Then there is a quantum algorithm that decides if $M=\emptyset$ or not with bounded error in complexity:
$$O\left({\sf S}+\sqrt{{\cal R}^{\sf T}{\cal W}^{\sf T}}\left({\sf A}_{\star}+{\sf polylog}({\sf T}_{\max})\right)\right).$$
\end{theorem}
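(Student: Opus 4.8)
The plan is to reduce \thm{full-framework} to the phase estimation algorithm of \thm{lin-alg-fwk}: I would construct a tuple $(H,\ket{\psi_0},\Psi^{\cal A},\Psi^{\cal B})$ as in \defin{lin-alg}, exhibit a $\delta$-positive witness when $M\neq\emptyset$ and a $\delta'$-negative witness when $M=\emptyset$ meeting the bounds \thm{lin-alg-fwk} demands, and verify that $\ket{\psi_0}$ and $U_{\cal AB}$ can be prepared within the claimed costs. Following \sec{sketch}, I first augment $G$ to $G'$ by adding a vertex $v_0$ joined to each $u\in V_0$ by an edge of weight $\w_0\sigma(u)$ and to each $u\in M$ by an edge of weight $\w_{\sf M}$, where I set $\w_0:=({\cal R}^{\sf T})^{-1}$ and take $\w_{\sf M}$ large enough to absorb the excess flow out of $M$. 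Because every edge is about to be replaced by an even-length ``algorithm gadget'', I may assume $G$ carries a bipartition $(V_{\cal A},V_{\cal B})$ with $V_0\subseteq V_{\cal A}$, $V_{\sf M}\subseteq V_{\cal B}$, and I take $\ket{\psi_0}:=\sum_{u\in V_0}\sqrt{\sigma(u)}\ket{u,v_0}$, prepared in cost ${\sf S}$ using the setup subroutine.

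The core of the construction is the \emph{edge gadget}. For each $(u,v)\in\overrightarrow{E}(G)$ I use the quantum subroutine of \defin{variable-time} realizing the transition map to build a chain of subspaces of length $\Theta({\sf T}_{u,v})$ interpolating between $\ket{u,f_u^{-1}(v)}$ and $\ket{v,f_v^{-1}(u)}$: the ``position $t$'' space carries $U_{t-1}\cdots U_0$ applied to the input register, consecutive positions share a one-dimensional overlap, so the overlap graph of the gadget is exactly a path of length ${\sf T}_{u,v}$ in the sense of \defin{nwk-length}. I split the gadget subspaces into an ${\cal A}$-family and a ${\cal B}$-family by the parity of $t$, and I attach the states of $\Psi_\star(u)$ to the $u$-ends of the gadgets leaving $u$ (for $u\in V_0\cup V_{\sf M}$, by hypothesis $\Psi_\star(u)=\{\ket{\psi_\star^G(u)}\}$ is a single state). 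Taking $\Psi^{\cal A}$ to be $\{\Psi_\star(u):u\in V_{\cal A}\}$ together with the ${\cal A}$-family gadget states, and $\Psi^{\cal B}$ analogously, the non-confusion property (item~\ref{item:non-confusion} of \defin{variable-time}) guarantees that gadgets of distinct edges and distinct labels occupy mutually orthogonal subspaces, so $\Psi^{\cal A}$ and $\Psi^{\cal B}$ are each block-orthogonal; hence reflecting around ${\cal A}$ and ${\cal B}$ reduces to applying the star-state generator $U_\star$ of \defin{alternative} (cost ${\sf A}_\star$, also absorbing the checking subroutine) and the controlled transition map $\sum_t\ket{t}\bra{t}\otimes U_t$, which costs ${\sf polylog}({\sf T}_{\max})$ by items~\ref{item:select} and~\ref{item:known-times}. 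Thus $U_{\cal AB}$ is implemented in cost ${\sf A}=O({\sf A}_\star+{\sf polylog}({\sf T}_{\max}))$.

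Next I build the two witnesses. In the negative case $M=\emptyset$, I set $\ket{w_{\cal A}}:=\w_0^{-1/2}\sum_{u\in V_{\cal A}}(\text{ideal star/gadget states at }u)$ and likewise $\ket{w_{\cal B}}$: along each gadget the contributions telescope, across each edge of $G$ the $u$- and $v$-contributions cancel, and the only surviving terms are $\sqrt{\w_0\sigma(u)}\,\ket{u,v_0}$ for $u\in V_0$, which sum to $\sqrt{\w_0}\,\ket{\psi_0}$; so $\ket{\psi_0}=\ket{w_{\cal A}}+\ket{w_{\cal B}}$ exactly. The ideal states differ from the actual (imperfectly implemented) ones only through the transition errors, so $\norm{(I-\Pi_{\cal A})\ket{w_{\cal A}}}^2$ and $\norm{(I-\Pi_{\cal B})\ket{w_{\cal B}}}^2$ are bounded, via \textbf{TS1} on $\overrightarrow{E}(G)\setminus\tilde E$ and \textbf{TS2} on $\tilde E$, by $\delta'=O\bigl(\w_0(\epsilon\,{\cal W}^{\sf T}+\widetilde{\cal W})\bigr)=o(1)$, with $\norm{\ket{w_{\cal A}}}^2=\w_0^{-1}{\cal W}(G')\le \w_0^{-1}{\cal W}^{\sf T}=:{\cal C}_-$. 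In the positive case $M\neq\emptyset$, I extend the flow $\theta$ to a circulation on $G'$ by routing its excess out of $M$ into $v_0$ and back out of $v_0$ into $V_0$ proportionally to $\sigma$, push it through the gadgets (each $(u,v)$ becomes the uniform flow $\theta(u,v)$ along its length-${\sf T}_{u,v}$ path), and let $\ket{w}$ be the associated state. Condition \textbf{P2} makes $\ket{w}$ orthogonal to every state in $\Psi_\star(u)$, and \textbf{P1} together with the exactness of the gadgets off $\tilde E$ makes it orthogonal to the gadget states up to error $O(\epsilon)$, so $\ket{w}$ is a $\delta$-positive witness with $\delta=O(\epsilon)$; conditions \textbf{P3}, \textbf{P4}, \textbf{P5} and the choice $\w_0=({\cal R}^{\sf T})^{-1}$ give $|\braket{\psi_0}{w}|\approx\w_0^{-1/2}$ and $\norm{\ket w}^2\approx\w_0^{-1}+{\cal E}^{\sf T}(\theta)\le 2\w_0^{-1}$, hence $|\braket{\psi_0}{w}|^2/\norm{\ket w}^2=\Omega(1)$, i.e.\ $c_+=O(1)$.

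Finally I apply \thm{lin-alg-fwk} with $c_+=O(1)$ and ${\cal C}_-=\w_0^{-1}{\cal W}^{\sf T}=\Theta({\cal R}^{\sf T}{\cal W}^{\sf T})$; its hypotheses $\delta=O({\cal C}_-^{-1})$ and $\delta'=O(1)$ are precisely what \textbf{TS1} ($\epsilon=o(({\cal W}^{\sf T}{\cal R}^{\sf T})^{-1})$) and \textbf{TS2} ($\widetilde{\cal W}=o(({\cal R}^{\sf T})^{-1})$) provide, yielding an algorithm of cost $O({\sf S}+\sqrt{{\cal C}_-}\,{\sf A})=O\bigl({\sf S}+\sqrt{{\cal R}^{\sf T}{\cal W}^{\sf T}}\,({\sf A}_\star+{\sf polylog}({\sf T}_{\max}))\bigr)$. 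I expect the main obstacle to be the edge-gadget construction: arranging the subspace overlaps to form literally a path of length ${\sf T}_{u,v}$ while keeping $U_{\cal AB}$ implementable in ${\sf polylog}({\sf T}_{\max})$ even though the transition costs vary wildly across edges (this is exactly where items~\ref{item:select}, \ref{item:known-times} and~\ref{item:non-confusion} of \defin{variable-time} are all needed), and then the careful bookkeeping showing that the subroutine errors $\epsilon_{u,v}$ and the exceptional set $\tilde E$ degrade the exact $0$-witnesses into $\delta$- and $\delta'$-witnesses still within the tolerance of \thm{lin-alg-fwk}. The remaining pieces — the telescoping cancellation, the flow-to-circulation extension, and the numerical estimates — are routine adaptations of \sec{sketch}.
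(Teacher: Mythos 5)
Your proposal follows essentially the same approach as the paper's proof: reduce to \thm{lin-alg-fwk}, augment $G$ to $G'$ with an extra vertex $v_0$, build length-${\sf T}_{u,v}$ edge gadgets from the transition subroutine with parity-split algorithm states, obtain the negative witness via the telescoping sum of all star/gadget states and the positive witness via a circulation built from $\theta$ routed through the gadgets, and track how the subroutine errors degrade the witnesses. One small imprecision worth fixing when you fill in details: because the gadgets have even length ${\sf T}_{u,v}$, every vertex of $G$ lands on the \emph{same} side of the overlap graph's bipartition, so the paper puts \emph{all} star states $\Psi_\star'(u)$, $u\in V(G)$, into $\Psi^{\cal A}$ (cf.\ \eq{full-fwk-states} and \fig{spaces-graph}); your proposal to split $\Psi_\star$ across $\Psi^{\cal A}$ and $\Psi^{\cal B}$ by a bipartition of $V(G)$ with $V_0\subseteq V_{\cal A}$, $V_{\sf M}\subseteq V_{\cal B}$ cannot be made consistent with uniformly even-length gadgets, but this is a harmless framing glitch that disappears once you put all of $V(G)$ on one side as the paper does.
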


In the remainder of this section, we prove \thm{full-framework} by describing (parameters of) a phase estimation algorithm and analysing it using \thm{lin-alg-fwk}. 

\begin{remark}\label{rem:orphan-vertices}
For an edge $(u,v)\in\tilde{E}$, we may without loss of generality assume that $v\not\in V(G)$. Suppose $i=f_u^{-1}(v)$. Then since we don't actually implement the transition $\ket{u,i}\rightarrow\ket{v,j}$ correctly anyway, we can assume that $v=(u,i)$, which is distinct from all vertices in $V(G)$, and so we can consider it an almost isolated vertex with the single backwards neighbour $u$. We can equivalently think of these as dangling edges, without an endpoint.
\end{remark}

\subsubsection{The Transition Subroutine}\label{sec:fwk-transitions}

Recall from \defin{variable-time} that a quantum subroutine is given by a sequence $U_0,\dots,U_{{\sf T}_{\max}-1}$ of unitaries on
$H=\mathrm{span}\{\ket{z}:z\in{\cal Z}\},$
such that we can implement $\sum_{t=0}^{{\sf T}_{\max}-1}\ket{t}\bra{t}\otimes U_t$ in cost ${\sf polylog}({\sf T}_{\max})$. 
In our case, the subroutine computes the \emph{transition map} (see \defin{QW-access}), $\ket{u,i}\mapsto \ket{v,j}$, so we assume 
$$\{(u,i): u\in V(G),i\in L(u)\}\subseteq{\cal Z}.$$
Then by conditions \textbf{TS1} and \textbf{TS2} from \thm{full-framework}, for any $(u,v)\in \overrightarrow{E}(G)$, with $i=f_u^{-1}(v)$ and $j=f_v^{-1}(u)$, we have  
\begin{equation}
\norm{\ket{v,j} - U_{{\sf T}_{u,v}-1}\dots U_0\ket{u,i}}^2 = \epsilon_{u,v},\label{eq:fwk-alg-error}
\end{equation}
where $\epsilon_{u,v}\leq \epsilon$ whenever $(u,v)\in\overrightarrow{E}(G)\setminus\tilde{E}$. Otherwise, we only have the trivial upper bound $\epsilon_{u,v}\leq 4$.  

We will assume that in $O(1)$ time, we can check, for any $z\in {\cal Z}$, if $z=(u,i)$ for some $u\in V(G)$ and $i\in L(u)$, and further, whether $i\in L^+(u)$ or $i\in L^-(u)$. This is without loss of generality, by the following construction. Assume that for all $u\in V(G)$, every label in $L^+(u)$ ends with the symbol $\rightarrow$, and every label in $L^-(u)$ ends with the symbol $\leftarrow$. Further assume that no other $z\in {\cal Z}$ ends with these symbols. Then it is sufficient to check a single constant-dimensional register.

We will assume that ${\sf T}_{u,v}$ is always even. This assumption incurs at most a small constant slowdown. We will also assume that after exactly ${\sf T}_{u,v}$ steps, the algorithm sets an \emph{internal flag register} to 1, and we will let this 1-flag be part of the final state $(v,j)$ by letting each $i\in L(u)$ contain an extra bit set to 1. This also ensures that the state of the algorithm is never $\ket{v,j}$ before ${\sf T}_{u,v}$ steps have passed. This assumption is without loss of generality, because we can simply let the algorithm use an internal timer in order to decide to set a flag after exactly ${\sf T}_{u,v}$ steps, and uncompute this timer using our ability to compute ${\sf T}_{u,v}$ from the final correct state $\ket{v,j}$. 

Recall from \defin{variable-time} that for any $u\in V(G)$, $i\in L(u)$ and $t\in\{0,\dots,{\sf T}_{\max}-1\}$, 
$$U_t\dots U_0\ket{u,i}\in \mathrm{span}\{\ket{z}:z\in {\cal Z}_{u,i}\}.$$ 
For convenience, we will let ${\cal Z}_{u,v}={\cal Z}_{u,i}$, where $v=f_u(i)$.
For $b\in\{0,1\}$, let ${\cal Z}_{u,v}^b\subset {\cal Z}_{u,v}$ be the subset of states in which the algorithm's internal flag register is set to $b$. So by the above discussion, we have $(v,j)\in {\cal Z}_{u,v}^1$,
$$\forall t<{\sf T}_{u,v},\; U_t\dots U_0\ket{u,i} \in {\cal Z}_{u,v}^0,
\mbox{ and }
\forall t\geq {\sf T}_{u,v},\;
U_t\dots U_0\ket{u,i} \in {\cal Z}_{u,v}^1.$$

\subsubsection{Parameters of the Phase Estimation Algorithm}

Our phase estimation algorithm will work on the space:
\begin{equation}
\begin{split}
H'&=\mathrm{span}\{\ket{u,i}\ket{0}:u\in V(G),i\in L^+(u)\cup\{0\}\}\oplus\mathrm{span}\{\ket{v,j}\ket{0}:v\in V(G),j\in L^-(v)\}\\
&\qquad\oplus\bigoplus_{(u,v)\in\overrightarrow{E}(G)}\mathrm{span}\{\ket{z}\ket{t}:z\in{\cal Z}_{u,v}^0,t\in[{\sf T}_{u,v}-1]\}\cup\{\ket{z}\ket{{\sf T}_{u,v}}:z\in{\cal Z}_{u,v}^1\}.
\end{split}\label{eq:full-fwk-H}
\end{equation}
We now define sets of states $\Psi^{\cal A}$ and $\Psi^{\cal B}$ in $H'$.

\paragraph{Star States:} We slightly modify the star states to get states in $H'$. To all star states $\ket{\psi_\star(u)}\in \Psi_\star(u)$ (see \defin{alternative}), we append a register $\ket{0}$, but for $u\in V_0\cup M$, we make a further modification. 
Conceptually, we modify the graph $G$, by adding a new vertex, $v_0$, to get a new graph $G'$ (see Figure~\ref{fig:v0-graph}). The new vertex is connected to every $u\in V_0$ by an edge of weight $\w_0\sigma(u)$, for some $\w_0$ to be assigned later; and it is connected to every $u\in M$ by an edge of weight $\w_{\sf M}$, for some $\w_{\sf M}$ to be assigned later.  So for $u\in V_0\cup M$, we modify the star state $\Psi_\star(u)=\{\ket{\psi_\star^G(u)}\}$ by adding the extra register $\ket{0}$, but we also account for the additional edge to $v_0$. We assume that for all $u\in V_0\cup M$, the edge to $v_0$ is labelled by $0\not\in L(u)$. With this intuition, we define, for $u\in V_0$:
\begin{equation}
\ket{\psi_{\star}^{G'}(u)}\ket{0}:=\underbrace{
\sum_{i\in L^+(u)}\sqrt{\w_{u,i}}\ket{u,i}\ket{0} - \sum_{i\in L^-(u)}\sqrt{\w_{u,i}}\ket{u,i}\ket{0}  
}_{=\ket{\psi_\star^G(u)}\ket{0}}
+\sqrt{\w_0\sigma(u)}\ket{u,0}\ket{0}\label{eq:psi-star-F-1}
\end{equation}
and for $u\in M$:
\begin{equation}
\ket{\psi_{\star}^{G'}(u)}\ket{0}:=\underbrace{
\sum_{i\in L^+(u)}\sqrt{\w_{u,i}}\ket{u,i}\ket{0} - \sum_{i\in L^-(u)}\sqrt{\w_{u,i}}\ket{u,i}\ket{0}  
}_{=\ket{\psi_\star^G(u)}\ket{0}}
+\sqrt{\w_{\sf M}}\ket{u,0}\ket{0}.\label{eq:psi-star-F-M}
\end{equation}

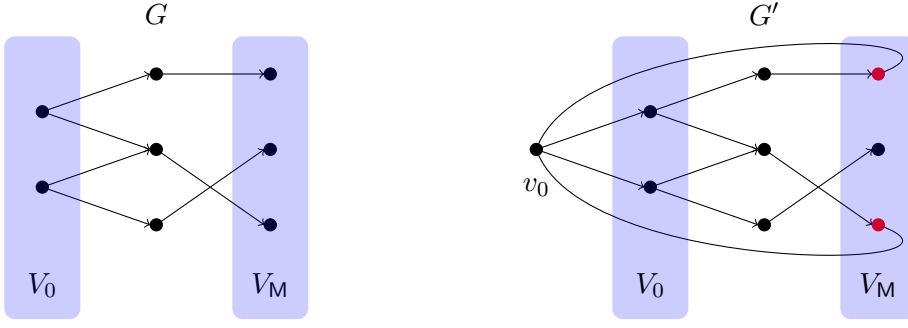
\begin{figure}
	\renewcommand{\arraystretch}{1.25}
	\centering
	\begin{tikzpicture}[scale=1.2]
		\filldraw (1.5,-0.5) circle (.08);	\draw[-{Latex[length=2mm, width=2mm]}] (1.5,-0.5) -- (2.92,-1);	\draw[-{Latex[length=2mm, width=2mm]}] (1.5,-0.5) -- (2.92,0);
		\filldraw (1.5,0.5) circle (.08);	\draw[-{Latex[length=2mm, width=2mm]}] (1.5,0.5) -- (2.92,1);	\draw[-{Latex[length=2mm, width=2mm]}] (1.5,0.5) -- (2.92,0);
		\filldraw (3,-1) circle (.08);	\draw[-{Latex[length=2mm, width=2mm]}] (3,-1) -- (4.42,0);
		\filldraw (3,0) circle (.08);	\draw[-{Latex[length=2mm, width=2mm]}] (3,0) -- (4.42,-1);
		\filldraw (3,1) circle (.08);	\draw[-{Latex[length=2mm, width=2mm]}] (3,1) -- (4.42,1);
		\filldraw (4.5,-1) circle (.08);
		\filldraw (4.5,0) circle (.08);
		\filldraw (4.5,1) circle (.08);
		
		\node at (1.5,-1.8) {$V_0$};
		\node at (4.5,-1.8) {$V_{\sf M}$};
		\node at (3,1.8) {$G$};
		
		\filldraw (8,0) circle (.08);	\draw[{Latex[length=2mm, width=2mm]}-](8.07,-0.03) -- (9.42,-0.5); 	\draw[{Latex[length=2mm, width=2mm]}-] (8.07,0.03) -- (9.42,0.5);
		\filldraw (9.5,-0.5) circle (.08);	\draw[-{Latex[length=2mm, width=2mm]}] (9.5,-0.5) -- (10.92,-1);	\draw[-{Latex[length=2mm, width=2mm]}] (9.5,-0.5) -- (10.92,0);
		\filldraw (9.5,0.5) circle (.08);	\draw[-{Latex[length=2mm, width=2mm]}] (9.5,0.5) -- (10.92,1);	\draw[-{Latex[length=2mm, width=2mm]}] (9.5,0.5) -- (10.92,0);
		\filldraw (11,-1) circle (.08);	\draw[-{Latex[length=2mm, width=2mm]}] (11,-1) -- (12.42,0);
		\filldraw (11,0) circle (.08);	\draw[-{Latex[length=2mm, width=2mm]}] (11,0) -- (12.42,-1);
		\filldraw (11,1) circle (.08);	\draw[-{Latex[length=2mm, width=2mm]}] (11,1) -- (12.42,1);
		\draw[-{Latex[length=2mm, width=2mm]}] (12.5,-1) to[out=-20,in=-70]  (8,-.08);
		\filldraw[red] (12.5,-1) circle (.08); 
		\filldraw (12.5,0) circle (.08);
		\draw[-{Latex[length=2mm, width=2mm]}] (12.5,1) to[out=20,in=70]  (8,0.08);
		\filldraw[red] (12.5,1) circle (.08); 
		\fill[rounded corners, fill=blue, opacity=0.2] (1, 1.5) -- (2, 1.5) -- (2, -2.25) -- (1, -2.25) -- cycle;
		\fill[rounded corners, fill=blue, opacity=0.2] (4, 1.5) -- (5, 1.5) -- (5, -2.25) -- (4, -2.25) -- cycle;
		\fill[rounded corners, fill=blue, opacity=0.2] (9, 1.5) -- (10, 1.5) -- (10, -2.25) -- (9, -2.25) -- cycle;
		\fill[rounded corners, fill=blue, opacity=0.2] (12, 1.5) -- (13, 1.5) -- (13, -2.25) -- (12, -2.25) -- cycle;

		\node at (8,-.5) {$v_0$};
		\node at (9.5,-1.8) {$V_0$};
		\node at (12.5,-1.8) {$V_{\sf M}$};
		\node at (11,1.8) {$G'$};

	\end{tikzpicture}
	\caption{Example of a graph $G$ with $V_0,V_{\sf M} \subseteq V(G)$ and the induced graph $G'$ that is obtained from $G$ by adding a new vertex $v_0$. This new vertex is connected to all vertices in $V_0$ and only connected to those vertices in $V_{\sf M}$ which are marked (visualised by the red vertices).} \label{fig:v0-graph}
\end{figure}

For $u\in V(G)\setminus (V_0\cup M)$, the neighbours and weights in $G'$ are the same as $G$, so we let $\ket{\psi_\star^{G'}(u)}=\ket{\psi_\star^{G}(u)}$, which we know is in $\Psi_\star(u)$ (possibly among other states).  
We let:
\begin{equation}
\begin{split}
\Psi_{\star}'&:=\bigcup_{u\in V(G)\setminus (V_0\cup M)}\underbrace{\left\{\ket{\psi_{\star}(u)}\ket{0}:\ket{\psi_\star(u)}\in \Psi_\star(u)\right\}}_{=:\Psi_\star'(u)}
\cup
\bigcup_{u\in V_0\cup M}\underbrace{\left\{\ket{\psi_{\star}^{G'}(u)}\ket{0}\right\}}_{=:\Psi_\star'(u)}.\label{eq:stars-full-fwk}
\end{split}
\end{equation}

\paragraph{Algorithm States:} 
For each $u\in V(G)$ and $i\in L^+(u)$, define a state
\begin{equation}
\ket{\psi_{\rightarrow}^{u,i}}:=\ket{u,i}\ket{0} - U_0\ket{u,i}\ket{1}.\label{eq:psi-ui}
\end{equation}
These represent a transition from an outgoing edge to the first step of the algorithm implementing that edge transition. 
For each $(u,v)\in \overrightarrow{E}(G)$, and $t\in [{\sf T}_{u,v}-1]$, define states:
\begin{equation}
\Psi_t^{u,v}:=\left\{\ket{\psi_{t}^z}:=\ket{z}\ket{t} - U_t\ket{z}\ket{t+1}:
z\in {\cal Z}_{u,v}^0\right\}.\label{eq:psi-z-t}
\end{equation}
These represent steps of the edge transition subroutine.
For each $v\in V(G)$ and $j\in L^-(v)$, with $u=f_v(j)$, define a state:
\begin{equation}
\ket{\psi_{\leftarrow}^{v,j}}:=\ket{v,j}\ket{{\sf T}_{u,v}}-\ket{v,j}\ket{0}.\label{eq:psi-vj}
\end{equation}
These represent exiting the algorithm to an edge going into vertex $v$. 
Letting $\Psi_\star'$ be as in \eq{stars-full-fwk}, define
\begin{equation}
\begin{split}
\Psi^{\cal A} &= \Psi_\star'\cup \bigcup_{(u,v)\in\overrightarrow{E}(G)}\bigcup_{\substack{t=1:\\ t\;\mathrm{odd}}}^{{\sf T}_{u,v}-1}\Psi_t^{u,v}\\
\Psi^{\cal B} &=  \{\ket{\psi_{\rightarrow}^{u,i}}:u\in V(G),i\in L^+(u)\}\cup \{\ket{\psi_{\leftarrow}^{v,j}}:v\in V(G), j\in L^-(v)\}\cup \!\bigcup_{(u,v)\in\overrightarrow{E}(G)}\bigcup_{\substack{t=1:\\ t\;\mathrm{even}}}^{{\sf T}_{u,v}-1}\Psi_t^{u,v}.
\end{split}\label{eq:full-fwk-states}
\end{equation}
The reason we have divided the states in this way between $\Psi^{\cal A}$ and $\Psi^{\cal B}$ is so that if we replace each $\Psi_\star(u)$ with an orthonormal basis, all states in $\Psi^{\cal A}$ (or $\Psi^{\cal B}$) are pairwise orthogonal. We leave it up to the reader to verify that this is the case (it is implicitly proven in \sec{fwk-unitary}), but we note that this fact relies on the assumption that ${\sf T}_{u,v}$ is always even. This ensures that for even $t$, $\braket{t+1}{{\sf T}_{u,v}}=0$, so $\braket{\psi_{t}^z}{\psi_{\leftarrow}^{v,j}}=0$. \fig{spaces-graph} shows a graph of the overlap between various sets of states, and we can observe that the sets in $\Psi^{\cal A}$ and the sets in $\Psi^{\cal B}$ form a bipartition of this overlap graph into independent sets.

\begin{figure}
\centering
\begin{tikzpicture}[scale=1.2]
\node at (0,0) {\begin{tikzpicture}[scale=1.2]
\draw[-{Latex[length=2mm, width=2mm]}] (0,0)--(1.9,0);
\draw[-{Latex[length=2mm, width=2mm]}] (1.92,.08)--(1.08,.92);
\draw[{Latex[length=2mm, width=2mm]}-] (.92,.92)--(0,0);
\filldraw (0,0) circle (.1);
\filldraw (2,0) circle (.1);
\filldraw (1,1) circle (.1);

\node at (-.25,0) {$u$};
\node at (1,1.25) {$v$};
\node at (2.25,0) {$w$};

\node at (1,-1) {$G$};
\end{tikzpicture}};

\node at (8.5,0) {\begin{tikzpicture}[scale=1.2]
\draw (0,0)--(12,0)--(6,6)--(0,0);

\node[rectangle, rounded corners, draw, thick, fill=white] at (0,0) {$\Psi_\star'(u)$};
	\node[rectangle, rounded corners, draw, thick, fill=white] at (2,0) {$\ket{\psi_{\rightarrow}^{u,i'}}$};
	\node[rectangle, rounded corners, draw, thick, fill=white] at (4,0) {$\Psi_1^{u,w}$};
	\node[rectangle, fill=white] at (6,0) {$\dots$};
	\node[rectangle, rounded corners, draw, thick, fill=white] at (8,0) {$\Psi_{{\sf T}_{u,w}}^{u,w}$};
	\node[rectangle, rounded corners, draw, thick, fill=white] at (10,0) {$\ket{\psi_{\leftarrow}^{w,j'}}$};

\node[rectangle, rounded corners, draw, thick, fill=white] at (12,0) {$\Psi_\star'(w)$};
	\node[rectangle, rounded corners, draw, thick, fill=white] at (11,1) {$\ket{\psi_{\rightarrow}^{w,i''}}$};
	\node[rectangle, rounded corners, draw, thick, fill=white] at (10,2) {$\Psi_1^{w,v}$};
	\node[rectangle, fill=white, rotate=135] at (9,3) {$\dots$};
	\node[rectangle, rounded corners, draw, thick, fill=white] at (8,4) {$\Psi_{{\sf T}_{w,v}}^{w,v}$};
	\node[rectangle, rounded corners, draw, thick, fill=white] at (7,5) {$\ket{\psi_{\leftarrow}^{v,j''}}$};

\node[rectangle, rounded corners, draw, thick, fill=white] at (6,6) {$\Psi_\star'(v)$};
	\node[rectangle, rounded corners, draw, thick, fill=white] at (1,1) {$\ket{\psi_{\rightarrow}^{u,i}}$};
	\node[rectangle, rounded corners, draw, thick, fill=white] at (2,2) {$\Psi_1^{u,v}$};
	\node[rectangle, fill=white, rotate = 45] at (3,3) {$\dots$};
	\node[rectangle, rounded corners, draw, thick, fill=white] at (4,4) {$\Psi_{{\sf T}_{u,v}}^{u,v}$};
	\node[rectangle, rounded corners, draw, thick, fill=white] at (5,5) {$\ket{\psi_{\leftarrow}^{v,j}}$};

\end{tikzpicture}};

\end{tikzpicture}
\caption{A graph showing the overlap of various sets of states, for an example graph $G$. With the exception of the spaces $\Psi_\star'(u)$ (which we will replace with orthonormal bases in \sec{fwk-unitary}), each node represents an orthonormal set. There is an edge between two nodes if and only if the sets contain overlapping vectors.}\label{fig:spaces-graph}
\end{figure}
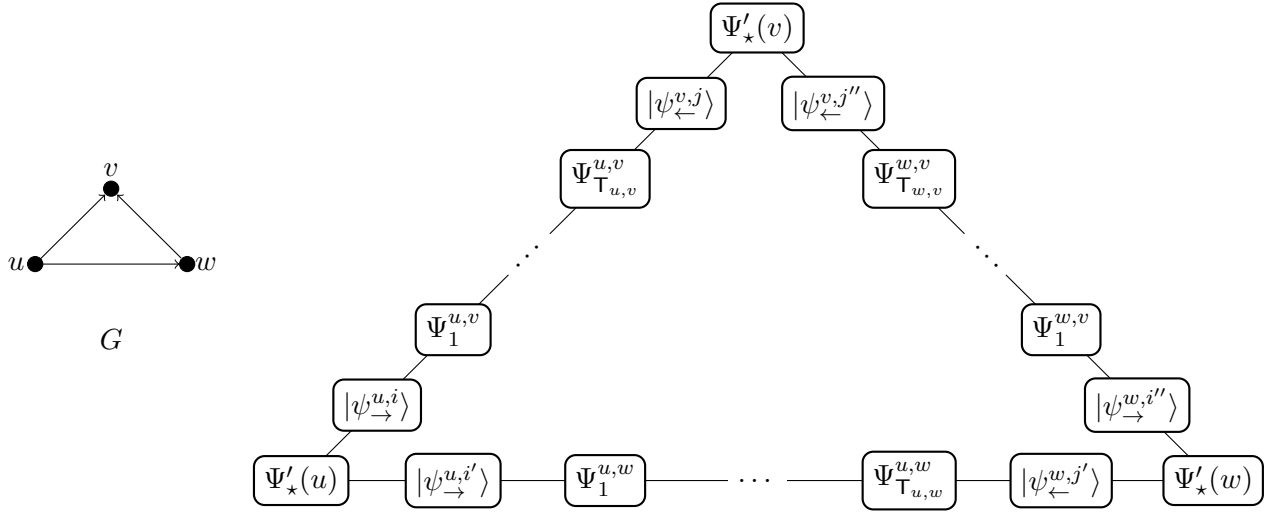

Finally, we define the initial state of the algorithm:
\begin{equation}
\ket{\psi_0}:=\ket{\sigma}\ket{0}\ket{0}=\sum_{u\in V_0}\sqrt{\sigma(u)}\ket{u,0}\ket{0}.\label{eq:fwk-psi-init}
\end{equation}

\subsubsection{Implementing the Unitary}\label{sec:fwk-unitary}

Let ${\cal A}=\mathrm{span}\{\Psi^{\cal A}\}$ and ${\cal B}=\mathrm{span}\{\Psi^{\cal B}\}$ (see \eq{full-fwk-states}), and let $\Pi_{\cal A}$ and $\Pi_{\cal B}$ be the orthogonal projectors onto ${\cal A}$ and ${\cal B}$. 
In this section we will prove:
\begin{lemma}\label{lem:update-cost}
The unitary $U_{{\cal AB}}=(2\Pi_{{\cal A}}-I)(2\Pi_{{\cal B}}-I)$ on $H'$ can be implemented in complexity \mbox{$O\left({\sf A}_\star+{\sf polylog}({\sf T}_{\max})\right)$}.
\end{lemma}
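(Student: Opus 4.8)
The plan is to show that $U_{\cal AB}$ is a product of two reflections, $U_{\cal AB}=(2\Pi_{\cal A}-I)(2\Pi_{\cal B}-I)$, and to implement each reflection separately in the claimed cost. The key structural fact to exploit is the one flagged after \eq{full-fwk-states}: once each space $\mathrm{span}\{\Psi_\star(u)\}$ is replaced by the orthonormal basis $\overline{\Psi}(u)$ guaranteed by \defin{alternative}, the set $\Psi^{\cal A}$ (respectively $\Psi^{\cal B}$) is a collection of pairwise-orthogonal unit vectors (after normalizing the $\ket{\psi_\star^{G'}(u)}$, $\ket{\psi_\rightarrow^{u,i}}$, $\ket{\psi_t^z}$, $\ket{\psi_\leftarrow^{v,j}}$, each of which has a fixed, computable norm). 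Reflecting around the span of a set of mutually orthogonal unit vectors reduces to: (i) a unitary $V_{\cal A}$ mapping each basis vector of some ``address'' register to the corresponding element of $\Psi^{\cal A}$; (ii) the reflection $I-2\ket{\bar 0}\bra{\bar 0}$ on the address space; (iii) $V_{\cal A}^\dagger$. So it suffices to build $V_{\cal A}$ and $V_{\cal B}$ in cost $O({\sf A}_\star+{\sf polylog}({\sf T}_{\max}))$, plus a standard reflection around a basis state.

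The first step is to fix the address registers. I would index $\Psi^{\cal A}$ by data of the form: either ``$(u,k)$'' for $u\in V(G)$, $k\in[a_u]$ (pointing to $\ket{\bar\psi_{u,k}}$, or to the modified $\ket{\psi_\star^{G'}(u)}$ when $u\in V_0\cup M$); or ``$(u,v,t,z)$'' with $(u,v)\in\overrightarrow{E}(G)$, $t$ odd, $z\in{\cal Z}^0_{u,v}$ (pointing to $\ket{\psi_t^z}$). Similarly for $\Psi^{\cal B}$ with the $\ket{\psi_\rightarrow^{u,i}}$, $\ket{\psi_\leftarrow^{v,j}}$, and even-$t$ $\ket{\psi_t^z}$. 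Crucially, by the conventions set up in \sec{fwk-transitions} (labels carrying $\rightarrow/\leftarrow$ symbols and a flag bit, ${\sf T}_{u,v}$ even, the internal-flag/timer discipline), each element of $H'$ already carries in its basis label enough information to identify which $\Psi_t^{u,v}$, $\ket{\psi_\rightarrow^{u,i}}$ etc.\ it belongs to. So the address register need not be a separate large register: $V_{\cal A}$ can act ``in place'' on $H'$, reading off $(u,i)$ or the flag/timer $t$ from the state it is given, and the reflection in step (ii) becomes a reflection around the subspace spanned by the ``canonical representatives'' of each orthogonal class. Let me spell out the three pieces.

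For the star-state part of $V_{\cal A}$: on $\ket{u,k}\ket{0}$ with $k\in[a_u]$, first check (cost ${\sf A}_\star$, by hypothesis) whether $u\in V_0$, $u\in M$, or neither; apply $U_\star$ (cost ${\sf A}_\star$, by the Star State Generation Subroutine) to send $\ket{u,k}\mapsto \ket{\bar\psi_{u,k}}$; then, for $u\in V_0\cup M$ and $k=0$ (where $\bar\psi_{u,0}\propto\ket{\psi_\star^G(u)}$), adjoin the extra $\sqrt{\w_0\sigma(u)}\ket{u,0}$ or $\sqrt{\w_{\sf M}}\ket{u,0}$ term by a controlled rotation on the label register, using the fact (Setup Subroutine) that $\sigma(u)$ is computable in $O(1)$ and that $\w_u$ is available by query (\defin{QW-access}); normalize. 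This is a fixed $O(1)$-depth circuit wrapped around two calls costing ${\sf A}_\star$ each. For the algorithm-state part: on $\ket{z}\ket{t}$ with $t$ odd, $1\le t\le {\sf T}_{u,v}-1$, map $\ket{z}\ket{t}\mapsto\frac{1}{\sqrt2}(\ket{z}\ket{t}-U_t\ket{z}\ket{t+1})$; this is $H$ on one qubit followed by the controlled application of $U_t$, which by \defin{variable-time}\ref{item:select} (i.e.\ the ability to implement $\sum_t\ket{t}\bra{t}\otimes U_t$) costs ${\sf polylog}({\sf T}_{\max})$. The $\Psi^{\cal B}$ side is identical in spirit, additionally handling $\ket{\psi_\rightarrow^{u,i}}=\ket{u,i}\ket{0}-U_0\ket{u,i}\ket{1}$ (one application of $U_0$) and $\ket{\psi_\leftarrow^{v,j}}=\ket{v,j}\ket{{\sf T}_{u,v}}-\ket{v,j}\ket{0}$ (a ${\sf T}$-controlled relabelling of the timer register, using that ${\sf T}_{u,v}$ is computable from $(v,j)$, \defin{variable-time}\ref{item:known-times}, via $j=f_v^{-1}(u)$). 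Composing: each of $2\Pi_{\cal A}-I$ and $2\Pi_{\cal B}-I$ is $V^\dagger\,(\text{basis-state reflection})\,V$ with $V$ of cost $O({\sf A}_\star+{\sf polylog}({\sf T}_{\max}))$, and the basis-state reflection is $O(\log\dim H')={\sf polylog}$. Multiplying the two reflections gives $U_{\cal AB}$ in the stated complexity.

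The main obstacle I anticipate is the bookkeeping needed to certify that the ``canonical representative'' subspaces are orthogonal and jointly span the right space — i.e.\ that the in-place reflection really equals $2\Pi_{\cal A}-I$ — rather than any deep computational step. Concretely one must check: (a) different $u$ give orthogonal $\mathrm{span}\{\Psi_\star'(u)\}$ because they live on disjoint edge labels $\ket{u,\cdot}$ (and the $\ket{u,0}$ terms are on disjoint labels too); (b) within a single edge $(u,v)$, the odd-$t$ states $\Psi_t^{u,v}$ sit on disjoint $\ket{z}\ket{t}\oplus\ket{z'}\ket{t+1}$ blocks by \defin{variable-time}\ref{item:non-confusion} and the flag/timer discipline, so distinct odd $t$ are orthogonal, and also orthogonal to the star states (the $t\ge1$ parts have no $\ket{\cdot}\ket{0}$ component); and (c) the boundary cases — that $\ket{\psi_\rightarrow^{u,i}}$ overlaps only $\Psi_\star'(u)$ and $\Psi_1^{u,v}$, and $\ket{\psi_\leftarrow^{v,j}}$ overlaps only $\Psi_{{\sf T}_{u,v}-1}^{u,v}$ and $\Psi_\star'(v)$ — which is exactly the ``${\sf T}_{u,v}$ even'' point noted in the text, ensuring $\braket{t+1}{{\sf T}_{u,v}}=0$ for even $t$ so that $\ket{\psi_\leftarrow^{v,j}}\perp\Psi^{\cal B}$-members other than its two neighbours, and symmetrically on the $\cal A$ side. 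Once this orthogonality is in hand, the reflections are textbook and the cost accounting is immediate from the three subroutine costs and \lem{uniform-alg}/\defin{variable-time}\ref{item:select}.
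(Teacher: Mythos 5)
Your approach is essentially the one the paper uses: build a unitary that generates (in place) an orthonormal basis for each of $\mathrm{span}\{\Psi^{\cal A}\}$ and $\mathrm{span}\{\Psi^{\cal B}\}$ out of $U_\star$, the controlled $\sum_t\ket{t}\bra{t}\otimes U_t$, and the computability of ${\sf T}_{u,v}$ from $(v,j)$, and then conjugate a reflection about the recognizable ``canonical'' subspace. The paper merely factors $2\Pi_{\cal A}-I$ and $2\Pi_{\cal B}-I$ further into products of two reflections about mutually orthogonal pieces (star states versus odd-$t$ history states for ${\cal A}$, backward states versus even-$t$/forward states for ${\cal B}$), rather than constructing one generating unitary $V_{\cal A}$, $V_{\cal B}$ for each as you do; that is a modularity choice, not a different argument.

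One step in your sketch is ordered the wrong way and, as written, is not a simple gate: for $u\in V_0\cup M$ you first apply $U_\star$ and then propose to ``adjoin the extra $\sqrt{\w_0\sigma(u)}\ket{u,0}$ term by a controlled rotation on the label register.'' After $U_\star$ the label register holds $\propto\sum_{i\in L(u)}\pm\sqrt{\w_{u,i}}\ket{i}$, so the rotation you need is a two-dimensional rotation between that superposition and $\ket{u,0}$ --- it is not a rotation on any fixed pair of basis labels, and implementing it directly would require another call to $U_\star^\dagger$. The fix, which is what \clm{fwk-star-unitary} does, is to reverse the order: first apply a constant-size rotation on an ancilla (or the spare $\ket{0}$ register) with angle computed from $\w_u$ and $\sigma(u)$, both available in $O(1)$; then apply $U_\star$ controlled on one branch of that ancilla; finally uncompute the ancilla using the structure of the resulting state. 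This keeps each step a constant-size controlled operation or a single call to $U_\star$, which is what the $O({\sf A}_\star+{\sf polylog}({\sf T}_{\max}))$ bound requires. With that correction, the rest of your orthogonality bookkeeping and cost accounting matches the paper.
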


\noindent This essentially follows\footnote{For a simple example of how reflecting around a set of states reduces to generating the set, see \clm{leftarrow-refl}.} from the fact that we can efficiently generate orthonormal bases for each of $\Psi^{\cal A}$ and $\Psi^{\cal B}$, since:
\begin{itemize}
\item By the \textbf{Star State Generation Subroutine} condition of \thm{full-framework}, we can generate an orthonormal basis for $\bigcup_{u\in V(G)}\Psi_\star(u)$. Since we can also efficiently check if a vertex is in $V_0$ or $M$, we can generate orthonormal bases for $\Psi_\star'=\bigcup_{u\in V(G)}\Psi'_\star(u)$ (see \clm{fwk-star-unitary}).
\item Generating the states $\ket{\psi_t^z}=\ket{z}\ket{t}-U_t\ket{z}\ket{t+1}$ for odd $t$ can be done using $\sum_{t}\ket{t}\bra{t}\otimes U_t$ (see \clm{fwk-odd-unitary}). The same is true for even $t$ (\clm{fwk-even-unitary}), also including the states $\ket{\psi_{\rightarrow}^{u,i}}=\ket{u,i}\ket{0}-U_0\ket{u,i}\ket{1}$.
\item Generating the states $\ket{\psi_{\leftarrow}^{v,j}}=\ket{v,j}(\ket{{\sf T}_{u,v}}-\ket{0})$ can be done efficiently because we can compute ${\sf T}_{u,v}$ from $(v,j)$ (see \clm{leftarrow-refl}). 
\end{itemize}
There is nothing conceptually new in this proof, and the reader may skip ahead to \sec{fwk-positive} with no loss of understanding.

\begin{claim}\label{clm:fwk-star-unitary}
Let $R_\star =2\Pi_{\star}-I$, where $\Pi_\star$ is the orthogonal projector onto $\mathrm{span}\{\Psi_\star'\}$. Then $R_\star$ can be implemented in complexity $O({\sf A}_\star+\log{\sf T}_{\max})$. 
\end{claim}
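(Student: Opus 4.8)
The plan is to reduce the reflection $R_\star = 2\Pi_\star - I$ around $\mathrm{span}\{\Psi_\star'\}$ to the reflection $2(\sum_u \Pi_{\mathrm{span}\{\Psi_\star(u)\}}) - I$ around $\mathrm{span}\bigcup_u \Psi_\star(u)$, which is essentially provided by the \textbf{Star State Generation Subroutine}, plus cheap local corrections on $V_0 \cup M$. First I would recall from \defin{alternative} that generating $\Psi_\star$ in complexity ${\sf A}_\star$ means precisely that there is a map $U_\star$, implementable in cost ${\sf A}_\star$, and orthonormal bases $\overline\Psi(u) = \{\ket{\bar\psi_{u,0}},\dots,\ket{\bar\psi_{u,a_u-1}}\}$ for $\mathrm{span}\{\Psi_\star(u)\}$ with $U_\star\ket{u,k} = \ket{\bar\psi_{u,k}}$. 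Because the spaces $\mathrm{span}\{\Psi_\star(u)\}$ live on disjoint label registers $\mathrm{span}\{\ket{u,i}: i\in L(u)\}$ for distinct $u$, the union $\bigcup_u \overline\Psi(u)$ is an orthonormal basis for $\mathrm{span}\bigcup_u\Psi_\star(u)$, and the reflection around that span is $U_\star (2\Pi_{\mathrm{anc}} - I) U_\star^\dagger$, where $\Pi_{\mathrm{anc}} = \sum_{u}\sum_{k\in[a_u]}\ket{u,k}\bra{u,k}$ is reflection around the "valid index" subspace; checking membership in this subspace costs $O(\log d_{\max}) = O(\log {\sf T}_{\max})$ since we need only compare $k$ against $a_u$, which is $O(1)$-computable (here I would note $d_{\max}\le {\sf T}_{\max}$, or fold $\log d_{\max}$ into the stated bound).

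The remaining work is to pass from $\Psi_\star = \bigcup_u \Psi_\star(u)$ to $\Psi_\star' = \bigcup_u \Psi_\star'(u)$. For $u\in V(G)\setminus(V_0\cup M)$ we have $\Psi_\star'(u) = \{\ket{\psi_\star(u)}\ket{0}: \ket{\psi_\star(u)}\in\Psi_\star(u)\}$, i.e. the same span with an appended $\ket{0}$ register, which does not change the reflection. For $u\in V_0\cup M$ we have $\Psi_\star'(u) = \{\ket{\psi_\star^{G'}(u)}\ket{0}\}$, a one-dimensional space, where (by \eq{psi-star-F-1}, \eq{psi-star-F-M}) $\ket{\psi_\star^{G'}(u)} = \ket{\psi_\star^G(u)} + \sqrt{\w_0\sigma(u)}\,\ket{u,0}$ (resp. $+\sqrt{\w_{\sf M}}\ket{u,0}$), and $\ket{\psi_\star^G(u)}$ is the single element of $\Psi_\star(u)$. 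So here I need to \emph{rotate} the one-dimensional space spanned by $\ket{\psi_\star^G(u)}$ into the one spanned by $\ket{\psi_\star^{G'}(u)}$; this is a controlled rotation in the two-dimensional space $\mathrm{span}\{\ket{\psi_\star^G(u)}/\norm{\cdot}, \ket{u,0}\}$ by an angle determined by $\w_0\sigma(u)/\w_u$ (resp. $\w_{\sf M}/\w_u$). Concretely: after applying $U_\star^\dagger$, the basis state $\ket{\psi_\star^G(u)}$ (normalized) corresponds to $\ket{u,0_{\mathrm{idx}}}$, and I can do a two-level rotation mixing this with the $\ket{u,0}$ edge-to-$v_0$ register, with rotation angle computed from $\sigma(u)$ (computable in $O(1)$ by the \textbf{Setup Subroutine} hypothesis), $\w_u$ (available by query per \defin{QW-access}), and the global constants $\w_0,\w_{\sf M}$; this is controlled on the $O({\sf A}_\star)$-computable predicates $u\in V_0$ and $u\in M$. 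Call this correction $V_{\mathrm{corr}}$; then $R_\star = U_\star V_{\mathrm{corr}} (2\Pi_{\mathrm{anc}'} - I) V_{\mathrm{corr}}^\dagger U_\star^\dagger$ for an appropriately enlarged "valid" subspace $\Pi_{\mathrm{anc}'}$, and each factor costs $O({\sf A}_\star + \log{\sf T}_{\max})$.

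The main obstacle I expect is bookkeeping the register structure carefully — making sure the appended $\ket{0}$ registers and the extra edge-label $0$ on vertices of $V_0\cup M$ are handled so that the $V_0\cup M$ rotation really is a block-diagonal two-level operation that does not disturb the (already correct) reflection on the rest of $V(G)$, and that the predicates "$u\in V_0$", "$u\in M$" — which cost ${\sf A}_\star$, not $O(1)$ — are invoked only a constant number of times. One should also double-check that $\sigma(u)$ can genuinely be used inside a coherent controlled rotation (it can, by the $O(1)$-computability assumption, though in general one must be slightly careful about computing $\arcsin$ of a ratio to enough bits; since only bounded error is needed this is fine). None of these steps is conceptually deep — as the paper itself remarks, "there is nothing conceptually new in this proof" — so I would present it compactly, citing \clm{leftarrow-refl} for the template that reflecting around a set of states reduces to generating them, and defer the routine register-level verification.
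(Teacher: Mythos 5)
There is a genuine gap in the step that passes from $\Psi_\star$ to $\Psi_\star'$ on $V_0\cup M$, and it comes from trying to apply $U_\star^\dagger$ \emph{unconditionally} before the correction. For $u\in V_0\cup M$, $\Psi_\star(u)$ is a singleton, so $a_u=1$ and the unique ``valid index'' state is $\ket{u,0}$ with $k=0$. But the paper also fixes the edge-to-$v_0$ label to be $0\notin L(u)$, so the state you call $\ket{u,0_{\mathrm{idx}}}$ and ``the $\ket{u,0}$ edge-to-$v_0$ register'' are literally the \emph{same} basis vector. Your proposed two-level rotation in $\mathrm{span}\{\ket{\psi_\star^G(u)}/\|\cdot\|,\ket{u,0}\}$, pushed through $U_\star^\dagger$, would have to live in $\mathrm{span}\{\ket{u,0},\,U_\star^\dagger\ket{u,0}\}$; and $U_\star^\dagger\ket{u,0}$ is not specified by \defin{alternative} (indeed $U_\star\ket{u,0}=\ket{\overline\psi_{u,0}}\neq\ket{u,0}$, so $U_\star^\dagger$ does \emph{not} fix $\ket{u,0}$, and the preimage of $\ket{u,0}$ under $U_\star$ can be anything). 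So after the outer $U_\star^\dagger$ in $R_\star = U_\star V_{\mathrm{corr}}(2\Pi_{\mathrm{anc}'}-I)V_{\mathrm{corr}}^\dagger U_\star^\dagger$, the component of $\ket{\psi_\star^{G'}(u)}\ket{0}$ along $\ket{u,0}\ket{0}$ has been scrambled, and there is no local $V_{\mathrm{corr}}$ (independent of $U_\star$) that undoes this. Said differently, $(U_\star V_{\mathrm{corr}})\ket{u,0}\ket{0}$ must land on $\propto(\sqrt{\w_u}\,U_\star\ket{u,0}+\sqrt{\w_0\sigma(u)}\ket{u,0})\ket{0}$, which forces $V_{\mathrm{corr}}\ket{u,0}\ket{0}\propto(\sqrt{\w_u}\ket{u,0}+\sqrt{\w_0\sigma(u)}\,U_\star^\dagger\ket{u,0})\ket{0}$ — this needs access to $U_\star^\dagger\ket{u,0}$, defeating the purpose.

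The paper avoids this by \emph{not} applying $U_\star$ unconditionally: it computes an ancilla flagging $u\in V_0$ or $u\in M$, rotates the time register so $\ket{u,0}\ket{0}\mapsto\alpha\ket{u,0}\ket{1}+\beta\ket{u,0}\ket{0}$, applies $U_\star$ only on the $\ket{1}$ branch (so the $\ket{0}$ branch is never touched by $U_\star$), and then uncomputes the time register using the fact that $\ket{\psi_\star^G(u)}$ is supported on labels $i\neq 0$. In other words, the needed superposition $\sqrt{\w_u}\,U_\star\ket{u,0}+\sqrt{\w_0\sigma(u)}\ket{u,0}$ is created by a \emph{controlled} $U_\star$, not by a correction composed with a global $U_\star$. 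This is the missing idea in your proposal; once you interleave the rotation and a controlled application of $U_\star$ using the time register as control, the rest of your reduction (orthogonality across distinct $u$, the $O({\sf A}_\star)$ cost of the $V_0/M$ predicate checks, the $O(\log{\sf T}_{\max})$ cost of checking $t=0$ and $k<a_u$) goes through as you sketched.
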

\begin{proof}
By the \textbf{Star State Generation Subroutine} condition of \thm{full-framework}, we can generate $\Psi_\star$ in cost ${\sf A}_\star$, which means (see \defin{alternative}) that for each  $u\in V(G)$, there is an orthonormal basis $\overline{\Psi}(u)=\{\ket{\overline{\psi}_{u,1}},\dots,\ket{\overline{\psi}_{u,a_u}}\}$ for $\mathrm{span}\{\Psi_\star(u)\}$, and a unitary $U_\star$ with complexity ${\sf A}_\star$, such that for all $u\in V(G)$ and $k\in \{0,\dots,a_u-1\}$,
$U_\star\ket{u,k}=\ket{\overline{\psi}_{u,k}}.$
Then for all $u\in V(G)\setminus (V_0\cup V_{\sf M})$, $\overline{\Psi}'(u):=\{\ket{\overline{\psi}_{u,1}}\ket{0},\dots,\ket{\overline{\psi}_{u,a_u}}\ket{0}\}$ is an orthonormal basis for $\Psi'(u)$ (see \eq{stars-full-fwk}).
For $u\in V_0\cup V_{\sf M}$, we have $\Psi_{\star}'(u)=\{\ket{\psi_\star^{G'}(u)}\ket{0}\}$, so $\overline{\Psi}'(u)$ is just the single normalization of this state. 

We will first define a unitary $U_\star'$ that acts, for $u\in V(G)$, $k\in\{0,\dots,a_u-1\}$, as
$
U_\star'\ket{u,k}\ket{0}=\ket{\overline{\psi}_{u,k}'}.
$
We define $U_\star'$ by its implementation. To begin we will append a qutrit register, $\ket{0}_A$ (this will be uncomputed, so that the action described is indeed unitary), and set it to $\ket{1}_A$ if $u\in V_0$, and $\ket{2}_A$ if $u\in M$. We are assuming we can check if $u\in V_0$ or $u\in M$ in at most ${\sf A}_{\star}$ complexity. We proceed in three cases, controlled on the value of the ancilla.

\noindent First, controlled on $\ket{0}_A$, we apply $U_{\star}$, in cost ${\sf A}_\star$, to get:
$$\ket{u,k}\ket{0}\ket{0}_A\mapsto \ket{\overline{\psi}_{u,k}}\ket{0}\ket{0}_A
= \ket{\overline{\psi}_{u,k}'}\ket{0}_A.$$

\noindent Next, controlled on $\ket{1}_A$, we implement, on the last register, a single qubit rotation that acts as 
$$\ket{0}\mapsto \sqrt{\frac{\w_u}{\w_u+\w_0\sigma(u)}}\ket{1}+\sqrt{\frac{\w_0\sigma(u)}{\w_u+\w_0\sigma(u)}}\ket{0}.$$
This requires that we can query $\w_u$ and $\sigma(u)$ ($\w_0$ is a parameter of the algorithm).
Controlled on $\ket{1}$ in the last register (and also still $\ket{1}_A$ in the third), we apply $U_\star$ to get (when $u\in V_0$ we only care about the behaviour for $k=0$):
\begin{align*}
\ket{u,0}\ket{0}\ket{1}_A &\mapsto \left(\sqrt{\frac{\w_u}{\w_u+\w_0\sigma(u)}}\frac{\ket{\psi_\star^G(u)}}{\sqrt{\w_u}}\ket{1}+\sqrt{\frac{\w_0\sigma(u)}{\w_u+\w_0\sigma(u)}}\ket{u,0}\ket{0}\right)\ket{1}_A.
\end{align*}
Above we have used the fact that when $u\in V_0$,
$$\ket{\overline{\psi}_{u,0}}=\frac{\ket{\psi_{\star}^G(u)}}{\norm{\ket{\psi_{\star}^G(u)}}}=\frac{\ket{\psi_{\star}^G(u)}}{\sqrt{\w_u}}.$$
To complete the map for the case $u\in V_0$, note that $\ket{\psi_\star^G(u)}$ is supported on $\ket{u,i}$ for $i\neq 0$, so we can uncompute the second register to get:
\begin{align*}
\frac{\ket{\psi_\star^G(u)}\ket{0}+\sqrt{\w_0\sigma(u)}\ket{u,0}\ket{0}}{\sqrt{\w_u+\w_0\sigma(u)}}\ket{1}_A
=\frac{\ket{\psi_\star^{G'}(u)}\ket{0}}{\norm{\ket{\psi_\star^{G'}(u)}}}\ket{1}_A.
\end{align*}

Finally, controlled on $\ket{2}_A$, we do something very similar, but now the single qubit map we use is
$$\ket{0}\mapsto \sqrt{\frac{\w_u}{\w_u+\w_{\sf M}}}\ket{1}+\sqrt{\frac{\w_{\sf M}}{\w_u+\w_{\sf M}}}\ket{0}.$$
This is possible given query access to $\w_u$ ($\w_{\sf M}$ is a parameter of the algorithm). 

Since all states still have $\ket{u}$ in the first register, controlled on $u$, we can uncompute the ancilla. Thus, we can implement $U_\star'$ in complexity $O({\sf A}_\star)$, and $U_\star'$ maps the subspace
$${\cal L}_\star:=\mathrm{span}\{\ket{u,k}\ket{0}:u\in V(G),k\in\{0,\dots,a_u-1\}\}$$                                                                                                                                                                                        
of $H'$ to the $\mathrm{span}\{\Psi_\star'\}$. Thus
$(2\Pi_\star-I) = U_\star' (2\Pi_{{\cal L}_\star}-I) {U_\star'}^\dagger,$
so we complete the proof by describing how to implement $2\Pi_{{\cal L}_\star}-I$. Initialise two ancillary flag qubits, $\ket{0}_{F_1}\ket{0}_{F_2}$. For a computational basis state $\ket{z}\ket{t}$, if $t\neq 0$, flip $F_1$ to $\ket{1}_{F_1}$. This check costs $\log{\sf T}_{\max}$. If $t=0$, we can assume that $z$ has the form $(u,k)$, and interpret $k$ as an integer. If $k<a_u$, which can be checked in $O(\log d_{\max})=O({\sf A}_\star)$, flip $F_2$ to $\ket{1}_{F_2}$. Reflect conditioned on either of the flags being set to 1, and then uncompute both flags. 
\end{proof}

\begin{claim}\label{clm:fwk-odd-unitary}
Let $R_{\sf odd} =2\Pi_{\sf odd}-I$, where $\Pi_{\sf odd}$ is the orthogonal projector onto $\mathrm{span}\{\Psi_t^{u,v}:(u,v)\in\overrightarrow{E}(G), t\in\{1,\dots,{\sf T}_{u,v}-1\}\mbox{ odd}\}$. Then $R_{\sf odd}$ can be implemented in complexity ${\sf polylog}({\sf T}_{\max})$. 
\end{claim}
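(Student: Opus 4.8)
The plan is to show that reflecting around $\mathrm{span}\{\Psi_t^{u,v}: t \text{ odd}\}$ reduces to a controlled application of the subroutine unitaries $U_t$, together with cheap bookkeeping, exactly as in the analogous generation arguments sketched before \clm{fwk-star-unitary}. First I would observe that by \defin{variable-time}, item \ref{item:non-confusion}, together with the internal-flag structure set up in \sec{fwk-transitions}, the states $\{\ket{\psi_t^z} = \ket{z}\ket{t} - U_t\ket{z}\ket{t+1} : z\in{\cal Z}_{u,v}^0\}$ for a fixed odd $t$ and fixed edge $(u,v)$ form an orthonormal set (up to the global normalization $1/\sqrt2$), and moreover states associated to different odd $t$, or to different edges, live in mutually orthogonal subspaces of $H'$ — this is because the $\ket{t}$ register separates different $t$, and the ${\cal Z}_{u,v}$-decomposition separates different edges (and $\ket{z}\ket{t}$ versus $U_t\ket{z}\ket{t+1}$ are separated by the flag/label registers for odd $t$, since $t\pm 1$ is even). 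So the whole set $\{\Psi_t^{u,v}\}$ over all odd $t$ and all edges is one big orthonormal set (up to normalization), and its span is the image of the "source" subspace $\mathcal{L}_{\sf odd} := \mathrm{span}\{\ket{z}\ket{t} : z\in{\cal Z}_{u,v}^0,\ (u,v)\in\overrightarrow{E}(G),\ t\in\{1,\dots,{\sf T}_{u,v}-1\}\text{ odd}\}$ under an isometry $V_{\sf odd}$ that I will now build.

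Next I would construct $V_{\sf odd}$ explicitly. On a basis state $\ket{z}\ket{t}$ with $t$ odd (detectable in ${\sf polylog}({\sf T}_{\max})$ time by inspecting the $\ket{t}$ register — recall $1\le t\le {\sf T}_{\max}$ so $\log{\sf T}_{\max}$ bits suffice), append a fresh qubit in state $\tfrac{1}{\sqrt2}(\ket{0}-\ket{1})$, then controlled on that qubit being $\ket{1}$, apply $\sum_{t'}\ket{t'}\bra{t'}\otimes U_{t'}$, which by \defin{variable-time} item \ref{item:select} costs ${\sf polylog}({\sf T}_{\max})$, and finally increment the $\ket{t}$ register (controlled on the same qubit) from $t$ to $t+1$. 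This sends $\ket{z}\ket{t}\mapsto \tfrac{1}{\sqrt2}(\ket{z}\ket{t}-U_t\ket{z}\ket{t+1}) = \ket{\psi_t^z}/\sqrt2$, i.e. $V_{\sf odd}$ maps $\mathcal{L}_{\sf odd}$ onto $\mathrm{span}\{\Psi_{\sf odd}\}$, and everything outside $\mathcal{L}_{\sf odd}$ (where the $t$-register is even, or $t=0$, or $t={\sf T}_{u,v}$, or the state isn't of the form $\ket{z}\ket{t}$) is left untouched or can be treated separately. Then $R_{\sf odd} = V_{\sf odd}\,(2\Pi_{\mathcal{L}_{\sf odd}} - I)\,V_{\sf odd}^\dagger$, and $2\Pi_{\mathcal{L}_{\sf odd}} - I$ is implemented by flipping an ancilla flag whenever the $\ket{t}$-register holds an odd value in $\{1,\dots,{\sf T}_{u,v}-1\}$ — using item \ref{item:known-times} of \defin{variable-time} to compute ${\sf T}_{u,v}$ from $z$ so the upper cutoff can be checked — reflecting conditioned on the flag, and uncomputing it. Every ingredient costs ${\sf polylog}({\sf T}_{\max})$, so $R_{\sf odd}$ does too.

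The main obstacle I anticipate is not any single calculation but pinning down the orthogonality claim precisely: one must verify that for odd $t$ the vectors $\ket{z}\ket{t}$ and $U_t\ket{z'}\ket{t+1}$ never collide across different $(z,z')$ or different edges, and that $\ket{\psi_t^z}$ for odd $t$ does not overlap the $\ket{\psi_\star}$, $\ket{\psi_\rightarrow}$, or $\ket{\psi_\leftarrow}$ states — the latter being exactly why the bipartition of \eq{full-fwk-states} was chosen and why ${\sf T}_{u,v}$ is assumed even (so $\braket{t+1}{{\sf T}_{u,v}}=0$ for even $t$, hence the boundary terms land in $\Psi^{\cal B}$, not here). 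This is the "implicitly proven in \sec{fwk-unitary}" fact the authors flag; I would dispatch it by appealing to item \ref{item:non-confusion} of \defin{variable-time} (the ${\cal Z}_{u,v}$ supports are disjoint across edges) plus the flag-register conventions of \sec{fwk-transitions}, which together force the needed disjointness of supports. Once orthogonality is in hand, the reflection-via-generation argument is the same boilerplate used in \clm{fwk-star-unitary} and \clm{leftarrow-refl}, so I would keep that part brief.
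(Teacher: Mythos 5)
Your high-level plan is right — reflect by conjugating $2\Pi_{{\cal L}_{\sf odd}}-I$ with a unitary that generates $\Psi_t^{u,v}$ for odd $t$ from ${\cal L}_{\sf odd}$ — and that is also the paper's plan. But your implementation of $V_{\sf odd}$ has a genuine gap: the fresh ancilla. After you append $\frac{1}{\sqrt2}(\ket{0}-\ket{1})_A$, apply the controlled $\sum_{t'}\ket{t'}\bra{t'}\otimes U_{t'}$, and do the controlled increment, the state on input $\ket{z}\ket{t}\ket{0}_A$ (for odd $t$) is
$$\tfrac{1}{\sqrt2}\bigl(\ket{z}\ket{t}\ket{0}_A - U_t\ket{z}\ket{t+1}\ket{1}_A\bigr),$$
with the ancilla \emph{entangled} — it is not $\frac{1}{\sqrt2}\ket{\psi_t^z}\ket{0}_A$ as you wrote. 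Because of this, $V_{\sf odd}$ (with ancilla) does not map ${\cal L}_{\sf odd}\otimes\ket{0}_A$ onto $\mathrm{span}\{\Psi_{\sf odd}\}\otimes\ket{0}_A$, so the conjugation $V_{\sf odd}(2\Pi_{{\cal L}_{\sf odd}}-I)V_{\sf odd}^\dagger$ yields the reflection about a \emph{different} subspace, not $R_{\sf odd}$. You would need an explicit uncomputation step — observe that in the state above the ancilla always equals the complement of the parity of the time register (since $t$ odd, $t{+}1$ even), so a CNOT from (NOT parity) into the ancilla restores $\ket{0}_A$ — and you would also need to check this final map is still a clean isometry from ${\cal L}_{\sf odd}\otimes\ket{0}_A$. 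As written, the proof silently drops the ancilla and the identity you assert for $V_{\sf odd}$ is simply false.

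The paper sidesteps all of this by not introducing an ancilla: it decrements the time register first, then applies ${\sf X}$ followed by ${\sf H}$ to the \emph{last bit of the time register itself}. For odd $t$, the decrement makes $t{-}1$ even, so the last bit is $\ket{0}$, and ${\sf HX}$ turns it into $\frac{1}{\sqrt2}(\ket{0}-\ket{1})$; the resulting superposition $(\ket{t-1}-\ket{t})/\sqrt2$ lives entirely in the time register, the controlled $U_t$ is conditioned on the parity of that register, and a final increment gives $\frac{1}{\sqrt2}(\ket{z}\ket{t}-U_t\ket{z}\ket{t+1})$ exactly, with no ancilla to clean up. This keeps $U_{\sf odd}$ a bona fide unitary on $H'$, so the conjugation argument goes through immediately. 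If you want to keep your ancilla-based presentation, you must add the uncompute step and verify the resulting map is an isometry from ${\cal L}_{\sf odd}\otimes\ket{0}_A$ onto $\mathrm{span}\{\Psi_{\sf odd}\}\otimes\ket{0}_A$; otherwise, adopt the paper's decrement-${\sf HX}$-increment trick. Your description of the inner reflection $2\Pi_{{\cal L}_{\sf odd}}-I$ and the orthogonality discussion are fine and match the paper.
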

\begin{proof}
We first describe the implementation of a unitary $U_{\sf odd}$ such that:
\begin{equation*}
\forall (u,v)\in\overrightarrow{E}(G),z\in{\cal Z}_{u,v}^0, t\in[{\sf T}_{u,v}-1] \text{ odd}, \; U_{\sf odd}\ket{z}\ket{t} = \frac{1}{\sqrt{2}}\ket{z}\ket{t} - \frac{1}{\sqrt{2}}U_t\ket{z}\ket{t+1}    
= \frac{1}{\sqrt{2}}\ket{\psi_{z}^{t}}.\label{eq:U-S-0}
\end{equation*}
We begin by decrementing the $\ket{t}$ register, which costs $\log {\sf T}_{\max}$.
Next we apply an ${\sf X}$ gate, followed by a Hadamard gate, to the last qubit of $\ket{t-1}$. If $t$ is odd, $t-1$ is even and the last qubit is $\ket{0}\overset{{\sf HX}}{\mapsto} (\ket{0}-\ket{1})/\sqrt{2}$, so we get: 
$$\ket{z}\ket{t}\mapsto \ket{z}\ket{t-1}\mapsto \left(\ket{z}\ket{t-1}-\ket{z}\ket{t}\right)/\sqrt{2}.$$
Then controlled on the last qubit of $\ket{t}$ being $\ket{1}$ (i.e.~on odd parity of $t$) apply $\sum_{t=0}^{{\sf T}_{\max}-1}\ket{t}\bra{t}\otimes U_t$, which can be done in cost ${\sf polylog}({\sf T}_{\max})$ by assumption, to get:
$
\left(\ket{z}\ket{t-1}-U_t\ket{z}\ket{t}\right)/\sqrt{2}.
$
Complete the operation by incrementing the $\ket{t}$ register. Thus, $U_{\sf odd}$ maps the subspace:
$${\cal L}_{\sf odd}:=\textstyle{\displaystyle\bigoplus}_{(u,v)\in\overrightarrow{E}(G)}\mathrm{span}\{\ket{z}\ket{t}:z\in {\cal Z}_{u,v}^0,t\in\{1,\dots,{\sf T}_{u,v}-1\},\;\mbox{odd}\}$$
of $H'$ to the support of $\Pi_{\sf odd}$, and so
$R_{\sf odd}=U_{\sf odd}(2\Pi_{{\cal L}_{\sf odd}}-I)U_{\sf odd}^\dagger.$
We complete the proof by describing how to implement $2\Pi_{{\cal L}_{\sf odd}}-I$. For $\ket{z}\ket{t}$, we can check if $t$ is odd in $O(1)$, and if not, set an ancillary flag $F_1$. Next, we will ensure that $z\in {\cal Z}_{u,v}^0$ for some $(u,v)\in\overrightarrow{E}(G)$, which also ensures that $t<{\sf T}_{u,v}$, by the structure of $H'$, and if not, set a flag $F_2$. Reflect if either $F_1$ or $F_2$ is set, and then uncompute both of them. 
\end{proof}

\begin{claim}\label{clm:fwk-even-unitary}
Let $R_{\sf even} =2\Pi_{\sf even}-I$, where $\Pi_{\sf even}$ is the orthogonal projector onto the span of
$$\bigcup_{\substack{(u,v)\in\overrightarrow{E}(G)\\t\in\{1,\dots,{\sf T}_{u,v}-1\}: t\;\mathrm{ even}}}\Psi_t^{u,v}\cup\{\ket{\psi_{\rightarrow}^{u,i}}:u\in V(G),i\in L^+(u)\}.$$
Then $R_{\sf even}$ can be implemented in complexity ${\sf polylog}({\sf T}_{\max})$. 
\end{claim}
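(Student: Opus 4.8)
The plan is to mirror the structure of the proof of \clm{fwk-odd-unitary}, with one extra ingredient to handle the ``entry'' states $\ket{\psi_{\rightarrow}^{u,i}}=\ket{u,i}\ket{0}-U_0\ket{u,i}\ket{1}$, which are morally the $t=0$ analogues of the states $\ket{\psi_t^z}$ but live on the outgoing-edge register rather than on a $\ket{z}\ket{t}$ register in the interior of an edge gadget. First I would define a unitary $U_{\sf even}$ that maps a suitable ``index'' subspace ${\cal L}_{\sf even}$ of $H'$ onto $\mathrm{span}\{\Psi_{\sf even}\}$, where $\Psi_{\sf even}$ is the union of sets in the claim; then $R_{\sf even}=U_{\sf even}(2\Pi_{{\cal L}_{\sf even}}-I)U_{\sf even}^\dagger$, and it remains to implement the reflection $2\Pi_{{\cal L}_{\sf even}}-I$ about a computationally recognizable subspace, exactly as in the previous claim.

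Concretely, ${\cal L}_{\sf even}$ should be spanned by the vectors $\ket{z}\ket{t}$ with $(u,v)\in\overrightarrow{E}(G)$, $z\in{\cal Z}_{u,v}^0$, $t\in\{2,4,\dots,{\sf T}_{u,v}-2\}$ even, together with the vectors $\ket{u,i}\ket{0}$ for $u\in V(G)$, $i\in L^+(u)$ (recalling $H'$ has a dedicated $\ket{u,i}\ket{0}$ subspace for $i\in L^+(u)\cup\{0\}$ from \eq{full-fwk-H}, and that the trailing-symbol convention lets us recognize $i\in L^+(u)$ in $O(1)$). For the interior even-$t$ vectors I would proceed just as in \clm{fwk-odd-unitary}: decrement $\ket{t}$ (cost $\log{\sf T}_{\max}$), apply ${\sf HX}$ to the last qubit of $\ket{t-1}$ (which is now odd, hence $\ket{1}$) to produce $(\ket{z}\ket{t-1}-\ket{z}\ket{t})/\sqrt2$ after re-labelling, then controlled on the parity bit of $\ket{t}$ apply $\sum_{t}\ket{t}\bra{t}\otimes U_t$ in cost ${\sf polylog}({\sf T}_{\max})$, and re-increment. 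For the entry vectors, on input $\ket{u,i}\ket{0}$ with $i\in L^+(u)$ I would use an ancilla/Hadamard trick on the $\ket{t}$ register to split into $(\ket{u,i}\ket{0}-\ket{u,i}\ket{1})/\sqrt2$, then controlled on $\ket{1}$ in the $\ket{t}$ register apply $U_0$ (available as the $t=0$ slice of $\sum_t\ket{t}\bra{t}\otimes U_t$), giving $\tfrac1{\sqrt2}(\ket{u,i}\ket{0}-U_0\ket{u,i}\ket{1})=\tfrac1{\sqrt2}\ket{\psi_{\rightarrow}^{u,i}}$; the parity/structure of the registers keeps this branch orthogonal to the interior branch, so a single $U_{\sf even}$ handles both. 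Finally, $2\Pi_{{\cal L}_{\sf even}}-I$ is implemented by initializing flag qubits, setting a flag unless either (a) $\ket{z}\ket{t}$ has $t$ even with $1<t<{\sf T}_{u,v}$ and $z\in{\cal Z}_{u,v}^0$ for some $(u,v)\in\overrightarrow{E}(G)$ (checkable in ${\sf polylog}({\sf T}_{\max})$ since $t<{\sf T}_{u,v}$ follows from the structure of $H'$, as in the previous claim), or (b) $t=0$ and $z=(u,i)$ with $i\in L^+(u)$ (checkable in $O(1)$ by the trailing-symbol convention), reflecting on the flags, and uncomputing.

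The main obstacle I expect is bookkeeping rather than mathematics: one must be careful that the ``$t=0$'' entry branch and the ``even interior $t$'' branch are genuinely orthogonal and that $U_{\sf even}$ acts as the identity (or is at least well-defined and unitary) outside ${\cal L}_{\sf even}$ — in particular that $\ket{u,i}\ket{0}$ with $i\in L^-(u)\cup\{0\}$, and the states $\ket{v,j}\ket{0}$, $\ket{v,j}\ket{{\sf T}_{u,v}}$, are not touched — and that the Hadamard/increment manipulations do not collide with the special endpoint values $\ket{0}$ and $\ket{{\sf T}_{u,v}}$ of the $\ket{t}$ register. All of this is ensured by the oddness of ${\sf T}_{u,v}$... actually by the evenness of ${\sf T}_{u,v}$ assumed in \sec{fwk-transitions}, so that $t=0$ and $t={\sf T}_{u,v}$ never appear among the even interior values $\{2,\dots,{\sf T}_{u,v}-2\}$, and the whole argument goes through as a routine variation of \clm{fwk-odd-unitary}.
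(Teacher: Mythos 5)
Your high-level structure is exactly the paper's: define $U_{\sf even}$ mapping a computationally recognizable subspace ${\cal L}_{\sf even}$ onto $\mathrm{span}\{\Psi_{\sf even}\}$, conjugate the easy reflection $2\Pi_{{\cal L}_{\sf even}}-I$ by $U_{\sf even}$, and implement $2\Pi_{{\cal L}_{\sf even}}-I$ with flag qubits. Your ${\cal L}_{\sf even}$ and the flag checks also match.

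However, there is a concrete error in your circuit for the interior even-$t$ vectors, and it is exactly the point that the paper changes relative to \clm{fwk-odd-unitary}. You propose to ``proceed just as in \clm{fwk-odd-unitary}: decrement $\ket{t}$, apply ${\sf HX}$ to the last qubit of $\ket{t-1}$ (which is now odd, hence $\ket{1}$) to produce $(\ket{z}\ket{t-1}-\ket{z}\ket{t})/\sqrt2$.'' This is wrong. When $t$ is even, $t-1$ is odd and the last qubit is $\ket{1}$; then ${\sf HX}\ket{1} = {\sf H}\ket{0} = (\ket{0}+\ket{1})/\sqrt2$, so the register becomes $(\ket{t-2}+\ket{t-1})/\sqrt2$ — wrong values \emph{and} wrong sign. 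This is precisely why the odd-case circuit cannot be copied verbatim, and why the paper states that for the even case ``the decrementing of $t$ happens \emph{after} the Hadamard is applied'': applying ${\sf HX}$ to the last qubit of $\ket{t}$ (with $t$ even, so $\ket{0}$) correctly gives $(\ket{t}-\ket{t+1})/\sqrt2$, after which the decrement/select/increment bookkeeping works out (and, for $t=0$, the $\ket{u,i}\ket{0}$ case is subsumed as $z=(u,i)$, $t=0$ rather than needing a separate branch as you propose). Your ``after re-labelling'' hedge does not repair this; the issue is not relabelling but the parity of the qubit ${\sf HX}$ hits. The rest of your argument (the flag-based implementation of $2\Pi_{{\cal L}_{\sf even}}-I$, checking $t$ even vs. $z\in{\cal Z}_{u,v}^0$ vs. $i\in L^+(u)$) is fine, and your separate treatment of the entry states would work if you wanted it, but it is unnecessary: $\ket{\psi_\rightarrow^{u,i}}$ is just the $t=0$ instance of $\ket{\psi_t^z}$ with $z=(u,i)\in{\cal Z}_{u,v}^0$, so a single even-$t$ map handles both.
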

\begin{proof}
We describe the implementation of a unitary $U_{\sf even}$ such that for all $(u,v)\in\overrightarrow{E}(G)$ with $i=f_u^{-1}(v)$ and $j=f_v^{-1}(u)$:
\vskip-18pt
\begin{equation*}
\begin{split}
U_{\sf even}\ket{u,i}\ket{0} &= \frac{1}{\sqrt{2}}\left(\ket{u,i}\ket{0} - U_0\ket{u,i}\ket{1} \right)
= \frac{1}{\sqrt{2}}\ket{\psi_{\rightarrow}^{u,i}}\\
\forall z\in {\cal Z}_{u,v}^0, t\in [{\sf T}_{u,v}-1]\text{ even}, \;%\!\!\!\!\!\!\!\!\!\!\!\!\!\!\!\!\!\!\!\!\!\!\!\!\!\!\!\!\!\!\!\!\!\!\!\!\!\!\!\!\\
U_{\sf even}\ket{z}\ket{t} &= \frac{1}{\sqrt{2}}\left(\ket{z}\ket{t} -U_t\ket{z}\ket{t+1}\right) 
= \frac{1}{\sqrt{2}}\ket{\psi_{t}^z}.
\end{split}\label{eq:U-S-1}
\end{equation*}
We can implement such a mapping nearly identically to the proof of \clm{fwk-odd-unitary}, except the decrementing of $t$ happens after the Hadamard is applied. Thus, $U_{\sf even}$ maps the subspace:
$${\cal L}_{\sf even}:=\bigoplus_{u\in V(G), i\in L^+(u)}\mathrm{span}\{\ket{u,i}\ket{0}\}\oplus\bigoplus_{(u,v)\in\overrightarrow{E}(G)}\mathrm{span}\{\ket{z}\ket{t}:z\in{\cal Z}_{u,v}^0,t\in\{1,\dots,{\sf T}_{u,v}-1\},\;\mbox{even}\}$$
of $H'$ to the support of $\Pi_{\sf even}$, and so
$R_{\sf even}=U_{\sf even}(2\Pi_{{\cal L}_{\sf even}}-I)U_{\sf even}^\dagger.$
We complete the proof by describing how to implement $2\Pi_{{\cal L}_{\sf even}}-I$. For $\ket{z}\ket{t}$, we can check if $t$ is even in $O(1)$ steps, and if not, set an ancillary flag $F_1$. Next, we check if $z\in {\cal Z}_{u,v}^0$ by checking the subroutine's internal flag, which also ensures that $t<{\sf T}_{u,v}$, and if not, set an ancillary flag $F_2$. Note that if $t=0$, $z$ has the form $(u,i)$ for some $i\in L(u)$, by the structure of $H'$, and by the discussion in \sec{fwk-transitions}, we can check if $i\in L^+(u)$ in $O(1)$ time, and otherwise, set a flag $F_3$. Reflect if either $F_1$, $F_2$ or $F_3$ is set, and then uncompute all three flags.
\end{proof}

\begin{claim}\label{clm:leftarrow-refl}
Let $R_{\leftarrow} =2\Pi_{\leftarrow}-I$, where $\Pi_{\leftarrow}$ is the orthogonal projector onto the span of
$\{\ket{\psi_{\leftarrow}^{v,j}}:v\in V(G),j\in L^-(v)\}.$
Then $R_{\leftarrow}$ can be implemented in complexity ${\sf polylog}({\sf T}_{\max})$. 
\end{claim}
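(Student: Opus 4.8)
The plan is to follow the template used for the other reflections in this section: exhibit a subspace ${\cal L}_\leftarrow\subseteq H'$ around which reflection is trivial, together with a ${\sf polylog}({\sf T}_{\max})$-cost unitary $U_\leftarrow$ on $H'$ carrying ${\cal L}_\leftarrow$ onto $\mathrm{span}\{\Psi_\leftarrow\}$ where $\Psi_\leftarrow := \{\ket{\psi_\leftarrow^{v,j}}: v\in V(G),\, j\in L^-(v)\}$, and then set $R_\leftarrow = U_\leftarrow(2\Pi_{{\cal L}_\leftarrow}-I)U_\leftarrow^\dagger$. Concretely I would take
$${\cal L}_\leftarrow := \mathrm{span}\{\ket{v,j}\ket{0}: v\in V(G),\ j\in L^-(v)\},$$
the ``incoming-edge'' slice of $H'$ with the time register zeroed, and ask that $U_\leftarrow$ act as $\ket{v,j}\ket{0}\mapsto \tfrac{1}{\sqrt2}(\ket{v,j}\ket{0}-\ket{v,j}\ket{{\sf T}_{u,v}}) = -\tfrac{1}{\sqrt2}\ket{\psi_\leftarrow^{v,j}}$, where $u=f_v(j)$. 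Since $\norm{\ket{\psi_\leftarrow^{v,j}}}^2=2$ (the two terms sit in distinct time-register levels, as ${\sf T}_{u,v}\geq 1$) and distinct $(v,j)$ have distinct vertex registers, the image vectors form an orthonormal set spanning $\mathrm{span}\{\Psi_\leftarrow\}$; hence this prescription extends to a genuine unitary whose behaviour off ${\cal L}_\leftarrow$ is irrelevant, and the global sign is immaterial to the reflection.

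Unlike the earlier reflection claims, no $U_t$ is needed: $\ket{\psi_\leftarrow^{v,j}}$ merely moves the time register from $0$ to ${\sf T}_{u,v}$ while keeping $\ket{v,j}$ fixed. To implement $U_\leftarrow$ on $\ket{v,j}\ket{0}$ with $j\in L^-(v)$, I would first compute ${\sf T}_{u,v}$ into a fresh $(\log{\sf T}_{\max})$-qubit ancilla; this costs ${\sf polylog}({\sf T}_{\max})$ by Item~\ref{item:known-times} of \defin{variable-time} applied to the transition subroutine, since $(v,j)=f(u,i)$ determines ${\sf T}_{u,v}={\sf T}_{u,i}$. Next, introduce a qubit in state $\tfrac1{\sqrt2}(\ket0-\ket1)$ and, controlled on it being $\ket1$, add the ancilla value into the (currently zero) time register, producing $\tfrac1{\sqrt2}\ket{v,j}\ket0\ket0_q-\tfrac1{\sqrt2}\ket{v,j}\ket{{\sf T}_{u,v}}\ket1_q$. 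Finally uncompute $q$ by flipping it conditioned on the time register equalling the ancilla value, and uncompute the ancilla by re-running the computation of ${\sf T}_{u,v}$ from $(v,j)$. Each step is a constant-size gate, an increment/comparison on $\log{\sf T}_{\max}$ qubits, or a ${\sf polylog}({\sf T}_{\max})$ evaluation of ${\sf T}_{u,v}$, so $U_\leftarrow$ has the claimed cost. (One could instead avoid the extra qubit by swapping the level $\ket{{\sf T}_{u,v}}$ of the time register with a fixed level $\ket1$ controlled on the ancilla, applying the fixed two-level rotation $\ket0\mapsto\tfrac1{\sqrt2}(\ket0-\ket1)$, and undoing the swap.)

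It then remains to implement $2\Pi_{{\cal L}_\leftarrow}-I$: on a computational basis state $\ket{z}\ket{t}$, set an ancillary flag unless both $t=0$ (checkable in $\log{\sf T}_{\max}$) and $z$ has the form $(v,j)$ with $j\in L^-(v)$ (checkable in $O(1)$ via the $\leftarrow$-symbol convention from \sec{fwk-transitions}); reflect conditioned on the flag, then uncompute it. Conjugating by $U_\leftarrow$ yields $R_\leftarrow$. I do not expect a genuine obstacle here — this is the lightest of the four reflection claims — the only point requiring care is ensuring the auxiliary qubit and the ${\sf T}_{u,v}$-register are honestly returned to $\ket0$, so that $U_\leftarrow$ is a true unitary rather than merely an isometry; this is why I route ${\sf T}_{u,v}$ through an ancilla (so it can be recomputed for the uncomputation) rather than destructively.
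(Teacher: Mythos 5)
Your proposal is correct and follows essentially the same route as the paper: same subspace ${\cal L}_\leftarrow$, same conjugation $R_\leftarrow = U_\leftarrow(2\Pi_{{\cal L}_\leftarrow}-I)U_\leftarrow^\dagger$, same reliance on Item~\ref{item:known-times} of \defin{variable-time} to compute ${\sf T}_{u,v}$ from $(v,j)$, and the same flag-based implementation of $2\Pi_{{\cal L}_\leftarrow}-I$. The only difference is cosmetic circuit bookkeeping: the paper computes ${\sf T}_{u,v}$ directly into the time register controlled on a $\ket{-}_A$ ancilla and uncomputes $A$ by comparing the time register to $0$, whereas you route ${\sf T}_{u,v}$ through a separate scratch register and use a distinct control qubit — both produce the identical map on ${\cal L}_\leftarrow$ at ${\sf polylog}({\sf T}_{\max})$ cost.
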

\begin{proof}
We describe the implementation of a unitary $U_{\leftarrow}$ that acts, for all $v\in V(G)$ and $j\in L^-(v)$, with $u=f_v(j)$, as:
$$U_{\leftarrow}\ket{v,j}\ket{0} = \frac{1}{\sqrt{2}}\left(
	\ket{v,j}\ket{0}-\ket{v,j}\ket{{\sf T}_{u,v}}
\right)
=-\frac{1}{\sqrt{2}}\ket{\psi_{\leftarrow}^{v,j}}.$$
First, append an ancilla $\ket{-}_A$. Controlled on this ancilla, compute ${\sf T}_{u,v}$ from $(v,j)$, which we can do in ${\sf polylog}({\sf T}_{\max})$ basic operations, by the assumptions of \defin{variable-time}, to get:
$$\ket{v,j}\ket{0}\ket{-}_A \mapsto \ket{v,j}\left(\ket{0}\ket{0}_A - \ket{{\sf T}_{u,v}}\ket{1}_A\right)/\sqrt{2}.$$
Uncompute the ancilla by adding 1 into register $A$ conditioned on the time register having a value greater than 0. Thus, $U_{\leftarrow}$ maps the subspace
$${\cal L}_{\leftarrow}:=\mathrm{span}\{\ket{v,j}\ket{0}:v\in V(G), j\in L^-(v)\}$$
of $H'$ to the support of $\Pi_{\leftarrow}$, and so
$R_{\leftarrow} = U_{\leftarrow}(2\Pi_{{\cal L}_\leftarrow}-I)U_{\leftarrow}^\dagger.$
We complete the proof by describing how to implement $2\Pi_{{\cal L}_\leftarrow}-I$. Append two ancillary qubits, $\ket{0}_{F_1}$ and $\ket{0}_{F_2}$. For a computational basis state $\ket{z}\ket{t}$, if $t\neq 0$, which can be checked in $O(\log {\sf T}_{\max})$ time, flip $F_1$ to get $\ket{1}_{F_1}$. By the discussion in \sec{fwk-transitions}, we can check if $z$ has the form $(v,j)$ for some $v\in V(G)$ and $j\in L^-(v)$ in $O(1)$ time, and if not, flip $F_2$ to get $\ket{1}_{F_2}$. Reflect the state if either flag is set to 1, and then uncompute both flags.
\end{proof}

\begin{proof}[Proof of Lemma~\ref{lem:update-cost}]
We can see that $\Pi_\star\Pi_{\sf odd}=0$, since $\Pi_\star$ is supported on states with $0$ in the last register, and $\Pi_{\sf odd}$ is the span of states with an odd $t\in \{1,\dots,{\sf T}_{u,v}-1\}$ in the first term, and an even $t\in\{2,\dots,{\sf T}_{u,v}\}$ in the second term. Thus, 
$$(2\Pi_{\sf even}-I)(2\Pi_\star-I)=-(2(\Pi_{\sf even}+\Pi_\star)-I) = -(2\Pi_{\cal A}-I),$$
where the last equality is because the support of $\Pi_{\cal A}$ is the direct sum of the supports of $\Pi_\star$ and $\Pi_{\sf even}$, by their definitions. 
By a similar argument, $(2\Pi_{\sf odd}-I)(2\Pi_{\leftarrow}-I)=(2\Pi_{\cal B}-I)$, and Thus, the result follows from \clm{fwk-star-unitary}, \clm{fwk-even-unitary}, \clm{fwk-odd-unitary} and \clm{leftarrow-refl}.
\end{proof}

\subsubsection{Positive Analysis}\label{sec:fwk-positive}

Suppose there is a flow $\theta$ on $G$ satisfying conditions \textbf{P1}-\textbf{P5} of \thm{full-framework}. We will use it to make a positive witness as follows. For each $(u,v)\in \overrightarrow{E}(G)$, with $i=f^{-1}_u(v)$ and $j=f_v^{-1}(u)$, define:
\begin{equation}
\begin{split}
\ket{w_{u,v}^0} &:= \ket{u,i}\\
\forall t\in \{1,\dots,{\sf T}_{u,v}\},\; \ket{w_{u,v}^t} &:= U_{t-1}\ket{w_{u,v}^{t-1}}\\
\ket{w_{u,v}} &:= \sum_{t=0}^{{\sf T}_{u,v}}\ket{w_{u,v}^t}\ket{t}+\ket{v,j}\ket{0}.
\end{split}\label{eq:full-fwk-w_uv}
\end{equation}
Then $\ket{w_{u,v}}$ is a kind of \emph{history state} \cite{kitaev1999quantum} for the algorithm on input $(u,i)$. We first show it is almost orthogonal to all algorithm states, defined in \eq{psi-ui}, \eq{psi-z-t} and \eq{psi-vj}.

\begin{claim}\label{clm:ortho1}
For all $(u,v)\in \overrightarrow{E}(G)$, letting $j=f_v^{-1}(u)$:
\begin{enumerate}
\item For all $u'\in V(G)$ and $i'\in L^+(u)$, $\braket{\psi_{\rightarrow}^{u',i'}}{w_{u,v}}=0$.
\item For all $(u',v')\in \overrightarrow{E}(G)$, $z\in{\cal Z}_{u',v'}^0$ and $t\in \{1,\dots,{\sf T}_{u,v}-1\}$, $\braket{\psi_t^{z}}{w_{u,v}}=0$. 
\item For all $v'\in V(G)$ and $j'\in L^-(u)$, 
$|\braket{\psi_{\leftarrow}^{v',j'}}{w_{u,v}}|^2\leq
\delta_{(v,j),(v',j')}{\epsilon_{u,v}}$, where $\delta_{xy}$ denotes the Kronecker delta function.
\end{enumerate}
\end{claim}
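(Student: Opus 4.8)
The plan is to compute each inner product directly from the definitions, using the key structural facts about the history state $\ket{w_{u,v}}$ in \eq{full-fwk-w_uv}: namely that its component in the $\ket{t}$ register for $0<t<{\sf T}_{u,v}$ is $U_{t-1}\cdots U_0\ket{u,i}$, which lies in ${\cal Z}_{u,v}^0$ by the discussion of flag registers in \sec{fwk-transitions}, together with the telescoping identity $\ket{w_{u,v}^{t+1}}=U_t\ket{w_{u,v}^t}$. First I would handle part 1: the state $\ket{\psi_\rightarrow^{u',i'}}=\ket{u',i'}\ket{0}-U_0\ket{u',i'}\ket{1}$ is supported only on the $\ket{0}$ and $\ket{1}$ time-registers. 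In the $\ket{1}$ register, $\ket{w_{u,v}}$ has component $\ket{w_{u,v}^1}=U_0\ket{u,i}$, and $\braket{U_0 u',i'}{U_0 u,i}=\delta_{(u',i'),(u,i)}$; in the $\ket{0}$ register, $\ket{w_{u,v}}$ has components $\ket{u,i}$ and $\ket{v,j}$, and since $i'\in L^+(u')$ while $j\in L^-(v)$, we get $\braket{u',i'}{v,j}=0$ and $\braket{u',i'}{u,i}=\delta_{(u',i'),(u,i)}$. The two contributions $\delta_{(u',i'),(u,i)}$ and $-\delta_{(u',i'),(u,i)}$ cancel, giving $0$ regardless of whether $(u',i')=(u,i)$. (One should note $U_0$ here means $U_0$ of the subroutine for edge $(u',v')$ versus $(u,v)$; the clean decomposition ${\cal Z}=\bigcup {\cal Z}_x$ of \defin{variable-time}, item 4, ensures the relevant orthogonality between different edges, so the only surviving case is $(u',v')=(u,v)$, $i'=i$, where the cancellation is exact.)

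Next, part 2: the state $\ket{\psi_t^z}=\ket{z}\ket{t}-U_t\ket{z}\ket{t+1}$ for $z\in{\cal Z}_{u',v'}^0$ lives in time-registers $\ket{t}$ and $\ket{t+1}$. If $(u',v')\neq(u,v)$ the non-confusion property kills everything, so assume $(u',v')=(u,v)$. In the $\ket{t}$ register, $\ket{w_{u,v}}$ has component $\ket{w_{u,v}^t}$ (for $1\le t\le {\sf T}_{u,v}-1$, and also noting that at $t={\sf T}_{u,v}-1$ the next register $t+1={\sf T}_{u,v}$ contains $\ket{w_{u,v}^{{\sf T}_{u,v}}}$ which is in ${\cal Z}_{u,v}^1$, hence orthogonal to $z\in{\cal Z}_{u,v}^0$). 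So $\braket{\psi_t^z}{w_{u,v}}=\braket{z}{w_{u,v}^t}-\bra{z}U_t^\dagger\ket{w_{u,v}^{t+1}}=\braket{z}{w_{u,v}^t}-\bra{z}U_t^\dagger U_t\ket{w_{u,v}^t}=0$, using $\ket{w_{u,v}^{t+1}}=U_t\ket{w_{u,v}^t}$. I should be slightly careful at the endpoint $t={\sf T}_{u,v}-1$ and at $t=1$ versus the $\ket{\psi_\rightarrow}$ overlap, but these are covered by the flag-register bookkeeping already set up.

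Part 3 is the interesting one and the place where the error $\epsilon_{u,v}$ enters. The state $\ket{\psi_\leftarrow^{v',j'}}=\ket{v',j'}\ket{{\sf T}_{u',v'}}-\ket{v',j'}\ket{0}$ with $u'=f_{v'}(j')$, where ${\sf T}_{u',v'}$ is the transition time computable from $(v',j')$. In the $\ket{0}$ register, $\ket{w_{u,v}}$ has components $\ket{u,i}$ and $\ket{v,j}$; since $j'\in L^-(v')$ and $i\in L^+(u)$ we get $\braket{v',j'}{u,i}=0$ and $\braket{v',j'}{v,j}=\delta_{(v',j'),(v,j)}$. In the $\ket{{\sf T}_{u',v'}}$ register there is a contribution from $\ket{w_{u,v}}$ only if ${\sf T}_{u',v'}$ falls in $\{0,\dots,{\sf T}_{u,v}\}$; when $(v',j')=(v,j)$ we have ${\sf T}_{u',v'}={\sf T}_{u,v}$ and the component is $\ket{w_{u,v}^{{\sf T}_{u,v}}}=U_{{\sf T}_{u,v}-1}\cdots U_0\ket{u,i}$. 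Hence for $(v',j')=(v,j)$,
\[
\braket{\psi_\leftarrow^{v,j}}{w_{u,v}} = \braket{v,j}{w_{u,v}^{{\sf T}_{u,v}}} - 1 = \bra{v,j}\big(U_{{\sf T}_{u,v}-1}\cdots U_0\ket{u,i}\big) - 1,
\]
and when $(v',j')\neq(v,j)$ the overlap vanishes (the non-confusion decomposition handles the case $(u',v')\neq(u,v)$; if $(u',v')=(u,v)$ but $j'\neq j$, then ${\sf T}_{u',v'}={\sf T}_{u,v}$ still, but $\braket{v,j'}{w_{u,v}^{{\sf T}_{u,v}}}$ and $\braket{v,j'}{v,j}$ both vanish since $v,j$ is the unique correct endpoint label and $j'$ is a different in-label). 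It remains to bound $|\braket{v,j}{w_{u,v}^{{\sf T}_{u,v}}}-1|^2$. Write $\ket{w_{u,v}^{{\sf T}_{u,v}}}=\ket{v,j}+\ket{\mathrm{err}}$ with $\norm{\ket{\mathrm{err}}}^2\le\epsilon_{u,v}$ by \eq{fwk-alg-error} (this is exactly the error assumption of the transition subroutine); then $\braket{v,j}{w_{u,v}^{{\sf T}_{u,v}}}-1=\braket{v,j}{\mathrm{err}}$, so by Cauchy-Schwarz $|\braket{v,j}{w_{u,v}^{{\sf T}_{u,v}}}-1|^2\le\norm{\ket{\mathrm{err}}}^2\le\epsilon_{u,v}$, as claimed.

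The main obstacle, such as it is, is purely bookkeeping: keeping straight which time-register each algorithm state is supported on (there is a genuine subtlety at the boundary $t={\sf T}_{u,v}$, where $\ket{w_{u,v}}$ switches from the flag-$0$ subspace to the flag-$1$ subspace and also picks up the extra $\ket{v,j}\ket{0}$ term), and invoking the non-confusion clause (item 4 of \defin{variable-time}) to discard all cross-edge terms so that in each case at most one edge $(u',v')=(u,v)$ contributes. Once the supports are correctly identified, each computation is a one-line cancellation (parts 1, 2) or a one-line Cauchy-Schwarz estimate (part 3).
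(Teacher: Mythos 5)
Your proof is correct and follows essentially the same approach as the paper's: split each inner product by time register, use the telescoping identity $\ket{w_{u,v}^{t+1}}=U_t\ket{w_{u,v}^t}$ to get exact cancellation in parts 1 and 2, and bound the residual $|\braket{v,j}{w_{u,v}^{{\sf T}_{u,v}}}-1|$ via Cauchy--Schwarz and the subroutine error guarantee in part 3. The only differences are stylistic (you make explicit the observation that $\braket{u',i'}{v,j}=0$ because forward and backward labels live in disjoint subspaces, and you consider a few vacuous sub-cases that cannot actually arise, e.g.\ $(u',v')=(u,v)$ with $j'\neq j$, since $j=f_v^{-1}(u)$ is determined by the edge); these do not change the argument.
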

\begin{proof}
\textbf{Item 1:}~Recalling that $\ket{\psi_{\rightarrow}^{u',i'}}=\ket{u',i'}\ket{0}-U_0\ket{u',i'}\ket{1}$, we have
\begin{align*}
\braket{\psi_{\rightarrow}^{u',i'}}{w_{u,v}} &= \braket{u',i'}{w_{u,v}^0} - \bra{u',i'}U_0^\dagger\ket{w_{u,v}^{1}}
= \braket{u',i'}{u,i} - \bra{u',i'}U_0^\dagger U_0\ket{u,i} = 0.
\end{align*}

\noindent\textbf{Item 2:}~Recall that $\ket{\psi_{t}^z}=\ket{z}\ket{t}-U_t\ket{z}\ket{t+1}$. This is always orthogonal to the last term of $\ket{w_{u,v}}$, since $t>0$. We also note that if $z\in {\cal Z}_{u',v'}$ for $(u',v')\neq (u,v)$, we have $\braket{\psi_t^z}{w_{u,v}}=0$, since $\ket{w_{u,v}}$ is only supported on $z\in{\cal Z}_{u,v}$. Thus, we can assume $(u,v)=(u',v')$, and so $t<{\sf T}_{u,v}$. Thus:
\begin{align*}
\braket{\psi_t^z}{w_{u,v}} &= \braket{z}{w_{u,v}^t} - \bra{z}U_t^\dagger\ket{w_{u,v}^{t+1}}
= \braket{z}{w_{u,v}^t} - \bra{z}U_t^\dagger U_t\ket{w_{u,v}^{t}} = 0.
\end{align*}

\noindent\textbf{Item 3:} Recall that $\ket{\psi_{\leftarrow}^{v',j'}}=\ket{v',j'}\ket{{\sf T}_{u',v'}}-\ket{v',j'}\ket{0}$. Again, if $(v',j')\neq (v,j)$, $(v',j')\not\in {\cal Z}_{u,v}$, so $\braket{\psi_{\leftarrow}^{v',j'}}{w_{u,v}}=0$. So supposing $(v',j')=(v,j)$, we have:
\begin{align*}
\braket{\psi_{\leftarrow}^{v',j'}}{w_{u,v}} &=  \braket{v,j}{w_{u,v}^{{\sf T}_{u,v}}}-\braket{v,j}{w_{u,v}^0}
-\braket{v,j}{v,j}
= \braket{v,j}{w_{u,v}^{{\sf T}_{u,v}}}-1,
\end{align*}
since $\ket{w_{u,v}^0}=\ket{u,i}$, so the middle term is 0. Then, using 
$$\abs{1-\braket{v,j}{w_{u,v}^{{\sf T}_{u,v}}}}^2\leq\norm{\ket{v,j}-\ket{w_{u,v}^{{\sf T}_{u,v}}}}^2= {\epsilon_{u,v}},$$
by \eq{fwk-alg-error}, 
we have
$|\braket{\psi_{\leftarrow}^{v',j'}}{w_{u,v}}|^2\leq {\epsilon_{u,v}}.$
\end{proof}

Next let $\theta$ be a flow satisfying conditions \textbf{P1}-\textbf{P5} of \thm{full-framework}, which can only exist if $M\neq\emptyset$. Then
we define:
\begin{equation}
\ket{w}=\sum_{(u,v)\in \overrightarrow{E}(G)}\frac{\theta(u,v)}{\sqrt{\w_{u,v}}}\ket{w_{u,v}}-\sum_{u\in V_0}\frac{\theta(u)}{\sqrt{\w_0\sigma(u)}}\ket{u,0}\ket{0}-\sum_{u\in M}\frac{\theta(u)}{\sqrt{\w_{\sf M}}}\ket{u,0}\ket{0},\label{eq:full-fwk-w}
\end{equation}
which we show is a positive witness, in the sense of \defin{pos-witness}.

\begin{lemma}\label{lem:full-fwk-pos-1}
Let $\w_{\sf M}=|V(G)|$, $\w_0=1/{\cal R}^{\sf T}$, and $c_+=7$. Then $\displaystyle \frac{\norm{\ket{w}}^2}{|\braket{w}{\sigma}|^2}  \leq 7 = c_+$.
\end{lemma}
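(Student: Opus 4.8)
The plan is to show that the vector $\ket w$ of \eq{full-fwk-w} is an orthogonal sum of its three groups of summands, read off $\norm{\ket w}^2$ as a sum of squares, and then control each piece using conditions \textbf{P3}, \textbf{P4}, \textbf{P5} together with the choices $\w_0=1/{\cal R}^{\sf T}$ and $\w_{\sf M}=|V(G)|$.

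\emph{Orthogonal decomposition.} First I would check that the families $\{\ket{w_{u,v}}\}_{(u,v)\in\overrightarrow E(G)}$, $\{\ket{u,0}\ket0\}_{u\in V_0}$, $\{\ket{u,0}\ket0\}_{u\in M}$ appearing in \eq{full-fwk-w} are pairwise orthogonal in $H'$. For two distinct edges this is exactly the non-confusion property of the transition subroutine (\defin{variable-time}, item~4): every $\ket{w_{u,v}^t}$ lies in $\mathrm{span}\{\ket z:z\in{\cal Z}_{u,v}\}$, and the flag convention of \sec{fwk-transitions} places the $t={\sf T}_{u,v}$ term in the ${\cal Z}_{u,v}^1$-block, so by the block structure of $H'$ in \eq{full-fwk-H} the supports for different edges sit in orthogonal direct summands. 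The remaining orthogonalities come directly from \eq{full-fwk-H}: $\ket{u,i}\ket0$ with $i\in L^+(u)$, $\ket{v,j}\ket0$ with $j\in L^-(v)$, and $\ket{u,0}\ket0$ (the label $0\notin L(u)$, the edge to $v_0$) lie in mutually orthogonal summands, and $V_0\cap M=\emptyset$ since $M\subseteq V_{\sf M}$ and $V_0\cap V_{\sf M}=\emptyset$. Moreover, writing $j=f_v^{-1}(u)$, since $U_0,\dots,U_{{\sf T}_{u,v}-1}$ are unitary and $\ket{w_{u,v}^0}=\ket{u,i}$ is a unit vector, each $\ket{w_{u,v}^t}$ is a unit vector, the ${\sf T}_{u,v}+1$ terms $\ket{w_{u,v}^t}\ket t$ are orthogonal (distinct time registers $t=0,\dots,{\sf T}_{u,v}$), and $\ket{v,j}\ket0$ is orthogonal to all of them (to the $t\ge1$ terms by the time register, to $\ket{u,i}\ket0$ by the block structure); hence $\norm{\ket{w_{u,v}}}^2={\sf T}_{u,v}+2$ and
\[
\norm{\ket w}^2=\sum_{(u,v)\in\overrightarrow E(G)}\frac{\theta(u,v)^2}{\w_{u,v}}({\sf T}_{u,v}+2)+\sum_{u\in V_0}\frac{\theta(u)^2}{\w_0\sigma(u)}+\sum_{u\in M}\frac{\theta(u)^2}{\w_{\sf M}}.
\]
The lower bound $\norm{\ket w}^2\ge{\cal E}^{\sf T}(\theta)$ is then immediate: drop the last two nonnegative sums and use $\sum_{(u,v)}\frac{\theta(u,v)^2}{\w_{u,v}}({\sf T}_{u,v}+2)\ge\sum_{(u,v)}\frac{\theta(u,v)^2}{\w_{u,v}}{\sf T}_{u,v}={\cal E}^{\sf T}(\theta)$ (see \defin{nwk-length}).

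\emph{The ratio.} The history states and the $M$-terms are supported away from the register values occurring in $\ket{\psi_0}=\sum_{u\in V_0}\sqrt{\sigma(u)}\ket{u,0}\ket0$ (again by the block structure, and $M\cap V_0=\emptyset$), so $\braket{\psi_0}{w}=-\sum_{u\in V_0}\sqrt{\sigma(u)}\,\theta(u)/\sqrt{\w_0\sigma(u)}=-\w_0^{-1/2}\sum_{u\in V_0}\theta(u)=-\w_0^{-1/2}$ by \textbf{P3}, hence $|\braket w\sigma|^2=1/\w_0={\cal R}^{\sf T}$. Now bound the three sums above. Since ${\sf T}_{u,v}$ is even, hence $\ge2$, we have ${\sf T}_{u,v}+2\le2{\sf T}_{u,v}$, so the edge sum is $\le2\,{\cal E}^{\sf T}(\theta)\le2{\cal R}^{\sf T}$ by \textbf{P5}. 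For the $V_0$-sum, expand $\theta(u)^2/\sigma(u)=(\theta(u)-\sigma(u))^2/\sigma(u)+2\theta(u)-\sigma(u)$ and sum over $u\in V_0$; by \textbf{P4}, \textbf{P3}, and $\sum_{u\in V_0}\sigma(u)=1$ this is $\le1+2-1=2$, so the $V_0$-sum is $\le2/\w_0=2{\cal R}^{\sf T}$. For the $M$-sum, global flow conservation together with \textbf{P2} (which forces $\theta(u)=0$ for $u\notin V_0\cup M$) and \textbf{P3} gives $\sum_{u\in M}\theta(u)=-1$; taking, without loss of generality, $\theta$ to have no sources in $M$ so that $\sum_{u\in M}|\theta(u)|=1$, and using $\w_{\sf M}=|V(G)|\ge1$, the $M$-sum is $\le\big(\sum_{u\in M}|\theta(u)|\big)^2/|V(G)|\le1\le{\cal R}^{\sf T}$ (where ${\cal R}^{\sf T}\ge1$ may be assumed). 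Adding the three bounds, $\norm{\ket w}^2\le5{\cal R}^{\sf T}\le7{\cal R}^{\sf T}$, so $\norm{\ket w}^2/|\braket w\sigma|^2\le7=c_+$.

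\emph{Main obstacle.} The delicate step is the orthogonal decomposition: the history states are produced by a variable-time subroutine whose only structural guarantees are the non-confusion condition and the internal-flag convention, so one must verify carefully that these place every term of every $\ket{w_{u,v}}$ in precisely the intended direct summand of the rather intricate space $H'$ of \eq{full-fwk-H} (in particular that the $t={\sf T}_{u,v}$ term sits in the ${\cal Z}_{u,v}^1$-block and is thus orthogonal to the boundary registers $\ket{u,i}\ket0,\ket{v,j}\ket0$ and to the blocks of all other edges). A secondary point, flagged above, is the $M$-sum when $\theta$ is allowed sources inside $M$ (permitted by the footnote to \textbf{P3}): one should either reduce to a flow with $\theta(u)\le0$ on $M$, or simply note that ${\cal R}^{\sf T}\ge1$ may be assumed, so that this contribution — already negligible because of the large weight $\w_{\sf M}=|V(G)|$ — is absorbed into the constant.
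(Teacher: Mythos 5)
Most of your argument is sound and essentially the paper's: the computation $\braket{\psi_0}{w}=-1/\sqrt{\w_0}$ via \textbf{P3}, the orthogonality discussion giving $\norm{\ket{w_{u,v}}}^2={\sf T}_{u,v}+2$ and the three-term expansion of $\norm{\ket{w}}^2$, the edge-term bound $\le 2{\cal R}^{\sf T}$ (and the lower bound ${\cal E}^{\sf T}(\theta)$), and your $V_0$-term bound (your exact identity gives $2{\cal R}^{\sf T}$, in fact slightly tighter than the paper's $4{\cal R}^{\sf T}$). The genuine gap is the $M$-term, and neither of your two escape hatches works. First, the ``without loss of generality $\theta$ has no sources in $M$'' step is not available: \thm{full-framework} only assumes \textbf{P1}--\textbf{P5}, and the flows actually constructed in the applications do have sources in $M$ (the positive analysis for $k$-distinctness says so explicitly: some marked vertices receive negative flow). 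Eliminating such sources means changing $\theta$ on edges, which can violate \textbf{P1}, \textbf{P2} (the alternative-neighbourhood constraints), \textbf{P4}, or increase ${\cal E}^{\sf T}(\theta)$, and no reduction is given. With sources permitted, $\sum_{u\in M}\theta(u)=-1$ gives no control whatsoever on $\sum_{u\in M}\theta(u)^2$ or on $\sum_{u\in M}|\theta(u)|$. Second, the fallback ``${\cal R}^{\sf T}\ge 1$ may be assumed'' is false in exactly the regime the framework is designed for: in the $3$- and $k$-distinctness applications ${\cal R}^{\sf T}=\widetilde{O}(1/|V_0|)\ll 1$, and nothing in the hypotheses of \thm{full-framework} lower-bounds ${\cal R}^{\sf T}$, so bounding the $M$-sum by $1/|V(G)|$ or by $1$ does not yield a bound of the form $O(1)\cdot{\cal R}^{\sf T}$.

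The missing idea is how the choice $\w_{\sf M}=|V(G)|$ is actually used. The paper writes $\theta(u)=\sum_{v\in\Gamma(u)}\theta(u,v)$ for each $u\in M$, applies Cauchy--Schwarz vertexwise to get $\theta(u)^2\le d_u\sum_{v\in\Gamma(u)}\theta(u,v)^2$, and then uses $d_u\le|V(G)|=\w_{\sf M}$ to absorb the degree factor, concluding $\sum_{u\in M}\theta(u)^2/\w_{\sf M}\le{\cal E}(\theta)\le{\cal E}^{\sf T}(\theta)\le{\cal R}^{\sf T}$ (this is \eq{full-fwk-pos-complexity-3rd}). This charges the $M$-term directly to the energy of the flow on the adjacent edges, so it is insensitive to whether $M$ contains sources and requires no assumption on the magnitude of ${\cal R}^{\sf T}$; that is the step your proof needs and does not have.
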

\begin{proof}
To analyse the positive witness, we compute (referring to \eq{fwk-psi-init}): 
\begin{align}
\braket{\psi_0}{w}=\sum_{u\in V_0}\sqrt{\sigma(u)}\braket{u,0,0}{w} = -\sum_{u\in V_0}\sqrt{\sigma(u)}\frac{\theta(u)}{\sqrt{\w_0\sigma(u)}}=-\frac{1}{\sqrt{\w_0}} = -\sqrt{{\cal R}^{\sf T}},\label{eq:fwk-pos-ip}
\end{align}
by condition \textbf{P3} of \thm{full-framework}.
Since this is non-zero, $\ket{w}$ is a positive witness, though it may have some error. To continue, we compute:
\begin{align}
\norm{\ket{w}}^2 &=\sum_{(u,v)\in \overrightarrow{E}(G)}\frac{\theta(u,v)^2}{\w_{u,v}}\norm{\ket{w_{u,v}}}^2+\sum_{u\in{V_0}}\frac{\theta(u)^2}{\w_0\sigma(u)}+\sum_{u\in M}\frac{\theta(u)^2}{\w_{\sf M}}.\label{eq:full-fwk-pos-complexity}
\end{align}
To upper bound the first term of \eq{full-fwk-pos-complexity}, we have $\norm{\ket{w_{u,v}}}^2={\sf T}_{u,v}+2$, so we have
\begin{align}
\sum_{(u,v)\in\overrightarrow{E}(G)}\frac{\theta(u,v)^2}{\w_{u,v}}\norm{\ket{w_{u,v}}}^2 = \sum_{(u,v)\in\overrightarrow{E}(G)}\frac{\theta(u,v)^2}{\w_{u,v}}({\sf T}_{u,v}+2)={\cal E}^{\sf T}(\theta)+2{\cal E}(\theta)\leq 2{\cal R}^{\sf T},\label{eq:full-fwk-pos-complexity-1st}
\end{align}
by condition \textbf{P5} of \thm{full-framework}, and using $2{\cal E}(\theta)\leq {\cal E}^{\sf T}(\theta)$, since each ${\sf T}_{u,v}\geq 2$. 

\noindent To upper bound the second term of \eq{full-fwk-pos-complexity}, we have
\begin{equation}
\begin{split}
\sum_{u\in{V_0}}\frac{\theta(u)^2}{\w_0\sigma(u)} & \leq \frac{1}{\w_0}2\sum_{u\in V_0}\frac{\sigma(u)^2+(\theta(u)-\sigma(u))^2}{\sigma(u)}=
2{\cal R}^{\sf T}\left(1+\sum_{u\in V_0}\frac{(\theta(u)-\sigma(u))^2}{\sigma(u)}\right)
\leq 4{\cal R}^{\sf T}
\end{split}\label{eq:full-fwk-pos-complexity-2nd}
\end{equation}
by condition \textbf{P4} of \thm{full-framework}. 
Finally, we can upper bound the last term of \eq{full-fwk-pos-complexity} as:
\begin{equation}
\sum_{u\in M}\frac{\theta(u)^2}{\w_{\sf M}} = \frac{1}{|V(G)|}\sum_{u\in M}\left(\sum_{v\in \Gamma(u)}\theta(u,v)\right)^2 
\leq \frac{1}{|V(G)|}\sum_{u\in M}d_u\sum_{v\in \Gamma(u)}\theta(u,v)^2
\leq {\cal E}(\theta)\leq {\cal E}^{\sf T}(\theta)\leq {\cal R}^{\sf T}.\label{eq:full-fwk-pos-complexity-3rd}
\end{equation}

\noindent Plug \eq{full-fwk-pos-complexity-1st}, \eq{full-fwk-pos-complexity-2nd} and \eq{full-fwk-pos-complexity-3rd} into \eq{full-fwk-pos-complexity} to get 
$\norm{\ket{w}}^2 \leq 7{\cal R}^{\sf T}$, which, with \eq{fwk-pos-ip}, completes the proof.
\end{proof}

Next, we analyse the error of $\ket{w}$ as a positive witness, by upper bounding its overlap with the various states in $\Psi^{\cal A}\cup\Psi^{\cal B}$. 
First, we have the following corollary to \clm{ortho1}. 
\begin{corollary}\label{cor:ortho1}
\begin{enumerate}
\item For all $u\in V(G)$, $i\in L^+(u)$, $\braket{\psi_{\rightarrow}^{u,i}}{w}=0$.
\item For all $(u,v)\in\overrightarrow{E}(G)$, $z\in{\cal Z}_{u,v}^0$ and $t\in \{1,\dots,{\sf T}_{u,v}-1\}$, $\braket{\psi_{t}^z}{w}=0$.
\item For all $v\in V(G)$ and $j\in L^-(v)$, letting $u=f_v(j)$, we have: 
$|\braket{\psi_{\leftarrow}^{v,j}}{w}|\leq \frac{\theta(u,v)}{\sqrt{\w_{u,v}}}\sqrt{\epsilon_{u,v}}.$
\end{enumerate}
\end{corollary}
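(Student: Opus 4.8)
The plan is to obtain \cor{ortho1} as a direct consequence of \clm{ortho1}, by expanding $\ket{w}$ via its definition \eq{full-fwk-w} and using linearity of the inner product. Beyond \clm{ortho1}, the only extra ingredient I need is an elementary observation about the two ``boundary'' terms $-\sum_{u\in V_0}\frac{\theta(u)}{\sqrt{\w_0\sigma(u)}}\ket{u,0}\ket{0}$ and $-\sum_{u\in M}\frac{\theta(u)}{\sqrt{\w_{\sf M}}}\ket{u,0}\ket{0}$ of $\ket{w}$. First I would check that every algorithm state $\ket{\psi_{\rightarrow}^{u,i}}$, $\ket{\psi_{t}^{z}}$, $\ket{\psi_{\leftarrow}^{v,j}}$ is orthogonal to each boundary vector $\ket{u',0}\ket{0}$ with $u'\in V_0\cup M$: the states $\ket{\psi_t^z}$ are supported entirely on time registers $t\geq 1$, hence are orthogonal to anything with time register $0$; and the time-$0$ parts of $\ket{\psi_{\rightarrow}^{u,i}}$ and of $\ket{\psi_{\leftarrow}^{v,j}}$ are $\ket{u,i}\ket{0}$ with $i\in L^+(u)$ and $-\ket{v,j}\ket{0}$ with $j\in L^-(v)$, which carry an edge label different from $0$ whenever the vertex coincides with $u'$, since $0\notin L(u')$ for $u'\in V_0\cup M$. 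Consequently, for each of the three statements the inner product with $\ket{w}$ collapses to $\sum_{(u',v')\in\overrightarrow{E}(G)}\frac{\theta(u',v')}{\sqrt{\w_{u',v'}}}\braket{\cdot}{w_{u',v'}}$, and it remains to control these sums.

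Then I would feed in the three items of \clm{ortho1} term by term. For Item 1, \clm{ortho1} gives $\braket{\psi_{\rightarrow}^{u,i}}{w_{u',v'}}=0$ for every edge $(u',v')$, so the sum is $0$. For Item 2, fix $z\in{\cal Z}_{u,v}^0$ and $t\in\{1,\dots,{\sf T}_{u,v}-1\}$; the summand with $(u',v')=(u,v)$ is handled directly by Item 2 of \clm{ortho1}, and for every other edge $(u',v')$ the inner product $\braket{\psi_t^z}{w_{u',v'}}$ still vanishes: if $t+1\leq{\sf T}_{u',v'}$, the identity $\ket{w_{u',v'}^{t+1}}=U_t\ket{w_{u',v'}^t}$ makes it telescope to $\braket{z}{w_{u',v'}^t}-\braket{z}{w_{u',v'}^t}=0$; if $t>{\sf T}_{u',v'}$, then $\ket{w_{u',v'}}$ has no component with time register $t$ or $t+1$; and if $t={\sf T}_{u',v'}$, the only possibly nonzero contribution is $\braket{z}{w_{u',v'}^{{\sf T}_{u',v'}}}$, which is $0$ because $\ket{w_{u',v'}^{{\sf T}_{u',v'}}}$ lies in the flag-$1$ sector ${\cal Z}_{u',v'}^1$ while $z\in{\cal Z}_{u,v}^0$ has its internal flag set to $0$. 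For Item 3, fix $v$, $j\in L^-(v)$ and $u=f_v(j)$; the Kronecker delta in Item 3 of \clm{ortho1} kills every summand except $(u',v')=(u,v)$ (the conditions $v'=v$ and $j'=f_{v'}^{-1}(u')=j$ force $u'=f_v(j)=u$), and that surviving summand has modulus at most $\frac{|\theta(u,v)|}{\sqrt{\w_{u,v}}}\sqrt{\epsilon_{u,v}}$, which is the asserted bound.

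The whole deduction is routine once \clm{ortho1} is available; the only place that needs a moment of care is Item 2, where I must discard the cross-terms $\braket{\psi_t^z}{w_{u',v'}}$ for edges $(u',v')\neq(u,v)$ with ${\sf T}_{u',v'}\leq t$ — these lie just outside the range literally covered by \clm{ortho1}, but vanish for the flag- and time-register reasons described above. (Alternatively, one could restate Item 2 of \clm{ortho1} so that the permissible range of $t$ is governed by the edge of $z$ rather than by the edge of $w$, after which \cor{ortho1} follows immediately.) Everything else is a single line from \clm{ortho1} and \eq{full-fwk-w}.
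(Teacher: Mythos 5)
Your proof is correct and takes the route the paper implicitly intends (the paper states the corollary without a displayed proof): expand $\ket{w}$ via \eq{full-fwk-w}, observe that the two boundary sums $-\sum_{u'\in V_0}\frac{\theta(u')}{\sqrt{\w_0\sigma(u')}}\ket{u',0}\ket{0}$ and $-\sum_{u'\in M}\frac{\theta(u')}{\sqrt{\w_{\sf M}}}\ket{u',0}\ket{0}$ are orthogonal to every algorithm state because $0\notin L(u')$ and because $\ket{\psi_t^z}$ lives at times $t\ge 1$, and then apply \clm{ortho1} term by term. You are right to flag the subtlety in Item~2: the range $t\in\{1,\dots,{\sf T}_{u,v}-1\}$ in \clm{ortho1} is governed by the edge $(u,v)$ of the history state $\ket{w_{u,v}}$, whereas in the corollary the range is governed by the edge of $z$, so the cross terms $\braket{\psi_t^z}{w_{u',v'}}$ with $(u',v')\ne(u,v)$ and $t\ge{\sf T}_{u',v'}$ lie just outside the claim's literal hypotheses. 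Your telescoping/flag case analysis correctly closes this; a slightly more direct alternative (already the idea in the claim's own proof) is that for $(u',v')\ne(u,v)$, $\ket{w_{u',v'}}$ is supported on $\mathrm{span}\{\ket{z'}:z'\in{\cal Z}_{u',v'}\}$ while $z\in{\cal Z}_{u,v}^0\subset{\cal Z}_{u,v}$, and these are orthogonal by the non-confusion decomposition of \defin{variable-time}, which disposes of all $t\ge 1$ at once. Your observation that the Item~3 bound should read $\frac{|\theta(u,v)|}{\sqrt{\w_{u,v}}}\sqrt{\epsilon_{u,v}}$ is also correct, and harmless, since \lem{full-fwk-pos-2} uses the squared quantity.
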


\noindent Next we show that the states in $\Psi_\star'$ are orthogonal to $\ket{w}$. 

\begin{claim}\label{clm:ortho2}
For all $u'\in V(G)$, and any $\ket{\psi_{\star}(u')}\ket{0}\in \Psi_{\star}'(u')$, $\braket{\psi_{\star}(u'),0}{w}=0$.
\end{claim}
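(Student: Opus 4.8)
The plan is to split the inner product $\braket{\psi_\star(u'),0}{w}$ according to the three groups of terms in the definition of $\ket{w}$ in \eq{full-fwk-w}: the ``history state'' terms $\frac{\theta(u,v)}{\sqrt{\w_{u,v}}}\ket{w_{u,v}}$, the $V_0$-correction $-\sum_{u\in V_0}\frac{\theta(u)}{\sqrt{\w_0\sigma(u)}}\ket{u,0}\ket{0}$, and the $M$-correction $-\sum_{u\in M}\frac{\theta(u)}{\sqrt{\w_{\sf M}}}\ket{u,0}\ket{0}$. First I would observe that $\ket{\psi_\star(u'),0}$ lives entirely in the register block where the time register is $\ket{0}$ and is supported on states $\ket{u',i}$ with $i\in L(u')\cup\{0\}$. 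So the only terms of each $\ket{w_{u,v}}$ that can contribute are $\ket{w_{u,v}^0}\ket{0}=\ket{u,i}\ket{0}$ (with $i=f_u^{-1}(v)$, contributing only when $u=u'$ and $i\in L^+(u')$) and the dangling term $\ket{v,j}\ket{0}$ (with $j=f_v^{-1}(u)$, contributing only when $v=u'$ and $j\in L^-(u')$); the intermediate terms $\ket{w_{u,v}^t}\ket{t}$ for $1\le t\le {\sf T}_{u,v}$ are orthogonal because the time register is nonzero. Likewise the $V_0$- and $M$-corrections only contribute through the $\ket{u',0}\ket{0}$ component of $\ket{\psi_\star(u'),0}$, which is present precisely when $u'\in V_0$ (with amplitude $\sqrt{\w_0\sigma(u')}$) or $u'\in M$ (with amplitude $\sqrt{\w_{\sf M}}$), by \eq{psi-star-F-1} and \eq{psi-star-F-M}.

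Next I would collect these contributions. For a fixed $u'$, summing over $(u,v)\in\overrightarrow E(G)$: edges of the form $(u',v)$ with $v=f_{u'}(i)$, $i\in L^+(u')$ contribute $\frac{\theta(u',v)}{\sqrt{\w_{u',v}}}\braket{\psi_\star(u')}{u',i}$; edges of the form $(u,u')$ with $u=f_{u'}(j)$, $j\in L^-(u')$ contribute $\frac{\theta(u,u')}{\sqrt{\w_{u,u'}}}\braket{\psi_\star(u')}{u',j} = -\frac{\theta(u',u)}{\sqrt{\w_{u',u}}}\braket{\psi_\star(u')}{u',j}$ using antisymmetry of $\theta$. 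Re-indexing via $f_{u'}$ and using $\w_{u',v}=\w_{u',i}$, this total is exactly
$$\sum_{i\in L^+(u')}\frac{\theta(u',f_{u'}(i))\braket{\psi_\star(u')}{u',i}}{\sqrt{\w_{u',i}}}-\sum_{i\in L^-(u')}\frac{\theta(u',f_{u'}(i))\braket{\psi_\star(u')}{u',i}}{\sqrt{\w_{u',i}}}.$$
Now there are two cases. If $u'\in V(G)\setminus(V_0\cup M)$, there is no $\ket{u',0}\ket{0}$ component and no correction term, so $\braket{\psi_\star(u'),0}{w}$ equals exactly the displayed quantity, which vanishes by condition \textbf{P2} of \thm{full-framework}. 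If $u'\in V_0$ (the case $u'\in M$ is symmetric), then $\Psi_\star(u')=\{\ket{\psi_\star^G(u')}\}$ so $\ket{\psi_\star(u')}=\ket{\psi_\star^G(u')}$, and the edge-sum above becomes $\sum_{i\in L^+(u')}\frac{\theta(u',f_{u'}(i))}{\sqrt{\w_{u',i}}}\cdot\sqrt{\w_{u',i}} - \sum_{i\in L^-(u')}\frac{\theta(u',f_{u'}(i))}{\sqrt{\w_{u',i}}}\cdot(-\sqrt{\w_{u',i}}) = \sum_{v\in\Gamma(u')}\theta(u',v)=\theta(u')$. Meanwhile the $V_0$-correction contributes $-\frac{\theta(u')}{\sqrt{\w_0\sigma(u')}}\cdot\braket{\psi_\star(u')\ket{0}}{u',0}\ket{0}$; but in $\ket{\psi_\star^{G'}(u')}\ket{0}$ the $\ket{u',0}\ket{0}$ component has amplitude $\sqrt{\w_0\sigma(u')}$ (see \eq{psi-star-F-1}), so this contributes $-\frac{\theta(u')}{\sqrt{\w_0\sigma(u')}}\cdot\sqrt{\w_0\sigma(u')} = -\theta(u')$, and the two cancel. (Here I should be slightly careful about whether the edge-sum in \eq{psi-star-F-1} ranges over $L(u')$ or incorporates the $v_0$ edge labelled $0$; since $0\notin L(u')$ by assumption, the decomposition is clean.)

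The main obstacle — really the only thing requiring care — is bookkeeping the sign conventions consistently: the identification $\ket{v,u}=-\ket{u,v}$, the antisymmetry $\theta(u,v)=-\theta(v,u)$, and the $\pm$ split between $L^+(u')$ and $L^-(u')$ in the star states all have to line up so that the in-edges and out-edges combine into the single expression appearing in \textbf{P2}. I would handle this by working throughout with the fixed orientation $\overrightarrow E(G)$ and the maps $f_u$, never with unoriented pairs, so that each edge is counted once with an unambiguous sign, and then simply quote condition \textbf{P2} (for non-$V_0\cup M$ vertices) or compute $\theta(u')$ directly and observe the cancellation with the correction term (for $V_0\cup M$ vertices). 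Once the signs are pinned down the claim is immediate.
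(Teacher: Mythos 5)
Your proposal is correct and follows essentially the same route as the paper: split the inner product according to the three pieces of $\ket{w}$ from \eq{full-fwk-w}, observe that only the $t=0$ slice of each $\ket{w_{u,v}}$ overlaps $\ket{\psi_\star(u'),0}$, reduce the edge-sum to the quantity in condition \textbf{P2} for $u'\notin V_0\cup M$, and compute $\theta(u')$ minus the matching $V_0$- or $M$-correction for $u'\in V_0\cup M$. The only nitpick is a slight imprecision when you write $\ket{\psi_\star(u')}=\ket{\psi_\star^{G}(u')}$ for $u'\in V_0$ --- the element of $\Psi_\star'(u')$ is $\ket{\psi_\star^{G'}(u')}\ket{0}$, which carries the extra $\sqrt{\w_0\sigma(u')}\ket{u',0}\ket{0}$ term --- but you correctly account for exactly that term a few lines later when cancelling against the $V_0$-correction, so the computation is sound.
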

\begin{proof}
If $u'\not\in V_0\cup M$, we have:
\begin{align*}
\braket{\psi_{\star}(u'),0}{w} &= \sum_{(u,v)\in\overrightarrow{E}(G)}\frac{\theta(u,v)}{\sqrt{\w_{u,v}}}\bra{\psi_{\star}(u'),0}(\ket{u,f_u^{-1}(v)}\ket{0}+\ket{v,f_v^{-1}(u)}\ket{0})\\
&= \sum_{v\in\Gamma^+(u')}\frac{\theta(u',v)}{\sqrt{\w_{u',v}}}\braket{\psi_\star(u')}{u',f_{u'}^{-1}(v)}
+\sum_{u\in\Gamma^-(u')}\frac{\theta(u,u')}{\sqrt{\w_{u,u'}}}\braket{\psi_\star(u')}{u',f_{u'}^{-1}(u)}=0,
\end{align*}
by condition \textbf{P2} of \thm{full-framework}, using $\theta(u,u')=-\theta(u',u)$, and the fact that $\ket{\psi_\star(u')}$ is supported on $\ket{u',i}$ such that $i\in L(u')$.

\noindent If $u'\in M$, the only state in $\Psi_{\star}'(u')$ is $\ket{\psi_{\star}^{G'}(u')}\ket{0}$ (see \eq{psi-star-F-M}), so we have:
\begin{align*}
\braket{\psi_{\star}^{G'}(u'),0}{w} &= \sum_{(u,v)\in\overrightarrow{E}(G)}\frac{\theta(u,v)}{\sqrt{\w_{u,v}}}\braket{\psi_{\star}^{G'}(u'),0}{w_{u,v}}
-\sum_{u\in M}\frac{\theta(u)}{\sqrt{\w_{\sf M}}}\braket{\psi_{\star}^{G'}(u'),0}{u,0,0}\\
&= \sum_{v\in\Gamma^+(u')}\frac{\theta(u',v)}{\sqrt{\w_{u',v}}}\sqrt{\w_{u',v}}
-\sum_{u\in\Gamma^-(u')}\frac{\theta(u,u')}{\sqrt{\w_{u,u'}}}\sqrt{\w_{u,u'}}-\frac{\theta(u')}{\sqrt{\w_{\sf M}}}\sqrt{\w_{\sf M}}\\
&= \sum_{u\in\Gamma(u')}\theta(u,u') - \theta(u') = 0.
\end{align*}

\noindent Similarly, if $u'\in V_0$, 
\begin{align*}
\braket{\psi_{\star}^{G'}(u'),0}{w} 
&= \!\!\!\!\sum_{v\in\Gamma^+(u')}\!\!\theta(u',v)
-\sum_{u\in\Gamma^-(u')}\!\!\theta(u,u')-\frac{\theta(u')}{\sqrt{\w_0\sigma(u')}}\sqrt{\w_0\sigma(u')}
= \sum_{u\in\Gamma(u')}\!\!\theta(u,u') - \theta(u') = 0.\qedhere
\end{align*}
\end{proof}

\noindent We can combine these results in the following lemma:
\begin{lemma}\label{lem:full-fwk-pos-2}
When $M\neq\emptyset$, $\ket{w}$ as defined in \eq{full-fwk-w} is an $\epsilon/2$-positive witness (see \defin{pos-witness}).
\end{lemma}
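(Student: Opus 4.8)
The plan is to verify the three requirements of \defin{pos-witness} for $\ket{w}$ as defined in \eq{full-fwk-w}: that $\braket{\psi_0}{w}\neq 0$, that $\norm{\Pi_{\cal A}\ket{w}}^2\leq (\epsilon/2)\norm{\ket{w}}^2$, and that $\norm{\Pi_{\cal B}\ket{w}}^2\leq (\epsilon/2)\norm{\ket{w}}^2$. The first is immediate from \eq{fwk-pos-ip}, which gives $\braket{\psi_0}{w}=-\sqrt{{\cal R}^{\sf T}}\neq 0$. For the two projection bounds, the key point is that $\Psi^{\cal A}$ and $\Psi^{\cal B}$ (see \eq{full-fwk-states}) are each \emph{pairwise orthogonal} sets once we replace each $\Psi_\star(u)$ by an orthonormal basis $\overline\Psi'(u)$ (as noted in the text following \eq{full-fwk-states}, and implicitly established in \sec{fwk-unitary}). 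Hence for either $\cal C\in\{\cal A,\cal B\}$, $\norm{\Pi_{\cal C}\ket{w}}^2$ is the sum of squared overlaps of $\ket{w}$ with the (normalized) elements of the corresponding orthonormal set, and I can bound it term by term.

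First I would handle $\Pi_{\cal A}$. The set $\Psi^{\cal A}$ consists of $\Psi_\star'$ together with the odd-$t$ algorithm states $\Psi_t^{u,v}$. By \cor{ortho1}, Item 2, $\braket{\psi_t^z}{w}=0$ for all $z\in{\cal Z}_{u,v}^0$ and all $t\in\{1,\dots,{\sf T}_{u,v}-1\}$ (in particular the odd ones), and by \clm{ortho2}, $\braket{\psi_\star(u'),0}{w}=0$ for every $u'$ and every state in $\Psi_\star'(u')$. Since every element of the orthonormal basis for $\mathrm{span}\{\Psi_\star'(u')\}$ is a linear combination of states in $\Psi_\star'(u')$, it too is orthogonal to $\ket{w}$. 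Therefore $\Pi_{\cal A}\ket{w}=0$, so the bound $\norm{\Pi_{\cal A}\ket{w}}^2\leq(\epsilon/2)\norm{\ket{w}}^2$ holds trivially (indeed with slack).

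The real work is the $\Pi_{\cal B}$ bound. The set $\Psi^{\cal B}$ contains the states $\ket{\psi_\rightarrow^{u,i}}$, the even-$t$ algorithm states $\Psi_t^{u,v}$, and the exit states $\ket{\psi_\leftarrow^{v,j}}$. By \cor{ortho1}, Items 1 and 2, $\ket{w}$ is orthogonal to all $\ket{\psi_\rightarrow^{u,i}}$ and to all even-$t$ algorithm states, so only the exit states contribute. Each $\ket{\psi_\leftarrow^{v,j}}=\ket{v,j}\ket{{\sf T}_{u,v}}-\ket{v,j}\ket{0}$ has norm $\sqrt 2$, and these are pairwise orthogonal (different $(v,j)$ are supported on disjoint basis vectors). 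By \cor{ortho1}, Item 3, $|\braket{\psi_\leftarrow^{v,j}}{w}|\leq \frac{\theta(u,f_v(j))}{\sqrt{\w_{u,v}}}\sqrt{\epsilon_{u,v}}$ with $u=f_v(j)$. Summing the normalized squared overlaps over all $(v,j)$, and noting that each directed edge $(u,v)\in\overrightarrow{E}(G)$ contributes exactly once (via $j=f_v^{-1}(u)$), gives
\begin{align*}
\norm{\Pi_{\cal B}\ket{w}}^2 &= \sum_{v\in V(G)}\sum_{j\in L^-(v)}\frac{|\braket{\psi_\leftarrow^{v,j}}{w}|^2}{2}
\leq \frac12\sum_{(u,v)\in\overrightarrow{E}(G)}\frac{\theta(u,v)^2}{\w_{u,v}}\epsilon_{u,v}.
\end{align*}
Now I split the sum according to whether $(u,v)\in\tilde E$ or not. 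For $(u,v)\in\tilde E$, condition \textbf{P1} of \thm{full-framework} forces $\theta(u,v)=0$, so these terms vanish entirely (this is exactly why \textbf{P1} is imposed). For $(u,v)\notin\tilde E$, \textbf{TS1} gives $\epsilon_{u,v}\leq\epsilon$, so the sum is at most $\frac{\epsilon}{2}\sum_{(u,v)}\frac{\theta(u,v)^2}{\w_{u,v}}=\frac{\epsilon}{2}{\cal E}(\theta)\leq\frac{\epsilon}{2}{\cal E}^{\sf T}(\theta)\leq\frac{\epsilon}{2}{\cal R}^{\sf T}$ by \textbf{P5}, using ${\sf T}_{u,v}\geq 1$. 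Finally, by \lem{full-fwk-pos-1} we have $\norm{\ket{w}}^2\geq{\cal E}^{\sf T}(\theta)\geq \ldots$ — more precisely I want $\norm{\ket{w}}^2\geq {\cal R}^{\sf T}\cdot(\text{something})$; actually the cleanest route is: $\norm{\ket{w}}^2\geq|\braket{\psi_0}{w}|^2={\cal R}^{\sf T}$ since $\ket{\psi_0}$ is a unit vector. Hence $\norm{\Pi_{\cal B}\ket{w}}^2\leq\frac\epsilon2{\cal R}^{\sf T}\leq\frac\epsilon2\norm{\ket{w}}^2$, establishing the $\epsilon/2$-positive-witness property.

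The main obstacle, and the only place requiring care, is bookkeeping in the $\Pi_{\cal B}$ step: making sure the pairwise-orthogonality of $\Psi^{\cal B}$ (including the even-$t$ states not interfering with the $\ket{\psi_\leftarrow^{v,j}}$, which relies on ${\sf T}_{u,v}$ being even) is correctly invoked so that $\norm{\Pi_{\cal B}\ket{w}}^2$ genuinely decomposes as a sum of squared overlaps, and correctly matching the index $j\in L^-(v)$ with the directed edge $(u,v)$ so that each edge is counted once. Everything else is a direct substitution of \cor{ortho1}, \clm{ortho2}, and conditions \textbf{P1}, \textbf{P5}, \textbf{TS1} of \thm{full-framework}.
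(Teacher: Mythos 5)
Your proof is correct and follows essentially the same approach as the paper's: show $\Pi_{\cal A}\ket{w}=0$ via \clm{ortho1} and \clm{ortho2}, decompose $\norm{\Pi_{\cal B}\ket{w}}^2$ over the pairwise-orthogonal exit states $\ket{\psi_\leftarrow^{v,j}}$, bound each term by \cor{ortho1} Item 3, then kill the $\tilde E$ terms with \textbf{P1} and bound the rest with \textbf{TS1} to get $\frac\epsilon2{\cal E}(\theta)$. The only (inconsequential) divergence is the last inequality: the paper goes $\frac\epsilon2{\cal E}(\theta)<\frac\epsilon2{\cal E}^{\sf T}(\theta)\leq\frac\epsilon2\norm{\ket{w}}^2$ directly from the lower bound $\norm{\ket{w}}^2\geq{\cal E}^{\sf T}(\theta)$ in \lem{full-fwk-pos-1}, whereas you detour through \textbf{P5} and Cauchy--Schwarz to get $\norm{\ket{w}}^2\geq|\braket{\psi_0}{w}|^2={\cal R}^{\sf T}\geq{\cal E}^{\sf T}(\theta)$; both chains are valid, and the paper's is the one you started writing before switching, so nothing is lost.
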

\begin{proof}
Note that $\ket{w}$ is only defined when $M\neq \emptyset$, as it is constructed with a flow from $V_0$ to $M$.
For $\ket{w}$ to be a positive witness, we require that $\braket{w}{\psi_0}\neq 0$, which follows from \eq{fwk-pos-ip}. All that remains is to show that $\norm{\Pi_{\cal A}\ket{w}}^2$ and $\norm{\Pi_{\cal B}\ket{w}}^2$ are both at most $\frac{\epsilon}{2}\norm{\ket{w}}^2$. 
By \clm{ortho1} and \clm{ortho2}, we have 
\begin{align*}
\norm{\Pi_{\cal A}\ket{w}}^2=0
\mbox{ and }
\norm{\Pi_{\cal B}\ket{w}}^2 &= \sum_{v\in V(G),j\in L^-(v)}\frac{|\braket{\psi_{\leftarrow}^{v,j}}{w}|^2}{\norm{\ket{\psi_{\leftarrow}^{v,j}}}^2}
\leq \sum_{v\in V(G),j\in L^-(v)}\frac{\theta(v,f_v(j))^2 \epsilon_{f_v(j),v}/\w_{v,f_v(j)}}{2}.
\end{align*}
Since $\theta(e)=0$ for all $e\in \tilde{E}$ (condition \textbf{P1} of \thm{full-framework}) and for all $e\in \overrightarrow{E}(G)\setminus\tilde{E}$, $\epsilon_{u,v}\leq \epsilon$ (condition \textbf{TS1} of \thm{full-framework}), we can continue:
\begin{align}
\norm{\Pi_{\cal B}\ket{w}}^2 &\leq \frac{1}{2}\sum_{(u,v)\in\overrightarrow{E}(G)\setminus \tilde{E}}\frac{\theta(u,v)^2 \epsilon_{u,v}}{\w_{u,v}}
\leq \frac{\epsilon}{2}{\cal E}(\theta) < \frac{\epsilon}{2}{\cal E}^{\sf T}(\theta). \label{eq:pos-2-approx}
\end{align}
We have used the fact that the energy of the flow in $G$ (see \defin{flow}), ${\cal E}(\theta)$, is at most
the energy of that flow extended to the graph $G^{\sf T}$ in which we replace the edges by paths of positive lengths determined by ${\sf T}$ (see \defin{nwk-length}). For the final step of the proof, we know due to \eq{full-fwk-pos-complexity-1st} that
\begin{align*}
	\norm{\ket{w}}^2 \geq \sum_{(u,v)\in\overrightarrow{E}(G)}\frac{\theta(u,v)^2}{\w_{u,v}}\norm{\ket{w_{u,v}}}^2 \geq {\cal E}^{\sf T}(\theta), 
\end{align*}
and therefore, it follows from \eq{pos-2-approx} that $\norm{\Pi_{\cal B}\ket{w}}^2\leq \frac{\epsilon}{2}\norm{\ket{w}}^2$, so $\ket{w}$ is an $\epsilon/2$-positive witness.
\end{proof}

\subsubsection{Negative Analysis} 

Let ${\cal A}=\mathrm{span}\{\Psi^{\cal A}\}$ and ${\cal B}=\mathrm{span}\{\Psi^{\cal B}\}$ (see \eq{full-fwk-states}).
In this section, we will define a negative witness, which is some $\ket{w_{\cal A}},\ket{w_{\cal B}}\in H'$, such that $\ket{\psi_0}=\ket{w_{\cal A}}+\ket{w_{\cal B}}$ and $\ket{w_{\cal A}}$ (resp. $\ket{w_{\cal B}}$) is almost in ${\cal A}$ (resp. ${\cal B}$) (see \defin{neg-witness}). 
We first define, for all $(u,v)\in\overrightarrow{E}(G)$ with $i=f_u^{-1}(v)$, 
\begin{equation}
\begin{split}
\ket{w_{u,v}^{\cal A}} &= \sum_{t\in [{\sf T}_{u,v}-1]:t\text{ odd}}\sum_{z\in {\cal Z}_{u,v}^0}\braket{z}{w_{u,v}^t}\ket{\psi_t^{z}}\in{\cal A}\\
\ket{w_{u,v}^{\cal B}} &= \ket{\psi_{\rightarrow}^{u,i}}+\sum_{t\in [{\sf T}_{u,v}-1]:t\text{ even}}\sum_{z\in {\cal Z}_{u,v}^0}\braket{z}{w_{u,v}^t}\ket{\psi_t^{z}}\in{\cal B},
\end{split}\label{eq:w_uv-AB}
\end{equation}
where $\ket{w_{u,v}^t}$ is defined in \eq{full-fwk-w_uv}.

\begin{lemma}\label{lem:alg-states-collapse}
For all $(u,v)\in\overrightarrow{E}(G)$ with $i=f^{-1}_u(v)$, $\ket{w_{u,v}^{\cal A}}+\ket{w_{u,v}^{\cal B}}=\ket{u,i}\ket{0}-\ket{w_{u,v}^{{\sf T}_{u,v}}}\ket{{\sf T}_{u,v}}$. 
\end{lemma}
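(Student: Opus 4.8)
The plan is a direct computation that reveals a telescoping sum, so there is no real obstacle; I will just organize the bookkeeping carefully.

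First I would add the two vectors defined in \eq{w_uv-AB}. The $\Psi_t^{u,v}$-terms in $\ket{w_{u,v}^{\cal A}}$ range over odd $t\in[{\sf T}_{u,v}-1]$ and those in $\ket{w_{u,v}^{\cal B}}$ over even $t\in[{\sf T}_{u,v}-1]$; since ${\sf T}_{u,v}$ is even, $\{1,\dots,{\sf T}_{u,v}-1\}$ is the disjoint union of its odd and even elements, so
$$\ket{w_{u,v}^{\cal A}}+\ket{w_{u,v}^{\cal B}}=\ket{\psi_{\rightarrow}^{u,i}}+\sum_{t=1}^{{\sf T}_{u,v}-1}\sum_{z\in{\cal Z}_{u,v}^0}\braket{z}{w_{u,v}^t}\ket{\psi_t^z}.$$

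The key observation is that for each $t\in\{1,\dots,{\sf T}_{u,v}-1\}$ the vector $\ket{w_{u,v}^t}=U_{t-1}\cdots U_0\ket{u,i}$ lies in $\mathrm{span}\{\ket{z}:z\in{\cal Z}_{u,v}^0\}$: by item~4 of \defin{variable-time} (applied with $x=(u,i)$, so that ${\cal Z}_{(u,i)}={\cal Z}_{u,v}$) it lies in $\mathrm{span}\{\ket{z}:z\in{\cal Z}_{u,v}\}$, and since fewer than ${\sf T}_{u,v}$ of the $U$'s have been applied the internal flag register is still $0$ (as recalled in \sec{fwk-transitions}), placing it in the flag-$0$ subspace. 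Hence $\sum_{z\in{\cal Z}_{u,v}^0}\braket{z}{w_{u,v}^t}\ket{z}=\ket{w_{u,v}^t}$, and substituting $\ket{\psi_t^z}=\ket{z}\ket{t}-U_t\ket{z}\ket{t+1}$ together with $U_t\ket{w_{u,v}^t}=\ket{w_{u,v}^{t+1}}$ from \eq{full-fwk-w_uv} gives
$$\sum_{z\in{\cal Z}_{u,v}^0}\braket{z}{w_{u,v}^t}\ket{\psi_t^z}=\ket{w_{u,v}^t}\ket{t}-\ket{w_{u,v}^{t+1}}\ket{t+1}.$$

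Finally, summing over $t$ telescopes to $\ket{w_{u,v}^1}\ket{1}-\ket{w_{u,v}^{{\sf T}_{u,v}}}\ket{{\sf T}_{u,v}}$, while $\ket{\psi_{\rightarrow}^{u,i}}=\ket{u,i}\ket{0}-U_0\ket{u,i}\ket{1}=\ket{w_{u,v}^0}\ket{0}-\ket{w_{u,v}^1}\ket{1}$ since $\ket{w_{u,v}^0}=\ket{u,i}$; adding the two cancels the $\ket{w_{u,v}^1}\ket{1}$ terms and yields $\ket{u,i}\ket{0}-\ket{w_{u,v}^{{\sf T}_{u,v}}}\ket{{\sf T}_{u,v}}$, as claimed. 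The only point deserving care is the flag-$0$ support claim in the middle step, which is precisely what lets us replace the partial sum $\sum_{z\in{\cal Z}_{u,v}^0}\braket{z}{w_{u,v}^t}\ket{z}$ by $\ket{w_{u,v}^t}$ itself rather than a proper projection of it; everything else is routine.
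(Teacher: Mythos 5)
Your proof is correct and follows essentially the same route as the paper's: combine the odd and even sums, use the fact that $\ket{w_{u,v}^t}$ is supported on ${\cal Z}_{u,v}^0$ for $t<{\sf T}_{u,v}$ to collapse the inner $z$-sum, and telescope. The paper carries out exactly this computation; your extra emphasis on why $\ket{w_{u,v}^t}$ sits in the flag-$0$ subspace is a welcome point of care but not a departure.
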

\begin{proof}
Below we use the fact that for $t<{\sf T}_{u,v}$, $\ket{w_{u,v}^t}\in\mathrm{span}\{\ket{z}:z\in{\cal Z}_{u,v}^0\}$ (see \sec{fwk-transitions}), and that $\ket{w_{u,v}^0}=\ket{u,i}$ (see \eq{full-fwk-w_uv}). 
\begin{align*}
\ket{w_{u,v}^{\cal A}}+\ket{w_{u,v}^{\cal B}} &= \ket{\psi_{\rightarrow}^{u,i}}+
\sum_{t\in [{\sf T}_{u,v}-1]}\sum_{z\in {\cal Z}_{u,v}^0}\braket{z}{w_{u,v}^t}\ket{\psi_t^{z}}\\
&= \ket{\psi_{\rightarrow}^{u,i}}+\sum_{t=1}^{{\sf T}_{u,v}-1}\sum_{z\in{\cal Z}_{u,v}^0}\braket{z}{w_{u,v}^{t}}\ket{z,t}-\sum_{t=1}^{{\sf T}_{u,v}-1}\sum_{z\in{\cal Z}_{u,v}^0}\braket{z}{w_{u,v}^{t}}U_t\ket{z}\ket{t+1} & \mbox{see \eq{psi-z-t}}\\
&= \ket{\psi_{\rightarrow}^{u,i}}+\sum_{t=1}^{{\sf T}_{u,v}-1}\ket{w_{u,v}^{t}}\ket{t}-\sum_{t=1}^{{\sf T}_{u,v}-1}U_t\ket{w_{u,v}^{t}}\ket{t+1} \\
&=\ket{\psi_{\rightarrow}^{u,i}}+\sum_{t=1}^{{\sf T}_{u,v}-1}\ket{w_{u,v}^{t}}\ket{t}-\sum_{t=1}^{{\sf T}_{u,v}-1}\ket{w_{u,v}^{t+1}}\ket{t+1} & \mbox{see \eq{full-fwk-w_uv}}\\
&= \ket{u,i}\ket{0}-U_0\ket{u,i}\ket{1}+\ket{w_{u,v}^1}\ket{1}
-\ket{w_{u,v}^{{\sf T}_{u,v}}}\ket{{\sf T}_{u,v}} & \mbox{see \eq{psi-ui}}\\
&= \ket{u,i}\ket{0}-\ket{w_{u,v}^{{\sf T}_{u,v}}}\ket{{\sf T}_{u,v}},
\end{align*}
since $\ket{w_{u,v}^1}=U_0\ket{u,i}$ (see \eq{full-fwk-w_uv}).
\end{proof}

For $v\in V(G)$ and $j\in L^-(v)$, with $u=f_v(j)$, define
\begin{equation}
\ket{\tilde\psi_{\leftarrow}^{v,j}}:=\ket{w_{u,v}^{{\sf T}_{u,v}}}\ket{{\sf T}_{u,v}}-\ket{v,j}\ket{0}.\label{eq:tilde-psi}
\end{equation}
We now define our negative witness:
\begin{align*}
\ket{w_{\cal A}} &= \frac{1}{\sqrt{\w_0}}\sum_{u\in V(G)}\ket{\psi_{\star}^{G'}(u)}\ket{0}-\frac{1}{\sqrt{\w_0}}\sum_{(u,v)\in\overrightarrow{E}(G)}\sqrt{\w_{u,v}}\ket{w_{u,v}^{\cal A}}    \in{\cal A}\\
\ket{w_{\cal B}} &= -\frac{1}{\sqrt{\w_0}}\sum_{(u,v)\in\overrightarrow{E}(G)}\sqrt{\w_{u,v}}\left(\ket{w_{u,v}^{\cal B}}+\ket{\tilde\psi_{\leftarrow}^{v,f_v^{-1}(u)}}\right).
\end{align*}
We first show that this is indeed a negative witness in \lem{full-fwk-negative} and then analyse its error and complexity in \lem{full-fwk-negative-error}.
\begin{lemma}\label{lem:full-fwk-negative}
Let $\ket{\psi_0}$ be as in \eq{fwk-psi-init}. Then if $M=\emptyset$, $\ket{w_{\cal A}}+\ket{w_{\cal B}}=\ket{\psi_0}$.
\end{lemma}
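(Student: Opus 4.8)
## Proof Proposal for Lemma \ref{lem:full-fwk-negative}

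The plan is to compute $\ket{w_{\cal A}} + \ket{w_{\cal B}}$ directly by grouping terms, using the "sum of all star states telescopes to the initial state" idea from the electric network sketch (\sec{sketch}), combined with the collapse identity for algorithm states established in \lem{alg-states-collapse}. The key structural fact to exploit is that $\ket{w_{\cal A}}$ and $\ket{w_{\cal B}}$ are each built from two pieces: a sum of (modified) star states scaled by $1/\sqrt{\w_0}$, and a sum over edges of the algorithm-state gadgets $\ket{w_{u,v}^{\cal A}}$, $\ket{w_{u,v}^{\cal B}}$, and $\ket{\tilde\psi_{\leftarrow}^{v,j}}$.

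First I would add $\ket{w_{\cal A}}$ and $\ket{w_{\cal B}}$ and split the result into two groups. The first group is $\frac{1}{\sqrt{\w_0}}\sum_{u\in V(G)}\ket{\psi_\star^{G'}(u)}\ket{0}$. Using the $M=\emptyset$ hypothesis, the only edges of $G'$ not in $G$ are the edges $(u,v_0)$ for $u \in V_0$ (there are no edges to $M$ since $M=\emptyset$). Each internal edge $(u,v)\in\overrightarrow{E}(G)$ with $i = f_u^{-1}(v)$, $j = f_v^{-1}(u)$ contributes $\sqrt{\w_{u,v}}\ket{u,i}\ket{0}$ from $\ket{\psi_\star^{G'}(u)}\ket{0}$ and $-\sqrt{\w_{u,v}}\ket{v,j}\ket{0}$ from $\ket{\psi_\star^{G'}(v)}\ket{0}$ (using the sign convention from \eq{psi-star-F-1} and the definition of $\Gamma^{\pm}$). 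Meanwhile the $v_0$-edge from each $u \in V_0$ contributes $\sqrt{\w_0\sigma(u)}\ket{u,0}\ket{0}$ and appears only in $\ket{\psi_\star^{G'}(u)}\ket{0}$. Hence $\frac{1}{\sqrt{\w_0}}\sum_{u}\ket{\psi_\star^{G'}(u)}\ket{0} = \frac{1}{\sqrt{\w_0}}\sum_{(u,v)\in\overrightarrow{E}(G)}\sqrt{\w_{u,v}}\bigl(\ket{u,i}\ket{0} - \ket{v,j}\ket{0}\bigr) + \sum_{u\in V_0}\sqrt{\sigma(u)}\ket{u,0}\ket{0}$, where the last term is exactly $\ket{\psi_0}$ by \eq{fwk-psi-init}.

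The second group is $-\frac{1}{\sqrt{\w_0}}\sum_{(u,v)\in\overrightarrow{E}(G)}\sqrt{\w_{u,v}}\bigl(\ket{w_{u,v}^{\cal A}} + \ket{w_{u,v}^{\cal B}} + \ket{\tilde\psi_{\leftarrow}^{v,f_v^{-1}(u)}}\bigr)$. By \lem{alg-states-collapse}, $\ket{w_{u,v}^{\cal A}} + \ket{w_{u,v}^{\cal B}} = \ket{u,i}\ket{0} - \ket{w_{u,v}^{{\sf T}_{u,v}}}\ket{{\sf T}_{u,v}}$, and by \eq{tilde-psi}, $\ket{\tilde\psi_{\leftarrow}^{v,j}} = \ket{w_{u,v}^{{\sf T}_{u,v}}}\ket{{\sf T}_{u,v}} - \ket{v,j}\ket{0}$. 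Adding these three, the $\ket{w_{u,v}^{{\sf T}_{u,v}}}\ket{{\sf T}_{u,v}}$ terms cancel, leaving exactly $\ket{u,i}\ket{0} - \ket{v,j}\ket{0}$. Therefore the second group equals $-\frac{1}{\sqrt{\w_0}}\sum_{(u,v)\in\overrightarrow{E}(G)}\sqrt{\w_{u,v}}\bigl(\ket{u,i}\ket{0} - \ket{v,j}\ket{0}\bigr)$, which precisely cancels the first summand of the first group. What survives is $\ket{\psi_0}$, as desired.

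The main obstacle, such as it is, is purely bookkeeping: being careful that the index conventions ($i = f_u^{-1}(v)$ vs. $j = f_v^{-1}(u)$), the sign conventions in the star states (the $L^+$ vs. $L^-$ split, and the identification $\ket{v,u} = -\ket{u,v}$), and the appended $\ket{0}$ registers all line up so that the "each internal edge is counted once with $+$ and once with $-$" cancellation is exact — in particular that the star-state piece genuinely produces $\ket{u,i}\ket{0} - \ket{v,j}\ket{0}$ with the correct relative sign to match what comes out of \lem{alg-states-collapse}. No inequality or approximation is needed here; this is an exact identity, and the only subtlety is that we must invoke $M = \emptyset$ to know that the $\w_{\sf M}$-weighted edges to $v_0$ do not appear.
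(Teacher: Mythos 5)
Your proposal is correct and follows essentially the same route as the paper's proof: split $\ket{w_{\cal A}}+\ket{w_{\cal B}}$ into the star-state sum and the edge-gadget sum, use \lem{alg-states-collapse} together with \eq{tilde-psi} to collapse each edge gadget to $\ket{u,i}\ket{0}-\ket{v,j}\ket{0}$, and then observe that the star-state sum telescopes to exactly the same edge terms plus the $v_0$-edges, which after cancellation leave $\ket{\psi_0}$. The only cosmetic difference is that the paper factors out $\sqrt{\w_0}$ at the start rather than carrying $1/\sqrt{\w_0}$ through; the substance and the invocation of $M=\emptyset$ are identical.
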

\begin{proof}
We have:
\begin{align}
\sqrt{\w_0}\left(\ket{w_{\cal A}}+\ket{w_{\cal B}}\right) &= \sum_{u\in V(G)}\ket{\psi_{\star}^{G'}(u)}\ket{0}-\sum_{(u,v)\in\overrightarrow{E}(G)}\sqrt{\w_{u,v}}(\ket{w_{u,v}^{\cal A}}+\ket{w_{u,v}^{\cal B}}+\ket{\tilde\psi_{\leftarrow}^{v,f_v^{-1}(u)}}).\label{eq:wA-wB}
\end{align}
Letting $i=f_u^{-1}(v)$ and $j=f_v^{-1}(u)$, we have:
\begin{equation}
\begin{aligned}
\ket{w_{u,v}^{\cal A}}+\ket{w_{u,v}^{\cal B}}+\ket{\tilde\psi_{\leftarrow}^{v,j}}
&=\ket{u,i}\ket{0}-\ket{w_{u,v}^{{\sf T}_{u,v}}}\ket{{\sf T}_{u,v}}+\ket{\tilde\psi_{\leftarrow}^{v,j}}    & \mbox{by \lem{alg-states-collapse}}\\
 &= \ket{u,i}\ket{0}-\ket{v,j}\ket{0} & \mbox{by \eq{tilde-psi}}.
\end{aligned}\label{eq:wA-wB-1}
\end{equation}
Next we recall from \eq{psi-star-F-1} and \eq{psi-star-F-M} that for $u\in V(G)\setminus M=V(G)$ (since $M=\emptyset$), we have, letting $\delta_{u,V_0}=1$ iff $u\in V_0$:
\begin{equation}
\begin{split}
\ket{\psi_{\star}^{G'}(u)}\ket{0} 
&=\sum_{i\in L^+(u)}\sqrt{\w_{u,i}}\ket{u,i}\ket{0} - \sum_{j\in L^-(u)}\sqrt{\w_{u,j}}\ket{u,j}\ket{0}+\delta_{u,V_0}\sqrt{\w_0\sigma(u)}\ket{u,0}\ket{0}\\
\sum_{u\in V(G)}\ket{\psi_{\star}^{G'}(u)}\ket{0} 
&= \sum_{\substack{(u,v)\\ \in\overrightarrow{E}(G)}}\left(\sqrt{\w_{u,v}}\ket{u,f_u^{-1}(v)}-\sqrt{\w_{u,v}}\ket{v,f_v^{-1}(u)}\right)\ket{0}+\sum_{u\in V_0}\sqrt{\w_0\sigma(u)}\ket{u,0}\ket{0}.
\end{split}\label{eq:wA-wB-2}
\end{equation}
Plugging \eq{wA-wB-1} and \eq{wA-wB-2} back into \eq{wA-wB}, we get:
\begin{align*}
\sqrt{\w_0}\left(\ket{w_{\cal A}}+\ket{w_{\cal B}}\right) &= \sum_{u\in V_0}\sqrt{\w_0\sigma(u)}\ket{u,0}\ket{0}=\sqrt{\w_0}\ket{\psi_0}.\qedhere
\end{align*}
\end{proof}

\begin{lemma}\label{lem:full-fwk-negative-error}
Let $\w_0=1/{\cal R}^{\sf T}$, and $\delta'=\epsilon {\cal R}^{\sf T}{\cal W}+4{\cal R}^{\sf T}\widetilde{\cal W}$. Then $\ket{w_{\cal A}},\ket{w_{\cal B}}$ is a $\delta'$-negative witness (see \defin{neg-witness}), and 
$$\norm{\ket{w_{\cal A}}}^2\leq 2{\cal R}^{\sf T}{\cal W}^{\sf T}+1.$$
\end{lemma}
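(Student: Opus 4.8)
The statement has two parts: that $\ket{w_{\cal A}},\ket{w_{\cal B}}$ is a $\delta'$-negative witness with the claimed $\delta'$, and the bound $\norm{\ket{w_{\cal A}}}^2 \le 2{\cal R}^{\sf T}{\cal W}^{\sf T}+1$. The defining property $\ket{\psi_0}=\ket{w_{\cal A}}+\ket{w_{\cal B}}$ is already established in \lem{full-fwk-negative} (using $M=\emptyset$), so what remains is to show $\norm{(I-\Pi_{\cal A})\ket{w_{\cal A}}}^2\le\delta'$ and $\norm{(I-\Pi_{\cal B})\ket{w_{\cal B}}}^2\le\delta'$, and to bound $\norm{\ket{w_{\cal A}}}^2$.

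\textbf{The $\ket{w_{\cal A}}$ side.} Here I would first observe that $\ket{w_{\cal A}}$ is \emph{exactly} in ${\cal A}$: it is a sum of the states $\ket{\psi_\star^{G'}(u)}\ket{0}\in\Psi_\star'\subset\Psi^{\cal A}$ (recalling that when $M=\emptyset$, for every $u$ the state $\ket{\psi_\star^G(u)}$ — which is what $\ket{\psi_\star^{G'}(u)}$ equals for $u\notin V_0$ — lies in $\Psi_\star(u)$ by \defin{alternative}), plus the states $\sqrt{\w_{u,v}}\ket{w_{u,v}^{\cal A}}$, and by \eq{w_uv-AB} each $\ket{w_{u,v}^{\cal A}}$ is a linear combination of the $\ket{\psi_t^z}$ for odd $t$, all of which are in $\Psi^{\cal A}$. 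Hence $\norm{(I-\Pi_{\cal A})\ket{w_{\cal A}}}^2=0$, which is certainly $\le\delta'$. For the norm bound, I would use that the three families of vectors appearing in $\ket{w_{\cal A}}$ — the star states, and within each edge the odd-$t$ algorithm states — are essentially orthogonal across different edges and different $u$ (this is the orthogonality structure from \sec{fwk-unitary} / \fig{spaces-graph}), so $\norm{\ket{w_{\cal A}}}^2$ splits as a sum over $u\in V(G)$ of $\frac1{\w_0}\norm{\ket{\psi_\star^{G'}(u)}}^2$ plus cross terms, plus a sum over edges of $\frac{\w_{u,v}}{\w_0}\norm{\ket{w_{u,v}^{\cal A}}}^2$. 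Using $\sum_u\norm{\ket{\psi_\star^{G'}(u)}}^2 = {\cal W}(G') \approx {\cal W}(G)\le{\cal W}$ and $\norm{\ket{w_{u,v}^{\cal A}}}^2\le {\sf T}_{u,v}$ (it is a sub-sum of a unit history state truncated to odd times), one gets $\norm{\ket{w_{\cal A}}}^2 \lesssim \frac1{\w_0}({\cal W}+\sum_{(u,v)}\w_{u,v}{\sf T}_{u,v}) = {\cal R}^{\sf T}({\cal W}+{\cal W}^{\sf T})\le 2{\cal R}^{\sf T}{\cal W}^{\sf T}$; the extra $+1$ accounts for boundary/normalization slack. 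I'd want to be careful with the exact constant here, tracking the $\ket{u,0}\ket{0}$ contribution of $v_0$-edges separately.

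\textbf{The $\ket{w_{\cal B}}$ side — the main obstacle.} Unlike $\ket{w_{\cal A}}$, the vector $\ket{w_{\cal B}}$ is \emph{not} exactly in ${\cal B}$, because of the terms $\ket{\tilde\psi_\leftarrow^{v,j}} = \ket{w_{u,v}^{{\sf T}_{u,v}}}\ket{{\sf T}_{u,v}} - \ket{v,j}\ket{0}$, which differ from the genuine basis states $\ket{\psi_\leftarrow^{v,j}} = \ket{v,j}\ket{{\sf T}_{u,v}} - \ket{v,j}\ket{0}\in\Psi^{\cal B}$ by the error vector $\ket{w_{u,v}^{{\sf T}_{u,v}}}\ket{{\sf T}_{u,v}} - \ket{v,j}\ket{{\sf T}_{u,v}}$, whose norm-squared is exactly $\epsilon_{u,v}$ by \eq{fwk-alg-error}. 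So I would write $\ket{w_{\cal B}}$ as an exact element of ${\cal B}$ (replacing each $\ket{\tilde\psi_\leftarrow^{v,j}}$ by $\ket{\psi_\leftarrow^{v,j}}$ in the sum, which lands in ${\cal B}$ since $\ket{w_{u,v}^{\cal B}}\in{\cal B}$ too) plus the total error $-\frac1{\sqrt{\w_0}}\sum_{(u,v)}\sqrt{\w_{u,v}}\big(\ket{w_{u,v}^{{\sf T}_{u,v}}}-\ket{v,j}\big)\ket{{\sf T}_{u,v}}$. Since the $\ket{{\sf T}_{u,v}}$-registers for distinct edges are supported on distinct ${\cal Z}_{u,v}^1$ subspaces, these error terms are mutually orthogonal, so $\norm{(I-\Pi_{\cal B})\ket{w_{\cal B}}}^2 \le \frac1{\w_0}\sum_{(u,v)}\w_{u,v}\,\epsilon_{u,v}$. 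Splitting this sum over $\overrightarrow{E}(G)\setminus\tilde E$ (where $\epsilon_{u,v}\le\epsilon$, contributing $\le \frac{\epsilon}{\w_0}{\cal W}(G) \le \epsilon{\cal R}^{\sf T}{\cal W}$) and over $\tilde E$ (where $\epsilon_{u,v}\le 4$ trivially, contributing $\le\frac{4}{\w_0}\widetilde{\cal W} = 4{\cal R}^{\sf T}\widetilde{\cal W}$) gives exactly $\delta' = \epsilon{\cal R}^{\sf T}{\cal W} + 4{\cal R}^{\sf T}\widetilde{\cal W}$. The subtle points I expect to have to nail down are: (i) verifying that after the substitution the "main part" of $\ket{w_{\cal B}}$ genuinely lies in ${\cal B}$ (it does, since $\Psi_\rightarrow$, the even-$t$ $\Psi_t^{u,v}$, and $\Psi_\leftarrow$ are all in $\Psi^{\cal B}$), and (ii) the orthogonality of the error terms across edges, which relies on the internal-flag / non-confusion structure of \defin{variable-time} ensuring the ${\cal Z}_{u,v}^1$ are disjoint. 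With $\norm{(I-\Pi_{\cal A})\ket{w_{\cal A}}}^2=0\le\delta'$ and $\norm{(I-\Pi_{\cal B})\ket{w_{\cal B}}}^2\le\delta'$, together with \lem{full-fwk-negative}, $(\ket{w_{\cal A}},\ket{w_{\cal B}})$ is a $\delta'$-negative witness, completing the proof.
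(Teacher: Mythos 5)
Your proposal is correct and follows essentially the same route as the paper's proof: the key observations — that $\ket{w_{\cal A}}\in{\cal A}$ exactly, that the only off-${\cal B}$ part of $\ket{w_{\cal B}}$ comes from the difference $\ket{\tilde\psi_\leftarrow^{v,j}}-\ket{\psi_\leftarrow^{v,j}}=\bigl(\ket{w_{u,v}^{{\sf T}_{u,v}}}-\ket{v,j}\bigr)\ket{{\sf T}_{u,v}}$ with norm-squared $\epsilon_{u,v}$, the orthogonality of these error vectors across edges via the ${\cal Z}_{u,v}$ structure, and the term-by-term bound $\norm{\ket{w_{u,v}^{\cal A}}}^2\le{\sf T}_{u,v}$ — all match. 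The one place you hedged (the "cross terms" in $\norm{\ket{w_{\cal A}}}^2$) in fact vanishes exactly, since the star states live on the $\ket{0}$ time slice while the odd-$t$ $\ket{\psi_t^z}$ states live on $t\geq 1$, and the $+1$ indeed comes from $\sum_{u\in V_0}\w_0\sigma(u)=\w_0$, as you anticipated.
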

\begin{proof}
By construction, $\ket{w_{\cal A}}\in {\cal A}$, and the only part of $\ket{w_{\cal B}}$ that is not made up of states in $\Psi^{\cal B}$ (which are in ${\cal B}=\mathrm{span}\{\Psi^{\cal B}\}$) are the $\ket{\tilde\psi_{\leftarrow}^{v,j}}$ parts. Since $(I-\Pi_{\cal B})\ket{\psi_{\leftarrow}^{v,j}}=0$ for all $(v,j)$, we have:
\begin{align*}
(I-\Pi_{\cal B})\ket{w_{\cal B}} &= -\frac{1}{\sqrt{\w_0}}\sum_{v\in V(G),j\in L^-(v)}\sqrt{\w_{v,j}}(I-\Pi_{\cal B})\ket{\tilde\psi_{\leftarrow}^{v,j}}\\
&=-\frac{1}{\sqrt{\w_0}}\sum_{v\in V(G),j\in L^-(v)}\sqrt{\w_{v,j}}(I-\Pi_{\cal B})(\ket{\tilde\psi_{\leftarrow}^{v,j}}-\ket{\psi_{\leftarrow}^{vj}})\\
&=-\frac{1}{\sqrt{\w_0}}\sum_{(u,v)\in\overrightarrow{E}(G)}\sqrt{\w_{u,v}}(I-\Pi_{\cal B})(\ket{w_{u,v}^{{\sf T}_{u,v}}}\ket{{\sf T}_{u,v}} - \ket{v,f_v^{-1}(u)}\ket{{\sf T}_{u,v}}) & \mbox{by \eq{tilde-psi} and \eq{psi-vj}}.
\end{align*}
Then by \eq{full-fwk-w_uv} and \eq{fwk-alg-error} we have:
\begin{equation*}
\norm{\ket{w_{u,v}^{{\sf T}_{u,v}}} - \ket{v,f_v^{-1}(u)}}^2 = \epsilon_{u,v}.
\end{equation*}
Furthermore, the terms $\ket{w_{u,v}^{{\sf T}_{u,v}}} - \ket{v,f_v^{-1}(u)}\in\mathrm{span}\{\ket{z}:z\in {\cal Z}_{u,v}\}$ for different $(u,v)\in \overrightarrow{E}(G)$ are pairwise orthogonal. 
Let $\epsilon$ be the upper bound on $\epsilon _{u,v}$ for all $(u,v)\in\overrightarrow{E}(G)\setminus\tilde{E}$, from \thm{full-framework}. Then:
\begin{align*}
\norm{(I-\Pi_{\cal B})\ket{w_{\cal B}}}^2 &\leq\frac{1}{{\w_0}}\sum_{(u,v)\in\overrightarrow{E}(G)}{\w_{u,v}}\epsilon_{u,v}\\
&\leq {\cal R}^{\sf T}\Bigg(\sum_{(u,v)\in\overrightarrow{E}(G)\setminus \tilde{E}}{\w_{u,v}}\epsilon + 4\sum_{(u,v)\in\tilde{E}}\w_{u,v}\Bigg) \leq {\cal R}^{\sf T}\left(\epsilon {\cal W}+4\widetilde{\cal W}\right),
\end{align*}
where we used $\w_0=1/{\cal R}^{\sf T}$, and the trivial upper bound $\epsilon_{u,v}\leq 4$ when $(u,v)\in \tilde{E}$. %, and conditions \textbf{TS1}  and \textbf{TS2} of \thm{full-framework}.
To complete the proof, we give an upper bound on $\norm{\ket{w_{\cal A}}}^2$. We first note:
\begin{align*}
\norm{\ket{w_{\cal A}}}^2 &= \frac{1}{\w_0}\sum_{u\in V(G)}\norm{\ket{\psi_{\star}^{G'}(u)}}^2+\frac{1}{\w_0}\sum_{(u,v)\in\overrightarrow{E}(G)}\w_{u,v}\norm{\ket{w_{u,v}^{\cal A}}}^2.
\end{align*}
Then we can compute, for any $u\in V(G)$, letting $\delta_{u,V_0}=1$ iff $u\in V_0$,
\begin{align*}
\norm{\ket{\psi_{\star}^{G'}(u)}}^2 &= \textstyle{\displaystyle\sum}_{v\in\Gamma(u)}\w_{u,v}+\delta_{u,V_0} \w_0\sigma(u)
\end{align*}
and for any $(u,v)\in\overrightarrow{E}(G)$, since for all $t<{\sf T}_{u,v}$, $\sum_{z\in {\cal Z}_{u,v}^0}|\braket{z}{w_{u,v}^t}|^2=\norm{\ket{w_{u,v}^t}}^2 =1$, 
\begin{align*}
\norm{\ket{w_{u,v}^{\cal A}}}^2 &= \sum_{t\in\{0,\dots,{\sf T}_{u,v}-1\}:t\text{ odd}}\sum_{z\in {\cal Z}_{u,v}^0}|\braket{z}{w_{u,v}^t}|^2\norm{\ket{\psi_t^{z}}}^2 
= \left\lfloor\frac{{\sf T}_{u,v}}{2}\right\rfloor  \cdot 2 \leq {\sf T}_{u,v}.
\end{align*}
Thus
\begin{align*}
\norm{\ket{w_{\cal A}}}^2 &\leq \frac{1}{\w_0}\sum_{u\in V(G)}\sum_{v\in\Gamma(u)}\w_{u,v}+\frac{1}{\w_0}\sum_{u\in V_0} \w_0\sigma(u)+\frac{1}{\w_0}\sum_{(u,v)\in\overrightarrow{E}(G)}\w_{u,v}{\sf T}_{u,v}\\
&= \frac{1}{\w_0}{\cal W}(G)+1+\frac{1}{\w_0}{\cal W}^{\sf T}(G).
\end{align*}
Note that ${\cal W}(G)$ is always less than ${\cal W}^{\sf T}(G)$ (see \defin{nwk-length}). We Thus, complete the proof by substituting $\w_0=1/{{\cal R}^{\sf T}}$ and using the \textbf{Negative Condition} of \thm{full-framework} that ${\cal W}^{\sf T}(G)\leq {\cal W}^{\sf T}$. 
\end{proof}

\subsubsection{Conclusion of Proof of Theorem \ref{thm:full-framework}}

We now give the proof of \thm{full-framework}, by appealing to \thm{lin-alg-fwk}, using $\ket{\psi_0}$ as defined in \eq{fwk-psi-init}, and $\Psi^{\cal A},\Psi^{\cal B}$ as defined in \eq{full-fwk-states}.
By the \textbf{Setup Subroutine} condition of \thm{full-framework}, we can generate $\ket{\sigma}$ in cost ${\sf S}$. It follows that we can generate $\ket{\psi_0}=\ket{\sigma}\ket{0}\ket{0}$ in cost ${\sf S}'={\sf S}+\log{\sf T}_{\max}$, since the last register is $\log {\sf T}_{\max}$ qubits. By \lem{update-cost}, we can implement $U_{\cal AB}$ in cost ${\sf A}_\star+{\sf polylog}({\sf T}_{\max})$. 

We use $c_+=7$, so $1\leq c_+\leq 50$, as desired. We use  
\begin{equation*}
{\cal C}_-=2{\cal R}^{\sf T}{\cal W}^{\sf T}+1,
\quad
\delta = \frac{\epsilon}{2}
\quad\mbox{and}\quad
\delta' = \epsilon {\cal R}^{\sf T}{\cal W}+4{\cal R}^{\sf T}\widetilde{\cal W}.
\end{equation*}

To apply \thm{lin-alg-fwk}, we require that $\delta\leq \frac{1}{(8c_+)^{3}\pi^8{{\cal C}_-}}$, which follows, for sufficiently large $|x|$, from condition \textbf{TS1} of \thm{full-framework}:
\begin{align*}
\delta = \frac{\epsilon}{2}=o\left(\frac{1}{{\cal R}^{\sf T}{\cal W}^{\sf T}}\right).
\end{align*}
We also require that $\delta'\leq \frac{3}{4}\frac{1}{\pi^4c_+}=\frac{3}{28\pi^4}$. The bound on $\epsilon$ implies that $\epsilon {\cal R}^{\sf T}{\cal W} = o(1)$, since ${\cal W}\leq {\cal W}^{\sf T}$. The bound $\widetilde{\cal W}=o(1/{\cal R}^{\sf T})$ from \textbf{TS2} of \thm{full-framework} implies that $4{\cal R}^{\sf T}\widetilde{\cal W}=o(1)$. 
Together these ensure that $\delta'=o(1)$.  
We verify the remaining conditions of \thm{lin-alg-fwk} as follows.
\begin{description}
\item[Positive Condition:] By \lem{full-fwk-pos-1} and \lem{full-fwk-pos-2}, if $M\neq \emptyset$, there is a $\delta$-positive witness $\ket{w}$ such that $\frac{|\braket{w}{\psi_0}|^2}{\norm{\ket{w}}^2}\geq \frac{1}{c_+}=\frac{1}{7}$. 
\item[Negative Condition:] By \lem{full-fwk-negative-error}, if $M=\emptyset$, there is a $\delta'$-negative witness with $\norm{\ket{w_{\cal A}}}^2\leq {\cal C}_-$. 
\end{description}

\noindent Thus, the algorithm described in \thm{lin-alg-fwk} distinguishes between the cases $M\neq \emptyset$ and $M=\emptyset$ with bounded error in complexity:
$$O\left({\sf S}+\log{\sf T}_{\max}+\sqrt{{\cal C}_-}\left({\sf A}_\star+{\sf polylog}({\sf T}_{\max})\right)\right)=O\left({\sf S}+\sqrt{{\cal R}^{\sf T}{\cal W}^{\sf T}}\left({\sf A}_\star+{\sf polylog}({\sf T}_{\max})\right)\right)$$
which completes the proof of \thm{full-framework}.

\section{Welded Trees}\label{sec:welded}

A straightforward application of our technique is to the welded trees problem of \cite{childs2003ExpSpeedupQW}, illustrating the power of the framework to achieve exponential speedups over classical algorithms. This application also serves as a pedagogic demonstration of the alternative neighbourhoods technique, as it does not make use of a non-trivial edge transition subroutine, and so the resulting algorithm is in that sense rather simple. Although it would be possible to apply our framework without looking ``under the hood'' at the underlying algorithm, to give intuition about the framework, we instead describe and analyse the full algorithm explicitly, proving our upper bound without appealing to \thm{full-framework}. 

\paragraph{The Welded Trees Problem:} In the welded trees problem, the input is a graph $G$ with $2^{n+2}-2$ vertices from the set $\{0,1\}^{2n}$, consisting of two full binary trees of depth $n$ (the $2^n$ leaves are at edge-distance $n$ from the root), which we will refer to as the left and right trees, with additional edges connecting the leaves of one tree to another. Specifically, we assume there are two disjoint perfect matchings from the leaves of the left tree to the leaves of the right tree. Every vertex of this graph has degree 3 except for the roots of the two trees, which we denote by $s$ and $t$. The graph's structure is shown in \fig{welded-trees-weights}.

We are promised that $s=0^{2n}$ is the root of the left tree, but other than $s$, it is difficult to even find a vertex in the graph, since less than a $2^{-n+2}$ fraction of strings in $\{0,1\}^{2n}$ labels an actual vertex. We assume we have access to an oracle $O_G$ that tells us the neighbours of any vertex.
That is, for any string $\sigma\in\{0,1\}^{2n}$, we can query $O_G(\sigma)$ to learn either $\bot$, indicating it is not a vertex label, or a list of three neighbours (or in case of $s$ and $t$, only two neighbours). 

The welded trees problem is: given such an oracle $O_G$, output the label of $t$. We assume we can identify $t$ when we see it, for example by querying it to see that it only has two neighbours. Classically, this problem requires $2^{\Omega(n)}$ queries~\cite{childs2003ExpSpeedupQW}, which is intuitively because the problem is set up to ensure that the only thing a classical algorithm can do is a random walk on $G$, starting from $s$. The hitting time from $s$ to $t$ is $2^{\Omega(n)}$ because a walker is always twice as likely to move towards the centre of the graph than away from it, and so a walker starting at $s$ will quickly end up in the centre of the graph, but then will be stuck there for a long time. 
On the other hand a quantum algorithm can solve this problem in poly$(n)$ time~\cite{childs2003ExpSpeedupQW}, with the best known upper bound being $O(n^{1.5}\log n)$ queries~\cite{atia2021welded}. We show how to solve this problem in our new framework, with $O(n)$ queries and ${O}(n^2)$ time. Specifically, in the remainder of this section we show:

\begin{theorem}\label{thm:welded}
Let $g:\{0,1\}^{2n}\rightarrow\{0,1\}$ be any function. Then there is a quantum algorithm that, given an oracle $O_G$ for a welded trees graph $G$ as above, decides if $g(t)=1$ with bounded error in $O(n)$ queries to $O_G$. If $g$ can be computed in $O(n)$ complexity, then the time complexity of this algorithm is ${O}(n^2)$. 
\end{theorem}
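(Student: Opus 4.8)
The plan is to apply \thm{full-framework} to a carefully weighted version of the welded trees graph $G$, using the alternative neighbourhoods technique to handle the fact that we cannot tell which of a vertex's three neighbours is its parent. First I would construct the network: keep the vertex set $V(G)$, set $V_0=\{s\}$ with $\sigma$ supported on $s$, and put the weights so that the ``ideal'' walk is balanced — an edge from a vertex to its parent gets weight $\w$, and the two edges to its children get weight $\w/2$, chosen so that along any root-to-leaf path the weights are $1, 1/2, 1/4, \dots$ (geometrically decreasing by a factor $2$ per level, and symmetric about the weld). With this weighting the total weight ${\cal W}(G)$ is $O(n)$ rather than exponential, since each level contributes $O(1)$, and the effective resistance of the natural unit $st$-flow (pushing $1/2^k$ along each of the $2^k$ edges at level $k$) is also $O(n)$. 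So ${\cal R}^{\sf T}{\cal W}^{\sf T}=O(n^2)$, all transition costs ${\sf T}_{u,v}$ are $O(1)$ (one query plus $O(\log)$ bookkeeping), ${\sf A}_\star = O(n)$ to read the oracle and build a star state, and the Setup cost ${\sf S}$ is trivial since $\sigma$ is a point mass. The $\widetilde{O}$ becomes $O$ because there is no variable-time structure to exploit.

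The key step is the \textbf{Star State Generation / Alternative Neighbourhoods} part. Given a non-root vertex $u$ with queried neighbours $v_1,v_2,v_3$ (in the order returned by $O_G$, which carries no information about which is the parent), the correct star state is proportional to one of the three states obtained by assigning weight $\w$ to one of the $v_i$ and $\w/2$ to the other two; I would put all three (suitably sign-adjusted for edge orientation, and with the extra $\ket{0}$-labelled slot as in \defin{alternative}) into $\Psi_\star(u)$. The point of \defin{alternative} is that the span of these three states has an easily preparable orthonormal basis $\overline\Psi(u)$: all three states are explicit fixed linear combinations of $\ket{u,v_1},\ket{u,v_2},\ket{u,v_3}$ with known coefficients, so I can hard-code a $3\times 3$ (or slightly larger, accounting for the $0$-slot) unitary $U_\star$ that maps the standard basis onto an orthonormal basis of this span, after a single query to fetch $v_1,v_2,v_3$; hence ${\sf A}_\star=O(n)$ (dominated by reading/writing $2n$-bit strings). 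For $s$ (and any marked-set bookkeeping) the star state is the unique one.

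Next I would verify the \textbf{Positive Condition}. Take $\theta$ to be the canonical unit $st$-flow: $\theta$ sends $1$ out of $s$, splits evenly at each branching, so an edge at level $k$ (counting from $s$) carries $\pm 2^{-k}$, and these recombine at $t$. Condition \textbf{P1} is vacuous ($\tilde E=\emptyset$, $\epsilon=0$, since every transition is exact). \textbf{P3} and \textbf{P4} hold because $\theta(s)=1=\sigma(s)$. \textbf{P5}: ${\cal E}^{\sf T}(\theta)=\sum_e \theta(e)^2/\w_e = O(n)$ because at each of the $\Theta(n)$ levels the contribution is $\sum_{\text{edges at level }k}(2^{-k})^2/\w = O(1)$ once the weights are chosen geometrically — so set ${\cal R}^{\sf T}=O(n)$. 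The delicate point is \textbf{P2}: for every vertex $u$ and \emph{every} $\ket{\psi_\star(u)}\in\Psi_\star(u)$, the weighted divergence of $\theta$ against that state must vanish, not just against the true star state. Here I would use the structure of the flow: at an internal vertex $u$ at level $k$, the parent edge carries flow $2^{-(k-1)}$ (into $u$ from above, since $\theta$ flows toward the center then out — actually one must be careful about orientation relative to the weld, I'd track this explicitly) and each child edge carries $2^{-k}$ (out of $u$). Because the weight assigned to the parent in each candidate state is exactly twice the child weight, the ratio $\theta(u,v)/\sqrt{\w_{u,v}}$ comes out \emph{equal in magnitude on all three edges}, so the alternating sum in \textbf{P2} telescopes to zero regardless of which neighbour is designated the parent. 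This is exactly the ``one correct weighting, but the flow doesn't care which'' phenomenon, and checking it is the main obstacle — it requires getting the weight normalization and the edge-orientation signs exactly right so that $|\theta(u,v)|/\sqrt{\w_{u,v}}$ is genuinely constant across $\Gamma(u)$ and the signs produce a cancellation.

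Finally, the \textbf{Negative Condition} is the easy direction: if the marked set is empty we just need ${\cal W}(G^{\sf T})={\cal W}(G)=O(n)=:{\cal W}^{\sf T}$, which holds by the geometric weighting. Plugging into \thm{full-framework} gives a bounded-error algorithm deciding $M=\emptyset$ in cost $O({\sf S}+\sqrt{{\cal R}^{\sf T}{\cal W}^{\sf T}}({\sf A}_\star+{\sf polylog}))=O(\sqrt{n^2}\cdot n)=O(n^2)$ time and $O(n)$ queries. To go from detection to computing $g(t)$: for a single output bit, take $M=\{t\}$ if that bit of $t$ is $1$ (implemented via the Checking Subroutine, which queries $t$'s degree and evaluates the relevant bit of $g$); detection of $M\neq\emptyset$ then reveals the bit. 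To recover $g(t)$ when $g$ is an $O(n)$-computable function, I would note that the framework actually lets us place $g$ inside the check: mark $u$ iff $u$ has degree $2$ and $g(u)=1$. If instead $g$ is arbitrary and we want all of $t$, compose with Bernstein–Vazirani as sketched in the introduction — run the detection routine to extract $\langle a,t\rangle$ for each needed linear function, i.e. mark $u$ iff $\deg(u)=2$ and $\langle a,u\rangle=1$ — recovering the $2n$ bits of $t$ with $O(n)$ repetitions, each costing $O(n^2)$ time and $O(n)$ queries, then evaluate $g(t)$. Since we only claimed $O(n)$ queries and $O(n^2)$ time for deciding $g(t)=1$ for a fixed $g$, the single-invocation version where $g$ is folded into the checking subroutine suffices, and the cost bound follows directly. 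The only subtlety to double-check is that the Checking Subroutine cost ($O(n)$ to query a candidate and evaluate $g$) fits the ${\sf A}_\star$ budget, which it does by the footnote in \thm{full-framework} allowing higher checking cost to be absorbed into a transition edge.
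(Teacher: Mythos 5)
There is a concrete gap in your choice of weights, and it is fatal: the positive-witness constraint \textbf{P2} cannot be satisfied at all with the weighting you propose.

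You set the parent edge of a vertex $u$ to have weight $\w$ and each child edge to have weight $\w/2$, so weights decrease by a factor of $2$ per level and alternative neighbourhoods are needed at \emph{every} non-root vertex (parent and child weights always differ). With the natural flow $\theta(e) = 1/|E_k| = 2^{-k}$ on level-$k$ edges, the quantities you claim are equal are in fact off by $\sqrt{2}$: the parent edge gives $|\theta|/\sqrt{\w} = 2^{-k}/\sqrt{\w}$ while each child edge gives $2^{-(k+1)}/\sqrt{\w/2} = 2^{-k-1/2}/\sqrt{\w}$. No rescaling of $\theta$ fixes this, because flow conservation forces the two child flows to sum to the parent flow, which is incompatible with all three ratios being equal when the weight ratio is $2{:}1$. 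Worse: the three candidate star states in your construction are proportional to $\ket{u,v_j} - \tfrac{1}{\sqrt 2}\ket{u,v_{j+1}} - \tfrac{1}{\sqrt 2}\ket{u,v_{j+2}}$, and the $3\times 3$ matrix with these rows is nonsingular (eigenvalues $1-\sqrt 2$ and $1+\tfrac{1}{\sqrt 2}$, both nonzero), so they span the \emph{entire} $3$-dimensional local space. Orthogonality to $\Psi_\star(u)$ then forces the flow through $u$ to vanish — at every internal vertex — which is impossible for any unit $st$-flow. So with your weights, condition \textbf{P2} of \thm{full-framework} has no solution, independently of how clever you are with signs.

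The fix is the staircase weighting the paper uses (\eq{welded-w_k}): $\w_{2\ell-1} = \w_{2\ell} = 2^{-2\ell}$, i.e.~weights are constant over two consecutive layers and drop by a factor of $4$, not $2$. Then at vertices in $V_{\mathrm{even}}$ the parent-to-child weight ratio is $4{:}1$; since $\theta_{\mathrm{parent}}^2/\theta_{\mathrm{child}}^2 = 4$, the ratios $\theta/\sqrt{\w}$ coincide, and the candidate states become $\propto \ket{u,v_j}-\tfrac12\ket{u,v_{j+1}}-\tfrac12\ket{u,v_{j+2}}$, which span only a $2$-dimensional subspace whose orthogonal complement is the uniform superposition (\clm{welded-F-basis}). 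At vertices in $V_{\mathrm{odd}}$ the parent and child weights are equal, so the true star state is the uniform superposition and can be generated \emph{exactly} with no alternatives; there, orthogonality is just flow conservation. This dichotomy is exactly \rem{flow-duality}: using a factor-of-$4$-per-level or a factor-of-$1$-per-level uniformly makes one of the two witness sizes exponential, and $2$-per-level has no valid flow at all; only the alternating $4/1$ staircase gives both witness sizes $O(n)$ and a consistent set of \textbf{P2} constraints. You also need the non-obvious edge orientation (\eq{welded-EG}), which flips direction at every $V_{\mathrm{odd}}$ layer, to make the signs in \clm{welded-pos} work out; your sketch flags this but does not resolve it. The rest of your outline — $\sigma$ a point mass at $s$, $O(1)$ transition cost, checking via degree$+g$, the final cost accounting $O(\sqrt{n^2}\cdot n) = O(n^2)$ time and $O(n)$ queries — matches the paper once the weights and orientations are repaired.
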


From this it immediately follows that we can solve the welded trees problem with $2n$ applications of this algorithm, letting $g(t)=t_i$ -- the $i$-th bit of $t$ -- for $i=1,\dots,2n$. However, we can also do slightly better by composing it with the Bernstein-Vazirani algorithm, which recovers a string $t$ in a single quantum query to an oracle that computes $\ket{z}\mapsto (-1)^{z\cdot t}\ket{z}$ for any string $z\in\{0,1\}^{2n}$.
\begin{corollary}
There is a quantum algorithm that can solve the welded trees problem in $O(n)$ queries and ${O}(n^2)$ time. 
\end{corollary}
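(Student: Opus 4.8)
The plan is to instantiate the Multidimensional Quantum Walk Framework (\thm{full-framework}) on a graph $G'$ that is essentially the welded trees graph $G$ with reweighted edges, so that a random walk is ``balanced'' at every internal vertex, together with a carefully chosen set of alternative neighbourhoods. Concretely, I would first fix the natural notion of ``level'': every vertex of $G$ lies at some edge-distance from the nearer root, so every edge is either between consecutive levels of one of the two trees, or one of the ``welding'' edges in the middle. I would assign weight $2^{-k}$ (up to an overall normalization) to an edge at level $k$ counting outward from $s$ on the left tree, and symmetrically from $t$ on the right tree, with the welding edges getting the smallest weight; the point of this choice is that at every internal vertex $u$ the edge toward the root carries twice the weight of each of the two edges away from the root, so the weighted random walk has no drift. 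The starting distribution $\sigma$ is supported on $s$, and $V_{\sf M}$ is the set of vertices we can recognize as candidates for $t$ (degree-$2$ vertices other than $s$); $g$ is folded in via the checking subroutine, or equivalently by only marking $t$ if $g(t)=1$, using that checking membership in $M$ costs $O(n)$ once we have a description of the vertex.

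The core of the argument is the \textbf{alternative neighbourhoods} construction, exactly as sketched in \sec{welded}: when we query a degree-$3$ vertex $u$ we learn three neighbour labels $v_1,v_2,v_3$ but \emph{not} which one is the parent, so we cannot generate the single correct star state $\sqrt{\w_{\mathrm{parent}}}\ket{u,\mathrm{parent}} + \tfrac12\sqrt{\w_{\mathrm{parent}}}(\ket{u,v_a}+\ket{u,v_b})$ (up to normalization) efficiently. Instead we put all three candidate star states (one for each choice of which neighbour plays the role of the parent) into $\Psi_\star(u)$; for $s$ and $t$ themselves, which have degree $2$, there is no ambiguity and $\Psi_\star$ is the singleton required by \thm{full-framework}. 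Since each such state is a fixed superposition over $O(1)$ queried neighbour labels, an orthonormal basis for $\mathrm{span}\{\Psi_\star(u)\}$ can be produced with $O(1)$ queries and $O(n)$ gates (to manipulate the $2n$-bit labels), giving ${\sf A}_\star = O(n)$; the transition subroutine is trivial (cost $\mathrm{polylog}$, no $\tilde E$ needed, so $\widetilde{\cal W}=0$), and the setup subroutine just writes down $\ket{s}$ in $O(n)$ time.

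It then remains to exhibit the flow $\theta$ (for the positive case $M\neq\emptyset$, i.e.\ when $g(t)=1$) and to bound ${\cal W}(G^{\sf T})={\cal W}(G)$ (negative case). For the flow: route one unit of flow from $s$ to $t$ along the unique ``balanced'' flow in the reweighted graph --- by symmetry this flow splits evenly at each branching, carrying $2^{-k}$ on each edge at outward-level $k$, so ${\cal E}(\theta) = \sum_e \theta(e)^2/\w_e$ telescopes to $O(n)$; all ${\sf T}_{u,v}$ are $O(1)$ so ${\cal R}^{\sf T}=O(n)$. The crucial extra check, and the one I expect to be the main obstacle, is condition \textbf{P2}: the flow must be orthogonal not only to the true star state at each internal $u$ (ordinary flow conservation) but to \emph{every} alternative star state in $\Psi_\star(u)$. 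This forces the balanced flow to satisfy, at every degree-$3$ vertex, the three linear constraints coming from the three candidate parents simultaneously; I expect this to pin down that the flow entering $u$ from one neighbour must be split equally $(1/2,1/2)$ among the other two in a way consistent with \emph{all three} labelings --- which is exactly why the $2:1:1$ weighting was chosen, since the balanced flow has magnitude ratio $2:1:1$ on the three edges and the sign pattern works out to kill each candidate star state's inner product. Verifying this is a short but genuinely load-bearing computation. For the negative case, ${\cal W}(G) = \sum_e \w_e$ with $2^{k}\cdot 2^{-k}=1$ worth of weight at each of the $O(n)$ levels on each side, hence ${\cal W}^{\sf T}=O(n)$. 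Plugging ${\cal R}^{\sf T},{\cal W}^{\sf T}=O(n)$, ${\sf A}_\star=O(n)$, ${\sf S}=O(n)$, ${\sf T}_{\max}=O(1)$ into \thm{full-framework} yields complexity $O({\sf S} + \sqrt{{\cal R}^{\sf T}{\cal W}^{\sf T}}({\sf A}_\star + \mathrm{polylog})) = O(n + n\cdot n) = O(n^2)$ time, and the query count is $O(\sqrt{{\cal R}^{\sf T}{\cal W}^{\sf T}}) = O(n)$ since only ${\sf A}_\star$ (not the $\mathrm{polylog}$ part) makes queries. The error conditions \textbf{TS1}, \textbf{TS2} are vacuous here ($\epsilon=0$, $\widetilde{\cal W}=0$), and \textbf{P1}, \textbf{P3}, \textbf{P4}, \textbf{P5} are immediate from the construction of $\theta$ ($\theta(s)=1=\sigma(s)$, so \textbf{P4} gives $0$). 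The corollary follows by running the algorithm inside the Bernstein--Vazirani circuit with $g_z(t) = z\cdot t$, recovering all $2n$ bits of $t$ with $O(1)$ invocations and hence $O(n)$ queries, $O(n^2)$ time.
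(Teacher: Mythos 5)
Your Bernstein--Vazirani reduction with $g_z(t)=z\cdot t$ matches the paper's proof of this corollary exactly and is fine as a reduction to Theorem~\ref{thm:welded}. The gap is in the underlying sketch of Theorem~\ref{thm:welded}, and specifically in the step you yourself flag as ``load-bearing'': condition \textbf{P2} \emph{fails} for the weights you propose. You take $\w_k=2^{-k}$, so the parent edge at $u$ has weight $2\w_c$ where $\w_c$ is a child-edge weight. The flow state at $u$ has coefficient $\theta(u,v)/\sqrt{\w_{u,v}}$ on each edge; with the balanced $2{:}1{:}1$ flow and a $2{:}1$ weight ratio the parent coefficient is $2/\sqrt{2\w_c}=\sqrt2/\sqrt{\w_c}$ while each child coefficient is $1/\sqrt{\w_c}$, so the flow state is $\propto\sqrt2\ket{u,p}+\ket{u,c_1}+\ket{u,c_2}$. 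The candidate star state in which, say, $c_1$ is guessed to be the parent is $\propto\sqrt2\ket{u,c_1}-\ket{u,p}-\ket{u,c_2}$, and the inner product is $-\sqrt2+\sqrt2-1=-1\neq 0$: the flow is not orthogonal to the wrong candidates, so \textbf{P2} is violated at every internal vertex. Equivalently, with magnitudes $\sqrt2{:}1{:}1$ the three candidates span the whole $3$-dimensional space $\mathrm{span}\{\ket{u,v_1},\ket{u,v_2},\ket{u,v_3}\}$ (the determinant of the $3\times3$ matrix with $\sqrt2$ on the diagonal and $-1$ off-diagonal is $(\sqrt2-2)(\sqrt2+1)^2\neq 0$), so the only vector orthogonal to all three is $0$ and no nonzero $st$-flow can work. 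Your written star state, with coefficient $\tfrac12\sqrt{\w_{\mathrm{parent}}}$ on the children, is also inconsistent with your own weighting, since $\sqrt{\w_{\mathrm{child}}}=\tfrac1{\sqrt2}\sqrt{\w_{\mathrm{parent}}}$, not $\tfrac12\sqrt{\w_{\mathrm{parent}}}$.

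The paper instead uses $\w_k=2^{-2\lceil k/2\rceil}$: a drop by a factor of $4$, but only every two layers. At $V_{\mathrm{even}}$ vertices the parent-to-child weight ratio is then $4{:}1$, the candidates are $\propto 2\ket{u,v_j}-\ket{u,v_{j+1}}-\ket{u,v_{j+2}}$, which sum to zero and span only the $2$-dimensional complement of $\ket{u,v_1}+\ket{u,v_2}+\ket{u,v_3}$, and the $2{:}1{:}1$ flow gives coefficients $2/\sqrt4:1:1=1{:}1{:}1$ --- exactly that orthogonal line. At $V_{\mathrm{odd}}$ vertices all adjacent edges have the same weight, so the star state is uniform and needs no alternative neighbourhoods at all. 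Dropping by $4$ at \emph{every} layer would also satisfy \textbf{P2} but makes ${\cal E}^{\sf T}(\theta)=\sum_k 2^k(2^{-k})^2/4^{-k}=\sum_k 2^k$ exponential; the two-layer alternation is what keeps both witnesses at $O(n)$. This is precisely the content of Remark~\ref{rmk:flow-duality}, and it is a design constraint your factor-of-$2$-per-layer weighting cannot meet.
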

\begin{proof}
For any $z\in\{0,1\}^{2n}$, define $g_z(t)=z\cdot t = \sum_{i=1}^{2n}z_it_i\mod 2$. Clearly $g_z$ can be computed in complexity $O(n)$. To compute $g_z(t)$, we simply run the algorithm from \thm{welded}. The Bernstein-Vazirani algorithm~\cite{bernstein1997BValg} outputs $t$ using a single such query, and $O(n)$ additional gates. 
\end{proof}

Previous quantum algorithms for this problem are quantum walk algorithms in the sense that they construct a Hamiltonian based on the structure of the graph and simulate it, but this technique has not been replicated for many other problems, unlike quantum walk search algorithms described in \sec{intro-QW}\footnote{Where similar frameworks have been developed in the setting of continuous quantum walks~\cite{apers2022quadratic}, they are also limited to a quadratic speedup.}. Our hope is that our new quantum walk search framework bridges the gap between a general and easily applied technique (quantum walk search algorithms) and exponential speedups.

\paragraph{$G$ as a Weighted Network:} Assume that $n$ is even (this greatly simplifies notation, the proof is equivalent for the case where $n$ is odd). We partition $V(G)$ into $V_0\cup V_1\cup\dots\cup V_{2n+1}$, where $V_k$ is the set of vertices at distance $k$ from $s$, so $V_0=\{s\}$, and $V_{2n+1}=\{t\}$. We first prove \thm{welded} under the assumption that it is possible to check, for any vertex, whether it is in $V_{\mathrm{even}}:=V_0\cup V_2\cup\dots \cup V_{2n}$, or $V_{\mathrm{odd}}:=V_1\cup V_3\cup \dots\cup V_{2n+1}$. At the end of this section, we will explain how to remove this assumption. Define $M=\{t\}$ if $g(t)=1$ and otherwise $M=\emptyset$.

For $k\in \{1,\dots,2n+1\}$, define
\begin{equation}
\begin{split}
E_k&=\{(u,v)\in V_{k-1}\times V_k: \{u,v\}\in E(G)\}\\
\mbox{so }|E_k| &= \left\{\begin{array}{ll}
2^k & \mbox{if }k\in\{1,\dots,n+1\}\\
2^{2n+2-k} &\mbox{if }k\in\{n+1,\dots,2n+1\}.
\end{array}\right.
\end{split}\label{eq:welded-E_k}
\end{equation}
We define the set of directed edges as follows (see \defin{network}):
\begin{equation} 
\overrightarrow{E}(G)=\bigcup_{\substack{k\in\{1,\dots,2n+1\}:\\k\;\mathrm{mod}\; 4\in \{0,1\}}}\{(u,v):(u,v)\in E_k\}
\cup \bigcup_{\substack{k\in\{1,\dots,2n+1\}:\\k\;\mathrm{mod}\; 4\in \{2,3\}}}\{(u,v):(v,u)\in E_k\}.\label{eq:welded-EG}
\end{equation}
Note that $E_k\subset \overrightarrow{E}(G)$ only holds when $k\mod 4 \in \{0,1\}$, so we do not always set the default directions left to right. 
At the moment it is not clear why we set the directions this way, but one thing this accomplishes is that the direction of edges switches at every layer of $V_{\mathrm{odd}}$. \fig{welded-trees-weights} illustrates how the directions of the edges change layer by layer. 
We now assign weights to all edges in $G$. We assign all edges in $E_k$ the same weight, $\w_k$, defined:
\begin{equation}
\w_k=\left\{\begin{array}{ll}
2^{-2\lceil k/2\rceil} & \mbox{if }k\in\{1,\dots,n\}\\
2^{-2(n+2-\lceil k/2\rceil)} & \mbox{if }k\in\{n+1,\dots,2n+1\}.
\end{array}\right.\label{eq:welded-w_k}
\end{equation}
It should be somewhat clear why this might be a useful weighting: we have increased the probability of moving away from the centre. Finally, we add a vertex $v_0$ connected to $s$ by an edge of weight $\w_0$, and connected to $t$ by an edge of weight $\w_{\sf M}$ if and only if $t$ is marked, and call this resulting graph $G'$. We remark that we do not need to account for $v_0$ explicitly if we just want to appeal directly to \thm{full-framework}, but we are going to explicitly construct an algorithm for the sake of exemplification.

\begin{figure}
\centering
\begin{tikzpicture}[scale=1.35]
\filldraw (-7.5,0) circle (.1); 		\node at (-7.5,.25) {$v_0$};

	\draw[-{Latex[length=2mm, width=2mm]}] (-7.4,0)--(-6.6,0);	\node at (-7,.25) {$\w_0$};

\filldraw[fill=blue] (-6.5,0) circle (.1); 	\node at (-6.5,.25) {$s$};

	\draw[-{Latex[length=2mm, width=2mm]}] (-6.4,0) -- (-4.6,2); 	\node at (-5.5,1.35) {$\frac{1}{4}$};
	\draw[-{Latex[length=2mm, width=2mm]}] (-6.4,0) -- (-4.6,-2);	\node at (-5.5,-1.35) {$\frac{1}{4}$};

\filldraw (-4.5,2) circle (.1); 
\filldraw (-4.5,-2) circle (.1);

	\draw[{Latex[length=2mm, width=2mm]}-] (-4.4,2) -- (-3.1,3);	\node at (-3.75,2.85) {$\frac{1}{4}$};
	\draw[{Latex[length=2mm, width=2mm]}-] (-4.4,2) -- (-3.1,1);
	\draw[{Latex[length=2mm, width=2mm]}-] (-4.4,-2) -- (-3.1,-3);
	\draw[{Latex[length=2mm, width=2mm]}-] (-4.4,-2) -- (-3.1,-1);

\filldraw[fill=blue] (-3,3) circle (.1);
\filldraw[fill=blue] (-3,1) circle (.1);
\filldraw[fill=blue] (-3,-1) circle (.1);
\filldraw[fill=blue] (-3,-3) circle (.1);

	\draw[{Latex[length=2mm, width=2mm]}-] (-2.9,3) -- (-2.1,3.75);		\node at (-2.6,3.65) {$\frac{1}{16}$};
	\draw[{Latex[length=2mm, width=2mm]}-] (-2.9,3) -- (-2.1,2.25);
	  \draw[{Latex[length=2mm, width=2mm]}-] (-2.9,1) -- (-2.43,1.5);
	  \draw[{Latex[length=2mm, width=2mm]}-] (-2.9,1) -- (-2.43,.5);
	  \draw[{Latex[length=2mm, width=2mm]}-] (-2.9,-1) -- (-2.43,-.5);
	  \draw[{Latex[length=2mm, width=2mm]}-] (-2.9,-1) -- (-2.43,-1.5);
	\draw[{Latex[length=2mm, width=2mm]}-] (-2.9,-3) -- (-2.1,-3.75);
	\draw[{Latex[length=2mm, width=2mm]}-] (-2.9,-3) -- (-2.1,-2.25);

\filldraw (-2,3.75) circle (.1);
\filldraw (-2,2.25) circle (.1);
\filldraw (-2,-2.25) circle (.1);
\filldraw (-2,-3.75) circle (.1);

	\draw[-{Latex[length=2mm, width=2mm]}] (-1.9,3.75) -- (-1.1,4.125);	\node at (-1.5,4.25) {$\frac{1}{16}$};
	\draw[-{Latex[length=2mm, width=2mm]}] (-1.9,3.75) -- (-1.1,3.375);
	\draw[-{Latex[length=2mm, width=2mm]}] (-1.9,2.25) -- (-1.1,2.625);
	\draw[-{Latex[length=2mm, width=2mm]}] (-1.9,2.25) -- (-1.1,1.875);
	\draw[-{Latex[length=2mm, width=2mm]}] (-1.9,-3.75) -- (-1.1,-4.125);
	\draw[-{Latex[length=2mm, width=2mm]}] (-1.9,-3.75) -- (-1.1,-3.375);
	\draw[-{Latex[length=2mm, width=2mm]}] (-1.9,-2.25) -- (-1.1,-2.625);
	\draw[-{Latex[length=2mm, width=2mm]}] (-1.9,-2.25) -- (-1.1,-1.875);

\filldraw[fill=blue] (-1,4.125) circle (.1);
\filldraw[fill=blue] (-1,3.375) circle (.1);
\filldraw[fill=blue] (-1,2.625) circle (.1);
\filldraw[fill=blue] (-1,1.875) circle (.1);
\filldraw[fill=blue] (-1,-4.125) circle (.1);
\filldraw[fill=blue] (-1,-3.375) circle (.1);
\filldraw[fill=blue] (-1,-2.625) circle (.1);
\filldraw[fill=blue] (-1,-1.875) circle (.1);

% end left

\draw[-{Latex[length=2mm, width=2mm]}] (-.9,4.125) -- (.9,4.125);		\node at (0,4.4) {$\frac{1}{2^{n+2}}$};
\draw[-{Latex[length=2mm, width=2mm]}] (-.9,4.125) -- (.9,1.875);
\draw[-{Latex[length=2mm, width=2mm]}] (-.9,3.375) -- (.9,2.625);
\draw[-{Latex[length=2mm, width=2mm]}] (-.9,3.375) -- (.9,-2.625);

% begin right

\filldraw (6.5,0) circle (.1); \node at (6.5,.25) {$t$};

	\draw[{Latex[length=2mm, width=2mm]}-] (6.4,0) -- (4.6,2);	\node at (5.5,1.35) {$\frac{1}{4}$};
	\draw[{Latex[length=2mm, width=2mm]}-] (6.4,0) -- (4.6,-2);

\filldraw[fill=blue] (4.5,2) circle (.1); 
\filldraw[fill=blue] (4.5,-2) circle (.1);

	\draw[{Latex[length=2mm, width=2mm]}-] (4.4,2) -- (3.1,3);	\node at (3.75,2.85) {$\frac{1}{16}$};
	\draw[{Latex[length=2mm, width=2mm]}-] (4.4,2) -- (3.1,1);
	\draw[{Latex[length=2mm, width=2mm]}-] (4.4,-2) -- (3.1,-3);
	\draw[{Latex[length=2mm, width=2mm]}-] (4.4,-2) -- (3.1,-1);

\filldraw (3,3) circle (.1);
\filldraw (3,1) circle (.1);
\filldraw (3,-1) circle (.1);
\filldraw (3,-3) circle (.1);

	\draw[-{Latex[length=2mm, width=2mm]}] (2.9,3) -- (2.1,3.75);	\node at (2.6,3.65) {$\frac{1}{16}$};
	\draw[-{Latex[length=2mm, width=2mm]}] (2.9,3) -- (2.1,2.25);
	  \draw[-{Latex[length=2mm, width=2mm]}] (2.9,1) -- (2.43,1.5);
	  \draw[-{Latex[length=2mm, width=2mm]}] (2.9,1) -- (2.43,.5);
	  \draw[-{Latex[length=2mm, width=2mm]}] (2.9,-1) -- (2.43,-.5);
	  \draw[-{Latex[length=2mm, width=2mm]}] (2.9,-1) -- (2.43,-1.5);
	\draw[-{Latex[length=2mm, width=2mm]}] (2.9,-3) -- (2.1,-3.75);
	\draw[-{Latex[length=2mm, width=2mm]}] (2.9,-3) -- (2.1,-2.25);

\filldraw[fill=blue] (2,3.75) circle (.1);
\filldraw[fill=blue] (2,2.25) circle (.1);
\filldraw[fill=blue] (2,-2.25) circle (.1);
\filldraw[fill=blue] (2,-3.75) circle (.1);

	\draw[-{Latex[length=2mm, width=2mm]}] (1.9,3.75) -- (1.1,4.125);	\node at (1.7,4.25) {$\frac{1}{2^{n+2}}$};
	\draw[-{Latex[length=2mm, width=2mm]}] (1.9,3.75) -- (1.1,3.375);
	\draw[-{Latex[length=2mm, width=2mm]}] (1.9,2.25) -- (1.1,2.625);
	\draw[-{Latex[length=2mm, width=2mm]}] (1.9,2.25) -- (1.1,1.875);
	\draw[-{Latex[length=2mm, width=2mm]}] (1.9,-3.75) -- (1.1,-4.125);
	\draw[-{Latex[length=2mm, width=2mm]}] (1.9,-3.75) -- (1.1,-3.375);
	\draw[-{Latex[length=2mm, width=2mm]}] (1.9,-2.25) -- (1.1,-2.625);
	\draw[-{Latex[length=2mm, width=2mm]}] (1.9,-2.25) -- (1.1,-1.875);

\filldraw (1,4.125) circle (.1);
\filldraw (1,3.375) circle (.1);
\filldraw (1,2.625) circle (.1);
\filldraw (1,1.875) circle (.1);
\filldraw (1,-4.125) circle (.1);
\filldraw (1,-3.375) circle (.1);
\filldraw (1,-2.625) circle (.1);
\filldraw (1,-1.875) circle (.1);

\end{tikzpicture}
\caption{The weights of the graph $G'$ (obtained from adding $v_0$ to $G$), and default edge directions.}\label{fig:welded-trees-weights}
\end{figure}
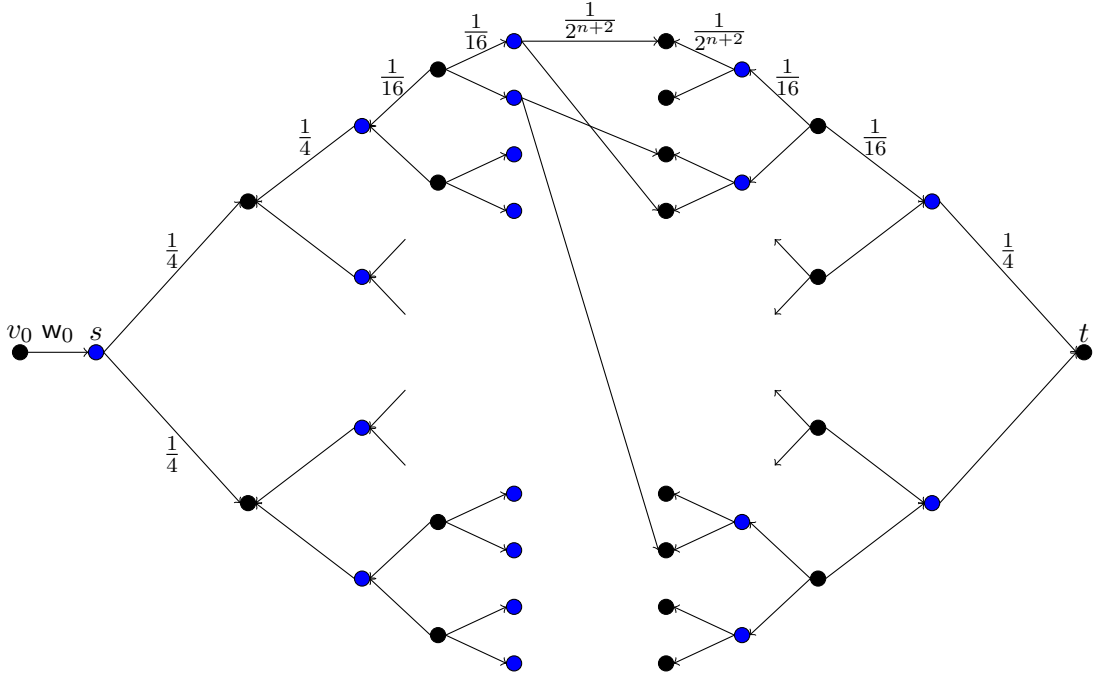

\thm{full-framework} assumes we label the outgoing edges from a vertex $u$ by indices from some set $L(u)$, and then implement a map $\ket{u,i}\mapsto \ket{v,j}$ for any $i\in L(u)$ (see also \defin{QW-access}). Here, since we assume we can simply query the set of the three neighbours of $u$, $\Gamma(u)=\{v_1,v_2,v_3\}$, in unit time, we can let $L(u)=\Gamma(u)$. In that case, the map $\ket{u,i}\mapsto\ket{v,j}$ is actually just $\ket{u,v}\mapsto \ket{v,u}$, which can be accomplished by swapping the two registers. The decomposition of $L(u)$ into $L^+(u)=\Gamma^+(u)$ and $L^-(u)=\Gamma^-(u)$ depends on the directions we assigned to the edges coming out of $u$ in \eq{welded-EG}. 

Our algorithm will be based on phase estimation of a unitary acting on:
\begin{equation}
H=\mathrm{span}\{\ket{u,v}:u\in V(G),v\in\Gamma(u)\}.\label{eq:welded-H}
\end{equation}
Here we let $\Gamma(u)=\Gamma_{G'}(u)$ refer to the neighbours of $u$ in $G'$, meaning that $\Gamma(s)$ and $\Gamma(t)$ both include $v_0$. We emphasise that for all $(u,v)\in \overrightarrow{E}(G)$, we have included both $\ket{u,v}$ and $\ket{v,u}$ in $H$, as orthonormal vectors. This is different from the $H$ defined in \sec{sketch} (in which $\ket{u,v}=-\ket{v,u}$), and instead we should think of $\ket{u,v}$ as analogous to $\ket{u,i}\ket{0}$ in \eq{full-fwk-H}: $v$ takes the place of the label $i$, and there is no $\ket{0}$ because there is no transition subroutine steps to count. Alternatively, we can think of $\ket{u,v}$ and $\ket{v,u}$ as labelling distinct edges on a path of length two connecting $u$ and $v$: one adjacent to $u$, and the other adjacent to $v$ (see also \fig{welded-spaces}).

For any $u\in V(G)$ and $v\in \Gamma(u)$, let $\Delta_{u,v}=0$ if $v\in\Gamma^+(u)$ (i.e.~$(u,v)\in\overrightarrow{E}(G')=\overrightarrow{E}(G)\cup\{(s,v_0),(t,v_0)\}$) and $\Delta_{u,v}=1$ if $v\in\Gamma^-(u)$ (i.e.~$(v,u)\in\overrightarrow{E}(G')$). 
We can then define star states in $H$ as follows, for all $u\in V(G)$ with neighbours (in $G'$) $\Gamma(u)=\{v_1,v_2,v_3\}$:  
\begin{equation}
\ket{\psi_\star^{G'}(u)} := \sqrt{\w_{u,v_1}}(-1)^{\Delta_{u,v_1}}\ket{u,v_1}+\sqrt{\w_{u,v_2}}(-1)^{\Delta_{u,v_2}}\ket{u,v_2}+\sqrt{\w_{u,v_3}}(-1)^{\Delta_{u,v_3}}\ket{u,v_3}.\label{eq:welded-stars}
\end{equation}
We cannot efficiently generate these star states. For $\ell\in \{0,\dots,n-1\}$ and $v\in V_{2\ell+1}\subset V_{\mathrm{odd}}$ with neighbours $\Gamma(v)=\{u_1,u_2,u_3\}$, we have, referring to \eq{welded-EG} and \eq{welded-w_k}:
\begin{equation}
\begin{split}
\ket{\psi_\star^{G'}(v)} &= \left\{\begin{array}{ll}
(-1)^{\ell+1}\frac{1}{2^{\ell+1}}\left(\ket{v,u_1}+\ket{v,u_2}+\ket{v,u_3} \right) & \mbox{if }\ell\in\{0,\dots,n/2-1\}\\
(-1)^{\ell+1}\frac{1}{2^{n-\ell+1}}\left(\ket{v,u_1}+\ket{v,u_2}+\ket{v,u_3} \right) & \mbox{if }\ell\in \{n/2,\dots,n\}.
\end{array}\right.
\end{split}\label{eq:welded-stars-B}
\end{equation}
Even though we don't know which layer $v$ is in, and therefore we do not know the precise scaling or direction of its edges, by querying the neighbours of $v$ to learn the set $\{u_1,u_2,u_3\}$, we know that:
$$\ket{\psi_\star^{G'}(v)}\in\mathrm{span}\{\ket{v,u_1}+\ket{v,u_2}+\ket{v,u_3}\}.$$
On the other hand, for $\ell\in\{1,\dots,n\}$ and $u\in V_{2\ell}\subset V_{\mathrm{even}}$, though we can compute $\Gamma(u)=\{v_1,v_2,v_3\}$, we do not know which neighbour is the parent -- the unique neighbour of $u$ that is further from the centre of the graph than $u$. Let $p(u)\in\{v_1,v_2,v_3\}$ be the parent of $u$, and $c_1(u)$, $c_2(u)$ the other two vertices in $\{v_1,v_2,v_3\}$. Then, referring to \eq{welded-EG}, \eq{welded-w_k} and \eq{welded-stars}, the star state of $u$ has the form:
\begin{align*}
\ket{\psi_\star^{G'}(u)}
&=\left\{\begin{array}{ll}
(-1)^{\ell+1}\frac{1}{2^{\ell}}\left(\ket{u,p(u)}-\frac{1}{2}\ket{u,c_1(u)}-\frac{1}{2}\ket{u,c_1(u)}\right) & \mbox{if }\ell\in\left\{0,\dots,\frac{n}{2}\right\}\\
 (-1)^{\ell+1}\frac{1}{2^{n-\ell+1}}\left(\ket{u,p(u)}-\frac{1}{2}\ket{u,c_1(u)}-\frac{1}{2}\ket{u,c_1(u)}\right)& \mbox{if }\ell\in\left\{\frac{n}{2}+1,\dots,n\right\}.
\end{array}\right.
\end{align*}
Generating this state would require knowing which of $\{v_1,v_2,v_3\}$ is the parent, $p(u)$, which is not something that can be learned from simply querying the neighbours of $u$. However, if we were to weight everything uniformly, our quantum walk would, like a classical random walk, suffer from the fact that the centre of the graph has exponential weight, and most time will be spent there. Thus, we employ the alternative neighbourhoods technique. 
For $u\in V_{\mathrm{even}}\setminus\{s\}$ with neighbours $v_1<v_2<v_3$, define:
\begin{equation}
\begin{split}
\forall j\in\{1,2,3\},\; \ket{\psi_\star^j(u)} &= \sqrt{\frac{2}{3}}\left(\ket{u,v_j}-\frac{1}{2}\ket{u,v_{j+1}}-\frac{1}{2}\ket{u,v_{j+2}}\right),
\end{split}\label{eq:welded-alt-neighbourhood}
\end{equation}
where the indices add modulo 3. Then we know that 
$$\frac{\ket{\psi_\star^G(u)}}{\norm{\ket{\psi_\star^G(u)}}}\in \{\ket{\psi_\star^1(u)},\ket{\psi_\star^2(u)},\ket{\psi_\star^3(u)}\}=:\Psi_\star(u),$$
though we do not know which one. 

For $s$ and $t$, suppose the neighbours are $\Gamma(s)=\{v_0,v_1,v_2\}$ and $\Gamma(t)=\{v_0,u_1,u_2\}$ -- meaning that when we query $s$, we learn $\{v_1,v_2\}$, and when we query $t$ we learn $\{u_1,u_2\}$, and then we add to each neighbourhood the additional special vertex $v_0$ (although if $t$ is not marked, we set the weight $\w_{t,v_0}=0$). 
Then the star states are, respectively:
\begin{equation}
\begin{split}
\ket{\psi_\star^{G'}(s)} &= \sqrt{\w_0}\ket{s,v_0}+\frac{1}{2}\ket{s,v_1}+\frac{1}{2}\ket{s,v_2}\\
\ket{\psi_\star^{G'}(t)} &= \delta_{g(t),1}\sqrt{\w_{\sf M}}\ket{t,v_0}-\frac{1}{2}\ket{t,u_1}-\frac{1}{2}\ket{t,u_2},
\end{split}\label{eq:welded-st-stars}
\end{equation}
where $\delta_{g(t),1}=1$ if and only if $t\in M$. To see that this follows from \eq{welded-stars}, note that $(s,v_0),(t,v_0)\in \overrightarrow{E}(G')$ by definition, and for $i\in\{1,2\}$, $(s,v_i)\in E_1$, so $(s,v_i)\in \overrightarrow{E}(G)$ since $1=1\mod 4$, and $(u_i,t)\in E_{2n+1}$, so $(u_i,t)\in\overrightarrow{E}(G)$, since $2n+1=1\mod 4$ (we are assuming $n$ is even), which is why we have minus signs in front of the $\ket{t,u_i}$
(see \eq{welded-EG}).
We can generate $\ket{\psi_\star^{G'}(s)}$ and $\ket{\psi_\star^{G'}(t)}$, because we can recognise $s$ and~$t$. Thus, for all $v\in  V_{\mathrm{odd}}\cup\{s\}$ (the vertices with easy to generate star states), we let $\Psi_\star(v):=\{\ket{\psi_\star^{G'}(v)}\}$. 

Finally, we define states corresponding to the transitions $\ket{u,v}\mapsto \ket{v,u}$:
\begin{equation}
\forall (u,v)\in \overrightarrow{E}(G),\; \ket{\psi_{u,v}} := \ket{u,v}-\ket{v,u}.\label{eq:welded-transition-states}
\end{equation}

\begin{figure}
\centering
\begin{tikzpicture}[scale=1.2]
\node at (0,0) {\begin{tikzpicture}[scale=1.2]
\draw (-1.5,-.75) -- (1.5,.75);
\draw (0,0) -- (1.5,-.75);
\draw (-1,-.5) -- (-.5,-.75); 	\draw (1,.5)--(1.5,.25);	\draw (1,-.5)--(1.5,-.25);

\filldraw (-1,-.5) circle (.1);
\filldraw (0,0) circle (.1);
\filldraw (1,.5) circle (.1);
\filldraw (1,-.5) circle (.1);

\node at (-1.25,-.5) {$u$};
\node at (0,.25) {$v$};
\node at (1,.8) {$u'$};
\node at (1,-.8) {$u''$};

\node at (0,-1.25) {$G$};
\end{tikzpicture}};

\node at (8.5,0) {\begin{tikzpicture}[scale=1.2]
\draw (-5,-2.5)--(5,2.5);
\draw (0,0)--(5,-2.5);
\draw (-4,-2)--(-3,-2.5);	\draw (4,2)--(5,1.5); 	\draw (4,-2)--(5,-1.5);

\node[rectangle, rounded corners, draw, thick, fill=white] at (-4,-2) {$\Psi_\star(u)$};
	\node[rectangle, rounded corners, draw, thick, fill=white] at (-2,-1) {$\ket{\psi_{u,v}}$};

\node[rectangle, rounded corners, draw, thick, fill=white] at (0,0) {$\Psi_\star(v)$};
	\node[rectangle, rounded corners, draw, thick, fill=white] at (2,1) {$\ket{\psi_{v,u'}}$};
	\node[rectangle, rounded corners, draw, thick, fill=white] at (2,-1) {$\ket{\psi_{v,u''}}$};

\node[rectangle, rounded corners, draw, thick, fill=white] at (4,2) {$\Psi_\star(u')$};

\node[rectangle, rounded corners, draw, thick, fill=white] at (4,-2) {$\Psi_\star(u'')$};

\end{tikzpicture}};

\end{tikzpicture}
\caption{A piece of the graph $G$ (left) and the corresponding piece of the overlap graph of the spaces $\mathrm{span}\{\Psi_\star(u)\}$ and $\mathrm{span}\{\ket{\psi_{u,v}}\}$. There is an edge between two nodes in the overlap graph if and only if the sets contain overlapping states. Compare with \fig{spaces-graph}.}\label{fig:welded-spaces}
\end{figure}
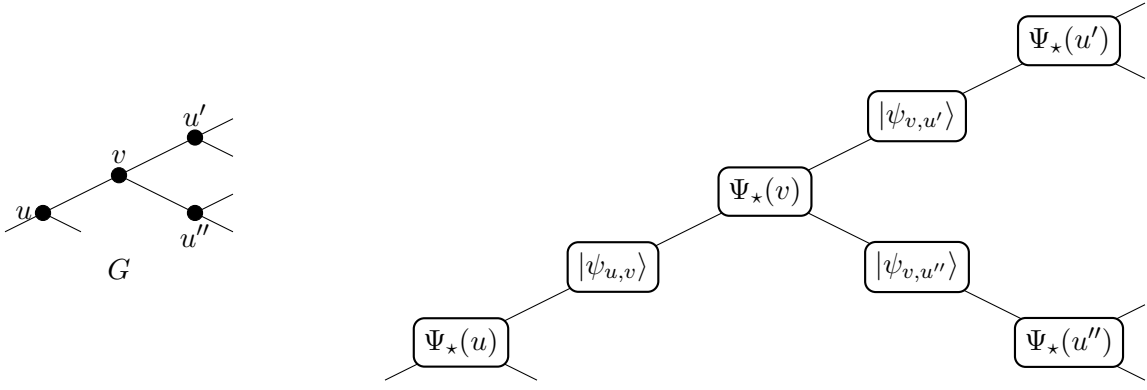

The star states  and the transition states \eq{welded-transition-states} will comprise all states of $\Psi^{\cal A}\cup \Psi^{\cal B}$ as follows:
\begin{equation}
\begin{split}
\Psi^{\cal A} &:= \bigcup_{u\in V(G)}\Psi_\star(u)=\bigcup_{u\in V_{\mathrm{odd}}\cup\{s\}}\{\ket{\psi_\star^{G'}(u)}\}\cup\bigcup_{u\in V_{\mathrm{even}}\setminus\{s\}}\{\ket{\psi^1_\star(u)},\ket{\psi^2_\star(u)},\ket{\psi^3_\star(u)}\}\\
\Psi^{\cal B} &:= \{\ket{\psi_{u,v}}:(u,v)\in\overrightarrow{E}(G)\}.
\end{split}\label{eq:welded-Psi}
\end{equation}
Then $\Psi^{\cal B}$ is a pairwise orthogonal set, and if we replace each $\Psi_\star(u)$ in $\Psi^{\cal A}$ with an orthonormal basis for $\mathrm{span}\{\Psi_\star(u)\}$ we get a pairwise orthogonal set. \fig{welded-spaces} shows that the overlap graph for the sets $\Psi_\star(u)$ for $u\in V(G)$ and $\{\ket{\psi_{u,v}}\}$ for $(u,v)\in\overrightarrow{E}(G)$ is bipartite, and we have chosen $\Psi^{\cal A}$ and $\Psi^{\cal B}$ according to this bipartition. 

Let $U_{\cal AB}=(2\Pi_{\cal A}-I)(2\Pi_{\cal B}-I)$, where $\Pi_{\cal A}$ and $\Pi_{\cal B}$ are orthogonal projectors on ${\cal A}:=\mathrm{span}\{\Psi^{\cal A}\}$ and ${\cal B}:=\mathrm{span}\{\Psi^{\cal B}\}$ respectively.
In the remainder of this section, we will show that we can solve the welded trees problem with bounded error by performing phase estimation of $U_{\cal AB}$ on initial state $\ket{\psi_0}=\ket{s,v_0}$, as described in \thm{lin-alg-fwk}. 

\paragraph{Implementing the Unitary:} In order to implement $U_{\cal AB}$, we need to be able to generate an orthonormal basis for each of ${\cal A}$ and ${\cal B}$, for which we use the following fact. 
\begin{claim}\label{clm:welded-F-basis}
Let $\omega_3=e^{2\pi i/3}$ be a third root of unity.
For a vertex $u\in V(G)$ with neighbours $v_1<v_2<v_3$, define for $j\in\{0,1,2\}$:
$$\ket{\widehat\psi^j(u)}:=\frac{1}{\sqrt{3}}\left(\ket{u,v_1}+\omega_3^j\ket{u,v_2}+\omega_3^{2j}\ket{u,v_3}\right).$$
Then these three vectors are an orthonormal set, and for $u\in V_{\mathrm{even}}\setminus\{s\}$,
$$\mathrm{span}\{\ket{\psi_\star^1(u)},\ket{\psi_\star^2(u)},\ket{\psi_\star^3(u)}\}=\mathrm{span}\{\ket{\widehat\psi^1(u)},\ket{\widehat\psi^2(u)}\}.$$
For $v\in V_{\mathrm{odd}}\setminus \{t\}$, 
$$\ket{\psi_\star^{G'}(v)}\in\mathrm{span}\{\ket{\widehat\psi^0(u)}\}.$$
\end{claim}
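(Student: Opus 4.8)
The plan is to verify three independent linear-algebraic facts about the vectors $\ket{\widehat\psi^j(u)}$, which live in the three-dimensional space spanned by $\{\ket{u,v_1},\ket{u,v_2},\ket{u,v_3}\}$. First I would check orthonormality: since $\ket{\widehat\psi^j(u)}$ is $\tfrac{1}{\sqrt 3}$ times the $j$-th row of the $3\times 3$ discrete Fourier transform matrix on $\{1,\omega_3,\omega_3^2\}$, applied to the orthonormal basis $\{\ket{u,v_k}\}$, orthonormality is immediate from the standard identity $\sum_{k=0}^{2}\omega_3^{jk}\overline{\omega_3^{j'k}}=3\,\delta_{j,j'}$. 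This is routine and I would state it in one line.

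Next I would handle the even case. The three states $\ket{\psi_\star^j(u)}$ from \eq{welded-alt-neighbourhood} are, up to the factor $\sqrt{2/3}$, the vectors $\ket{u,v_j}-\tfrac12\ket{u,v_{j+1}}-\tfrac12\ket{u,v_{j+2}}$ with indices mod $3$. Summing all three gives $0$ (each $\ket{u,v_k}$ appears once with coefficient $1$ and twice with coefficient $-\tfrac12$), so $\mathrm{span}\{\ket{\psi_\star^1(u)},\ket{\psi_\star^2(u)},\ket{\psi_\star^3(u)}\}$ is at most two-dimensional; checking that two of them are linearly independent shows it is exactly two-dimensional, hence it is precisely the orthogonal complement within the three-dimensional space of the all-ones vector $\ket{u,v_1}+\ket{u,v_2}+\ket{u,v_3}=\sqrt 3\,\ket{\widehat\psi^0(u)}$. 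By the orthonormality established in the first step, the orthogonal complement of $\mathrm{span}\{\ket{\widehat\psi^0(u)}\}$ is exactly $\mathrm{span}\{\ket{\widehat\psi^1(u)},\ket{\widehat\psi^2(u)}\}$, giving the claimed equality. Concretely I would confirm $\braket{\widehat\psi^0(u)}{\psi_\star^j(u)}=\tfrac{1}{\sqrt 3}\sqrt{2/3}(1-\tfrac12-\tfrac12)=0$ for each $j$, which pins down the inclusion $\mathrm{span}\{\ket{\psi_\star^j(u)}\}\subseteq\mathrm{span}\{\ket{\widehat\psi^1(u)},\ket{\widehat\psi^2(u)}\}$, and then a dimension count closes it.

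Finally, for the odd case, $\ket{\psi_\star^{G'}(v)}$ for $v\in V_{\mathrm{odd}}\setminus\{t\}$ is, by \eq{welded-stars-B}, a scalar multiple of $\ket{v,u_1}+\ket{v,u_2}+\ket{v,u_3}$, which is $\sqrt 3$ times $\ket{\widehat\psi^0(v)}$ once we order the neighbours $u_1<u_2<u_3$; hence $\ket{\psi_\star^{G'}(v)}\in\mathrm{span}\{\ket{\widehat\psi^0(v)}\}$. I do not anticipate a serious obstacle here — the whole claim is essentially the observation that the $3$-point Fourier basis diagonalizes the cyclic structure of the alternative neighbourhoods, and the ``correct'' star state always lands in the trivial ($j=0$) Fourier component while the ambiguity among the three alternatives is confined to the $\{j=1,2\}$ components. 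The only point requiring mild care is the bookkeeping of the normalization constants $\sqrt{2/3}$ and the layer-dependent prefactors $(-1)^{\ell+1}2^{-\ell}$ etc., but these are nonzero scalars and thus irrelevant to the span statements; I would note this explicitly so the reader is not distracted by them.
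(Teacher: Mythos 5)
Your proposal is correct and takes essentially the same route as the paper: orthonormality via the Fourier-basis observation, then orthogonality of each $\ket{\psi_\star^j(u)}$ to $\ket{\widehat\psi^0(u)}$ plus a dimension count to pin down the span, and finally reading off the odd case from the explicit form of the star state. The only superficial difference is that you also note $\sum_j\ket{\psi_\star^j(u)}=0$ before the orthogonality check, which is harmless but redundant once you have the inner-product computation.
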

\begin{proof}
Note that the states $\ket{\widehat{\psi}^0(u)},\ket{\widehat{\psi}^1(u)},\ket{\widehat{\psi}^2(u)}$ form an orthonormal basis for the space $\mathrm{span}\{\ket{u,v_1},\ket{u,v_2},\ket{u,v_3}\}$ -- it is the Fourier basis. Thus, the first part of the statement is simply proven by observing that for each $j\in\{1,2,3\}$, $\ket{\psi_\star^j(u)}\in\mathrm{span}\{\ket{u,v_1},\ket{u,v_2},\ket{u,v_3}\}$ and $\braket{\psi_\star^j(u)}{\widehat{\psi}^0(u)}=0$; and that $\mathrm{span}\{\ket{\psi_\star^j(u)}\}_{j=1}^3$ has dimension greater than 1.
The second statement follows easily from \eq{welded-stars-B}.
\end{proof}

\begin{lemma}\label{lem:welded-U}
The unitary $U_{\cal AB}=(2\Pi_{\cal A}-I)(2\Pi_{\cal B}-I)$ can be implemented in $O(1)$ queries to $O_G$, and $O(n)$ elementary operations.
\end{lemma}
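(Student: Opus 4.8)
The plan is to reduce the reflection $U_{\cal AB}=(2\Pi_{\cal A}-I)(2\Pi_{\cal B}-I)$ to separately reflecting around ${\cal A}$ and around ${\cal B}$, and to show each of these reflections costs $O(1)$ queries and $O(n)$ gates. For the ${\cal B}$ reflection: the states $\ket{\psi_{u,v}}=\ket{u,v}-\ket{v,u}$ for $(u,v)\in\overrightarrow{E}(G)$ are pairwise orthogonal (they live on disjoint pairs of basis vectors $\{\ket{u,v},\ket{v,u}\}$), so $2\Pi_{\cal B}-I$ is a direct sum of two-dimensional reflections. First I would describe a unitary $U_{\cal B}$ that, on the two-dimensional space $\mathrm{span}\{\ket{u,v},\ket{v,u}\}$, maps a fixed computational basis vector to $\tfrac{1}{\sqrt2}\ket{\psi_{u,v}}$: concretely, given $\ket{u,v}$ we query $O_G(u)$ and $O_G(v)$ once each to determine the edge direction (which of $u,v$ is the parent / which layer parity), append the sign bit, apply a Hadamard on an ancilla to create $\ket{u,v}-\ket{v,u}$, and uncompute the query registers. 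Then $2\Pi_{\cal B}-I = U_{\cal B}(2\Pi_{{\cal L}_{\cal B}}-I)U_{\cal B}^\dagger$ for an easily-recognizable subspace ${\cal L}_{\cal B}$, exactly as in \clm{leftarrow-refl}. The queries here are $O(1)$ and the gate count is $O(n)$ (dominated by manipulating $2n$-bit vertex labels).

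For the ${\cal A}$ reflection I would invoke \clm{welded-F-basis}. The key point is that an \emph{orthonormal} basis for $\mathrm{span}\{\Psi_\star(u)\}$ can be generated with $O(1)$ queries: for $u\in V_{\mathrm{even}}\setminus\{s\}$, query $O_G(u)$ to learn $\{v_1<v_2<v_3\}$, then the basis is $\{\ket{\widehat\psi^1(u)},\ket{\widehat\psi^2(u)}\}$, which is obtained from $\mathrm{span}\{\ket{u,v_1},\ket{u,v_2},\ket{u,v_3}\}$ by a size-3 Fourier transform (a fixed $O(1)$-gate unitary), discarding the $\omega_3^0$ component; for $v\in V_{\mathrm{odd}}\setminus\{t\}$, the single basis state is $\ket{\widehat\psi^0(v)}$, again a Fourier transform after one query; and for $s,t$ the star states \eq{welded-st-stars} can be written down directly after recognizing $s,t$ (checking that they have only two neighbours) and querying once. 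Crucially all of these are the \emph{same} Fourier-transform primitive applied to the three-neighbour register, differing only in which Fourier component(s) survive and in scalar factors, so controlled on a constant amount of information (the parity $V_{\mathrm{even}}$ vs $V_{\mathrm{odd}}$, plus the $s,t$ flags) we can implement one unitary $U_{\cal A}$ mapping a recognizable ancilla-labelled subspace ${\cal L}_{\cal A}$ onto ${\cal A}$. Then $2\Pi_{\cal A}-I = U_{\cal A}(2\Pi_{{\cal L}_{\cal A}}-I)U_{\cal A}^\dagger$, with $O(1)$ queries and $O(n)$ gates for label bookkeeping. Composing, $U_{\cal AB}$ costs $O(1)$ queries and $O(n)$ gates.

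The main obstacle — really the only non-routine point — is handling the three distinct ``types'' of vertex ($V_{\mathrm{even}}\setminus\{s\}$, $V_{\mathrm{odd}}\setminus\{t\}$, and $\{s,t\}$) under one unitary while only spending $O(1)$ queries: one must argue that a single query to $O_G(u)$ suffices to (i) recognize $s$ or $t$, and (ii) produce the neighbour list needed to build the Fourier basis, and that the branch selection uses only the assumed ability to test membership in $V_{\mathrm{even}}$ versus $V_{\mathrm{odd}}$ (the assumption in force for this part of \sec{welded}); and one must check that the ancilla registers introduced for edge-direction signs and for the Fourier-component labels are cleanly uncomputable so that $U_{\cal A}, U_{\cal B}$ are genuine unitaries and ${\cal L}_{\cal A}, {\cal L}_{\cal B}$ are recognizable in $O(1)$ per basis state. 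Everything else is the standard ``reflect-by-generate'' argument already carried out in \clm{fwk-star-unitary} and \clm{leftarrow-refl}, so I would state the construction and point to those claims rather than re-deriving it.
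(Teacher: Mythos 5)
Your overall strategy --- decompose $U_{\cal AB}$ into the two reflections, implement each via the reflect-by-generate trick of \clm{leftarrow-refl} and \clm{fwk-star-unitary}, and build the $\cal A$ half from the $F_3$ Fourier basis of \clm{welded-F-basis} controlled on a constant amount of information (the $V_{\mathrm{even}}$/$V_{\mathrm{odd}}$ flag and the $s$/$t$ flags, the latter read off from a single query to $O_G$ by counting neighbours) --- is the same as the paper's proof, and the ``main obstacle'' you flag at the end (branch selection and ancilla uncomputation) is exactly what the paper's proof spends its effort on.

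The one concrete gap is in your handling of the ${\cal B}$ reflection. You propose querying $O_G(u)$ and $O_G(v)$ to ``determine the edge direction'' and write a sign bit before applying the Hadamard. That step cannot be carried out: the orientation chosen in \eq{welded-EG} depends on the layer index $k\bmod 4$, not merely on its parity, and a vertex's distance from $s$ is not computable from a constant number of oracle queries. More importantly, it is not needed: $\ket{\psi_{v,u}}=-\ket{\psi_{u,v}}$, so the one-dimensional space $\mathrm{span}\{\ket{\psi_{u,v}}\}$, and hence $\Pi_{\cal B}$, does not depend on the chosen orientation at all. The paper sidesteps the issue entirely: it fixes the lexicographic order $u<v$, creates $\ket{u,v}-\ket{v,u}$ with a Hadamard on a one-qubit control and a controlled swap, and uncomputes the control by testing whether the two labels are sorted --- no call to $O_G$ is made in this half of the unitary, and the recognizable subspace ${\cal L}'$ is just $\mathrm{span}\{\ket{1}\ket{u,v}:\{u,v\}\in E(G),\,u<v\}$, where the condition $\{u,v\}\in E(G)$ holds automatically from the structure of $H$. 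If you drop the edge-direction determination and sort by vertex label instead, your plan goes through and coincides with the paper's construction.
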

\begin{proof}
Let 
$$H'=\mathrm{span}\{\ket{j}\ket{u,v}:j\in\{0,1,2\},u\in V(G),v\in \Gamma(u)\cup\{\bot\}\},$$
so in particular $\ket{0}\otimes H\subset H'$ (where $H$ is as in \eq{welded-H}).

We first describe how to implement $2\Pi_{\cal A}-I$. 
We describe a unitary $U_\star$ on $H'$, and in particular, its behaviour on states of the form $\ket{j}\ket{u,\bot}$, where $j=0$ whenever $u\in V_{\mathrm{odd}}\cup\{s\}$, and $j\in\{1,2\}$ whenever $u\in V_{\mathrm{even}}\setminus \{s\}$. We begin by querying the neighbours of $u$ in an ancillary register, $Q$, initialised to $\ket{0}$ using $O_G$:
$$\ket{j}\ket{u,\bot}\ket{0}_Q\mapsto \ket{j}\ket{u,\bot}\ket{v_1,v_2,v_3}_Q$$
where if $u\in \{s,t\}$, $v_1<v_2$ and $v_3=\bot$ (which we can interpret as $v_0$), and otherwise, since we assume $u\in V(G)$, $v_1<v_2<v_3$ are the neighbours of $u$. 
We initialise an ancilla qubit $A$, and compute a trit $\ket{a}_A$ for $a\in\{0,1,2\}$ as follows, to determine what happens next. If $v_3\neq \bot$, then $a=0$. Else if $u=0^{2n}=s$, we let $a=1$. Else if $v_3=\bot$ but $u\neq 0^{2n}$, so $u=t$, we let $a=2$. 

Controlled on $\ket{0}_A$, apply a Fourier transfrom $F_3$ to $\ket{j}$ to get $\ket{\hat{j}}=(\ket{1}+\omega_3^j\ket{2}+\omega_3^{2j}\ket{3})/\sqrt{3}$. Then, still conditioned on $\ket{0}_A$, swap the first and third registers, so now the first register contains $\ket{\bot}$, and then perform $\ket{\bot}\mapsto \ket{0}$ on the first register to get:
$\ket{0}\ket{u}\ket{\hat{j}}\ket{0}\ket{v_1,v_2,v_3}_Q\ket{0}_A.$
Then, conditioned on the value in the $\ket{\hat{j}}$ register, we can copy over the first, second or third value in the $\ket{v_1,v_2,v_3}$ register to get:
$$\frac{1}{\sqrt{3}}\ket{0}\ket{u}\left(\ket{1}\ket{v_1}+\omega_3^j\ket{2}\ket{v_2}+\omega_3^{2j}\ket{3}\ket{v_3}\right)\ket{v_1,v_2,v_3}_Q\ket{0}_A.$$
This requires $O(n)$ basic operations. 
We can uncompute the value $\ket{i}$ in $\ket{i}\ket{v_i}$ by referring to the last register to learn $v_i$'s position, and then we are left with:
$\ket{0}\ket{\widehat{\psi}^j(u)}\ket{v_1,v_2,v_3}_Q\ket{0}_A.$

Next, we control on $\ket{1}_A$, meaning $u=s$. In that case, we assume that $j=0$. 
Using $v_1$ and $v_2$ in the last register, we can map $\ket{\bot}$ to a state proportional to 
$\sqrt{\w_0}\ket{v_0}+\frac{1}{2}\ket{v_1}+\frac{1}{2}\ket{v_2}$
to get
$$\ket{0}\frac{\ket{{\psi}^{G'}_\star(s)}}{\norm{\ket{{\psi}^{G'}_\star(s)}}^2}\ket{v_1,v_2,v_3}_Q\ket{1}_A.$$

Lastly, we control on $\ket{2}_A$, meaning $u=t$. We can compute $g(t)$ in a separate register, and using $g(t)$, $v_1$, and $v_2$, map $\ket{\bot}$ to a state proportional to:
$\delta_{g(t),1}\sqrt{\w_{\sf M}}\ket{v_0}-\frac{1}{2}\ket{v_1}-\frac{1}{2}\ket{v_2}$
to get
$$\ket{0}\frac{\ket{{\psi}^{G'}_\star(t)}}{\norm{\ket{{\psi}^{G'}_\star(t)}}^2}\ket{v_1,v_2,v_3}_Q\ket{2}_A.$$

We can uncompute the ancilla $A$, since the registers containing $u$, and $v_1,v_2,v_3$ haven't changed.
Since the register containing $u$ has not changed, we can uncompute the register $\ket{v_1,v_2,v_3}_Q$ using another call to $O_G$. Then, removing ancillae, we have performed a map, $U_\star$ that acts, for $j=0$ when $u\in V_{\mathrm{odd}}\cup\{s\}$ and $j\in\{1,2\}$ when $u\in V_{\mathrm{even}}\setminus\{s\}$, as:
$\ket{j}\ket{u,\bot}\mapsto \ket{0}\ket{\widehat{\psi}^j(u)},$
where, using \clm{welded-F-basis}, for all $u\in V_{\mathrm{odd}}\cup\{s\}$,
$$\mathrm{span}\{\ket{\widehat{\psi}^0(u)}\}=\mathrm{span}\{\ket{\psi_\star^{G'}(u)}\}=\mathrm{span}\{\Psi_\star(u)\}$$
and for all $u\in V_{\mathrm{even}}\setminus \{s\}$:
$$\mathrm{span}\{\ket{\widehat{\psi}^1(u)},\ket{\widehat{\psi}^2(u)}\}=\mathrm{span}\{\ket{\psi_\star^1(u)},\ket{\psi_\star^2(u)},\ket{\psi_\star^3(u)}\}=\mathrm{span}\{\Psi_\star(u)\}.$$
Thus, $U_\star$ maps the subspace
$${\cal L}:=\mathrm{span}\{\ket{0,u,\bot}:u\in V_{\mathrm{odd}}\cup\{s\}\}\cup\{\ket{1,u,\bot},\ket{2,u,\bot}:u\in V_{\mathrm{even}}\setminus\{s\}\}$$
of $H'$ to $\ket{0}\otimes \mathrm{span}\{\Psi^{\cal A}\}\cong {\cal A}$, and thus 
$2\Pi_{\cal A}-I = U_\star \left(2\Pi_{\cal L}  -I\right) U_\star^\dagger.$
We describe how to implement $2\Pi_{\cal L}-I$. Initialise ancillary flag qubits $\ket{0}_{F_1}\ket{0}_{F_2}\ket{0}_{F_3}$. For a computational basis state $\ket{j}\ket{u,v}$ of $H'$, by assumption (which is removed at the end of this section) we can efficiently check whether $u$ is in $V_{\mathrm{odd}}$ or $V_{\mathrm{even}}$, and we can check whether $u=s=0^{2n}$ in $O(n)$ cost. If $u\in V_{\mathrm{odd}}\cup\{s\}$, we check if the first register is 0, and if not, flip $F_1$ to get $\ket{1}_{F_1}$. If $u\in V_{\mathrm{even}}\setminus \{s\}$, we check if the first register is 1 or 2, and if not, flip $F_2$ to get $\ket{1}_{F_2}$. If the last register is not $\bot$, flip $F_3$ to get $\ket{1}_{F_3}$. Reflect if any flag is set, and then uncompute all flags. This can all be done in $O(n)$ basic operations.

Next, we describe how to implement $2\Pi_{\cal B}-I$. We describe a unitary $U_{\cal S}$ on $H'$, and in particular, its behaviour on states of the form $\ket{1}\ket{u,v}$ for $\{u,v\}\in E(G)$ with $u<v$. First, apply a Hadamard gate to the first register, and then, controlled on its value, swap the second two registers to get:
$$\left(\ket{0}\ket{u,v} - \ket{1}\ket{v,u}\right)/\sqrt{2}.$$
We can uncompute the first register by adding in a bit indicating if the last two registers are in sorted order, to get:
$$\ket{0}\frac{1}{\sqrt{2}}\left(\ket{u,v} - \ket{v,u}\right)\in\left\{\begin{array}{ll}
\mathrm{span}\{\ket{0}\ket{\psi_{u,v}}\} & \mbox{if }(u,v)\in\overrightarrow{E}(G)\\
\mathrm{span}\{\ket{0}\ket{\psi_{v,u}}\} & \mbox{if }(v,u)\in\overrightarrow{E}(G).
\end{array}\right.$$
Thus, $U_{\cal S}$ maps
$${\cal L}':=\mathrm{span}\{\ket{1}\ket{u,v}:\{u,v\}\in E(G), u<v\}$$
to $\mathrm{span}\{\ket{0}\ket{\psi_{u,v}}:(u,v)\in\overrightarrow{E}(G)\}\cong {\cal B}$, and so 
$2\Pi_{\cal B}-I = U_{\cal S} \left(2\Pi_{{\cal L}'}  -I\right) U_{\cal S}^\dagger.$
To implement $\left(2\Pi_{{\cal L}'}  -I\right)$, it is enough to check that the first register is 1, and $u$ and $v$ are in sorted order (we know $\{u,v\}\in E(G)$ by the structure of $H'$). This can be done in $O(n)$ basic operations.
\end{proof}

\paragraph{Negative Analysis:} For the negative analysis, it would be sufficient to upper bound the total weight of $G$ and appeal to \thm{full-framework}, but we will instead explicitly construct a negative witness (see \defin{neg-witness}) in order to appeal to \thm{lin-alg-fwk}. 
That is, we show explicitly how to express $\ket{\psi_0}=\ket{s,v_0}$ as the sum of something in ${\cal A}$ and something in ${\cal B}$, when $t$ is not marked. We let:
\begin{equation}
\begin{split}
\ket{w_{\cal A}} &:= \frac{1}{\sqrt{\w_0}}\sum_{u\in V(G)}\ket{\psi_\star^{G'}(u)}
\mbox{ and }\ket{w_{\cal B}} := -\frac{1}{\sqrt{\w_0}}\sum_{(u,v)\in \overrightarrow{E}(G)}\sqrt{\w_{u,v}}(-1)^{\Delta_{u,v}}\ket{\psi_{u,v}}.
\end{split}\label{eq:welded-neg-witness}
\end{equation}
Then we prove the following.
\begin{lemma}\label{lem:welded-neg-analysis}
Suppose $M=\emptyset$. Then $\ket{w_{\cal A}},\ket{w_{\cal B}}$ form a 0-negative witness with 
$
\norm{\ket{w_{\cal A}}}^2=O(n/\w_0).
$
\end{lemma}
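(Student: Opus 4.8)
The plan is to verify directly that $\ket{w_{\cal A}}+\ket{w_{\cal B}}=\ket{\psi_0}=\ket{s,v_0}$ and then bound $\norm{\ket{w_{\cal A}}}^2$. For the first part, I would expand $\sqrt{\w_0}\,(\ket{w_{\cal A}}+\ket{w_{\cal B}}) = \sum_{u\in V(G)}\ket{\psi_\star^{G'}(u)} - \sum_{(u,v)\in\overrightarrow{E}(G)}\sqrt{\w_{u,v}}(-1)^{\Delta_{u,v}}\ket{\psi_{u,v}}$, where $\ket{\psi_{u,v}}=\ket{u,v}-\ket{v,u}$ and each star state (see \eq{welded-stars}) is a signed sum $\sum_{v\in\Gamma(u)}\sqrt{\w_{u,v}}(-1)^{\Delta_{u,v}}\ket{u,v}$. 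The key observation is that for an edge $(u,v)\in\overrightarrow{E}(G)$ (so $\Delta_{u,v}=0$ and $\Delta_{v,u}=1$), the term $\ket{u,v}$ appears in $\ket{\psi_\star^{G'}(u)}$ with coefficient $+\sqrt{\w_{u,v}}$ and the term $\ket{v,u}$ appears in $\ket{\psi_\star^{G'}(v)}$ with coefficient $-\sqrt{\w_{u,v}}$; subtracting $\sqrt{\w_{u,v}}\ket{\psi_{u,v}}=\sqrt{\w_{u,v}}(\ket{u,v}-\ket{v,u})$ exactly cancels both. Hence every contribution from a genuine edge of $G$ cancels, and the only surviving term is the one from the special edge $\{s,v_0\}$, namely $\sqrt{\w_0}\ket{s,v_0}$ (recall $M=\emptyset$, so $\w_{t,v_0}=0$ and the $\ket{t,v_0}$ term vanishes, and $v_0$ has no star state of its own). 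This gives $\ket{w_{\cal A}}+\ket{w_{\cal B}}=\ket{s,v_0}=\ket{\psi_0}$. Since $\ket{w_{\cal A}}\in{\cal A}$ and $\ket{w_{\cal B}}\in{\cal B}$ by construction (each $\ket{\psi_\star^{G'}(u)}\in\mathrm{span}\{\Psi_\star(u)\}$ and each $\ket{\psi_{u,v)}}\in\Psi^{\cal B}$), this is a genuine $0$-negative witness in the sense of \defin{neg-witness}.

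For the norm bound, note that the star states $\{\ket{\psi_\star^{G'}(u)}\}_{u\in V(G)}$ are \emph{not} orthogonal (they overlap on shared edges), so I cannot simply sum norms; instead I would compute $\norm{\ket{w_{\cal A}}}^2 = \frac{1}{\w_0}\sum_{u,u'\in V(G)}\braket{\psi_\star^{G'}(u)}{\psi_\star^{G'}(u')}$ by grouping contributions per edge of $H$. Each basis vector $\ket{u,v}$ of $H$ receives a contribution only from $\ket{\psi_\star^{G'}(u)}$ (coefficient $\pm\sqrt{\w_{u,v}}$), so in fact the cross terms between \emph{distinct} star states all vanish as basis vectors $\ket{u,v}$ and $\ket{v,u}$ are orthogonal in $H$ (this is the point of using the doubled space, unlike \sec{sketch}). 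Therefore $\norm{\ket{w_{\cal A}}}^2 = \frac{1}{\w_0}\sum_{u\in V(G)}\norm{\ket{\psi_\star^{G'}(u)}}^2 = \frac{1}{\w_0}\sum_{u\in V(G)}\sum_{v\in\Gamma(u)}\w_{u,v} = \frac{1}{\w_0}\big(2{\cal W}(G)+\w_0\big)$, where the $\w_0$ accounts for the edge $\{s,v_0\}$ counted once (at $s$). So it remains to show ${\cal W}(G)=O(n)$.

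The total weight is computed layer by layer from \eq{welded-E_k} and \eq{welded-w_k}: layer $E_k$ has $|E_k|$ edges each of weight $\w_k$, and the weighting is designed precisely so that $|E_k|\w_k = \Theta(1)$ for every $k$. Concretely, for $k\in\{1,\dots,n\}$, $|E_k|=2^k$ and $\w_k=2^{-2\lceil k/2\rceil}$, so $|E_k|\w_k = 2^{k-2\lceil k/2\rceil}\in\{1,1/2\}$; symmetrically for $k\in\{n+1,\dots,2n+1\}$, $|E_k|=2^{2n+2-k}$ and $\w_k=2^{-2(n+2-\lceil k/2\rceil)}$, again giving $|E_k|\w_k=O(1)$. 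Summing over the $2n+1$ layers yields $\sum_{k}|E_k|\w_k = O(n)$, and since ${\cal W}(G)=\sum_{e\in\overrightarrow{E}(G)}\w_e=\sum_{k=1}^{2n+1}|E_k|\w_k$, we get ${\cal W}(G)=O(n)$. Plugging back, $\norm{\ket{w_{\cal A}}}^2 = \frac{2{\cal W}(G)}{\w_0}+1 = O(n/\w_0)$, which completes the proof.

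I do not expect a serious obstacle here: the main subtlety is simply keeping straight the sign bookkeeping from the $\Delta_{u,v}$ convention and the edge-direction choices of \eq{welded-EG} when checking the cancellation, and recognizing that the doubled Hilbert space $H$ (where $\ket{u,v}\perp\ket{v,u}$) makes the norm computation a clean sum over vertices rather than requiring a Gram-matrix analysis. The layer-counting bound ${\cal W}(G)=O(n)$ is the crux of why the weighting in \eq{welded-w_k} was chosen, and it is a one-line geometric-series-free estimate.
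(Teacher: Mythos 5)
Your proposal is correct and follows essentially the same route as the paper's proof: expand $\ket{w_{\cal A}}+\ket{w_{\cal B}}$ using \eq{welded-neg-witness}, observe the per-edge cancellation (using $\w_{u,v}=\w_{v,u}$ and $(-1)^{\Delta_{u,v}}=-(-1)^{\Delta_{v,u}}$) leaving only $\sqrt{\w_0}\ket{s,v_0}$, then use the fact that the star states are supported on disjoint basis vectors to reduce $\norm{\ket{w_{\cal A}}}^2$ to a sum of squared norms, and finally the layer estimate $|E_k|\w_k=O(1)$ to get ${\cal W}(G)=O(n)$. Your version is if anything a touch more explicit than the paper's (you retain the $+\w_0$ contribution from $\ket{\psi_\star^{G'}(s)}$ rather than silently absorbing it); the one small rhetorical wobble is the opening sentence of the norm computation, where you say the star states are ``not orthogonal'' before immediately establishing that they are pairwise orthogonal in the doubled space $H$ — better to simply state the orthogonality outright, since $\ket{\psi_\star^{G'}(u)}$ is supported on $\mathrm{span}\{\ket{u,v}:v\in\Gamma(u)\}$ and these subspaces are disjoint across $u$.
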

\begin{proof}
When $M=\emptyset$ (that is, $t\not\in M$), the graph $G'$ is simply $G$ with an additional vertex $v_0$ connected to $s$ by an edge from $s$ to $v_0$ of weight $\w_0$. 
Let $\Gamma_G(u)$ denote the neighbourhood of $u$ in $G$, and $\Gamma_{G'}(u)$ the neighbourhood of $u$ in $G'$, so, assuming $M\neq \emptyset$, for all $u\in V(G)\setminus \{s\}$, $\Gamma_G(u)=\Gamma_{G'}(u)$, and $\Gamma_{G'}(s)=\Gamma_G(s)\cup\{v_0\}$.
Thus, referring to \eq{welded-neg-witness} and \eq{welded-stars}, we have: 
\begin{align*}
\sqrt{\w_0}\ket{w_{\cal A}} &= \sum_{u\in V(G)}\sum_{v\in\Gamma_{G'}(u)}\sqrt{\w_{u,v}}(-1)^{\Delta_{u,v}}\ket{u,v}\\
&=\sum_{u\in V(G)}\sum_{v\in\Gamma_G(u)}\sqrt{\w_{u,v}}(-1)^{\Delta_{u,v}}\ket{u,v}+\sqrt{\w_0}\ket{s,v_0},
\end{align*}
and referring to \eq{welded-transition-states}, we have:
\begin{align*}
\sqrt{\w_0}\ket{w_{\cal B}} &= -\sum_{(u,v)\in \overrightarrow{E}(G)}\sqrt{\w_{u,v}}(-1)^{\Delta_{u,v}}(\ket{u,v}-\ket{v,u})\\
&= -\sum_{u\in V(G)}\sum_{v\in\Gamma_G^+(u)}\sqrt{\w_{u,v}}(-1)^{\Delta_{u,v}}\ket{u,v}-\sum_{v\in V(G)}\sum_{u\in \Gamma_G^-(v)}\sqrt{\w_{v,u}}(-1)^{\Delta_{v,u}}\ket{v,u}\\
&=-\sum_{u\in V(G)}\sum_{v\in\Gamma_G(u)}\sqrt{\w_{u,v}}(-1)^{\Delta_{u,v}}\ket{u,v},
\end{align*}
where we have used the fact that $\w_{u,v}=\w_{v,u}$ and $(-1)^{\Delta_{u,v}}=-(-1)^{\Delta_{v,u}}$.
 Thus, we see that:
\begin{align*}
\sqrt{\w_0}\left(\ket{w_{\cal A}}+\ket{w_{\cal B}}\right) &= \sqrt{\w_0}\ket{s,v_0}=\sqrt{\w_0}\ket{\psi_0}. 
\end{align*}
It is simple to check that $\ket{w_{\cal A}}\in \mathrm{span}\{\Psi^{\cal A}\}$ and $\ket{w_{\cal B}}\in \mathrm{span}\{\Psi^{\cal B}\}$ (see \eq{welded-Psi}), so we see that these states form a 0-negative witness. 

We can analyse the complexity of this witness by computing an upper bound on $\norm{\ket{w_{\cal A}}}^2$:
\begin{align*}
\norm{\ket{w_{\cal A}}}^2 &= \frac{2}{\w_0}\sum_{e\in E(G)}\w_e = \frac{2}{\w_0}\sum_{k=0}^{2n+1}|E_k|\w_k\\
&=\frac{1}{\w_0}\sum_{k=0}^{n}2^{k}\frac{1}{2^{2\lceil k/2\rceil}}+\frac{2}{w_0}\sum_{k=n+1}^{2n+1}2^{2n+1-k+1}\frac{1}{2^{2n+4-2\lceil k/2\rceil}}=O(n/\w_0)\label{eq:welded-trees-neg}
\end{align*}
using the fact that edges in $E_k$ have weight $\w_k$ defined in \eq{welded-w_k}, and $|E_k|$ in \eq{welded-E_k}.
\end{proof}

\paragraph{Positive Analysis:} In the case when $t$ is marked, so $M=\{t\}\neq\emptyset$, we exhibit a positive witness (see \defin{pos-witness}) $\ket{w}$ that is orthogonal to all states in $\Psi^{\cal A}\cup \Psi^{\cal B}$, and that has non-zero overlap with $\ket{\psi_0}=\ket{s,v_0}$. If $\theta$ is any $st$-flow on $G$ (see \defin{flow}), as long as $M=\{t\}$, so there is an edge from $t$ to $v_0$, we can extend $\theta$ to a circulation on $G'$ by sending the unit flow coming into $t$ out to $v_0$, and then back into $s$. That is, define $\theta(t,v_0)=1$, and $\theta(s,v_0)=-1$. Then if we define 
\begin{equation}
\ket{w}=\frac{\theta(s,v_0)}{\sqrt{\w_0}}\ket{s,v_0}+\sum_{(u,v)\in\overrightarrow{E}(G)}\frac{\theta(u,v)}{\sqrt{\w_{u,v}}}(\ket{u,v}+\ket{v,u})+\frac{\theta(t,v_0)}{\sqrt{\w_{\sf M}}}\ket{t,v_0}
\label{eq:welded-pos-witness}
\end{equation}
it turns out that this will always be orthogonal to all star states $\ket{\psi_\star^{G'}(u)}$, as well as all transition states $\ket{\psi_{u,v}}$. However, there are additional states $\ket{\psi^j_\star(u)}\in\Psi^{\cal A}\cup \Psi^{\cal B}$, and in order to be orthogonal to all of these, the flow must satisfy additional constraints. We will show that all these constraints are satisfied by the natural choice of flow that, for each vertex, comes in from the parent and then sends half to each child. That is, letting $E_k$ for $k\in \{1,\dots,2n+1\}$ be as in \eq{welded-E_k}, and $E_0=\{(v_0,s)\}$, define:
\begin{equation}
\forall k\in \{1,\dots,2n+1\}, (u,v)\in E_k,\;\; \theta(u,v) := \frac{1}{|E_k|}=2^{-k}. \label{eq:welded-flow}
\end{equation}
Then we first prove the following:
\begin{claim}\label{clm:welded-pos}
Let $u\in V_{\mathrm{even}}\setminus\{s\}$, and let $\ket{w_u} = (\ket{u}\bra{u}\otimes I)\ket{w}.$
Then $\ket{w_u}\propto \ket{\widehat{\psi}^0(u)}$.
\end{claim}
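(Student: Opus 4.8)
The plan is to compute $\ket{w_u} = (\ket{u}\bra{u}\otimes I)\ket{w}$ directly from the definition of $\ket{w}$ in \eq{welded-pos-witness}, isolating the terms of $\ket{w}$ that are supported on basis states $\ket{u,\cdot}$, and then to show the resulting vector is a multiple of $\ket{\widehat\psi^0(u)} = \frac{1}{\sqrt3}(\ket{u,v_1}+\ket{u,v_2}+\ket{u,v_3})$, where $v_1<v_2<v_3$ are the neighbours of $u$. Since $u\in V_{\mathrm{even}}\setminus\{s\}$, $u$ is an internal vertex of one of the two trees, with a parent $p(u)$ and two children $c_1(u),c_2(u)$; none of its incident edges touches $v_0$. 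In $\ket{w}$, for each directed edge $(a,b)\in\overrightarrow{E}(G)$ we have a term $\frac{\theta(a,b)}{\sqrt{\w_{a,b}}}(\ket{a,b}+\ket{b,u})$ which contributes to $\ket{w_u}$ exactly the piece supported on $\ket{u,\cdot}$: namely $\frac{\theta(a,b)}{\sqrt{\w_{a,b}}}\ket{u,b}$ when $a=u$, and $\frac{\theta(a,b)}{\sqrt{\w_{a,b}}}\ket{u,a}$ when $b=u$ (and nothing otherwise). So $\ket{w_u}$ collects one term for each of the three edges incident to $u$.

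First I would fix the layer: say $u\in V_{2\ell}$ for some $\ell\in\{1,\dots,n\}$, so the edge from $u$ to its parent lies in $E_{2\ell}$ (or $E_{2\ell+1}$ — here one must be careful which of $\{2\ell-1,2\ell,2\ell+1\}$ indexes the parent edge versus the child edges; by the tree structure the edge to the parent is in $E_{2\ell}$ if $u$ is a left-tree vertex at distance $2\ell$ from $s$, i.e.\ the parent edge connects $V_{2\ell-1}$ to $V_{2\ell}$, hence is in $E_{2\ell}$, and the two child edges connect $V_{2\ell}$ to $V_{2\ell+1}$, hence lie in $E_{2\ell+1}$, and symmetrically on the right tree with the roles of ``towards/away from centre'' reversed). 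The key numerical inputs are: the flow values $\theta(e)=2^{-k}$ for $e\in E_k$ from \eq{welded-flow}, and the weights $\w_k$ from \eq{welded-w_k}. I would compute $\frac{\theta(e)}{\sqrt{\w_e}}$ for the parent edge and for each child edge. For the parent edge in $E_{2\ell}$ (in the first half, $\ell\le n/2$): $\frac{2^{-2\ell}}{\sqrt{2^{-2\ell}}} = 2^{-\ell}$. For a child edge in $E_{2\ell+1}$: $\w_{2\ell+1}=2^{-2\lceil(2\ell+1)/2\rceil}=2^{-2(\ell+1)}$, so $\frac{2^{-(2\ell+1)}}{\sqrt{2^{-2(\ell+1)}}} = \frac{2^{-(2\ell+1)}}{2^{-(\ell+1)}} = 2^{-\ell}$. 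So all three coefficients have the same magnitude $2^{-\ell}$ (and one should check the second-half-of-the-graph cases $\ell>n/2$ give the same equality, which they do by the symmetry built into \eq{welded-w_k}). The only remaining issue is the \emph{sign}: the parent-edge term enters $\ket{w}$ via $(\ket{a,b}+\ket{b,a})$ with a plus, regardless of edge orientation, and likewise the child-edge terms; since $\theta>0$ on all of these (flow always points away from $v_0$, i.e.\ from $s$ towards $t$, so from parent to child within a tree on the $s$-side and... here one must double-check orientation on the $t$-side, but on the $t$-side the flow still points ``into'' $t$, which means towards the root, i.e.\ from child to parent — but $\ket{u,b}+\ket{b,u}$ is symmetric so the coefficient $\theta/\sqrt{\w}$ is still positive in front of $\ket{u,\cdot}$), all three coefficients are equal to $+2^{-\ell}$. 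Hence $\ket{w_u} = 2^{-\ell}(\ket{u,v_1}+\ket{u,v_2}+\ket{u,v_3}) = 2^{-\ell}\sqrt3\,\ket{\widehat\psi^0(u)} \propto \ket{\widehat\psi^0(u)}$, as claimed.

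I would structure the written proof as: (1) recall that $u\in V_{\mathrm{even}}\setminus\{s\}$ has exactly three incident edges, none incident to $v_0$, one to the parent $p(u)\in E_{2\ell}$ and two to children in $E_{2\ell+1}$ (treating the left/right tree and the two halves of the graph — above/below the welding — as symmetric cases, or just noting the computation is identical); (2) extract $\ket{w_u}$ from \eq{welded-pos-witness}, observing each incident edge $\{u,b\}$ contributes a single term $\frac{\theta(\{u,b\})}{\sqrt{\w_{\{u,b\}}}}\ket{u,b}$ (using that the flow value $\theta(\{u,b\})$ and weight $\w_{\{u,b\}}$ are well-defined up to the orientation, and the bracket $\ket{\cdot,\cdot}+\ket{\cdot,\cdot}$ is orientation-symmetric, and $\theta>0$ on every edge of $G$ under the flow \eq{welded-flow}); (3) plug in \eq{welded-flow} and \eq{welded-w_k} to see all three coefficients equal $2^{-\ell}$ (resp.\ $2^{-(n-\ell+1)}$ in the far half); (4) conclude $\ket{w_u}\propto\ket{v_1}+\ket{v_2}+\ket{v_3}\propto\ket{\widehat\psi^0(u)}$.

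The main obstacle I anticipate is purely bookkeeping rather than conceptual: getting the layer indices and edge-set membership exactly right for a vertex $u\in V_{2\ell}$ — in particular, confirming that the parent edge and the child edges land in consecutive $E_k$'s whose weights are tuned (via the ceiling function $2^{-2\lceil k/2\rceil}$) precisely so that $\theta(e)/\sqrt{\w_e}$ is constant across the three edges — and handling the ``second half'' of the graph (layers past the welding, $k>n+1$, where the roles of parent/child, and the direction of increasing weight, are mirrored) without sign or index errors. Once the identity $\theta(e)/\sqrt{\w_e}=2^{-\ell}$ (or its mirror) is verified for all three incident edges of every even-layer non-root vertex, the claim that $\ket{w_u}\propto\ket{\widehat\psi^0(u)}$ is immediate. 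There is no hard analytic content here; the claim is a direct consequence of the fact that the flow \eq{welded-flow} and the weights \eq{welded-w_k} were designed so the ``current'' $\theta/\sqrt{\w}$ is uniform on the three edges at any even vertex, which is exactly the orthogonality-to-$\ket{\psi_\star^j(u)}$ condition \textbf{P2} in disguise.
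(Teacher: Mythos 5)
Your approach — project $\ket{w}$ onto $\ket{u}\bra{u}\otimes I$, compute the coefficient of each $\ket{u,v}$ for $v\in\Gamma(u)$, and show they all agree — is the same as the paper's, and the magnitude calculation $|\theta(e)|/\sqrt{\w_e}=2^{-\ell}$ on all three edges incident to $u\in V_{2\ell}$ (and the analogous quantity in the right tree) is correct. But the sign argument is wrong, and not in a way that is merely bookkeeping. The coefficient of $\ket{u,v}$ in $\ket{w}$ is $\theta(a,b)/\sqrt{\w_{a,b}}$ with $(a,b)$ the ordering fixed by $\overrightarrow{E}(G)$ via \eq{welded-EG}, and this is \emph{not} always positive: for edges in $E_k$ with $k\bmod 4\in\{2,3\}$, the $\overrightarrow{E}(G)$-orientation points from $V_k$ back to $V_{k-1}$, against the flow, so $\theta(a,b)=-1/|E_k|<0$ there. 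The symmetry of $\ket{a,b}+\ket{b,a}$ only says the coefficient is independent of which endpoint you write first; it does not make it positive, because $\theta$ is antisymmetric. Concretely, for $u\in V_{2\ell}$ in the left tree with $\ell$ odd, all three coefficients equal $-2^{-\ell}$, contradicting your assertion that ``all three coefficients are equal to $+2^{-\ell}$.''

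What actually makes the claim true is that the three coefficients share a \emph{common} sign (which may be negative, and $\propto$ tolerates that), and this rests on a designed feature of \eq{welded-EG} that your argument never invokes: the orientation flips only at odd layers, so for any $u\in V_{2\ell}$ the parent edge (in $E_{2\ell}$) and the two child edges (in $E_{2\ell+1}$) lie in the same $\{0,1\}$-vs-$\{2,3\}$ residue class mod $4$; equivalently, $\Delta_{p(u),u}=\Delta_{u,c_1(u)}=\Delta_{u,c_2(u)}$. Since $\theta(u,p(u))<0$ while $\theta(u,c_i(u))>0$ (flow runs into $u$ from the parent and out to the children), the parent coefficient carries an intrinsic minus sign from $\theta$, and this is cancelled by the relation $\Delta_{u,p(u)}=1-\Delta_{u,c_i(u)}$ that the residue-class agreement implies. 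The paper verifies this case by case on the parity of $\ell$ and on which tree $u$ lies in, and it is the reason the three coefficients end up proportional. To repair your plan, replace ``$\theta>0$'' with the check that $\Delta_{p(u),u}=\Delta_{u,c_i(u)}$ for all $u\in V_{\mathrm{even}}\setminus\{s\}$; without it, the argument as written does not rule out that the parent and child coefficients have opposite signs, in which case $\ket{w_u}$ would be \emph{orthogonal} to $\ket{\widehat{\psi}^0(u)}$ rather than proportional to it.
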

\begin{proof}
Since $u\not\in \{s,t\}$, we have:
\begin{align*}
\ket{w_u} = \sum_{v\in\Gamma^+(u)}\frac{\theta(u,v)}{\sqrt{\w_{u,v}}}\ket{u,v}+\sum_{u'\in\Gamma^-(u)}\frac{\theta(u',u)}{\sqrt{\w_{u,u'}}}\ket{u,u'}
= \sum_{v\in\Gamma(u)}(-1)^{\Delta_{u,v}}\frac{\theta(u,v)}{\sqrt{\w_{u,v}}}\ket{u,v}.
\end{align*}
using $\theta(u,v)=-\theta(v,u)$, $\w_{u,v}=\w_{v,u}$, and $\Delta_{u,v}=0$ if $v\in\Gamma^+(u)$, and $1$ otherwise.
Recall that $u$ has three neighbours: a parent $p(u)$ and two children $c_1(u)$ and $c_2(u)$.
Since $u\in V_{2\ell}$ for some $\ell$, the edges adjacent to $u$ are (up to direction) in $E_{2\ell}$ and $E_{2\ell+1}$. If $\ell$ is even, $2\ell=0\;\mathrm{mod}\; 4$ and $2\ell+1=1\;\mathrm{mod}\; 4$, so by \eq{welded-E_k}, $\Delta_{p(u),u}=\Delta_{u,c_1(u)}=\Delta_{u,c_2(u)}=0$ if $\ell\in\{1,\dots,n/2\}$ (i.e.~$u$ is in the left tree, so its parent is to its left) and $=1$ otherwise. If $\ell$ is odd, $2\ell=2\;\mathrm{mod}\; 4$ and $2\ell+1=3\;\mathrm{mod}\; 4$, so $\Delta_{p(u),u}=\Delta_{u,c_1(u)}=\Delta_{u,c_2(u)}=1$ if $\ell\in\{1,\dots,n/2\}$ and $=0$ otherwise. Thus, since $(-1)^{\Delta_{u,p(u)}}=-(-1)^{\Delta_{p(u),u}}$, we always have:
\begin{align*}
\ket{w_u} &= \pm\left(-\frac{\theta(u,p(u))}{\sqrt{\w_{u,p(u)}}}\ket{u,p(u)}+\frac{\theta(u,c_1(u))}{\sqrt{\w_{u,c_1(u)}}}\ket{u,c_1(u)}+\frac{\theta(u,c_2(u))}{\sqrt{w_{u,c_2(u)}}}\ket{u,c_2(u)}\right).
\end{align*}
Suppose $\ell\in\{1,\dots,n/2\}$, so $u$ is in the left tree. Then $(p(u),u)\in E_{2\ell}$, so we have
$$\theta(u,p(u))=-\theta(p(u),u)=-\frac{1}{|E_{2\ell}|}=-2^{-2\ell}
\mbox{ and }
\sqrt{\w_{u,p(u)}}=\sqrt{\w_{2\ell}}=2^{-\lceil 2\ell/2\rceil }=2^{-\ell}$$
by \eq{welded-w_k}, and for $i\in\{1,2\}$, $(u,c_i(u))\in E_{2\ell+1}$, so we have
$$\theta(u,c_i(u))=\frac{1}{|E_{2\ell+1}|}=2^{-(2\ell+1)}
\mbox{ and }
\sqrt{\w_{u,c_i(u)}}=\sqrt{w_{2\ell+1}} = 2^{-\lceil (2\ell+1)/2\rceil }=2^{-(\ell+1)}
$$
also by \eq{welded-w_k}. Thus:
\begin{align*}
\ket{w_u} &= \pm\left( -\frac{-2^{-2\ell}}{2^{-\ell}}\ket{u,p(u)}+\frac{2^{-(2\ell+1)}}{2^{-(\ell+1)}}\ket{u,c_1(u)}+\frac{2^{-(2\ell+1)}}{2^{-(\ell+1)}}\ket{u,c_2(u)}\right)\\
&=\pm 2^{-\ell}\left(\ket{u,p(u)}+\ket{u,c_1(u)}+\ket{u,c_2(u)}\right).
\end{align*}

\noindent On the other hand, if $\ell\in \{n/2+1,\dots, n\}$, so that $u$ is in the right tree, we have $(u,p(u))\in E_{2\ell+1}$, so:
$$\theta(u,p(u)) = \frac{1}{|E_{2\ell+1}|} = 2^{-(2n+2-2\ell-1)}
\mbox{ and }
\sqrt{\w_{u,p(u)}}=\sqrt{\w_{2\ell+1}}=2^{-(n+2-\lceil (2\ell+1)/2\rceil)} = 2^{-(n+1-\ell)},
$$
and for $i\in\{1,2\}$, $(c_i(u),u)\in E_{2\ell}$, so:
$$\theta(u,c_i(u)) = -\theta(c_i(u),u)=-\frac{1}{|E_{2\ell}|} = -2^{-(2n+2-2\ell)}
\mbox{ and }
\sqrt{\w_{u,c_i(u)}}=\sqrt{\w_{2\ell}}=2^{-(n+2-\lceil \frac{2\ell}{2}\rceil)} = 2^{-(n+2-\ell)}.
$$
Thus
\begin{align*}
\ket{w_u} &= \pm\left(-\frac{2^{-(2n+1-2\ell)}}{2^{-(n+1-\ell)}}\ket{u,p(u)}+\frac{-2^{-(2n+2-2\ell)}}{2^{-(n+2-\ell)}}\ket{u,c_1(u)}+\frac{-2^{-(2n+2-2\ell)}}{2^{-(n+2-\ell)}}\ket{u,c_2(u)}\right)\\
&=\mp 2^{n-\ell}\left(\ket{u,p(u)}+\ket{u,c_1(u)}+\ket{u,c_2(u)}\right).
\end{align*}
Thus, letting $\{v_1,v_2,v_3\}=\{p(u),c_1(u),c_2(u)\}$ with $v_1<v_2<v_3$, for any $\ell\in \{1,\dots,n\}$, if $u\in V_{2\ell}$, we have:
$\ket{w_u}\propto \ket{u,v_1}+\ket{u,v_2}+\ket{u,v_3}.$
\end{proof}

\noindent Then we have the following.
\begin{lemma}\label{lem:welded-pos-analysis}
Let $\w_{\sf M}=\w_0$.
Suppose $M=\{t\}$, and let $\ket{w}$ be as defined in \eq{welded-pos-witness} with respect to the flow defined in \eq{welded-flow}. 
Then $\ket{w}$ is a 0-positive witness (see \defin{pos-witness}) with: 
\begin{equation*}
\frac{\norm{\ket{w}}^2}{|\braket{w}{\psi_0}|^2} = O(\w_0n).
\end{equation*}
\end{lemma}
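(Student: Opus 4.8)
The plan is to verify that the state $\ket{w}$ defined in \eq{welded-pos-witness} is a genuine $0$-positive witness in the sense of \defin{pos-witness} (orthogonal to every state in $\Psi^{\cal A}\cup\Psi^{\cal B}$, with $\braket{w}{\psi_0}\neq 0$), and then compute the ratio $\norm{\ket{w}}^2/|\braket{w}{\psi_0}|^2$. First I would record that $\braket{\psi_0}{w}=\braket{s,v_0}{w}=\theta(s,v_0)/\sqrt{\w_0}=-1/\sqrt{\w_0}$, which is nonzero, so $\ket{w}$ is a positive witness provided the orthogonality claims hold. This immediately gives $|\braket{w}{\psi_0}|^2=1/\w_0$.

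Next I would check orthogonality to $\Psi^{\cal B}=\{\ket{\psi_{u,v}}:(u,v)\in\overrightarrow{E}(G)\}$: since $\ket{\psi_{u,v}}=\ket{u,v}-\ket{v,u}$ and $\ket{w}$ contains the symmetric combination $\frac{\theta(u,v)}{\sqrt{\w_{u,v}}}(\ket{u,v}+\ket{v,u})$ on each edge, the inner product is $\frac{\theta(u,v)}{\sqrt{\w_{u,v}}}(1-1)=0$ for internal edges, and for the edges to $v_0$ there is no corresponding transition state, so these contribute nothing — hence $\Pi_{\cal B}\ket{w}=0$. For orthogonality to $\Psi^{\cal A}$ I would split into cases. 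Orthogonality to $\ket{\psi_\star^{G'}(u)}$ for $u\in V_{\mathrm{odd}}\cup\{s\}$ (and, since $M=\{t\}$, implicitly $t$ as well, handled via the added edge $(t,v_0)$ and $(s,v_0)$) is the standard statement that a circulation is orthogonal to every star state: because $\ket{w}$ is built from the circulation $\theta$ on $G'$, at each such vertex $u$ the overlap is $\sum_{v\in\Gamma_{G'}(u)}\theta(u,v)=\theta(u)=0$, using that $\theta$ extended by $\theta(t,v_0)=1$, $\theta(s,v_0)=-1$ is a genuine circulation — one should double-check the signs $(-1)^{\Delta_{u,v}}$ cancel correctly against the orientation conventions, exactly as in \clm{ortho2} of the framework proof and the negative-witness computation of \lem{welded-neg-analysis}. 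The only genuinely new point is orthogonality to the alternative-neighbourhood states $\ket{\psi^j_\star(u)}$ for $u\in V_{\mathrm{even}}\setminus\{s\}$, $j\in\{1,2,3\}$. Here I would invoke \clm{welded-pos}: it shows that the restriction $\ket{w_u}=(\ket{u}\bra{u}\otimes I)\ket{w}$ is proportional to $\ket{\widehat\psi^0(u)}=\frac{1}{\sqrt3}(\ket{u,v_1}+\ket{u,v_2}+\ket{u,v_3})$, and by \clm{welded-F-basis} each $\ket{\psi^j_\star(u)}$ lies in $\mathrm{span}\{\ket{\widehat\psi^1(u)},\ket{\widehat\psi^2(u)}\}$, which is orthogonal to $\ket{\widehat\psi^0(u)}$. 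So $\braket{\psi^j_\star(u)}{w}=\braket{\psi^j_\star(u)}{w_u}=0$ for all three $j$. This is the crucial place where the specific weighting \eq{welded-w_k} and flow \eq{welded-flow} were engineered precisely so that the flow out of an even vertex splits evenly in the normalized $\ket{u,v}$ basis.

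Finally I would bound $\norm{\ket{w}}^2$. Expanding, $\norm{\ket{w}}^2 = \frac{\theta(s,v_0)^2}{\w_0} + \sum_{(u,v)\in\overrightarrow{E}(G)}\frac{2\theta(u,v)^2}{\w_{u,v}} + \frac{\theta(t,v_0)^2}{\w_{\sf M}} = \frac{1}{\w_0}+2{\cal E}(\theta)+\frac{1}{\w_{\sf M}}$. With $\w_{\sf M}=\w_0$ the endpoint terms give $O(1/\w_0)$. For the energy ${\cal E}(\theta)=\sum_k |E_k|\,\theta(u,v)^2/\w_k = \sum_k 2^{-k} \cdot \w_k^{-1}$ (using $|E_k|\theta_k^2 = 2^{-k}$ from \eq{welded-E_k}, \eq{welded-flow}), and substituting $\w_k$ from \eq{welded-w_k}, each term is $2^{-k} \cdot 2^{2\lceil k/2\rceil} = \Theta(1)$ in the left half and similarly $\Theta(1)$ in the right half after the reflected substitution; summing over $k\in\{0,\dots,2n+1\}$ gives ${\cal E}(\theta)=O(n)$. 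Hence $\norm{\ket{w}}^2 = O(1/\w_0)+O(n) = O(n)$ once $\w_0$ is a constant (or, keeping $\w_0$ general, $\norm{\ket{w}}^2 = O(1/\w_0 + n) = O(n/\w_0)$ when $\w_0 = O(1)$ and $\w_0 n = \Omega(1)$). Dividing by $|\braket{w}{\psi_0}|^2 = 1/\w_0$ yields $\norm{\ket{w}}^2/|\braket{w}{\psi_0}|^2 = O(\w_0 n)$, as claimed. I expect the main obstacle to be bookkeeping: getting all the orientation signs $(-1)^{\Delta_{u,v}}$ and the layer-dependent weight exponents $\lceil k/2\rceil$ to line up consistently across the even/odd and left/right cases, so that the circulation property and the proportionality to $\ket{\widehat\psi^0(u)}$ both come out cleanly — the energy computation itself is a routine geometric-sum estimate.
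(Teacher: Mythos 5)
Your proposal matches the paper's proof in both structure and key ingredients: the same decomposition of the orthogonality check into transition states, correct star states (via the circulation property of $\theta$ extended by $\theta(t,v_0)=1$, $\theta(s,v_0)=-1$), and alternative-neighbourhood states (via \clm{welded-pos} and \clm{welded-F-basis}); and the same expansion $\norm{\ket{w}}^2 = \frac{1}{\w_0}+\frac{1}{\w_{\sf M}}+2{\cal E}(\theta)$ with the geometric-sum estimate ${\cal E}(\theta)=O(n)$. The only difference is that you flag the circulation verification and the $s,t$ star-state orthogonality as needing a sign check rather than carrying it out, whereas the paper does this explicitly, but the approach is the same.
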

\begin{proof}
To show that $\ket{w}$ is a 0-positive witness, we must show that it is orthogonal to all states in $\Psi^{\cal A}\cup\Psi^{\cal B}$. For $(u,v)\in\overrightarrow{E}(G)$, it is clear from the definition of $\ket{w}$, and the definition of $\ket{\psi_{u,v}}=\ket{u,v}-\ket{v,u}$ (see \eq{welded-transition-states}) that $\braket{w}{\psi_{u,v}}=0$. 

We next check that $\braket{w}{\psi_\star^{G'}(u)}=0$ for all $u\in V(G)$, which follows from the fact that $\theta$ is a circulation on $G'$. First, suppose $u\in V(G)\setminus\{s,t\}$:
\begin{align*}
\braket{w}{\psi_\star^{G'}(u)}&=\sum_{(u',v')\in \overrightarrow{E}(G)}\frac{\theta(u',v')}{\sqrt{\w_{u',v'}}}(\bra{u',v'}+\bra{v',u'})\sum_{v\in\Gamma(u)}\sqrt{\w_{u,v}}(-1)^{\Delta_{u,v}}\ket{u,v} & \mbox{see }\eq{welded-stars}\\
&= \sum_{v\in \Gamma^+(u)}\theta(u,v)(-1)^{\Delta_{u,v}}\braket{u,v}{u,v}+\sum_{v\in\Gamma^-(u)}\theta(v,u)(-1)^{\Delta_{u,v}}\braket{u,v}{u,v}\\
&=\sum_{v\in\Gamma^+(u)}\theta(u,v) +\sum_{v\in\Gamma^-(u)}(-\theta(u,v))(-1) = \sum_{v\in \Gamma(u)}\theta(u,v),
\end{align*}
where we used the fact that $(-1)^{\Delta_{u,v}}=1$ when $v\in\Gamma^+(u)$ and $(-1)$ if $v\in\Gamma^-(u)$, and the fact that $\theta(v,u)=-\theta(u,v)$. This is 0 whenever $\theta$ is a circulation (see \defin{flow}), so we now simply check that $\theta$, as defined, is a circulation (at least on vertices other than $s$ and $t$). Suppose $u\in V_k$ for some $k\in\{1,\dots,n\}$. Then $u$ has three neighbours: a parent $p(u)\in V_{k-1}$, and two children $c_1(u),c_2(u)\in V_{k+1}$. We have 
$$\theta(u,p(u))= -\theta(p(u),u) = -\frac{1}{|E_k|},
\mbox{ and }
\theta(u,c_1(u))=\theta(u,c_2(u)) = \frac{1}{|E_{k+1}|} = \frac{1}{2|E_k|},$$
and thus
$$\theta(u,p(u))+\theta(u,c_1(u))+\theta(u,c_2(u)) = 0.$$
The case for $k\in\{n+1,\dots,2n\}$ is nearly identical. We still need to check orthogonality with $\ket{\psi_\star^{G'}(s)}$ and $\ket{\psi_\star^{G'}(t)}$. Suppose $t$ has children $u_1$ and $u_2$. Then for $i\in\{1,2\}$, $(u_i,t)\in E_{2n+1}$, so since $2n+1=1\mod 4$ (we are assuming $n$ is even), we have $(u_i,t)\in \overrightarrow{E}(G)$ (see \eq{welded-EG}). Thus, referring to \eq{welded-st-stars}, 
\begin{align*}
\braket{w}{\psi_\star^{G'}(u)}&=\bra{w}\left(\sqrt{\w_{\sf M}}\ket{t,v_0} - \frac{1}{2}\ket{t,u_1} - \frac{1}{2}\ket{t,u_2} \right) \\
&=\braket{t,v_0}{t,v_0}-\sum_{(u,v)\in \overrightarrow{E}(G)}\frac{\theta(u,v)}{\sqrt{\w_{u,v}}}(\bra{u,v}+\bra{v,u})\left(\frac{1}{2}\ket{t,u_1}+\frac{1}{2}\ket{t,u_2} \right)\\
&= 1-\frac{\theta(u_1,t)}{\sqrt{\w_{u_1,t}}}\frac{1}{2}\braket{t,u_1}{t,u_1}-\frac{\theta(u_2,t)}{\sqrt{\w_{u_2,t}}}\frac{1}{2}\braket{t,u_2}{t,u_2}\\
&= 1-\frac{1/|E_{2n+1}|}{\sqrt{\w_{2n+1}}}\frac{1}{2} - \frac{1/|E_{2n+1}|}{\sqrt{\w_{2n+1}}}\frac{1}{2}
= 1 - \frac{1/2}{\sqrt{2^{-2(n+2-\lceil (2n+1)/2\rceil)}}} = 0
\end{align*}
by \eq{welded-w_k}.
This is also simply following from the fact that $\theta$ is a circulation. 
A nearly identical argument works for $\ket{\psi_\star^{G'}(s)}$. 

It Thus, only remains to show orthogonality of $\ket{w}$ with the states of $\Psi^{\cal A}$ that are not star states of $G'$. The only such states are those in \eq{welded-alt-neighbourhood} (some of which are also star states of $G'$). By \clm{welded-F-basis}, it is sufficient to show orthogonality with the states $\ket{\widehat{\psi}^j(u)}$, for $j\in\{1,2\}$ and $u\in V_{\mathrm{even}}\setminus\{s\}$. 
Then letting $v_1<v_2<v_3$ be the neighbours of $u$, and appealing to \clm{welded-pos}:
\begin{align*}
\sqrt{3}\braket{w}{\psi_\star^i(u)}
&= \bra{w_u}\left(\ket{u,v_1}+\omega_3^j\ket{u,v_2}+\omega_3^{2j}\ket{u,v_3}\right)\\
&\propto \left(\bra{u,v_1}+\bra{u,v_2}+\bra{u,v_3} \right)\left(\ket{u,v_1}+\omega_3^j\ket{u,v_2}+\omega_3^{2j}\ket{u,v_3}\right)
\propto 1+\omega_3^j+\omega_3^{2j} = 0.
\end{align*}

Since we can also immediately see that:
$|\braket{w}{\psi_0}|^2  = {1}/{\w_0}$,
$\ket{w}$ is a positive witness. To complete the analysis of its complexity, we have, using $\w_{\sf M}=\w_0$, and the fact that all edges in $E_k$ have the same weight, $\w_k$ (see \eq{welded-w_k}), and flow, $\frac{1}{|E_k|}$:
\begin{align*}
\norm{\ket{w}}^2 &= \frac{1}{\w_0}+ 2\sum_{(u,v)\in\overrightarrow{E}(G)}\frac{\theta(u,v)^2}{\w_{u,v}}+\frac{1}{\w_{\sf M}}
=\frac{2}{\w_0} + 2\sum_{k=1}^{2n+1}|E_k|\frac{1}{|E_k|^2}\frac{1}{\w_k}\\
&= \frac{2}{\w_0}+2\sum_{k=1}^n\frac{1}{2^k}\frac{1}{2^{-2\lceil k/2\rceil}} + 2\sum_{k=n+1}^{2n+1}\frac{1}{2^{2n+2-k}}\frac{1}{2^{-2(n+2-\lceil k/2\rceil)}}  
= \frac{2}{\w_0}+O(n).\qedhere
\end{align*}
\end{proof}

\begin{remark}\label{rmk:flow-duality}
The reader may wonder why the weights change by a factor of 4 every two layers, rather than by a factor of 2 every layer. If we set all the weights to 1, the positive witness size is constant, while the negative witness size is exponential. If we change weights by a factor of two at each layer, the negative witness size is constant, whereas the positive witness size is exponential. With the setting of weights that we have chosen, both witness sizes are linear in $n$ (up to scaling by $\w_0$). This setting of weights and edge directions creates a perfect duality between positive and negative witnesses. For vertices $u\in V_{\mathrm{odd}}$, we include the star state, which is proportional to $\ket{\widehat{\psi}^0(u)}$ (see \clm{welded-F-basis}) in $\Psi^{\cal A}$, so the flow through $u$ must be in $\mathrm{span}\{\ket{\widehat{\psi}^1(u)},\ket{\widehat{\psi}^2(u)}\}$. Conversely, for vertices $u\in V_{\mathrm{even}}$, we include $\mathrm{span}\{\ket{\widehat{\psi}^1(u)},\ket{\widehat{\psi}^2(u)}\}$ in $\Psi^{\cal A}$, so the flow through $u$ must be proportional to $\ket{\widehat{\psi}^0(u)}$.
\end{remark}

\paragraph{Conclusion of Proof:} We now apply \thm{lin-alg-fwk} to conclude the proof of \thm{welded}. By \lem{welded-pos-analysis}, there is some constant $c$ such that setting $\w_0={1}/(cn)$, whenever $M=\{t\}$, there exists a positive witness $\ket{w}$ with 
$$\frac{\norm{\ket{w}}^2}{|\braket{w}{\psi_0}|^2}\leq c_+:=50.$$
Then by \lem{welded-neg-analysis}, there is some 
$${\cal C}_-=O(n/\w_0)=O(n^2)$$
such that whenever $M=\emptyset$, there exists a negative witness with
$\norm{\ket{w_{\cal A}}}^2 \leq {\cal C}_-$.
Then since the initial state can be prepared in ${\sf S}_q=0$ queries and ${\sf S}=O(n)$ time, and by \lem{welded-U}, the unitary can be implemented in $O(1)$ query to $O_G$, and $O(n)$ time, the phase estimation algorithm distinguishes between the cases $M=\emptyset$ and $M=\{t\}$ in 
$$O\left(0+\sqrt{{\cal C}_-}\right)=O(n)
\;\mbox{ and }\;O\left(n+\sqrt{{\cal C}_-}n\right)=O(n^2)$$
queries and time respectively.

\paragraph{Removing the Assumption that $u\in V_{\mathrm{even}}$ can be Checked:} We do not actually require an extra assumption that the algorithm can efficiently check, for a vertex $u$, if it is in $V_{\mathrm{even}}$ or $V_{\mathrm{odd}}$. Intuitively, this is because if a walker starts at $u$, she can always keep track of the parity of the distance from $u$, by keeping track of a bit that is initially 0, and flips every time she takes a step. More precisely, we can define a graph $G_0$ as follows:
$$V(G_0) = V_{\mathrm{even}}\times\{0\} \cup V_{\mathrm{odd}}\times \{1\}$$
$$E(G_0) = \{\{(u,0),(v,1)\}: \{u,v\}\in E(G), u\in V_{\mathrm{even}}\},$$
so that a walk on $G_0$ is like a walk on $G$, except that there is a bit indicating which of the two independent sets we are in, which we flip at every step. To find the neighbours of any vertex $(u,b)$, simply query $O_G$ and append $b\oplus 1$ to each of the three returned strings. We let $(s,0)$ and $(t,1)$, which are both in $V(G_0)$, take the places of $s$ and $t$.

\section{\texorpdfstring{$k$}{k}-Distinctness}\label{sec:k-dist-full}

Fix any constant $k$. Formally, $k$-distinctness is defined as follows. Given an input $x\in [q]^n$, for some $q\in {\sf poly}(n)$, decide if there exist distinct $a_1,\dots,a_k\in [n]$ such that $x_{a_1}=\dots=x_{a_k}$, called a $k$-collision. A search version of this problem asks that the algorithm find a $k$-collision if one exists. The search and decision versions are equivalent up to log factors, so we focus on the decision version. The main result of this section is a quantum algorithm that solves $k$-distinctness in $\widetilde{O}(n^{\frac{3}{4}-\frac{1}{4}\frac{1}{2^k-1}})$ time complexity (see \thm{k-dist}) for any $k\geq 3$, which is a new result for $k>3$. As a warm-up, we describe the $k=3$ case of our algorithm in \sec{3-dist}, before giving the full algorithm in \sec{k-dist-alg}. First, we describe some assumptions on the structure of the input in \sec{k-dist-assumptions}. 

\subsection{Assumptions on the Input}\label{sec:k-dist-assumptions}
We assume that either there is no $k$-collision, or there is a unique $k$-collision, $a_1,\dots,a_k\in [n]$. This is justified by the following lemma, which follows from \cite[Section 5]{ambainis2004QWalkForElementDist}.
\begin{lemma}\label{lem:unique-to-multiple}
	Fix constants $k\geq 2$ and $\lambda\in[1/2,1)$. 
	Let ${\cal A}$ be an algorithm that decides $k$-distinctness in bounded error with complexity $\widetilde{O}(n^{\lambda})$ when there is at most one $k$-collision. Then there is an algorithm ${\cal A}'$ that decides $k$-distinctness (in the general case) in bounded error in complexity $\widetilde{O}(n^{\lambda})$.
\end{lemma}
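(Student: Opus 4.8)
The plan is to reduce the general case to the unique-collision case by a standard self-reduction: randomly restrict the input to a subset of coordinates, so that with constant probability the restricted instance has exactly one $k$-collision, and then amplify. I would first handle the case where there are many $k$-collisions, which is the easy regime: if there are at least $n/k$ coordinates involved in some $k$-collision, then a random subset of $\Theta(\sqrt{n})$ coordinates contains a $k$-collision with constant probability, and a Grover-type search over $O(\sqrt n)$-size subsets, combined with an exact counting/verification step, detects this in $\widetilde O(\sqrt n) = \widetilde O(n^\lambda)$ time (using $\lambda \ge 1/2$). The main case is when the number of coordinates participating in $k$-collisions is between $1$ and roughly $n/k$; here I would subsample.

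The key steps, in order: (1) Run ${\cal A}$ directly on $x$; if it ever detects a $k$-collision, output YES. This catches instances already having at most one $k$-collision, but more importantly it is a component of the final algorithm. (2) For each scale $s = 2^0, 2^1, \dots, n$, repeat $\widetilde O(1)$ times: choose a uniformly random subset $T \subseteq [n]$ with $|T| = \lceil n/s \rceil$, form the restricted input $x|_T$ (padding with fresh distinct dummy values so it is a length-$n$ instance), and run ${\cal A}$ on $x|_T$. The point is that if the true number of coordinates lying in some $k$-collision is $\Theta(s')$ for some $s'$, then at the scale $s$ with $s' \approx s$, a Birthday-style argument shows that $x|_T$ contains a $k$-collision with probability $\Omega(1)$, while simultaneously the expected number of surviving collision-coordinates is $O(1)$, so by Markov/Poisson tail bounds $x|_T$ has a \emph{unique} $k$-collision with constant probability. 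Conditioned on uniqueness, ${\cal A}$ correctly outputs YES with probability $\ge 2/3$. (3) Output YES iff any of these $O(\log n)$ scales $\times\ \widetilde O(1)$ repetitions returns YES; the number of invocations of ${\cal A}$ is $\widetilde O(1)$, each costing $\widetilde O(n^\lambda)$, so the total is $\widetilde O(n^\lambda)$. (4) Boost the success probability from the constant gap (YES instances accepted with probability $\Omega(1)$, NO instances — which have no $k$-collision in any subset — always rejected) to $1 - 1/{\sf poly}(n)$ by repeating the whole procedure $\widetilde O(1)$ more times and taking the OR. Since we only take ORs of bounded-error subroutines that have one-sided-ish behavior (NO instances never produce a false YES from ${\cal A}$ beyond its own error, which we can also drive down), the composition is clean.

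One technical point I would spell out: to make the subsampling implementable in the query/time model without paying $\widetilde\Theta(n)$ to write down $T$, I would use a $d$-wise $\delta$-independent family of permutations (as in \defin{d-wise}), taking $T$ to be the preimage of $[\lceil n/s\rceil]$ under $\tau_s$; the tail bounds needed for the ``unique collision with constant probability'' argument only involve $O(k) = O(1)$ coordinates at a time, so $d = O(1)$-wise independence suffices, and \defin{d-wise} gives us sampling and evaluation in ${\sf polylog}(n)$ time. This is exactly the kind of derandomization the paper already sets up in \sec{prelim-prob}.

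The main obstacle is getting the probabilistic argument in step (2) to yield a genuine \emph{constant} lower bound on the event ``$x|_T$ has exactly one $k$-collision,'' uniformly over the unknown number $s'$ of collision-coordinates and over how those coordinates are distributed among one or more distinct collision-values. The cleanest route is: pick the scale $s$ so that the expected number of collision-coordinates surviving in $T$ is a fixed constant $c \in (1, k)$; then (a) the probability that at least $k$ coordinates of some single collision-value survive — i.e.\ a $k$-collision exists — is $\Omega(1)$, using that the coordinates of each value are a fixed set and the hypergeometric lower tail; and (b) the probability that \emph{two} different values each have $k$ survivors, or one value has $\ge 2k$, is $O(1)$ and bounded strictly below the probability in (a), so that conditioned on ``$\ge 1$ $k$-collision'' the instance is unique with probability bounded away from $0$. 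Handling the case where a single collision-value has very high multiplicity (say $\Theta(n)$ copies) needs the separate ``many collisions'' branch of step (1)/the $\widetilde O(\sqrt n)$ search, which is why that branch is included. Since this is precisely the argument of \cite[Section 5]{ambainis2004QWalkForElementDist}, I would present it at the level of ``choose the subsampling scale to make the expected survivor count constant, apply hypergeometric tail bounds (\lem{hypergeo}, \cor{hypergeo}), and note the $O(\log n)$ overhead from not knowing $s'$,'' rather than re-deriving every constant.
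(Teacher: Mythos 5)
Your overall strategy --- subsample at $O(\log n)$ geometric scales, drive ${\cal A}$'s error down to $1/\mathrm{poly}(n)$, take an OR --- is the standard reduction and matches the cited \cite[Section 5]{ambainis2004QWalkForElementDist}. But the probabilistic core in step (2) and claim (a) has a genuine error for $k \geq 3$: the right subsampling scale is \emph{not} the one at which the expected number of surviving collision-coordinates is a constant $c \in (1,k)$. Consider an input whose $s'$ collision-coordinates split into $r = s'/k$ distinct collision-values, each of multiplicity exactly $k$. At your scale $t = |T| \approx cn/s'$, the probability that all $k$ coordinates of a single value survive is $\approx (t/n)^k = (c/s')^k$, so the probability that \emph{any} $k$-collision survives is $\approx r(c/s')^k = c^k/(k\,s'^{k-1}) \to 0$ whenever $k \geq 2$ and $s' \to \infty$. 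Claim (a) is therefore false, and Markov on the survivor count gives only an \emph{upper} bound on $\Pr[\geq k \text{ survivors}]$, not the anti-concentration lower bound you need.

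Your ``easy case'' does not repair this: for $k \geq 3$ take $r = n/k^2$ multiplicity-$k$ values (so at least $n/k$ collision-coordinates); a random $\Theta(\sqrt{n})$-subset then contains a $k$-collision only with probability $O(n^{1-k/2}) = o(1)$, so the $\sqrt{n}$-subset search misses it. (For $k = 2$ your easy-case claim does hold, which may be why it looked plausible.) In fact the dichotomy is backwards: a single value of multiplicity $\Theta(n)$ is the \emph{easy} case for subsampling --- at $t = \Theta(1)$ the survivor count is Poisson-like with constant mean and a unique collision survives with constant probability --- whereas $\Theta(n)$ collision-coordinates spread over $\Theta(n)$ multiplicity-$k$ values is exactly the case that forces $t$ to be polynomially large, $t \approx n^{1-1/k}$, which your scale choice skips past. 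The repair is to target not the expected survivor count but $Q(t) := \sum_{j} \Pr\bigl[\geq k \text{ coordinates of value } j \text{ survive at scale } t\bigr]$: this is monotone in $t$, vanishes for $t < k$, equals the number of collision-values at $t = n$, and changes by at most a $2^{O(k)}$ factor when $t$ doubles, so some dyadic $t$ satisfies $Q(t) \in [\epsilon, 1/2]$ for a small constant $\epsilon$ depending on $k$; at that $t$, inclusion-exclusion together with negative association of the per-value survivor counts gives $\Pr[\text{unique } k\text{-collision}] = \Omega(1)$. Since you already try all $O(\log n)$ dyadic $t$, the \emph{algorithm} is fine --- only the analysis (and the unnecessary $\sqrt{n}$-branch) needs correcting.
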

This fact has been exploited in nearly every quantum algorithm for $k$-distinctness.
Another standard trick is to assume that $[n]$ is partitioned as:
$$[n]=A_1\cup \dots \cup A_k$$
such that the unique $k$-collision $(a_1,\dots,a_k)$, (if it exists) is in $A_1\times\dots\times A_k$. Towards fixing \textbf{Problem 1} from \sec{intro-3-dist}, we further partition each of $A_2,\dots,A_{k-1}$ as 
$$A_{\ell}=A_{\ell}^{(1)}\cup\dots\cup A_{\ell}^{(m_{\ell})}$$
for some $m_{\ell}$. We will choose these partitions as follows. Fix a $d$-wise independent permutation $\tau:[n]\rightarrow [n]$, for $d=\log^{2^{k-1}}(n)$ that is both efficiently computable, and efficiently invertible (see \defin{d-wise} and the discussion below).
For ${\ell}\in [k]$, define:
$$A_{\ell}=\{\tau(i):i\in \{(\ell-1)n/k+1,\dots,\ell n/k\}\}$$
and for $j\in [m_\ell]$, define:
\begin{equation}\label{eq:k-dist-partition}
A_{\ell}^{(j)}=\left\{\tau(i): i\in \left\{(\ell-1)n/k+(j-1)\frac{n}{km_{\ell}}+1,\dots,(\ell-1)n/k+ j\frac{n}{km_\ell}\right\}\right\}.
\end{equation}
\noindent Then we will make use of the following facts:
\begin{lemma}
	\begin{enumerate}
		\item For any $i\in [n]$, we can check to which $A_{\ell}^{(j)}$ it belongs in ${\sf polylog}(n)$ complexity.
		\item For any $\ell\in [k]$, we can generate a uniform superposition over $A_\ell$, and for any $j\in [m_\ell]$, we can generate a uniform superposition over $A_\ell^{(j)}$, in ${\sf polylog}(n)$ complexity.
		\item $\Pr[a_1\in A_1,\dots,a_k\in A_k]=\Omega(1)$. 
	\end{enumerate}
\end{lemma}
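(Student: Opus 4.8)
The three claims are all consequences of the explicit construction in \eq{k-dist-partition} together with the standard properties of $d$-wise $\delta$-independent permutations recorded after \defin{d-wise} (recall we design as if $\delta=0$). I would prove them in the order stated.

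\emph{Claim 1 (membership test).} Given $i\in[n]$, the plan is simply to compute $i^* := \tau^{-1}(i)$, which costs ${\sf poly}(d\log n) = {\sf polylog}(n)$ since $d=\log^{2^{k-1}}n$, and then read off $\ell$ and $j$ by integer arithmetic: $\ell$ is determined by which of the $k$ equal-sized intervals $\{(\ell-1)n/k+1,\dots,\ell n/k\}$ contains $i^*$, and within that block $j$ is determined by which of the $m_\ell$ equal sub-intervals of length $n/(km_\ell)$ contains $i^*$. All of this is $O(\log n)$ arithmetic on integers of $O(\log n)$ bits, so the total is ${\sf polylog}(n)$. (For $\ell\in\{1,k\}$ there is no sub-partition, so one only computes $\ell$.)

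\emph{Claim 2 (superposition over a block).} For a fixed $\ell$, the set $A_\ell$ is the image under $\tau$ of a fixed contiguous interval $I_\ell\subset[n]$ of known endpoints. So I would first prepare the uniform superposition $\frac{1}{\sqrt{|I_\ell|}}\sum_{i^*\in I_\ell}\ket{i^*}$ — this is a uniform superposition over an explicit interval, preparable in ${\sf polylog}(n)$ gates (standard: prepare uniform over a power-of-two range and use amplitude amplification / a few controlled rotations, or just note $n/k$ and $n/(km_\ell)$ can be taken to be such that this is clean, as is conventional in this literature) — and then apply the reversible circuit for $i^*\mapsto\tau(i^*)$, which costs ${\sf poly}(d\log n)={\sf polylog}(n)$, obtaining $\frac{1}{\sqrt{|A_\ell|}}\sum_{i\in A_\ell}\ket{i}$ since $\tau$ is a bijection. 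The same argument with the sub-interval of \eq{k-dist-partition} in place of $I_\ell$ gives the uniform superposition over $A_\ell^{(j)}$.

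\emph{Claim 3 (the collision lands in the right blocks).} Here I would use $d$-wise independence. Condition on the actual positions $a_1,\dots,a_k\in[n]$ of the unique $k$-collision (if none exists the statement is vacuous). The event $\{a_1\in A_1,\dots,a_k\in A_k\}$ is exactly the event that $\tau^{-1}(a_1),\dots,\tau^{-1}(a_k)$ land in the $k$ prescribed disjoint intervals, one each. Since $k\le d$, the joint distribution of $(\tau^{-1}(a_1),\dots,\tau^{-1}(a_k))$ is (for $\delta=0$) exactly that of $k$ distinct elements drawn uniformly without replacement from $[n]$; the probability that the $j$-th one lands in the $j$-th interval for all $j$ is $\prod_{j=1}^{k}\frac{n/k}{n-j+1} \ge (1/k)^k\cdot\frac{n^k}{n(n-1)\cdots(n-k+1)} \ge k^{-k} = \Omega(1)$ for constant $k$. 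With the true $d$-wise $\delta$-independent family one picks up an additive error $O(\delta\cdot n^k)$ per atom, hence $O(\delta\cdot n^k\cdot n^k)=o(1)$ once $\delta$ is an inverse polynomial of large enough degree, which does not affect the $\Omega(1)$ conclusion.

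\textbf{Main obstacle.} None of the three parts is deep; the only place requiring care is making the constants and ranges in Claim 2 genuinely clean — i.e.\ ensuring one can prepare a uniform superposition over an arbitrary explicit integer interval in ${\sf polylog}(n)$ exactly (or to within negligible error), and confirming divisibility assumptions like $k\mid n$, $km_\ell\mid n$ are harmless (they can be arranged by padding $[n]$ with a few dummy indices, standard in this literature). I would state these as "without loss of generality" remarks rather than belabour them. The $d$-wise independence bookkeeping in Claim 3 is the only genuinely probabilistic step, and it is exactly the reason $d$ was chosen to be a large power of $\log n$ rather than a constant — though for Claim 3 alone $d=k$ would suffice; the larger $d$ is needed for the block-structure concentration arguments used elsewhere in \sec{k-dist-full}.
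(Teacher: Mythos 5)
Your proposal is correct and follows essentially the same route as the paper's (terser) proof: invert $\tau$ for membership testing, prepare the superposition over the pre-image interval and push it through $\tau$ (with $\tau^{-1}$ used to make the bijection reversible in-place, which is what the paper spells out and your phrase ``apply the reversible circuit for $i^*\mapsto\tau(i^*)$'' compresses), and appeal to $d>k$-wise independence for the $\Omega(1)$ bound. Your write-up supplies more detail than the paper (which dispatches Item 3 in one line), but the underlying argument is the same.
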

\begin{proof}
Since $d\in{\sf polylog}(n)$, we can assume (see discussion below \defin{d-wise}) that both $\tau$ and $\tau^{-1}$ can be computed in ${\sf polylog}(n)$ complexity. Then for Item 1, it is enough to compute $\tau^{-1}(i)$. 

For Item 2, we describe how to perform a superposition over $\{\tau(i):i\in \{\ell,\dots, r\}\}$ for any integers ${\ell}<r$. First generate the uniform superposition over the set $\{\ell,\dots,r\}$, and compute $\tau$ in a new register, to get (up to normalization)
$\sum_{i=\ell}^r\ket{i}\ket{\tau(i)}$.
Then uncompute the first register by computing $\tau^{-1}$ of the second register and adding it (bitwise, mod 2) into the first. 

Finally, Item 3 follows from the $d$-wise independence of $\tau$, since $d>k$. 
\end{proof}
\noindent For any disjoint subsets of $[n]$, $S_1,\dots,S_{\ell}$, define:
\begin{equation}
	{\cal K}(S_1,\dots,S_{\ell})=\{(i_1,\dots,i_{\ell})\in S_1\times\dots\times S_{\ell}: x_{i_1}=\dots=x_{i_{\ell}}\}.\label{eq:collision-set}
\end{equation}
Then without loss of generality, we can assume that for each $A_j^{(\ell)}$, ${\cal K}(A_1,\dots,A_{j-1},A_j^{(\ell)})=\Theta(|A_j^{(\ell)}|)$, because we can simply pad the input with $\Theta(n)$ extra $(k-1)$-collisions, evenly spread across the blocks.

\subsection{Warm-up: 3-Distinctness Algorithm}\label{sec:3-dist}

In this section, we prove the following upper bound on the time complexity of $3$-distinctness.
\begin{theorem}\label{thm:3-dist}
There is a quantum algorithm that decides 3-distinctness with bounded error in $\widetilde{O}(n^{5/7})$ complexity.
\end{theorem}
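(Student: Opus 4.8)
The plan is to instantiate the Multidimensional Quantum Walk Framework (\thm{full-framework}) with the graph sketched in \sec{intro-3-dist}, augmented by the fault-avoiding set partitions and the alternative neighbourhoods in \fig{alt-neighbourhoods-S}. First I would fix the parameters: with $r_1\approx n^{4/7}$, $r_2\approx n^{6/7}$, block size $n/(3m_2)$ chosen so that searching a new block for a collision with $R_1$ costs $\widetilde{O}(n^{\eps})$, and $t_2 = 3m_2 r_2/n$ blocks making up $R_2$. The vertex set is $V_0\cup V_0^+\cup V_1\cup V_2\cup V_2^+\cup V_3$ (inserting the intermediate layers $V_0^+$ for the alternative-neighbourhoods gadget and $V_2^+$ for the block-search edge-composition gadget), with $\sigma$ the uniform distribution on $V_0$ and $M$ the set of marked $V_3$ vertices. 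The data stored at each vertex is $D_1(R),D_2(R)$ as described, with $R_1$ partitioned into $R_1(\{1\}),R_1(\{2\}),R_1(\{1,2\})$ and $R_2$ into $R_2(1),R_2(2)$.

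Next I would verify the subroutine hypotheses. The \textbf{Setup Subroutine} prepares the uniform superposition over $V_0$: query a uniform $R_1$ (cost $\widetilde O(r_1)$), then for a uniform choice of $t_2$ blocks, use Grover/amplitude amplification to find all elements of $R_2$ colliding with $R_1$ and record only those in $D_2$; this costs $\widetilde O(r_1 + \sqrt{n\cdot t_2/m_2}\cdot|{\cal K}|^{1/2})=\widetilde O(r_1+r_2^{1/2}\cdot\dots)$, which must be shown to be $\widetilde O(n^{5/7})$. The \textbf{Star State Generation Subroutine} has polylog cost everywhere, including at the $V_0^+$ vertices where $\Psi_\star(u)$ is the three-element set of \fig{alt-neighbourhoods-S} (generate an orthonormal basis for the span of those three vectors — a fixed low-dimensional computation). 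The \textbf{Transition Subroutine}: all edges have $\widetilde O(1)$ cost except the $V_2\to V_2^+\to V_2$-type edges which incur the block-search cost $\widetilde O(n^{\eps})$; here I would invoke edge composition, interpreting $\tilde E$ as the edges where the block-search amplitude amplification fails and checking conditions \textbf{TS1}, \textbf{TS2} ($\widetilde{\cal W}=o(1/{\cal R}^{\sf T})$). The \textbf{Checking Subroutine} for $V_3$ is polylog.

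Then I would exhibit the flow $\theta$ for the \textbf{Positive Condition}. Assuming the unique $3$-collision $(a_1,a_2,a_3)$ exists, the flow pushes a unit from $V_0$ through to $M$ along all paths that (i) add $a_1$ to $R_1$ on the $V_0\to V_0^+\to V_1$ step, adding it to the fault-free part of $R_1$ determined by the Type ($0$/$1$/$2$) of the configuration, which — crucially — requires the flow on the three $V_0^+\to V_1$ edges to be chosen as in \fig{alt-neighbourhoods-S} ($+1$ on the relevant child edge, compensating negative flow on $(u,v^{\{1,2\}})$) so that \textbf{P2} holds for all three states in $\Psi_\star(u)$ simultaneously; (ii) add a block of $A_2$ containing $a_2$; (iii) add $a_3$. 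Conditions \textbf{P1} (no flow on failed block-search edges — choose $\theta$ to avoid blocks where the search fails, using that failure probability is tiny), \textbf{P3}, \textbf{P4} (the flow out of $V_0$ is $\sigma$-like because the starting configuration is uniform), and \textbf{P5} (${\cal E}^{\sf T}(\theta)\le {\cal R}^{\sf T}$, computed layer by layer with the $n^{\eps}$-length edges contributing the edge-composition overhead) need checking; the energy computation is the analogue of the $k=3$ effective-resistance estimate in \cite{belovs2013ElectricWalks}. For the \textbf{Negative Condition} I would bound ${\cal W}(G^{\sf T})={\cal W}^{\sf T}$: the total weight is dominated by $|V_2|$ or $|V_3|$ times edge weights times lengths, giving roughly $n^{r_2/n}\cdot\dots$ — this is where the choice $r_1\ll r_2$ and the savings from not querying all of $R_2$ matter. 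Finally, plugging ${\cal R}^{\sf T},{\cal W}^{\sf T}$ into \thm{full-framework} gives complexity $\widetilde O({\sf S}+\sqrt{{\cal R}^{\sf T}{\cal W}^{\sf T}})$, which I would optimize over $r_1,r_2,m_2,\eps$ to land at $\widetilde O(n^{5/7})$.

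The main obstacle I expect is verifying condition \textbf{P2} for the alternative-neighbourhood vertices together with \textbf{P1}: one must produce a single flow $\theta$ that is orthogonal to \emph{all three} candidate star states at every $V_0^+$ vertex (forcing the delicate sign pattern on the $v^{\{1\}},v^{\{2\}},v^{\{1,2\}}$ edges), while simultaneously routing around every edge where a randomized subroutine might fail, and still having energy ${\cal E}^{\sf T}(\theta)=\widetilde O(n^{5/7})$ with the correct divergence profile \textbf{P3},\textbf{P4}. Showing these constraints are mutually compatible — essentially that the fault-avoidance argument of \cite{belovs2012kDist} can be carried out at the level of flows rather than just witnesses — is the technical heart of the proof; the rest is parameter bookkeeping and standard amplitude-amplification cost accounting.
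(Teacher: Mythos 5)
Your high-level plan matches the paper's proof: instantiate Theorem~\ref{thm:full-framework} on a walk over tuples $(R_1,R_2)$ with the $R_1$ and $R_2$ partitions, put the alternative neighbourhoods at a layer $V_0^+$ between $V_0$ and $V_1$, design the flow so it is simultaneously orthogonal to all three candidate star states at every $V_0^+$ vertex (which forces the $+1,+1,-1$ sign pattern on the $\{1\},\{2\},\{1,2\}$ edges and the compensating backward flow you describe), route around the $\tilde E$ edges where the block search would be too slow, and optimize $t_1,t_2$. The analysis of \textbf{P1}--\textbf{P5} and the negative-weight bound is exactly the paper's, as is the conclusion.

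However, the extra layer $V_2^+$ ``for the block-search edge-composition gadget'' reflects a misunderstanding of how edge composition works in Theorem~\ref{thm:full-framework}. Edge composition does \emph{not} require introducing a new vertex layer: the framework already treats an edge $(u,v)$ whose transition costs ${\sf T}_{u,v}$ steps as a path of length ${\sf T}_{u,v}$ when computing ${\cal W}^{\sf T}$ and ${\cal E}^{\sf T}$. In the paper, the expensive $\widetilde O(\sqrt{n/m_2})=\widetilde O(n^{1/7})$ block search is simply the transition cost ${\sf T}_2$ assigned to the edges $E_2\subset V_1\times V_2$ (adding a block $j_2$ to $R_2$, then Grover-searching $A_2^{(j_2)}$ for collisions with $R_1$); the $V_2\to V_3$ transition (adjoining $i_3\in A_3$) is trivial. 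The paper's vertex set for $k=3$ is exactly $V_0\cup V_0^+\cup V_1\cup V_2\cup V_3$, with no ``$+$'' layer anywhere beyond $V_0^+$. Your $V_2^+$ is both unnecessary and, as you have located it (between $V_2$ and $V_3$, with the block search happening there), inconsistent with the fact that the block addition must precede the selection of $i_3$. Also a small numerical slip: $r_1=|R_1|$ should be $\approx n^{5/7}$, not $n^{4/7}$ (it is $r_2\approx n^{6/7}$ and $t_2\approx n^{4/7}$). Neither of these sinks the approach — dropping $V_2^+$, moving the block-search cost onto the $V_1\to V_2$ edges, and re-optimizing recovers exactly the paper's proof.
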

\noindent This upper bound is not new, having been proven in~\cite{belovs2013TimeEfficientQW3Distintness}, but its proof in our new framework is a useful warm-up for \sec{k-dist-alg}, where we generalise the algorithm to all constants $k>3$. Throughout this section, $\widetilde{O}$ will surpress polylogarithmic factors in $n$. 

Our algorithm will roughly follow the one described in \sec{intro-3-dist}, but with the modifications, also briefly mentioned in \sec{intro-3-dist}, needed to circumvent the problems with the approach, for which we need our new Mutltidimensional Quantum Walk Framework, \thm{full-framework}. We start by setting up these modifications, before formally defining the graph that will be the basis for our quantum walk algorithm, and then performing the necessary analysis to apply \thm{full-framework}.

\vskip10pt

Recall from \sec{intro-3-dist} that the basic idea of our quantum walk algorithm is to walk on sets $R = (R_1,R_2)$ where $R_1\subset A_1$ and $R_2\subset A_2$. 

\paragraph{Towards Fixing Problem 1:} The first problem identified in \sec{intro-3-dist} is that $|R_2|$ is larger than the total time we would like our algorithm to spend, meaning we do not want to spend $|R_2|$ steps sampling and writing down the set $R_2$. To this end, we have partitioned $A_2$ into equal sized blocks:
$$A_2=A_2^{(1)}\cup\dots\cup A_2^{(m_2)},$$
(see \sec{k-dist-assumptions} for details of how this partition is chosen). We redefine $R_2$ as follows: whenever we want to choose a subset of $A_2$, we do so by selecting $R_2\subset [m_2]$, which encodes the subset of $A_2$:
$$\overline{R}_2:=\bigcup_{j\in R_2} A_2^{(j)}.$$
We choose $m_2$ so that $|A_2^{(j)}|=\frac{n}{3m_2}$ is large enough so that for a random set $R_1$ of size $r_1$, the expected size of ${\cal K}({R}_1,A_2^{(j)})$ is constant, so we set $m_2=\Theta(r_1)$. Finally, we choose $t_2=|R_2|$ so that $|\overline{R}_2|=t_2\frac{n}{3m_2}$ is the desired size of $R_2$ (denoted $r_2$ in \cite{belovs2012kDist}) and for consistency also define $t_1 = r_1$. We will find that the optimal parameter settings are $t_1=n^{5/7}$ and $t_2=n^{4/7}$ (so $m_2=\Theta(n^{5/7})$). 

\paragraph{Towards Fixing Problem 2:} In order to solve the second problem discussed in \sec{intro-3-dist}, following a similar construction in \cite{belovs2012kDist}, each of $R_1$ and $R_2$ will be a \emph{tuple} of disjoint sets, as follows. We have $R_1=(R_1(\{1\}),R_1(\{2\}),R_1(\{1,2\}))$ where $R_1(\{1\})$, $R_1(\{2\})$, and $R_1(\{1,2\})$ are disjoint subsets of $A_1$ of size $t_1$; and $R_2=(R_2(1),R_2(2))$, where $R_2(1)$ and $R_2(2)$ are disjoint subsets of $[m_2]$ of size $t_2$ (note that this alters $\abs{R_1}$ and $\abs{R_2}$ by a constant factor), meaning for $s\in\{1,2\}$, 
$$\overline{R}_2(s):=\bigcup_{j\in R_2(s)}A_2^{(j)}$$
are disjoint subsets of $A_2$ of size $t_2\frac{n}{3m_2}$. We also use $R_1$ and $R_2$ to denote the union of sets in the tuple, so for example, $j\in \overline{R}_2$ means $j\in \overline{R}_2(1)\cup\overline{R}_2(2)$. 
For a vertex labelled by $R = (R_1,R_2)$, we maintain \emph{data} with the following components. We query everything in $R_1$, so for $S\in 2^{\{1,2\}}\setminus\emptyset$, we define:
\begin{equation*}\label{eq:3-dist-D1}
\begin{split}
D_1(R_1(S)) &:= \{(i_1,x_{i_1}): i_1\in R_1(S)\}\\
D_1(R) &:= (D_1(R_1(\{1\})),D_1(R_1(\{2\})),D_1(R_1(\{1,2\})))
\end{split}
\end{equation*}
and for $s\in \{1,2\}$ define
\begin{equation}\label{eq:3-dist-D2}
\begin{split}
D_2(R_2(s)|R_1) &:= \bigcup_{S\subseteq \{1,2\}:s\in S} \{(i_1,i_2,x_{i_1}): i_2\in \overline{R}_2(s), i_1\in R_1(S),x_{i_1}=x_{i_2}\}\\
D_2(R) &:= (D_2(R_2(1)|R_1),D_2(R_2(2)|R_1)).
\end{split}
\end{equation}
Finally we let
\begin{align*}
D(R) &:= (D_1(R),D_2(R)).
\end{align*}
So to summarise, we query everything in $R_1$, but we only query those things in $\overline{R}_2$ that have a collision in $R_1$, and even then, not in every case: if $i_2\in \overline{R}_2(s)$, we only query it if it has a collision with $R_1(\{s\})$ or $R_1(\{1,2\})$ (see \fig{set-partitions}). This partially solves \textbf{Problem 2}, because it ensures that if we choose to add a new index $i_1$ to $R_1$, we have three choices of where to add it, and either all of those choices are fine (they don't introduce a \emph{fault} in $D_2(R)$), or exactly one of them is fine. 

For a finite set ${\cal S}$, and positive integers $r$ and $\ell$, we will use the notation 
\begin{equation}
\binom{{\cal S}}{r^{(\ell)}}:=\binom{{\cal S}}{\underbrace{r,\dots,r}_{\ell\;\mathrm{times}}}\label{eq:weird-binom}
\end{equation}
to denote the set of all $\ell$-tuples of disjoint subsets of ${\cal S}$, each of size $r$. 
Finally, we define:
\begin{equation}
{\binom{\cal S}{r^{(\ell)}}^+} := \bigcup_{\ell'=1}^{\ell} \left(\binom{\cal S}{r^{(\ell'-1)},r+1,r^{(\ell-\ell')}}\right),\label{eq:weird-binom+}
\end{equation}
to be the set of all $\ell$-tuples of disjoint sets of ${\cal S}$ such that exactly one of the sets has size $r+1$, and all others have size $r$. We let $\mu(S)$ denote the smallest element of $S$.

\subsubsection{The Graph \texorpdfstring{$G$}{G}}\label{sec:3-dist-G}

We now define $G$, by defining disjoint vertex sets $V_0,V_0^+,V_1,V_2,V_3$ whose union will make up $V(G)$, as well as the edges between adjacent sets. 

\paragraph{$V_0$:} We first define 
\begin{equation}
V_0:=\left\{v^0_{R_1,R_2}=(0,R_1,R_2,D(R_1,R_2)):(R_1,R_2)\in\binom{A_1}{t_1^{(3)}}\times\binom{[m_2]}{t_2^{(2)}}\right\}\label{eq:3-dist-V1}
\end{equation}
on which the initial distribution will be uniform: $\sigma(v^0_{R_1,R_2})=\frac{1}{|V_0|}$. We implicitly store all sets including $R_1$, $R_2$ and $D(R_1,R_2)$ in a data structure with the properties described in \sec{data}. This will only be important when we analyse the time complexity of the setup and transition subroutines. 

\paragraph{$V_0^+$ and $E_0^+\subset V_0\times V_0^+$:} Next, each vertex in $V_0^+$ will be labeled by a vertex in $V_0$, along with an index $i_1\not\in R_1$ that we have decided to add to one of $R_1(\{1\})$, $R_1(\{1,2\})$ or $R_1(\{2\})$. We have not yet decided to which of the three sets it will be added, nor added it. 
\begin{align}
V_0^+ &:=\left\{v^0_{R_1,R_2,i_1}:=((0,+),R_1,R_2,D(R_1,R_2),i_1): v^0_{R_1,R_2}\in V_0,i_1\in A_1\setminus R_1\right\},\nonumber\\
\mbox{so }|V_0^+| &= |V_0| (n/3-3t_1).\label{eq:V2}
\end{align}
There is an edge between $v^0_{R}\in V_0$ and $v^0_{R,i_1}\in V_0^+$ for any $i_1\in A_1\setminus R_1$, and for any $v^0_{R,i_1}\in V_0^+$, $v^0_{R}\in V_0$ is its unique in-neighbour, so we define edge label sets (see \defin{QW-access})
$$L^+(v_{R}^0):=A_1\setminus R_1
\;\mbox{ and }\;
L^-(v_{R,i_1}^0):=\{\leftarrow\},$$
and let $f^+_{v_{R}^0}(i_1) = v_{R,i_1}^0$, and $f^-_{v_{R,i_1}^0}(\leftarrow) = v_{R}^0$. Here we have added the superscript $+$ (respectively $-$) to denote the restriction of $f_{v_{R}^0}$ to $L^+(v_{R}^0)$ (respectively $L^-(v_{R,i_1}^0)$). By writing $f^+_{v_{R}^0}(i_1)$ and $f^-_{v_{R,i_1}^0}(\leftarrow)$, we emphasise that the index $i_1$ is an element of $L^+(v_{R}^0)$ and the index $\leftarrow$ is an element of $L^-(v_{R,i_1}^0)$. We stick to this notation convention for the rest of the section. 

\noindent We let $E_0^+$ be the set of all edges,
$$E_0^+:=\left\{\left(v^0_{R},v^0_{R,i_1}\right):v^0_{R}\in V_0, i_1 \in A_1 \setminus R_1\right\},$$
and set $\w_e = \w_0^+=1$ for all $e \in E_0^+$. This together with \eq{V2} implies that
\begin{equation}
|E_0^+|=|V_0^+|=|V_0|(n/3-3t_1).\label{eq:E1}
\end{equation}

Note that we break the move from $V_0$ to $V_1$, where we add some $i_1$ to one of the sets $R_1(\{1\})$, $R_1(\{2\})$ or $R_1(\{1,2\})$, into two steps: First we select an index $i_1$ to add -- that's the step we have just described, from $V_0$ to $V_0^+$. Next, we choose one of the three sets and add $i_1$ there -- that's the step we are about to describe, from $V_0^+$ to $V_1$. The reason we do this in two steps is that we will use the alternative neighbourhoods trick to ensure we can efficiently implement the second step, only adding $i_1$ to a set where it won't cause a fault, despite not being able to efficiently decide which sets these are. It is useful to have this somewhat more complicated-to-implement part of the walk isolated, in vertices of constant degree, as the vertices of $V_0^+$ will be. Note that in defining the graph, as we are currently doing, this complication does not appear, except that the reader may notice that it looks difficult to implement a step of the walk from $V_0^+$ -- it is indeed more complicated, requiring the use of alternative neighbourhoods later. Let us continue with the description of the graph.

\paragraph{$V_1$ and $E_1\subset V_0^+\times V_1$:} Continuing, vertices in $V_1$ represent having added an additional index to $R_1$, so we define:
\begin{align}
V_1(S) &:=\Bigg\{v^1_{R_1,R_2}=(1,R_1,R_2,D(R_1,R_2)): (R_1,R_2)\in\binom{A_1}{t_1^{(3)}}^+\times \binom{[m_2]}{t_2^{(2)}},|R_1(S)|=t_1+1\Bigg\},\nonumber\\
V_1 &:=\bigcup_{S\in 2^{\{1,2\}}\setminus\{\emptyset\}}V_1(S)\nonumber\\
\mbox{so }|V_1|&=3\binom{n/3}{t_1+1,t_1,t_1}\binom{m_2}{t_2,t_2}
= 3\frac{n/3-3t_1}{t_1+1}\binom{n/3}{t_1,t_1,t_1}\binom{m_2}{t_2,t_2}
= \frac{n-9t_1}{t_1+1}|V_0|.
\label{eq:V3}
\end{align}

For a vertex $v_{R,i_1}^0\in V_0^+$ we have chosen an index $i_1$ to add to $R_1$, but we have not yet decided to which part of $R_1$ it should be added. A transition to a vertex in $V_1$ consists of choosing an $S \in 2^{\{1,2\}} \setminus \{\emptyset\}$ and adding $i_1$ to $R_1(S)$, so
$$L^+(v_{R,i_1}^0):=2^{\{1,2\}} \setminus \{\emptyset\},$$
and $f^+_{v_{R,i_1}^0}(S) = v^1_{R'}$, where $R'$ is obtained from $R$ by inserting $i_1$ into $R_1(S)$. Note that not all of these labels represent edges with non-zero weight, as we want to ensure that adding $i_1$ to $R_1(S)$ does not introduce a \emph{fault}, meaning that adding $i_1$ to $R_1(S)$ should not require that any collision involving $i_1$ be added to $D_2(R)$. 

Viewing transitions in $E_1$ from the other direction, a vertex $v^1_{R'} \in V_1(S)$ is connected to a vertex $v_{R,i_1}^0\in V_0^+$ if we can obtain $R$ from $R'$ by removing $i_1$ from $R_1'(S)$, and if doing so does not require an update to $D_2(R')$, meaning there do not exist any $s \in S$ and $i_2\in \overline{R}'_2(s)$ such that $x_{i_1}=x_{i_2}$. So for any $v^1_{R'}\in V_1(S)$, we let
\begin{equation}
\begin{split}
	L^-(v_{R'}^1):&=\{i_1 \in R_1'(S): \nexists s \in S, i_2\in \overline{R}'_2(s)\mbox{ s.t. }x_{i_1}=x_{i_2} \} \\
	&=\{i_1\in R_1'(S):\nexists i_2\mbox{ s.t. }(i_1,i_2,x_{i_1})\in D_2(R')\},
\end{split}\label{eq:3-dist-L-minus-1}
\end{equation}
and $f^-_{v_{R'}^1}(i_1) = v^0_{R'\setminus \{i_1\},i_1}$. It is currently not clear how to define $E_1$, the set of (non-zero weight) edges between $V_0^+$ and $V_1$, because $|V_0^+|\cdot |L^+(v_{R,i_1}^0)| > |V_1|\cdot |L^-(v_{R'}^1)|$, so in particular, we cannot assign nonzero weights $\w_{u,i}$ to all $u\in V_0^+$, $i\in L^+(u)$, because that would make $E_1$ larger than we have labels $L^-$ for. We will instead assign non-zero weights $\w_{v,j}$ to those edges where $v \in V_1$ and $j\in L^-(v)$. That is, define:
$$E_1 := \left\{\left(f^-_{v_{R'}^1}(i_1),v^1_{R'}\right)= \left(v^0_{R'\setminus \{i_1\},i_1},v^1_{R'}\right): v^1_{R'} \in V_1, i_1 \in L^-(v_{R'}^1)\right\}$$
and give weight $\w_e = \w_1 = 1$ to all $e \in E_1$. This means that for $u=v_{R,i_1}^0\in V_0^+$, there are some $S\in 2^{\{1,2\}}\setminus\{\emptyset\}$ with $\w_{u,S}=0$ -- namely those with $f_v^{-1}(u)\not\in L^-(v)$ for $v=f_u(S)$. To investigate which $S$ this applies to, we introduce for all $v_{R,i_1}^0\in V_0^+$:
\begin{equation}
{\cal I}(v^{0}_{R,i_1}):=\left\{s\in \{1,2\}: \exists i_{2}\in \overline{R}_{2}(s)\mbox{ s.t. }x_{i_2}=x_{i_1}\right\},\label{eq:3-dist-cal-I}
\end{equation}
so ${\cal I}(v^{0}_{R,i_1})$ consists of those $s \in \{1,2\}$ where a fault occurs if $i_1$ is added to $R_1(S)$ such that $s\in S$. Note that since we assume the unique 3-collision has a part in $A_3$, $i_1$ can have at most one colliding element in $\overline{R}_2$, and so it cannot be in both $\overline{R}_2(1)$ and $\overline{R}_2(2)$, which are disjoint. Thus, ${\cal I}(v^{0}_{R,i_1}) \subsetneq \{1,2\}$ -- so it is $\emptyset$, $\{1\}$, or $\{2\}$ (this heavy-handed notation is overkill here, but we are warming up for $k$-distinctness, where it is necessary). We now have the following:
\begin{lemma}\label{lem:3-dist-E1}
	Let $R^{S \leftarrow i_1}$ be obtained from $R$ by inserting $i_1$ into $R_1(S)$. Then
	\begin{align*}
		E_1 &= \left\{\left(v^{0}_{R,i_1},v^{1}_{R^{S \leftarrow i_1}}\right): v_{R,i_1}^0\in V_0^+, S \in 2^{\{1,2\}\setminus {\cal I}(v_{R,i_1}^0)}\setminus \{\emptyset\} \right\}.\label{eq:3-dist-E1}
	\end{align*}
	So for all $v^0_{R,i_1}\in V_0^+$, and $S\in L^+(v^0_{R,i_1})$,
	$\displaystyle\w_{v^0_{R,i_1},S} = \left\{\begin{array}{ll}
		\w_1 =1 & \mbox{if }S\cap {\cal I}(v^0_{R,i_1}) = \emptyset\\
		0 & \mbox{else.}
	\end{array}\right.$
\end{lemma}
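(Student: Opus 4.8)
\textbf{Proof proposal for \lem{3-dist-E1}.}

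The plan is to unwind the definitions and show that the two descriptions of $E_1$ coincide, and then read off the weight formula as an immediate corollary. First I would recall that $E_1$ was defined as $\{(f_{v^1_{R'}}(i_1),v^1_{R'}) : v^1_{R'}\in V_1, i_1\in L^-(v^1_{R'})\}$, and that by \eq{3-dist-L-minus-1}, for $v^1_{R'}\in V_1(S)$ the label set $L^-(v^1_{R'})$ consists exactly of those $i_1\in R'_1(S)$ for which there is no $s\in S$ and $i_2\in\overline{R}'_2(s)$ with $x_{i_1}=x_{i_2}$; equivalently, no $i_2$ with $(i_1,i_2,x_{i_1})\in D_2(R')$. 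So the task reduces to: starting from a pair $(v^0_{R,i_1},S)$ with $v^0_{R,i_1}\in V^+_0$, characterize when the vertex $v^1_{R^{S\leftarrow i_1}}$ obtained by inserting $i_1$ into $R_1(S)$ is actually adjacent to $v^0_{R,i_1}$ in $E_1$ — i.e., when removing $i_1$ from $(R^{S\leftarrow i_1})_1(S)=R_1(S)\cup\{i_1\}$ does not require touching $D_2$.

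The key step is the translation between the ``removal'' condition on $R'=R^{S\leftarrow i_1}$ and the set ${\cal I}(v^0_{R,i_1})$ defined in \eq{3-dist-cal-I}. Note that $\overline{R}'_2(s)=\overline{R}_2(s)$ for both $s$, since the transition $E_1$ does not alter $R_2$. Hence the condition ``$\nexists s\in S,\ i_2\in\overline{R}'_2(s)$ s.t.\ $x_{i_1}=x_{i_2}$'' is precisely ``$\nexists s\in S,\ i_2\in\overline{R}_2(s)$ s.t.\ $x_{i_2}=x_{i_1}$,'' which by the definition of ${\cal I}(v^0_{R,i_1})$ says exactly $S\cap{\cal I}(v^0_{R,i_1})=\emptyset$. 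Combined with the constraint $S\in 2^{\{1,2\}}\setminus\{\emptyset\}$ from $L^+(v^0_{R,i_1})$, this gives $S\in 2^{\{1,2\}\setminus{\cal I}(v^0_{R,i_1})}\setminus\{\emptyset\}$, matching the claimed description of $E_1$. I would also remark, as the lemma text does, that the assumption that the unique $3$-collision has a part in $A_3$ forces $i_1$ to collide with at most one element of $\overline{R}_2$, so $\overline{R}_2(1)$ and $\overline{R}_2(2)$ being disjoint means ${\cal I}(v^0_{R,i_1})\subsetneq\{1,2\}$, hence $2^{\{1,2\}\setminus{\cal I}(v^0_{R,i_1})}\setminus\{\emptyset\}$ is always nonempty (it is $\{\{1\},\{2\},\{1,2\}\}$, $\{\{2\}\}$, or $\{\{1\}\}$ according to whether ${\cal I}$ is $\emptyset$, $\{1\}$, or $\{2\}$), which is what makes the Type~0/1/2 case distinction of \sec{intro-3-dist} valid.

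The weight formula is then immediate: since $E_1$ was given weight $\w_1=1$ on every edge it contains and weight $0$ is assigned to every $(u,S)$ with $u\in V^+_0$, $S\in L^+(u)$ that is not in $E_1$, the displayed piecewise expression is just a restatement of the characterization of $E_1$ just proved. I do not expect a genuine obstacle here — the lemma is essentially a bookkeeping unpacking of \eq{3-dist-L-minus-1}, \eq{3-dist-cal-I}, and the definition of $E_1$; the only place requiring a little care is being explicit that $R_2$ (and hence $\overline{R}_2(s)$) is unchanged along $E_1$-edges, so that the condition phrased in terms of $R'$ can be rephrased in terms of $R$, and noting the $A_3$-promise to justify ${\cal I}(v^0_{R,i_1})\neq\{1,2\}$.
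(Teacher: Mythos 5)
Your proposal is correct and follows essentially the same route as the paper's own proof: both unfold the definitions of $E_1$, $L^-(v^1_{R'})$, and ${\cal I}(v^0_{R,i_1})$ and observe that the ``no fault'' condition on $R'$ translates to $S\cap{\cal I}(v^0_{R,i_1})=\emptyset$ (the paper does this via two explicit set inclusions, you via a direct equivalence, which is the same substance). Your explicit remark that $\overline{R}'_2(s)=\overline{R}_2(s)$ because $E_1$-transitions leave $R_2$ untouched is a useful clarification of a step the paper leaves implicit, but not a different argument.
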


\begin{proof}
	Let $E_1'$ be the right-hand side of the identity in the theorem statement, so we want to show $E_1=E_1'$. 
	Fix any $v_{R,i_1}^0\in V_0^+$ and $S \in 2^{\{1,2\}\setminus {\cal I}(v_{R,i_1}^0)}\setminus \{\emptyset\}$, and let $R'=R^{S\leftarrow i_1}$. Then since $S\cap {\cal I}(v_{R,i_1})=\emptyset$, by definition of ${\cal I}(v_{R,i_1})$ there does not exist any $s \in S$ and $i_2\in \overline{R}'_2(s)$ such that $x_{i_1}=x_{i_2}$. Hence, $L^-(v_{R'}^1)$, which implies $E_1'\subseteq E_1$.
	
	For the other direction, fix any $v_{R'}^1\in V_1(S)$ and $i_1\in L^-(v_{R'}^1)$. Since $i_1\in R'_1(S)$, we have $v^0_{R' \setminus \{i_1\},i_1}\in V_0^+$ and $(R'\setminus\{i_1\})^{S\leftarrow i_1}=R'$. Since by definition of $L^-(v_{R'}^1)$ there does not exist $s \in S$ and $i_2\in \overline{R}'_2(s)$ such that $x_{i_1}=x_{i_2}$, we immediately have $S\cap {\cal I}(v_{R'\setminus\{i_1\},i_1})=\emptyset$. This implies $E_1\subseteq E_1'$.
\end{proof}

\noindent From \eq{V2} we now have:
\begin{equation}
|E_1|\leq 3|V_0^+|=3|V_0|(n/3-3t_1).\label{eq:E2}
\end{equation}

\paragraph{$V_2$ and ${E}_2\subset V_1\times V_2$:} Vertices $v_{R}^1\in V_1(S)$ represent having added an additional index $i_1$ to $R_1(S)$, so $|R_1(S)|=t_1+1$. A vertex $v_{R_1,R_2'}^2\in V_2$ is adjacent to $v_R^1$ if $R_2'$ is obtained from $R_2$ by adding $j_2\not\in R_2$ to $R_2(s)$ for some choice of $s\in\{1,2\}$.
We will not let this choice of $s$ be arbitrary though and instead, in order to simplify things in the more complicated $k$-distinctness setting, we require that 
$j_2$ be added to $R_2(\mu(S))$, where $\mu(S)$ denotes the minimum element of $S$. 
\begin{align*}
V_2(S) &:= \Bigg\{v^2_{R}=(2,R,D(R)):R\in\binom{A_1}{t_1^{(3)}}^+\times \binom{[m_2]}{t_2^{(2)}}^+ ,
\abs{R_1(S)} = t_1+1, \abs{R_2(\mu(S))} = t_2+1 \Bigg\},\\
V_2 &:=\bigcup_{S\in 2^{\{1,2\}}\setminus\{\emptyset\}}V_2(S). \label{eq:V-2-Ss}
\end{align*}
\noindent This means that
\begin{equation}
\begin{split}
|V_2|&=3\binom{n/3}{t_1+1,t_1,t_1}\binom{m_2}{t_2+1,t_2} = 3\frac{n/3-3t_1}{t_1+1}\binom{n/3}{t_1,t_1,t_1}\frac{m_2-2t_2}{t_2+1}\binom{m_2}{t_2,t_2}
 = O\left(\frac{nm_2}{t_1t_2}|V_0|\right).
\end{split}\label{eq:V4}
\end{equation}
We move from $v^1_R\in V_1$ to $v^2_{R'}\in V_2$ by selecting some $j_2\in [m_2]\setminus R_2$ to add to $R_2$; and from $v^2_{R'}$ to $v^1_R$ by selecting some $j_2$ to remove from $R_2$, so for $v^1_R\in V_1(S)$ and $v^2_{R'}\in V_2(S)$, we let
$$L^+(v^1_{R}):= [m_2]\setminus R_2 
\mbox{ and }
L^-(v^2_{R'}):=R'_2(\mu(S)).$$
The sets $L^+(v^1_R)$ and $L^-(v^1_R)$ (defined in \eq{3-dist-L-minus-1}) should be disjoint, but this does not appear to be the case. To ensure this, we implicitly append a label $\leftarrow$ to every label in $L^-(u)$ for any $u$, and $\rightarrow$ to every label in $L^+(u)$. 
We let $f^+_{v^1_{R}}(j_2) = v_{R_1,R_2^{\mu(S) \leftarrow j_2}}^2$ when $v_R^1\in V_1(S)$, and $f^-_{v^2_{R_1,R_2'}}(j_2) = v_{R_1,R'_2 \setminus \{j_2\}}^1$. Accordingly we define $E_2(S)$ to be the set of all such edges:
\begin{align*}
	E_2:&=\bigcup_{S\in 2^{\{1,2\}}\setminus\{\emptyset\}}\{(v^1_{R},v_{R_1,R_2^{\mu(S) \leftarrow j_2}}^2):v^1_{R}\in V_1(S), j_2 \in [m_2]\setminus R_2\} \\
	&=\bigcup_{S\in 2^{\{1,2\}}\setminus\{\emptyset\}}\left\{\left(v^1_{R_1,R'_2\setminus \{j_2\}},v_{R_1,R'_2}^2\right):v^2_{R_1,R'_2}\in V_2(S), j_2 \in R_2(\mu(S))\right\}.
\end{align*}
We set $\w_e = \w_2=\sqrt{n/m_2}$ for all $e \in E_2$, and observe, using \eq{V3}, that:
\begin{equation}
|E_2|=(m_2-2t_2)|V_1|=\frac{(m_2-2t_2)(n-9t_1)}{t_1+1}|V_0|. \label{eq:E3}
\end{equation}

\paragraph{The Final Stage: $V_3$ and $E_3$:} The last stage is very simple, as every vertex in $V_3$ represents having added an additional index to each of $R_1, R_2$ and chosen some $i_3 \in A_3$:
$$V_3:=\{v^3_{R_1,R_2,i_3}=(3,R_1,R_2,D(R_1,R_2),i_3):v^2_{R_1,R_2}\in V_2, i_3\in A_3\}.$$
There is an edge between $v^2_{R}\in V_2$ and $v^3_{R,i_3}\in V_3$ for any $i_3\in A_3$, and for any $v^3_{R,i_3}\in V_3$, $v^2_{R}$ is its unique (in-)neighbour, so we define
$$L^+(v_{R}^2):=A_3
\mbox{ and }
L(v_{R,i_3}^3)=L^-(v_{R,i_3}^3):=\{\leftarrow\},$$
and let $f^+_{v_{R}^2}(i_3) = v_{R,i_3}^3$, and $f^-_{v_{R,i_3}^3}(\leftarrow) = v_{R}^2$. We let $E_3$ be the set of all such edges,
$$E_3=\left\{\left(v^2_{R},v^3_{R,i_3}\right):v^2_{R}\in V_2, i_3 \in A_3\right\},$$
and set $\w_e = \w_3=1$ for all $e \in E_3$. Then using \eq{V4} we observe
\begin{equation}
|E_3|=\frac{n}{3}|V_2|= O\left(\frac{n^2m_2}{t_1t_2}|V_0|\right).\label{eq:E4}
\end{equation}

\begin{table}
\renewcommand{\arraystretch}{1.5}
\centering
\begin{tabular}{ |c|c|c|c|c|}
\hline
 $u$ & $j\in L^-(u)$ & $f^-_u(j)$ & $i\in L^+(u)$ & $f^+_u(i)$\\
\hline
\hline
$v^0_{R}\in V_0$ & $\emptyset$ & & $ i_1\in A_1\setminus R_1$ & $v_{R,i_1}^0$\\
\hline
$v^0_{R,i_1}\in V_0^+$ & $\leftarrow$ & $v_R^0$ & $S\in 2^{\{1,2\}} \setminus \{\emptyset\}$ & $v_{R^{S\leftarrow i_1}}^1$\\
\hline
$v^1_{R}\in V_1(S)$ & $i_1\in R_1(S):d_R^{\rightarrow}(i_1)=0$ & $v_{R\setminus\{i_1\},i_1}^0$ & $j_2\in [m_2]\setminus R_2$ & $v_{R^{\mu(S)\leftarrow j_2}}^2$\\
\hline
$v^2_{R}\in V_2(S)$ & $j_2\in R_2(\mu(S))$ & $v_{R\setminus\{j_2\}}^1$ &  $i_3\in A_3$ & $v_{R,i_3}^3$\\
\hline
$v^3_{R,i_3}\in V_3$ & $\leftarrow$ & $v_{R}^{2}$ & $\emptyset$ &\\
\hline
\end{tabular}
\caption{A summary of the vertex sets and the labels of edges coming into and out of each vertex. 
Foreshadowing \sec{k-dist-alg}, we here define $d_R^{\rightarrow}(i_1)$ to be 0 if and only if there is no $i_2\in \bigcup_{s\in S}R_2(s)$ such that $x_{i_1}=x_{i_2}$. Here $R^{\mu(S)\leftarrow j_2}$ is obtained from $R$ by inserting $j_2$ into $R_2(\mu(S))$, where $\mu(S)$ is the minimum element of $S$. We remark that $L^-(u)$ and $L^+(u)$ should always be disjoint. To ensure that this holds, we implicitly append a $\leftarrow$ label to all of $L^-(u)$ and a $\rightarrow$ label to all of $L^+(u)$.
}\label{tab:3-dist-index-sets}
\end{table}

\paragraph{The Graph $G$:} The full graph $G$ is defined by:
\begin{align*}
V(G)&=V_0\cup V_0^+ \cup V_1\cup V_2\cup V_3\\
\mbox{and }\overrightarrow{E}(G) &= \{(u,v):u\in V(G),i\in L^+(u), \w_{u,i}\neq 0\} = E_0^+ \cup E_1 \cup E_2 \cup E_3,
\end{align*}
where the sets $L^+(u)$ are summarised in \tabl{3-dist-index-sets}, and the condition under which $\w_{u,i}=0$ can be found in \lem{3-dist-E1}. Non-zero edge weights are summarised in \tabl{3-dist-weights}.

\paragraph{The Marked Set and Checking Cost:}  In the notation of \thm{full-framework}, we let $V_{\sf M}=V_3$, and we will define a subset $M\subseteq V_3$ as follows. 
If $(a_1,a_2,a_3)\in A_1\times A_2\times A_3$ is the unique 3-collision (see \sec{k-dist-assumptions}), we let 
\begin{equation}
M = \left\{v^3_{R_1,R_2,i_3}\in V_3: \exists S\in 2^{\{1,2\}}\setminus\{\emptyset\}, \mbox{ s.t. }a_1\in R_1(S),a_2\in \overline{R}_2(\mu(S)), a_3=i_3\right\},\label{eq:3-dist-M}
\end{equation}
and otherwise $M=\emptyset$.
Recall that $v^3_{R_1,R_2,i_3}=(3,R_1,R_2,D(R_1,R_2),i_3)$, where $D(R_1,R_2)$ includes $D_2(R)$, defined in \eq{3-dist-D2}, storing all pairs $(i_1,i_2,x_{i_1})$ such that $x_{i_1}=x_{i_2}$ and $\exists S\in 2^{\{1,2\}}$ and $s\in S$ with $i_1\in R_1(S)$ and $i_2\in R_2(s)$.
Thus, we can decide if $v^3_{R_1,R_2,i_3}\in V_3$ is marked by querying $i_3$ to obtain $x_{i_3}$ and looking it up (see \sec{data}) in $D_2(R)$ to see if we find some $(i_1,i_2,x_{i_3})$, in which case, it must be that $a_1=i_1$, $a_2=i_2$ and $a_3=i_3$. Thus, the checking cost is at most
\begin{equation}
{\sf C}=O(\log n).\label{eq:3-dist-C}
\end{equation}

\begin{table}
\renewcommand{\arraystretch}{1.5}
\centering
\begin{tabular}{|c|c|c|}
\hline
Edge set & Weights & Complexity\\
\hline
\hline
$E_0^+\subset V_0\times V_0^+$ & $\w_0^+=1$ & ${\sf T}_0^+=\widetilde{O}(1)$\\
\hline
$E_1\subset V_0^+\times V_1$ & $\w_1=1$ & ${\sf T}_1=\widetilde{O}(1)$\\
\hline
$E_2\subset V_1\times V_2$ & $\w_2=\sqrt{n/m_2}$ & ${\sf T}_2=\widetilde{O}(\sqrt{n/m_2})$\\
\hline
$E_3\subset V_2\times V_3$ & $\w_3=1$ & ${\sf T}_3=\widetilde{O}(1)$\\
\hline
\end{tabular}
\caption{A summary of the weights and complexities (see \sec{3-dist-transitions}) of each edge set.}\label{tab:3-dist-weights}
\end{table}

\subsubsection{The Star States and their Generation}\label{sec:3-dist-star-states}

We define a set of alternative neighbourhoods for $G$ (see \defin{alternative}). 
For all $u\in V(G)\setminus V_0^+$, we define $\Psi_\star(u)=\{\ket{\psi_\star^G(u)}\}$, which by \tabl{3-dist-index-sets} is equal to the following: for $u=v_{R_1,R_2}^0 \in V_0$,
\begin{equation}
	\ket{\psi_\star^G(u)} = \sum_{i_1 \in A_1 \setminus R_1} \sqrt{\w_0^+}\ket{v_{R_1,R_2}^0,i_1};\label{eq:3-star-v0}
\end{equation}
for $u=v_{R_1,R_2}^1 \in V_1(S)$,
\begin{equation}
	\ket{\psi_\star^G(u)} = -\sum_{\substack{i_1 \in R_1(S):\\ \nexists i_2,\; (i_1,i_2,x_{i_1}) \in D_2(R)}} \sqrt{\w_1}\ket{v_{R_1,R_2}^1,\leftarrow,i_1} + \sum_{j_2 \in [m_2] \setminus R_2} \sqrt{\w_2}\ket{v_{R_1,R_2}^1,\rightarrow,j_2};\label{eq:3-star-v1}
\end{equation}
for $u=v_{R_1,R_2}^2 \in V_2(S)$,\footnote{Here we explicitly include the $\rightarrow$ and $\leftarrow$ parts of each element of $L^+(u)$ and $L^-(u)$, which are normally left implict, in order to stress that the first and second sum are orthogonal.}
\begin{equation}
	\ket{\psi_\star^G(u)} = -\sum_{\substack{j_2 \in R_2(\mu(S))}} \sqrt{\w_2}\ket{v_{R_1,R_2}^2,\leftarrow,j_2} + \sum_{i_3 \in A_3} \sqrt{\w_3}\ket{v_{R_1,R_2}^2,\rightarrow,i_3};\label{eq:3-star-v2}
\end{equation}
and finally for $u=v_{R_1,R_2,i_3}^3 \in V_3$,
\begin{equation}
	\ket{\psi_\star^G(u)} = -\sqrt{\w_3}\ket{v_{R_1,R_2,i_3}^3,\leftarrow}.\label{eq:3-star-v3}
\end{equation}

From \tabl{3-dist-index-sets}, as well as the description of $\w$ from \lem{3-dist-E1}, we can see that for $u=v_{R,i_1}^0\in V_0^+$, 
$$\ket{\psi^G_\star(u)}=-\sqrt{\w_0^+}\ket{u,\leftarrow}+\sum_{S_1\subseteq\{1,2\}\setminus{\cal I}(u):S_1\neq\emptyset}\sqrt{\w_1}\ket{u,S_1}.$$
To generate this state, one would have to compute ${\cal I}(u)$ (see \eq{3-dist-cal-I}), which would require finding any $i_2\in R_2$ such that $x_{i_1}=x_{i_2}$, which is too expensive. Hence, we simply add all three options, for possibilities ${\cal I}(u)\in\{\emptyset,\{1\},\{2\}\}$ (see also \fig{alt-neighbourhoods-S}),
to $\Psi_\star(u)$:
\begin{equation}
\begin{split}
\Psi_\star(u) := \{&\ket{\psi_\star^{\emptyset}(u)}:=\sqrt{\w_0^+}\ket{u,\leftarrow}+\sqrt{\w_1}\ket{u,\{1\}}+\sqrt{\w_1}\ket{u,\{1,2\}}+\sqrt{\w_1}\ket{u,\{2\}},\\
&\ket{\psi_\star^{\{1\}}(u)}:=\sqrt{\w_0^+}\ket{u,\leftarrow}+\sqrt{\w_1}\ket{u,\{2\}},\\
&\ket{\psi_\star^{\{2\}}(u)}:=\sqrt{\w_0^+}\ket{u,\leftarrow}+\sqrt{\w_1}\ket{u,\{1\}}
\} \ni \ket{\psi_\star^G(u)}.
\end{split}\label{eq:3-dist-stars}
\end{equation}
Note that it is important that each state in $\bigcup_{u\in V_0^+}\Psi_\star(u)$ (and therefore each $\ket{\psi_\star^{G}(u)}$) have at least one outgoing (i.e.~forward) edge. Otherwise, it would be impossible to satisfy \textbf{P2} of \thm{full-framework} (or equivalently, Item 2 of \lem{3-dist-positive}). This is satisfied because ${\cal I}(u)$ is always a proper subset of $\{1,2\}$.

We now describe how to generate the states in $\bigcup_{u\in V(G)}\Psi_\star(u)$ in $\widetilde{O}(1)={\sf polylog}(n)$ complexity (see \defin{alternative}). We will make use of the following lemma.

\begin{lemma}\label{lem:const-stars}
Let $V'\subseteq V(G)\setminus V_0\cup V_{\sf M}$ be such that there exists some constant $c$ such that for all $u\in V'$, $L(u)\subseteq \{0,1\}^c$. Suppose for all $u\in V'$, 
$$\Psi_\star(u)=\{\ket{u}\ket{\phi_\ell}:\ell\in [d']\}$$
for some constant $d'$, and states $\ket{\phi_{\ell}}\in \mathrm{span}\{\ket{j}:j\in \{0,1\}^c\}$. 
Then for some $d\leq d'$, there is an orthonormal basis $\overline{\Psi}(u)=\{\ket{\overline{\psi}_{u,1}},\dots,\ket{\overline{\psi}_{u,d}}\}$ for $\mathrm{span}\{\Psi_\star(u)\}$, for each $u\in V'$, and a map $U_\star'$ that can be implemented in cost $O(1)$ such that for all $u\in V'$ and $\ell\in[d]$, $U_\star'\ket{u,\ell}=\ket{\overline{\psi}_{u,\ell}}$. 
\end{lemma}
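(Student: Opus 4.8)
\textbf{Proof plan for \lem{const-stars}.}

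The key observation is that the map $U_\star'$ we must build acts \emph{uniformly} in $u$: it depends on $u$ only through the first register, which it preserves, and it acts on the second register in a way that is the same for every $u\in V'$. So the plan is to first forget about $u$ entirely and reduce the problem to a fixed-dimensional linear-algebra task on $\mathrm{span}\{\ket{j}:j\in\{0,1\}^c\}$: we have a fixed finite list of states $\ket{\phi_1},\dots,\ket{\phi_{d'}}$ spanning a subspace $\Phi$ of dimension $d\le d'\le 2^c$, and we want a unitary $V$ on this $2^c$-dimensional space, implementable in $O(1)$ cost (since $c$ is a constant, \emph{any} unitary on a constant number of qubits costs $O(1)$ by the gate-counting convention of \sec{model}), such that $V\ket{\ell}=\ket{\overline{\psi}_\ell}$ for $\ell\in[d]$, where $\{\ket{\overline{\psi}_1},\dots,\ket{\overline{\psi}_d}\}$ is some orthonormal basis of $\Phi$.

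First I would produce the orthonormal basis $\{\ket{\overline{\psi}_\ell}\}_{\ell=1}^d$ of $\Phi=\mathrm{span}\{\ket{\phi_1},\dots,\ket{\phi_{d'}}\}$ by Gram--Schmidt on the list $\ket{\phi_1},\dots,\ket{\phi_{d'}}$, discarding any vector that is already in the span of the previous ones; this fixes $d=\dim\Phi$. Next, since $d\le 2^c$, I extend $\{\ket{\overline{\psi}_\ell}\}_{\ell=1}^d$ to a full orthonormal basis $\{\ket{\overline{\psi}_\ell}\}_{\ell=1}^{2^c}$ of $\mathrm{span}\{\ket{j}:j\in\{0,1\}^c\}$ (again routine, e.g. by completing with standard-basis vectors and re-orthonormalizing). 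Define $V:=\sum_{\ell=1}^{2^c}\ket{\overline{\psi}_\ell}\bra{\ell}$, a unitary on $c$ qubits, hence implementable exactly in $O(1)$ elementary gates. Then set $U_\star':=\sum_{u}\ket{u}\bra{u}\otimes V$ acting on $\mathrm{span}\{\ket{u,\ell}\}$; on the branch with register value $u$ this is just $V$ applied to the second register, so $U_\star'\ket{u,\ell}=\ket{u}\ket{\overline{\psi}_\ell}=:\ket{\overline{\psi}_{u,\ell}}$ for all $u\in V'$ and $\ell\in[d]$, and $\{\ket{\overline{\psi}_{u,\ell}}\}_{\ell=1}^d$ is by construction an orthonormal basis of $\Psi_\star(u)=\{\ket{u}\ket{\phi_\ell}:\ell\in[d']\}=\ket{u}\otimes\Phi$. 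The implementation cost of $U_\star'$ is the cost of applying the constant-size circuit for $V$ to a fixed $c$-qubit sub-register, which is $O(1)$; note $U_\star'$ does not need to touch the $u$-register at all, so there is no dependence on $|V'|$ or on the bit-length of the $u$-labels.

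The only place where one must be slightly careful — and the part I expect to take the most thought — is matching the exact conventions of \defin{alternative}: there the star-state generator is a single map $U_\star$ defined on \emph{all} of $V(G)$, so when this lemma is invoked one will need to combine the $U_\star'$ produced here (for $u\in V'$) with the analogous maps for $V_0$, $V_{\sf M}$, and any other vertex classes, controlled on which class $u$ belongs to, using that class-membership is checkable in low complexity. Within the lemma itself, the subtlety is just that different $u$ could in principle have $L(u)$ of different sizes, but the hypothesis $L(u)\subseteq\{0,1\}^c$ with a \emph{single} constant $c$, together with the hypothesis that $\Psi_\star(u)=\{\ket{u}\ket{\phi_\ell}\}$ uses the \emph{same} fixed states $\ket{\phi_\ell}$ for all $u\in V'$, makes the construction genuinely $u$-independent; I would flag explicitly that we are relying on these two uniformity assumptions, since that is exactly what collapses an a-priori-per-vertex construction into one constant-size circuit.
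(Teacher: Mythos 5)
Your proof is correct and matches the paper's own argument: both reduce to a single $u$-independent unitary acting on the constant-size label register, obtained by fixing any orthonormal basis of $\mathrm{span}\{\ket{\phi_\ell}\}$ and mapping $\ket{\ell}$ to its $\ell$-th basis vector, then tensoring with the identity on the $u$-register. Your elaborations (Gram--Schmidt, extending to a full basis, the remark on composing with other vertex classes) are just spelled-out versions of what the paper states in one line.
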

\begin{proof}
First note that by the assumptions we are making, $d:=\dim \mathrm{span}\{\Psi_\star(u)\}$ for all $u\in V'$, and $d$ is a constant. Fix any orthonormal basis 
$\{\ket{\overline{\phi}_1},\dots,\ket{\overline{\phi}_d}\}$ for 
$\mathrm{span}\{\ket{\phi_{\ell}}:\ell\in [d']\},$
which is independent of $u$. Since the basis lives in a constant-dimensional subspace, the map:
$C_\star:\ket{\ell}\mapsto \ket{\overline{\phi}_\ell}$
acts on a constant number of qubits, and so can be implemented in $O(1)$ elementary gates. We complete the proof by letting $U_\star'=I\otimes C_{\star}$, and observe that: $U_\star'\ket{u,\ell} = \ket{u}\ket{\overline{\phi}_\ell}=:\ket{\overline{\psi}_{u,\ell}}$.
\end{proof}

\begin{lemma}\label{lem:3-dist-star-states}
The states $\Psi_\star=\{\Psi_\star(u)\}_{u\in V(G)}$ can be generated in $\widetilde{O}(1)$ complexity. 
\end{lemma}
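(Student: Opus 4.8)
The plan is to reduce the claim to \lem{const-stars} by verifying its hypotheses for each of the vertex sets making up $V(G)$, and handling the few vertices where the star state is not ``constant-shaped'' separately. Recall from \defin{alternative} that generating $\Psi_\star$ in complexity ${\sf A}_\star$ means producing a unitary $U_\star$ that, for each $u$, maps the computational basis states $\ket{u,k}$ for $k\in[a_u]$ onto an orthonormal basis of $\mathrm{span}\{\Psi_\star(u)\}$. The overall strategy is: (i) partition $V(G)$ into the groups $V_0^+$, $V_1$, $V_2$, and $\{V_0,V_3\}$ (the last two are in $V_0\cup V_{\sf M}$, but their star states \eq{3-star-v0} and \eq{3-star-v3} must still be generated); (ii) for each group, express $\ket{\psi_\star^G(u)}$ (or each state of $\Psi_\star(u)$) in a form $\ket{u}\ket{\phi}$ where $\ket{\phi}$ depends only on a constant amount of ``local'' data that we can compute in ${\sf polylog}(n)$; (iii) apply \lem{const-stars} to the parts where the local Hilbert space is genuinely constant-dimensional, and give a direct $O(\log n)$-cost argument for the parts (like $V_1$, $V_2$, $V_0$, $V_3$) where the support has size polynomial in $n$ but the state is a uniform (signed) superposition over a set we can coherently prepare.

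Concretely, I would proceed group by group. For $V_0^+$: by \eq{3-dist-stars}, $\Psi_\star(u)$ for $u=v^0_{R,i_1}$ consists of three fixed states over the constant-size label set $\{\leftarrow,\{1\},\{2\},\{1,2\}\}$ with weights $\w_0^+=\w_1=1$; these three states are the same for every such $u$ (they do not depend on $R,i_1$), so \lem{const-stars} applies directly with $c=O(1)$ and $d'=3$, giving an $O(1)$-cost map. For $V_1(S)$, $u=v^1_{R}$: by \eq{3-star-v1}, $\ket{\psi_\star^G(u)}$ is a signed uniform superposition over $L^-(u)=\{i_1\in R_1(S): \nexists i_2,(i_1,i_2,x_{i_1})\in D_2(R)\}$ (with weight $\w_1=1$) and $L^+(u)=[m_2]\setminus R_2$ (with weight $\w_2=\sqrt{n/m_2}$). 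Using the data structure of \sec{data} stored at the vertex, we can (a) prepare a uniform superposition over $R_1(S)$ and, using $D_2(R)$, flag and amplitude-filter out the $i_1$ that have a recorded collision (this requires only that we can query, for each $i_1$, whether some $(i_1,i_2,x_{i_1})\in D_2(R)$, which is a $\mathrm{polylog}$ lookup); (b) prepare a uniform superposition over $[m_2]\setminus R_2$; and (c) combine the two branches with the correct relative amplitude $\sqrt{\w_1|L^-(u)|}:\sqrt{\w_2|L^+(u)|}$, which requires knowing $|L^-(u)|$ and $|R_2|$ — both queryable from the data structure in $\mathrm{polylog}(n)$. This yields the normalized $\ket{\psi_\star^G(u)}$ in $\widetilde{O}(1)$; since $a_u=1$ here this is exactly what \defin{alternative} demands. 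The cases $V_2(S)$ ($u=v^2_{R}$, signed uniform superposition over $R_2(\mu(S))$ and over $A_3$, both coherently preparable with known sizes), $V_0$ ($u=v^0_{R_1,R_2}$, uniform superposition over $A_1\setminus R_1$), and $V_3$ ($u=v^3_{R,i_3}$, the single state $-\sqrt{\w_3}\ket{u,\leftarrow}$, trivial) are all analogous and easier.

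The remaining bookkeeping is to assemble these group-wise maps into a single unitary $U_\star$: we first compute, from $u$'s label, which group it lies in (each check — whether $u\in V_0$, $V_0^+$, $V_1$, $V_2$, or $V_3$, and in which $V_\ell(S)$ — is a $\mathrm{polylog}(n)$ computation on the stored label), write this into an ancilla, apply the appropriate controlled sub-map, and uncompute the ancilla. This is exactly the pattern used in the proof of \clm{fwk-star-unitary}, so it costs only the sum of the individual costs plus $\mathrm{polylog}(n)$ overhead. I would also remark that $\Psi_\star(u)=\{\ket{\psi_\star^G(u)}\}$ for all $u\in V_0\cup V_{\sf M}=V_0\cup V_3$, consistent with the hypothesis of \thm{full-framework} that singleton alternative neighbourhoods are used on $V_0\cup V_{\sf M}$. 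The main obstacle I anticipate is the $V_1$ case: one must be careful that the set $L^-(v^1_R)$ — indices of $R_1(S)$ with \emph{no} recorded collision in $D_2(R)$ — can be both coherently superposed over \emph{and} have its cardinality extracted efficiently, since the correct normalization of the two-branch state depends on $|L^-(v^1_R)|$; this is where the assumed properties of the quantum data structure in \sec{data} (querying counts, generating uniform superpositions over stored sets) do the real work, and I would spell out which data-structure operations are invoked.
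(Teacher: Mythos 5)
Your proof is correct and takes essentially the same route as the paper: a case split on the vertex type, \lem{const-stars} for $V_0^+$ (whose local label set is constant-sized), amplitude amplification from a stored-set superposition for the remaining cases, and a conditional assembly via an ancilla recording the vertex type. One place where you are actually more careful than the paper's own wording is the two-branch rotation for $V_1$ and $V_2$: the paper writes the rotation as $-\sqrt{\w_1}\ket{\leftarrow}+\sqrt{\w_2}\ket{\rightarrow}$ (resp.\ $-\sqrt{\w_2}\ket{\leftarrow}+\sqrt{\w_3}\ket{\rightarrow}$), but as you correctly observe the rotation must weight the branches by $\sqrt{\w_1|L^-(u)|}:\sqrt{\w_2|L^+(u)|}$ so that, after filling in the normalized uniform superpositions, the result is actually proportional to $\ket{\psi_\star^G(u)}$; taken literally the paper's fixed ratio would not give a state close enough to the true star state (the discrepancy is far larger than the inverse-polynomial tolerance needed for the $O(\sqrt{{\cal R}^{\sf T}{\cal W}^{\sf T}})$ applications of $U_{\cal AB}$). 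Your observation that this requires querying $|L^-(u)|$ (and $|R_2|$) from the data structure is exactly the right technical dependency to flag, and is what the framework's assumed query access to $\w_u$ in \defin{QW-access} and the construction in \clm{fwk-star-unitary} are there to supply.
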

\begin{proof}
The description of a vertex $u\in V(G)$ begins with a label indicating to which of $V_0,V_0^+,V_1,V_2,V_3$ it belongs. 
Thus, we can define subroutines $U_0,U_{0,+},U_1,U_2,U_3$ that generate the star states in each vertex set respectively,
and then $U_\star=\sum_{\ell = 0}^3\ket{\ell}\bra{\ell}\otimes U_{\ell} + \ket{0,+}\bra{0,+}\otimes U_{0,+}$ will generate the star states in the sense of \defin{alternative}. 

We begin with $U_0$. For $v^0_{R}\in V_0$, we have $\Psi_\star(v^0_R)=\{\ket{\psi_\star^G(v^0_R)}\}$, where $\ket{\psi_\star^G(v^0_R)}$ is as in \eq{3-star-v0}. Thus, implementing the map $U_0: \ket{v^0_R}\ket{0}\mapsto \propto \ket{\psi_\star^G(v^0_R)}$ is as simple as generating a uniform superposition over $A_1$, and then using $O(\log n)$ rounds of amplitude amplification to get inverse polynomially close to the uniform superposition over $A_1\setminus R_1$. 

For $U_{0,+}$, since all $v_{R,i_1}^0\in V_0^{+}$ have the same star states, modulo $v_{R,i_1}^0$ itself, with constant-sized label set $L=\{\{1\},\{2\},\{1,2\},\leftarrow\}$, we can apply \lem{const-stars}, to get a $U_{0,+}$ that costs $O(1)$. 

We continue with $U_1$. For $v^1_{R}\in V_1$, we have $\Psi_\star(v^1_R)=\{\ket{\psi_\star^G(v^1_R)}\}$, where $\ket{\psi_\star^G(v^1_R)}$ is as in \eq{3-star-v1}. Thus, to implement the map $U_1: \ket{u}\ket{0}\mapsto \propto \ket{\psi_\star^G(u)}$, we first compute (referring to \tabl{3-dist-weights} for the weights):
$$\ket{u,0}\mapsto\propto \ket{u}\left(-\sqrt{\w_1}\ket{\leftarrow}+\sqrt{\w_2}\ket{\rightarrow}\right)\ket{0}=\ket{u}\left(-\ket{\leftarrow}+(n/m_2)^{1/4}\ket{\rightarrow}\right)\ket{0},$$
which can be implemented by a $O(1)$-qubit rotation. Then conditioned on $\leftarrow$, generate a uniform superposition over $i_1\in R_1$, and then use $O(\log n)$ rounds of amplitude amplification to get inverse polynomially close to a superposition over $i_1\in R_1$ such that there is no $(i_1,i_2,x_{i_1})\in D_2(R)$. We have used the fact that our data structure supports taking a uniform superposition (see \sec{data}). Finally, conditioned on $\rightarrow$, generate a uniform superposition over $j_2\in [m_2]\setminus R_2$. 

The implementation of $U_2$ is similar, but instead (see \eq{3-star-v2}) we perform a single qubit rotation to get $-\sqrt{\w_2}\ket{\leftarrow}+\sqrt{\w_3}\ket{\rightarrow}$ in the last register, and then conditioned on the value of this register, we either generate a uniform superposition over $R_2(\mu(S))$ or $A_3$.

Finally, referring to \eq{3-star-v3}, we see that the implementation of $U_3$ is trivial.
We Thus, conclude that $U_\star$ can be implemented in $\widetilde{O}(1)={\sf polylog}(n)$ complexity.
\end{proof}

\subsubsection{The Transition Subroutines}\label{sec:3-dist-transitions}

In this section we show how to implement the transition map  $\ket{u,i}{\mapsto}\ket{v,j}$ for $(u,v)\in\overrightarrow{E}(G)$ with $i=f_u^{-1}(v)$ and $j=f_v^{-1}(u)$ (see \defin{QW-access}). We do this by exhibiting uniform (in the sense of \lem{uniform-alg}) subroutines ${\cal S}_0^{+},{\cal S}_1,{\cal S}_2,{\cal S}_3$ that implement the transition map for $(u,v)$ in $E_0^{+},E_1,E_2,E_3$ respectively (defined in \sec{3-dist-G}) whose union makes up $\overrightarrow{E}(G)$. In \cor{3-dist-transitions}, we will combine these to get a quantum subroutine (\defin{variable-time}) for the full transition map. 

\begin{lemma}\label{lem:3-dist-T1}
There is a uniform subroutine ${\cal S}_0^{+}$ such that for all $(u,v)\in E_0^{+}$ with $i=f_u^{-1}(v)$ and $j=f_v^{-1}(u)$, ${\cal S}_0^+$ maps $\ket{u,i}$ to $\ket{v,j}$ with error 0 in complexity ${\sf T}_{u,v}={\sf T}_0^+=\widetilde{O}(1)$.
\end{lemma}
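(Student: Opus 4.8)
The plan is to exhibit an explicit, input-independent sequence of elementary operations that implements the map $\ket{v^0_R, i_1} \mapsto \ket{v^0_{R,i_1}, \leftarrow}$ for $(v^0_R, v^0_{R,i_1}) \in E_0^+$, and argue it satisfies the four conditions of \defin{variable-time} (plus the uniformity conditions of \lem{uniform-alg}). Recall from \tabl{3-dist-index-sets} that the edge label out of $v^0_R \in V_0$ is $i=i_1 \in A_1 \setminus R_1$, and $j = f_{v^0_{R,i_1}}^{-1}(v^0_R) = \leftarrow$. The target vertex $v^0_{R,i_1} = ((0,+), R_1, R_2, D(R_1,R_2), i_1)$ differs from the source $v^0_R = (0, R_1, R_2, D(R_1,R_2))$ only in the class label ($0 \to (0,+)$) and in the additional stored index $i_1$. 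So the transition is essentially trivial: we already have $i_1$ written down (it is the label register), and $D(R)$ does not change because moving into $V_0^+$ does not require querying anything new. Concretely, first I would rewrite the class-label register from $0$ to $(0,+)$ — a constant-size operation. Then I would move the content of the label register holding $i_1$ into the designated ``extra index'' slot of the vertex data structure (an insert operation, which by \sec{data} costs ${\sf polylog}(n)$), and overwrite the outgoing-label register with the symbol $\leftarrow$ marking the unique backward neighbour. Since all of these are reversible, constant- or polylog-cost steps that do not depend on the input $x$, padding out to ${\sf T}_0^+ = \widetilde{O}(1)$ gives the claimed subroutine.

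For the verification, I would check the conditions of \defin{variable-time} in order. Condition \ref{item:select} (implementability of $\sum_t \ket{t}\bra{t}\otimes U_t$ in ${\sf polylog}$): this follows from \lem{uniform-alg}, since each $U_t$ is drawn from a constant list of elementary gates / data-structure primitives applied to registers determined by simple functions $g, g'$ of the step counter $t$, all computable in ${\sf polylog}(n)$. Condition \ref{item:error}: the map is exact — no queries, no amplitude amplification, no approximation — so $\epsilon_{u,v} = 0$ for all $(u,v) \in E_0^+$. Condition \ref{item:known-times}: all edges in $E_0^+$ have the same cost ${\sf T}_0^+$, so both $x \mapsto {\sf T}_x$ and $y \mapsto {\sf T}_{f^{-1}(y)}$ are constant functions, trivially computable. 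Condition \ref{item:non-confusion}: since $v^0_R$ is stored inside every intermediate state (we only append/modify the class label and the $i_1$ slot, never touching $R_1, R_2, D(R)$), the states for distinct source vertices remain in orthogonal subspaces ${\cal Z}_{u,i}$; this is exactly the ``use $x$ as a control'' situation mentioned after \defin{variable-time}, with $x = (v^0_R, i_1)$ itself serving as the control.

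The main subtlety — though not really an obstacle — is bookkeeping around the internal flag register and the fixed running time discussed in \sec{fwk-transitions}: the subroutine must set its internal flag to $1$ after exactly ${\sf T}_0^+$ steps and never before, and ${\sf T}_0^+$ must be even. This is handled by the generic device described there (run an internal timer, set the flag at the prescribed step, and uncompute the timer using the fact that ${\sf T}_0^+$ is a known constant recoverable from the output label $\leftarrow$ together with the class label $(0,+)$), so it adds only ${\sf polylog}(n)$ overhead and a constant-factor slowdown. A second point worth stating carefully is that the data-structure operations must be \emph{coherent} (depend only on the set $R$, not on insertion order) — but this is exactly the assumption we made in \sec{data}, so inserting $i_1$ into the data structure produces a canonical representation, ensuring the different branches interfere correctly. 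With these remarks in place, the lemma follows, and I would note that essentially the same argument (a trivial structural rewrite with no queries) will reappear for ${\cal S}_3$ on $E_3$.
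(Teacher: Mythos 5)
Your proposal is correct and takes essentially the same approach as the paper: the transition is a trivial structural rewrite (update the class label $0\mapsto(0,+)$, move $i_1$ from the edge-label register into the vertex description, write $\leftarrow$), which is exact, input-independent, and costs ${\sf polylog}(n)$. The paper states this in one sentence; you have filled in the careful verification of the four conditions of \defin{variable-time} and the timer/flag bookkeeping from \sec{fwk-transitions}, which is consistent with the framework.
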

\begin{proof}
For $(v^0_{R},v^0_{R,i_1})\in E_0^+$, ${\cal S}_0^+$ should implement the map:
\begin{align*}
\ket{v^0_{R},i_1} &\mapsto \ket{v^0_{R,i_1},\leftarrow}\\
\equiv \; 
\ket{(0,R,D(R)),i_1} &\mapsto \ket{((0,+),R,D(R),i_1),\leftarrow}.
\end{align*}
It is easy to see that this can be done in ${\sf polylog}(n)$ complexity (and is therefore trivially uniform): we just need to do some accounting to move $i_1$ from the edge label register to the vertex register, and update the first register $\ket{0}\mapsto \ket{(0,+)}$. 
\end{proof}

\begin{lemma}\label{lem:3-dist-T2}
There is a uniform subroutine ${\cal S}_1$ such that for all $(u,v)\in E_1$ with $i=f_u^{-1}(v)$ and $j=f_v^{-1}(u)$, ${\cal S}_1$ maps $\ket{u,i}$ to $\ket{v,j}$ with error 0 in complexity ${\sf T}_{u,v}={\sf T}_1=\widetilde{O}(1)$.
\end{lemma}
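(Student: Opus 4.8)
The plan is to exhibit an explicit uniform subroutine ${\cal S}_1$ implementing the transition map along $E_1$ and argue it runs in ${\sf polylog}(n)$ complexity with zero error. Recall from \tabl{3-dist-index-sets} and \lem{3-dist-E1} that an edge in $E_1$ has the form $(v^0_{R,i_1},v^1_{R^{S\leftarrow i_1}})$ where $u=v^0_{R,i_1}\in V_0^+$, $i=f_u^{-1}(v)=S\in 2^{\{1,2\}\setminus {\cal I}(u)}\setminus\{\emptyset\}$, and $v=v^1_{R'}\in V_1(S)$ with $R'=R^{S\leftarrow i_1}$, so $j=f_v^{-1}(u)=i_1\in L^-(v^1_{R'})$. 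Thus ${\cal S}_1$ must implement
\begin{equation*}
\ket{((0,+),R,D(R),i_1),\,S\rightarrow}\ \mapsto\ \ket{(1,R',D(R')),\,i_1\leftarrow},
\end{equation*}
where $R'$ differs from $R$ only in that $i_1$ has been inserted into $R_1(S)$, and $D(R')$ differs from $D(R)$ by updating the $D_1$ component with $(i_1,x_{i_1})$ (which requires one query to $x_{i_1}$) and, since $S\cap{\cal I}(u)=\emptyset$ guarantees no fault, leaving $D_2$ unchanged.

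First I would describe the forward direction step by step: (i) change the leading register from $(0,+)$ to $1$; (ii) query $x_{i_1}$ and insert $(i_1,x_{i_1})$ into the data structure component $D_1(R_1(S))$ using the insert operation from \sec{data}, which costs ${\sf polylog}(n)$; (iii) move $i_1$ from the vertex register into the edge-label register, and move $S$ out of the edge-label register (it is no longer needed, since $i_1$'s membership in $R_1(S)$ records it), appending the $\leftarrow$/$\rightarrow$ bookkeeping symbols as discussed; (iv) verify that the resulting edge label $i_1$ indeed lies in $L^-(v^1_{R'})$ — equivalently, that $\nexists i_2$ with $(i_1,i_2,x_{i_1})\in D_2(R')$ — which is a ${\sf polylog}(n)$ lookup, and is automatically satisfied by the definition of $E_1$. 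Each of these operations is reversible and its inverse is equally cheap (un-querying $x_{i_1}$ is possible because $D_1$ still records it at the time of uncomputation; the data structure is coherent, so deletion recovers the original state), so the map is unitary on the relevant subspace and can be run forwards or backwards; this reversibility is exactly what makes the subroutine "uniform" in the sense of \lem{uniform-alg}, since the whole thing is a fixed ${\sf polylog}(n)$-length sequence of gates from a standard set, with no input-dependent branching on the number of steps. The error is $0$ because every sub-operation is exact.

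I would then note that ${\cal S}_1$ satisfies the four conditions of \defin{variable-time} restricted to $E_1$: Item~\ref{item:select} and the uniformity hypothesis of \lem{uniform-alg} hold because the gate sequence is fixed and ${\sf polylog}(n)$-long; Item~\ref{item:error} holds with $\epsilon_{u,v}=0$; Item~\ref{item:known-times} is trivial since ${\sf T}_1$ is a fixed constant (in fact ${\sf polylog}(n)$), independent of the edge; and Item~\ref{item:non-confusion} holds because the state always carries $u=v^0_{R,i_1}$ (on the forward half) or $v=v^1_{R'}$ (on the backward half) in a control register, so the computational-basis supports for distinct edges are disjoint. Finally I would record the complexity bound ${\sf T}_{u,v}={\sf T}_1=\widetilde{O}(1)$ as claimed in \tabl{3-dist-weights}.

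The main obstacle is not computational — the bound is a routine ${\sf polylog}(n)$ accounting argument — but rather bookkeeping: one must be careful that the data structure operations are genuinely reversible and coherent (depending only on the \emph{set} stored, not on insertion order), so that running ${\cal S}_1$ backwards on $\ket{v^1_{R'},i_1\leftarrow}$ returns exactly $\ket{v^0_{R,i_1},S\rightarrow}$ and not some order-dependent artifact; this is precisely why \sec{data} insists on coherence. A secondary subtlety is checking that the single query to $x_{i_1}$ performed here is correctly uncomputed — it is, because at uncomputation time $D_1(R_1(S))$ still contains $(i_1,x_{i_1})$, so the query can be reversed without re-querying the oracle, keeping the query cost at $O(1)$ (indeed $0$ net queries once the ancilla is cleared, though we charge $O(1)$).
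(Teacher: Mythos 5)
Your proof follows essentially the same approach as the paper's: insert $i_1$ into $R_1(S)$, query $x_{i_1}$ and insert $(i_1,x_{i_1})$ into the $D_1$ component, skip any $D_2$ update on the strength of the $E_1$ membership guarantee that no fault occurs, uncompute $S$ from the resulting sets, and shuffle $i_1$ and the leading tag between registers — all in ${\sf polylog}(n)$. Your step~(iv) (verifying $i_1\in L^-(v^1_{R'})$) is superfluous but harmless, and the extra discussion of reversibility/coherence and of the items of \defin{variable-time} only makes explicit what the paper leaves implicit.
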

\begin{proof}
For $(v^0_{R_1,R_2,i_1},v^1_{R_1',R_2})\in E_1$, where $v^1_{R_1',R_2}\in V_1(S)$, ${\cal S}_1$ should implement the map:
\begin{align*}
\ket{v^0_{R_1,R_2,i_1},S} &\mapsto \ket{v^1_{R_1',R_2},i_1}\\
\equiv \; 
\ket{((0,+),R_1,R_2,D(R_1,R_2),i_1),S} &\mapsto \ket{(1,R_1',R_2,D(R_1',R_2)),i_1}.
\end{align*}
To implement this transition, we need only insert $i_1$ into $R_1(S)$, query $i_1$ to obtain $x_{i_1}$ and update the data by inserting $(i_1,x_{i_1})$ into the $D_1(R)$ part of $D(R_1,R_2)=(D_1(R),D_2(R))$ (see \sec{data}). Note that we \emph{do not} attempt to update the $D_2(R)$ part of the data by searching $\overline{R}_2$ for collisions with $i_1$.
If there is some $s\in S$ and $i_2\in \overline{R}_2(s)$ such that $x_{i_1}=x_{i_2}$, then by definition of $E_1$, $(v^0_{R_1,R_2,i_1},v^1_{R_1',R_2})\not\in E_1$. 
To finish, we uncompute $S$ by checking which of the three parts of $R_1$ has size $t_1+1$, account for the moving of $i_1$ from the vertex register to the edge label register, and map $\ket{(0,+)}$ to $\ket{1}$ in the first register. The total cost is ${\sf polylog}(n)$. 
\end{proof}

We now move on to ${\cal S}_2$, which is somewhat more complicated. For $(v^1_{R_1,R_2},v^2_{R_1,R_2'})\in E_2$,  where $v^1_R\in V_1(S)$, and $R_2'(\mu(S))=R_2(\mu(S))\cup\{j_2\}$ for some $j_2\in [m_2]\setminus R_2$, ${\cal S}_2$ should act as:
\begin{equation}
\begin{split}
\ket{v^1_{R_1,R_2},j_2} &\mapsto \ket{v^2_{R_1,R_2'},j_2}\\
\equiv \; 
\ket{(1,R_1,R_2,D(R_1,R_2)),j_2} &\mapsto \ket{(2,R_1,R_2',D(R_1,R_2')),j_2}.
\end{split}\label{eq:T3}
\end{equation}
The complexity of this map, which we will implement with some error, depends on  $|{\cal K}({R}_1,A_2^{(j_2)})|$ (see \eq{collision-set}), the number of collisions to be found between $R_1$ and the block $A_2^{(j_2)}$, which is implicitly being added to $\overline{R}_2$ by adding $j_2$ to $R_2$. 
\lem{3-dist-T3} below describes how to implement this transition map as long as there are fewer than $c_{\max}\log n$ collisions to be found for some constant $c_{\max}$. 
For the case when $|{\cal K}({R}_1,A_2^{(j_2)})|\geq c_{\max}\log n$, we will let the algorithm fail (so there is no bound on the error for such transitions). That is, we let:
\begin{equation}
\tilde{E}:=\left\{\left(v^1_{R},v^2_{R'}\right)\in E_2: |{\cal K}(R_1,A^{(j_2)})|\geq c_{\max}\log n,
\mbox{ where }\{j_2\}=R_2'\setminus R_2\right\}.\label{eq:3-dist-tilde-E}
\end{equation}

\begin{lemma}\label{lem:3-dist-T3}
Fix any constant $\kappa$. There is a uniform subroutine ${\cal S}_2$ that implements the transition map that maps $\ket{u,i}$ to $\ket{v,j}$ for all $(u,v)\in E_2\setminus\tilde{E}$, with error $O(n^{-\kappa})$, in complexity ${\sf T}_{u,v}={\sf T}_2=\widetilde{O}(\sqrt{n/m_2})$. 
\end{lemma}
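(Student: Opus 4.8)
The plan is to implement the transition map in \eq{T3} as a uniform quantum subroutine, deferring the expensive part -- finding collisions between $R_1$ and the newly added block $A_2^{(j_2)}$ -- to a bounded-error quantum search. First I would describe the forward direction: given $\ket{v^1_{R_1,R_2},j_2}=\ket{(1,R_1,R_2,D(R_1,R_2)),j_2}$, we need to (i) add $j_2$ to $R_2(\mu(S))$, where $\mu(S)$ is recoverable in ${\sf polylog}(n)$ time by checking which part of $R_1$ has size $t_1+1$; (ii) update $D_2(R)$ to $D_2(R')$ by finding all $i_2\in A_2^{(j_2)}$ that collide with some $i_1\in R_1(\{\mu(S)\})\cup R_1(\{1,2\})$, querying such $i_2$, and inserting the triples $(i_1,i_2,x_{i_1})$ into the data structure; and (iii) relabel the first register from $1$ to $2$ and move $j_2$ from the vertex register to the edge-label register. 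Steps (i) and (iii) are pure bookkeeping in ${\sf polylog}(n)$ complexity using the data structure of \sec{data}; the content is step (ii).

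For step (ii), I would use a variable-stopping-time amplitude-amplification / Grover-based routine to find all elements of ${\cal K}(R_1,A_2^{(j_2)})$ (as defined in \eq{collision-set}), which on the restricted edge set $E_2\setminus\tilde{E}$ has size at most $c_{\max}\log n$ by \eq{3-dist-tilde-E}. The block $A_2^{(j_2)}$ has size $\frac{n}{3m_2}$, so one collision can be found (or its absence certified) in $\widetilde{O}(\sqrt{n/m_2})$ queries by Grover search over $A_2^{(j_2)}$, using the queried data $D_1(R)$ to recognize a collision with $R_1$ in ${\sf polylog}(n)$ time per candidate. Finding up to $c_{\max}\log n$ of them, one at a time and inserting each into $D_2(R')$, costs $\widetilde{O}(\sqrt{n/m_2})$ in total (the $\log n$ factor is absorbed into $\widetilde{O}$), giving ${\sf T}_2=\widetilde{O}(\sqrt{n/m_2})$. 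Grover search is a bounded-error procedure; by standard powering/majority-vote amplification within the search (repeating $O(\kappa\log n)$ times) we drive the failure probability below $n^{-\kappa}$ for the prescribed constant $\kappa$, which is exactly the error bound claimed. I would also need to note the routine is \emph{uniform} in the sense of \lem{uniform-alg}: the circuit at step $t$ is determined by a fixed Grover iterate (data-structure query + diffusion) composed with an $O(1)$-computable schedule, so the select-unitary $\sum_t\ket{t}\bra{t}\otimes U_t$ has ${\sf polylog}(n)$ cost, and the time $\{{\sf T}_{u,v}={\sf T}_2\}$ is constant on $E_2\setminus\tilde E$ hence trivially computable (\defin{variable-time} Item~\ref{item:known-times}); Item~\ref{item:non-confusion} holds because the state always encodes the input vertex $v_{R_1,R_2}^1$.

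The main obstacle will be reversibility and cleanliness of the update: the resulting subroutine must map $\ket{u,i}$ \emph{exactly} to $\ket{v,j}$ (up to the $O(n^{-\kappa})$ error), with no residual garbage in ancilla registers, so that the data structure at $v^2_{R_1,R_2'}$ is \emph{coherent} -- depending only on $R'=(R_1,R_2')$ and not on the order $j_2$ was added or on which Grover branch found each collision. The standard resolution is that the data structure of \sec{data} is coherent by assumption, and the collision-finding search, once the full set ${\cal K}(R_1,A_2^{(j_2)})$ has been identified and inserted, can have its workspace uncomputed by re-running the search logic in reverse (the certifying data $D_1(R)$ is still present), since the set of found collisions is now determined by $(R_1,A_2^{(j_2)})$. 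I would also remark that the backward direction $\ket{v^2_{R_1,R_2'},j_2}\mapsto\ket{v^1_{R_1,R_2},j_2}$ is handled by running ${\cal S}_2$ in reverse: removing $j_2$ from $R_2(\mu(S))$ and deleting from $D_2$ exactly those triples involving an $i_2\in A_2^{(j_2)}$, which is again a ${\sf polylog}(n)$ data-structure operation plus the reverse search, so the cost and error bounds are symmetric, as required by \defin{QW-access}.
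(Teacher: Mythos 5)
Your proposal is correct and follows essentially the same approach as the paper: defer the only nontrivial work (finding the $\le c_{\max}\log n$ collisions between $R_1$ and the new block $A_2^{(j_2)}$) to a repeated Grover search over $A_2^{(j_2)}$ of cost $\widetilde{O}(\sqrt{n/m_2})$, boost the error to $O(n^{-\kappa})$ by ${\sf polylog}(n)$ repetition, and handle the remaining bookkeeping (insert $j_2$, recover $\mu(S)$, relabel the level register) in ${\sf polylog}(n)$ using the coherent data structure. Your additional remarks on uniformity, cleanliness of the ancillae, and the reverse direction elaborate on points the paper leaves implicit rather than diverging from its argument.
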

\begin{proof}
To implement the map in \eq{T3}, we need to insert $j_2$ into $R_2(\mu(S))$ to obtain $R_2'$, update $D_2(R)$ to reflect this insertion, and increment the first register. All of these take ${\sf polylog}(n)$ complexity, except for updating $D_2(R)$. To update $D_2(R)$, we need to search $A_2^{(j_2)}$ -- the new block we're adding to $\overline{R}_2$ -- to find anything that collides with $R_1$. Since the number of such collisions is less than $c_{\max}\log n$, we can do this using quantum search, which is uniform, with error $O(n^{-\kappa})$ for any desired constant $\kappa$ in complexity 
$O(\sqrt{n/m_2}\log^2 n)$, since $|A_2^{(j_2)}|=\sqrt{n/m_2}$.
\end{proof}

\begin{lemma}\label{lem:tilde-E}
For any constant $\kappa$, there exists a choice of constant $c_{\max}$ sufficiently large such that 
$|\tilde{E}|\leq n^{-\kappa}|E_2|$. 
\end{lemma}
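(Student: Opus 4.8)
The plan is to bound the expected size of $\tilde{E}$ by a union bound over the edges $(v^1_R, v^2_{R'}) \in E_2$, controlling the probability that the block $A_2^{(j_2)}$ being added contains at least $c_{\max}\log n$ collisions with $R_1$. First I would fix such an edge; it is determined by a pair $(R_1, R_2)$ with $R_1 \in \binom{A_1}{t_1^{(3)}}^+$ and a choice of $j_2 \in [m_2]\setminus R_2$. By the assumptions in \sec{k-dist-assumptions}, $|{\cal K}(A_1, A_2^{(j_2)})| = \Theta(|A_2^{(j_2)}|) = \Theta(n/(3m_2))$, so the number of indices in $A_2^{(j_2)}$ that collide with \emph{some} element of $A_1$ is $\Theta(n/m_2)$. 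Since $R_1 \subseteq A_1$ is (close to) a uniformly random set of size $\Theta(t_1)$ — here I would invoke the $d$-wise independence of $\tau$, noting $d = \log^{2^{k-1}}(n) \gg c_{\max}\log n$, so that the relevant $d$-subsets behave as if drawn uniformly at random — the count $|{\cal K}(R_1, A_2^{(j_2)})|$ is stochastically dominated by (indeed, essentially equal in distribution to) a hypergeometric random variable $Z$ with parameters $(\Theta(n/m_2)\text{ relevant slots}, \Theta(t_1)\text{-ish}, \ldots)$, whose mean $\mu$ I would compute to be $\Theta(t_1 t_2 / m_2) \cdot (\text{something})$; with the parameter settings $t_1 = n^{5/7}$, $t_2 = n^{4/7}$, $m_2 = \Theta(n^{5/7})$ one checks $\mu = \widetilde{O}(1)$ — in fact $\mu = O(\mathrm{polylog}(n))$ after accounting for the padding, or more cleanly $\mu = O(1)$ per block in the unpadded count and $O(\log n)$ once we include the $\Theta(n)$ added $(k-1)$-collisions spread over $\Theta(m_2)$ blocks.

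The key step is then the tail bound: I would apply \lem{hypergeo}, specifically the first bound $\Pr[Z \geq B] \leq e^{-B}$ valid for $B \geq 7\mu$. Taking $B = c_{\max}\log n$ and choosing $c_{\max}$ large enough that $c_{\max}\log n \geq 7\mu$ (possible since $\mu = O(\log n)$ with a constant we can beat, or if $\mu$ has a $\log$ factor, absorb it into $c_{\max}$), we get
$$\Pr\!\left[|{\cal K}(R_1, A_2^{(j_2)})| \geq c_{\max}\log n\right] \leq e^{-c_{\max}\log n} = n^{-c_{\max}}.$$
Alternatively \cor{hypergeo} gives $\Pr[Z \geq c_{\max}\log n] \leq 2 e^{c_{\max}\log n}(c_{\max}\log n / \mu)^{-c_{\max}\log n}$, which is even smaller; either suffices. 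Summing this over all $|E_2| = (m_2 - 2t_2)|V_1|$ edges — or rather, observing that for each edge the defining randomness is over the choice of $\tau$ (equivalently, over which indices land in $A_1$, $A_2^{(j_2)}$), and that $|\tilde E|$ counts edges for which the bad event holds — we obtain $\mathbb{E}[|\tilde{E}|] \leq n^{-c_{\max}}|E_2|$, and by choosing $c_{\max} \geq \kappa + O(1)$ we conclude $|\tilde{E}| \leq n^{-\kappa}|E_2|$ with the required probability (or in expectation, which is what is needed since the final algorithm only needs $\widetilde{\cal W} = \sum_{e \in \tilde E}\w_e$ small on average to invoke condition \textbf{TS2} of \thm{full-framework}).

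I expect the main obstacle to be setting up the probability space carefully: $\tilde E$ is a \emph{fixed} set once the input $x$ and the permutation $\tau$ are chosen, so the statement "$|\tilde E| \leq n^{-\kappa}|E_2|$" must be read as holding for the randomized partition — i.e., we must track that the randomness lies in $\tau$ (and the padding), show that for a fixed edge the colliding-block-size is genuinely governed by a hypergeometric law (this is where $d$-wise independence enters, and one must check $d$ exceeds the number of indices being jointly revealed, namely $|R_1| + |A_2^{(j_2)}| \leq O(t_1 + n/m_2) = \widetilde O(n^{5/7})$ — wait, this is polynomial, not polylog, so one cannot directly appeal to $d$-wise independence for \emph{all} of $R_1 \cup A_2^{(j_2)}$ at once). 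The resolution, which I would spell out, is that we only need the \emph{count} of collisions, so it suffices to control, for each of the $\Theta(n/m_2)$ potential colliding pairs, the joint probability that both endpoints land in the respective sets — and a union/second-moment argument over $c_{\max}\log n$-tuples of such pairs only ever involves $O(c_{\max}\log n) \ll d$ indices at a time, so $d$-wise independence does apply. Making this reduction precise is the crux; the rest is a direct application of \lem{hypergeo}.
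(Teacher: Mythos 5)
Your core computation — the hypergeometric tail bound from \lem{hypergeo} applied to the number of collisions between $R_1$ and a single block $A_2^{(j_2)}$ — is exactly what the paper does, and it works. But you have mis-identified the probability space, which leads you into an unnecessary detour about $d$-wise independence that would be a real obstacle if it were actually needed (and you correctly flag that it would be problematic, since $|R_1| + |A_2^{(j_2)}|$ is polynomial in $n$, far exceeding $d$). The resolution is that $\tau$ plays no role in this lemma: $\tau$ and the input $x$ are fixed, hence $A_1$ and each block $A_2^{(j_2)}$ are fixed sets. The statement $|\tilde{E}| \leq n^{-\kappa}|E_2|$ is a deterministic counting claim, proved by fixing $j_2$ and $R_2$ and observing that the number of $R_1 \in \binom{A_1}{t_1^{(3)}}^+$ with $|{\cal K}(R_1, A_2^{(j_2)})| \geq c_{\max}\log n$ is at most an $n^{-\kappa}$ fraction of all such $R_1$. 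Here the ``randomness'' is purely the uniform measure on the choice of $R_1$: the union of the three parts of $R_1$ is \emph{exactly} uniform over $(3t_1+1)$-subsets of $A_1$ (each $(3t_1+1)$-subset admits the same number of orderings into parts), so the collision count is an exact hypergeometric variable, not an approximate one needing $d$-wise independence. The parameters are $N = |A_1| = \Theta(n)$, $K = |{\cal K}(A_1, A_2^{(j_2)})| = O(n/m_2)$ (automatic from the unique-$k$-collision assumption, since each $i_2$ matches $O(1)$ indices in $A_1$), and $d = 3t_1+1$, giving mean $\mu = \Theta(t_1/m_2) = \Theta(1)$ — not the $\widetilde{O}(1)$-with-$t_2$-and-$\log n$ quantity you wrote; $t_2$ does not enter because only one block is being added, not all of $\overline{R}_2$. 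Then $\Pr[Z \geq c_{\max}\log n] \leq e^{-c_{\max}\log n}$ follows immediately from the first bound in \lem{hypergeo} once $c_{\max}\log n \geq 7\mu$, which holds for any constant $c_{\max}\geq 1$ and $n$ large.

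In short: the hypergeometric machinery is right, but the $d$-wise independence argument is a red herring here and would actually not go through as you set it up, precisely because too many indices are jointly involved. The fix is to notice that no pseudorandomness is required — the counting is over the uniformly random choice of $R_1$, for which exact hypergeometric tail bounds apply. ($d$-wise independence of $\tau$ is genuinely needed elsewhere, e.g.\ in \cor{fault-prob}, where the relevant index set is kept polylogarithmic precisely so that the argument works; it is not needed for \lem{tilde-E}.)
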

\begin{proof}
By \lem{hypergeo} (or as a special case of \lem{setup-probability}), for every $j_2\in [m_2]$, if $R_1$ is uniformly random from $\binom{A_1}{t_1^{(3)}}$, there exists a constant $c_{\max}$ such that 
$\Pr[|{\cal K}({R}_1,A_2^{(j_2)})|\geq c_{\max}\log n]\leq n^{-\kappa}.$
It follows that the proportion of edges in $E_2$ that are in $\tilde{E}$ is at most $n^{-\kappa}$.
\end{proof}
\begin{lemma}\label{lem:3-dist-T4}
There is a subroutine ${\cal S}_3$ such that for all $(u,v)\in E_3$ with $i=f_u^{-1}(v)$ and $j=f_v^{-1}(u)$, ${\cal S}_3$ maps $\ket{u,i}$ to $\ket{v,j}$ with error 0 in complexity ${\sf T}_{u,v}=\widetilde{O}(1)$.
\end{lemma}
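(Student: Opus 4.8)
The plan is to finish the sequence of transition-subroutine lemmas for the 3-distinctness graph by treating the last (and simplest) stage, $E_3 \subset V_2 \times V_3$. The statement to prove, \lem{3-dist-T4}, asserts that there is a subroutine ${\cal S}_3$ implementing $\ket{u,i}\mapsto\ket{v,j}$ for all $(u,v)\in E_3$ with error $0$ in complexity $\widetilde O(1)$. Recalling from \sec{3-dist-G} that an edge in $E_3$ has the form $(v^2_R, v^3_{R,i_3})$ for $i_3\in A_3$, with $f_{v^2_R}(i_3)=v^3_{R,i_3}$ and $f_{v^3_{R,i_3}}(\leftarrow)=v^2_R$, the subroutine must act as
\begin{align*}
\ket{v^2_{R_1,R_2},i_3} &\mapsto \ket{v^3_{R_1,R_2,i_3},\leftarrow}\\
\equiv\; \ket{(2,R_1,R_2,D(R_1,R_2)),i_3} &\mapsto \ket{(3,R_1,R_2,D(R_1,R_2),i_3),\leftarrow}.
\end{align*}

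The key steps are straightforward bookkeeping. First I would observe that the data $D(R_1,R_2)$ is unchanged between $v^2_R$ and $v^3_{R,i_3}$ — passing to $V_3$ only records the extra index $i_3\in A_3$ and does not query anything or modify $D_1$ or $D_2$. (In particular, unlike \lem{3-dist-T3}, there is no search to perform here, which is why the error is exactly $0$ and the cost is polylogarithmic.) Second, the subroutine moves $i_3$ from the edge-label register into the vertex register, appends the implicit $\leftarrow$ label for the unique backward edge of $v^3_{R,i_3}$ (per \tabl{3-dist-index-sets}), and updates the stage tag in the first register from $\ket{2}$ to $\ket{3}$. Each of these is a constant number of data-structure operations of the kind described in \sec{data}, together with reading/writing $O(\log n)$-bit registers, so the total cost is ${\sf polylog}(n)=\widetilde O(1)$. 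Third, I would note the map is reversible: given $v^3_{R,i_3}$ and the label $\leftarrow$, one recovers $v^2_R$ by stripping $i_3$, resetting the tag to $\ket 2$, and placing $i_3$ back in the edge-label register; hence ${\cal S}_3$ is well-defined as a unitary on the relevant subspace, and by the same token Item \ref{item:known-times} and Item \ref{item:non-confusion} of \defin{variable-time} are trivially met (the distinct starting states $\ket{v^2_R,i_3}$ remain orthogonal throughout since each carries $R$ and $i_3$ explicitly). The uniformity in the sense of \lem{uniform-alg} follows since the entire routine is a fixed $\widetilde O(1)$-depth sequence of elementary gates on named registers, with $g,g'$ computable in ${\sf polylog}(n)$.

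There is essentially no obstacle here: this lemma is the degenerate base case of the transition-subroutine analysis, analogous to \lem{3-dist-T1} for $E_0^+$, and its proof is a one-paragraph accounting argument. The only point requiring a moment's care is confirming that nothing in the passage from $V_2$ to $V_3$ touches the queried data, so that the stated error of $0$ is genuine; this is immediate from the definition of $V_3$ in \sec{3-dist-G}. After this lemma, the natural next step (a corollary, presumably \cor{3-dist-transitions}) is to combine ${\cal S}_0^+,{\cal S}_1,{\cal S}_2,{\cal S}_3$ via \lem{uniform-combine} into a single quantum subroutine for the full transition map of $G$, with the failure set $\tilde E$ as in \eq{3-dist-tilde-E} and per-edge costs as in \tabl{3-dist-weights}, which is exactly the \textbf{Transition Subroutine} hypothesis needed to invoke \thm{full-framework}.
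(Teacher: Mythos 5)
Your proposal is correct and matches the paper's approach: the paper simply states that the proof is identical to that of \lem{3-dist-T1}, and your argument is exactly that bookkeeping (move $i_3$ into the vertex register, update the stage tag, append $\leftarrow$, no data update) specialized to $E_3$.
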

\begin{proof}
The proof is identical to that of \lem{3-dist-T1}.
\end{proof}

In order to apply \thm{full-framework}, we need to implement the full transition map as a quantum subroutine in the sense of \defin{variable-time}.
\begin{corollary}\label{cor:3-dist-transitions}
Let $\kappa$ be any constant. 
There is a quantum subroutine (in the sense of \defin{variable-time}) that implements the full transition map with errors $\epsilon_e\leq n^{-\kappa}$ for all $e\in \overrightarrow{E}(G)\setminus\tilde{E}$, and times ${\sf T}_e=\widetilde{O}(1)$ for all $e\in \overrightarrow{E}(G)\setminus E_2$, and ${\sf T}_e={\sf T}_2=\widetilde{O}(\sqrt{n/m_2})$ for all $e\in E_2$. 
\end{corollary}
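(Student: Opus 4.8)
The plan is to combine the four uniform subroutines $\mathcal{S}_0^+,\mathcal{S}_1,\mathcal{S}_2,\mathcal{S}_3$ from \lem{3-dist-T1}, \lem{3-dist-T2}, \lem{3-dist-T3}, and \lem{3-dist-T4} into a single object satisfying the four items of \defin{variable-time}. First I would pad each $\mathcal{S}_\ell$ with identity unitaries so that $\mathcal{S}_0^+,\mathcal{S}_1,\mathcal{S}_3$ all run for a common time $\widetilde{O}(1)$ and $\mathcal{S}_2$ runs for exactly ${\sf T}_2=\widetilde O(\sqrt{n/m_2})$ steps; this only changes the claimed times by constant factors. The five vertex sets $V_0,V_0^+,V_1,V_2,V_3$ are distinguished by the first component of the vertex label, and for an outgoing edge label $i\in L^+(u)$ the head's vertex-set membership is determined by which of $E_0^+,E_1,E_2,E_3$ the edge $(u,f_u(i))$ lies in; these checks cost ${\sf polylog}(n)$. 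Thus I can branch on the type of the source vertex (together with the direction flag $\rightarrow/\leftarrow$ implicitly appended to the labels, as in \tabl{3-dist-index-sets}) and run the appropriate subroutine. Concretely, I would invoke \lem{uniform-combine} with $c=4$ (or a small constant accounting for the forward/backward directions of each edge class) applied to the four uniform algorithms, which directly yields a quantum subroutine implementing $\sum_j \ket{j}\bra{j}\otimes \mathcal{S}_j$ in variable times $O({\sf T}_j)$ and errors $\epsilon_j$; here Item~\ref{item:select} of \defin{variable-time} is handled because each $\mathcal{S}_\ell$ is \emph{uniform} in the sense of \lem{uniform-alg}, so $\sum_t\ket t\bra t\otimes U_t$ is implementable in ${\sf polylog}(n)$.

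The remaining items of \defin{variable-time} I would verify as follows. For Item~\ref{item:known-times} (the maps $x\mapsto {\sf T}_x$ and $y\mapsto {\sf T}_{f^{-1}(y)}$ are ${\sf polylog}$-computable): the time of a transition depends only on which edge class $(u,v)$ belongs to, which is computable in ${\sf polylog}(n)$ from either endpoint's label together with the edge label, so ${\sf T}_e=\widetilde O(1)$ for $e\in\overrightarrow E(G)\setminus E_2$ and ${\sf T}_e={\sf T}_2$ for $e\in E_2$; crucially ${\sf T}_2$ is a fixed quantity not depending on the particular edge, so the ``known times'' condition is immediate. For Item~\ref{item:non-confusion} (the decomposition $\mathcal{Z}=\bigcup_x\mathcal{Z}_x$ with $x,f(x)\in\mathcal{Z}_x$ and the intermediate states staying in $\mathrm{span}\{\ket z:z\in\mathcal{Z}_x\}$): I would take $\mathcal{Z}_{u,i}$ to consist of all configurations the subroutine passes through on input $(u,i)$; for $\mathcal{S}_0^+,\mathcal{S}_1,\mathcal{S}_3$ this is trivial since the transition just shuffles registers deterministically, and for $\mathcal{S}_2$ it holds because the quantum-search step on the block $A_2^{(j_2)}$ always keeps the vertex label $(u$ or $v)$, the set $R_1$, and the incoming edge label in a control register, so distinct inputs never ``confuse'' — this is exactly the $X=X'\times\{0\}$, $f(x,0)=(x,g(x))$ situation discussed after \defin{variable-time}. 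Item~\ref{item:error} is the error bound, and here I would set the constant $\kappa$ in \lem{3-dist-T3} to whatever value is demanded, so that for all $(u,v)\in E_2\setminus\tilde E$ the error is $O(n^{-\kappa})\le n^{-\kappa'}$ for the desired $\kappa'$, while $\mathcal{S}_0^+,\mathcal{S}_1,\mathcal{S}_3$ have error exactly $0$; the edges $\tilde E$ of \eq{3-dist-tilde-E} are precisely where no bound is claimed, consistent with the statement.

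The main obstacle I anticipate is bookkeeping rather than anything deep: ensuring the combined subroutine cleanly partitions $\mathcal{Z}$ and, in particular, that the internal-flag / timer conventions from \sec{fwk-transitions} (the assumption that after exactly ${\sf T}_{u,v}$ steps a flag is set to $1$, that ${\sf T}_{u,v}$ is even, and that ${\sf T}_{u,v}$ can be recovered from the final state $\ket{v,j}$) are respected after padding and combining. Since $\mathcal{S}_2$'s runtime ${\sf T}_2=\widetilde O(\sqrt{n/m_2})$ is data-independent and the other three are ${\sf polylog}(n)$, recovering ${\sf T}_e$ from $(v,j)$ amounts to reading off the edge class from the head vertex and label, which is ${\sf polylog}(n)$; padding to even length and installing the flag/timer is the routine trick described in \sec{fwk-transitions}. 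I would also note in passing that \lem{tilde-E} guarantees $|\tilde E|\le n^{-\kappa}|E_2|$, which (once one checks $\widetilde{\cal W}=\sum_{e\in\tilde E}\w_e$ is small relative to $1/{\cal R}^{\sf T}$) will let condition \textbf{TS2} of \thm{full-framework} be met downstream, though that verification belongs to the later application of the framework, not to this corollary. The conclusion then follows by assembling these verified items into the statement of \defin{variable-time}.
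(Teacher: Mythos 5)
Your proposal is correct and takes essentially the same approach as the paper: the paper's proof is a single sentence invoking \lem{uniform-combine} to combine $\mathcal{S}_0^+$, $\mathcal{S}_1$, $\mathcal{S}_2$, $\mathcal{S}_3$ from \lem{3-dist-T1}--\lem{3-dist-T4}. The additional verifications you supply (padding, the items of \defin{variable-time}, the timer/flag conventions from \sec{fwk-transitions}) are sound elaborations of details the paper leaves implicit.
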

\begin{proof}
We combine \lem{3-dist-T1}, \lem{3-dist-T2}, \lem{3-dist-T3} and \lem{3-dist-T4} using  \lem{uniform-combine}.
\end{proof}

\subsubsection{Initial State and Setup Cost}

The initial state is defined to be the uniform superposition over $V_1$: 
\begin{equation*}
\ket{\sigma} := \sum_{v^0_{R_1,R_2}\in V_0}\frac{1}{\sqrt{|V_0|}}\ket{v^0_{R_1,R_2}}.
\end{equation*}

\begin{lemma}\label{lem:setup}
The state $\ket{\sigma}$ can be generated with error $n^{-\kappa}$ for any constant $\kappa$ in complexity 
$$\textstyle{\sf S}=\widetilde{O}\left(t_1+t_2\sqrt{\frac{n}{m_2}}\right).$$
\end{lemma}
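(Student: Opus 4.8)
\textbf{Proof plan for \lem{setup}.}
The plan is to construct $\ket{\sigma}$ in the three stages suggested by the structure of a vertex in $V_0$: first build the superposition over the tuple $R_1 \in \binom{A_1}{t_1^{(3)}}$ and query it into $D_1(R)$; then build the superposition over $R_2 \in \binom{[m_2]}{t_2^{(2)}}$; and finally populate $D_2(R)$ by searching, for each block $A_2^{(j)}$ with $j\in\overline R_2$, for the collisions with $R_1$. The first stage costs $\widetilde O(t_1)$: we generate a uniform superposition over $3t_1$-subsets of $A_1$ (splitting into three ordered blocks of size $t_1$ is just relabelling), query each of the $3t_1$ indices, and insert the pairs $(i,x_i)$ into the $D_1$ part of the data structure, each insertion costing ${\sf polylog}(n)$ by \sec{data}. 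Generating a uniform superposition over $A_1$ restricted to a subset of a given size is done exactly as in Item 2 of the lemma in \sec{k-dist-assumptions} using $\tau,\tau^{-1}$, so this stage is $\widetilde O(t_1)$ with zero error. The second stage, building the superposition over $R_2\subset[m_2]$, costs only $\widetilde O(t_2)$ (again $t_2$ insertions into the data structure, no queries), which is dominated by the other terms.

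The third stage is where the $t_2\sqrt{n/m_2}$ term comes from. For each of the $2t_2$ block-indices $j\in \overline R_2$, we must search the block $A_2^{(j)}$ of size $n/(3m_2)$ for all indices $i_2$ colliding with some $i_1 \in R_1(\{s\})\cup R_1(\{1,2\})$ (for the appropriate $s$), and insert the discovered triples into $D_2(R)$. By the same hypergeometric argument as in \lem{tilde-E} (appealing to \lem{hypergeo}, using the $d$-wise independence of $\tau$ so that each block looks uniformly random), the number of such collisions per block is $O(\log n)$ except with probability $n^{-\kappa'}$ for a suitable constant; on that failure event we simply let the setup routine err, contributing $n^{-\kappa}$ to the total error after choosing constants appropriately. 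Conditioned on there being $O(\log n)$ collisions, we can find them all using $O(\log n)$ rounds of amplitude amplification / quantum search over a block of size $\sqrt{n/m_2}\cdot\sqrt{m_2 n}$—more precisely $|A_2^{(j)}| = n/(3m_2)$, and searching it costs $\widetilde O(\sqrt{n/m_2})$ per collision found, hence $\widetilde O(\sqrt{n/m_2})$ per block and $\widetilde O(t_2\sqrt{n/m_2})$ over all $2t_2$ blocks, with error $O(n^{-\kappa})$ accumulated over ${\sf polylog}(n)$ search subroutines. Summing the three stages gives ${\sf S} = \widetilde O(t_1 + t_2 + t_2\sqrt{n/m_2}) = \widetilde O(t_1 + t_2\sqrt{n/m_2})$ as claimed; the requirement that $\sigma(u)$ be computable in $O(1)$ is immediate since $\sigma$ is uniform on $V_0$, whose size is a fixed (input-independent) quantity computable from $n,t_1,t_2,m_2$.

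The main obstacle I expect is not the cost bookkeeping but ensuring the output state is genuinely \emph{coherent}, i.e. the data registers $D_1(R), D_2(R)$ depend only on $(R_1,R_2)$ and not on the order in which indices were inserted or on garbage left by the amplitude-amplification searches. For $D_1$ this is handled by the coherence assumption on the data structure (\sec{data}). For $D_2$ the subtlety is that quantum search returns, in general, a superposition that includes a small erroneous branch and may leave workspace entangled with the block being searched; I would handle this the standard way—run search to find the (at most $O(\log n)$) colliding indices one at a time, each time inserting the found triple into the data structure and uncomputing the search workspace, using the fact that the \emph{set} of colliding pairs is a deterministic function of $(R_1,R_2)$ so that the final data structure contents are well-defined. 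The $n^{-\kappa}$ error then absorbs both the amplitude-estimation imprecision and the rare high-collision blocks, exactly as in the proof of \cor{3-dist-transitions}. A secondary point to check is that $m_2 = \Theta(t_1)$ and the chosen parameters $t_1 = n^{5/7}$, $t_2 = n^{4/7}$ make $|A_2^{(j)}| = n/(3m_2) = \Theta(n^{2/7})$ and $\sqrt{n/m_2} = \Theta(n^{1/7})$ consistent, but this is just substitution and need not appear in the proof of the lemma, which is stated for general parameters.
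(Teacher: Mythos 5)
Your proof is correct and achieves the same complexity bound, but it decomposes the collision-search step differently from the paper. You perform $2t_2$ separate Grover searches, one per block $A_2^{(j)}$ of size $n/(3m_2)$, cap the number of collisions per block at $O(\log n)$, and control the failure probability by a per-block hypergeometric bound plus a union bound over the $O(t_2)$ blocks (exactly the analysis used for $\tilde E$ in \lem{tilde-E}). The paper instead runs a single search over all of $\overline R_2$ (of total size $\Theta(t_2 n/m_2)$), stops after finding $ct_2$ collisions, and bounds $\Pr[|{\cal K}(R_1,\overline R_2)|\geq ct_2]\leq e^{-ct_2}$ by one application of the global hypergeometric tail, which is superpolynomially small since $t_2$ is polynomial in $n$. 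Both yield cost $\widetilde O(t_2\sqrt{n/m_2})$: yours as $t_2\cdot\widetilde O(\sqrt{n/m_2})$, the paper's as $\widetilde O(\sqrt{t_2\cdot t_2 n/m_2})$. The paper's global cap is arguably the cleaner statement; your per-block cap has the advantage of directly reusing the per-block machinery that already shows up in the transition subroutine (\lem{3-dist-T3}). Your paragraph flagging coherence (uncomputing search workspace, ensuring $D_2(R)$ depends only on $(R_1,R_2)$) addresses a point the paper leaves implicit and is the right concern; your observation that $\sigma(u)$ is computable in $O(1)$ because $\sigma$ is uniform on a set of known size is also correct, though the paper does not bother to state it in this lemma. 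One small local slip: you write that the search ``costs $\widetilde O(\sqrt{n/m_2})$ per collision found'' --- the correct statement is that finding all $k$ marked items in a block of size $N$ costs $O(\sqrt{Nk})$, which for $k=O(\log n)$ gives $\widetilde O(\sqrt{n/m_2})$ per block, so your final tally is unaffected.
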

\begin{proof}
We start by taking a uniform superposition over all $R_1\in \binom{A_1}{t_1^{(3)}}$ and $R_2\in \binom{[m_2]}{t_2^{(2)}}$ stored in data structures as described in \sec{data}, and querying everything in $R_1$ to get $D_1(R)$, which altogether costs $\widetilde{O}(t_1+t_2)$. Next for each $s\in\{1,2\}$, we search for all elements of $\overline{R}_2(s)$ that collide with an element of $R_1(\{s\})$ or $R_1(\{1,2\})$. However, we do not want to spend too long on this step, so we stop if we find $ct_2$ collisions, for some constant $c$. If we do this before all collisions are found, that part of the state is not correct, but we argue that this only impacts a very small part of the state. The cost of this search is (up to log factors) $\sqrt{t_2|\overline{R}_2|}=t_2\sqrt{n/m_2}$. 

For a uniform $R_1$ and fixed $R_2$, the expected value of $Z=|{\cal K}({R}_1,\overline{R}_2)|$, the number of collisions, is 
$$\mu=O\left(\frac{|\overline{R}_2| t_1}{n}\right) = O\left(\frac{t_2\frac{n}{m_2}  t_1}{n}\right)=O(t_2),$$
since $m_2=\Theta(t_1)$. Let $c'$ be a constant such that $\mu\leq c't_2$, and choose $c=7c'$.
Since $Z$ is a hypergeometric random variable, we have, by \lem{hypergeo}, 
$\Pr[Z\geq ct_2] \leq e^{-ct_2} = o( n^{-\kappa})$
for any $\kappa$, since $t_2$ is polynomial in $n$. Thus, the state we generate is $n^{-\kappa}$-close to $\ket{\sigma}$. 
\end{proof}

\subsubsection{Positive Analysis}

For the positive analysis, we must exhibit a flow (see \defin{flow}) from $V_0$ to $M$ whenever $M\neq \emptyset$.

\begin{lemma}\label{lem:3-dist-positive}
There exists some ${\cal R}^{\sf T}=\widetilde{O}(|V_0|^{-1})$ such that the following holds. Whenever there is a unique $3$-collision $(a_1,a_2,a_3)\in A_1\times A_2\times A_3$, there exists a flow $\theta$ on $G$ that satisfies conditions \textbf{P1}-\textbf{P5} of \thm{full-framework}. Specifically:
\begin{enumerate}
	\item For all $e\in\tilde{E}$, $\theta(e)=0$.
	\item For all $u\in V(G)\setminus (V_0\cup V_3)$ and $\ket{\psi_\star(u)}\in \Psi_\star(u)$, 
$$\sum_{i\in L^+(u)}\frac{\theta(u,f^+_u(i))\braket{\psi_\star(u)}{u,i}}{\sqrt{\w_{u,i}}}-\sum_{i\in L^-(u)}\frac{\theta(u,f^-_u(i))\braket{\psi_\star(u)}{u,i}}{\sqrt{\w_{u,i}}}=0.$$
	\item $\sum_{u\in  V_0}\theta(u)=1$.
	\item $\sum_{u\in V_0}\frac{|\theta(u)-\sigma(u)|^2}{\sigma(u)}\leq 1$.
	\item ${\cal E}^{\sf T}(\theta)\leq {\cal R}^{\sf T}$.
\end{enumerate}
\end{lemma}
\begin{proof}
Recall that $M$ is the set of $v^3_{R,i_3}\in V_3$ such that for some $S\subseteq \{1,2\}$, 
$a_1\in R_1(S)$, $a_2\in \overline{R}_2(\mu(S))$ and $a_3=i_3$. Let $j^*\in [m_2]$ be the unique block label such that $a_2\in A_2^{(j^*)}$. Then $a_2\in \overline{R}_2(\mu(S))$ if and only if $j^*\in R_2(\mu(S))$. 
Assuming $M\neq\emptyset$, we define a flow $\theta$ on $G$ with all its sinks in $M$. It will have sources in both $V_1$ and $M$, but all other vertices will conserve flow. This will imply \textbf{Item 2} for all \emph{correct} star states of $G$, $\ket{\psi_\star^G(u)}$, but we will have to take extra care to ensure that \textbf{Item 2} is satisfied for the additional states in $\Psi_\star(u):u\in V_0^+$. 

To satisfy condition \textbf{P5} of \thm{full-framework}, we must upper bound ${\cal E}^{\sf T}(\theta)={\cal E}(\theta^{\sf T})$ (see \defin{nwk-length}), which is the energy of the flow $\theta$ extended to a graph $G^{\sf T}$, in which each edge of $G$ in $E_2$ has been replaced by a path of length ${\sf T}_2=\widetilde{O}(\sqrt{n/m_2})$, and all other edges have been replaced by paths of length $\widetilde{O}(1)$ (see \cor{3-dist-transitions}).
We define $\theta$ on $E_0^+$, $E_1$, $E_2$ and $E_3$ stage by stage, and upper bound the contribution to ${\cal E}^{\sf T}(\theta)$ for each stage.

\vskip7pt
\noindent\textbf{${\cal R}_0^+$, Item 3, and Item 4:} Let $M_0$ be the set of $v^0_{R_1,R_2}\in V_0$ such that $a_1\not\in R_1$, $j^*\not\in R_2$, and for $c_{\max}$ as in \lem{tilde-E}, $|{\cal K}({R}_1,A_1^{(j^*)})|<c_{\max}\log n$ (see~\eq{collision-set}). This latter condition is because we will later send flow down edges that add $j^*$ to $R_2$, and we don't want to have flow on edges in $\tilde{E}$. For all $v^0_{R}\in M_0$, let $\theta(v^0_{R},v^1_{R,a_1})={|M_0|}^{-1}$. For all other edges in $E_0^+$, let $\theta(e)=0$. Note that we can already see that $\theta(u)={|M_0|}^{-1}$ for all $u\in M_0$, so we satisfy \textbf{Item~3}. By \lem{tilde-E}, we know that the proportion of $R_1$ that are excluded because $|{\cal K}(\overline{R}_1,A_1^{(j^*)})|\geq c_{\max}\log n$ is $o(1)$, so we can conclude:
\begin{equation}
\frac{|V_0|}{|M_0|} = (1+o(1))\left(1+O\left(\frac{t_1}{n}\right)\right)\left(1+O\left(\frac{t_2}{m_2}\right)\right) = 1+o(1).\label{eq:V0-M0}
\end{equation}

Since $\sigma(u)=\frac{1}{|V_0|}$, we can conclude with \textbf{Item 4} of the lemma statement:
\begin{equation*}
\begin{split}
\sum_{u\in V_0}\frac{|\theta(u)-\sigma(u)|^2}{\sigma(u)} 
&= |V_0|^2\left(\frac{1}{|M_0|}-\frac{1}{|V_0|}\right)^2
= \left(\frac{|V_0|}{|M_0|}-1\right)^2=o(1).
\end{split}\label{eq:delta-bound}
\end{equation*}
Using $\w_0^+=1$ and ${\sf T}_e=\widetilde{O}(1)$ for all $e\in E_0^+$ (see \tabl{3-dist-weights}), the contribution of the edges in $E_0^+$ to the energy of the flow can be computed as:
\begin{equation}
{\cal R}_0^+ = \sum_{e\in E_0^+}{\sf T}_e\frac{\theta(e)^2}{\w_0^+} = \widetilde{O}\left(\sum_{u\in M_0}\frac{1}{|M_0|^2}\right)=\widetilde{O}\left(\frac{1}{|M_0|}\right),\label{eq:R1}
\end{equation}
since each vertex in $M_0$ has a unique outgoing edge with flow, and the flow is uniformly distributed.

\vskip7pt
\noindent\textbf{${\cal R}_1$ and Item 2:} Let $M_0^+$ be the set of $v^0_{R,i_1}\in V_0^+$ such that $v^0_{R}\in M_0$ and $i_1=a_1$, so $|M_0^+|=|M_0|$. These are the only vertices in $V_0^+$ that have flow coming in from $V_0$, and specifically, the incoming flow from $V_0$ to a vertex in $M_0^+$ is $\frac{1}{|M_0|}$.

The only way there could be a fault adding $a_1$ to $R_1$ would be if $a_2\in \overline{R}_2$, but we have ensured that that is not the case. Thus, for each $u\in M_0^+$, ${\cal I}(u)=\emptyset$, so there are three edges going into $V_1$ (labelled by $\{1\}$, $\{2\}$, and $\{1,2\}$, all disjoint from ${\cal I}(u)$) to which we can assign flow. 

\textbf{Item 2} is satisfied for all $\ket{\psi_\star^G(u)}:u\in V(G)\setminus (V_0\cup V_3)$ by virtue of $\theta$ conserving flow at all vertices in $V(G)\setminus (V_0\cup V_3)$ (we have not finished defining $\theta$, but it will be defined so that this holds). However, for $u\in V_0^+$, $\Psi_\star(u)=\{\ket{\psi_\star^{{\cal I}}(u)}\}_{{\cal I}\subsetneq\{1,2\}}$ (see \eq{3-dist-stars}) contains more than just $\ket{\psi_\star^G(u)}$. When $u\in V_0^+\setminus M_0^+$, there is no flow through $u$, so Item 2 is easily seen to be satisfied for all states in $\Psi_\star(u)$. For $u\in M_0^+$, $\ket{\psi_\star^G(u)}=\ket{\psi_\star^{\emptyset}(u)}$, so the additional constraints we need to take additional care to satisfy are those for $\ket{\psi_\star^{\{s\}}(u)}$ with $s\in \{1,2\}$:
\begin{align*}
&\sum_{i\in L^+(u)}\theta(u,f^+_{u}(i))\frac{\braket{\psi_\star^{\{s\}}(u)}{u,i}}{\sqrt{\w_1}}-\sum_{j\in L^-(u)}\theta(u,f^-_{u}(j))\frac{\braket{\psi_\star^{\{s\}}(u)}{u,i}}{\sqrt{\w_0^+}}\\
={}& \sum_{S \in 2^{\{1,2\}} \setminus \{\emptyset\}}\theta(u,f^+_{u}(S))\frac{\braket{\psi_\star^{\{s\}}(u)}{u,S}}{\sqrt{\w_1}}-\theta(u,f^-_{u}(\leftarrow))\frac{\braket{\psi_\star^{\{s\}}(u)}{u,\leftarrow}}{\sqrt{\w_0^+}} & \mbox{see \tabl{3-dist-index-sets}}\\
={}&\theta(u,f^+_{u}(\{3-s\}))\frac{\sqrt{\w_1}}{\sqrt{\w_1}}-\theta(u,f^-_u(\leftarrow))\frac{-\sqrt{\w_0^+}}{\sqrt{\w_0^+}} & \mbox{see \eq{3-dist-stars}}\\
={}&\theta(u,v_{\{3-s\}})+\theta(u,v^0),
\end{align*}
where $v^0 = f^-_u(\leftarrow)$ is the neighbour of $u$ in $V_0$, and $v_{\{3-s\}} = f^+_u(\{3-s\})$ is the neighbour of $u$ in $V_1$ with edge labelled by $\{3-s\}$ (see \fig{3-dist-flow-stars}). So for $s'\in\{1,2\}$, we must have:
$$0=\theta(u,v_{\{s'\}}) + \theta(u,v^0) = \theta(u,v_{\{s'\}}) - \frac{1}{|M_0|},$$
since $\theta(u,v^0) = -\theta(v^0,u) = -\frac{1}{|M_0|}$. To satisfy this, we set:
\begin{align*}
\theta(u,v_{\{1\}}) = \theta(u,v_{\{2\}}) = \frac{1}{|M_0|},
\end{align*}
meaning that all the flow that comes into $u$ along edge $(v^0,u)$ must leave $u$ along edge $(u,v_{\{1\}})$, but it must also all leave along edge $(u,v_{\{2\}})$. However, we have now assigned twice as much outgoing flow as incoming flow, so the only way for flow to be conserved at $u$ is to also have $\frac{1}{|M_0|}$ flow coming into $u$ along edge $(v_{\{1,2\}},u)$, so we set:
$$\theta(u,v_{\{1,2\}}) = -\frac{1}{|M_0|}.$$
This is shown visually in \fig{3-dist-flow-stars}.
Using $\w_1=1$ and ${\sf T}_1=\widetilde{O}(1)$, we can compute the contribution of edges in $E_1$ to the energy of the flow as:
\begin{equation}
{\cal R}_1 = \sum_{u\in M_0^+}{\sf T}_1\frac{3(1/|M_0|)^2}{\w_1} = \widetilde{O}\left(\frac{|M_0^+|}{|M_0|^2}\right)=\widetilde{O}\left(\frac{|M_0|}{|M_0|^2}\right)=\widetilde{O}\left(\frac{1}{|M_0|}\right).
\label{eq:R2}
\end{equation}

\begin{figure}
\centering
\begin{tikzpicture}[scale=1.2]
\node at (0,0) {
\begin{tikzpicture}[scale=1.2]
\draw[->] (0,0)--(.75,0); \draw (.75,0)--(1.5,0);		\draw[->] (1.5,0)--(2.25,.5); \draw (2.25,.5)--(3,1);
										\draw (1.5,0)--(2.25,0); \draw[<-] (2.25,0)--(3,0);
										\draw[->] (1.5,0)--(2.25,-.5); \draw (2.25,-.5)--(3,-1);

\filldraw (0,0) circle (.05);		\filldraw (1.5,0) circle (.08);	\filldraw (3,1) circle (.05);
												\filldraw (3,0) circle (.05);
												\filldraw (3,-1) circle (.05);

\node at (-.25,.2) {$v^0$};
\node at (1.5,.25) {$u$};
\node at (3.5,1) {$v_{\{1\}}$};
\node at (3.5,0) {$v_{\{1,2\}}$};
\node at (3.5,-1) {$v_{\{2\}}$};

\node at (1.1,-.15) {\color{blue}$\leftarrow$};
\node at (1.9,.5) {\color{blue}\small ${}_{\{1\}}$};
\node at (2.2,.15) {\color{blue}\small ${}_{\{1,2\}}$};
\node at (1.9,-.5) {\color{blue}\small ${}_{\{2\}}$};
%\draw (0,0) -- (10,0);
\end{tikzpicture}
};

\node at (0,-1.5) {$\ket{\psi_\star^G(u)}=\ket{\psi_\star^{\emptyset}(u)}$};

\node at (6,0) {
\begin{tikzpicture}[scale=1.2]
\draw[->] (0,0)--(.75,0); \draw (.75,0)--(1.5,0);		\draw[->] (1.5,0)--(2.25,.5); \draw (2.25,.5)--(3,1);
										%\draw (1.5,0)--(2.25,0); \draw[<-] (2.25,0)--(3,0);
										%\draw[->] (1.5,0)--(2.25,-.5); \draw (2.25,-.5)--(3,-1);

\filldraw (0,0) circle (.05);		\filldraw (1.5,0) circle (.08);	\filldraw (3,1) circle (.05);
												%\filldraw (3,0) circle (.05);
												%\filldraw (3,-1) circle (.05);

\node at (-.25,.2) {$v^0$};
\node at (1.5,.25) {$u$};
\node at (3.5,1) {$v_{\{1\}}$};
%\node at (3.5,0) {$v_{\{1,2\}}$};
\node at (3.5,-1) {\color{white} $v_{\{2\}}$};

\node at (1.1,-.15) {\color{blue}$\leftarrow$};
\node at (1.9,.5) {\color{blue}\small ${}_{\{1\}}$};
%\node at (2.2,.15) {\color{blue}\small ${}_{\{1,2\}}$};
%\node at (1.9,-.5) {\color{blue}\small ${}_{\{2\}}$};
%\draw (0,0) -- (10,0);
\end{tikzpicture}
};

\node at (6,-1.5) {$\ket{\psi_\star^{\{2\}}(u)}$};

\node at (12,0) {
\begin{tikzpicture}[scale=1.2]
\draw[->] (0,0)--(.75,0); \draw (.75,0)--(1.5,0);		%\draw[->] (1.5,0)--(2.25,.5); \draw (2.25,.5)--(3,1);
										%\draw (1.5,0)--(2.25,0); \draw[<-] (2.25,0)--(3,0);
										\draw[->] (1.5,0)--(2.25,-.5); \draw (2.25,-.5)--(3,-1);

\filldraw (0,0) circle (.05);		\filldraw (1.5,0) circle (.08);	%\filldraw (3,1) circle (.05);
												%\filldraw (3,0) circle (.05);
												\filldraw (3,-1) circle (.05);

\node at (-.25,.2) {$v^0$};
\node at (1.5,.25) {$u$};
\node at (3.5,1) {\color{white}$v_{\{1\}}$};
%\node at (3.5,0) {$v_{\{1,2\}}$};
\node at (3.5,-1) {$v_{\{2\}}$};

\node at (1.1,-.15) {\color{blue}$\leftarrow$};
%\node at (1.9,.5) {\color{blue}\small ${}_{\{1\}}$};
%\node at (2.2,.15) {\color{blue}\small ${}_{\{1,2\}}$};
\node at (1.9,-.5) {\color{blue}\small ${}_{\{2\}}$};
%\draw (0,0) -- (10,0);
\end{tikzpicture}
};

\node at (12,-1.5) {$\ket{\psi_\star^{\{1\}}(u)}$};

\end{tikzpicture}
\caption{The three star states in $\Psi_\star(u)$, for $u\in V_0^+$. Edge labels from $L(u)$ are shown in blue. Arrows in edges indicate the direction of flow. We have chosen the flow so that flow is conserved at $u$ in $G$, which can be seen by the fact that flow comes in on two edges, and leaves by two edges in the figure for $\ket{\psi_\star^G(u)}$; but flow is still conserved if we restrict to either of the other two neighbourhoods, which is necessary to satisfy Item 2 of \lem{3-dist-positive}.}\label{fig:3-dist-flow-stars}
\end{figure}
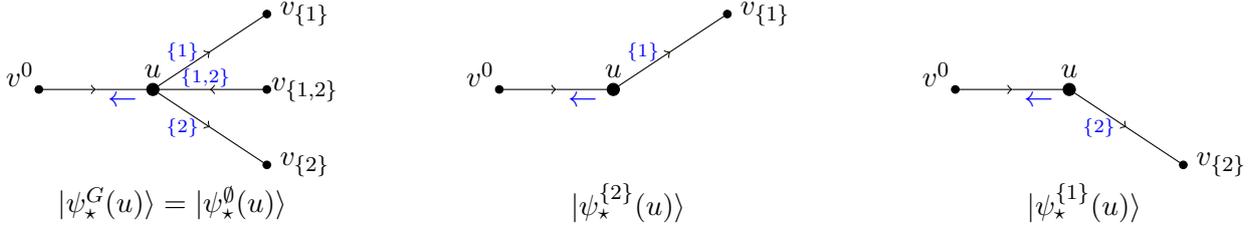

\vskip7pt
\noindent\textbf{${\cal R}_2$ and Item 1:} Let $M_1(S)$ be the set of $v^1_{R}\in V_1(S)$ such that $a_1\in R_1(S)$ and $j^*\not\in R_2$, and let $M_1=M_1(\{1\})\cup M_1(\{2\})\cup M_1(\{1,2\})$, so $|M_1|=3|M_0|$. These are exactly the vertices of $V_1$ that have non-zero flow coming in from $V_0^+$, and in particular, for $v^1_{R}\in M_1(S)$, the amount of flow coming in from $V_0^+$ is $(-1)^{|S|}\frac{3}{|M_1|}$, and we will send it along the edge $(v^1_{R},v^2_{R'})\in E_2$ that adds $j^*$ to the set $R_2(\mu(S))$ to obtain $R'$:
\begin{equation*}
\theta\left(v^1_{R},v^2_{R'}\right) = (-1)^{|S|+1}\frac{3}{|M_1|}=(-1)^{|S|+1}\frac{1}{|M_1(S)|}.
\end{equation*}
All other edges of $E_2$ will have $\theta(e)=0$. 
Using $\w_2=\sqrt{n/m_2}$ and ${\sf T}_2=\widetilde{O}(\sqrt{n/m_2})$, we can compute the contribution of edges in $E_2$ to the energy ${\cal E}^{\sf T}$ of the flow:
\begin{equation}
\begin{split}
{\cal R}_2 &= \frac{{\sf T}_2}{\w_2}|M_1|\frac{9}{|M_1|^2}
=\widetilde{O}\left(\frac{1}{|M_0|}\right).
\end{split}\label{eq:R3}
\end{equation}

We also note that by ensuring that there is only flow on $v^1_{R}\in V_1$ when ${\cal K}({R}_1,A_2^{(j^*)})$ is not too big, we have ensured that the flow on the edges in $\tilde{E}$ is 0, satisfying \textbf{Item 1}. 

\vskip7pt
\noindent\textbf{${\cal R}_3$:} Finally, let $M_2(S)$ be the set of $v^2_{R}\in V_2(S)$ such that $a_1\in R_1(S)$ and $j^*\in R_2(\mu(S))$, and let $M_2=M_2(\{1\})\cup M_2(\{2\})\cup M_2(\{1,2\})$. These are exactly the vertices of $V_2$ that have non-zero flow coming in from $V_1$, in the amount of
$(-1)^{|S|+1}{|M_2(S)|}^{-1}.$
We send this flow along the unique edge from $v^2_{R}$ into $V_3$ that adds $i_3=a_3$:
\begin{align*}
\theta(v^2_{R},v^3_{R,a_3}) &=(-1)^{|S|+1}\frac{1}{|M_2(S)|}=(-1)^{|S|+1}O\left(\frac{1}{|M_0|}\right).
\end{align*}
Using $\w_3=1$ and ${\sf T}_3=\widetilde{O}(1)$, the total contribution of edges in $E_3$ to the energy of the flow is:
\begin{equation}
\begin{split}
{\cal R}_3 &= \frac{{\sf T}_3}{\w_3}|M_2|O\left(\frac{1}{|M_2|^2}\right)
=\widetilde{O}\left(\frac{1}{|M_0|}\right).
\end{split}\label{eq:R4}
\end{equation}

\vskip5pt
\noindent\textbf{Item 5:} It remains only to upper bound the energy of the flow by adding up the 4 contributions in \eq{R1} to \eq{R4}, and applying $|V_0|=(1+o(1))|M_0|$ from \eq{V0-M0}:
\begin{align*}
{\cal E}^{\sf T}(\theta) &\leq {\cal R}_0^++{\cal R}_1+{\cal R}_2+{\cal R}_3
=\widetilde{O}\left(\frac{1}{|M_0|} \right)=\widetilde{O}\left(\frac{1}{|V_0|} \right). \qedhere
\end{align*}
\end{proof}

\subsubsection{Negative Analysis}

For the negative analysis, we need to upper bound the total weight of the graph, taking into account the subroutine complexities, ${\cal W}^{\sf T}(G)$ (see \defin{nwk-length}).
\begin{lemma}\label{lem:3-dist-negative}
There exists  ${\cal W}^{\sf T}$ such that:
$${\cal W}^{\sf T}(G)\leq {\cal W}^{\sf T}\leq \widetilde{O}\left(\left(n + \frac{n^2}{t_1} +\frac{n^2}{t_2}\right)|V_1|\right).$$
\end{lemma}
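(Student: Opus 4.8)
The plan is to bound ${\cal W}^{\sf T}(G) = \sum_{e \in \overrightarrow{E}(G)} \w_e \, {\sf T}_e$ by splitting the sum over the four edge sets $E_0^+, E_1, E_2, E_3$ (whose union is $\overrightarrow{E}(G)$), and in each case multiplying the bound on $|E_\ell|$ from \eq{E1}, \eq{E2}, \eq{E3}, \eq{E4} by the corresponding weight from \tabl{3-dist-weights} and transition cost from \cor{3-dist-transitions} (equivalently \tabl{3-dist-weights}). Since every ${\sf T}_e$ hides only polylogarithmic factors, all of these get swept into $\widetilde{O}$.

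First I would recall the ingredients. From \tabl{3-dist-weights}: $\w_e = 1$ and ${\sf T}_e = \widetilde{O}(1)$ on $E_0^+$; $\w_e = 1$ and ${\sf T}_e = \widetilde{O}(1)$ on $E_1$; $\w_e = \sqrt{n/m_2}$ and ${\sf T}_e = \widetilde{O}(\sqrt{n/m_2})$ on $E_2$; and $\w_e = 1$ and ${\sf T}_e = \widetilde{O}(1)$ on $E_3$. From the edge-count bounds: $|E_0^+| = |V_0|(n/3 - 3t_1) = O(n|V_0|)$ by \eq{E1}; $|E_1| \leq 3|V_0|(n/3 - 3t_1) = O(n|V_0|)$ by \eq{E2}; $|E_2| = \frac{(m_2 - 2t_2)(n - 9t_1)}{t_1+1}|V_0| = O\!\left(\frac{nm_2}{t_1}|V_0|\right)$ by \eq{E3}; and $|E_3| = \frac{n}{3}|V_2| = O\!\left(\frac{n^2 m_2}{t_1 t_2}|V_0|\right)$ by \eq{E4}.

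Then I would assemble the four contributions. The $E_0^+$ contribution is $\widetilde{O}(|E_0^+|) = \widetilde{O}(n|V_0|)$; the $E_1$ contribution is $\widetilde{O}(|E_1|) = \widetilde{O}(n|V_0|)$; the $E_2$ contribution is $\widetilde{O}\!\left(|E_2| \cdot \sqrt{n/m_2} \cdot \sqrt{n/m_2}\right) = \widetilde{O}\!\left(|E_2| \cdot \frac{n}{m_2}\right) = \widetilde{O}\!\left(\frac{nm_2}{t_1}|V_0| \cdot \frac{n}{m_2}\right) = \widetilde{O}\!\left(\frac{n^2}{t_1}|V_0|\right)$; and the $E_3$ contribution is $\widetilde{O}(|E_3|) = \widetilde{O}\!\left(\frac{n^2 m_2}{t_1 t_2}|V_0|\right) = \widetilde{O}\!\left(\frac{n^2}{t_2}|V_0|\right)$, using $m_2 = \Theta(t_1)$ so that $\frac{m_2}{t_1} = \Theta(1)$. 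Summing, ${\cal W}^{\sf T}(G) = \widetilde{O}\!\left(\left(n + \frac{n^2}{t_1} + \frac{n^2}{t_2}\right)|V_0|\right)$, and we simply \emph{define} ${\cal W}^{\sf T}$ to be this upper bound (one must also note $|V_1| = |V_0|$, which is the notation used in the lemma statement — indeed the lemma writes $|V_1|$ where the construction in \sec{3-dist-G} used $|V_0|$ for the size of $V_0$; these are the same quantity up to the naming convention, and I would simply track $|V_0|$ throughout and substitute at the end).

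The only mild subtlety — not really an obstacle — is the bookkeeping to confirm that the $E_2$ term, where both the weight \emph{and} the transition cost scale like $\sqrt{n/m_2}$, combines to give a factor $n/m_2$ which then cancels the $m_2$ in $|E_2|$, leaving $n^2/t_1$; and that the $E_3$ term's $m_2/t_1 = \Theta(1)$ simplification is legitimate. Everything else is direct substitution, so the proof is short. I expect no genuine difficulty; the "hard part" is merely being careful that the $\widetilde{O}$ absorbs all the ${\sf polylog}(n)$ factors coming from each ${\sf T}_e$ uniformly, which is immediate since there are only four edge classes and each has ${\sf T}_e$ polylogarithmic in $n$ (times at most $\sqrt{n/m_2}$ for $E_2$, already accounted for).
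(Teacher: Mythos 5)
Your proof is correct and follows essentially the same route as the paper's: decompose ${\cal W}^{\sf T}(G)=\sum_e \w_e\,{\sf T}_e$ over the four edge classes $E_0^+,E_1,E_2,E_3$, multiply each $|E_\ell|$ by the corresponding weight and transition cost, and combine using $m_2=\Theta(t_1)$. One small correction to your aside: $V_1$ and $V_0$ are \emph{not} the same set in the paper's notation (indeed $|V_1|=\frac{n-9t_1}{t_1+1}|V_0|$), so they are not merely different names for one quantity; the $|V_1|$ in the lemma statement is better read as a typo for $|V_0|$ --- the same $V_1\leftrightarrow V_0$ slip occurs in the prose surrounding the definition of $\sigma$ --- and since $|V_1|>|V_0|$ the stated bound is still valid, while your bound in terms of $|V_0|$ is the intended (and stronger) one.
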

\begin{proof}
Recall that ${\cal W}^{\sf T}(G)={\cal W}(G^{\sf T})$ is the total weight of the graph $G^{\sf T}$, where we replace each edge $e$ of $G$, with weight $\w_e$, by a path of ${\sf T}_e$ edges of weight $\w_e$, where ${\sf T}_e$ is the complexity of the edge transition $e$. Thus, ${\cal W}^{\sf T}(G)=\sum_{e\in E(G)}{\sf T}_e\w_e$. By \cor{3-dist-transitions}, for all $e\in \overrightarrow{E}(G)\setminus E_2$, ${\sf T}_e=\widetilde{O}(1)$, and $\w_e=1$ (see \tabl{3-dist-weights}). Thus, using \eq{E1}, the total contribution to the weight from the edges in $E_0^+$ is:
\begin{equation}
{\cal W}_0^+ := \w_0^+|E_0^+|{\sf T}_0^+=\widetilde{O}\left({n}|V_0|\right).\label{eq:W1}
\end{equation}
Using \eq{E2}, the total contribution from the edges in $E_1$ is:
\begin{equation}
{\cal W}_1:= \w_1|E_1|{\sf T}_1 = \widetilde{O}\left(n|V_0|\right).\label{eq:W2}
\end{equation}
The edges $e\in E_2$ have ${\sf T}_e={\sf T}_2=\widetilde{O}(\sqrt{n/m_2})$, by \cor{3-dist-transitions}, so using $\w_2=\sqrt{n/m_2}$ and \eq{E3}, the total contribution from the edges in $E_2$ is:
\begin{equation}
{\cal W}_2:= \w_2|E_2|{\sf T}_2 =\widetilde{O}\left(\sqrt{\frac{n}{m_2}}\frac{2(m_2-t_2)(n-9t_1)}{t_1+1}|V_0|\sqrt{\frac{n}{m_2}}\right)
=\widetilde{O}\left(\frac{n^2}{t_1}|V_0|\right).\label{eq:W3}
\end{equation}
Finally, using \eq{E4} and the fact that $m_2=\Theta(t_1)$, the total contribution from the edges in $E_3$ is:
\begin{equation}
{\cal W}_3:= \w_3|E_3|{\sf T}_3 = \widetilde{O}\left(\frac{n^2}{t_2}|V_0|\right).\label{eq:W4}
\end{equation}
Combining \eq{W1} to \eq{W4}, we get total weight:
\begin{align*}
{\cal W}^{\sf T}(G) &= {\cal W}_0^++{\cal W}_1+{\cal W}_2+{\cal W}_3
 = \widetilde{O}\left(\left(n + \frac{n^2}{t_1}+\frac{n^2}{t_2}\right)|V_0| \right).\qedhere
\end{align*}
\end{proof}

\subsubsection{Conclusion of Proof of Theorem~\ref{thm:3-dist}}

We can now conclude with the proof of \thm{3-dist}, showing an upper bound of $\widetilde{O}(n^{5/7})$ on the bounded error quantum time complexity of $3$-distinctness. 

\begin{proof}[Proof of Theorem~\ref{thm:3-dist}]
We apply \thm{full-framework} to $G$ (\sec{3-dist-G}), $M$ (\eq{3-dist-M}),  $\sigma$ the uniform distribution on $V_0$ (\eq{3-dist-V1}), and $\Psi_\star$ (\sec{3-dist-star-states}), with 
$${\cal W}^{\sf T}=\widetilde{O}\left(\left(n + \frac{n^2}{t_1} +\frac{n^2}{t_2}\right)|V_0|\right)
\mbox{ and }
{\cal R}^{\sf T}=\widetilde{O}\left(|V_0|^{-1} \right).$$
Then we have
$${\cal W}^{\sf T}{\cal R}^{\sf T} = \widetilde{O}\left(n + \frac{n^2}{t_1} +\frac{n^2}{t_2}\right)= o(n^2).$$

\noindent We have shown the following:
\begin{description}
\item[Setup Subroutine:] By \lem{setup}, the state $\ket{\sigma}$ can be generated in cost ${\sf S}=\widetilde{O}\left(t_1+t_2\sqrt{\frac{n}{m_2}}\right)$.
\item[Star State Generation Subroutine:] By \lem{3-dist-star-states}, the star states ${\Psi_\star}$ can be generated in $\widetilde{O}(1)$ complexity.
\item[Transition Subroutine:] By \cor{3-dist-transitions}, there is a quantum subroutine that implements the transition map with errors $\epsilon_{u,v}$ and costs ${\sf T}_{u,v}$ such that
\begin{description}
\item[TS1] For all $(u,v)\in \overrightarrow{E}(G)\setminus E_2$, $\epsilon_{u,v}=0$. For all $(u,v)\in E_2\setminus\tilde{E}$ (see \eq{3-dist-tilde-E}), taking $\kappa>2$ in \lem{3-dist-T3}, we have $\epsilon_{u,v}
=O(n^{-\kappa})=o(1/({\cal R}^{\sf T}{\cal W}^{\sf T}))$.
\item[TS2] By \lem{tilde-E}, using $\w_2=\sqrt{n/m_2}$ and $\kappa>2$:
\begin{align*}
\sum_{e\in \tilde{E}}\w_e &= \w_2|\tilde{E}| \leq \sqrt{\frac{n}{m_2}}n^{-\kappa}|E_2| = \sqrt{\frac{n}{m_2}}n^{-\kappa}\frac{2(m_2-t_2)(n-9t_1)}{t_1+1}|V_0| &\mbox{by }\eq{E3}\\
&= {O}\left(\sqrt{n}n^{-\kappa}n\frac{1}{{\cal R}^{\sf T}}\right) = o(1/{\cal R}^{\sf T}). 
\end{align*}
since $m_2=\Theta(t_1)$.

\end{description}

\item[Checking Subroutine:] By \eq{3-dist-C}, for any $u\in V_{\sf M}=V_3$, we can check if $u\in M$ in cost $\widetilde{O}(1)$. 
\item[Positive Condition:] By \lem{3-dist-positive}, there exists a flow satisfying conditions \textbf{P1}-\textbf{P5} of \thm{full-framework}, with ${\cal E}^{\sf T}(\theta)\leq {\cal R}^{\sf T}=\widetilde{O}\left(|V_0|^{-1} \right)$.
\item[Negative Condition:] By \lem{3-dist-negative}, ${\cal W}^{\sf T}(G)\leq {\cal W}^{\sf T}=\widetilde{O}\left(\left(n + \frac{n^2}{t_1} +\frac{n^2}{t_2}\right)|V_0|\right)$. 
\end{description}
Thus, by \thm{full-framework}, there is a quantum algorithm that decides if $M=\emptyset$ in bounded error in complexity:
\begin{align*}
\widetilde{O}\left({\sf S}+\sqrt{{\cal R}^{\sf T}{\cal W}^{\sf T}}\right) &= \widetilde{O}\left(t_1+t_2\sqrt{\frac{n}{m_2}}+\sqrt{n+\frac{n^2}{t_1}+\frac{n^2}{t_2}}\right)
=\widetilde{O}\left( t_1+t_2\sqrt{\frac{n}{t_1}}+\sqrt{n}+\frac{n}{\sqrt{t_1}}+\frac{n}{\sqrt{t_2}}\right).
\end{align*}
Choosing the optimal values of $t_1=n^{5/7}$ and $t_2=n^{4/7}$, we get an upper bound of $\widetilde{O}(n^{5/7})$.
Since $M\neq\emptyset$ if $x$ has a unique 3-collision, and $M=\emptyset$ if $x$ has no 3-collision, the algorithm distinguishes these two cases. By \lem{unique-to-multiple}, this is enough to solve 3-distinctness in general.
\end{proof}

\subsection{\texorpdfstring{$k$}{k}-Distinctness Algorithm}\label{sec:k-dist-alg}

In this section, we generalise the $3$-distinctness algorithm from \sec{3-dist} to prove the following.
\begin{theorem}\label{thm:k-dist}
	Let $k$ be any constant.
	There is a quantum algorithm that decides $k$-distinctness with bounded error in $\widetilde{O}\left(n^{\frac{3}{4}-\frac{1}{4}\frac{1}{2^k-1}}\right)$ complexity.
\end{theorem}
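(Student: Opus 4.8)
The plan is to generalize the construction of \sec{3-dist} to arbitrary constant $k$, following the same architecture: build a graph $G$ whose vertices are tuples of sets $R=(R_1,\dots,R_{k-1})$ together with queried data, define a marked set detecting the unique $k$-collision, and apply \thm{full-framework}. As in the $k=3$ case, $R_1$ will be a random subset of $A_1$ that is fully queried, while $R_2,\dots,R_{k-1}$ will each be a union of blocks from the partition of \eq{k-dist-partition}, so that writing them down is cheap (addressing \textbf{Problem 1}). The data $D(R)$ will record, for each $\ell\in\{2,\dots,k-1\}$, only those elements of $R_\ell$ that extend a partial collision in $R_1\times\dots\times R_{\ell-1}$ (so $D_\ell$ is small, of size $\widetilde O(|R_\ell|\cdot t_1/n)$-ish after the padding assumption of \sec{k-dist-assumptions}). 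To handle \textbf{Problem 2} at every level — that adding a new index to $R_\ell$ might create a \emph{fault} because we cannot afford to search $R_{\ell+1},\dots,R_{k-1}$ for a collision — I would generalize the Belovs-style partitioning: $R_\ell$ is a tuple $(R_\ell(S))_{S}$ indexed by subsets $S$ of a suitable index set (for $k=3$ these were $S\in 2^{\{1,2\}}\setminus\emptyset$), and $D_{\ell+1}$ records a collision of $i_\ell\in R_\ell(S)$ with $i_{\ell+1}\in R_{\ell+1}(s)$ only when $s\in S$. Then adding an index to $R_\ell$ has several choices of which part $R_\ell(S)$ to place it in, and at least one choice avoids a fault; exactly as in \fig{alt-neighbourhoods-S}, we put all of these choices into $\Psi_\star(u)$ as alternative neighbourhoods.

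The stages of the walk generalize the $V_0\to V_0^+\to V_1\to V_2\to V_3$ chain of \sec{3-dist}: for each $\ell$ there is a "$+$" vertex set where an index $i_\ell\in A_\ell$ has been chosen but not yet committed to a part $R_\ell(S)$, followed by a commit step, followed by a stage that grows $R_{\ell+1}$ by one block (this latter transition, and only it, has nontrivial cost $\widetilde O(\sqrt{n/m_{\ell+1}})$, handled by edge composition); the final stage appends $i_k\in A_k$ and the vertex is marked iff all $a_\ell$ have been captured. I would set up weights $\w_e$ and transition costs ${\sf T}_e$ stage by stage, with the total-weight ledger ${\cal W}^{\sf T}(G)$ (as in \lem{3-dist-negative}) summing to $\widetilde O\big((n+\sum_\ell n^2/t_\ell)|V_0|\big)$ and the flow energy ${\cal R}^{\sf T}(\theta)$ (as in \lem{3-dist-positive}) to $\widetilde O(1/|V_0|)$, so that ${\cal R}^{\sf T}{\cal W}^{\sf T}=\widetilde O(n+\sum_{\ell=1}^{k-1} n^2/t_\ell)$. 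The setup cost is $\widetilde O\big(t_1+\sum_{\ell\ge 2} t_\ell\sqrt{n/m_\ell}\big)$ as in \lem{setup}. Then \thm{full-framework} gives complexity $\widetilde O\big({\sf S}+\sqrt{{\cal R}^{\sf T}{\cal W}^{\sf T}}\big)$, and the final step is to optimize the parameters $t_1\gg t_2\gg\dots\gg t_{k-1}$ (with $m_\ell=\Theta(t_{\ell-1})$): balancing $t_\ell$ against $n/\sqrt{t_\ell}$ and the setup terms yields the recursion whose solution is $t_\ell = n^{1 - \frac{2^{\ell}-1}{2^{k}-1}\cdot(\dots)}$, giving overall time $\widetilde O\big(n^{\frac34-\frac14\frac1{2^k-1}}\big)$; this matches Belovs' query bound and recovers $n^{5/7}$ at $k=3$.

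I expect the main obstacle to be the positive analysis — constructing the flow $\theta$ and verifying condition \textbf{P2} of \thm{full-framework} for \emph{all} alternative-neighbourhood states, not just the correct star state. In the $k=3$ case this required the delicate "flow in on two edges, out on two edges, with negative flow on the $\{1,2\}$ edge" trick of \fig{3-dist-flow-stars}; for general $k$ the index sets $S$ form a richer lattice, and at each "$+$" vertex on the collision path one must choose flow values on the edges labelled by the various admissible $S$ (those with $S\cap{\cal I}(u)=\emptyset$) so that conservation holds simultaneously under every possible value of ${\cal I}(u)$. I would handle this by an inclusion–exclusion / Möbius-type assignment over the subset lattice: on the collision vertex, distribute flow to the parts $R_\ell(S)$ with signs $(-1)^{|S|+1}$ (exactly as the $(-1)^{|S|+1}/|M_\ell(S)|$ pattern already appearing in \lem{3-dist-positive}), and check that for each candidate ${\cal I}$ the restricted sum telescopes to zero. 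The bookkeeping of which vertex sets have flow (the "$M_\ell(S)$" sets), the $(1+o(1))$ ratios $|V_0|/|M_0|$ controlling condition \textbf{P4}, and confirming the flow avoids $\tilde E$ (the edges where a block contributes too many collisions, controlled via \lem{hypergeo} / \cor{hypergeo}) are all routine extensions of \sec{3-dist} but must be done carefully for every level. The negative analysis, star-state generation (via \lem{const-stars}), and transition subroutines (via \lem{uniform-combine}) are straightforward level-by-level generalizations of \lem{3-dist-negative}, \lem{3-dist-star-states}, and \cor{3-dist-transitions} respectively.
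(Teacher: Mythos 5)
Your overall plan — generalize the $3$-distinctness graph level by level, partition each $R_\ell$ into parts so that adding an index has several placements, let $\Psi_\star(u)$ contain all fault-avoiding options for $u$ in the ``$+$'' layers, and drive the positive flow by an inclusion–exclusion sign pattern $(-1)^{|S|+1}$ to satisfy \textbf{P2} under every candidate ${\cal I}(u)$ — is exactly the architecture of the paper's proof, and you correctly identify the positive analysis as the crux. The parameter balance $t_1 + \sum_{\ell\ge 2} t_\ell\sqrt{n/m_\ell} + \sqrt{n + \sum_\ell n^2/t_\ell}$ and the optimal choices of $t_\ell$ also match.

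There is, however, one genuine gap. Your plan asserts that ``at least one choice avoids a fault'' and leans on the $k=3$ picture of \fig{alt-neighbourhoods-S} as a template. That is correct at level $\ell=1$: a single index $i_1$ can collide with at most $k-2$ elements of $\overline R_2$ (else the unique $k$-collision assumption is violated), so taking $c_1=k-1$ parts guarantees ${\cal I}(v^0_{R,i_1})\subsetneq[c_1]$. But for $\ell\in\{2,\dots,k-2\}$ you add an entire block $A_\ell^{(j_\ell)}$ at once, and each of its roughly $\sqrt{n/m_\ell}$ elements can contribute its own forbidden parts of $R_{\ell+1}$, so no constant $c_\ell$ bounds $|{\cal I}(v^{\ell-1}_{R,j_\ell})|$, and some ``$+$'' vertices genuinely have \emph{no} admissible $S_\ell$. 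If you then define $\Psi_\star(u)$ as the set of states $\ket{\psi_\star^{{\cal I}_\ell}(u)}$ with ${\cal I}_\ell\subsetneq[c_\ell]$ (all of which have a forward edge), the true star state for such a $u$ — pure backward — is not in $\Psi_\star(u)$, violating \defin{alternative}; and you cannot just throw the pure-backward state in, because then generating $\Psi_\star(u)$ requires knowing whether ${\cal I}(u)=[c_\ell]$, which is precisely the expensive search you are trying to avoid. The paper's fix is a second excluded-edge set $\tilde E'_\ell$ (\eq{k-dist-tilde-E-prime}): a dummy forward edge of weight $\w_\ell$ is hung on every all-faulty $u$, the transition subroutine is allowed to fail on it (it lands in $\tilde E$), the flow never touches it (\textbf{P1}), and a tail-bound argument (\cor{ktilde-E}, resting on the nested hypergeometric induction of \lem{setup-probability}, which is itself needed because the blocks come from a $d$-wise-independent permutation rather than a true uniform partition) shows $\sum_{e\in\tilde E}\w_e=o(1/{\cal R}^{\sf T})$, discharging \textbf{TS2}. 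Two further details your sketch glosses over but which the construction depends on: the parts of $R_\ell$ are indexed hierarchically by $(s_1,\dots,s_{\ell-1},S_\ell)\in[c_1]\times\dots\times[c_{\ell-1}]\times(2^{[c_\ell]}\setminus\{\emptyset\})$ with the walk committing to $s_{\ell'}=\mu(S_{\ell'})$ at each stage (not simply by subsets of one index set), which is what keeps $D_\ell(R)$ coherent; and an auxiliary database $C^\rightarrow_\ell(R)$ of \emph{forward collision degrees} (\eq{k-dist-forward-col-db}) must be maintained so that $L^-(v_R^\ell)$ — the set of removable $j_\ell$ with $d_R^\rightarrow(j_\ell)=0$ — can be sampled coherently in $\widetilde O(1)$ when generating star states.
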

\noindent Throughout this section, $\widetilde{O}$ will surpress polylogarithmic factors in $n$.
We use the assumptions on the input defined in \sec{k-dist-assumptions}, including partitioning $[n]$ into $A_1\cup\dots\cup A_k$, and each $A_{\ell}$, for $\ell\in\{2,\dots, k-1\}$ into blocks $A_{\ell}^{(1)}\cup\dots\cup A_{\ell}^{(m_{\ell})}$ of size $\frac{n}{km_{\ell}}$. Additionally, for the uniformity of our notation in this section, we choose to also partition $A_1$ into blocks $A_{1}^{(1)}\cup\dots\cup A_{\ell}^{(m_{1})}$ of size $\frac{n}{km_{1}}$. By choosing $m_1 = \frac{n}{k} = \abs{A_1}$, this becomes the trivial partition, where each block is of size $1$. A summary of the parameters of the algorithm appears in \tabl{k-dist-setsizes}.

\paragraph{Tuples of Sets:} Fix constants $c_1,\dots,c_{k-1}$ and parameters $t_1,\dots,t_{k-1}$ as in \tabl{k-dist-setsizes}. The vertices of our graph are labelled by sets $R=(R_1,\dots,R_{k-1})$, where each $R_{\ell}$ is a tuple of $c_1\dots c_{\ell-1}(2^{c_{\ell}}-1)$ disjoint subsets of $[m_{\ell}]$ of size $t_\ell$:
$$R_{\ell}=(R_{\ell}(s_1,\dots,s_{\ell-1},S_{\ell}))_{s_1\in [c_1],\dots,s_{\ell-1}\in [c_{\ell-1}],S_{\ell}\in 2^{[c_{\ell}]}\setminus\{\emptyset\}}.$$
We define:
$$\overline{R}_{\ell}(s_1,\dots,s_{\ell-1},S_{\ell}):= 
\bigcup_{j_{\ell}\in R_{\ell}(s_1,\dots,s_{\ell-1},S_{\ell})}A_{\ell}^{(j_{\ell})}.$$
and
$$\overline{R}_{\ell}:=
(\overline{R}_{\ell}(s_1,\dots,s_{\ell-1},S_{\ell}))_{s_1\in [c_1],\dots,s_{\ell-1}\in [c_{\ell-1}],S_{\ell}\in 2^{[c_{\ell}]}\setminus\{\emptyset\}}.$$

If we let $r_{\ell}=|\overline{R}_{\ell}|\approx t_{\ell}\frac{n}{m_{\ell}}$ for $\ell\in [k-1]$, we get the set sizes $r_{\ell}$ from~\cite{belovs2012kDist}. We will not use these variables, but we note that the values we get for $\{r_{\ell}\}_{\ell=1}^{k-1}$ (from the values of $\{t_{\ell}\}_{\ell=1}^{k-1}$) are the same as those obtained in~\cite{belovs2012kDist}, as our algorithm can be seen as an algorithmic version of the combinatorial construction used in~\cite{belovs2012kDist}. Finally, we choose the number of blocks in each $A_{\ell}$, $m_{\ell}$, so that $m_{\ell}=\Theta(t_{\ell-1})$ for each $\ell\in \{2,\dots,k-1\}$. This ensures that the expected size of ${\cal K}(\overline{R}_1,\dots,\overline{R}_{\ell-1},A_{\ell}^{(j_{\ell})})$ is constant. These values are summarised in \tabl{k-dist-setsizes}.

\begin{table}
	\renewcommand{\arraystretch}{2}
	\centering
	\begin{tabular}{ |r|l|}
		\hline
		$\ell\in\{1,\dots,k-1\}$, $t_{\ell}$ & $=n^{\frac{3}{4}-\frac{1}{4}\frac{1}{2^k-1} - \sum_{\ell'=2}^{\ell}\frac{2^{k-1-\ell'}}{2^k-1}}$\\
		\hline
		$m_1$ & $=\frac{n}{k}$\\
		\hline
		$\ell\in\{2,\dots,k-1\}$ $m_{\ell}$ & $=\Theta(t_{\ell-1})$\\
		\hline
		$c_1$ & $=k-1$\\
		\hline
		$\ell\in\{2,\dots,k-2\}$, $c_{\ell}$ & $=O(1)$ large enough for \cor{ktilde-E}\\
		\hline
		$c_{k-1}$ & $=1$\\
		\hline
		$\ell\in\{2,\dots,k-1\}$, $p_{\ell}$ & $={\sf polylog}(n)$ large enough for \cor{ktilde-E}.\\
		\hline
	\end{tabular}
	\caption{A summary of the (asymptotic) values of variables used in this section.}\label{tab:k-dist-setsizes}
\end{table}

\paragraph{Data:} With any $R$ defined as above, we keep track of some input-dependent data as follows. First, we query everything in $\overline{R}_1$, so we define:
\begin{equation}
\begin{split}
\forall S_1\in 2^{[c_1]}\setminus\{\emptyset\}, D_1(R_1(S_1)) &:= \{(i_1,x_{i_1}): i_1\in \overline{R}_1(S_1)\}\\
D_1(R) &:= \left( D_1(R_1(S_1))\right)_{S_1\in 2^{[c_1]}\setminus\{\emptyset\}}.
\end{split}\label{eq:k-dist-D-1}
\end{equation}

Next, for $\ell\in \{2,\dots,k-1\}$, and $(s_1,\dots,s_{\ell-1},S_{\ell})\in [c_1]\times\dots\times [c_{\ell-1}]\times (2^{[c_{\ell}]}\setminus\{\emptyset\})$, we only query some of the indices in $\overline{R}_{\ell}$, and which ones we query depends on $R$, specifically on $R_1,\dots,R_{\ell-1}$:
\begin{multline}
D_{\ell}(R_{\ell}(s_1,\dots,s_{\ell-1},S_{\ell})|R) := \bigcup_{\substack{S_{\ell-1}\subseteq [c_{\ell-1}]:\\ s_{\ell-1}\in S_{\ell-1}}}
\big\{  
(i_1,\dots,i_{\ell},x_{i_1}):x_{i_{\ell}}=x_{i_1},i_{\ell}\in \overline{R}_{\ell}(s_1,\dots,s_{\ell-1},S_{\ell}),\\
%&\qquad\qquad\qquad\quad
 (i_1,\dots,i_{\ell-1},x_{i_1})\in D_{\ell-1}(R_{\ell-1}(s_1,\dots,s_{\ell-2},S_{\ell-1})|R)  
\big\}.
\label{eq:D-ell}
\end{multline}
We will sometimes omit ``$|R$'' when the context is clear. We can group these together to get:
\begin{equation}
D_{\ell}(R) := \left(D_{\ell}(R_{\ell}(s_1,\dots,s_{\ell-1},S_{\ell}))\right)_{(s_1,\dots,s_{\ell-1},S_{\ell})\in [c_1]\times\dots[c_{\ell-1}]\times (2^{[c_{\ell}]}\setminus\{\emptyset\})}.\label{eq:D-ell-R}
\end{equation}

In addition to this data, we want to keep track of a number for each $j_{\ell}\in R_{\ell}$ that we call the \emph{forward collision degree}. Loosely speaking, for some $i_{\ell}\in \overline{R}_{\ell}$, a forward collision is an element $(i_1,\dots,i_{\ell},\dots,i_{\ell'},x_{i_1})\in D_{\ell'}(R)$, for some $\ell'>\ell$, and some $i_1,\dots,i_{\ell-1},i_{\ell+1},\dots,i_{\ell'}$. This can only exist if $(i_1,\dots,i_{\ell},i_{\ell+1},x_{i_1})\in D_{\ell+1}(R)$, so the \emph{forward collision degree of $i_{\ell}$}, $\bar{d}_{\ell}^{\rightarrow}(i_{\ell})$, counts these. Concretely, for $\ell \in \{1,\dots,k-2\}$ it is defined as:
\begin{equation}
\bar{d}_{R}^{\rightarrow}(i_{\ell}):=\abs{\left\{(i_1,\dots,i_{\ell-1},i_{\ell+1})\in \overline{R}_1\times\dots\times\overline{R}_{\ell-1}\times\overline{R}_{\ell+1} :(i_1,\dots,i_{\ell},i_{\ell+1},x_{i_1})\in D_{\ell+1}(R)\right\}}.\label{eq:k-dist-bar-d}
\end{equation}
For consistency, we also define $\bar{d}_{R}^{\rightarrow}(i_{k-1}):=0$ for $i_{k-1}\in \overline{R}_{k-1}$. 
Then we can define the forward collision degree of $j_{\ell}\in R_{\ell}$ for any $\ell \in \{1,\dots,k-1\}$ as:
\begin{equation}
d_{R}^{\rightarrow}(j_{\ell}):=\sum_{i_{\ell}\in A_{\ell}^{(j_{\ell})}}\bar{d}_{\ell}^{\rightarrow}(i_{\ell}).
\label{eq:forward-collision-degree}
\end{equation}
When our quantum walk removes some $j_{\ell}$ from $R_{\ell}$, we will want to make sure that $d^{\rightarrow}_R(j_{\ell})=0$, because otherwise we will have to uncompute all forward collisions from the data, which could be expensive. Thus, we also keep a database of forward collision degrees:
\begin{equation}
\forall \ell\in\{1,\dots,k-2\},\; C_{\ell}^{\rightarrow}(R) := \{(j_{\ell},d_R^{\rightarrow}(j_{\ell})):j_{\ell}\in R_{\ell},d_{R}^{\rightarrow}(j_{\ell})>0\}.
\label{eq:k-dist-forward-col-db}
\end{equation}

\noindent To summarise, the data we keep track of at a vertex $v_R$ includes:
\begin{equation}
D(R):=(D_1(R),\dots,D_{k-1}(R),C^{\rightarrow}_1(R),\dots,C^{\rightarrow}_{k-2}(R)).
\label{eq:k-dist-data}
\end{equation}

\subsubsection{Intuition about the Combinatorial Structure}

The way we partition each $\overline{R}_{\ell}$ (by partitioning $R_{\ell}$) precisely follows the combinatorial structure of~\cite{belovs2012kDist}, but in this section, we try to give some intuition about why this is done. This section is not technically necessary, and may be skipped without impacting correctness. We will be imprecise for the sake of intuition; for precision see the remainder of this paper. %For the sake of intuition, the reader may wish to imagine that $m_1=\dots=m_{k-1}=n/k$, so that for all $\ell$, $R_{\ell}$ and $\overline{R}_{\ell}$ are essentially the same.

\noindent A vertex $v_R$ is labelled by a tuple of tuples of sets: 
\begin{align*}
{R}&=({R}_1,\dots,{R}_{k-1}) \\
&= \left(({R}_1(S_1))_{S_1\in 2^{[c_1]}\setminus\{\emptyset\}},\dots,({R}_{k-1}(s_1,\dots,s_{k-2},S_{k-1}))_{s_1\in [c_1],\dots,s_{k-2}\in [c_{k-2}],S_{k-1}\in 2^{[c_{k-1}]}\setminus\{\emptyset\}}\right).
\end{align*}
Let $\bar{t}_{\ell}:= t_{\ell} c_1c_2\dots c_{\ell-1}(2^{c_\ell}-1)$ be the size of $R_{\ell}$ if each part $R_{\ell}(s_1,\dots,s_{\ell-1},S_{\ell})$ has size $t_{\ell}$.
The set of such vertices where $|R_{\ell}|=\bar{t}_{\ell}$ for all $\ell\in\{1,\dots,k-1\}$ is called $V_0$. 
Starting from such a vertex in $V_0$, we may add an index to $R_1$ to get a vertex where now $|R_1|=\bar{t}_1+1$ -- call the set of such vertices $V_1$. Then we may add something to $R_2$ to get a vertex with $|R_2|=\bar{t}_2+1$ -- call vertices of that form $V_2$. We can continue until we get a vertex where $|R_{\ell}|=\bar{t}_{\ell}+1$ for all $\ell\in \{1,\dots,k-1\}$, the set of which is called $V_{k-1}$. 

Let us give some more detail on the process of moving from $V_0$ to $V_{k-1}$. For any vertex $v_R=v_R^0\in V_0$, we first choose an index $j_1$ to add to $R_1$ (but don't yet add it), to get a vertex $v_{R,j_{1}}^0$ (we call the set of such vertices $V_0^+$). Next, we want to actually add $j_1$ to $R_1$ to get a vertex in $V_1$, but before we can do that, we need to choose \emph{where} in $R_1$ to add $j_1$, so we first choose some non-empty $S_1\subseteq [c_1]$ and then add $j_1$ to $R_1(S_1)$, completing the transition to $V_1$. The new vertex does not remember which $j_1$ was most recently added to $R_1(S_1)$, but it remembers where it was added (i.e.~$S_1$), because it has $|R_1(S_1)|=t_1+1$.

Next, to move from $v_R^1\in V_1$ to some vertex in $V_2$, we again start by choosing the $j_2$ that we will eventually add to $R_2$, to get an intermediate vertex $v_{R,j_2}^1\in V_1^+$. Then we need to choose some $(s_1,S_2)$ and add $j_2$ to $R_2(s_1,S_2)$. We will ensure that we choose $s_1\in S_1$ (we remember $S_1$, because $|R_1(S_1)|=t_1+1$), but we do this deterministically by taking the minimum element of $S_1$. The choice of $S_2$ however is random, which we discuss more below. The reason we choose $s_1\in S_1$ is that in the analysis, we will construct a flow that goes along edges from $V_0$ to $V_1$ that add the unique $j_1$ of the block containing $a_1$; and then along edges that add the unique $j_2$ of the block containing $a_2$, etc. We need $s_1\in S_1$ to ensure that $(a_1,a_2,x_{a_1})\in D_2(R)$, so that this flow eventually goes into the set of vertices in $V_{k-1}$ that not only contain $a_1,\dots,a_{k-1}$, but have noticed that they form a $(k-1)$-collision. 

We continue moving from $V_{\ell}$ to $V_{\ell+1}$, making some choice $S_{\ell+1}$, and adding a new index to $R_{\ell+1}(s_1,\dots,s_{\ell},S_{\ell+1})$ (the $s_1,\dots,s_{\ell}$ are chosen deterministically), so that a vertex in $V_{\ell+1}$ has associated sets $S_1,\dots,S_{\ell+1}$. 
In this way, the choices of sets made in moving from $V_0$ to $V_{k-1}$ give rise to a kind of tree, of depth $k$, with the degree at level $\ell$ being $s^{c_{\ell+1}}-1$. (The nodes of this tree correspond to \emph{sets} of vertices of the graph we're walking on, and the edges of the tree correspond to sets of edges in our graph).

But the choice of $S_{\ell+1}$ requires care, because if, in our quantum walk, we add $j_{\ell+1}$ to a bad choice of $R_{\ell+1}(s_1,\dots,s_{\ell},S_{\ell+1})$, we might find that we have introduced a \emph{fault}, rendering our data incorrect. It turns out that we can avoid faults precisely by making a choice of ${S}_{\ell+1}$ that avoids a certain set ${\cal I}={\cal I}(v_{R,j_{\ell+1}}^{\ell})\subset [c_{\ell+1}]$ -- this ensures that if we add an index to $\overline{R}_{\ell+1}(s_1,\dots,s_{\ell},S_{\ell+1})$ that has a match in some part $\overline{R}_{\ell+2}(s_1,\dots,s_{\ell+1},S_{\ell+2})$, then $\overline{R}_{\ell+2}(s_1,\dots,s_{\ell+1},S_{\ell+2})$ is exempt from requiring matches with $\overline{R}_{\ell+1}(s_1,\dots,s_{\ell},S_{\ell+1})$ to be queried (because $s_{\ell+1}\not\in S_{\ell+1}$). 

Since we don't know ${\cal I}$, we use alternative neighbourhoods for all the different possibilities ${\cal I}\subsetneq [c_{\ell+1}]$. 
The resulting alternative neighbourhoods for $v_{R,j_{\ell+1}}^{\ell}$ are as follows: They all have a backwards edge to $v_{R}^{\ell}$. The possible sets of forward edges are those labelled by $S_{\ell+1}\in L^+(v_{R,j_{\ell+1}}^\ell)=2^{[c_{\ell+1}]}\setminus\emptyset$ such that $S_{\ell+1}\cap {\cal I} = \emptyset$ -- that is non-empty $S_{\ell+1}\subset [c_{\ell+1}]\setminus {\cal I}$.\footnote{It's possible that ${\cal I}=[c_{\ell+1}]$, but this is sufficiently unlikely that we can treat this case separately.} Below are just some of the alternative neighbourhoods for $c_{\ell+1}=3$, for the choices ${\cal I}=\emptyset$, ${\cal I}=\{1\}$ and ${\cal I}=\{3\}$. Dotted lines indicate missing edges (or edges of weight 0). 
\begin{center}
    
\begin{tikzpicture}[scale=1.5]

\node at (0,0) {
\begin{tikzpicture}[scale=1.5]
\draw[{Latex[length=2mm, width=2mm]}-] (0,0)--(1.5,0);		\draw[-{Latex[length=2mm, width=2mm]}] (1.5,0)-- +(90:1.5);
										\draw[-{Latex[length=2mm, width=2mm]}] (1.5,0)-- +(60:1.5);
										\draw[-{Latex[length=2mm, width=2mm]}] (1.5,0)-- +(30:1.5); 
										\draw[-{Latex[length=2mm, width=2mm]}] (1.5,0)-- +(0:1.5);
										\draw[-{Latex[length=2mm, width=2mm]}] (1.5,0)-- +(-30:1.5);
										\draw[-{Latex[length=2mm, width=2mm]}] (1.5,0)-- +(-60:1.5);
										\draw[-{Latex[length=2mm, width=2mm]}] (1.5,0)-- +(-90:1.5);
										
\filldraw (-.05,0) circle (.05);		\filldraw (1.5,0) circle (.08);	\filldraw (1.5,0)+(90:1.55) circle (.05);
										\filldraw (1.5,0)+(60:1.55) circle (.05);
										\filldraw (1.5,0)+(30:1.55) circle (.05);
										\filldraw (1.5,0)+(0:1.55) circle (.05);														\filldraw (1.5,0)+(-30:1.55) circle (.05);													\filldraw (1.5,0)+(-60:1.55) circle (.05);													\filldraw (1.5,0)+(-90:1.55) circle (.05);
\node at (-.3,.2) {$v_{R}^{\ell}$};
\node at (.9,.25) {$v_{R,j_{\ell+1}}^{\ell}$};
%\node at (3.5,1) {$v^{\{1\}}$};
%\node at (3.5,0) {$v^{\{1,2\}}$};
%\node at (3.5,-1) {$v^{\{2\}}$};

%\node at (1.1,-.15) {\color{blue}$\leftarrow$};
\node[rotate=90] at (1.35,.6) {\color{blue}\tiny ${}_{\{1\}}$};
\node[rotate=60] at (1.65,.5) {\color{blue}\tiny ${}_{\{2\}}$};
\node[rotate=30] at (2,.4) {\color{blue}\tiny ${}_{\{1,2\}}$};
\node at (2.05,.1) {\color{blue}\tiny ${}_{\{3\}}$};
\node[rotate=-30] at (2.05,-.2) {\color{blue}\tiny ${}_{\{1,3\}}$};
\node[rotate=-60] at (1.9,-.5) {\color{blue}\tiny ${}_{\{2,3\}}$};
\node[rotate=-90] at (1.6,-.8) {\color{blue}\tiny ${}_{\{1,2,3\}}$};

%\draw (0,0) -- (10,0);
\end{tikzpicture}
};

\node at (4.5,0) {
\begin{tikzpicture}[scale=1.5]
\draw[{Latex[length=2mm, width=2mm]}-] (0,0)--(1.5,0);		\draw[dashed] (1.5,0)-- +(90:1.5);
										\draw[-{Latex[length=2mm, width=2mm]}] (1.5,0)-- +(60:1.5);
										\draw[dashed] (1.5,0)-- +(30:1.5); 
										\draw[-{Latex[length=2mm, width=2mm]}] (1.5,0)-- +(0:1.5);
										\draw[dashed] (1.5,0)-- +(-30:1.5);
										\draw[-{Latex[length=2mm, width=2mm]}] (1.5,0)-- +(-60:1.5);
										\draw[dashed] (1.5,0)-- +(-90:1.5);
										
\filldraw (-.05,0) circle (.05);		\filldraw (1.5,0) circle (.08);	%\filldraw (1.5,0)+(90:1.55) circle (.05);
										\filldraw (1.5,0)+(60:1.55) circle (.05);
										%\filldraw (1.5,0)+(30:1.55) circle (.05);
										\filldraw (1.5,0)+(0:1.55) circle (.05);		
										%\filldraw (1.5,0)+(-30:1.55) circle (.05);
										\filldraw (1.5,0)+(-60:1.55) circle (.05);	
										%\filldraw (1.5,0)+(-90:1.55) circle (.05);
\node at (-.3,.2) {$v_{R}^{\ell}$};
\node at (.9,.25) {$v_{R,j_{\ell+1}}^{\ell}$};
%\node at (3.5,1) {$v^{\{1\}}$};
%\node at (3.5,0) {$v^{\{1,2\}}$};
%\node at (3.5,-1) {$v^{\{2\}}$};

%\node at (1.1,-.15) {\color{blue}$\leftarrow$};
\node[rotate=90] at (1.35,.6) {\color{blue}\tiny ${}_{\{1\}}$};
\node[rotate=60] at (1.65,.5) {\color{blue}\tiny ${}_{\{2\}}$};
\node[rotate=30] at (2,.4) {\color{blue}\tiny ${}_{\{1,2\}}$};
\node at (2.05,.1) {\color{blue}\tiny ${}_{\{3\}}$};
\node[rotate=-30] at (2.05,-.2) {\color{blue}\tiny ${}_{\{1,3\}}$};
\node[rotate=-60] at (1.9,-.5) {\color{blue}\tiny ${}_{\{2,3\}}$};
\node[rotate=-90] at (1.6,-.8) {\color{blue}\tiny ${}_{\{1,2,3\}}$};

%\draw (0,0) -- (10,0);
\end{tikzpicture}
};

\node at (9,0) {
\begin{tikzpicture}[scale=1.5]
\draw[{Latex[length=2mm, width=2mm]}-] (0,0)--(1.5,0);		\draw[-{Latex[length=2mm, width=2mm]}] (1.5,0)-- +(90:1.5);
										\draw[-{Latex[length=2mm, width=2mm]}] (1.5,0)-- +(60:1.5);
										\draw[-{Latex[length=2mm, width=2mm]}] (1.5,0)-- +(30:1.5); 
										\draw[dashed] (1.5,0)-- +(0:1.5);
										\draw[dashed] (1.5,0)-- +(-30:1.5);
										\draw[dashed] (1.5,0)-- +(-60:1.5);
										\draw[dashed] (1.5,0)-- +(-90:1.5);
										
\filldraw (-.05,0) circle (.05);		\filldraw (1.5,0) circle (.08);	\filldraw (1.5,0)+(90:1.55) circle (.05);
										\filldraw (1.5,0)+(60:1.55) circle (.05);
										\filldraw (1.5,0)+(30:1.55) circle (.05);
										%\filldraw (1.5,0)+(0:1.55) circle (.05);														\filldraw (1.5,0)+(-30:1.55) circle (.05);													\filldraw (1.5,0)+(-60:1.55) circle (.05);													\filldraw (1.5,0)+(-90:1.55) circle (.05);
\node at (-.3,.2) {$v_{R}^{\ell}$};
\node at (.9,.25) {$v_{R,j_{\ell+1}}^{\ell}$};
%\node at (3.5,1) {$v^{\{1\}}$};
%\node at (3.5,0) {$v^{\{1,2\}}$};
%\node at (3.5,-1) {$v^{\{2\}}$};

%\node at (1.1,-.15) {\color{blue}$\leftarrow$};
\node[rotate=90] at (1.35,.6) {\color{blue}\tiny ${}_{\{1\}}$};
\node[rotate=60] at (1.65,.5) {\color{blue}\tiny ${}_{\{2\}}$};
\node[rotate=30] at (2,.4) {\color{blue}\tiny ${}_{\{1,2\}}$};
\node at (2.05,.1) {\color{blue}\tiny ${}_{\{3\}}$};
\node[rotate=-30] at (2.05,-.2) {\color{blue}\tiny ${}_{\{1,3\}}$};
\node[rotate=-60] at (1.9,-.5) {\color{blue}\tiny ${}_{\{2,3\}}$};
\node[rotate=-90] at (1.6,-.8) {\color{blue}\tiny ${}_{\{1,2,3\}}$};

%\draw (0,0) -- (10,0);
\end{tikzpicture}
};
\end{tikzpicture}

\end{center}

The reason we have done this in such an involved way is that we need to be able to design a flow that is orthogonal to \emph{all} of these stars. Later, we will see that such a flow exists. Up to scaling by some positive real number, the flow is a sign that depends on the sizes of the sets $S_1,\dots,S_{\ell+1}$ that have been chosen so far, as shown in the following figure:
\begin{center}
\begin{tikzpicture}[scale=1.4]
\node at (0,0) {
\begin{tikzpicture}[scale=1.4]
\draw[->] (-2,0)--(-.25,0); \draw (-2,0)--(1.5,0);
		\draw[->] (1.5,0)-- +(90:1); \draw (1.5,0)-- +(90:1.5);
										\draw[->] (1.5,0)-- +(60:1); \draw (1.5,0)-- +(60:1.5);
										\draw[-<] (1.5,0)-- +(30:1); \draw[<-] (1.5,0)-- +(30:1.5); 
										\draw[->] (1.5,0)-- +(0:1); \draw (1.5,0)-- +(0:1.5);
										\draw[->] (1.5,0)-- +(-30:1); \draw (1.5,0)-- +(-30:1.5);
										\draw[->] (1.5,0)-- +(-60:1); \draw (1.5,0)-- +(-60:1.5);
										\draw[-<] (1.5,0)-- +(-90:1.25); \draw (1.5,0)-- +(-90:1.5);
										
\filldraw (-2.05,0) circle (.05);		\filldraw (1.5,0) circle (.08);	\filldraw (1.5,0)+(90:1.55) circle (.05);
										\filldraw (1.5,0)+(60:1.55) circle (.05);
										\filldraw (1.5,0)+(30:1.55) circle (.05);
										\filldraw (1.5,0)+(0:1.55) circle (.05);														\filldraw (1.5,0)+(-30:1.55) circle (.05);													\filldraw (1.5,0)+(-60:1.55) circle (.05);													\filldraw (1.5,0)+(-90:1.55) circle (.05);
\node at (-.25,-.3) {$(-1)^{|S_1|+\dots+|S_{\ell}|+\ell}$};

\node at (4.5,-.3) {$(-1)^{|S_1|+\dots+|S_{\ell}|+|{\color{blue}S_{\ell+1}}|+\ell+1}$};

\node at (-2.3,.2) {$v_{R}^{\ell}$};
\node at (.9,.3) {$v_{R,j_{\ell+1}}^{\ell}$};
%\node at (3.5,1) {$v^{\{1\}}$};
%\node at (3.5,0) {$v^{\{1,2\}}$};
%\node at (3.5,-1) {$v^{\{2\}}$};

%\node at (1.1,-.15) {\color{blue}$\leftarrow$};
\node[rotate=90] at (1.35,.6) {\color{blue}\tiny ${}_{\{1\}}$};
\node[rotate=60] at (1.65,.5) {\color{blue}\tiny ${}_{\{2\}}$};
\node[rotate=30] at (2,.4) {\color{blue}\tiny ${}_{\{1,2\}}$};
\node at (2.05,.1) {\color{blue}\tiny ${}_{\{3\}}$};
\node[rotate=-30] at (2.05,-.2) {\color{blue}\tiny ${}_{\{1,3\}}$};
\node[rotate=-60] at (1.9,-.5) {\color{blue}\tiny ${}_{\{2,3\}}$};
\node[rotate=-90] at (1.6,-.8) {\color{blue}\tiny ${}_{\{1,2,3\}}$};

%\draw (0,0) -- (10,0);
\end{tikzpicture}
};

\node at (7,0) {
\begin{tikzpicture}[scale=1.4]
\draw[->] (0,0)--(.75,0); \draw (0,0)--(1.5,0);
		\draw[->] (1.5,0)-- +(90:1); \draw (1.5,0)-- +(90:1.5);
										\draw[->] (1.5,0)-- +(60:1); \draw (1.5,0)-- +(60:1.5);
										\draw[-<] (1.5,0)-- +(30:1); \draw[<-] (1.5,0)-- +(30:1.5); 
										\draw[->] (1.5,0)-- +(0:1); \draw (1.5,0)-- +(0:1.5);
										\draw[->] (1.5,0)-- +(-30:1); \draw (1.5,0)-- +(-30:1.5);
										\draw[->] (1.5,0)-- +(-60:1); \draw (1.5,0)-- +(-60:1.5);
										\draw[-<] (1.5,0)-- +(-90:1.25); \draw (1.5,0)-- +(-90:1.5);
										
\filldraw (-.05,0) circle (.05);		\filldraw (1.5,0) circle (.08);	\filldraw (1.5,0)+(90:1.55) circle (.05);
										\filldraw (1.5,0)+(60:1.55) circle (.05);
										\filldraw (1.5,0)+(30:1.55) circle (.05);
										\filldraw (1.5,0)+(0:1.55) circle (.05);														\filldraw (1.5,0)+(-30:1.55) circle (.05);													\filldraw (1.5,0)+(-60:1.55) circle (.05);													\filldraw (1.5,0)+(-90:1.55) circle (.05);
\node at (.75,-.25) {$1$};

\node at (4,-.3) {$(-1)^{|{\color{blue}S_{\ell+1}}|+1}$};

\node at (-.3,.2) {$v_{R}^{\ell}$};
\node at (.9,.3) {$v_{R,j_{\ell+1}}^{\ell}$};
%\node at (3.5,1) {$v^{\{1\}}$};
%\node at (3.5,0) {$v^{\{1,2\}}$};
%\node at (3.5,-1) {$v^{\{2\}}$};

%\node at (1.1,-.15) {\color{blue}$\leftarrow$};
\node[rotate=90] at (1.35,.6) {\color{blue}\tiny ${}_{\{1\}}$};
\node[rotate=60] at (1.65,.5) {\color{blue}\tiny ${}_{\{2\}}$};
\node[rotate=30] at (2,.4) {\color{blue}\tiny ${}_{\{1,2\}}$};
\node at (2.05,.1) {\color{blue}\tiny ${}_{\{3\}}$};
\node[rotate=-30] at (2.05,-.2) {\color{blue}\tiny ${}_{\{1,3\}}$};
\node[rotate=-60] at (1.9,-.5) {\color{blue}\tiny ${}_{\{2,3\}}$};
\node[rotate=-90] at (1.6,-.8) {\color{blue}\tiny ${}_{\{1,2,3\}}$};

%\draw (0,0) -- (10,0);
\end{tikzpicture}
};
\end{tikzpicture}
\end{center}
An edge labelled by the set ${\color{blue}S_{\ell+1}}$ has flow $(-1)^{|S_1|+\dots+|S_{\ell}|+|{\color{blue}S_{\ell+1}}|+\ell+1}$, also indicated by the direction of the edge. 
Up to scaling by $(-1)^{|S_1|+\dots+|S_{\ell}|+\ell}$, the left-hand side is the same as the simplified picture on the right-hand-side above. 

Consider the inner product of this flow state and a star state for a set ${\cal I}$. The incoming flow from $v_R^\ell$ always contributes $(-1)$ (because in the star it's pointing out, and recall that switching edge direction switches the sign). For every non-empty $S_{\ell+1}\subset [c_{\ell+1}]\setminus {\cal I}$, the corresponding edge
contributes 1 if it has outgoing flow (direction of flow matches the star state), and $(-1)$ if it has incoming flow (direction of edge is opposite to star state). 
Then the inner product is:
$$-1 + \sum_{S_{\ell+1}\in 2^{[c_{\ell+1}]\setminus{\cal I}}\setminus\{\emptyset\}}(-1)^{|S_{\ell+1}|+1} = 0,$$
so flow is conserved with respect to all possible alternative neighbourhoods. This is precisely what we need, and why we use this complex combinatorial structure.

\subsubsection{The Graph: Vertex Sets}\label{sec:k-dist-G-vertices}

To define $G$, we begin by defining disjoint vertex sets $V_0$, $V_0^+$, $(V_{\ell})_{\ell=1}^{k-1}$, $(V_{\ell}^+)_{\ell=1}^{k-2}$, and $V_k$, whose union makes up $V(G)$. We will use the notation in \eq{weird-binom} and \eq{weird-binom+} for tuples of disjoint sets throughout this section. \tabl{k-dist-index-sets} summarises $G$.

\paragraph{$V_0$:} We define
\begin{multline}
	V_0=\Bigg\{v^0_{R_1,\dots,R_{k-1}}:=(0,R_1,\dots,R_{k-1},D(R_1,\dots,R_{k-1})): R_{\ell} \in \binom{[m_{\ell}]}{t_{\ell}^{(c_1\cdots c_{\ell-1}(2^{c_{\ell}}-1))}}\Bigg\}.\label{eq:k-dist-V1}
\end{multline}

Our initial distribution is uniform on $V_0$: $\sigma(u)=\frac{1}{|V_0|}$ for all $u\in V_0$. We implicitly store all sets including those making up $R_1,\dots,R_{k-1}$ and $D(R_1,\dots,R_{k-1})$ in a data structure with the properties described in \sec{data}. This will only be important when we analyse the time complexity of the setup and transition subroutines.

\paragraph{$V_0^+$:} At a vertex in $V_0^+$, we suppose we have chosen a new element $j_1$ to add to $R_1$, but not yet added it. Thus, we label such a vertex by a tuple of sets $R$, and an index $j_1\not\in R_1$: 
\begin{align*}
	V_0^+ & :=\Big\{v^{0}_{R_1,\dots,R_{k-1},j_1}:=((0,+),R_1,\dots,R_{k-1},D(R_1,\dots,R_{k-1}),j_1): 
	 v^0_{R_1,\dots,R_{k-1}} \in V_0, j_1\in [m_1]\setminus R_1\Big\},
\end{align*}
\begin{equation}
\begin{split}
\mbox{so } 
	\abs{V_0^+} &= \abs{V_0}\abs{[m_1] \setminus R_1} = O\left(n\abs{V_0}\right).
\end{split}\label{eq:kV1}
\end{equation}

\paragraph{$V_{\ell}$ for $\ell\in\{1,\dots,k-1\}$:} At a vertex in $V_{\ell}$, we suppose we have added a new element to each of $R_1,\dots,R_{\ell}$, meaning that for each $\ell'\in[\ell]$, there is some $(s_1,\dots,s_{\ell'-1},S_{\ell'})\in [c_1]\times\dots\times [c_{\ell'-1}]\times (2^{[c_{\ell'}]}\setminus\{\emptyset\})$ such that $|R_{\ell'}(s_1,\dots,s_{\ell'-1},S_{\ell'})|=t_{\ell'}+1$. However, we will not let the choices of $s_1,\dots,s_{\ell'-1}$ for different $\ell'$ be arbitrary. Instead, we define the following sets of vertices, for $(S_1,\dots,S_{\ell})\in (2^{[c_1]}\setminus\{\emptyset\})\times\dots\times(2^{[c_\ell]}\setminus\{\emptyset\})$, where $\mu(S)$ denotes the minimum element of a set $S$:
\begin{multline}
\!\!V_{\ell}(S_1,\dots,S_{\ell}) := \Big\{ v^{\ell}_{R}=(\ell,R,D(R)):
\forall \ell'\in [\ell], R_{\ell'}\in\binom{[m_{\ell}]}{t_\ell^{(c_1\dots c_{\ell-1}(2^{c_{\ell}}-1))}}^+\!;\\
\qquad\qquad\quad\forall \ell'\in\{\ell+1,\dots,k-1\}, R_{\ell'}\in\binom{[m_{\ell}]}{t_\ell^{(c_1\dots c_{\ell-1}(2^{c_{\ell}}-1))}};\\
\forall \ell'\in\{1,\dots,\ell\},\; |R_{\ell'}(\mu(S_1),\dots,\mu(S_{\ell'-1}),S_{\ell'})|=t_{\ell'}+1
\Big\}.\label{eq:V-ell-Ss}
\end{multline}
This is the set of vertices labelled by sets $R$ where we have added elements to each of $R_1,\dots,R_{\ell}$, not yet added elements to $R_{\ell+1},\dots,R_{k-1}$, and for $\ell'\in\{1,\dots,\ell\}$, the choice of \emph{where} the new element was added to $R_{\ell'}$ is determined by $S_1,\dots,S_{\ell}$. 
Then we can define:
\begin{equation}
V_{\ell} := \bigcup_{(S_1,\dots,S_{\ell})\in (2^{[c_1]}\setminus\{\emptyset\})\times\dots\times(2^{[c_\ell]}\setminus\{\emptyset\})} V_{\ell}(S_1,\dots,S_{\ell}).\label{eq:V-ell-union}
\end{equation}
Using the fact that for all $\ell'\in\{2,\dots,k-2\}$, $m_{\ell'}=\Theta(t_{\ell'-1})$, we have:
\begin{equation}
\abs{V_{\ell}} = O\left(\abs{V_0} \prod_{\ell'=1}^{\ell}\frac{m_{\ell'}}{t_{\ell'}} \right)=O\left(\abs{V_0} \prod_{\ell'=1}^{\ell}\frac{t_{\ell'-1}}{t_{\ell'}} \right)=O\left(\frac{n}{t_{\ell}}\abs{V_0} \right).\label{eq:kVk0}
\end{equation}

\paragraph{$V_{\ell}^+:$ for $\ell\in\{1,\dots,k-2\}$:} At a vertex in $V_{\ell}^+$, we suppose, as in $V_{\ell}$, that we have already added an element to each of the sets $R_1,\dots,R_{\ell}$, but now have also selected an element $j_{\ell+1}\in [m_{\ell+1}]$ to add to $R_{\ell+1}$:

\begin{align*}
	V_{\ell}^+(S_1,\dots,S_{\ell})&:=\Big\{v^{\ell}_{R,j_{\ell+1}}:=((\ell,+),R,D(R),j_{\ell+1}): 
	 v^{\ell}_{R} \in V_{\ell}(S_1,\dots,S_{\ell}), j_{\ell+1}\in [m_{\ell+1}]\setminus R_{\ell+1}\Big\}\\
V_{\ell}^+ &:= \bigcup_{(S_1,\dots,S_{\ell})\in (2^{[c_1]}\setminus\{\emptyset\})\times\dots\times(2^{[c_\ell]}\setminus\{\emptyset\})} V_{\ell}^+(S_1,\dots,S_{\ell}),
\end{align*}
\noindent so together with \eq{kVk0} and $m_{\ell+1}=\Theta(t_{\ell})$, this implies
\begin{equation}
	\abs{V_{\ell}^+} = \abs{V_{\ell}}\abs{[m_{\ell+1}]\setminus T_{\ell+1}} = O\left(n\abs{V_0}\right).\label{eq:kVk1}
\end{equation}

\paragraph{The Final Stage, $V_k$:} At a vertex in $V_k$, we have added a new element to each of $R_1,\dots,R_{k-1}$, as in $V_{k-1}$, and also selected some $i_k\in A_k$, which we can view as a candidate for completing one of the $(k-1)$-collisions in $D_{k-1}(R)$ to a $k$-collision:
\begin{equation}
\begin{split}
	V_k&:=\big\{v^{k}_{R,i_k}:=(k,R,D(R),i_k): 
	v^{k-1}_{R} \in V_{k-1}, i_k\in A_k.\big\},\\
\mbox{so }
	\abs{V_k} &= \abs{V_{k-1}}\abs{A_k} = O\left(\frac{n^2}{t_{k-1}}\abs{V_0}\right).
\end{split}\label{eq:kVk}
\end{equation}

\subsubsection{The Graph: Edge Sets}\label{sec:k-dist-G-edges}

We now define the sets of edges that make up $\overrightarrow{E}(G)$, as well as the edge label sets $L(u)$ (see \defin{QW-access}) for each $u\in V(G)$. These are also summarised in \tabl{k-dist-index-sets}.

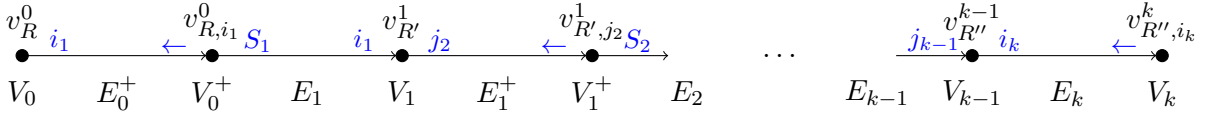
\begin{figure}
\centering
\begin{tikzpicture}[scale=1.25]
\filldraw (0,0) circle (.08);	\draw[-{Latex[length=2mm, width=2mm]}] (0,0) -- (2.42,0);
\filldraw (2.5,0) circle (.08);	\draw[-{Latex[length=2mm, width=2mm]}] (2.5,0) -- (4.92,0);
\filldraw (5,0) circle (.08);	\draw[-{Latex[length=2mm, width=2mm]}] (5,0) -- (7.42,0);
\filldraw (7.5,0) circle (.08);	\draw[-{Latex[length=2mm, width=2mm]}] (7.5,0) -- (8.5,0);
					\draw[-{Latex[length=2mm, width=2mm]}] (11.5,0) -- (12.42,0);
\filldraw (12.5,0) circle (.08);	\draw[-{Latex[length=2mm, width=2mm]}] (12.5,0) -- (14.92,0);
\filldraw (15,0) circle (.08);	

\node at (0,.45) {$v^0_R$};
\node at (0,-.5) {$V_0$};

	\node at (.5,.2) {\small\color{blue}$j_1$};
	\node at (1.25,-.5) {$E_0^+$};
	\node at (2,.1) {\small\color{blue}$\leftarrow$};

\node at (2.5,.45) {$v^0_{R,j_1}$};
\node at (2.5,-.5) {$V_0^+$};

	\node at (3.1,.2) {\small\color{blue}$S_1$};
	\node at (3.75,-.5) {$E_1$};
	\node at (4.5,.2) {\small\color{blue}$j_1$};

\node at (5,.45) {$v^1_{R'}$};
\node at (5,-.5) {$V_1$};

	\node at (5.5,.2) {\small\color{blue}$j_2$};
	\node at (6.25,-.5) {$E_1^+$};
	\node at (7,.1) {\small\color{blue}$\leftarrow$};

\node at (7.5,.45) {$v^1_{R',j_2}$};
\node at (7.5,-.5) {$V_1^+$};

	\node at (8.1,.2) {\small\color{blue}$S_2$};
	\node at (8.75,-.5) {$E_2$};

\node at (10,0) {$\dots$};

	\node at (11.25,-.5) {$E_{k-1}$};
	\node at (12,.2) {\small\color{blue}$j_{k-1}$};

\node at (12.5,.45) {$v^{k-1}_{R''}$};
\node at (12.5,-.5) {$V_{k-1}$};

	\node at (13,.2) {\small\color{blue}$i_k$};
	\node at (13.75,-.5) {$E_k$};
	\node at (14.5,.1) {\small\color{blue}$\leftarrow$};

\node at (15,.45) {$v^k_{R'',i_k}$};
\node at (15,-.5) {$V_k$};
\end{tikzpicture}
\caption{A path from $V_0$ to $V_k$, with edge labels shown in blue. $R'$ is obtained from $R$ by inserting $j_1$ into $R_1(S_1)$. $R''$ is obtained from $R'$ by inserting $j_2$ into $R_2(\mu(S_1),S_2)$, and for some choice of $S_3,\dots,S_{k-1}$, inserting, for each $\ell\in \{3,\dots,k-1\}$, some $j_{\ell}$ into $R_{\ell}(\mu(S_1),\dots,\mu(S_{\ell-1}),S_{\ell})$.}

\end{figure}

\begin{table}
\renewcommand{\arraystretch}{1.3}
\begin{tabular}{|c|c|c|c|c|}
\hline
$u$ & $j\in L^-(u)$ & $f^-_u(j)$ & $i\in L^+(u)$ & $f^+_u(i)$\\
\hline
\hline
$v_R^0\in V_0$ & $\emptyset$ & & $j_1\in [m_1]\setminus R_1$ & $v_{R,j_1}^0$\\ 
\hline
$v_{R,j_1}^0\in V_0^+$ & $\leftarrow$ & $v_R^0$ & $S_1\in 2^{[c_1]}\setminus\{\emptyset\}$ & $v_{R^{S_1 \leftarrow j_1}}^1$\\
\hline
$v_{R}^{\ell}\in V_{\ell}(S)$ & $j_{\ell}\in R_{\ell}(\hat\mu(S)):d_R^{\rightarrow}(j_{\ell})=0$ & $v_{R\setminus\{j_{\ell}\},j_{\ell}}^{\ell-1}$ & $j_{\ell+1}\in [m_{\ell+1}]\setminus R_{\ell+1}$ & $v_{R,j_{\ell+1}}^{\ell}$\\ 
\hline
$v_{R,j_{\ell+1}}^{\ell}\in V_{\ell}^+$ & $\leftarrow$ & $v_{R}^{\ell}$ & $S_{\ell+1}\in 2^{[c_{\ell+1}]}\setminus\{\emptyset\}$ & $v_{R^{S_{\ell+1}\leftarrow j_{\ell+1}}}^{\ell+1}$\\
\hline
$v_{R}^{k-1}\in V_{k-1}(S)$ & $j_{k-1}\in R_{k-1}(\hat\mu(S))$ & $v_{R\setminus\{j_{k-1}\},j_{k-1}}^{k-2}$ & $i_k\in A_k$ & $v_{R,i_k}^k$\\
\hline
$v_{R,i_k}^k\in V_k$ & $\leftarrow$ & $v_R^{k-1}$ & $\emptyset$ & \\
\hline
\end{tabular}
\caption{The sets labeling incoming ($L^-$) and outgoing ($L^+$) edges of each vertex $u\in V(G)$, and the neighbouring vertices at the end of every such edge. $\ell\in\{1,\dots,k-2\}$, $S=(S_1,\dots,S_{\ell})$, and for brevity we use $\hat\mu(S):=(\mu(S_1),\dots,\mu(S_{\ell-1}),S_{\ell})$, where $\mu$ is the minimum. 
$R^{S_1\leftarrow j_1}$ is obtained from $R$ by inserting $j_1$ into $R_1(S_1)$, and for $v^{\ell}_{R,j_{\ell+1}}\in V_{\ell+1}^+(S)$, $R^{S_{\ell+1}\leftarrow j_{\ell+1}}$ is obtained from $R$ by inserting $j_{\ell+1}$ into $R_{\ell+1}(\hat\mu(S))$. To ensure that $L^-(u)$ and $L^+(u)$ are always disjoint, we implicitly append a $\leftarrow$ label to all of $L^-(u)$ and a $\rightarrow$ label to all of $L^+(u)$. }\label{tab:k-dist-index-sets}
\end{table}

\paragraph{$E_{\ell}^+\subset V_{\ell}\times V_{\ell}^+$ for $\ell \in \{0,\dots,k-2\}$:} There is an edge between $v^{\ell}_{R}\in V_{\ell}$ and $v^{\ell}_{R,j_{\ell+1}} \in V_{\ell}^+$ for any $j_{\ell+1}\in [m_{\ell+1}]\setminus R_{\ell+1}$, so we define
$$L^+(v^{\ell}_R):=[m_{\ell+1}]\setminus R_{\ell+1}
\mbox{ and }
L^-(v^{\ell}_{R,j_{\ell+1}}):=\{\leftarrow\},$$
and let $f^+_{v_R^{\ell}}(j_{\ell+1})=v_{R,j_{\ell+1}}^{\ell}$ and $f^-_{v_{R,j_{\ell+1}}^{\ell}}(\leftarrow)=v^{\ell}_R$. We let $E_{\ell}^+$ be the set of all such edges
$$E_{\ell}^{+}:=\left\{(v^{\ell}_{R},v^{\ell}_{R,j_{\ell+1}}):v^{\ell}_{R}\in V_{\ell}, j_{\ell+1}\in [m_{\ell+1}]\setminus R_{\ell+1}\right\}$$
and set $\w_e=\w_{\ell}^+=1$ for all $e\in E_{\ell}^+$.
This together with \eq{kVk1} implies that 
\begin{equation}
	\abs{E_{\ell}^+} = \abs{V_{\ell}^+} = O\left(n\abs{V_0}\right).\label{eq:kEk0}
\end{equation}

\paragraph{Faults:} Fix $\ell\in\{1,\dots,k-1\}$. As in the case of 3-distinctness, if we add a new block index $j_{\ell}$ to certain parts of $R_{\ell}$, to get $R'$, such that $d_{R'}^{\rightarrow}(j_{\ell})>0$, this introduces a \emph{fault} in the data, which our quantum walk will want to avoid. The case for $k>3$ is slightly more complicated, so we examine exactly when a fault is introduced before describing the remaining edge sets. 

Suppose $v_R^{\ell-1}\in V_{\ell-1}(S_1^*,\dots,S_{\ell-1}^*)$ (see \eq{V-ell-Ss}) and we add some $j_{\ell}$ to $R_{\ell}(\mu(S_1^*),\dots,\mu(S_{\ell-1}^*),S_{\ell})$, for some $S_{\ell}\subseteq [c_{\ell}]$. 
%an index $i_{\ell}$ to $\overline{R}_{\ell}(\mu(S_1^*),\dots,\mu(S_{\ell-1}^*),S_{\ell})$, for some $S_{\ell}\subseteq [c_{\ell}]$,  (by adding $j_{\ell}$ such that $i_{\ell}\in A_{\ell}^{(j_{\ell})}$ to $R_{\ell}(\mu(S_1^*),\dots,\mu(S_{\ell-1}^*),S_{\ell})$). 
For $\ell\in\{2,\dots,k-2\}$, this introduces a fault if the following conditions are satisfied, by some $i_{\ell}\in A_{\ell}^{(j_{\ell})}$, which is added to $\overline{R}_{\ell}(\mu(S_1^*),\dots,\mu(S_{\ell-1}^*),S_{\ell})$, (we use $[{\cal E}]$ to denote the logical value of an event ${\cal E}$):
\begin{equation}
\begin{split}
\mathbf{C}^{\leftarrow}(i_{\ell},R,S_{\ell})&:= \big[\exists  (i_1,\dots,i_{\ell-1})\in R_1\times \overline{R}_2\times \dots \times \overline{R}_{\ell-1}\mbox{ s.t. }\\
& \qquad\qquad\qquad\qquad\qquad\quad\; (i_1,\dots,i_{\ell-1},i_{\ell},x_{i_1})\in D_{\ell}(R_{\ell}(\mu(S_1^*),\dots,\mu(S_{\ell-1}^*),S_{\ell}))\big]\\
\mathbf{C}^{\rightarrow}(i_{\ell},R,S_{\ell})&:= \Bigg[\exists s_{\ell}\in S_{\ell} \mbox{ s.t. }\\
&\qquad\quad\exists i_{\ell+1}\in \bigcup_{S_{\ell+1}\in 2^{[c_{\ell+1}]}\setminus\{\emptyset\}}\overline{R}_{\ell+1}(\mu(S_1^*),\dots,\mu(S_{\ell-1}^*),s_{\ell},S_{\ell+1}) \mbox{ s.t. }x_{i_{\ell+1}}=x_{i_{\ell}}\Bigg].
\end{split}\label{eq:k-dist-Carrows}
\end{equation}
In words $\mathbf{C}^{\leftarrow}$ is the condition that $i_{\ell}$ forms a collision $(i_1,\dots,i_{\ell},x_{i_1})$ that would be stored in $D_{\ell}(R)$, and $\mathbf{C}^{\rightarrow}$ is the condition that $i_{\ell}$ collides with something in $\overline{R}_{\ell+1}$ such that if $\mathbf{C}^{\rightarrow}$ holds, $(i_1,\dots,i_{\ell+1},x_{i_1})$ would be stored in $D_{\ell+1}(R)$. For $\ell=1$, $\mathbf{C}^{\rightarrow}$ is also defined, and $i_1$ introduces a fault whenever $\mathbf{C}^{\rightarrow}$ is true. For $\ell=k-1$, $\mathbf{C}^{\rightarrow}$ can never be true, so there is never a fault. We set $c_{k-1}=1$ (see \tabl{k-dist-setsizes}).

Then for any $\ell\in \{1,\dots,k-2\}$, $v_R^{\ell-1}\in V_{\ell-1}(S_1^*,\dots,S_{\ell-1}^*)$, $i_{\ell}\in A_{\ell}\setminus\overline{R}_{\ell}$, and $S_{\ell}\in 2^{[c_{\ell}]}\setminus\{\emptyset\}$, condition $\mathbf{C}^{\rightarrow}$ is false if and only if $S_{\ell}$ is disjoint from the following set:
\begin{equation}
{\cal I}(v^{\ell-1}_R,i_{\ell}):=\left\{s_{\ell}\in [c_{\ell}]: \exists i_{\ell+1}\in\!\!\!\!\!\! \bigcup_{S_{\ell+1}\in 2^{[c_{\ell+1}]}\setminus\{\emptyset\}}\!\!\!\!\!\!\!\overline{R}_{\ell+1}(\mu(S_1^*),\dots,\mu(S_{\ell-1}^*),s_{\ell},S_{\ell+1})\mbox{ s.t. }x_{i_{\ell+1}}=x_{i_{\ell}}\right\}.\label{eq:k-dist-I-i}
\end{equation}
For $\ell=k-1$, we define ${\cal I}(v^{k-2}_R,i_{k-1}):=\emptyset$. When $\ell=1$ we can define, for $v_{R,j_1}^0\in V_0^+$:
\begin{equation}
	{\cal I}(v^{0}_{R,j_{1}}):=\bigcup_{i_{1}\in A_{1}^{(j_{1})}}{\cal I}(v^{0}_R,i_{1})\label{eq:k-dist-cal-I-1}
\end{equation}
As long as we choose some $S_1$ that avoids this set, we will not introduce a fault. For $\ell>1$,
examining condition \textbf{C}$^{\leftarrow}$ above, although it appears to depend on $S_{\ell}$, it does not. Referring to \eq{D-ell}, we can rewrite \textbf{C}$^{\leftarrow}$ as:
\begin{multline*}
\mathbf{C}^{\leftarrow}(i_{\ell},R,S_{\ell})\Leftrightarrow \mathbf{C}^{\leftarrow}(i_{\ell},R):= \bigg[\exists S_{\ell-1}\subseteq [c_{\ell-1}] \mbox{ s.t. }\mu(S_{\ell-1}^*)\in S_{\ell-1},\\
\exists (i_1,\dots,i_{\ell-1},x_{i_1})\in D_{\ell-1}(R_{\ell-1}(\mu(S_1^*),\dots,\mu(S_{\ell-2}^*),S_{\ell-1}))
\mbox{ s.t. }x_{i_{\ell}}=x_{i_1}\bigg].\label{eq:k-dist-Carrow}
\end{multline*}
Thus, for $\ell\in\{2,\dots,k-2\}$, for any $v^{\ell-1}_{R,j_{\ell}}\in V_{\ell-1}^+$, we can define:
\begin{equation}
{\cal I}(v^{\ell-1}_{R,j_{\ell}}):=\bigcup_{i_{\ell}\in A_{\ell}^{(j_{\ell})}:\mathbf{C}^{\leftarrow}(i_{\ell},R)}{\cal I}(v^{\ell-1}_R,i_{\ell}).\label{eq:k-dist-cal-I}
\end{equation}

\begin{lemma}\label{lem:faults-equiv}
For any $\ell\in\{1,\dots,k-1\}$, fix $v^{\ell-1}_{R,j_{\ell}}\in V_{\ell-1}^+(S_1^*,\dots,S_{\ell-1}^*)$, and non-empty $S_{\ell}\subseteq [c_{\ell}]$, and let $R'$ be obtained from $R$ by inserting $j_{\ell}$ into $R_{\ell}(\mu(S_1^*),\dots,\mu(S_{\ell-1}^*),S_{\ell})$. Then $d^{\rightarrow}_{R'}(j_{\ell})=0$ if and only if $S_{\ell}\cap {\cal I}(v^{\ell-1}_{R,j_{\ell}})=\emptyset$. 
\end{lemma}
\begin{proof}
For $\ell=k-1$, $d_{R'}^{\rightarrow}(j_{k-1})=0$ and ${\cal I}(v^{k-2}_{R,j_{k-1}})=\emptyset$ always hold, by definition. 
For $\ell\in\{1,\dots,k-2\}$,
\begin{align*}
\bar{d}_{R'}^{\rightarrow}(i_{\ell}) &= |\{(i_1,\dots,i_{\ell},i_{\ell+1},x_{i_1})\in D_{\ell+1}(R')\}| & \mbox{see \eq{k-dist-bar-d}}\\
&= \!\!\!\!\!\!\!\!\!\sum_{\substack{(s_1,\dots,s_{\ell},S_{\ell+1})\in\\ [c_1]\times\dots\times[c_{\ell}]\times (2^{[c_{\ell+1}]}\setminus\{\emptyset\})}}
|\{(i_1,\dots,i_{\ell+1},x_{i_1})\in D_{\ell+1}(R(s_1,\dots,s_{\ell},S_{\ell+1}))\}| & \mbox{see \eq{D-ell-R}}\\
&= \!\!\!\!\!\!\!\!\!\sum_{\substack{(s_1,\dots,s_{\ell},S_{\ell+1})\in\\ [c_1]\times\dots\times[c_{\ell}]\times (2^{[c_{\ell+1}]}\setminus\{\emptyset\})}}
\sum_{\substack{S_{\ell}'\subseteq[c_{\ell}]:\\ s_{\ell}\in S_{\ell}}}\!\!
|\{(i_1,\dots,i_{\ell+1},x_{i_1}):  i_{\ell+1}\in\overline{R}_{\ell+1}(s_1,\dots,s_{\ell},S_{\ell+1}),\\[-20pt]
&\qquad\qquad\qquad\qquad\qquad\qquad\quad\, x_{i_{\ell+1}}=x_{i_{\ell}}, (i_1,\dots,i_{\ell},x_{i_1})\in D_{\ell}(R_{\ell}'(s_1,\dots,s_{\ell-1},S_{\ell}'))\}| & \mbox{see \eq{D-ell}}.
\end{align*}
Suppose $i_{\ell}\in A_{\ell}^{(j_{\ell})}$, meaning we have $i_{\ell}\in \overline{R}_{\ell}(\mu(S_1^*),\dots,\mu(S_{\ell-1}^*),S_{\ell})$. By \eq{D-ell},
$\ell$-collisions of the form $(i_1,\dots,i_{\ell},x_{i_1})$ can only occur in $D_{\ell}({R}_{\ell}(\mu(S_1^*),\dots,\mu(S_{\ell-1}^*),S_{\ell}))$, so we continue:
\begin{align*}
\bar{d}_{R'}^{\rightarrow}(i_{\ell}) &= \sum_{\substack{s_{\ell}\in S_{\ell},S_{\ell+1}\in2^{[c_{\ell+1}]}\setminus\{\emptyset\}}}
|\{(i_1,\dots,i_{\ell+1},x_{i_1}):  i_{\ell+1}\in\overline{R}_{\ell+1}(\mu(S_1^*),\dots,\mu(S_{\ell-1}^*),s_{\ell},S_{\ell+1}),\\[-17pt]
&\qquad\qquad\qquad\qquad\qquad\qquad\quad\, x_{i_{\ell+1}}=x_{i_{\ell}}, (i_1,\dots,i_{\ell},x_{i_1})\in D_{\ell}(R_{\ell}'(\mu(S_1^*),\dots,\mu(S_{\ell-1}^*),S_{\ell}))\}|,
\end{align*}
and thus $\bar{d}_{R'}^{\rightarrow}(i_{\ell})>0$ if and only if:
\begin{align}
& \exists s_{\ell}\in S_{\ell},S_{\ell+1}\in 2^{[c_{\ell+1}]}\setminus\{\emptyset\}\mbox{ s.t. }\exists i_{\ell+1}\in \overline{R}_{\ell+1}(\mu(S_1^*),\dots,\mu(S_{\ell-1}^*),s_{\ell},S_{\ell+1})\mbox{ s.t. }x_{i_{\ell+1}}=x_{i_{\ell}}\label{eq:k-dist-fault-1}\\
\mbox{and }&\exists (i_1,\dots,i_{\ell-1},i_{\ell},x_{i_1})\in D_{\ell}(R_{\ell}'(\mu(S_1^*),\dots,\mu(S_{\ell-1}^*),S_{\ell})).\label{eq:k-dist-fault-2}
\end{align}
Suppose ${d}_{R'}^{\rightarrow}(j_{\ell})>0$. By \eq{forward-collision-degree}, this happens if and only if there exists $i_{\ell}\in A_{\ell}^{(j_{\ell})}$ such that $\bar{d}_{R'}^{\rightarrow}(i_{\ell})>0$, which holds if and only if \eq{k-dist-fault-1} and \eq{k-dist-fault-2} are true.  We know \eq{k-dist-fault-1} if and only if $\mathbf{C}^{\rightarrow}(i_{\ell},R,S_{\ell})$ holds, if and only if $S_{\ell}\cap {\cal I}(v_R^{\ell-1},i_{\ell})\neq \emptyset$. For \eq{k-dist-fault-2}, we make a distinction based on the value of $\ell$. 

In the case $\ell=1$, \eq{k-dist-fault-2} is just $(i_1,x_{i_1})\in D_1(R_1'(S_1))$, which is true by \eq{k-dist-D-1}, since we just added $j_1$ to $R_1(S_1)$ to get $R'(S_1)$. For the other direction, if $S_{1}\cap {\cal I}(v_{R,j_{1}}^{0})\neq\emptyset$, then by \eq{k-dist-cal-I-1}, $\exists i_{1}\in A_{1}^{(j_{1})}$ satisfying \eq{k-dist-fault-2}. This completes the $\ell=1$ case. 

Continuing with the case $\ell\in\{2,\dots,k-2\}$, by \eq{D-ell}, using the fact that $R_{\ell-1}'=R_{\ell-1}$, we have \eq{k-dist-fault-2} if and only if $\mathbf{C}^{\leftarrow}(i_{\ell},R)$. Thus, we have:
\begin{equation*}
	\left[{d}_{R'}^{\rightarrow}(j_{\ell})>0\right] \Leftrightarrow \underbrace{\exists i_{\ell}\in A_{\ell}^{(j_{\ell})}\mbox{ s.t }\left[\left[ S_{\ell}\cap {\cal I}(v_R^{\ell-1},i_{\ell})\neq \emptyset \right]\wedge \mathbf{C}^{\leftarrow}(i_{\ell},R)\right]}_{=:\mathbf{C}}.\label{eq:k-dist-fault-final}
\end{equation*}
If $\mathbf{C}$ holds, then by \eq{k-dist-cal-I}, ${\cal I}(v_R^{\ell-1},i_{\ell})\subseteq {\cal I}(v_{R,j_{\ell}}^{\ell-1})$, and so, also by $\mathbf{C}$, $S_{\ell}\cap {\cal I}(v_{R,j_{\ell}}^{\ell-1})\neq\emptyset$. For the other direction, if $S_{\ell}\cap {\cal I}(v_{R,j_{\ell}}^{\ell-1})\neq\emptyset$, then by \eq{k-dist-cal-I}, $\exists i_{\ell}\in A_{\ell}^{(j_{\ell})}$ satisfying both conditions of $\mathbf{C}$.
\end{proof}

\paragraph{$E_1\subset V_0^+\times V_1$:} Recall that $V_0^+$ is the set of vertices $v_{R,j_1}^0$ in which we have chosen an index $j_1$ to add to $R_1$, but not yet decided to which part of $R_1$ it should be added. A transition in $E_1$ represents selecting some $S_1\in 2^{[c_1]}\setminus\{\emptyset\}$ and then adding $j_1$ to $R_1(S_1)$, so we have 
$$L^+(v_{R,j_1}^0):= 2^{[c_1]}\setminus\{\emptyset\},$$
and $f^+_{v^0_{R,j_1}}(S_1)=v^1_{R'}$, where $R'$ is obtained from $R$ by inserting $j_1$ into $R_1(S_1)$. As in the case of 3-distinctness, not all of these labels represent edges with non-zero weight.
To go from a vertex $v_{R'}^1\in V_1(S_1^*)$, we choose some $j_1$ to remove from $R'_1(S_1^*)$, the part of $R_1$ that has had an index added, to get some $R$ such that $v_R^0\in V_0$ (and $v_{R,j_1}^0\in V_0^+$). However, we make sure to choose an $j_1$ with no forward collisions -- i.e.~$d^{\rightarrow}_{R'}(j_1)=0$ -- so we let
$$L^-(v_{R'}^1):=\{j_1\in R_1'(S_1^*):d_{R'}^{\rightarrow}(j_1)=0\},$$
and then set $f^-_{v_{R'}^1}(j_1)=v_{R,j_1}^0$ where $R=R'\setminus j_1$ is obtained from $R'$ by removing $j_1$. Importantly, given $v^1_{R'}$, we can take a superposition over this set, because we store the set $C^{\rightarrow}_1(R)$ defined in \eq{k-dist-forward-col-db} (this is necessary in \sec{k-dist-star-states}).

As in the case of 3-distinctness, it is not yet clear how to define $E_1$, the set of (non-zero weight) edges between $V_0^+$ and $V_1$, because $|V_0^+|\cdot |L^+(v_{r,j_1}^0)| > |V_1|\cdot |L^-(v_{R'}^1)|$.  We define it as follows.
$$E_1:=\left\{\left(f^-_{v_{R'}^1}(j_1),v_{R'}^1\right)=\left(v_{R'\setminus\{j_1\},j_1}^0,v_{R'}^1\right):v_{R'}^1\in V_1,j_1\in L^-(v_{R'}^1)\right\}$$
and give weight $\w_1=1$ to all edges in $E_1$. Then we have the following.
\begin{lemma}\label{lem:k-dist-E1}
Let $R^{S_1 \leftarrow j_1}$ be obtained from $R$ by inserting $j_1$ into $R_1(S_1)$. Then
\begin{align*}
E_1 &= \left\{\left(v^{0}_{R,j_1},v^{1}_{R^{S_1 \leftarrow j_1}}\right): v_{R,j_1}^0\in V_0^+, S_1 \in 2^{[c_{1}]\setminus {\cal I}(v_{R,j_1}^0)}\setminus \{\emptyset\} \right\}.\label{eq:k-dist-E1}
\end{align*}
So for all $v^0_{R,j_1}\in V_0^+$, and $S_1\in L^+(v^0_{R,j_1})$,
$\displaystyle\w_{v^0_{R,j_1},S_1} = \left\{\begin{array}{ll}
\w_1 =1 & \mbox{if }S_1\cap {\cal I}(v^0_{R,j_1}) = \emptyset\\
0 & \mbox{else.}
\end{array}\right.$
\end{lemma}
\begin{proof}
Let $E_1'$ be the right-hand side of the identity in the theorem statement, so we want to show $E_1=E_1'$. 
Fix any $v_{R,j_1}^0\in V_0^+$ and non-empty $S_1\subseteq [c_1]\setminus {\cal I}(v_{R,j_1})$, and let $R'=R^{S_1\leftarrow j_1}$. Then since $S_1\cap {\cal I}(v_{R,j_1})=\emptyset$, by \lem{faults-equiv}, $d^{\rightarrow}_{R'}(j_1)=0$.
This implies $E_1'\subseteq E_1$.

For the other direction, fix any $v^1_{R'}\in V_1(S_1^*)$ and $j_1\in L^-(v^1_{R'})$. Since $j_1\in R_1(S_1^*)$, we have $v^1_{R'\setminus\{j_1\}}\in V_0^+$ (that is, we have removed an index from the set that had size $t_1+1$) and $(R'\setminus\{j_1\})^{S_1^*\leftarrow j_1}=R'$. Since $d_{R'}^{\rightarrow}(j_1)=0$, by \lem{faults-equiv}, $S_1^*\cap {\cal I}(v_{R'\setminus\{j_1\},j_1}^0)=\emptyset$. This implies $E_1\subseteq E_1'$.
\end{proof}

We remark that for any $j_1\in [m_1]$, $d_R^{\rightarrow}(i_1)$ is always at most $k-2$. Otherwise, there are at least $k-1$ elements $i_2\in \overline{R}_2\subset A_2$ such that $x_{i_1}=x_{i_2}$ (where $i_1$ is the unique element in $A_1^{(j_1)}$), and together with $i_1$ these form a $k$-collision, which contradicts our assumption that the unique $k$-collision is in $A_1\times\dots\times A_k$. Thus, if we set $c_1=k-1$, we have for any $v_{R,j_1}^0\in V_0^+$, ${\cal I}(v_{R,j_1}^0)\subsetneq [c_1]$, which will be important in \sec{k-dist-star-states}.

Finally, it follows from \eq{kV1}, that 
\begin{equation}
	\abs{E_1} \leq \abs{L^+(v^0_{R,j_1})}\abs{V_0^+} = O\left(n\abs{V_0}\right).\label{eq:kE2}
\end{equation}

\paragraph{$E_{\ell}\subset V_{\ell-1}^+\times V_{\ell}$ for $\ell\in\{2,\dots,k-1\}$:} For $E_{\ell}$, we generalise the construction of $E_1$. Similar to the definition $E_1$, we define, for any $v^{\ell-1}_{R,j_{\ell}}\in V_{\ell-1}^+$, and $v_{R'}^{\ell}\in V_{\ell}(S_1^*,\dots,S_{\ell}^*)$:
$$L^+(v^{\ell-1}_{R,j_{\ell}}):=2^{[c_{\ell}]}\setminus\{\emptyset\}
\mbox{ and }
L^-(v^{\ell}_{R'}):=\{j_{\ell}\in R_{\ell}(\mu(S_1^*),\dots,\mu(S_{\ell-1}^*),S_{\ell}): d_R^{\rightarrow}(j_{\ell})=0\}.$$
We set $f^+_{v^{\ell-1}_{R,j_{\ell}}}(S_{\ell})=v^{\ell}_{R'}$ where if $v^{\ell-1}_R\in V_{\ell-1}(S_1^*,\dots,S_{\ell-1}^*)$, $R'$ is obtained from $R$ by inserting $j_{\ell}$ into $R_{\ell}(\mu(S_1^*),\dots,\mu(S_{\ell-1}^*),S_{\ell})$. We set $f^-_{v^{\ell}_{R'}}(j_{\ell})=v_{R'\setminus\{j_{\ell}\},j_{\ell}}^{\ell-1}$. Similar to $E_1$, we define:
\begin{equation}
{E}_{\ell}:=\left\{\left(f^-_{v^{\ell}_{R'}}(j_{\ell}), v^{\ell}_{R'}\right)=\left(v^{\ell-1}_{R'\setminus\{j_{\ell}\},j_{\ell}},v^{\ell}_{R'}\right): v^{\ell}_{R'}\in V_{\ell}, j_{\ell}\in L^-(v^{\ell}_{R'})\right\},\label{eq:k-dist-E_ell}
\end{equation}
and give weight $\w_{\ell}:=\sqrt{n/m_{\ell-1}}$ to all edges in ${E}_{\ell}$. Then we have the following.
\begin{lemma}\label{lem:k-dist-E-ell}
For any $(S_1,\dots,S_{\ell-1})\in (2^{[c_1]}\setminus\{\emptyset\})\times\dots\times(2^{[c_{\ell-1}]}\setminus\{\emptyset\})$, define:
\begin{multline*}
	{E}_{\ell}(S_1,\dots,S_{\ell-1})=\bigg\{\left(v^{\ell-1}_{R,j_{\ell}},v^{\ell}_{R'}\right): v^{\ell-1}_{R,j_{\ell}}\in V_{\ell-1}^+(S_1,\dots,S_{\ell-1}),\\
	 \exists S_{\ell} \in 2^{[c_{\ell}]\setminus {\cal I}(v^{\ell-1}_{R,j_{\ell}})} \setminus \{\emptyset\},\;
R'=R^{(\mu(S_1),\dots,\mu(S_{\ell-1}),S_{\ell})\leftarrow j_{\ell}} \bigg\},
\end{multline*}
where $R^{(\mu(S_1),\dots,\mu(S_{\ell-1}),S_{\ell})\leftarrow j_{\ell}}$ is obtained from $R$ by inserting $j_{\ell}$ into $R_{\ell}(\mu(S_1),\dots,\mu(S_{\ell-1}),S_{\ell})$. Then
$${E}_{\ell} = \bigcup_{(S_1,\dots,S_{\ell-1})\in (2^{[c_1]}\setminus\{\emptyset\})\times\dots\times(2^{[c_{\ell-1}]}\setminus\{\emptyset\})} {E}_{\ell}(S_1,\dots,S_{\ell-1}).$$
\end{lemma}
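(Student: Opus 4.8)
\textbf{Proof proposal for \lem{k-dist-E-ell}.}

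The plan is to mirror exactly the proof of \lem{k-dist-E1}, which handled the special case $\ell=1$, now using the general equivalence established in \lem{faults-equiv}. Write $E_{\ell}'$ for the right-hand side, i.e.\ $E_{\ell}' = \bigcup_{(S_1,\dots,S_{\ell-1})} E_{\ell}(S_1,\dots,S_{\ell-1})$, and show the two inclusions $E_{\ell}'\subseteq E_{\ell}$ and $E_{\ell}\subseteq E_{\ell}'$.

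First I would prove $E_{\ell}'\subseteq E_{\ell}$. Fix any $v^{\ell-1}_{R,j_{\ell}}\in V_{\ell-1}^+(S_1,\dots,S_{\ell-1})$ and a non-empty $S_{\ell}\subseteq [c_{\ell}]\setminus {\cal I}(v^{\ell-1}_{R,j_{\ell}})$, and set $R'=R^{(\mu(S_1),\dots,\mu(S_{\ell-1}),S_{\ell})\leftarrow j_{\ell}}$. Since $S_{\ell}\cap {\cal I}(v^{\ell-1}_{R,j_{\ell}})=\emptyset$, \lem{faults-equiv} (the case $\ell\in\{2,\dots,k-1\}$) gives $d^{\rightarrow}_{R'}(j_{\ell})=0$. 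By \eq{V-ell-Ss} this $R'$ indeed labels a vertex $v^{\ell}_{R'}\in V_{\ell}(S_1,\dots,S_{\ell-1},S_{\ell})$, since inserting $j_{\ell}$ into $R_{\ell}(\mu(S_1),\dots,\mu(S_{\ell-1}),S_{\ell})$ makes that one block of $R_{\ell}$ have size $t_{\ell}+1$ while the earlier blocks already had the required sizes (as $v^{\ell-1}_{R}\in V_{\ell-1}(S_1,\dots,S_{\ell-1})$); note $j_{\ell}\in [m_{\ell}]\setminus R_{\ell}$ because $v^{\ell-1}_{R,j_{\ell}}\in V_{\ell-1}^+$. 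Hence $j_{\ell}\in R'_{\ell}(\mu(S_1),\dots,\mu(S_{\ell-1}),S_{\ell})$ with $d_{R'}^{\rightarrow}(j_{\ell})=0$, so $j_{\ell}\in L^-(v^{\ell}_{R'})$, and therefore $\bigl(v^{\ell-1}_{R,j_{\ell}},v^{\ell}_{R'}\bigr)=\bigl(f_{v^{\ell}_{R'}}(j_{\ell}),v^{\ell}_{R'}\bigr)\in E_{\ell}$ by \eq{k-dist-E_ell}, using $R=R'\setminus\{j_{\ell}\}$.

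For the reverse inclusion, fix any $v^{\ell}_{R'}\in V_{\ell}(S_1^*,\dots,S_{\ell}^*)$ and $j_{\ell}\in L^-(v^{\ell}_{R'})$, so $j_{\ell}\in R'_{\ell}(\mu(S_1^*),\dots,\mu(S_{\ell-1}^*),S_{\ell}^*)$ and $d^{\rightarrow}_{R'}(j_{\ell})=0$. Let $R=R'\setminus\{j_{\ell}\}$, obtained by removing $j_{\ell}$ from that block; then the block shrinks from $t_{\ell}+1$ to $t_{\ell}$, so $v^{\ell-1}_{R}\in V_{\ell-1}(S_1^*,\dots,S_{\ell-1}^*)$, and consequently $v^{\ell-1}_{R,j_{\ell}}\in V_{\ell-1}^+(S_1^*,\dots,S_{\ell-1}^*)$, with $R^{(\mu(S_1^*),\dots,\mu(S_{\ell-1}^*),S_{\ell}^*)\leftarrow j_{\ell}}=R'$. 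Since $d^{\rightarrow}_{R'}(j_{\ell})=0$, \lem{faults-equiv} gives $S_{\ell}^*\cap {\cal I}(v^{\ell-1}_{R,j_{\ell}})=\emptyset$, i.e.\ $S_{\ell}^*\in 2^{[c_{\ell}]\setminus {\cal I}(v^{\ell-1}_{R,j_{\ell}})}\setminus\{\emptyset\}$. This places $\bigl(v^{\ell-1}_{R,j_{\ell}},v^{\ell}_{R'}\bigr)$ in $E_{\ell}(S_1^*,\dots,S_{\ell-1}^*)\subseteq E_{\ell}'$, completing the argument.

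The step I expect to be slightly delicate, rather than genuinely hard, is the membership bookkeeping in \eq{V-ell-Ss}: checking that removing/inserting $j_{\ell}$ in the prescribed block of $R_{\ell}$ really does move a vertex between $V_{\ell-1}(S_1,\dots,S_{\ell-1})$ and $V_{\ell}(S_1,\dots,S_{\ell})$ — in particular that the indices $\mu(S_1),\dots,\mu(S_{\ell-1})$ match up consistently on both sides, and that the size constraints on $R_1,\dots,R_{\ell-1}$ are untouched by the operation on $R_{\ell}$. All the real content is already packaged in \lem{faults-equiv}; this lemma is just the translation of that content into the language of edge sets, so no new estimates are needed.
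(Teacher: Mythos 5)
Your proof is correct and follows essentially the same route as the paper's: both directions reduce to the equivalence $d^{\rightarrow}_{R'}(j_{\ell})=0 \Leftrightarrow S_{\ell}\cap{\cal I}(v^{\ell-1}_{R,j_{\ell}})=\emptyset$ from \lem{faults-equiv}, which is exactly what the paper does. The extra bookkeeping you supply about size constraints and the parameters $\mu(S_1),\dots,\mu(S_{\ell-1})$ is implicit in the paper's terser argument but is the right thing to verify, and you verify it correctly.
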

\begin{proof}
Fix $S_1,\dots,S_{\ell-1}$ and suppose $(v_{R,j_{\ell}}^{\ell-1},v_{R'}^{\ell})\in {E}_{\ell}(S_1,\dots,S_{\ell-1})$. Then by \lem{faults-equiv}, since $S_{\ell}$ is chosen so that $S_{\ell}\cap {\cal I}(v_{R,j_{\ell}}^{\ell-1})=\emptyset$,
$d_{R'}^{\rightarrow}(j_{\ell})=0$, and Thus, $j_{\ell}\in L^-(v_{R'}^{\ell})$, so $(v_{R,j_{\ell}}^{\ell-1},v_{R'}^{\ell})\in E_{\ell}$.

For the other direction, suppose $\left(v^{\ell-1}_{R'\setminus\{j_{\ell}\},j_{\ell}},v^{\ell}_{R'}\right)\in E_{\ell}$, and let $S_1^*,\dots,S_{\ell}^*$ be such that $v^{\ell}_{R'}\in V_{\ell}(S_1^*,\dots,S_{\ell}^*)$. Then $R'$ is obtained from $R'\setminus\{j_{\ell}\}$ by adding $j_{\ell}$ to $R_{\ell}(\mu(S_1^*),\dots,\mu(S_{\ell-1}^*),S_{\ell}^*)$. Then since $j_{\ell}\in L^-(v_{R'}^{\ell})$, $d_{R'}^{\rightarrow}(j_{\ell})=0$, so by \lem{faults-equiv}, $S_{\ell}^*\in 2^{[c_{\ell}]\setminus{\cal I}(v_{R,j_{\ell}}^{\ell-1})}\setminus\{\emptyset\}$, and so $\left(v^{\ell-1}_{R'\setminus\{j_{\ell}\},j_{\ell}},v^{\ell}_{R'}\right)\in E_{\ell}(S_1^*,\dots,S_{\ell-1}^*)$.
\end{proof}
While $E_{\ell}$ represents all edges between $V_{\ell-1}^{+}$ and $V_{\ell}$, we now define two sets of edges $\tilde{E}_{\ell}\subset E_{\ell}$, and $\tilde{E}_{\ell}'$ disjoint from $V_{\ell-1}^+\times V_{\ell}$, that each solve a different technical issue. First, in \sec{k-dist-transitions}, we will see that the complexity of transitions in $E_{\ell}$ depends on the number of collisions between the new block $A_{\ell}^{(j_{\ell})}$ being added, and $(\ell-1)$-collisions already stored in $D_{\ell-1}(R)$, so we will only attempt to implement the transition subroutine correctly when this set is not too large. In anticipation of this, we define:
\begin{equation}
	\tilde{E}_{\ell}:=\left\{(v^{\ell-1}_{R,j_{\ell}},v^{\ell}_{R'})\in E_{\ell}: |{\cal K}(\overline{R}_1,\dots,\overline{R}_{\ell-1},A_{\ell}^{(j_{\ell})})| \geq p_{\ell}
	\right\}\subset E_{\ell},\label{eq:k-dist-tilde-E-ell}
\end{equation}
where $p_{\ell}\in {\sf polylog}(n)$,
which will be part of $\tilde{E}$, the set of edges whose transitions we fail to implement. 
Second, if any $v_{R,j_{\ell}}^{\ell-1}\in V_{\ell-1}^+$ has no neighbour in $V_{\ell}$, which happens exactly when ${\cal I}(v_{R,j_{\ell}}^{\ell-1})=[c_{\ell}]$, then its correct star state would simply have one incoming edge from $V_0$, which, as discussed in \sec{3-dist-star-states}, would make it impossible to define a flow satisfying all star state constraints (\textbf{P2} of \thm{full-framework}).
Unlike in the case of $E_1$, there is no constant $c_{\ell}$ such that we can assume ${\cal I}(v_{R,j_{\ell}}^{\ell-1})\subsetneq [c_{\ell}]$ for all $v_{R,j_{\ell}}^{\ell-1}\in V_{\ell-1}^+$. That is because while each $i_{\ell}\in A_{\ell}$ can have at most $k-2$ collisions in $R_{\ell+1}$, the total number of such collisions for all $i_{\ell}\in A_{\ell}^{(j_{\ell})}$ may be linear in $|A_{\ell}^{(j_{\ell})}|$. Fortunately this happens for only a very small fraction of $v_{R,j_{\ell}}^{\ell-1}\in V_{\ell-1}^+$. Thus, we define (choosing $\{1\}$ arbitrarily):
\begin{equation}
	\tilde{E}_{\ell}':=\left\{\left(v^{\ell-1}_{R,j_{\ell}},f^+_{v^{\ell-1}_{R,j_{\ell}}}(\{1\})\right): {\cal I}(v^{\ell-1}_{R,j_{\ell}})=[c_{\ell}]\right\},\label{eq:k-dist-tilde-E-prime}
\end{equation}
which is disjoint from $E_{\ell}$. Note that for $\ell=k-1$, we always have ${\cal I}(v^{k-2}_{R,j_{k-1}})=\emptyset$ and $[c_{k-1}]=[1]$, so $\tilde{E}_{k-1}'=\emptyset$.
Since $\tilde{E}_{\ell}'$ will be part of $\tilde{E}$, we assume its endpoints $f^+_{v^{\ell-1}_{R,j_{\ell}}}(\{1\})$ are just otherwise isolated vertices that we do not consider a part of $V(G)$ (see \rem{orphan-vertices}). As with $E_{\ell}$, we set all edges in $\tilde{E}_{\ell}'$ to have weight $\w_{\ell}$. Thus, from this discussion as well as \lem{k-dist-E-ell} we have,
for all $v^{\ell-1}_{R,j_{\ell}}\in V_{\ell-1}^+$ and $S_{\ell}\in L^+(v^{\ell-1}_{R,j_{\ell}})$,
\begin{equation}
\w_{v^{\ell-1}_{R,j_{\ell}},S_{\ell}} = \left\{\begin{array}{ll}
\w_{\ell} =\sqrt{n/m_{\ell}} & \mbox{if }S_{\ell}\cap {\cal I}(v^{\ell-1}_{R,j_{\ell}}) = \emptyset\\
\w_{\ell} =\sqrt{n/m_{\ell}} & \mbox{if }{\cal I}(v^{\ell-1}_{R,j_{\ell}})=[c_{\ell}]\mbox{ and }S_{\ell}=\{1\}\\
0 & \mbox{else.}
\end{array}\right.\label{eq:k-dist-w-ell-cond}
\end{equation}

\noindent We can see from \eq{kV1}, that 
\begin{equation}
	\abs{E_{\ell}}+\abs{\tilde{E}_{\ell}'} \leq \abs{L^+(v_{R,j_{\ell}}^{\ell-1})}\abs{V_{\ell-1}^+} = O\left(n\abs{V_0}\right).\label{eq:kEk1}
\end{equation}

\paragraph{$E_k\subset V_{k-1}\times V_k$:} Finally, there is an edge between $v^{k-1}_{R}\in V_{k-1}$ and $v^k_{R,i_k}\in V_k$ for any $i_k\in A_k$, so we define 
$$L^+(v_R^{k-1}):= A_k
\mbox{ and }
L^-(v_{R,i_k}^k):=\{\leftarrow\},$$
and let $f^+_{v_R^{k-1}}(i_k)=v_{R,i_k}^k$, and $f^-_{v_{R,i_k}^k}(\leftarrow)=v_{R}^{k-1}$. We let $E_k$ be the set of such edges:
$$E_k:=\left\{(v_{R}^{k-1},v_{R,i_k}^k):v_{R}^{k-1}\in V_{k-1},i_k\in A_k\right\},$$
and we set $\w_e=\w_k=1$ for all $e\in E_k$. This, together with \eq{kVk}, implies that 
\begin{equation}
|E_k|=O\left(\frac{n^2}{t_{k-1}}|V_0|\right).
\label{eq:kEk}
\end{equation}

\begin{table}
	\renewcommand{\arraystretch}{1.5}
	\centering
	\begin{tabular}{ |c|c|c|c|c|c| }
		\hline
		Edge Set & $(u,v)$ & $(u,i)$ & $(v,j)$ & $\w_{u,v}$ & ${\sf T}_{u,v}$\\
		\hline
		\hline
		$E_0^+\subset V_0\times V_0^+$ & $(v^0_R,v^0_{R,j_1})$ & $(v^0_R,j_1)$ & $(v^0_{R,j_1},\leftarrow) $  & $\w_0^+=1$ & ${\sf T}_0^+=\widetilde{O}(1)$\\
		\hline
		$E_1\subset V_0^+\times V_1$ & $(v^0_{R,j_1},v^1_{R'})$ & $(v^0_{R,j_1},S_1)$ & $(v^1_{R'},j_1) $  & $\w_1=1$ & ${\sf T}_1=\widetilde{O}(1)$\\
		\hline
		$\{E_\ell^+\subset V_{\ell}\times V_{\ell}^+\}_{\ell=1}^{k-2}$ & $(v^1_{R},v^1_{R,j_{\ell+1}})$ & $(v^1_{R},j_{\ell+1})$ & $(v^1_{R,j_{\ell+1}},\leftarrow)$  & $\w_{\ell}^+=1$ & ${\sf T}_{\ell}^+=\widetilde{O}(1)$\\
		\hline
		$\{E_\ell\subset V_{\ell-1}^+\times V_{\ell}\}_{\ell=1}^{k-1}$ & $(v^{\ell-1}_{R,j_{\ell}},v^{\ell}_{R'})$ & $(v^{\ell-1}_{R,j_{\ell}},S_{\ell})$ & $(v^{\ell}_{R'},j_{\ell})$  & $\w_{\ell}=\sqrt{\frac{n}{m_{\ell}}}$ & ${\sf T}_{\ell}=\widetilde{O}\left(\sqrt{\frac{n}{m_{\ell}}}\right)$\\
		\hline
		$E_k\subset V_{k-1}^+\times V_k$ & $(v^{k-1}_{R},v^k_{R,i_k})$ & $(v^{k-1}_{R},i_k)$ & $(v^k_{R,i_k},\leftarrow)$  & $\w_k=1$ & ${\sf T}_k=\widetilde{O}(1)$\\
		\hline
	\end{tabular}
	\caption{
For each edge in $\protect\overrightarrow{E}(G)$, we can describe it in three ways: as a pair of vertices $(u,v)$; as a vertex $u$ and forward label $i=f_u^{-1}(v)$; and as a vertex $v$ and backward label $j=f_v^{-1}(u)$ (see \defin{QW-access}). We summarise these three descriptions for the edge sets that make up $\protect\overrightarrow{E}(G)\setminus\tilde{E}$, along with the edge weights, and transitions costs (see \cor{k-dist-transitions}). The edge labels $i$ and $j$ range across (sometimes strict) subsets of $L^+(u)$ and $L^-(u)$ (see \tabl{k-dist-index-sets}). For example, for $u\in V_0^+$, $i=S_1\in L^+(u)=2^{[c_1]}\setminus\{\emptyset\}$, $(u,i)$ only represents an edge of $\protect\overrightarrow{E}(G)$ when $S_1\cap {\cal I}(u)=\emptyset$ (see \lem{k-dist-E1} and \eq{k-dist-w-ell-cond}).}\label{tab:k-dist-edges}
\end{table}

\paragraph{The Graph $G$:} The full graph $G$ is defined by:
\begin{align*}
V(G)&=\bigcup_{\ell=0}^{k} V_{\ell}\cup \bigcup_{\ell=0}^{k-2} V_{\ell}^+\\
\overrightarrow{E}(G) &= \{(u,v):u\in V(G), i\in L^+(u):\w_{u,i}\neq 0\}=\bigcup_{\ell=0}^{k-2} E_{\ell}^+\cup E_1 \cup \bigcup_{\ell=2}^{k-1} (E_{\ell}\cup \tilde{E}_{\ell}')\cup E_k,
\end{align*}
where the edge label sets $L^+(u)$ are summarised in \tabl{k-dist-index-sets}, and weights are summarised in \tabl{k-dist-edges}. We define (recall that $\tilde{E}_{\ell}\subset E_{\ell}$): 
\begin{equation}
\tilde{E}:=\bigcup_{\ell=2}^{k-1}(\tilde{E}_{\ell}\cup\tilde{E}_{\ell}').\label{eq:k-dist-tilde-E}
\end{equation}

\paragraph{The Marked Set and Checking Cost:}  In the notation of \thm{full-framework}, we let $V_{\sf M}=V_k$, and we define a subset $M\subseteq V_{\sf M}$ as follows. 
If $(a_1,\dots,a_k)\in A_1\times \cdots \times A_k$ is the unique $k$-collision, we let 
\begin{equation}
\begin{split}
	M &= \big\{v^{k}_{R_1,\dots,R_{k-1},i_k} \in V_{k}: \exists (i_1,\dots,i_{k-1},x_{i_1})\in D_{k-1}(R)\mbox{ s.t. }x_{i_1}=x_{i_k}\}\\
&= \big\{v^k_{R_1,\dots,R_{k-1},i_k}: \exists 
S_{1} \subseteq [c_{1}],\dots,S_{k-1} \subseteq [c_{k-1}], s_1 \in S_1,\dots,s_{k-1} \in S_{k-1} \text{ s.t. }\\
	&\qquad\qquad\qquad\qquad\qquad\qquad \forall \ell \in \{1,\dots,k-1\},\; a_{\ell} \in \overline{R}_{\ell}(s_1,\dots,s_{\ell-1},S_{\ell})\mbox{ and }i_k = a_k\big\},
\end{split}\label{eq:k-dist-M}
\end{equation}
and otherwise, if there is no $k$-collision, $M=\emptyset$. We can decide whether $v^{k}_{R,i_k} \in V_{k}$ is marked by querying $i_k$ to obtain the value $x_{i_k}$ and looking it up in $D_{k-1}(R)$ to see if we find some $(i_1,\dots,i_{k-1},x_{i_k})$, in which case, it must be that $a_1=i_1$,\dots,$a_k=i_k$. Thus, the checking cost is at most
\begin{equation}
	{\sf C}=O(\log n).\label{eq:k-dist-C}
\end{equation}

\subsubsection{The Star States and their Generation}\label{sec:k-dist-star-states}

We define the set of alternative neighbourhoods (\defin{alternative}) with which we will apply \thm{full-framework}. 
For $\ell\in\{0,\dots,k\}$, for all  $v^{\ell}_R \in V_\ell$, we add a single star state to $\Psi_\star(u)$, which has one of three forms, depending on $\ell$ (refer to \tabl{k-dist-index-sets}): for $v^0_R\in V_0$,
\begin{equation}
\Psi_\star(v^{0}_R):=\left\{\ket{\psi_\star^G(v^{0}_R)}=
\sum_{j_1\in [m_1]\setminus R_1}\sqrt{\w_{0}^+}\ket{v^0_R,j_1}
\right\};\label{eq:k-star-v0}
\end{equation} 
for $\ell\in\{1,\dots,k-1\}$, and 
$v^{\ell}_R\in V_{\ell}(S_1,\dots,S_{\ell})$,\footnote{Here we explicitly include the $\rightarrow$ and $\leftarrow$ parts of each element of $L^+$ and $L^-$, which are normally left implicit, in order to stress that the two sum are orthogonal.}
\begin{equation}
\Psi_\star(v^{\ell}_R):=\left\{\ket{\psi_\star^G(v^{\ell}_R)}=-\!\!\!\!
\sum_{\substack{ j_{\ell}\in R_{\ell}(\mu(S_1),\dots,\mu(S_{\ell-1}),S_{\ell}):\\ d_R^{\rightarrow}(j_{\ell})=0}}\!\!\!\!\!\sqrt{\w_{\ell}}\ket{v^{\ell}_R,\leftarrow,j_{\ell}} 
+\!\!\!\sum_{j_{\ell+1}\in [m_{\ell+1}]\setminus R_{\ell+1}}\!\!\!
\sqrt{\w_{\ell}^+}\ket{v^{\ell}_R,\rightarrow,j_{\ell+1}}
\right\};\label{eq:k-star-v-ell}
\end{equation} 
and finally, for $v^k_{R,i_k}\in V_k$,
\begin{equation}
\Psi_\star(v^{k}_{R,i_k}):=\left\{\ket{\psi_\star^G(v^{k}_{R,i_k})}=
-\sqrt{\w_k}\ket{v^k_{R,i_k},\leftarrow}
\right\}.\label{eq:k-star-vk}
\end{equation}

\noindent From \tabl{k-dist-index-sets}, along with the description of $\w$ in \lem{k-dist-E1}, we can see that for $v^0_{R,j_1}\in V_0^+$, 
\begin{equation*}
\ket{\psi_\star^{G}(v^0_{R,j_1})} = -\sqrt{\w_0^+}\ket{v_{R,j_1}^0,\leftarrow}+\sum_{S_1\in 2^{[c_1]\setminus{\cal I}(v_{R,j_1}^0)}\setminus\{\emptyset\}}\sqrt{\w_1}\ket{v_{R,j_1}^0,S_1}.
\end{equation*}
To generate this state, one would have to compute ${\cal I}(v^{0}_{R,j_1})$ (see \eq{k-dist-I-i}), which would require determining the locations of all forward collisions of $j_1$, which is far too expensive. Hence, we simply add all options to $\Psi_\star(v^{0}_{R,j_1})$ (we see in \lem{k-dist-star-states} that generating this \emph{set} is not difficult):
\begin{equation}
	\Psi_\star(v^0_{R,j_1}) := \left\{\ket{\psi_\star^{{\cal I}_1}(v^0_{R,j_1})}:=-\sqrt{\w_0^+}\ket{v^0_{R,j_1},\leftarrow}+ \sum_{S_{1}\in 2^{[c_{1}] \setminus {\cal I}_{1}}\setminus\{\emptyset\}} \sqrt{\w_{1}}\ket{v^0_{R,j_1},S_1} : {\cal I}_{1}\subsetneq[c_{1}]\right\}.\label{eq:k-Psi-v-0-plus}
\end{equation}
Thus, since we always have ${\cal I}(v^0_{R,j_1})\subsetneq[c_1]$, $\ket{\psi_\star^G(v^0_{R,j_1})}=\ket{\psi_\star^{{\cal I}(v^0_{R,j_1})}(v^0_{R,j_1})}\in \Psi_{\star}(v^0_{R,j_1})$. 

Similarly, for $\ell\in \{2,\dots,k-1\}$ and $v^{\ell-1}_{R,j_{\ell}}\in V_{\ell-1}^+$ define:
\begin{equation}
	\Psi_\star(v^{\ell-1}_{R,j_{\ell}}) := \left\{\ket{\psi_\star^{{\cal I}_{\ell}}(v^{\ell-1}_{R,j_{\ell}})}:=-\sqrt{\w_{\ell-1}^+}\ket{v^{\ell-1}_{R,j_{\ell}},\leftarrow}+ \sum_{S_{\ell}\in 2^{[c_{\ell}] \setminus {\cal I}_{\ell}}\setminus\{\emptyset\}} \sqrt{\w_{\ell}}\ket{v^{\ell-1}_{R,j_{\ell}},S_{\ell}} : {\cal I}_{\ell}\subsetneq[c_{\ell}]\right\}.\label{eq:k-Psi-v-ell-plus}
\end{equation}
Then from \eq{k-dist-w-ell-cond},  we have:
\begin{equation}
\ket{\psi_\star^G(v^{\ell}_{R,j_{\ell}})}=\left\{\begin{array}{ll}
\ket{\psi_\star^{{\cal I}(v^{\ell}_{R,j_{\ell}})}(v^{\ell}_{R,j_{\ell}})} & \mbox{if }{\cal I}(v^{\ell}_{R,j_{\ell}})\subsetneq[c_{\ell}]\\
\ket{\psi_\star^{[c_{\ell}]\setminus\{1\}}(v^{\ell}_{R,j_{\ell}})} & \mbox{if }{\cal I}(v^{\ell}_{R,j_{\ell}})=[c_{\ell}], 
\end{array}\right.
\label{k-star-v-ell-plus}
\end{equation}
where ${\cal I}(v^{\ell}_{R,j_{\ell}})$ is defined in \eq{k-dist-cal-I}.

\noindent We now describe how to generate the states in $\bigcup_{u\in V(G)}\Psi_\star(u)$ in $\widetilde{O}(1)$ complexity (see \defin{alternative}):
\begin{lemma}\label{lem:k-dist-star-states}
	The states $\Psi_\star=\{\Psi_\star(u)\}_{u\in V(G)}$ can be generated in $\widetilde{O}(1)$ complexity. 
\end{lemma}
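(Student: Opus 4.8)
The plan is to mirror the proof of \lem{3-dist-star-states}, organising the star-state generation vertex-set by vertex-set. The description of any $u\in V(G)$ begins with a label identifying which of $V_0,V_0^+,(V_\ell)_{\ell=1}^{k},(V_\ell^+)_{\ell=1}^{k-2}$ it belongs to, so it suffices to exhibit subroutines $U_0,U_{0,+},U_1,U_1^+,\dots,U_{k-1},U_k$ generating the star states of each set in $\widetilde{O}(1)$ complexity, and then set $U_\star$ to be the controlled combination of these (controlled on the vertex-type label), which generates $\Psi_\star$ in the sense of \defin{alternative} since each vertex type uses a constant-size label alphabet for the relevant register.

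First I would handle the ``easy'' vertex sets $V_0$ and $V_k$. For $v_R^0\in V_0$, by \eq{k-star-v0} the (unique) star state is proportional to the uniform superposition over $A_1\setminus R_1$; since $R_1$ is stored in a data structure supporting uniform superposition (see \sec{data}) and $|R_1|=O(t_1)\ll |A_1|$, we generate the uniform superposition over $A_1$ and apply $O(\log n)$ rounds of amplitude amplification to get inverse-polynomially close to the uniform superposition over $A_1\setminus R_1$, exactly as in \lem{3-dist-star-states}. For $v_{R,i_k}^k\in V_k$, \eq{k-star-vk} shows the star state is a single basis state up to a phase, so $U_k$ is trivial. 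Next I would treat $V_\ell$ for $\ell\in\{1,\dots,k-1\}$: by \eq{k-star-v-ell} the star state is a normalised combination of two orthogonal pieces — a superposition over $\{j_\ell\in R_\ell(\mu(S_1),\dots,\mu(S_{\ell-1}),S_\ell):d_R^\rightarrow(j_\ell)=0\}$ tagged by $\leftarrow$ and a superposition over $[m_{\ell+1}]\setminus R_{\ell+1}$ tagged by $\rightarrow$, with weights $\w_\ell=\sqrt{n/m_\ell}$ and $\w_\ell^+=1$ respectively. We first apply an $O(1)$-qubit rotation to produce $-\sqrt{\w_\ell}\ket{\leftarrow}+\sqrt{\w_\ell^+}\ket{\rightarrow}$; conditioned on $\leftarrow$ we generate a uniform superposition over $R_\ell(\mu(S_1),\dots,\mu(S_{\ell-1}),S_\ell)$ and then use $O(\log n)$ rounds of amplitude amplification to restrict to those $j_\ell$ with $d_R^\rightarrow(j_\ell)=0$ — this is where storing $C_\ell^\rightarrow(R)$ (see \eq{k-dist-forward-col-db}) is essential, since membership in $L^-$ can be checked by looking up $j_\ell$ in $C_\ell^\rightarrow(R)$ in $\widetilde{O}(1)$ — and conditioned on $\rightarrow$ we generate a uniform superposition over $[m_{\ell+1}]\setminus R_{\ell+1}$ (for $\ell=k-1$ the $\rightarrow$ branch is instead over $A_k$). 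To uncompute the implicit labels $S_1,\dots,S_\ell$ we check which parts of $R_1,\dots,R_\ell$ have oversized cardinality, all in $\widetilde{O}(1)$.

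It remains to generate $\Psi_\star(u)$ for $u\in V_0^+$ and $u\in V_\ell^+$, $\ell\in\{1,\dots,k-2\}$, and this is the one genuinely new point relative to the $k=3$ case, though still not difficult. For $v_{R,i_1}^0\in V_0^+$, the set $\Psi_\star(v_{R,i_1}^0)$ from \eq{k-Psi-v-0-plus} consists of the states $\ket{\psi_\star^{{\cal I}_1}(v^0_{R,i_1})}$ over all ${\cal I}_1\subsetneq[c_1]$, each a $v_{R,i_1}^0$-independent combination (with weights $\w_0^+,\w_1=1$) of basis states indexed by the constant-size alphabet $\{\leftarrow\}\cup(2^{[c_1]}\setminus\{\emptyset\})$; similarly for $v_{R,j_\ell}^{\ell-1}\in V_{\ell-1}^+$ with alphabet of size $2^{c_\ell}$ (a constant, by \tabl{k-dist-setsizes}). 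So these are exactly in the scope of \lem{const-stars} (with $V'$ the relevant vertex set, $c$ a constant bounding the edge-label bit-length, and $d'$ the constant number of states): it provides an $O(1)$-cost map $U_\star'$ sending $\ket{u,\ell}$ to an orthonormal basis vector of $\mathrm{span}\{\Psi_\star(u)\}$, which is precisely the requirement of \defin{alternative}. Combining the constant number of subroutines via the controlled sum (using \lem{uniform-combine} if one wants to be careful about uniformity) gives $U_\star$ of total complexity $\widetilde{O}(1)={\sf polylog}(n)$, completing the proof.

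The main obstacle — really the only non-routine point — is making the $\leftarrow$-branch superposition for $V_\ell$ efficient: we need a uniform superposition over $\{j_\ell\in R_\ell(\mu(S_1),\dots,\mu(S_{\ell-1}),S_\ell): d_R^\rightarrow(j_\ell)=0\}$, and a priori computing $d_R^\rightarrow(j_\ell)$ would require searching $D_{\ell+1}(R)$, which is too expensive. The resolution, already baked into the data definition, is that $C_\ell^\rightarrow(R)$ records exactly the $j_\ell$ with positive forward-collision degree, so amplitude amplification using membership queries to $C_\ell^\rightarrow(R)$ works in $O(\log n)$ rounds, provided the fraction of such $j_\ell$ is bounded away from $1$; this follows because $|{\cal K}(\overline{R}_1,\dots,\overline{R}_\ell,A_{\ell+1}^{(j_{\ell+1})})|$ is typically constant (our choice $m_{\ell+1}=\Theta(t_\ell)$), so only an $o(1)$-fraction of $j_\ell$ lie in $C_\ell^\rightarrow(R)$ — the same estimate underlying \lem{tilde-E} / \cor{ktilde-E}. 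Everything else is bookkeeping analogous to \lem{3-dist-star-states}.
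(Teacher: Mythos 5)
Your proposal matches the paper's proof of this lemma essentially step for step: the same controlled decomposition $U_\star=\sum_\ell\ket{\ell}\bra{\ell}\otimes U_\ell+\sum_\ell\ket{\ell,+}\bra{\ell,+}\otimes U_{\ell,+}$ by vertex type, the same $O(\log n)$ rounds of amplitude amplification after a uniform superposition for $V_0$ and both branches of $V_\ell$, the same $O(1)$-qubit rotation to split the $\leftarrow/\rightarrow$ branches using the weights from \tabl{k-dist-edges}, the same look-up in $C_\ell^\rightarrow(R)$ to test $d_R^\rightarrow(j_\ell)=0$, the same appeal to \lem{const-stars} for $V_0^+$ and $V_{\ell}^+$, and the same trivial $U_k$. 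Your closing observation — that amplitude amplification on the $\leftarrow$-branch converges in $O(\log n)$ rounds because only an $o(1)$-fraction of $j_\ell$ in $R_\ell(\mu(S_1^*),\dots,S_\ell^*)$ have positive forward-collision degree, which is the same estimate underlying \cor{ktilde-E} — is a point the paper leaves implicit, and is a correct and useful addition; you also correctly flagged the $\ell=k-1$ boundary case where the forward branch ranges over $A_k$ rather than $[m_{\ell+1}]\setminus R_{\ell+1}$, which the paper's \eq{k-star-v-ell} glosses over.
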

\begin{proof}
The description of a vertex $u\in V(G)$ begins with a label indicating to which of $V_0,\dots,V_k$ or $V_0^+,\dots,V_{k-2}^+$ it belongs, so we can define subroutines $U_0,\dots,U_k,U_{0,+},\dots,U_{k-2,+}$ that generate the star states in each vertex set respectively, and then 
$$U_\star = \sum_{\ell=0}^k\ket{\ell}\bra{\ell}\otimes U_{\ell}+\sum_{\ell=0}^{k-2}\ket{\ell,+}\bra{\ell,+}\otimes U_{\ell,+}$$
will generate the star states in the sense of \defin{alternative}.

	We begin with $U_0$. For $v^{0}_{R} \in V_0$, we have $\Psi_\star(v^0_R)=\{\ket{\psi_\star^G(v_R^0)}\}$, where $\ket{\psi_\star^G(v_R^0)}$ is as in \eq{k-star-v0}. Thus, implementing the map $U_0:\ket{u}\ket{0}\mapsto \propto \ket{\psi_\star^G(u)}$ is as simple as generating a uniform superposition over $[m_1]$, and then using $O(\log n)$ rounds of amplitude amplification to get inverse polynomially close to the uniform superposition over $[m_1]\setminus R_1$.

For $\ell\in\{1,\dots,k-1\}$, and $v^{\ell}_R\in V_{\ell}$, we again have $\Psi_\star(v^\ell_R)=\{\ket{\psi_\star^G(v_R^\ell)}\}$, where $\ket{\psi_\star^G(v_R^\ell)}$ is as in \eq{k-star-v-ell}. To implement $U_{\ell}:\ket{u}\ket{0}\mapsto \propto\ket{\psi_\star^G(u)}$, we first compute (referring to \tabl{k-dist-edges} for the weights):
$$\ket{u,0}\mapsto \propto\ket{u}\left(-\sqrt{\w_{\ell}}\ket{\leftarrow}+\sqrt{\w_{\ell}^+}\ket{\rightarrow}\right)\ket{0}=\ket{u}\left(-(n/m_{\ell})^{1/4}\ket{\leftarrow}+\ket{\rightarrow}\right)\ket{0},$$
which can be implemented by a $O(1)$-qubit rotation. Then conditioned on $\leftarrow$, generate a uniform superposition over $j_{\ell}\in R_{\ell}(\mu(S_1^*),\dots,\mu(S_{\ell-1}^*),S_{\ell}^*)$ (we can learn the sets $S_1^*,\dots,S_{\ell}^*$ by seeing which sets are bigger, or assume we simply keep track of these values in some convenient way), and then using $O(\log n)$ rounds of amplitude amplification to get inverse polynomially close to the superposition over such $j_{\ell}$ such that $d_R^{\rightarrow}(j_{\ell})=0$, which we can check by looking up $j_{\ell}$ in $C_{\ell}^{\rightarrow}(R)$. We have used the fact that our data structure supports taking a uniform superposition (see \sec{data}). Finally, conditioned on $\rightarrow$, generate a uniform superposition over $j_{\ell+1}\in [m_{\ell+1}]$, and use $O(\log n)$ rounds of  amplitude amplification to get inverse polynomially close to a superposition over $j_{\ell+1}\in [m_{\ell+1}]\setminus R_{\ell+1}$.  

For $\ell\in\{2,\dots,k-1\}$ (the case for $\ell=0$ is nearly identical), and $v^{\ell-1}_{R,j_{\ell}}\in V_{\ell-1}^+$, $\Psi_\star(v^{\ell-1}_{R,j_{\ell}})$ is a set of multiple states, as in \eq{k-Psi-v-ell-plus}. To implement $U_{\ell-1,+}:\ket{v^{\ell-1}_{R,j_{\ell}}}\ket{{\cal I}_{\ell}}\mapsto \ket{\psi_\star^{{\cal I}_{\ell}}(v^{\ell-1}_{R,j_{\ell}})}$ for all ${\cal I}_{\ell}\subsetneq [c_{\ell}]$, we note that each of these states is just $\ket{v^{\ell-1}_{R,j_{\ell}}}$ tensored with a constant-sized state depending only on ${\cal I}_{\ell}$, so we can implement $U_{\ell-1,+}$ in $O(1)$ time, by \lem{const-stars}.
	
	Finally, we implement $U_k$. For $v^{k}_{R,i_k} \in V_{k}$, we have $\Psi_\star(v^{k}_{R,i_k})=\{\ket{\psi_\star^G(v^{k}_{R,i_k})}\}$ as in \eq{k-star-vk}, so implementing the map $U_k:\ket{v^{k}_{R,i_k}}\ket{0}\mapsto\propto \ket{\psi_\star^G(v^{k}_{R,i_k})}\propto \ket{v^{k}_{R,i_k}}\ket{\leftarrow}$ is trivial.
	
	\noindent We Thus, conclude that $U_\star$ can be implemented in ${\sf polylog}(n)=\widetilde{O}(1)$ complexity.
\end{proof}

\subsubsection{Tail Bounds on Number of Collisions}\label{sec:k-dist-tail}

If $\overline{R}_1,\dots,\overline{R}_{k-1}$ were uniform random subsets of $A_1,\dots,A_{k-1}$ respectively, it would be simple to argue that, for example, the number of collisions stored in $D_{\ell}(R)$ for any $\ell$, which is a subset of ${\cal K}(\overline{R}_1,\dots,\overline{R}_{\ell})$, is within a constant of the average, with high probability. Since $\overline{R}_{\ell}$ is instead chosen from $A_{\ell}$ by taking $t_{\ell}$ blocks of $A_{\ell}$, and these blocks themselves are not uniform random, but rather chosen by a $d$-wise independent permutation for some $d={\sf polylog}(n)$, proving the necessary bounds, which are needed to upper bound the setup and transitions costs, is somewhat more subtle. 

\begin{lemma}\label{lem:setup-probability}
For any $\ell',\ell \in \{1,\dots,k-1\}$ where $\ell > \ell'$ and for any constant $\kappa$ there exists a constant $c$ such that the following holds. 
If $v_{R}^{\ell-1}$ is chosen uniformly at random from $V_{\ell-1}$, then for any fixed (non-random) $j\in [m_{\ell}]$ we have 
	$$\Pr\left[\abs{{\cal K}(\overline{R}_1,\dots,\overline{R}_{\ell'},A_{\ell}^{(j)})}\geq c\frac{t_{\ell'}}{m_{\ell}}\log^{2^{\ell'-1}} (n)\right]\leq n^{-\kappa}.$$
\end{lemma}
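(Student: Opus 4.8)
The plan is to reduce the statement to a tail bound on a sum of indicator random variables, one for each potential collision tuple, and then handle the lack of full independence (coming from the $d$-wise independent permutation $\tau$ that defines the blocks) by a moment argument. First I would set up notation: a ``collision tuple'' is a tuple $(i_1,\dots,i_{\ell'},i_\ell)\in A_1\times\cdots\times A_{\ell'}\times A_\ell^{(j)}$ with $x_{i_1}=\cdots=x_{i_{\ell'}}=x_{i_\ell}$, and I would let $Z=|{\cal K}(\overline R_1,\dots,\overline R_{\ell'},A_\ell^{(j)})|$ count those tuples for which additionally $i_1\in\overline R_1,\dots,i_{\ell'}\in\overline R_{\ell'}$. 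The point is that for a \emph{fixed} $j$, the block $A_\ell^{(j)}$ is a fixed (random-but-conditioned-on) set, so the randomness we exploit is purely over the choice of $v_R^{\ell-1}$ from $V_{\ell-1}$, i.e. over the choice of the tuples of sets $R_1,\dots,R_{\ell'}$ (each $\overline R_{\ell''}$ being a union of $t_{\ell''}$ blocks of $A_{\ell''}$, or $R_1$ just a size-$t_1$ subset of $A_1$). By the $\Omega(1)$-collision-density assumption from \sec{k-dist-assumptions} (we may assume ${\cal K}(A_1,\dots,A_{\ell'},A_\ell^{(j)})=\Theta(|A_\ell^{(j)}|)$ per block, padding if necessary), the total number of collision tuples available is $O(|A_\ell^{(j)}|\cdot\mathrm{poly}) = O(\frac{n}{m_\ell}\cdot\mathrm{poly})$, and I would compute $\mathbb E[Z]$ by summing the probability that a given tuple is entirely inside $\overline R_1,\dots,\overline R_{\ell'}$. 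Because blocks are chosen via a $d$-wise independent permutation with $d={\sf polylog}(n)$, as long as the number of indices involved is at most $d$ this probability is within an additive $\delta$ (negligible) of what it would be under a truly uniform permutation, giving $\mathbb E[Z]=O\!\left(\frac{t_{\ell'}}{m_\ell}\right)$ up to constants depending on $k$ (one factor $t_{\ell''}/|A_{\ell''}|=\Theta(t_{\ell''}/n)$ per coordinate $\ell''\le\ell'$, and $|A_\ell^{(j)}|=\Theta(n/m_\ell)$, with the $m_{\ell''}=\Theta(t_{\ell''-1})$ relations making the product telescope).

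The main work is the tail bound. I would use the method of moments / Markov on a high moment: bound $\mathbb E[Z^p]$ for $p={\sf polylog}(n)$ and apply Markov's inequality $\Pr[Z\ge B]\le \mathbb E[Z^p]/B^p$ with $B=c\frac{t_{\ell'}}{m_\ell}\log^{2^{\ell'-1}}(n)$. Expanding $Z^p$ as a sum over $p$-tuples of collision tuples, the expectation factors according to how many \emph{distinct} indices appear across the $p$ chosen tuples; a term involving $q$ distinct indices contributes roughly $(\max_{\ell''\le\ell'} t_{\ell''}/n)^{(\text{number in }R\text{-coords})}$ times the count of such configurations, and the dominant contribution comes from configurations where the $p$ tuples are ``spread out'' so the index-sets are nearly disjoint, giving $\mathbb E[Z^p]\approx \mathbb E[Z]^p\cdot(\text{combinatorial factor} \le p^{O(p)})$, while clustered configurations are suppressed. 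This is where the $\log^{2^{\ell'-1}}(n)$ in the exponent enters: iterating the estimate over the $\ell'$ coordinates, each level of nesting squares the effective logarithmic slack, so a clean induction on $\ell'$ — with the inductive hypothesis being exactly the stated bound for smaller $\ell'$, applied conditionally on the sets $\overline R_1,\dots,\overline R_{\ell'-1}$ chosen — is the natural way to organize it. The base case $\ell'=0$ (or $\ell'=1$) is essentially \lem{hypergeo}: for $\ell'=1$, $Z$ is close to hypergeometric with parameters $(|A_1|,t_1,O(n/m_\ell))$ and $\mu=O(t_1/m_\ell)$, and \cor{hypergeo} already gives $\Pr[Z\ge c\log n]\le n^{-\kappa}$ for $c$ large. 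For the inductive step, condition on $\overline R_1,\dots,\overline R_{\ell'-1}$: the set of $i_{\ell'}\in A_{\ell'}$ that extend an existing $(\ell'-1)$-collision is, by the inductive hypothesis, of size $O(\frac{t_{\ell'-1}}{n}\cdot n\cdot \log^{2^{\ell'-2}} n)=\widetilde O(t_{\ell'-1})$ with probability $\ge 1-n^{-\kappa}$, and then the number of these that additionally land in $\overline R_{\ell'}$ (a random union of $t_{\ell'}$ blocks) is again controlled by a hypergeometric-type bound, contributing another $\log$-to-a-power factor.

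The hard part will be making the $d$-wise independence genuinely suffice in the moment computation: a $p$-th moment expansion involves up to $p\cdot(k+1)$ indices, so I must keep $p={\sf polylog}(n)$ small enough that $p\cdot(k+1)\le d=\log^{2^{k-1}}(n)$, yet large enough that $p^{O(p)}$ is beaten by the $(c\log^{2^{\ell'-1}} n)^p$ gain — this forces a careful bookkeeping of which power of $\log$ is needed at each level, and is precisely why the exponents $2^{\ell'-1}$ (doubling each level, capped at $2^{k-1}$) appear in both the lemma and in the choice $d=\log^{2^{k-1}}(n)$ in \sec{k-dist-assumptions}. I would also need to confirm that the additive error $\delta$ from $d$-wise $\delta$-independence (taken inverse-polynomially small, per the discussion after \defin{d-wise}) never accumulates beyond $n^{-\kappa}$ across the at most $p^{O(p)}$ terms; since $\delta$ can be chosen as small an inverse polynomial as we like and the number of terms is quasi-polynomial, this is fine, but it must be stated. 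Finally, I would note that \lem{setup-probability} is invoked with a \emph{fixed} $j$, so no union bound over $j$ is needed here (the union bounds over blocks happen at the call sites in \lem{tilde-E}'s analogue and in the setup-cost lemma), which keeps the argument clean.
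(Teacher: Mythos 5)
Your proposal mixes two strategies, and only the second corresponds to what the paper actually does. The paper uses no method-of-moments argument at all; it never needs one, because the random variable it analyses at each level of an induction on $\ell'$ is a genuine hypergeometric draw, so the one-sided tail bound of \lem{hypergeo} already gives exponential concentration directly. The moment machinery in your first half could be made to work but is considerably heavier, and your central worry --- that $d$-wise independence of $\tau$ must be managed inside the moment expansion --- is moot for this particular lemma: $\tau$ is held fixed throughout, and the only randomness is over $R$, where $R_1$ is a \emph{genuinely} uniform subset of $A_1$ and each $R_{\ell'}$ is a \emph{genuinely} uniform subset of $[m_{\ell'}]$, not merely $d$-wise independent. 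The $d$-wise independence of $\tau$ is used in \cor{fault-prob}, not here. The paper's cleaner move is to observe that although $\overline R_{\ell'}$ (the union of chosen blocks) is \emph{not} uniform over subsets of $A_{\ell'}$, the block-index set $R_{\ell'}\subset[m_{\ell'}]$ \emph{is}; so one should declare a block $A_{\ell'}^{(j')}$ ``marked'' if it contains an index extending some $(\ell'-1)$-collision in ${\cal K}(\overline R_1,\dots,\overline R_{\ell'-1},A_\ell^{(j)})$, and count marked blocks landing in $R_{\ell'}$, which is a clean hypergeometric draw.

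Your sketched induction in the second half has the right flavour but misses the mechanism that produces the doubling of the log exponent. In the paper's inductive step, the induction hypothesis is invoked \emph{twice}: once with target block $A_\ell^{(j)}$, giving $|{\cal K}(\overline R_1,\dots,\overline R_{\ell'-1},A_\ell^{(j)})|\le c'(t_{\ell'-1}/m_\ell)\log^{2^{\ell'-2}}n$ and hence a bound on the number $B_1$ of marked blocks of $A_{\ell'}$; and once more with target block $A_{\ell'}^{(j')}$, union-bounded over all $j'\in[m_{\ell'}]$, giving a per-block contribution bound of $c''\log^{2^{\ell'-2}}n$. The final estimate $Z\le B_2\cdot c''\log^{2^{\ell'-2}}n$ then multiplies the two log-factors, yielding $\log^{2^{\ell'-2}}\cdot\log^{2^{\ell'-2}}=\log^{2^{\ell'-1}}$; your phrase ``contributing another log-to-a-power factor'' gestures at this but never isolates the two distinct applications of the IH. Two smaller factor errors: in your base case you claim $\Pr[Z\ge c\log n]\le n^{-\kappa}$, but $\mu=\Theta(t_1/m_\ell)$ is polynomially large whenever $\ell>2$, so that claim is false and one should only aim for $\Pr\bigl[Z\ge c\,(t_1/m_\ell)\log n\bigr]\le n^{-\kappa}$, which is all the lemma needs; and your estimate $\widetilde{O}(t_{\ell'-1})$ for the number of marked indices in $A_{\ell'}$ misses the $1/m_\ell$ factor coming from the fact that $A_\ell^{(j)}$ is a single block of size $n/m_\ell$, not all of $A_\ell$.
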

\begin{proof}
	The proof proceeds by induction on $\ell'$.

\noindent\textbf{Base case:}
For $\ell'=1$, we have $R_1$ is a uniform random subset of $[m_1]$ of size $\Theta(t_1)$ (for this proof, we ignore the partition of $R_1$ into $(R_1(S_1))_{S_1}$). Since $A_1$ has been partitioned the trivial way, where each block contains a single element, this means that $\overline{R}_1$ is a uniform random subset of $A_1$ of size $\Theta(t_1)$ as well. Hence, for any $\ell>1$ and fixed $A_\ell^{(j)}$, $Z=|{\cal K}(\overline{R}_1, A_{\ell}^{(j)})|$ is a hypergeometric random variable, where we draw $R_1$ from $[m_1]$ and where we consider any $j_1 \in R_1$ marked whenever the unique element in $A_1^{(j_1)}$ is part of a collision in ${\cal K}(\overline{R}_1, A_{\ell}^{(j)})$. This means $Z$ has parameters $N=m_1= \Theta(n)$, $K=|{\cal K}(A_1, A_{\ell}^{(j)})|$ and $d=|R_1|= \Theta(t_1)$. As mentioned in \sec{k-dist-assumptions}, we may assume for any $A_{\ell}^{(j)}$ that $K = \Theta\left(\abs{A_{\ell}^{(j)}}\right) = \Theta(n/m_{\ell})$. Then for any constant $c$, we have $c(t_1/m_{\ell})\log n\geq 7Kd/N$ for sufficiently large $n$, so by \lem{hypergeo}:
$$ \Pr\left[\abs{{\cal K}(\overline{R}_1, A_{\ell}^{(j)})}\geq c\frac{t_1}{m_{\ell}}\log n\right]\leq e^{-c\frac{t_1}{m_{\ell}}\log n}=n^{-ct_1/m_{\ell}}.$$

\noindent Referring to \tabl{k-dist-setsizes}, we have
$$\frac{t_1}{m_{\ell}} = \Theta\left(\frac{t_1}{t_{\ell-1}}\right) \geq \Omega \left(\frac{t_1}{t_{1}}\right)= \Omega(1).$$
Hence, we can choose $c$ sufficiently large so that $n^{-ct_1/m_{\ell}}\leq n^{-\kappa}$, completing the base case.
	
\noindent\textbf{Induction step:}
Suppose that for some $\ell'-1 \in \{1,\dots,k-2\}$ and all $\ell \in \{\ell',\dots,k-1\}$ the lemma holds; i.e.~for any fixed $j \in [m_{\ell}]$ and for any $\kappa'$ there exists a constant $c'$ such that
$$\Pr\left[\abs{{\cal K}(\overline{R}_1,\dots,\overline{R}_{\ell'-1},A_{\ell}^{(j)})}\geq c'\frac{t_{\ell'-1}}{m_{\ell}}\log^{2^{\ell'-2}} (n)\right]\leq n^{-\kappa'},$$
where $R_1$ is a uniform random subset of $[m_1]$ of size $\Theta(t_1)$, and for all $\ell''\in\{2,\dots,\ell'-1\}$, $R_{\ell''}$ is a uniform random subset of $[m_{\ell''}]$ of size $\Theta(t_{\ell''})$.

Now consider $Z = |{\cal K}(\overline{R}_1,\dots,\overline{R}_{\ell'},A_{\ell}^{(j)})|$, where $\ell > \ell'$ and where $R_{\ell'}$ is a uniform random subset of $[m_{\ell'}]$ of size $\Theta(r_{\ell'})$, and $A_{\ell}^{(j)}$ is still fixed. It is important to remark that this does not imply that $\overline{R}_{\ell'}$ is uniformly random, so instead we look at the blocks of $A_{\ell'}$. We say that any block $A_{\ell'}^{(j')}$ of $A_{\ell'}$ is \emph{marked} if it collides with an $\ell'$-collision in ${\cal K}(\overline{R}_1,\dots,\overline{R}_{\ell'-1},A_{\ell}^{(j)})$, and we let $B_1$ be the random variable that counts the number of such marked blocks in $A_{\ell'}$. Let ${\cal E}_1$ be the event that $|{\cal K}(\overline{R}_1,\dots,\overline{R}_{\ell'-1},A_{\ell}^{(j)})| <  c'\frac{t_{\ell'-1}}{m_{\ell}}\log^{2^{\ell'-2}}$, which happens with probability at least $1-n^{-\kappa'}$, by the induction hypothesis. Assuming ${\cal E}_1$ directly implies an upper bound on $B_1$ of the form 
\begin{equation}\label{eq:upperbound-b}
	B_1\leq c'\frac{t_{\ell'-1}}{m_{\ell}}\log^{2^{\ell'-2}} (n).
\end{equation}
Next we introduce a hypergeometric random variable $B_2$ that counts the number of marked blocks of $R_{\ell'}$, which we draw uniformly from $\{A_{\ell'}^{(j')}\}_{j'\in [m_{\ell'}]}$, which has $B_1$ marked blocks. Conditioned on event ${\cal E}_1$, this means that $B_2$ has parameters $N = m_{\ell'}$, $K = B_1 \leq c'\frac{t_{\ell'-1}}{m_{\ell}}\log^{2^{\ell'-2}} (n)$ (see \eq{upperbound-b}) and $d=|R_{\ell'}|=\Theta(t_{\ell'})$.

To relate $Z = |{\cal K}(\overline{R}_1,\dots,\overline{R}_{\ell'},A_{\ell}^{(j)})|$ to $B_2$, we need to analyse what the effect is of any marked block in $R_{\ell'}$ on the number of collisions in $|{\cal K}(\overline{R}_1,\dots,\overline{R}_{\ell'},A_{\ell}^{(j)})|$. By the induction hypothesis we know that for any block $A_{\ell'}^{(j')}$ of $A_{\ell'}$ and each $\kappa''$, there exists a constant $c''$ such that:
$$\Pr\left[\abs{{\cal K}(\overline{R}_1,\dots,\overline{R}_{\ell'-1},A_{\ell'}^{(j')})}\geq c''\log^{2^{\ell'-2}} (n)\right]\leq n^{-\kappa''}.$$  
If $|{\cal K}(\overline{R}_1,\dots,\overline{R}_{\ell'-1},A_{\ell'}^{(j')})| <  c''\log^{2^{\ell'-2}} (n)$ for every block $A_{\ell'}^{(j')}$ of $A_{\ell'}$, which we denote by event ${\cal E}_2$, then any marked block that gets added to $R_{\ell'}$ results in at most $c''\log^{2^{\ell'-2}} (n)$ collisions in ${\cal K}(\overline{R}_1,\dots,\overline{R}_{\ell'},A_{\ell}^{(j)})$, so $Z \leq B_2c''\log^{2^{\ell'-2}}(n) $. This implies that
\begin{equation}\label{eq:Z-B}
	\Pr\left[Z \geq cc''\frac{t_{\ell'}}{m_{\ell}}\log^{2^{\ell'-1}} (n)\middle| {\cal E}_1 \wedge {\cal E}_2 \right] \leq \Pr\left[B_2 \geq c\frac{t_{\ell'}}{m_{\ell}}\log^{2^{\ell'-1}-2^{\ell'-2}} n\middle| {\cal E}_1\right].
\end{equation}

Since $B_2$ is a hypergeometric random variable, and $c\frac{t_{\ell'}}{m_{\ell}}\log^{2^{\ell'-1}-2^{\ell'-2}} n \geq 7dK/N$ for sufficiently large $c$,
we can use \lem{hypergeo} and $m_{\ell'} = \Theta\left(t_{\ell'-1}\right)$ to derive:
\begin{equation}\label{eq:prob-B-1}
\begin{aligned}
	\Pr\left[B_2 \geq c\frac{t_{\ell'}}{m_{\ell}}\log^{2^{\ell'-1}-2^{\ell'-2}} n\middle| {\cal E}_1\right] &\leq 
 e^{-c\frac{t_{\ell'}}{m_{\ell}}\log^{2^{\ell'-1}-2^{\ell'-2}} n} 
\leq n^{-c t_{\ell'}/m_{\ell}},
\end{aligned}
\end{equation}
since $2^{\ell'-1}-2^{\ell'-2}\geq 1$ whenever $\ell'\geq 2$. By \tabl{k-dist-setsizes}, we have
$$\frac{t_{\ell'}}{m_{\ell}} = \Theta\left(\frac{t_{\ell'}}{t_{\ell-1}}\right) \geq \Omega \left(\frac{t_{\ell'}}{t_{\ell'}}\right)= \Omega(1).$$
Hence, by \eq{Z-B} and \eq{prob-B-1}, we can choose $c$ sufficiently large such that:
\begin{equation}\label{eq:prob-B-2}
	\Pr\left[Z \geq cc''\frac{t_{\ell'}}{m_{\ell}}\log^{2^{\ell'-1}} (n)\middle| {\cal E}_1 \wedge {\cal E}_2\right] \leq n^{-ct_{\ell'}/m_{\ell}} \leq n^{-\kappa'}.
\end{equation}

\noindent The only thing left to do is to use the union bound to upper bound
\begin{equation}\label{eq:prob-events}
	\Pr\left[\neg \left({\cal E}_1 \wedge {\cal E}_2\right) \right] \leq  n^{-\kappa'} +  m_{\ell'}n^{-\kappa''} \leq n^{-\kappa'} + n^{-\kappa''+1}, 
\end{equation}
where in the final inequality we have used the assumption from the lemma that $m_{\ell'}\leq n$.
We can now combine \eq{prob-B-2} and \eq{prob-events}, to conclude that for any $\kappa$ we can choose $\kappa'>\kappa$ and $\kappa'' > \kappa+1$ and a constant $c_{\ell'}$ large enough such that:
\begin{align*}
	\Pr\left[Z \geq c_{\ell}\frac{t_{\ell'}}{m_{\ell}}\log^{2^{\ell'-1}} (n)\right] &\leq \Pr\left[Z \geq c_{\ell}\frac{t_{\ell'}}{m_{\ell}}\log^{2^{\ell'-1}} (n)\middle| {\cal E}_1 \wedge {\cal E}_2 \right] + \Pr\left[\neg \left({\cal E}_1 \wedge {\cal E}_2\right) \right] \\
	&\leq n^{-\kappa'} +  n^{-\kappa'} + n^{-\kappa''+1} \leq n^{-\kappa}.\qedhere
\end{align*}
\end{proof}

\begin{corollary}\label{cor:fault-prob}
For $\ell\in\{2,\dots,k-2\}$, let $v_R^{\ell-1}$ be chosen uniformly at random from $V_{\ell-1}$. If the partition of $A_{\ell+1}$ into $\bigcup_{j\in [m_{\ell+1}]}A_{\ell+1}^{(j)}$ is chosen by a $d$-wise independent permutation for $d=\log^{2^{k-1}}(n)$ (see \sec{k-dist-assumptions}), then for all $j\in [m_{\ell}]$,
	$$\Pr\left[ |{\cal K}(\overline{R}_1,\dots,\overline{R}_{\ell-1},A_{\ell}^{(j)},\overline{R}_{\ell+1})|\geq 1\right] \leq o(1),$$
and for any constant $\kappa$ there exists a constant $c$ such that:
	$$\Pr\left[ |{\cal K}(\overline{R}_1,\dots,\overline{R}_{\ell-1},A_{\ell}^{(j)},\overline{R}_{\ell+1})|\geq c\right] \leq n^{-\kappa},$$
	where we note that $|{\cal K}(\overline{R}_1,\dots,\overline{R}_{\ell-1},A_{\ell}^{(j)},\overline{R}_{\ell+1})|$ is an upper bound on the number of potential faults if we add the block $A_{\ell}^{(j)}$ to $\overline{R}_{\ell}$. 
\end{corollary}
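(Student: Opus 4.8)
The plan is to prove \cor{fault-prob} by applying \lem{setup-probability} with $\ell' = \ell-1$ and then bootstrapping up by one more ``level'' using the $d$-wise independence of the permutation defining the blocks of $A_{\ell+1}$, in essentially the same inductive style as the proof of \lem{setup-probability} itself. The quantity $|{\cal K}(\overline{R}_1,\dots,\overline{R}_{\ell-1},A_{\ell}^{(j)},\overline{R}_{\ell+1})|$ counts tuples $(i_1,\dots,i_{\ell-1},i_{\ell},i_{\ell+1})$ with all values equal, $i_{\ell}\in A_{\ell}^{(j)}$, $i_{\ell'}\in\overline{R}_{\ell'}$ for $\ell'<\ell$, and $i_{\ell+1}\in\overline{R}_{\ell+1}$; a fault occurs when adding $A_\ell^{(j)}$ only if there is an element of $A_\ell^{(j)}$ colliding both with the existing $(\ell-1)$-collisions in the data and with $\overline{R}_{\ell+1}$, so this quantity dominates the number of potential faults.

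First I would apply \lem{setup-probability} (with $\ell' = \ell - 1$ and the outer index ``$\ell$'' of the lemma set to $\ell+1$) to obtain that, with probability $\geq 1 - n^{-\kappa'}$, the set ${\cal K}(\overline{R}_1,\dots,\overline{R}_{\ell-1},A_{\ell+1}^{(j')})$ has size $O\!\left(\frac{t_{\ell-1}}{m_{\ell+1}}\log^{2^{\ell-2}}n\right)$ for a fixed block $A_{\ell+1}^{(j')}$; summing over the $t_{\ell+1}$ blocks in $\overline{R}_{\ell+1}$ (and using $m_{\ell+1} = \Theta(t_\ell)$ from \tabl{k-dist-setsizes}) bounds $|{\cal K}(\overline{R}_1,\dots,\overline{R}_{\ell-1},\overline{R}_{\ell+1})|$. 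Separately, I need to control, for a fixed index $i_\ell \in A_\ell^{(j)}$ (or a fixed value class), how many elements of $\overline{R}_{\ell+1}$ it collides with; but the key structural fact (already invoked in the discussion around \eq{k-dist-I-i} and \lem{k-dist-E1}) is that since the unique $k$-collision has one part in each $A_1,\dots,A_k$, no value appears $\geq k$ times in $A_1\cup\dots\cup A_k$, so each such $i_\ell$ has at most $k-2$ collisions anywhere, hence $O(1)$. The expected number of $i_\ell\in A_\ell^{(j)}$ that collide with \emph{both} the existing $(\ell-1)$-collision data and with $\overline{R}_{\ell+1}$ is then $O\!\left(\frac{|A_\ell^{(j)}|\cdot t_{\ell-1}\cdot r_{\ell+1}}{n^2}\right)$-ish; plugging in the asymptotics from \tabl{k-dist-setsizes} one checks this is $o(1)$, which gives the first bound via Markov's inequality (or directly via the hypergeometric tail bound of \lem{hypergeo}, since the relevant count is a hypergeometric-type random variable once we condition on the high-probability events controlling the sizes of the collision sets). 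The second bound then follows from the tail bound \cor{hypergeo} (or the first part of \lem{hypergeo} with $B = c \geq 7\mu$ when $\mu = o(1)$), choosing $c$ large enough relative to $\kappa$, combined with a union bound over the $O(1)$ high-probability conditioning events as in \eq{prob-events}.

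Concretely the steps are: (1) identify $|{\cal K}(\overline{R}_1,\dots,\overline{R}_{\ell-1},A_\ell^{(j)},\overline{R}_{\ell+1})|$ as the number of value classes that are simultaneously represented in the $(\ell-1)$-collision data, in $A_\ell^{(j)}$, and in $\overline{R}_{\ell+1}$, each contributing $O(1)$ by the no-$k$-collision-outside-$A_1\times\dots\times A_k$ assumption; (2) condition on the events ${\cal E}_1,{\cal E}_2$ (from \lem{setup-probability} applied at level $\ell-1$) that bound the sizes of ${\cal K}(\overline{R}_1,\dots,\overline{R}_{\ell-1},A_{\ell+1}^{(j')})$ and ${\cal K}(\overline{R}_1,\dots,\overline{R}_{\ell-1},A_\ell^{(j)})$ by ${\sf polylog}(n)$, each failing with probability $\leq n^{-\kappa''}$; (3) under this conditioning, bound the number of ``doubly-relevant'' value classes by a hypergeometric variable with mean $\mu = o(1)$ (using $m_{\ell+1}=\Theta(t_\ell)$, the block sizes $|A_{\ell+1}^{(j')}| = n/(km_{\ell+1})$, and the parameter relations in \tabl{k-dist-setsizes}); (4) apply \lem{hypergeo}/\cor{hypergeo} to get $\Pr[\geq 1] = o(1)$ and $\Pr[\geq c] \leq n^{-\kappa}$ for suitable $c$; (5) finish with a union bound over ${\cal E}_1,{\cal E}_2$ as in \eq{prob-events} of \lem{setup-probability}.

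The main obstacle I anticipate is step (3): making the ``at most $k-2$ collisions per value'' structure interact cleanly with the non-uniformity of $\overline{R}_{\ell+1}$ (which, like $\overline{R}_{\ell}$ in the proof of \lem{setup-probability}, is a union of blocks chosen by a $d$-wise independent permutation rather than a uniform random set), so that the relevant counting variable is genuinely (conditionally) hypergeometric with the claimed small mean. This is exactly the subtlety that \lem{setup-probability} was built to handle one level at a time, so I expect to mirror its block-counting argument — introduce an auxiliary hypergeometric variable counting \emph{marked blocks} of $A_{\ell+1}$ among those chosen for $\overline{R}_{\ell+1}$, bound the per-block collision contribution by ${\sf polylog}(n)$ using the induction hypothesis of \lem{setup-probability}, and then translate back — rather than trying to argue directly about $\overline{R}_{\ell+1}$. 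Everything else is routine substitution of the asymptotic parameter values from \tabl{k-dist-setsizes} and bookkeeping with the tail bounds.
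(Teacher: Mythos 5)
Your outline matches the paper's up through the conditioning events and the appeal to \cor{hypergeo}, and your back-of-envelope estimate $\mu = o(1)$ has the right asymptotics. The gap is in how you resolve the obstacle you yourself flag in step (3). You propose to ``mirror the block-counting argument'' of \lem{setup-probability} --- an auxiliary hypergeometric variable $B_2$ counting marked blocks, times a per-block contribution bounded by ${\sf polylog}(n)$ --- ``rather than trying to argue directly about $\overline{R}_{\ell+1}$.'' But that machinery produces tail bounds at \emph{polylogarithmic} thresholds (that is the form of the conclusion of \lem{setup-probability}: $\Pr[Z\geq c\cdot\frac{t_{\ell'}}{m_{\ell}}\log^{2^{\ell'-1}}n]\leq n^{-\kappa}$), whereas \cor{fault-prob} needs a tail bound at a \emph{constant} threshold $c$. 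If $Z\leq B_2\cdot p$ with $p\in{\sf polylog}(n)$, then for any fixed constant $c$ the event $\{Z\geq c\}$ only implies $\{B_2\geq 1\}$, and $\Pr[B_2\geq 1]\leq \mu$ is a single fixed polynomial $n^{-\Omega(1)}$, independent of $c$; you cannot push it down to $n^{-\kappa}$ for arbitrary $\kappa$ by increasing $c$. (The first conclusion, $\Pr[Z\geq 1]\leq o(1)$, does survive your route, since it only needs $\Pr[B_2\geq 1]\leq o(1)$.) And if instead you tried to force the per-block contribution down to $O(1)$, you would need the very $d$-wise-independence-on-$\overline{R}_{\ell+1}$ argument you are trying to avoid.

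The move the paper actually makes is the one you ruled out: argue directly about $\overline{R}_{\ell+1}$. Let $\{i^1,\dots,i^K\}\subset A_{\ell+1}$ be the indices whose value matches some tuple in ${\cal K}(\overline{R}_1,\dots,\overline{R}_{\ell-1},A_{\ell}^{(j)})$. Conditioning, as you do, on $|{\cal K}(\overline{R}_1,\dots,\overline{R}_{\ell-1},A_{\ell}^{(j)})|\leq c'\log^{2^{\ell-1}}n$ (via \lem{setup-probability} with $\ell'=\ell-1$), and using the fact that each value appears fewer than $k$ times, gives $K\leq c'k\log^{2^{\ell-1}}n\leq \log^{2^{k-1}}n = d$. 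This inequality $K\leq d$ is the crucial observation that lets the $d$-wise independence of $\tau$ act directly: the placement of these $K$ indices across the blocks of $A_{\ell+1}$ is uniformly random, so the number $Z$ of them landing in $\overline{R}_{\ell+1}$ is itself hypergeometric with parameters $N=|A_{\ell+1}|=\Theta(n)$, $K$, and $|\overline{R}_{\ell+1}|=\Theta(nt_{\ell+1}/m_{\ell+1})$ draws, hence mean $\mu\leq n^{-\epsilon}$ for a fixed $\epsilon>0$. Then \cor{hypergeo} gives $\Pr[Z\geq c]\leq 2(e/c)^c\mu^c\leq n^{-\epsilon c}$, and the exponent scales with $c$, which is exactly what you need to beat any fixed $\kappa$. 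Without the $K\leq d$ observation, no amount of replaying \lem{setup-probability}'s block-counting will produce a constant-threshold, $\kappa$-arbitrary tail bound.
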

\begin{proof}
	By \lem{setup-probability}, for any $\kappa'$ there exists a $c'$ large enough such that for each fixed $j \in [m_{\ell}]$ we have
	$$\Pr\left[ |{\cal K}(\overline{R}_1,\dots,\overline{R}_{\ell-1},A_{\ell}^{(j)})|\geq c'\log^{2^{\ell-1}}(n)\right]\leq {n^{-\kappa'}}.$$
Let $i^{1},\dots,i^K\in A_{\ell+1}$ be all the points that collide with ${\cal K}(\overline{R}_1,\dots,\overline{R}_{\ell-1},A_{\ell}^{(j)})$, so w.h.p.~there are 
$$K < c'k\log^{2^{\ell-1}}(n) \leq \log^{2^{k-1}}(n)=d$$ 
of them, since each tuple in ${\cal K}(\overline{R}_1,\dots,\overline{R}_{\ell-1},A_{\ell}^{(j)})$ can collide with less than $k$ other indices. Let $\tau:[n]\rightarrow[n]$ be the $d$-wise independent permutation used to choose the partitions of $[n]$, as in \eq{k-dist-partition}. Then $\{\tau(i^1),\dots,\tau(i^K)\}$ is distributed as uniform set of size $K$, as long as $K\leq d$. In that case, the number of elements of $\{i^1,\dots,i^K\}$ that are included in $\overline{R}_{\ell+1}$, $Z$, is a hypergeometric random variable with $N=|A_{\ell+1}|=\Theta(n)$, $K< c'k\log^{2^{\ell-1}}(n)$ as above, and $d=|\overline{R}_{\ell+1}|=\Theta\left(\frac{nt_{\ell+1}}{m_{\ell+1}}\right)$ draws. 
From \tabl{k-dist-setsizes}, we have
	$$\frac{dK}{N}=\Theta\left(\frac{t_{\ell+1}}{m_{\ell+1}}\log^{2^{\ell-1}}(n)\right) = \Theta\left(n^{-\frac{2^{k-\ell-2}}{2^k-1}}\log^{2^{\ell}-1}(n)\right)\leq n^{-\Omega(1)}.$$
Thus, there is some constant $\epsilon$ such that $dK/N\leq n^{-\epsilon}$, so for any constant $c$, we can use the union bound and \cor{hypergeo} to get:
$$\Pr[Z\geq c]\leq 2e^c\left(cn^{\epsilon}\right)^{-c}=2\left(\frac{e}{c}\right)^c n^{-\epsilon c}+n^{-\kappa'}.$$
Hence, by choosing $\kappa' > \kappa$ and $c$ large enough we obtain $\Pr[Z\geq c] \leq n^{-\kappa}.$
Moreover, we also see that for $c=1$:
\begin{equation*}
\Pr[Z\geq 1]\leq O(n^{-\epsilon}+ n^{-\kappa'}) \leq o(1).\qedhere
\end{equation*}
\end{proof}

\begin{corollary}\label{cor:col-prob}
Let $v_{R}^{\ell}$ be chosen uniformly at random from $V_{\ell}$. 
	For any constant $\kappa$, there exist a constant $c$ such that:
	$$\Pr\left[|{\cal K}(\overline{R}_1,\dots,\overline{R}_{\ell})|\geq ct_{\ell}\log^{2^{\ell-2}}(n)\right]\leq n^{-\kappa}.$$
\end{corollary}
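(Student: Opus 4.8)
The plan is to sum the per-block collision bound from \lem{setup-probability} over all blocks making up $\overline{R}_1,\dots,\overline{R}_{\ell}$, and apply a union bound. Recall that ${\cal K}(\overline{R}_1,\dots,\overline{R}_{\ell})$ decomposes according to which blocks of $A_2,\dots,A_{\ell}$ the collision tuple touches: writing $\overline{R}_{\ell'}=\bigcup_{j_{\ell'}\in R_{\ell'}}A_{\ell'}^{(j_{\ell'})}$ for $\ell'\in\{2,\dots,\ell\}$, every tuple in ${\cal K}(\overline{R}_1,\dots,\overline{R}_{\ell})$ lies in ${\cal K}(\overline{R}_1,\dots,\overline{R}_{\ell-1},A_{\ell}^{(j_{\ell})})$ for exactly one $j_{\ell}\in R_{\ell}$. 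Hence
\begin{equation*}
|{\cal K}(\overline{R}_1,\dots,\overline{R}_{\ell})| = \sum_{j_{\ell}\in R_{\ell}}|{\cal K}(\overline{R}_1,\dots,\overline{R}_{\ell-1},A_{\ell}^{(j_{\ell})})|.
\end{equation*}
So it suffices to bound each of the $|R_{\ell}|=\Theta(t_{\ell})$ terms in this sum by $O(\log^{2^{\ell-2}}(n))$ with probability $1-n^{-\kappa-2}$ (say), and then union-bound over the $\Theta(t_{\ell})={\sf poly}(n)$ blocks.

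First I would invoke \lem{setup-probability} with $\ell'=\ell-1$ and the ambient layer index set to $\ell$. For a \emph{fixed} (non-random) $j\in[m_{\ell}]$ and $v_R^{\ell-1}$ uniform from $V_{\ell-1}$, that lemma gives, for any constant $\kappa'$, a constant $c'$ with
\begin{equation*}
\Pr\left[|{\cal K}(\overline{R}_1,\dots,\overline{R}_{\ell-1},A_{\ell}^{(j)})|\geq c'\frac{t_{\ell-1}}{m_{\ell}}\log^{2^{\ell-2}}(n)\right]\leq n^{-\kappa'},
\end{equation*}
and since $m_{\ell}=\Theta(t_{\ell-1})$ (see \tabl{k-dist-setsizes}), the threshold is $\Theta(\log^{2^{\ell-2}}(n))$, matching the desired form. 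The statement of \cor{col-prob} concerns $v_R^{\ell}$ uniform from $V_{\ell}$ rather than $v_R^{\ell-1}$ uniform from $V_{\ell-1}$; I would handle this by noting that the joint distribution of $(\overline{R}_1,\dots,\overline{R}_{\ell})$ induced by a uniform $v_R^{\ell}\in V_{\ell}$ has each $R_{\ell'}$ (marginally, and jointly with the lower-index sets in the relevant way) distributed exactly as when drawing $v_R^{\ell-1}\in V_{\ell-1}$ uniformly and then additionally a uniform $R_{\ell}$ of the appropriate size — the extra ``+1'' in one part of each tuple (the $\binom{\cdot}{\cdot}^+$ structure of \eq{V-ell-Ss}) only changes sizes by $\Theta(1)$, absorbed into the constant $c$. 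So \lem{setup-probability}'s hypothesis ``$v_R^{\ell-1}$ uniform'' can be fed the marginal on $(\overline{R}_1,\dots,\overline{R}_{\ell-1})$ coming from uniform $v_R^{\ell}$, and the bound on each summand goes through with $j$ playing the role of each $j_{\ell}\in R_{\ell}$ in turn. Since $R_{\ell}\subseteq[m_{\ell}]$ and each $j\in[m_{\ell}]$ is a valid fixed choice, and $|R_{\ell}|\leq m_{\ell}\leq n$, a union bound over all $j\in R_{\ell}$ (conditioning as needed so that $R_{\ell}$'s choice is independent of the event for each fixed block — or more simply union-bounding over all $j\in[m_{\ell}]$) yields
\begin{equation*}
\Pr\left[\exists j_{\ell}\in R_{\ell}:|{\cal K}(\overline{R}_1,\dots,\overline{R}_{\ell-1},A_{\ell}^{(j_{\ell})})|\geq c'\Theta(\log^{2^{\ell-2}}(n))\right]\leq m_{\ell}\,n^{-\kappa'}\leq n^{-\kappa'+1}.
\end{equation*}
Choosing $\kappa'=\kappa+1$ and $c$ a suitable multiple of $c'$ then gives $|{\cal K}(\overline{R}_1,\dots,\overline{R}_{\ell})|\leq \sum_{j_{\ell}\in R_{\ell}} c'\Theta(\log^{2^{\ell-2}}(n)) = |R_{\ell}|\cdot O(\log^{2^{\ell-2}}(n)) = O(t_{\ell}\log^{2^{\ell-2}}(n))$ except with probability $n^{-\kappa}$, as claimed. (For the base case $\ell=1$, $\overline{R}_1=R_1$ is itself uniform of size $\Theta(t_1)$, ${\cal K}(\overline{R}_1)$ is a subset of $R_1$, so the bound $|{\cal K}(\overline{R}_1)|\leq t_1$ is immediate; note $2^{\ell-2}=2^{-1}$ there, which the statement evidently interprets loosely, and the $\ell=2$ case follows directly from the base case of \lem{setup-probability}.)

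The main obstacle — really the only nontrivial point — is the distributional bookkeeping: making sure that when we apply \lem{setup-probability} to each summand, the marginal distribution of $(\overline{R}_1,\dots,\overline{R}_{\ell-1})$ under ``$v_R^{\ell}$ uniform from $V_{\ell}$'' genuinely matches (up to $\Theta(1)$ size perturbations that affect only constants) the ``$v_R^{\ell-1}$ uniform from $V_{\ell-1}$'' hypothesis, and that fixing a block $j_{\ell}$ and then asking whether $j_{\ell}\in R_{\ell}$ does not create a problematic correlation. This is resolved cleanly by union-bounding over \emph{all} $j\in[m_{\ell}]$ rather than over the random set $R_{\ell}$ — then every event concerns a fixed block and the lemma applies verbatim, at the cost of only a factor $m_{\ell}\leq n$, absorbed by bumping $\kappa'$. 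Everything else is the routine summation and constant-chasing sketched above.
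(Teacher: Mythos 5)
Your proof is correct and matches the paper's own argument: both decompose $|{\cal K}(\overline{R}_1,\dots,\overline{R}_{\ell})|$ as the sum over $j\in R_{\ell}$ of per-block collision counts, bound each summand via \lem{setup-probability} with $\ell'=\ell-1$ (using $m_{\ell}=\Theta(t_{\ell-1})$ so the threshold is $\Theta(\log^{2^{\ell-2}} n)$), and union-bound over the blocks, absorbing the $m_{\ell}\leq n$ (or $t_{\ell}\leq n$) factor by increasing $\kappa'$. The paper is terser — it simply cites the per-block bound and union-bounds over $j\in R_{\ell}$ — while you additionally spell out the match between the corollary's ``$v_R^{\ell}$ uniform from $V_{\ell}$'' and the lemma's ``$v_R^{\ell-1}$ uniform from $V_{\ell-1}$'' hypothesis, and note that union-bounding over all of $[m_{\ell}]$ sidesteps any conditioning subtlety; both points are harmless since the per-block event depends only on $R_1,\dots,R_{\ell-1}$.
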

\begin{proof}
	By \lem{setup-probability}, for each fixed $j \in [m_{\ell}]$ and constant $\kappa'>\kappa+1$ there exists a $c$ large enough such that
	$$\Pr\left[ |{\cal K}(\overline{R}_1,\dots,\overline{R}_{\ell-1},A_{\ell}^{(j)})|\geq c\log^{2^{\ell-2}}(n)\right]\leq {n^{-\kappa'}}.$$
By the union bound, the probability of this bad event happening for \emph{any} $j\in R_{\ell}$ is at most $t_{\ell} n^{-\kappa'}\leq n^{\kappa}$, since $t_{\ell}<n$, from which the statement follows.
\end{proof}

\subsubsection{The Transition Subroutines}\label{sec:k-dist-transitions}

In this section we show how to implement the transition map  $\ket{u,i}{\mapsto}\ket{v,j}$ for $(u,v)\in\overrightarrow{E}(G)$ with $i=f_u^{-1}(v)$ and $j=f_v^{-1}(u)$ (see \defin{QW-access}). We do this by exhibiting uniform (see \lem{uniform-alg}) subroutines ${\cal S}_1,\dots,{\cal S}_k,{\cal S}_{0,+},\dots,{\cal S}_{k-2,+}$ that implement the transitions in each of the edge sets $E_1$, \dots ,$E_k$, $E_0^+$, \dots, $E_{k-2}^+$ defined in \sec{k-dist-G-edges}, whose union is $\overrightarrow{E}(G)\setminus \tilde{E}$. In \cor{k-dist-transitions}, we will combine these to get a quantum subroutine (\defin{variable-time}) for the full transition map. 

\begin{lemma}\label{lem:k-dist-T1}
For $\ell \in \{0,\dots,k-2\}$, there is a subroutine ${\cal S}_{\ell,+}$ such that for all $(u,v)\in E_{\ell}^+$ with $i=f_u^{-1}(v)$ and $j=f_v^{-1}(u)$, ${\cal S}_{\ell,+}$ maps $\ket{u,i}$ to $\ket{v,j}$ with error 0 in complexity ${\sf T}_{u,v}={\sf T}_{\ell}^+=\widetilde{O}(1)$.
\end{lemma}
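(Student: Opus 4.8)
The statement to prove (\lem{k-dist-T1}) asserts that for each $\ell\in\{0,\dots,k-2\}$, the transition map along edges in $E_\ell^+\subset V_\ell\times V_\ell^+$ can be implemented exactly (error $0$) by a uniform quantum subroutine in polylogarithmic cost. This is the $k$-distinctness analogue of \lem{3-dist-T1}, and the proof should follow the same pattern.

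The plan is to recall from \fig{k-dist-index-sets} (or \tabl{k-dist-edges}) exactly what the transition map does on $E_\ell^+$. For an edge $(v^\ell_R, v^\ell_{R,j_{\ell+1}})\in E_\ell^+$ with $i = f^{-1}_{v^\ell_R}(v^\ell_{R,j_{\ell+1}}) = j_{\ell+1}$ and $j = f^{-1}_{v^\ell_{R,j_{\ell+1}}}(v^\ell_R) = {\leftarrow}$, the map should send
$$\ket{v^\ell_R, j_{\ell+1}}\mapsto \ket{v^\ell_{R,j_{\ell+1}},\leftarrow},$$
which, unpacking the vertex labels (see \eq{kVk1} and the definition of $V_\ell^+$), is
$$\ket{(\ell,R,D(R)),j_{\ell+1}}\mapsto \ket{((\ell,+),R,D(R),j_{\ell+1}),\leftarrow}.$$
So the only content of the transition is bookkeeping: move the edge-label $j_{\ell+1}$ out of the edge-label register and into the vertex register, update the stage tag in the first register from $\ell$ to $(\ell,+)$, and set the new edge-label register to the symbol $\leftarrow$. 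Crucially, the stored data $D(R)$ is left completely unchanged — no queries are made and no collision-tracking structures are modified at this stage; that work happens on edges in $E_{\ell+1}$, not $E_\ell^+$. Since $j_{\ell+1}$ is an element of $[m_{\ell+1}]$, which has $O(\log n)$-bit labels, and the stage tag is $O(1)$ bits, all of this can be done with $O(\log n)$ reversible elementary operations using the data structure properties from \sec{data} (just rearranging registers, no insertions or lookups into $D(R)$).

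Next I would observe that this implementation is \emph{uniform} in the sense of \lem{uniform-alg}: the sequence of gates performing the register rearrangement and stage-tag rewrite is a fixed, short, data-independent circuit (the gates $W_j$ and the maps $g,g'$ of \lem{uniform-alg} are trivially computable), so it satisfies the hypotheses of that lemma and hence $\sum_t \ket{t}\bra{t}\otimes U_t$ can be implemented in ${\sf polylog}(n)$ gates. This gives ${\sf T}_{u,v} = {\sf T}_\ell^+ = \widetilde{O}(1)$ for every $e\in E_\ell^+$, and the error is exactly $0$ because no approximate subroutine (no amplitude amplification, no quantum search) is invoked — the map is a permutation of basis states realized by an exact reversible circuit.

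I do not expect any real obstacle here; this is the easy base-type lemma in the sequence (\lem{k-dist-T1} through \cor{k-dist-transitions}), entirely parallel to \lem{3-dist-T1} and \lem{3-dist-T4}. The only point requiring a sentence of care is the uncomputation direction: we must check that the inverse map $\ket{v,j}\mapsto \ket{u,i}$ is equally cheap, i.e. that from $v^\ell_{R,j_{\ell+1}}$ we can recover $j_{\ell+1}$ and the stage tag $\ell$ — but this is immediate since $j_{\ell+1}$ is stored explicitly in the vertex label of any element of $V_\ell^+$ and the stage tag $(\ell,+)$ determines $\ell$. A brief remark that the whole thing is trivially reversible and therefore exact suffices, and then the lemma follows. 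In fact, as in the proof of \lem{3-dist-T4}, one could simply say the proof is identical in structure to that of \lem{k-dist-T1} for a single value of $\ell$ and note the uniform bound over all $\ell\in\{0,\dots,k-2\}$ holds since $k$ is constant.
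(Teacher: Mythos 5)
Your proof is correct and matches the paper's approach: the paper simply states the proof is identical to that of \lem{3-dist-T1}, and your write-up is precisely that argument carried over to general $\ell$ — moving the edge label into the vertex register, updating the stage tag, leaving $D(R)$ untouched, and noting the map is an exact reversible relabelling of basis states hence error-free and uniform. (The only tiny imprecision is that for $\ell=0$ the added label is $i_1\in A_1$ rather than $j_1\in[m_1]$, but this changes nothing about the complexity or the argument.)
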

\begin{proof} The proof is identical to that of \lem{3-dist-T1}.\end{proof}

\begin{lemma}\label{lem:k-dist-T2}
	There is a uniform subroutine ${\cal S}_1$ such that for all $(u,v)\in E_1$ with $i=f_u^{-1}(v)$ and $j=f_v^{-1}(u)$, ${\cal S}_1$ maps $\ket{u,i}$ to $\ket{v,j}$ with error 0 in complexity ${\sf T}_{u,v}={\sf T}_1=\widetilde{O}(1)$.
\end{lemma}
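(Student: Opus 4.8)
The plan is to mimic the proof of \lem{3-dist-T2}, since the $E_1$ transition for $k$-distinctness is structurally identical to the $E_1$ transition for $3$-distinctness: the only difference is that $S_1$ now ranges over $2^{[c_1]}\setminus\{\emptyset\}$ for $c_1=k-1$ rather than over $2^{\{1,2\}}\setminus\{\emptyset\}$, and that the data structure $D(R)$ carries more components (the databases $D_2(R),\dots,D_{k-1}(R)$ and the forward-collision-degree databases $C_1^{\rightarrow}(R),\dots,C_{k-2}^{\rightarrow}(R)$). First I would unwind the definitions: for $(u,v)=(v^0_{R,i_1},v^1_{R'})\in E_1$ with $v^1_{R'}\in V_1(S_1)$, the forward label is $i=S_1$ and the backward label is $j=i_1$, so ${\cal S}_1$ must implement
$$\ket{((0,+),R_1,\dots,R_{k-1},D(R),i_1),S_1}\;\mapsto\;\ket{(1,R_1',R_2,\dots,R_{k-1},D(R')),i_1},$$
where $R'=R^{S_1\leftarrow i_1}$ is obtained from $R$ by inserting $i_1$ into $R_1(S_1)$.

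The implementation proceeds in the same steps as in \lem{3-dist-T2}. First, insert $i_1$ into the set $R_1(S_1)$ stored in the data structure (a single data-structure insertion, ${\sf polylog}(n)$ by \sec{data}). Second, query $i_1$ to obtain $x_{i_1}$ and insert $(i_1,x_{i_1})$ into the $D_1(R)$ component of $D(R)$ (see \eq{k-dist-D-1}), which is again ${\sf polylog}(n)$. The key observation — exactly as for $3$-distinctness — is that we do \emph{not} attempt to update $D_2(R),\dots,D_{k-1}(R)$ or the $C_\ell^{\rightarrow}(R)$: since $(u,v)\in E_1$, by \lem{k-dist-E1} we have $S_1\cap{\cal I}(v^0_{R,i_1})=\emptyset$, and so by \lem{faults-equiv} $d^{\rightarrow}_{R'}(i_1)=0$, meaning adding $i_1$ to $R_1(S_1)$ introduces no fault; hence $D_2(R'),\dots,D_{k-1}(R')$ agree with $D_2(R),\dots,D_{k-1}(R)$, and $C_\ell^{\rightarrow}(R')=C_\ell^{\rightarrow}(R)$ for all $\ell$, so $D(R')$ is already correctly represented once $D_1$ has been updated and $i_1$ inserted into $R_1(S_1)$. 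Finally, to produce the backward label $i_1$ and uncompute the forward label $S_1$: move $i_1$ from the vertex register into the edge-label register, uncompute $S_1$ by checking which of the $2^{c_1}-1$ parts of $R_1$ now has size $t_1+1$ (a ${\sf polylog}(n)$ comparison, since $c_1=k-1$ is constant), and rewrite the stage label $\ket{(0,+)}\mapsto\ket{1}$. Every step costs ${\sf polylog}(n)$, so the total cost is ${\sf T}_1=\widetilde O(1)$ and the error is $0$. Since each constituent step is either a data-structure operation or a constant-depth gate applied according to information ($S_1$, the stage label, the size of the parts of $R_1$) that is computed in ${\sf polylog}(n)$ from the input, the subroutine is uniform in the sense of \lem{uniform-alg}, which is what is needed to invoke \cor{k-dist-transitions} later.

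There is no real obstacle here — this is a routine verification. The only point that requires a moment's care is the claim that \emph{not} updating the higher-order databases leaves $D(R')$ correct, which is precisely the content of the ``$E_1$ implies no fault'' direction of \lem{faults-equiv} (via \lem{k-dist-E1}); I would make sure to cite these explicitly, exactly as the $3$-distinctness proof invokes the definition of $E_1$. Everything else — data-structure insertions, recomputing $S_1$ from set sizes, relabelling — is bookkeeping identical to \lem{3-dist-T2}, so I would simply write ``The proof is essentially identical to that of \lem{3-dist-T2}'' and spell out only the fault-avoidance step.
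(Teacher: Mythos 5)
Your proposal is correct and follows the same route as the paper, which simply states that the proof is identical to that of Lemma 6.6 (the $3$-distinctness case). You have usefully made explicit the one point the paper leaves implicit, namely that the ``no fault'' property of $E_1$-edges (via Lemmas 6.8 and 6.9) is what licenses leaving $D_2,\dots,D_{k-1}$ and the $C_\ell^{\rightarrow}$ databases untouched.
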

\begin{proof}
The proof is identical to that of \lem{3-dist-T2}.
\end{proof}

\noindent We now move on to ${\cal S}_{\ell}$, for $\ell \in \{2,\dots,k-1\}$, which is somewhat more complicated.
For $(v^{\ell-1}_{R,j_{\ell}},v^{\ell}_{R'})\in E_{\ell}$, where $v^{\ell-1}_{R,j_{\ell}}\in V_{\ell-1}^+(S_1^*,\dots,S_{\ell-1}^*)$, meaning that $R'$ is obtained from $R$ by inserting $j_{\ell}$ into $R_{\ell}(\mu(S_1^*),\dots,\mu(S_{\ell-1}^*),S_{\ell})$ for some $S_{\ell}$, ${\cal S}_{\ell}$ should act as:
\begin{equation}
\begin{split}
	\ket{v^{\ell-1}_{R,j_{\ell}},S_{\ell}} &\mapsto \ket{v^{\ell}_{R'},j_{\ell}}\\
\equiv \ket{(({\ell-1},+),R,D(R),j_{\ell}),S_{\ell}} &\mapsto \ket{(\ell,R',D(R')),j_{\ell}}.
	\end{split}\label{eq:kT3}
\end{equation}
The complexity of this map depends on $|{\cal K}(\overline{R}_1,\dots,\overline{R}_{\ell-1},A_{\ell}^{(j_{\ell})})|$, which is less than $p_{\ell}\in{\sf polylog}(n)$ whenever $(v^{\ell-1}_{R,j_{\ell}},v^{\ell}_{R'})\in E_{\ell}\setminus \tilde{E}_{\ell}=E_{\ell}\setminus\tilde{E}$ (see \eq{k-dist-tilde-E-ell} and \eq{k-dist-tilde-E}). \lem{k-dist-k1} below describes how to implement this transition map, up to some error, in that case. For the case when $(v^{\ell-1}_{R,j_{\ell}},v^{\ell}_{R'})\in \tilde{E}_{\ell}\subset \tilde{E}$ we let the algorithm fail.

\begin{lemma}\label{lem:k-dist-k1}
	Fix any constant $\kappa$. For each $\ell\in\{2,\dots,k-1\}$, there is a uniform subroutine ${\cal S}_{\ell}$ that implements the transition map that maps $\ket{u,i}$ to $\ket{v,j}$ for all $(u,v)\in E_{\ell}\setminus\tilde{E}$ with error $O(n^{-\kappa})$, in complexity ${\sf T}_{u,v}={\sf T}_{\ell}=\widetilde{O}(\sqrt{n/m_{\ell}})$. 
\end{lemma}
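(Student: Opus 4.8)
The plan is to mirror the proof of \lem{3-dist-T3}, generalizing it from $\ell=2$ (the $3$-distinctness case) to arbitrary $\ell\in\{2,\dots,k-1\}$. The map in \eq{kT3} must perform three conceptual tasks: (i) move $j_\ell$ from the edge-label register into the vertex register, insert it into $R_\ell(\mu(S_1^*),\dots,\mu(S_{\ell-1}^*),S_\ell)$, and increment the stage register; (ii) update the data $D(R)$ to $D(R')$, which by \eq{D-ell} and \eq{k-dist-data} requires discovering all new $\ell$-collisions contributed by the newly-added block $A_\ell^{(j_\ell)}$, and correspondingly updating the forward-collision-degree databases $C^\rightarrow_{\ell-1}(R)$ (since adding $i_\ell\in A_\ell^{(j_\ell)}$ to $\overline{R}_\ell$ may create forward collisions for elements of $R_{\ell-1}$); and (iii) uncompute the label $S_\ell$ by checking which part of $R'_\ell$ has size $t_\ell+1$, leaving $j_\ell$ in the backward-label register. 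Tasks (i) and (iii) are pure bookkeeping on the data structure of \sec{data} and cost ${\sf polylog}(n)$, exactly as in \lem{3-dist-T2} and \lem{3-dist-T3}.

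The only non-trivial step is (ii), and the first thing I would do is bound the work it requires. An $\ell$-collision $(i_1,\dots,i_\ell,x_{i_1})$ that must be newly inserted into $D_\ell(R')$ has $i_\ell\in A_\ell^{(j_\ell)}$ and $(i_1,\dots,i_{\ell-1},x_{i_1})\in D_{\ell-1}(R)\subseteq{\cal K}(\overline{R}_1,\dots,\overline{R}_{\ell-1})$; hence the number of such collisions is at most $|{\cal K}(\overline{R}_1,\dots,\overline{R}_{\ell-1},A_\ell^{(j_\ell)})|$, which is $<p_\ell={\sf polylog}(n)$ precisely because $(u,v)\in E_\ell\setminus\tilde{E}$ (see \eq{k-dist-tilde-E-ell} and \eq{k-dist-tilde-E}). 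So the update amounts to: for each $i_\ell\in A_\ell^{(j_\ell)}$, query $x_{i_\ell}$ and look it up in $D_{\ell-1}(R)$ (which our data structure supports in ${\sf polylog}(n)$ by \sec{data}), and for each of the at most $p_\ell$ matches, insert the extended tuple into $D_\ell(R')$ and increment the relevant forward-collision counters in $C^\rightarrow_{\ell-1}$ (and $C^\rightarrow_{\ell-2}$ if needed, tracing back through the chain, though in fact only $C^\rightarrow_{\ell-1}$ changes since forward collisions of $j_{\ell-1}$ are exactly collisions into $\overline{R}_\ell$). Searching the block $A_\ell^{(j_\ell)}$, of size $|A_\ell^{(j_\ell)}|=\frac{n}{km_\ell}=\Theta(n/m_\ell)$, for the (at most $p_\ell$) colliding elements is done by quantum search (Grover/amplitude amplification), which is a uniform subroutine with error $O(n^{-\kappa})$ for any desired constant $\kappa$ and cost $O(\sqrt{n/m_\ell}\cdot{\sf polylog}(n))=\widetilde{O}(\sqrt{n/m_\ell})={\sf T}_\ell$. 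Since $p_\ell={\sf polylog}(n)$, repeating this $p_\ell$ times to extract all collisions (or using a bounded-multiplicity search primitive) keeps the total cost $\widetilde{O}(\sqrt{n/m_\ell})$.

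I would then note uniformity explicitly: each of these operations (superposition over a block, a ${\sf polylog}(n)$ round of amplitude amplification, data-structure insertions/lookups, comparisons on stage and size registers) is built from a ${\sf polylog}(n)$-size gate sequence applied to registers whose identities are computable in ${\sf polylog}(n)$ from the step counter, so ${\cal S}_\ell$ is a uniform quantum algorithm in the sense of \lem{uniform-alg}, and the combined select operation $\sum_t\ket{t}\bra{t}\otimes U_t$ costs ${\sf polylog}({\sf T}_{\max})$. Finally, the reversibility/non-confusion requirements of \defin{variable-time} hold because from the final state $\ket{v^\ell_{R'},j_\ell}$ we can recompute all intermediate data (the number of collisions found, hence the exact number of search iterations performed, hence ${\sf T}_{u,v}={\sf T}_\ell$), as the stage register and the set sizes determine everything. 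The main obstacle — really the only substantive point — is confirming that the data-structure update, including maintaining the forward-collision-degree databases $C^\rightarrow_{\ell-1}(R)$ consistently, can genuinely be carried out coherently in ${\sf polylog}(n)$ per collision using only the primitives of \sec{data}; once that is granted, the cost bound and the uniformity follow exactly as in the $3$-distinctness warm-up.
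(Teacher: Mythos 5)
Your proposal follows the same overall route as the paper's proof, and most of your account of what ${\cal S}_\ell$ must do — the bookkeeping on registers, the cost decomposition, the bound $|{\cal K}(\overline{R}_1,\dots,\overline{R}_{\ell-1},A_\ell^{(j_\ell)})|<p_\ell$ from $(u,v)\in E_\ell\setminus\tilde E$, the Grover-type search over the block of size $\Theta(n/m_\ell)$, the updates to $C^\rightarrow_{\ell-1}$, the uniformity and reversibility discussion — is consistent with the paper. However, there is one substantive gap, and it is exactly the kind of thing that ``mirroring the $3$-distinctness case'' silently hides.

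You assert without justification that updating $D(R)$ to $D(R')$ ``requires discovering all new $\ell$-collisions contributed by the newly-added block $A_\ell^{(j_\ell)}$,'' i.e.\ that $D_{\ell'}(R')=D_{\ell'}(R)$ for all $\ell'\neq\ell$. For $\ell'<\ell$ this is immediate (those sets depend only on $R_1,\dots,R_{\ell'}$, which are unchanged), but for $\ell'>\ell$ it is not automatic: adding $i_\ell\in A_\ell^{(j_\ell)}$ to $\overline R_\ell$ could in principle create a chain $(i_1,\dots,i_\ell,i_{\ell+1},\dots,i_{\ell'},x_{i_1})$ that must be recorded in $D_{\ell'}(R')$, and finding such chains would require additional (possibly expensive) searching in $\overline{R}_{\ell+1},\dots,\overline{R}_{\ell'}$. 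The paper closes this hole by observing that the existence of any such tuple would force $d^\rightarrow_{R'}(j_\ell)>0$ (via \eq{forward-collision-degree}), which contradicts $j_\ell\in L^-(v^\ell_{R'})$, the defining condition for $(v^{\ell-1}_{R,j_\ell},v^\ell_{R'})\in E_\ell$ in \eq{k-dist-E_ell}. In other words, the edge set $E_\ell$ is deliberately restricted to transitions that do not create forward collisions, and that restriction is precisely what makes the subroutine only need to update $D_\ell$. Your proof needs this step to be correct: without it, the claim that a single block search suffices is unjustified. The reason the $3$-distinctness warm-up does not surface this is that there $\ell=2=k-1$, so there simply is no $\ell'>\ell$; your strategy of generalizing from that case is exactly where the missing observation gets lost.
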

\begin{proof}
Suppose $u=v^{\ell}_{R,j_{\ell}}\in V_{\ell-1}^+(S_1^*,\dots,S_{\ell-1}^*)$. We can compute the values $S_1^*,\dots,S_{\ell-1}^*$ by checking which sets are larger, or just keeping track of these values in some convenient way, as they are chosen. Then to implement the map in \eq{kT3}, we need to insert $j_{\ell}$ into $R_{\ell}(\mu(S_1^*),\dots,\mu(S^*_{\ell-1}),S_{\ell})$ to obtain $R_{\ell}'$, update $D(R)$ to obtain $D(R')$, uncompute $S_{\ell}$ by checking which part of $R_{\ell}$ has size $t_{\ell}+1$, and increment the first register by mapping $\ket{\ell-1,+}\mapsto \ket{\ell}$. All of these take ${\sf polylog}(n)$ complexity, except for updating $D(R)$, which we now describe.

By \eq{k-dist-data}, we know that $D(R)$ consists of sets $\{D_{\ell'}(R)\}_{\ell'=1}^{k-1}$ where each $D_{\ell'}(R)$ contains a subset of ${\cal K}(\overline{R}_1,\dots,\overline{R}_{\ell'})$ (see \eq{D-ell}). When we go from $R$ to $R'$, we need to update each of these to account for any collisions involving indices $i_{\ell}\in A_{\ell}^{(j_{\ell})}$ that should be recorded in $D_{\ell'}(R')$. For $\ell'<\ell$, we can see that $D_{\ell'}(R)=D_{\ell'}(R')$, since $D_{\ell'}$ only depends on $R_1,\dots,R_{\ell'}$, which are unchanged. For $\ell'>\ell$, the existence of any $(i_1,\dots,i_{\ell-1},i_{\ell},i_{\ell+1},\dots,i_{\ell'},x_{i_1})\in D_{\ell'}(R')$ (which we would now need to find and add) implies $d_R^{\rightarrow}(j_{\ell})>0$ (see \eq{forward-collision-degree}), and this is not true for any $(v_{R,j_{\ell}}^{\ell-1},v_{R'}^{\ell})\in E_{\ell}$ (see \eq{k-dist-E_ell}). Thus, we only need to find any tuples $(i_1,\dots,i_{\ell},x_{i_1})$ such that $i_{\ell}\in A_{\ell}^{(j_{\ell})}$ that belong in $D_{\ell}(R')$. By \eq{D-ell}, such a tuple should be added to $D_{\ell}(R')$ if and only if $(i_1,\dots,i_{\ell-1},x_{i_1})\in D_{\ell-1}(R_{\ell-1}(\mu(S_1^*),\dots,\mu(S_{\ell-2}^*),S_{\ell-1}))$ such that $x_{i_{\ell}}=x_{i_1}$ for some $S_{\ell-1}$ such that $\mu(S_{\ell-1}^*)\in S_{\ell-1}$.

We search for $i_{\ell}\in A_{\ell}^{(j_{\ell})}$ such that if we look up $x_{i_{\ell}}$ in $D_{\ell-1}(R_{\ell-1}(\mu(S_1^*),\dots,\mu(S_{\ell-2}^*),S_{\ell-1}))$ for some $S_{\ell-1}$ containing $\mu(S_{\ell-1}^*)$, we get back a non-empty set of values $(i_1,\dots,i_{\ell-1},x_{i_{\ell}})$. For any such value found, we add $(i_1,\dots,i_{\ell},x_{i_{\ell}})$ to $D_{\ell}(R_{\ell}(\mu(S_1^*),\dots,\mu(S_{\ell-1}^*),S_{\ell-1}))$. This increments the value of $\bar{d}_{R'}^{\rightarrow}(i_{\ell-1})$, and so if $j_{\ell-1}\in R_{\ell-1}$ is such that $i_{\ell-1}\in A_{\ell-1}^{(j_{\ell-1})}$ (we can compute $j_{\ell-1}$ from $i_{\ell-1}$ in $\widetilde{O}(1)$, see \sec{k-dist-assumptions}), we have incremented the forward collision degree of $j_{\ell-1}$, $d_{R'}^{\rightarrow}(j_{\ell-1})$. We must therefore update the entry in $C^{\rightarrow}_{\ell-1}$ for $j_{\ell-1}$. We look up $j_{\ell-1}$, and if nothing is returned, insert $(j_{\ell-1},0)$. If $(j_{\ell-1},N)$ is returned, remove it and insert $(j_{\ell-1},N+1)$. 
We repeat this quantum search procedure, which is uniform, until we find $p_{\ell}={\sf polylog}(n)$ values $i_{\ell}$, or no new $i_{\ell}$ is returned for $\kappa\log n$ times. Since  we are assuming that the number of such collisions is less than $p_{\ell}$, since $(u,v)\in E_{\ell}\setminus\tilde{E}_{\ell}$, this finds all collisions with error $O(n^{-\kappa})$, in complexity 
	$\widetilde{O}\left(\sqrt{|A_{\ell}^{(j_{\ell})}|}\right)=\widetilde{O}(\sqrt{n/m_{\ell}})$.
\end{proof}

\noindent We have the following corollary of the results in \sec{k-dist-tail}.
\begin{corollary}\label{cor:ktilde-E}
	For any constant $\kappa$, there exists a choice of constants $\{c_{\ell}\}_{\ell=2}^{k-2}$ in the definition of $\tilde{E}_{\ell}'$ (\eq{k-dist-tilde-E-prime}) and polylogarithmic functions $\{p_{\ell}\}_{\ell=1}^{k-1}$ in the definition of $\tilde{E}_{\ell}$ (\eq{k-dist-tilde-E-ell}) large enough such that 
	$$\widetilde{\cal W} := \sum_{e\in \tilde{E}}\w_e = O(n^{-\kappa}{\cal W}(G)).$$ 
\end{corollary}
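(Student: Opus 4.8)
\textbf{Proof proposal for \cor{ktilde-E}.}

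The plan is to bound $\widetilde{\cal W}=\sum_{e\in\tilde E}\w_e$ by splitting $\tilde E=\bigcup_{\ell=2}^{k-1}(\tilde E_\ell\cup\tilde E_\ell')$ (see \eq{k-dist-tilde-E}) and estimating the contribution of each piece separately against the corresponding layer of ${\cal W}(G)$. Recall from \tabl{k-dist-edges} that every edge in $E_\ell$ (and hence in $\tilde E_\ell\subseteq E_\ell$ and in $\tilde E_\ell'$, which we also gave weight $\w_\ell$) has weight $\w_\ell=\sqrt{n/m_\ell}$, while each $E_\ell$ has $|E_\ell|=O(n|V_0|)$ by \eq{kEk1}. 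So it suffices to show $|\tilde E_\ell|\le n^{-\kappa'}|E_\ell|$ and $|\tilde E_\ell'|\le n^{-\kappa'}|E_\ell|$ for a sufficiently large constant $\kappa'$ depending on $\kappa$, since then $\sum_\ell\w_\ell(|\tilde E_\ell|+|\tilde E_\ell'|)\le n^{-\kappa'}\sum_\ell\w_\ell|E_\ell|\cdot O(1)=n^{-\kappa'}O({\cal W}(G))$, which gives the claim for $\kappa'$ large enough (the number of layers $k$ is constant).

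First I would bound $|\tilde E_\ell|$. An edge $(v^{\ell-1}_{R,j_\ell},v^\ell_{R'})$ lies in $\tilde E_\ell$ exactly when $|{\cal K}(\overline R_1,\dots,\overline R_{\ell-1},A_\ell^{(j_\ell)})|\ge p_\ell$ (see \eq{k-dist-tilde-E-ell}). Fix the layer. Since the relevant part of the vertex label $v^{\ell-1}_{R,j_\ell}\in V^+_{\ell-1}$ ranges uniformly over the same combinatorial data as a uniformly random vertex of $V_{\ell-1}$ together with a choice of $j_\ell\in[m_{\ell+1}]\setminus R_{\ell+1}$ — wait, more precisely $j_\ell$ is the newly-chosen block index, which for $\tilde E_\ell$ is the one being added to $R_\ell$; so $v^{\ell-1}_{R}$ is essentially uniform in $V_{\ell-1}$ and $j_\ell$ is a block not in $R_\ell$. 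The key input is \lem{setup-probability}: for $v^{\ell-1}_R$ uniform in $V_{\ell-1}$ and any fixed $j\in[m_\ell]$, and any constant $\kappa''$, there is a constant $c$ with $\Pr[|{\cal K}(\overline R_1,\dots,\overline R_{\ell-1},A_\ell^{(j)})|\ge c\,\frac{t_{\ell-1}}{m_\ell}\log^{2^{\ell-2}}(n)]\le n^{-\kappa''}$. Choosing $p_\ell$ to be a polylogarithmic function dominating $c\,\frac{t_{\ell-1}}{m_\ell}\log^{2^{\ell-2}}(n)$ (valid since $m_\ell=\Theta(t_{\ell-1})$ by \tabl{k-dist-setsizes}, so $t_{\ell-1}/m_\ell=\Theta(1)$), we get that the fraction of edges of $E_\ell$ with this large-collision property is at most $n^{-\kappa''}$; averaging over the fixed $j$ and over the at most $k$ layers, $|\tilde E_\ell|\le n^{-\kappa''}|E_\ell|$.

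Next I would bound $|\tilde E_\ell'|$, the edges $(v^{\ell-1}_{R,j_\ell},f_{v^{\ell-1}_{R,j_\ell}}(\{1\}))$ with ${\cal I}(v^{\ell-1}_{R,j_\ell})=[c_\ell]$ (see \eq{k-dist-tilde-E-prime}; note this is empty for $\ell=k-1$, so only $\ell\in\{2,\dots,k-2\}$ matters here). By definition of ${\cal I}$ (\eq{k-dist-cal-I}, \eq{k-dist-I-i}), ${\cal I}(v^{\ell-1}_{R,j_\ell})=[c_\ell]$ requires that the block $A_\ell^{(j_\ell)}$ contains, for each of the $c_\ell$ values $s_\ell\in[c_\ell]$, some index $i_\ell$ colliding with $\overline R_{\ell+1}(\dots,s_\ell,\dots)$ and satisfying $\mathbf C^\leftarrow$; in particular $|{\cal K}(\overline R_1,\dots,\overline R_{\ell-1},A_\ell^{(j_\ell)},\overline R_{\ell+1})|\ge c_\ell$. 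By \cor{fault-prob}, for $v^{\ell-1}_R$ uniform in $V_{\ell-1}$ and any fixed $j\in[m_\ell]$, for every constant $\kappa''$ there is a constant $c$ with $\Pr[|{\cal K}(\overline R_1,\dots,\overline R_{\ell-1},A_\ell^{(j)},\overline R_{\ell+1})|\ge c]\le n^{-\kappa''}$. So choosing the constant $c_\ell$ in the definition of $\tilde E_\ell'$ to be this $c$, we obtain that the fraction of edges in $E_\ell$ (equivalently, of pairs $(v^{\ell-1}_{R,j_\ell})$, since $\tilde E_\ell'$ has exactly one edge per such pair with the bad property, and $|E_\ell|=\Theta(n|V_0|)=\Theta(|V^+_{\ell-1}|)$) belonging to $\tilde E_\ell'$ is $O(n^{-\kappa''})$, hence $|\tilde E_\ell'|\le n^{-\kappa''}|E_\ell|$ after adjusting constants.

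Finally, combining the two bounds: choose $\kappa''=\kappa+1$, then $\widetilde{\cal W}=\sum_{\ell=2}^{k-1}\w_\ell(|\tilde E_\ell|+|\tilde E_\ell'|)\le 2kn^{-\kappa''}\max_\ell(\w_\ell|E_\ell|)=O(n^{-\kappa}\cdot\max_\ell \w_\ell|E_\ell|)=O(n^{-\kappa}{\cal W}(G))$, using ${\cal W}(G)\ge\sum_\ell\w_\ell|E_\ell|/O(1)$. The main obstacle I anticipate is the bookkeeping needed to argue that a uniformly random edge of $E_\ell$ (or of $\tilde E_\ell'$'s superset) induces a distribution on the label $v^{\ell-1}_R$ that is close enough to uniform on $V_{\ell-1}$ to invoke \lem{setup-probability} and \cor{fault-prob} — one has to be careful that conditioning on a fixed $j_\ell\notin R_\ell$ (rather than $j_\ell$ uniform) does not skew the conditional distribution of $\overline R_1,\dots,\overline R_{\ell-1}$, which it does not since the partition-defining permutation $\tau$ treats all blocks symmetrically and $R_1,\dots,R_{\ell-1}$ are sampled independently of the choice of $j_\ell$; spelling this out cleanly is the one genuinely fiddly point.
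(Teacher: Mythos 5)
Your proposal is correct and follows essentially the same route as the paper: bound $|\tilde E_\ell|$ via \lem{setup-probability} with a sufficiently large polylogarithmic $p_\ell$, bound $|\tilde E_\ell'|$ via \cor{fault-prob} with a sufficiently large constant $c_\ell$, and sum the weighted edge counts against ${\cal W}(G)$ using $|V_{\ell-1}^+|=\Theta(|E_\ell|+|\tilde E_\ell'|)$ (constant out-degree) and the fact that the number of layers is constant. The ``fiddly point'' you flag about the conditional distribution of $R_1,\dots,R_{\ell-1}$ given a fixed $j_\ell\notin R_\ell$ is real but resolves exactly as you say, since $R_\ell$ is sampled independently of $R_1,\dots,R_{\ell-1}$; the paper does not even bother to remark on it. The only stylistic difference is that you compare both $|\tilde E_\ell|$ and $|\tilde E_\ell'|$ against $|E_\ell|$ and finish with a $\max_\ell$ estimate, whereas the paper compares against $|V_{\ell-1}^+|$ and finishes with a sum; these are equivalent up to constants.
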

\begin{proof}
Fix $\ell\in\{2,\dots,k-1\}$.
Let $v_R^{\ell-1}$ be uniform random on $V_{\ell-1}$ (so $R$ is uniform on its support, see \eq{V-ell-Ss} and \eq{V-ell-union}). Then by \lem{setup-probability}, for any $j_{\ell}\in [m_{\ell}]$, if $p_{\ell}\in{\sf polylog}(n)$ is sufficiently large, 
\begin{equation}
\Pr[|{\cal K}(\overline{R}_1,\dots,\overline{R}_{\ell-1},A_{\ell}^{(j_{\ell})})|\geq p_{\ell}]\leq n^{-\kappa}.\label{eq:k-dist-pl}
\end{equation}
Referring to \eq{k-dist-tilde-E-ell}, this implies that 
$$|\tilde{E}_{\ell}|\leq n^{-\kappa}|\{(u,v):u\in V_{\ell-1}^+, v\in L^+(u)\}|=n^{-\kappa}|V_{\ell-1}^+|.$$

For $\ell\in\{2,\dots,k-2\}$, by \cor{fault-prob}, if $c_{\ell}$ is a sufficiently large constant,  
$$\Pr\left[|{\cal K}(\overline{R}_1,\dots,\overline{R}_{\ell-1},A_{\ell}^{(j_{\ell})},\overline{R}_{\ell+1})| \geq c_{\ell} \right]\leq n^{-\kappa},$$
which implies that $|{\cal I}(v_{R,j_{\ell}})|\geq c_{\ell}$ with probability at most $n^{-\kappa}$ (see \eq{k-dist-I-i} and \eq{k-dist-cal-I}). 
Referring to \eq{k-dist-tilde-E-prime}, this implies that
$$|\tilde{E}_{\ell}'|\leq n^{-\kappa}|\{(u,v):u\in V_{\ell-1}^+, v\in L^+(u)\}|=n^{-\kappa}|V_{\ell-1}^+|.$$
Since $\tilde{E}_{k-1}=\emptyset$, the above also holds for $\ell=k-1$.

Combining these, and using
the definition of $\tilde{E}$ in \eq{k-dist-tilde-E},
and that $|V_{\ell-1}^+|=\Theta(|E_{\ell}|+|\tilde{E}_{\ell}'|)$, since each vertex in $V_{\ell-1}^+$ has constant out-degree,
we have:
\begin{align*}
\widetilde{\cal W} &= \sum_{\ell=2}^{k-1}\w_{\ell}|V_{\ell-1}^+|
\leq 2n^{-\kappa}\sum_{\ell=2}^{k-1}\w_{\ell}O(|E_{\ell}|+|\tilde{E}_{\ell}'|) 
=O\left(n^{-\kappa}\sum_{e\in\overrightarrow{E}(G)}\w_{e}\right)
= O(n^{-\kappa}{\cal W}(G)).\qedhere
\end{align*}
\end{proof}

\begin{lemma}\label{lem:k-dist-final}
	There is a uniform subroutine ${\cal S}_k$ such that for all $(u,v)\in E_k$ with $i=f_u^{-1}(v)$ and $j=f_v^{-1}(u)$, ${\cal S}_k$ maps $\ket{u,i}$ to $\ket{v,j}$ with error 0 in complexity ${\sf T}_{u,v}={\sf T}_k=\widetilde{O}(1)$.
\end{lemma}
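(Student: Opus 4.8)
\textbf{Proof proposal for \lem{k-dist-final}.}

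The plan is to observe that $E_k \subset V_{k-1}\times V_k$ is structurally the simplest edge set in the whole construction: each edge has the form $(v^{k-1}_R, v^k_{R,i_k})$ for some $i_k \in A_k$, and by the definitions in \sec{k-dist-G-edges} the label $i = f_{v^{k-1}_R}^{-1}(v^k_{R,i_k}) = i_k \in A_k = L^+(v^{k-1}_R)$ (up to the implicit $\rightarrow$ tag), while the backward label $j = f_{v^k_{R,i_k}}^{-1}(v^{k-1}_R) = {\leftarrow}$, the unique element of $L^-(v^k_{R,i_k})$. So the transition map we must implement is
$$\ket{v^{k-1}_R, i_k} \mapsto \ket{v^k_{R,i_k}, {\leftarrow}}
\;\equiv\;
\ket{(k-1, R, D(R)), i_k} \mapsto \ket{(k, R, D(R), i_k), {\leftarrow}}.$$
Crucially, the data $D(R)$ does not change in this transition — moving from $V_{k-1}$ to $V_k$ only records a candidate index $i_k \in A_k$ for completing a $(k-1)$-collision to a $k$-collision, and no query to $i_k$ and no update to $D(R)$ is performed here (the querying of $i_k$ and the lookup in $D_{k-1}(R)$ happens only in the \textbf{Checking Subroutine}, cf.~\eq{k-dist-M} and \eq{k-dist-C}, not in the transition). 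Hence the map amounts to: move $i_k$ from the edge-label register into the vertex register, write the label $\leftarrow$ into the (now) edge-label register, and increment the stage counter $\ket{k-1}\mapsto\ket{k}$.

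First I would note that this is exactly the same bookkeeping performed in \lem{3-dist-T1} (transitions in $E_0^+$) and \lem{k-dist-T1} (transitions in $E_{\ell}^+$): in all these cases the data structure content is untouched, and only a constant amount of classical register manipulation — relabeling, shifting an index between registers, incrementing a counter — is required. Each such operation acts on ${\sf polylog}(n)$ qubits and can be done exactly, so there is no error ($\epsilon_{u,v}=0$) and the cost is ${\sf T}_k = \widetilde{O}(1) = {\sf polylog}(n)$. Moreover the subroutine is \emph{uniform} in the sense of \lem{uniform-alg}: the sequence of elementary gates implementing it depends only on the stage label $k-1$ (which is part of the vertex description) and not on $R$ or $i_k$, so the maps $g, g'$ of \lem{uniform-alg} are trivial to describe and compute, and $\sum_t \ket{t}\bra{t}\otimes U_t$ is implementable in ${\sf polylog}(n)$ gates. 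Thus the proof is simply: ``The proof is identical to that of \lem{k-dist-T1}'' (equivalently \lem{3-dist-T1}), with the only cosmetic difference being that the stage increment is $\ket{k-1}\mapsto\ket{k}$ rather than $\ket{\ell}\mapsto\ket{(\ell,+)}$.

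There is essentially no obstacle here; the lemma is a routine base case for the induction-free part of the transition analysis. The one point worth double-checking — and the only thing I would spell out rather than cite — is that $i_k$ indeed plays the role of a valid edge label, i.e.\ that the forward label set $L^+(v^{k-1}_R) = A_k$ has \emph{all} its labels carrying nonzero weight (so there is no analogue of the ${\cal I}(\cdot)$-pruning seen in $E_1, \dots, E_{k-1}$): this holds because $\w_e = \w_k = 1$ for every $e\in E_k$ by the definition of $E_k$ in \sec{k-dist-G-edges}, and $V_k = V_{\sf M}$ contains a vertex for every $i_k\in A_k$ regardless of whether it completes a collision. Given that, the reversibility of the index-shift-and-relabel operation is immediate, and ${\cal S}_k$ is well-defined. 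Together with \lem{k-dist-T1}, \lem{k-dist-T2}, and \lem{k-dist-k1}, this will feed into a corollary (analogous to \cor{3-dist-transitions}) combining all edge-transition subroutines via \lem{uniform-combine} into a single quantum subroutine in the sense of \defin{variable-time}, which is what \thm{full-framework} requires.
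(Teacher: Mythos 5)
Your proposal is correct and takes essentially the same approach as the paper, which simply states that the proof is identical to that of \lem{3-dist-T1}. Your additional observations — that $D(R)$ is unchanged, that all of $L^+(v_R^{k-1})=A_k$ carries nonzero weight, and that uniformity is trivial — are accurate elaborations of why that reference suffices.
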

\begin{proof}
The proof is identical to that of \lem{3-dist-T1}.
\end{proof}

\noindent We combine the results of this section into the following.
\begin{corollary}\label{cor:k-dist-transitions}
Let $\kappa$ be any constant. There is a quantum subroutine (\defin{variable-time}) that implements the full transition map with errors $\epsilon_e\leq n^{-\kappa}$ for all $e\in \overrightarrow{E}(G)\setminus\tilde{E}$, and times: ${\sf T}_e={\sf T}_1=\widetilde{O}(1)$ for all $e\in E_1$; ${\sf T}_e={\sf T}_{\ell}^+=\widetilde{O}(1)$ for all $e\in E_{\ell}^+$, for all $\ell\in\{0,\dots,k-2\}$; ${\sf T}_e={\sf T}_{\ell}=\widetilde{O}(\sqrt{n/m_{\ell}})$ for all $e\in E_{\ell}$, for all $\ell\in\{2,\dots,k-1\}$; and ${\sf T}_e={\sf T}_k=\widetilde{O}(1)$ for all $e\in E_k$. 
\end{corollary}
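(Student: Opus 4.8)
\textbf{Proof proposal for \cor{k-dist-transitions}.}

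The plan is to package the individual edge-transition subroutines built in \lem{k-dist-T1}, \lem{k-dist-T2}, \lem{k-dist-k1}, and \lem{k-dist-final} into a single quantum subroutine in the sense of \defin{variable-time}, exactly as was done for $3$-distinctness in \cor{3-dist-transitions} via \lem{uniform-combine}. First I would observe that the edge sets $E_0^+,\dots,E_{k-2}^+$, $E_1$, $E_2,\dots,E_{k-1}$, and $E_k$ together partition $\overrightarrow{E}(G)\setminus\tilde{E}$ (this is immediate from the definition of $\overrightarrow{E}(G)$ in \sec{k-dist-G-edges} and of $\tilde{E}$ in \eq{k-dist-tilde-E}, since $\tilde{E}_\ell\subset E_\ell$ is the only overlap and is explicitly excluded). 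Each edge $(u,v)$ in $\overrightarrow{E}(G)\setminus\tilde{E}$ lies in exactly one of these sets, and the first register of the vertex label $u$ records which $V_\ell$ or $V_\ell^+$ it belongs to, hence which subroutine applies; this is precisely the ``$x\in X'\times\{0\}$ uses a control'' situation mentioned after \defin{variable-time}, so Item~\ref{item:non-confusion} of \defin{variable-time} will hold, with ${\cal Z}_{u,i}$ being the set of configurations reachable while running the relevant subroutine on input $(u,i)$.

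The key steps, in order: (i) note that the subroutines ${\cal S}_{\ell,+}$, ${\cal S}_1$, ${\cal S}_\ell$ (for $\ell\in\{2,\dots,k-1\}$), and ${\cal S}_k$ are each \emph{uniform} in the sense of \lem{uniform-alg} — this is asserted in their respective lemma statements and follows from the fact that each is assembled from ${\sf polylog}(n)$-size gadgets (data-structure insert/remove/query operations from \sec{data}, $\tau,\tau^{-1}$ evaluations, and, for ${\cal S}_\ell$, a quantum search whose uniformity is noted in \lem{k-dist-k1}); (ii) pad all subroutines with identities to a common number of steps, ${\sf T}_{\max}=\widetilde{O}(\sqrt{n/\min_\ell m_\ell}) = \widetilde{O}(\sqrt n)$, using \lem{uniform-combine} with $c = 2k-1$ (one index per edge set), which produces a single uniform algorithm implementing $\sum_j \ket{j}\bra{j}\otimes {\cal S}_j$ in variable times $O({\sf T}_j)$ and errors $\epsilon_j$; (iii) check that the resulting object satisfies all four conditions of \defin{variable-time}: Item~\ref{item:select} by uniformity and \lem{uniform-alg}, Item~\ref{item:error} with $\epsilon_e\le n^{-\kappa}$ on $\overrightarrow{E}(G)\setminus\tilde E$ by \lem{k-dist-k1} (the only source of nonzero error, since ${\cal S}_{\ell,+},{\cal S}_1,{\cal S}_k$ have error $0$), Item~\ref{item:known-times} because the times ${\sf T}_e$ depend only on which edge set $e$ lies in, which is read off from the first register of $u$ (and of $v$, by symmetry of the edge-set membership), and Item~\ref{item:non-confusion} as discussed above; (iv) read off the claimed values of ${\sf T}_e$ from the individual lemmas and \tabl{k-dist-edges}.

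I do not expect a serious obstacle here — this corollary is a bookkeeping assembly of already-proved components, entirely parallel to \cor{3-dist-transitions}. The only point requiring a little care is verifying Item~\ref{item:non-confusion} of \defin{variable-time}: one must confirm that the configuration spaces ${\cal Z}_{u,i}$ for distinct input edges remain orthogonal throughout execution, which holds because each subroutine keeps the source vertex label $u$ (or a reversible transform of it that still determines $u$) in a register, plus a step counter, so the full configuration always determines $(u,i)$. I would state this explicitly and then invoke \lem{uniform-combine} to conclude, mirroring the one-line proof of \cor{3-dist-transitions}.
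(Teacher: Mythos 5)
Your approach is exactly the paper's: the proof of \cor{k-dist-transitions} is a one-liner that combines \lem{k-dist-T1}, \lem{k-dist-T2}, \lem{k-dist-k1}, and \lem{k-dist-final} via \lem{uniform-combine}, precisely as you propose. The additional bookkeeping you spell out (partition of $\overrightarrow{E}(G)\setminus\tilde{E}$, verification of the four items of \defin{variable-time}) is correct and implicit in the paper's invocation of \lem{uniform-combine}.
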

\begin{proof}
This follows from combining \lem{k-dist-T1}, \lem{k-dist-T2}, \lem{k-dist-k1} and \lem{k-dist-final} using  \lem{uniform-combine}.
\end{proof}

\subsubsection{Initial State and Setup Cost}

The initial state will be the uniform superposition over $V_0$: 
\begin{equation*}
	\ket{\sigma} := \sum_{v^0_{R} \in V_0}\frac{1}{\sqrt{|V_0|}}\ket{v^0_{R}}.
\end{equation*}

\begin{lemma}\label{lem:setupk}
	The state $\ket{\sigma}$ can be generated with error $O(n^{-\kappa})$ for any constant $\kappa$ in complexity
	\begin{equation*}
		{\sf S}=\widetilde{O}\left(t_1+t_2\sqrt{\frac{n}{t_1}} + \cdots + t_{k-1}\sqrt{\frac{n}{t_{k-2}}}\right).
	\end{equation*}
\end{lemma}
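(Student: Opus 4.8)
The plan is to follow the same template as the proof of \lem{setup} for $3$-distinctness, generalized to $k-1$ stages. First I would describe the state-generation procedure stage by stage. Begin by taking a uniform superposition over all tuples $R_1\in\binom{A_1}{t_1^{(2^{c_1}-1)}}$, and for each $\ell\in\{2,\dots,k-1\}$, over all $R_\ell\in\binom{[m_\ell]}{t_\ell^{(c_1\cdots c_{\ell-1}(2^{c_\ell}-1))}}$, all stored in the coherent data structure of \sec{data}. Each such superposition can be generated in $\widetilde{O}(t_\ell)$ complexity using the fact (from \sec{k-dist-assumptions}) that a uniform superposition over any block $A_\ell^{(j)}$ or over $A_\ell$ can be prepared in ${\sf polylog}(n)$ time. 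Then query everything in $R_1$ to build $D_1(R)$, at cost $\widetilde{O}(t_1)$. Next, proceeding through $\ell=2,\dots,k-1$ in order, build $D_\ell(R)$ by searching $\overline{R}_\ell$ for all indices $i_\ell$ that extend a tuple already stored in $D_{\ell-1}(R)$ to a fresh $\ell$-collision, using (uniform) quantum search. As in \lem{setup}, we cap this search: we stop once we have found $c_\ell' t_\ell$ such collisions for a suitable constant $c_\ell'$, so the search over $\overline{R}_\ell$ (which has size $\Theta(t_\ell\tfrac{n}{m_\ell})=\Theta(t_\ell\tfrac{n}{t_{\ell-1}})$, since $m_\ell=\Theta(t_{\ell-1})$) costs $\widetilde{O}\bigl(\sqrt{(\text{number of collisions})\cdot|\overline{R}_\ell|}\bigr)=\widetilde{O}\bigl(t_\ell\sqrt{n/t_{\ell-1}}\bigr)$. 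Finally, populate the forward-collision-degree databases $C^{\rightarrow}_\ell(R)$ from the $D_\ell(R)$ we have built; this is a bookkeeping pass over the (polynomially many) stored collisions and costs only $\widetilde{O}(\sum_\ell t_\ell)$.

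Second, I would bound the error introduced by the caps. For a uniform $R$, the number of collisions found at stage $\ell$, $Z_\ell=|{\cal K}(\overline{R}_1,\dots,\overline{R}_\ell)|$ (or the relevant subset stored in $D_\ell(R)$), concentrates around its mean $\mu_\ell=\Theta(t_\ell)$; this is exactly the content of \cor{col-prob}, which gives $\Pr[Z_\ell\ge c t_\ell\log^{2^{\ell-2}}n]\le n^{-\kappa}$ for any $\kappa$ and suitably large constant $c$. Taking $\kappa$ large enough (larger than the exponent we care about) and $c_\ell'$ accordingly, the probability that any of the $k-1$ caps is actually hit is $O(n^{-\kappa})=o(n^{-\kappa'})$ for our target $\kappa'$, so the generated state is $n^{-\kappa'}$-close to $\ket{\sigma}$ in norm, for any desired constant $\kappa'$. (Here I should be slightly careful that \cor{col-prob} is stated for $v_R^\ell$ uniform on $V_\ell$, whereas the setup is choosing $R$ with the $V_0$ shape — but the distributions of the underlying sets $\overline{R}_1,\dots,\overline{R}_\ell$ match up to the irrelevant partition into parts, so the bound applies; alternatively one invokes \lem{setup-probability} directly.) A union bound over the $k-1$ stages, with $k$ constant, leaves the total error at $O(n^{-\kappa'})$.

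Third, I would total the costs: the dominant contributions are $\widetilde{O}(t_1)$ from querying $R_1$, and $\widetilde{O}\bigl(t_\ell\sqrt{n/t_{\ell-1}}\bigr)$ from the capped search at each stage $\ell\in\{2,\dots,k-1\}$; all other steps (superposition generation, data-structure updates, building $C^{\rightarrow}_\ell$) are $\widetilde{O}(\sum_\ell t_\ell)$, which is absorbed since $t_\ell\le t_\ell\sqrt{n/t_{\ell-1}}$. Summing gives
\begin{equation*}
{\sf S}=\widetilde{O}\!\left(t_1+t_2\sqrt{\tfrac{n}{t_1}}+\cdots+t_{k-1}\sqrt{\tfrac{n}{t_{k-2}}}\right),
\end{equation*}
as claimed. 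The main obstacle I anticipate is not any single calculation but rather getting the concentration bookkeeping exactly right: one needs the $d$-wise-independent permutation (with $d={\sf polylog}(n)$) to behave like a truly random partition for all the relevant events simultaneously, and one needs the nested structure of the $D_\ell$'s — where collisions at stage $\ell$ are conditioned on those at stage $\ell-1$ — to still concentrate. This is precisely why \lem{setup-probability} and its corollaries in \sec{k-dist-tail} are proved by induction on $\ell$ with the $\log^{2^{\ell-1}}n$ slack growing at each level; the proof of \lem{setupk} just has to cite those results with the right parameters and confirm the search costs come out to $t_\ell\sqrt{n/t_{\ell-1}}$ after suppressing the polylog factors.
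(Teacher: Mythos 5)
Your proposal is correct and follows essentially the same route as the paper's proof: generate uniform superpositions over all $R_\ell$, query $R_1$, then for each $\ell$ do a capped quantum search over $\overline{R}_\ell$ for extensions of tuples in $D_{\ell-1}$, bounding the error via \cor{col-prob} (the paper's proof is also informal about the $V_0$-vs-$V_\ell$ shape of $R$, noting only that the tail bound holds for uniform random sets). The only detail you elide that the paper spells out is that the search loop is run separately for each constant-sized tuple $(s_1,\dots,s_{\ell-1},S_\ell)$ and terminates after $c\log n$ unsuccessful repetitions, but since there are only $O(1)$ such tuples per level this does not change the complexity, and your per-level cost $\widetilde{O}\bigl(t_\ell\sqrt{n/t_{\ell-1}}\bigr)$ matches the paper's $\Theta\bigl(\sqrt{t_\ell n Y/m_\ell}\bigr)$ with $Y=\widetilde{O}(t_\ell)$ and $m_\ell=\Theta(t_{\ell-1})$.
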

\begin{proof}
Fix $p\in {\sf polylog}(n)$ and a constant $c$.
	We start by taking a uniform superposition over all $R_1\in \binom{[m_1]}{t_1^{(2^{c_{1}}-1)}}$ and querying each $\overline{R}_1$ to get $D_1(R)$, which costs $\widetilde{O}(t_1)$ (with log factors coming from the cost of inserting everything into data structures as in \sec{data}). For $\ell\in\{2,\dots,k-1\}$, we take a uniform superposition over all sets $R_{\ell} \in \binom{[m_{\ell}]}{t_{\ell}^{(c_1\cdots c_{\ell-1}(2^{c_{\ell}}-1))}}$. The total cost so far is $\widetilde{O}(t_1+t_2 \cdots + t_{k-1})$. 

\noindent Next, we need to populate the rest of the data structure:
\begin{enumerate}
	\item[] For each $\ell\in\{2,\dots,k-1\}$, do the following. 
	\begin{enumerate}
		\item[] For each $(s_1,\dots,s_{\ell-1},S_{\ell})\in [c_1]\times\dots\times[c_{\ell-1}]\times (2^{[c_{\ell}]}\setminus\{\emptyset\})$, do the following. 
		\begin{enumerate}
			\item[] Repeat until $p t_{\ell}$ values $i_{\ell}$ have been found, or $c\log n$ repetitions have passed in which no $i_{\ell}$ was found:
			\begin{enumerate}
				\item[] Search for a new value $i_{\ell}\in R_{\ell}(s_1,\dots,s_{\ell-1},S_{\ell})$ such that there exists 
$$(i_1,\dots,i_{\ell-1},x_{i_{\ell}})\in D_{\ell-1}(R_{\ell-1}(s_1,\dots,s_{\ell-2},S_{\ell-1}))$$
 for some $S_{\ell-1}$ containing $s_{\ell-1}$.  If such an $i_{\ell}$ is found, insert the tuple $(i_1,\dots,i_{\ell},x_{i_{\ell}})$ into $D_{\ell}(R_{\ell}(s_1,\dots,s_{\ell-1},S_{\ell})$, and increment the forward collision degree of $j_{\ell-1}$ such that $i_{\ell-1}\in A_{\ell-1}^{(j_{\ell-1})}$ stored in $C^{\rightarrow}_{\ell-1}(R)$, as described in the proof of \lem{k-dist-k1}.
			\end{enumerate}
		\end{enumerate}
	\end{enumerate}
\end{enumerate}
If the inner loop finds $Y\in [pt_{\ell}]$ values, so $Y=\widetilde{O}(t_{\ell})$, it costs at most (up to polylogarithmic factors):
$$\sum_{y=0}^{Y-1}\sqrt{\frac{|R_{\ell}|}{Y-y}} = \sqrt{\frac{t_{\ell}n}{m_{\ell}}}\sum_{y=1}^Y\frac{1}{\sqrt{y}} = \Theta\left(\sqrt{\frac{t_{\ell}n Y}{m_{\ell}}}\right)
=\widetilde{O}\left(t_{\ell}\sqrt{\frac{n }{m_{\ell}}}\right),$$
since $|R_{\ell}|=\Theta(t_{\ell}n/m_{\ell})$.
Since $k$, $c_1,\dots,c_{k-1}$ are all constant, there are $\widetilde{O}(1)$ loops in total, so the total cost of this procedure is
$O(\sqrt{n/m_{\ell}})$
for a total cost of:
$$\widetilde{O}\left(\sum_{\ell=1}^{k-1}t_{\ell}+\sum_{\ell=2}^{k-1}t_{\ell}\sqrt{\frac{n}{m_{\ell}}}\right)=\widetilde{O}\left(t_1+\sum_{\ell=1}^{k-2}t_{\ell+1}\sqrt{\frac{n}{t_{\ell}}}\right)$$
since $t_{\ell}=\Theta(m_{\ell+1})$ and for all $\ell>1$, $t_{\ell}=o(t_1)$ (see \tabl{k-dist-setsizes}).

In parts of the superposition in which there are more than $pt_{\ell}$ collisions to be found in some inner loop, we have failed to correctly populate the data $D(R)$, and so the state is not correct. We now argue that this represents a very small part of the state. For uniform random sets $\overline{R}_1,\dots,\overline{R}_{k-1}$, we could argue that the expected number of $\ell$-collisions in ${\cal K}(\overline{R}_1,\dots,\overline{R}_{\ell})$ is $\Theta(t_{\ell})$, and use a hypergeometric tail inequality to upper bound the proportion of $R$ for which this failure occurs. Things are more complicated, since the sets $\overline{R}_{\ell}$ for $\ell>1$ are not uniform on all possible sets -- they are composed instead of blocks. However, by \cor{col-prob}, for every $\ell\in \{2,\dots,k-1\}$, if $v_R^{\ell}$ is uniform random on $V_{\ell}$, meaning $R_1,\dots,R_{k-1}$ are uniform random sets, but $\overline{R}_1,\dots,\overline{R}_{k-1}$ have limited support, we still have the necessary tail bound, when $c'$ is a sufficiently large constant: 
	$$\Pr\left[|{\cal K}(\overline{R}_1,\dots,\overline{R}_{\ell})|\geq t_{\ell}c'\log^{2^{\ell-2}}(n)\right]\leq n^{-\kappa}.$$
Thus, choosing $p=c'\log^{2^{\ell-1}}(n)$, the state we generate is $O(n^{-\kappa})$-close to $\ket{\sigma}$.
\end{proof}

 \subsubsection{Positive Analysis}

For the positive analysis, we must exhibit a flow (see \defin{flow}) on $G$ whenever $M\neq \emptyset$. 

\begin{lemma}\label{lem:k-dist-positive}
	There exists some ${\cal R}^{\sf T}=O(\abs{V_0}^{-1})$ such that the following holds. Whenever there is a unique $k$-collision $(a_1,\dots,a_k)\in A_1\times \cdots \times A_k$, there exists a flow $\theta$ on $G$ that satisfies conditions \textbf{P1}-\textbf{P5} of \thm{full-framework}. Specifically:
	\begin{enumerate}
		\item For all $e\in\tilde{E}$, $\theta(e)=0$.
		\item For all $u\in V(G)\setminus (V_0\cup V_k)$ and $\ket{\psi_\star(u)}\in \Psi_\star(u)$, 
		$$\sum_{i\in L^+(u)}\frac{\theta(u,f^+_u(i))\braket{\psi_\star(u)}{u,i}}{\sqrt{\w_{u,i}}}-\sum_{i\in L^-(u)}\frac{\theta(u,f^-_u(i))\braket{\psi_\star(u)}{u,i}}{\sqrt{\w_{u,i}}}=0.$$
		\item $\sum_{u\in  V_0}\theta(u)=1$.
		\item $\sum_{u\in V_0}\frac{|\theta(u)-\sigma(u)|^2}{\sigma(u)}\leq 1$.
		\item ${\cal E}^{\sf T}(\theta)\leq {\cal R}^{\sf T}$.
	\end{enumerate}
\end{lemma}
\begin{proof}
	Recall the definition of $M$ from \eq{k-dist-M}. For $\ell \in \{2,\dots,k-1\}$, let $j^*_{\ell}\in [m_{\ell}]$ be the unique block label such that $a_{\ell}\in A_{\ell}^{(j_{\ell}^*)}$. Then $a_{\ell}\in \overline{R}_{\ell}(s_1,\dots,s_{\ell-1},S_{\ell})$ if and only if $j_{\ell}^*\in R_{\ell}(s_1,\dots,s_{\ell-1},S_{\ell})$. 
	
	Assuming $M\neq\emptyset$, we define a flow $\theta$ on $G$ with all its sinks in $M$. It will have sources in both $V_0$ and $M$, but all other vertices will conserve flow. This will imply \textbf{Item 2} for all \emph{correct} star states of $G$, but we take extra case to ensure that \textbf{Item 2} is satisfied for the additional star states in $\Psi_\star(u):u\in \bigcup_{\ell=0}^{k-2}V_{\ell}^+$. We define $\theta$ from $V_0$ to $V_k$ as follows.

\vskip7pt
\noindent\textbf{${\cal R}_{0}^+$, Item 3, and Item 4:} We define $M_0$ as the set of $v^{0}_{R}\in V_0$ such that for all $\ell \in \{2,\dots,k-1\}$, we have $|{\cal K}(\overline{R}_1,\dots,\overline{R}_{\ell-1},A_{\ell}^{(j^*_{\ell})})|< p_{\ell}$, where $p_{\ell}$ is as in \cor{ktilde-E}, and for all $\ell\in\{2,\dots,k-2\}$, $|{\cal K}(\overline{R}_1,\dots,\overline{R}_{\ell-1},A_{\ell}^{(j_{\ell}^*)},\overline{R}_{\ell+1})| = 0$. We define the flow $\theta$ over the edges in $E_0^{+}$ as 
		$$ \theta\left(v^{0}_{R},v^{0}_{R,j_1}\right) = \begin{cases}
			\frac{1}{\abs{M_0}} &\text{ if } v_{R}^0 \in M_0 \text{ and } A^{(j_1)} = \{a_1\},\\
			0 &\text{ else}.
		\end{cases}$$
		That is, each vertex in $M_0$ has a unique outgoing edge with flow, and the flow is uniformly distributed. From this construction we immediately satisfy \textbf{Item~3}. 
		
		By \cor{fault-prob}, we know that the proportion of vertices $v_R^0\in V_0$ that are excluded from $M_0$ because 
		$|{\cal K}(\overline{R}_1,\dots,\overline{R}_{\ell-1},A_{\ell}^{(j_{\ell}^*)},\overline{R}_{\ell+1})| \geq 1$ is $o(1)$. 
By \eq{k-dist-pl}, the proportion of vertices excluded because $|{\cal K}(\overline{R}_1,\dots,\overline{R}_{\ell-1},A_{\ell}^{(j_{\ell}^*)})|\geq p_{\ell}$ is also $o(1)$. 
Hence, we can compute:
		\begin{equation*}
			\frac{\abs{V_0}}{\abs{M_0}} = \left(1+O\left(\frac{t_1}{n}\right)\right)\Pi_{\ell=2}^{k-1}\left(1+O\left(\frac{t_\ell}{m_\ell}\right)\right)\left(1 + o(1)\right) = 1+o(1).
		\end{equation*}
		Since $\sigma(u)=\frac{1}{\abs{V_0}}$, we can conclude with \textbf{Item 4} of the theorem statement:
		\begin{equation*}
			\begin{split}
				\sum_{u\in V_0}\frac{|\theta(u)-\sigma(u)|^2}{\sigma(u)} 
				&= \abs{V_0}^2\left(\frac{1}{\abs{M_0}}-\frac{1}{\abs{V_0}}\right)^2
				= \left(\frac{\abs{V_0}}{\abs{M_0}}-1\right)^2=o(1).
			\end{split}
		\end{equation*}
Recall we want to compute ${\cal E}^{\sf T}(\theta)={\cal E}(\theta^{\sf T})$ (see \defin{nwk-length}), which treats an edge $e$ as a path of length ${\sf T}_{e}$.
Using ${\sf T}_0^+=\widetilde{O}(1)$ and $\w_0^+=1$ (refer to \tabl{k-dist-edges}), the contribution of the edges in $E_0^{+}$ to the energy of the flow can be computed as:
		\begin{equation}
			{\cal R}_0^+ = \sum_{e\in E_0^+}{\sf T}_0^+\frac{\theta(e)^2}{\w_0^+} = \widetilde{O}\left(\sum_{u\in M_0}\frac{1}{\abs{M_0}^2}\right)=\widetilde{O}\left(\frac{1}{\abs{M_0}}\right),\label{eq:k-R0+}
		\end{equation}
		since each vertex in $M_0$ has a unique outgoing edge with flow and the flow is uniformly distributed.

\vskip7pt		
\noindent\textbf{${\cal R}_1$ and Item 2 (partially):} Let $M_0^+$ be the set of $v^{0}_{R,j_1} \in V_0^+$ such that $v^{0}_{R}\in M_0$ and $A^{(j_1)}=\{a_1\}$, so $\abs{M_0^+}=\abs{M_0}$. These are the only vertices in $V_0^+$ that have incoming flow, which is equal to $\frac{1}{\abs{M_0}}$. Note that no fault can occur when we add $a_1$ to $R_1$ because we have ensured that $a_2\not\in \overline{R}_2$; that is ${\cal I}(v^0_{R,a_1})=\emptyset$, and so by \lem{k-dist-E1}, $\w_{v^0_{R,a_1},S_1}=\w_1=1$ for all $S_1\in 2^{[c_1]}\setminus\{\emptyset\}$. 
To ensure that we satisfy \textbf{Item 2} we define the flow as
		$$\theta(v^0_{R,j_1},v^1_{R'}) = \begin{cases}
			(-1)^{\abs{S_1}+1}\frac{1}{\abs{M_0}} &\text{ if } v^0_{R,j_1} \in M_0^+\mbox{ and }v_{R'}^1=f^+_{v_{R,j_1}^0}(S_1),\\
			0 &\text{ else}
		\end{cases}$$
where we recall that $v_{R'}^1=f^+_{v_{R,j_1}^0}(S_1)$ if and only if $R'$ is obtained from $R$ by inserting $j_1$ into $R_1(S_1)$. 
		We verify that indeed for each $u=v_{R,a_1}^0 \in M_0^+$ and $\ket{\psi^{{\cal I}_1}_\star(u)}\in \Psi_\star(u)$ (see \eq{k-Psi-v-0-plus}) Item 2 holds:
\begin{equation}
\begin{split}
\Theta_\star({\cal I}_1,u):={}&\sum_{i\in L^+(u)}\theta(u,f^+_{u}(i))\frac{\braket{\psi_\star^{{\cal I}_1}(u)}{u,i}}{\sqrt{\w_1}}-\sum_{i\in L^-(u)}\theta(u,f^-_{u}(i))\frac{\braket{\psi_\star^{{\cal I}_1}(u)}{u,i}}{\sqrt{\w_0^+}}\\
={}& \sum_{S_1 \in 2^{[c_1]\setminus {\cal I}_1} \setminus \{\emptyset\}}\theta(u,f^+_{u}(S_1))\frac{\sqrt{\w_1}}{\sqrt{\w_1}}-\theta(u,f^-_{u}(\leftarrow))\frac{-\sqrt{\w_0^+}}{\sqrt{\w_0^+}}.
\end{split}\label{eq:k-dist-pos-flow-1}
\end{equation}
We have $f^-_u(\leftarrow)=v_R^0\in V_0$, and $\theta(v_{R,a_1}^0,v_R^0)=-\theta(v_R^0,v_{R,a_1}^0)=-|M_0|^{-1}$, and $\theta(u,f^+_u(S_1))=(-1)^{|S_1|}|M_0|^{-1}$, so we continue from above:
\begin{equation}
\begin{split}
\Theta_\star({\cal I}_1,u)&= \sum_{S_1 \in 2^{[c_1]\setminus {\cal I}_1} \setminus \{\emptyset\}}(-1)^{|S_1|+1}|M_0|^{-1}-(-|M_0|^{-1})(-1)\\
&= -|M_0|^{-1}\left(\sum_{S_1\in 2^{[c_1]\setminus {\cal I}_1}}(-1)^{|S_1|}-1+1 \right)=0,
\end{split}\label{eq:k-dist-pos-flow-2}
\end{equation}
since $\sum_{S_1\in 2^{[c_1]\setminus {\cal I}_1}}(-1)^{|S_1|}=0$ (i.e.~for any set $S$, exactly half of its subsets have even size). 
Using ${\sf T}_1=\widetilde{O}(1)$ and ${\sf w}_1=1$, the contribution of the edges in $E_1$ to the energy of the flow can be upper bounded as:
		\begin{equation}
			{\cal R}_1 = \sum_{e\in E_1}{\sf T}_1\frac{\theta(e)^2}{\w_1} = \widetilde{O}\left(\sum_{\substack{u \in M_0^+, S_1 \in 2^{[c_1]} \setminus \{\emptyset\}}}\frac{1}{\abs{M_0}^2}\right)=\widetilde{O}\left(\frac{1}{\abs{M_0}}\right).\label{eq:k-R1}
		\end{equation}
		
\vskip7pt
\noindent\textbf{${\cal R}_{\ell}^+$ for $\ell \in \{1,\dots,k-2\}$:} Let $M_{\ell}(S_1,\dots,S_{\ell})$ be the set of $v^{\ell}_{R}\in V_{\ell}(S_1,\dots,S_{\ell})$ (see \eq{V-ell-Ss}) such that $a_1\in R_1(S_1)$, for all $\ell'\in\{2,\dots,\ell\}$, $j_{\ell}^*\in R_{\ell}(\mu(S_1),\dots,\mu(S_{\ell-1}),S_{\ell})$, and 
$$v^0_{R_1\setminus\{a_1\},R_2\setminus\{j_1^*\},\dots,R_{\ell}\setminus\{j_{\ell}^*\},R_{\ell+1},\dots,R_{k-1}}\in M_0.$$
Then letting $M_{\ell}$ be the union of all $M_{\ell}(S_1,\dots,S_{\ell})$, we have $|M_{\ell}|=\Theta(|M_0|)$. 
We will define $\theta$ so that $M_{\ell}$ are exactly the vertices of $V_{\ell}$ that have non-zero flow coming in from $V_{\ell-1}^+$, and specifically, we will ensure that the amount of incoming flow for each $v^{\ell}_R\in M_{\ell}(S_1,\dots,S_{\ell})$ is $(-1)^{\abs{S_1} + \cdots +\abs{S_{\ell}}+\ell}{\abs{M_0}}^{-1}$. So far this can only be verified for $\ell=1$ due to the flow that we constructed on $E_1$, but it will follow for all $\ell \in \{2,\dots,k-1\}$ when we define the flow on $E_{\ell}$ (see \eq{k-dist-theta-later}). For now, we define the flow $\theta$ over the edges in $E_{\ell}^+$ as 
		\begin{align*} 
			\theta(v^{\ell}_{R},v^{\ell}_{R,j_{\ell}})
			&= \begin{cases}
				\left(-1\right)^{\abs{S_1} + \cdots + \abs{S_{\ell}}+\ell}\frac{1}{\abs{M_0}} &\text{ if } v^{\ell}_R \in M_{\ell}(S_1,\dots,S_{\ell})\text{ and } j_{\ell} = j^*_{\ell},\\
				0 &\text{ else},
			\end{cases}
		\end{align*}
so we are just forwarding all flow from $v_R^{\ell}$ to a unique neighbour $v_{R,j_{\ell}^*}^{\ell}$.		
Using ${\sf T}_{\ell}^+=\widetilde{O}(1)$ and $\w_{\ell}^+=1$, the contribution of the edges in $E_{\ell}^+$ to the energy of the flow can be upper bounded as:
		\begin{equation}
			{\cal R}_{\ell}^+ = \sum_{e\in E_{\ell}^{+}}{\sf T}_{\ell}^+\frac{\theta(e)^2}{\w_{\ell}^{+}} = \widetilde{O}\left(\sum_{u\in M_{\ell}^{+}}\frac{1}{\abs{M_0}^2}\right)=\widetilde{O}\left(\frac{1}{\abs{M_0}}\right).\label{eq:k-Rell+}
		\end{equation}

\vskip7pt	
\noindent\textbf{${\cal R}_{\ell}$ for $\ell \in \{2,\dots,k-1\}$ and Item 2 (continued):} Let $M_{\ell-1}^{+}(S_1,\dots,S_{\ell-1})$ be the set of $v^{\ell-1}_{R,j^*_{\ell}}\in V_{\ell-1}^+(S_1,\dots,S_{\ell-1})$ such that $v^{\ell-1}_{R}\in M_{\ell-1}(S_1,\dots,S_{\ell-1})$, so letting $M_{\ell-1}^+$ be the union over all the sets $M_{\ell-1}^+(S_1,\dots,S_{\ell-1})$, $\abs{M_{\ell-1}^{+}} = O\left(\abs{M_{\ell-1}}\right) =  O\left(\abs{M_0}\right)$. $M_{\ell-1}^+(S_1,\dots,S_{\ell-1})$ are exactly the vertices of $V_{\ell-1}^+$ that have non-zero flow coming in from $M_{\ell-1}(S_1,\dots,S_{\ell})$. For any $v^{\ell-1}_{R,j^*_{\ell}}\in M_{\ell-1}^+(S_1,\dots,S_{\ell-1})$, this flow is equal to $(-1)^{\abs{S_1} + \cdots +\abs{S_{\ell-1}}+(\ell-1)}{\abs{M_0}}^{-1}$. Note that no fault can occur when we add $j_{\ell}^*$ to $R$, because we have ensured in our definition of $M_0$ that ${\cal K}(\overline{R}_1,\dots,\overline{R}_{\ell-1},A_{\ell}^{(j_{\ell}^*)},\overline{R}_{\ell+1})=\emptyset$, so we have ${\cal I}(v_{R,j_{\ell}^*}^{\ell-1})=\emptyset$, so by \eq{k-dist-w-ell-cond}, there is an edge for each $S_{\ell}\in 2^{[c_{\ell}]}\setminus\{\emptyset\}$ to which we can assign flow. 
To ensure that we satisfy \textbf{Item 2} we define the flow as
		\begin{align}\label{eq:k-dist-theta-later}
			 \theta(v^{\ell-1}_{R,j_{\ell}},v_{R'}^{\ell}) 
= \begin{cases}
			\left(-1\right)^{\abs{S_1} + \cdots +\abs{S_{\ell}}+\ell}\frac{1}{\abs{M_0}} &\text{ if } v_{R,j_{\ell}} \in M_{\ell-1}^+(S_1,\dots,S_{\ell-1}) \text{ and } v_{R'}^{\ell}=f^+_{v^{\ell-1}_{R,j_{\ell}}}(S_{\ell}),\\
			0 &\text{ else},
			\end{cases}
		\end{align}
where we recall that for $v_{R,j_{\ell}}^{\ell-1}\in V_{\ell-1}^+(S_1,\dots,S_{\ell-1})$, $v_{R'}^{\ell}=f^+_{v_{R,j_{\ell}}^{\ell-1}}(S_{\ell})$ if and only if $R'$ is obtained from $R$ by inserting $j_{\ell}$ into $R_{\ell}(\mu(S_1),\dots,\mu(S_{\ell-1}),S_{\ell})$. Note that this is consistent with the incoming flow we assumed when defining $\theta$ on the edges in $E_{\ell-1}^+$, above. 
We verify that for each $u=v_{R,j_{\ell}^*}^{\ell-1}\in M_{\ell-1}^+(S_1,\dots,S_{\ell-1})$ and $\ket{\psi^{{\cal I}_{\ell}}_\star(u)}\in \Psi_\star(u)$ (see \eq{k-Psi-v-ell-plus}), Item 2 holds. By a computation nearly identical to \eq{k-dist-pos-flow-1} and \eq{k-dist-pos-flow-2}, we obtain:
\begin{align*}
			&\sum_{i\in L^+(u)}\theta(u,f^+_{u}(i))\frac{\braket{\psi_\star^{{\cal I}_{\ell}}(u)}{u,i}}{\sqrt{\w_{\ell}}}-\sum_{i\in L^-(u)}\theta(u,f^- _{u}(i))\frac{\braket{\psi_\star^{{\cal I}_{\ell}}(u)}{u,i}}{\sqrt{\w_{\ell-1}^+}}\\
			={}& (-1)^{|S_1|+\dots+|S_{\ell-1}|+{\ell}}\left(\sum_{S_\ell\in 2^{[c_{\ell}]\setminus{\cal I}_{\ell}}}(-1)^{|S_{\ell}|}-1-(-1)\right)=0.
		\end{align*}
Using ${\sf T}_{\ell}=\widetilde{O}(\sqrt{n/m_{\ell}})$ and $\w_{\ell}=\sqrt{n/m_{\ell}}$ (see \tabl{k-dist-edges}), we can upper bound the contribution of the edges in $E_{\ell}$ to the energy of the flow:
\begin{equation}
\begin{split}
			{\cal R}_{\ell} &= \sum_{e\in E_{\ell}}{\sf T}_{\ell}\frac{\theta(e)^2}{\w_{\ell}} = \widetilde{O}\left(\sum_{\substack{u \in M_{\ell-1}^+, S_{\ell} \in 2^{[c_{\ell}]} \setminus \{\emptyset\}}}\frac{1}{\abs{M_0}}\right)=\widetilde{O}\left(\frac{1}{\abs{M_0}}\right).
\end{split}\label{eq:k-Rell}
\end{equation}
	
\vskip7pt	
\noindent\textbf{${\cal R}_{k}$:} Finally, let $M_{k-1}(S_1,\dots,S_{k-1})$ be the set of $v^{k-1}_{R}\in V_{k-1}(S_1,\dots,S_{k-1})$ (see \eq{V-ell-Ss}) such that, firstly, 
$a_1\in R_1(S_1)$; secondly, for all $\ell\in\{2,\dots,k-1\}$, $j_{\ell}^*\in R_{\ell}(\mu(S_1),\dots,\mu(S_{\ell-1}),S_{\ell})$; and finally,
$v^{0}_{R_1\setminus\{a_1\},R_2\setminus\{j_2^*\},\dots,R_{k-1}\setminus\{j_{k-1}^*\}}\in M_0.$
We let $M_{k-1}$ be the union of all $M_{k-1}(S_1,\dots,S_{k-1})$.
These are exactly the vertices of $V_{k-1}$ that have non-zero incoming flow, with the amount of incoming flow equal to $(-1)^{\abs{S_1} + \cdots +\abs{S_{k-1}}+(k-1)}{\abs{M_0}}^{-1}$. We define the flow $\theta$ on the edges in $E_{k}$ as 
		\begin{align*} 
			\theta(v^{k-1}_{R},v^k_{R,i_{k}})= \begin{cases}
				\left(-1\right)^{\abs{S_1} + \cdots + \abs{S_{k-1}}+\left(k-1\right)}\frac{1}{\abs{M_0}} &\text{if } v^k_R \in M_{k-1}(S_1,\dots,S_{k-1})\text{ and } i_{k} = a_k\\
				0 &\text{else}.
			\end{cases}
		\end{align*}
It is easy to verify that the only vertices $v^k_{R,i_k}\in V_k$ that have non-zero flow are those in $M$, and thus all sources and sinks are in $V_0\cup M$ ($M$ contains some sources, because some vertices have negative flow coming in). Using ${\sf T}_k=\widetilde{O}(1)$ and $\w_k=1$, the contribution of the edges in $E_k$ to the energy of the flow is:
\begin{equation}
			{\cal R}_k = \sum_{e\in E_k}{\sf T}_k\frac{\theta(e)^2}{\w_k} = \widetilde{O}\left(\sum_{u\in M_k}\frac{1}{\abs{M_k}^2}\right)=\widetilde{O}\left(\frac{1}{\abs{M_0}}\right)\label{eq:k-Rk}.
		\end{equation}

\vskip7pt
\noindent\textbf{Item 1:} Recall that $\tilde{E}:=\bigcup_{\ell=2}^{k-1}(\tilde{E}_{\ell}\cup\tilde{E}_{\ell}')$ (\eq{k-dist-tilde-E}), where $\tilde{E}_{\ell}\subset E_{\ell}$ (\eq{k-dist-tilde-E-ell}). By ensuring that there is only flow on $v^{\ell}_{R_1,\dots,R_{k-1}}$ whenever ${\cal K}(\overline{R}_1,\dots,\overline{R}_{\ell-1},A_{\ell}^{(j_{\ell}^*)})$ is not too big, we have ensured that the flow on the edges in $\bigcup_{\ell=2}^{k-1}\tilde{E}_{\ell}$ is 0, and by only sending flow down edges that are part of $E_{\ell}$, which is disjoint from $\tilde{E}_{\ell'}$ (\eq{k-dist-tilde-E-prime}), the flow on $\bigcup_{\ell=2}^{k-1}\tilde{E}_{\ell'}$ is 0 as well, which implies that the flow on all of $\tilde{E}$ is 0.

\vskip7pt	
\noindent\textbf{Item 5:} It remains only to upper bound the energy of the flow by adding up the contributions in \eq{k-R0+}, \eq{k-R1}, \eq{k-Rell+}, \eq{k-Rell} and \eq{k-Rk}:
		\begin{align*}
			{\cal E}^{\sf T}(\theta) &={\cal R}_0^++{\cal R}_1+\sum_{\ell=1}^{k-2}{\cal R}_{\ell}^++\sum_{\ell=2}^{k-1}{\cal R}_{\ell}+{\cal R}_k
=
\widetilde{O}\left(\frac{1}{|M_0|}\right).
		\end{align*}
		Substituting $|M_0|=\Theta(|V_0|)$ yields the desired upper bound.
\end{proof}

\subsubsection{Negative Analysis}

For the negative analysis, we need to upper bound the total weight of the graph, taking into account the subroutine complexities: ${\cal W}^{\sf T}(G)$.
\begin{lemma}\label{lem:k-dist-negative}
There exists ${\cal W}^{\sf T}$ such that
\begin{align*}
{\cal W}^{\sf T}(G) \leq  {\cal W}^{\sf T} 
\leq \widetilde{O}\left(\left(n  + \sum_{\ell=1}^{k-1} \frac{n^2}{t_{\ell}}\right)\abs{V_0}\right).
\end{align*}
\end{lemma}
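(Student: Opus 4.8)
The plan is to bound $\mathcal{W}^{\sf T}(G) = \sum_{e\in\overrightarrow{E}(G)}\w_e {\sf T}_e$ by going through the edge sets $E_0^+$, $E_1$, $\{E_\ell^+\}_{\ell=1}^{k-2}$, $\{E_\ell\cup\tilde E_\ell'\}_{\ell=2}^{k-1}$, $E_k$ one at a time, exactly mirroring the structure of the proof of \lem{3-dist-negative}. For each edge set I will combine the cardinality bound (from \eq{kE1}, \eq{kE2}, \eq{kEk0}, \eq{kEk1}, \eq{kEk}), the edge weight (from \tabl{k-dist-edges}), and the transition cost (from \cor{k-dist-transitions}), then sum over $\ell$, of which there are only a constant number since $k$ is constant.

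Concretely, I would first recall that $\mathcal{W}^{\sf T}(G) = \mathcal{W}(G^{\sf T}) = \sum_{e\in\overrightarrow{E}(G)}\w_e{\sf T}_e$ by \defin{nwk-length}. Then: the edges in $E_0^+$ have $\w=1$, ${\sf T}=\widetilde O(1)$, and $|E_0^+| = O(n|V_0|)$, contributing $\widetilde O(n|V_0|)$; the edges in $E_1$ have $\w=1$, ${\sf T}=\widetilde O(1)$, $|E_1| = O(n|V_0|)$, contributing $\widetilde O(n|V_0|)$; for $\ell\in\{1,\dots,k-2\}$ the edges in $E_\ell^+$ have $\w=1$, ${\sf T}=\widetilde O(1)$, $|E_\ell^+| = O(n|V_0|)$, contributing $\widetilde O(n|V_0|)$ each, so $\widetilde O(n|V_0|)$ total; for $\ell\in\{2,\dots,k-1\}$ the edges in $E_\ell\cup\tilde E_\ell'$ have $\w_\ell = \sqrt{n/m_\ell}$, ${\sf T}_\ell = \widetilde O(\sqrt{n/m_\ell})$, and $|E_\ell|+|\tilde E_\ell'| = O(n|V_0|)$, contributing $\widetilde O\big(\frac{n}{m_\ell}\cdot n|V_0|\big) = \widetilde O\big(\frac{n^2}{m_\ell}|V_0|\big) = \widetilde O\big(\frac{n^2}{t_{\ell-1}}|V_0|\big)$ using $m_\ell = \Theta(t_{\ell-1})$; finally the edges in $E_k$ have $\w=1$, ${\sf T}=\widetilde O(1)$, $|E_k| = O(\frac{n^2}{t_{k-1}}|V_0|)$, contributing $\widetilde O\big(\frac{n^2}{t_{k-1}}|V_0|\big)$. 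Adding all of these, and reindexing $\sum_{\ell=2}^{k-1}\frac{n^2}{t_{\ell-1}} + \frac{n^2}{t_{k-1}} = \sum_{\ell=1}^{k-1}\frac{n^2}{t_\ell}$, gives
\[
\mathcal{W}^{\sf T}(G) = \widetilde O\left(\left(n + \sum_{\ell=1}^{k-1}\frac{n^2}{t_\ell}\right)|V_0|\right),
\]
and one then sets $\mathcal{W}^{\sf T}$ to be the right-hand side (or any valid upper bound on it).

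There is no real obstacle here — this lemma is a routine accounting exercise, and the only things to be careful about are: (i) including $\tilde E_\ell'$ in the count for the $E_\ell$ layer (it has the same weight $\w_\ell$ and its cardinality is absorbed into the $O(n|V_0|)$ bound of \eq{kEk1}), while noting that the dangling edges in $\tilde E_\ell$ and $\tilde E_\ell'$ whose transitions we fail still carry weight $\w_\ell$ and some length, but since $\tilde E_\ell\subset E_\ell$ is already counted and $\tilde E_\ell'$ is counted in \eq{kEk1}, and the failed-transition lengths can be taken to be $\widetilde O(\sqrt{n/m_\ell}) = {\sf T}_\ell$ (or even absorbed), the bound is unaffected; and (ii) correctly tracking that there are only $k = O(1)$ layers so the sum of $\widetilde O(n|V_0|)$ terms remains $\widetilde O(n|V_0|)$. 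One subtlety worth a sentence in the write-up: the orphan endpoints of $\tilde E$-edges (see \rem{orphan-vertices}) are not in $V(G)$, but the edges themselves are in $\overrightarrow{E}(G)$ and hence contribute to ${\cal W}^{\sf T}(G)$; since their weight is $\w_\ell$ and they are already counted within $|E_\ell| + |\tilde E_\ell'| = O(n|V_0|)$, this changes nothing. I would present the proof as a short enumerated walk through the five edge-set families followed by the one-line summation, in direct analogy with \lem{3-dist-negative}.
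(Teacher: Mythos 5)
Your proposal is correct and matches the paper's proof almost verbatim: the same edge-set-by-edge-set accounting, the same cardinality bounds, weights, transition costs, and the same reindexing $\sum_{\ell=2}^{k-1}\frac{n^2}{t_{\ell-1}}+\frac{n^2}{t_{k-1}}=\sum_{\ell=1}^{k-1}\frac{n^2}{t_\ell}$ via $m_\ell=\Theta(t_{\ell-1})$. Your explicit handling of $\tilde E_\ell'$ (which carries weight $\w_\ell$ and is counted in the $O(n|V_0|)$ bound of \eq{kEk1}) is slightly more careful than the paper, which silently folds it into the $E_\ell$ contribution, but this makes no difference to the bound.
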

\begin{proof}
Recall that ${\cal W}^{\sf T}(G)={\cal W}(G^{\sf T})$ is the total weight of the graph $G^{\sf T}$, where we replace each edge $e$ of $G$, with weight $\w_e$, by a path of ${\sf T}_e$ edges of weight $\w_e$, where ${\sf T}_e$ is the complexity of the edge transition $e$ (see \defin{nwk-length} and {\bf TS1-2} of \thm{full-framework}). Thus, ${\cal W}^{\sf T}(G)=\sum_{e\in E(G)}{\sf T}_e\w_e$. By \cor{k-dist-transitions} (see also \tabl{k-dist-edges}) ${\sf T}_e=\widetilde{O}(1)$ for all $e\in E_1 \cup E_k \cup \bigcup_{\ell \in \{0,\dots,k-2\}} E_{\ell}^{+}$ and ${\sf T}_e=\widetilde{O}(\sqrt{n/m_{\ell}})$ for all $e\in E_{\ell}$ for $\ell \in \{2,\dots,k-1\}$. We have defined the weight function (see \tabl{k-dist-edges}) so that $\w_e=1$ for all $e\in E_1 \cup E_k \cup \bigcup_{\ell \in \{0,\dots,k-2\}} E_{\ell}^{+}$ and $\w_e=\sqrt{n/m_{\ell}}$ for all $e\in E_{\ell}$ for $\ell \in \{2,\dots,k-1\}$.
Thus, using \eq{kE2}, the total contribution to the weight from the edges in $E_0^+$ is:
	\begin{equation}
		{\cal W}_0^+ := {\sf T}_0^+\w_0^+\abs{E_0^+}=\widetilde{O}\left(n\abs{V_0}\right).\label{eq:kW1}
	\end{equation}
For $\ell \in \{1,\dots,k-2\}$, we can use \eq{kEk0} to compute the total contribution to the weight from the edges in $E_{\ell}^+$:
	\begin{equation}
		{\cal W}_{\ell}^+ := {\sf T}_{\ell}^+\w_{\ell}^+\abs{E_{\ell}^+}=\widetilde{O}\left(n\abs{V_0}\right).\label{eq:kWk0}
	\end{equation}
Using \eq{kE2}, the total contribution from the edges in $E_1$ is:
	\begin{equation}
		{\cal W}_1 := {\sf T}_1\w_1\abs{E_1}=\widetilde{O}\left(n\abs{V_0}\right).\label{eq:kW2}
	\end{equation}
For $\ell \in \{2,\dots,k-1\}$ using \eq{kEk1} the total contribution from the edges in $E_{\ell}$ is:
	\begin{equation}
		{\cal W}_{\ell}:= {\sf T}_{\ell}\w_{\ell}\abs{E_{\ell}} =\widetilde{O}\left( \frac{n}{m_{\ell}} \right)|E_{\ell}|
		=\widetilde{O}\left(\frac{n^2}{m_{\ell}}\abs{V_0}\right).\label{eq:kWk1}
	\end{equation}
Finally, using \eq{kEk}, the total contribution from the edges in $E_k$ is:
	\begin{equation}
		{\cal W}_k := {\sf T}_k\w_k\abs{E_k}=\widetilde{O}\left(\frac{n^2}{t_{k-1}}\abs{V_0}\right).\label{eq:kWk}
	\end{equation}
	Combining \eq{kW1} to \eq{kWk}, we get total weight:
\begin{align*}
{\cal W}^{\sf T}(G) = \widetilde{O}\left(\left(n +\sum_{\ell=1}^{k-2}n+ n + \sum_{\ell=2}^{k-1} \frac{n^2}{m_{\ell}} + \frac{n^2}{t_{k-1}}\right)\abs{V_0}\right)
= \widetilde{O}\left(\left(n  + \sum_{\ell=1}^{k-1} \frac{n^2}{t_{\ell}}\right)\abs{V_0}\right),
\end{align*}
using $m_{\ell}=\Theta(t_{\ell-1})$ for all $\ell\in \{2,\dots,k-1\}$. 
\end{proof}

\subsubsection{Conclusion of Proof of Theorem~\ref{thm:k-dist}}\label{sec:conclusion}

We can now conclude with the proof of \thm{k-dist}, showing an upper bound of $\widetilde{O}\left(n^{\frac{3}{4}-\frac{1}{4}\frac{1}{2^k-1}}\right)$ on the bounded error quantum time complexity of $k$-distinctness. 

\begin{proof}[Proof of Theorem \ref{thm:k-dist}]
	We apply \thm{full-framework} to $G$ (\sec{k-dist-G-edges} and \sec{k-dist-G-vertices}), $M$ (\eq{k-dist-M}),  $\sigma$ the uniform distribution on $V_0$ (\eq{k-dist-V1}), and $\Psi_\star$ (\sec{k-dist-star-states}), with 
	$${\cal W}^{\sf T}=\widetilde{O}\left(\left(n  + \sum_{\ell=1}^{k-1} \frac{n^2}{t_{\ell}}\right)\abs{V_0}\right)
	\mbox{ and }
	{\cal R}^{\sf T}=\widetilde{O}\left(\abs{V_0}^{-1} \right).$$
	Then we have, referring to \tabl{k-dist-setsizes},
	$${\cal W}^{\sf T}{\cal R}^{\sf T} = \widetilde{O}\left(n  + \sum_{\ell=1}^{k-1} \frac{n^2}{t_{\ell}}\right)=o(n^2).$$
	
	\noindent We have shown the following:
	\begin{description}
		\item[Setup Subroutine:] By \lem{setupk}, the state $\ket{\sigma}$ can be generated in cost 
		$${\sf S}=\widetilde{O}\left(t_1+\sum_{\ell=1}^{k-2}t_{\ell+1}\sqrt{\frac{n}{t_{\ell}}}\right).$$
		\item[Star State Generation Subroutine:] By \lem{k-dist-star-states}, the star states ${\Psi_\star}$ can be generated in $\widetilde{O}(1)$ complexity.
		\item[Transition Subroutine:] By \cor{k-dist-transitions}, there is a quantum subroutine that implements the transition map with errors $\epsilon_{u,v}$ and costs ${\sf T}_{u,v}$, such that
		\begin{description}
			\item[TS1] For all $(u,v)\in \overrightarrow{E}(G)\setminus \tilde{E}$ (defined in \eq{k-dist-tilde-E}), taking $\kappa>2$ in \lem{k-dist-k1}, we have $\epsilon_{u,v}
			=O(n^{-\kappa})=o(1/({\cal R}^{\sf T}{\cal W}^{\sf T}))$.
			\item[TS2] By \cor{ktilde-E}, \lem{k-dist-negative} and using $\kappa > 2$:
			\begin{align*}
				\widetilde{\cal W} = O(n^{-\kappa}{\cal W}^{\sf T}(G)) = o(1/{\cal R}^{\sf T}).
			\end{align*}
			
		\end{description}
		
		\item[Checking Subroutine:] By \eq{k-dist-C}, for any $u\in V_{\sf M}=V_k$, we can check if $u\in M$ in cost $\widetilde{O}(1)$. 
		\item[Positive Condition:] By \lem{k-dist-positive}, there exists a flow satisfying conditions \textbf{P1}-\textbf{P5} of \thm{full-framework}, with ${\cal E}^{\sf T}(\theta)\leq {\cal R}^{\sf T}=\widetilde{O}\left(\abs{V_0}^{-1} \right)$. 
		\item[Negative Condition:] By \lem{k-dist-negative}, ${\cal W}^{\sf T}(G)\leq {\cal W}^{\sf T}=\widetilde{O}\left(\left(n  + \sum_{\ell=1}^{k-1} \frac{n^2}{t_{\ell}}\right)\abs{V_0}\right)$. 
	\end{description}
	Thus, by \thm{full-framework}, there is a quantum algorithm that decides if $M=\emptyset$ in bounded error in complexity:
	\begin{align*}
		\widetilde{O}\left({\sf S}+\sqrt{{\cal R}^{\sf T}{\cal W}^{\sf T}}\right) &
=\widetilde{O}\left(t_1+\sum_{\ell=1}^{k-2}t_{\ell+1}\sqrt{\frac{n}{t_{\ell}}}+\sqrt{n} + \sum_{\ell=1}^{k-1}\frac{n}{\sqrt{t_{\ell}}}\right)
=\widetilde{O}\left(t_1+\sum_{\ell=1}^{k-2}t_{\ell+1}\sqrt{\frac{n}{t_{\ell}}}+\sqrt{n} + \frac{n}{\sqrt{t_{k-1}}}\right)
	\end{align*}
since $t_1>t_2>\dots>t_{k-1}$.
Choosing the optimal values of $t_{\ell} = n^{\frac{3}{4}-\frac{1}{4}\frac{1}{2^k-1} - \sum_{\ell'=2}^{\ell}\frac{2^{k-1-\ell'}}{2^k-1}}$ for $\ell \in \{1,\dots,k-1\}$, as in \tabl{k-dist-setsizes}, we get an upper bound of $\widetilde{O}\left(n^{\frac{3}{4}-\frac{1}{4}\frac{1}{2^k-1}}\right)$.
Since $M\neq\emptyset$ if $x$ has a unique $k$-collision, and $M=\emptyset$ if $x$ has no $k$-collision, the algorithm distinguishes these two cases. By \lem{unique-to-multiple}, this is enough to decide $k$-distinctness in general.
\end{proof}

\subsection*{Acknowledgements}

We thank Arjan Cornelissen and Maris Ozols for discussions on the early ideas of this work; and Simon Apers, Aleksandrs Belovs, Shantanav Chakraborty, Andrew Childs, Fr\'ed\'eric Magniez
and Ronald de~Wolf for helpful comments and discussions about these results.

\newpage 

\printbibliography

@preamble{ "\newcommand{\lName}{1}" }

@preamble{ "\newcommand{\arxiv}[1]{arXiv: \href{https://arxiv.org/abs/#1}{\ttfamily{#1}}\removefirstdot}" }

@preamble{ "\newcommand{\arXiv}[1]{arXiv: \href{https://arxiv.org/abs/#1}{\ttfamily{#1}}\removefirstdot}" }

@preamble{ "\def\removefirstdot#1{\if.#1{}\else#1\fi}" }

@preamble{ "\providecommand{\multiletter}[1]{#1}\renewcommand{\multiletter}[1]{#1}" }

@preamble{ "\DeclareRobustCommand{\dutchPrefix}[2]{#2}" }

@preamble{ "\providecommand{\dutchPrefix}[2]{#2}\renewcommand{\dutchPrefix}[2]{#2}" }

@preamble{ "\newcommand{\skp}[3]{#2}" }

@preamble{ "\newcommand{\focs       }[1]{\if\lName1\skp{  }{Proceedings of the #1 {IEEE} Symposium on Foundations of Computer Science ({FOCS})}{                          }\else{FOCS}\fi}" }

@preamble{ "\newcommand{\stoc       }[1]{\if\lName1\skp{  }{Proceedings of the #1 {ACM} Symposium on the Theory of Computing ({STOC})}{                                   }\else{STOC}\fi}" }

@preamble{ "\newcommand{\soda       }[1]{\if\lName1\skp{  }{Proceedings of the #1 {ACM-SIAM} Symposium on Discrete Algorithms ({SODA})}{                                  }\else{SODA}\fi}" }

@preamble{ "\newcommand{\stacs      }[1]{\if\lName1\skp{  }{Proceedings of the #1 Symposium on Theoretical Aspects of Computer Science ({STACS})}{                        }\else{STACS}\fi}" }

@preamble{ "\newcommand{\itcs       }[1]{\if\lName1\skp{  }{Proceedings of the #1 Innovations in Theoretical Computer Science Conference (ITCS)}{                         }\else{ITCS}\fi}" }

@preamble{ "\newcommand{\fsttcs     }[1]{\if\lName1\skp{  }{Proceedings of the #1 International Conference on Foundations of Software Technology and Theoretical Computer Science (FSTTCS)}{ }\else{FSTTCS}\fi}" }

@preamble{ "\newcommand{\mfcs       }[1]{\if\lName1\skp{  }{Proceedings of the #1 International Symposium on Mathematical Foundations of Computer Science ({MFCS})}{      }\else{MFCS}\fi}" }

@preamble{ "\newcommand{\ccc        }[1]{\if\lName1\skp{  }{Proceedings of the #1 {IEEE} Conference on Computational Complexity ({CCC})}{                                 }\else{CCC}\fi}" }

@preamble{ "\newcommand{\isit       }[1]{\if\lName1\skp{  }{Proceedings of the #1 {IEEE} International Symposium on Information Theory ({ISIT})}{                         }\else{ISIT}\fi}" }

@preamble{ "\newcommand{\colt       }[1]{\if\lName1\skp{  }{Proceedings of the #1 Conference On Learning Theory (COLT)}{                                                  }\else{COLT}\fi}" }

@preamble{ "\newcommand{\nips       }[1]{\if\lName1\skp{  }{Advances in Neural Information Processing Systems #1 ({NIPS})}{                                               }\else{NIPS}\fi}" }

@preamble{ "\newcommand{\aistats    }[1]{\if\lName1\skp{  }{Proceedings of the #1 International Conference on Artificial Intelligence and Statistics ({AISTATS})}{        }\else{AISTATS}\fi}" }

@preamble{ "\newcommand{\icml       }[1]{\if\lName1\skp{  }{Proceedings of the #1 International Conference on Machine Learning (ICML)}{                                   }\else{ICML}\fi}" }

@preamble{ "\newcommand{\icalp      }[1]{\if\lName1\skp{  }{Proceedings of the #1 International Colloquium on Automata, Languages, and Programming (ICALP)}{              }\else{ICALP}\fi}" }

@preamble{ "\newcommand{\esa        }[1]{\if\lName1\skp{  }{Proceedings of the #1 Annual European Symposium on Algorithms (ESA)}{                                         }\else{ESA}\fi}" }

@preamble{ "\newcommand{\tqc        }[1]{\if\lName1\skp{  }{Proceedings of the #1 Conference on the Theory of Quantum Computation, Communication, and Cryptography (TQC)}{}\else{TQC}\fi}" }

@preamble{ "\newcommand{\jacm          }{\if\lName1\skp{    }{Journal of the ACM}{                             }\else{J. ACM}\fi}" }

@preamble{ "\newcommand{\acmta         }{\if\lName1\skp{    }{ACM Transactions on Algorithms}{                 }\else{{ACM} Tr. Alg}\fi}" }

@preamble{ "\newcommand{\acmtct        }{\if\lName1\skp{    }{ACM Transactions on Computation Theory}{         }\else{ACM Tr. Comp. Th.}\fi}" }

@preamble{ "\newcommand{\jams          }{\if\lName1\skp{    }{Journal of the AMS}{                             }\else{J. AMS}\fi}" }

@preamble{ "\newcommand{\pams          }{\if\lName1\skp{    }{Proceedings of the AMS}{                         }\else{Proc. AMS}\fi}" }

@preamble{ "\newcommand{\linalgappl    }{\if\lName1\skp{    }{Linear Algebra and its Applications}{            }\else{Lin. Alg. \& App.}\fi}" }

@preamble{ "\newcommand{\jalgo         }{\if\lName1\skp{    }{Journal of Algorithms}{                          }\else{J. Alg.}\fi}" }

@preamble{ "\newcommand{\jcss          }{\if\lName1\skp{    }{Journal of Computer and System Sciences}{        }\else{J. Comp. Sys. Sci.}\fi}" }

@preamble{ "\newcommand{\cc            }{\if\lName1\skp{    }{Computational Complexity}{                       }\else{Comp. Comp.}\fi}" }

@preamble{ "\newcommand{\algor         }{\if\lName1\skp{    }{Algorithmica}{                                   }\else{Alg.}\fi}" }

@preamble{ "\newcommand{\comb          }{\if\lName1\skp{    }{Combinatorica}{                                  }\else{Comb.}\fi}" }

@preamble{ "\newcommand{\cacm          }{\if\lName1\skp{    }{Communications of the ACM}{                      }\else{Comm. ACM}\fi}" }

@preamble{ "\newcommand{\sigart        }{\if\lName1\skp{    }{SIGART Bulletin}{                                }\else{SIGART Bull.}\fi}" }

@preamble{ "\newcommand{\sigactn       }{\if\lName1\skp{    }{SIGACT News}{                                    }\else{SIGACT News}\fi}" }

@preamble{ "\newcommand{\eatcsbul      }{\if\lName1\skp{    }{Bulletin of the {EATCS}}{                        }\else{Bull. {EATCS}}\fi}" }

@preamble{ "\newcommand{\siamrev       }{\if\lName1\skp{    }{SIAM Review}{                                    }\else{SIAM Rev.}\fi}" }

@preamble{ "\newcommand{\siamjc        }{\if\lName1\skp{    }{SIAM Journal on Computing}{                      }\else{SIAM J. Comp.}\fi}" }

@preamble{ "\newcommand{\siamjo        }{\if\lName1\skp{    }{SIAM Journal on Optimization}{                   }\else{SIAM J. Opt.}\fi}" }

@preamble{ "\newcommand{\siamjdm       }{\if\lName1\skp{    }{SIAM Journal on Discrete Mathematics}{           }\else{SIAM J. Disc. Math.}\fi}" }

@preamble{ "\newcommand{\siamjnum      }{\if\lName1\skp{    }{SIAM Journal on Numerical Analysis}{             }\else{SIAM J. Num. Anal.}\fi}" }

@preamble{ "\newcommand{\siamjmathanal }{\if\lName1\skp{    }{SIAM Journal on Mathematical Analysis}{          }\else{SIAM J. Math. Anal.}\fi}" }

@preamble{ "\newcommand{\discmath      }{\if\lName1\skp{    }{Discrete Mathematics}{                           }\else{Disc. Math.}\fi}" }

@preamble{ "\newcommand{\das           }{\if\lName1\skp{    }{Discrete Applied Mathematics}{                   }\else{Disc. App. Math.}\fi}" }

@preamble{ "\newcommand{\amatstat      }{\if\lName1\skp{    }{Annals of Mathematical Statistics}{              }\else{Ann. Math. Stat.}\fi}" }

@preamble{ "\newcommand{\rms           }{\if\lName1\skp{    }{Russian Mathematical Surveys}{                   }\else{Russ. Math. Surv.}\fi}" }

@preamble{ "\newcommand{\invmath       }{\if\lName1\skp{    }{Inventiones Mathematicae}{                       }\else{Inv. Math.}\fi}" }

@preamble{ "\newcommand{\jnumber       }{\if\lName1\skp{    }{Journal of Number Theory}{                       }\else{J. Num. Th.}\fi}" }

@preamble{ "\newcommand{\toc           }{\if\lName1\skp{    }{Theory of Computing}{                            }\else{Th. Comp.}\fi}" }

@preamble{ "\newcommand{\cjtcs         }{\if\lName1\skp{    }{Chicago Journal of Theoretical Computer Science}{}\else{Chic. J. Th. Comp. Sci.}\fi}" }

@preamble{ "\newcommand{\tocsys         }{\if\lName1\skp{    }{Theory of Computing Systems}{}\else{Theory Comput. Syst.}\fi}" }

@preamble{ "\newcommand{\quantum       }{\if\lName1\skp{    }{{Quantum}}{                                          }\else{Quant.}\fi}" }

@preamble{ "\newcommand{\cmp           }{\if\lName1\skp{    }{Communications in Mathematical Physics}{             }\else{Comm. Math. Phys.}\fi}" }

@preamble{ "\newcommand{\jmp           }{\if\lName1\skp{    }{Journal of Mathematical Physics}{                    }\else{J. Math. Phys.}\fi}" }

@preamble{ "\newcommand{\rspa          }{\if\lName1\skp{    }{Proceedings of the Royal Society A}{                 }\else{Proc. Roy. Soc. A}\fi}" }

@preamble{ "\newcommand{\qic           }{\if\lName1\skp{    }{Quantum Information and Computation}{                }\else{Quant. Inf. \& Comp.}\fi}" }

@preamble{ "\newcommand{\physrev       }{\if\lName1\skp{    }{Physical Review}{                                    }\else{Phys. Rev.}\fi}" }

@preamble{ "\newcommand{\pra           }{\if\lName1\skp{    }{Physical Review A}{                                  }\else{Phys. Rev. A}\fi}" }

@preamble{ "\newcommand{\prb           }{\if\lName1\skp{    }{Physical Review B}{                                  }\else{Phys. Rev. B}\fi}" }

@preamble{ "\newcommand{\pre           }{\if\lName1\skp{    }{Physical Review E}{                                  }\else{Phys. Rev. E}\fi}" }

@preamble{ "\newcommand{\prr           }{\if\lName1\skp{    }{Physical Review Research}{                           }\else{Phys. Rev. Research}\fi}" }

@preamble{ "\newcommand{\prx           }{\if\lName1\skp{    }{Physical Review X}{                                  }\else{Phys. Rev. X}\fi}" }

@preamble{ "\newcommand{\prl           }{\if\lName1\skp{    }{Physical Review Letters}{                            }\else{Phys. Rev. Lett.}\fi}" }

@preamble{ "\newcommand{\njp           }{\if\lName1\skp{    }{New Journal of Physics}{                             }\else{New J. Phys.}\fi}" }

@preamble{ "\newcommand{\prapp         }{\if\lName1\skp{    }{Physical Review Applied}{                            }\else{Phys. Rev. Appl.}\fi}" }

@preamble{ "\newcommand{\physrep       }{\if\lName1\skp{    }{Physics Reports}{                                    }\else{Phys. Rep.}\fi}" }

@preamble{ "\newcommand{\rmp           }{\if\lName1\skp{    }{Reviews of Modern Physics}{                          }\else{Rev. Mod. Phys. }\fi}" }

@preamble{ "\newcommand{\phystoday     }{\if\lName1\skp{    }{Physics Today}{                                      }\else{Phys. Today}\fi}" }

@preamble{ "\newcommand{\physics       }{\if\lName1\skp{    }{Physics}{                                            }\else{Phys.}\fi}" }

@preamble{ "\newcommand{\nature        }{\if\lName1\skp{    }{Nature}{                                             }\else{Nat.}\fi}" }

@preamble{ "\newcommand{\natcomm       }{\if\lName1\skp{    }{Nature Communications}{                              }\else{Nat. Comm.}\fi}" }

@preamble{ "\newcommand{\natphys       }{\if\lName1\skp{    }{Nature Physics}{                                     }\else{Nat. Phys.}\fi}" }

@preamble{ "\newcommand{\npjqi         }{\if\lName1\skp{    }{npj Quantum Information}{                            }\else{npj Quant. Inf.}\fi}" }

@preamble{ "\newcommand{\scirep        }{\if\lName1\skp{    }{Scientific Reports}{                                 }\else{Sci. Rep.}\fi}" }

@preamble{ "\newcommand{\science       }{\if\lName1\skp{    }{Science}{                                            }\else{Sci.}\fi}" }

@preamble{ "\newcommand{\jpa           }{\if\lName1\skp{    }{Journal of Physics A: Mathematical and Theoretical}{ }\else{J. Phys. A}\fi}" }

@preamble{ "\newcommand{\ijtp          }{\if\lName1\skp{    }{International Journal of Theoretical Physics}{       }\else{Int. J. Th. Phys.}\fi}" }

@preamble{ "\newcommand{\jmo           }{\if\lName1\skp{    }{Journal of Modern Optics}{                           }\else{J. Mod. Opt.}\fi}" }

@preamble{ "\newcommand{\jstatph       }{\if\lName1\skp{    }{Journal of Statistical Physics}{                     }\else{J. Stat. Phys.}\fi}" }

@preamble{ "\newcommand{\pnas          }{\if\lName1\skp{    }{Proceedings of the National Academy of Sciences}{    }\else{PNAS}\fi}" }

@preamble{ "\newcommand{\lncs          }{\if\lName1\skp{    }{Lecture Notes in Computer Science}{                  }\else{L. Notes Comp. Sci.}\fi}" }

@preamble{ "\newcommand{\lnai          }{\if\lName1\skp{    }{Lecture Notes in Artificial Intelligence}{           }\else{L. Notes Art. Int.}\fi}" }

@preamble{ "\newcommand{\lnm           }{\if\lName1\skp{    }{Lecture Notes in Mathematics}{                       }\else{L. Notes Math.}\fi}" }

@preamble{ "\newcommand{\tams          }{\if\lName1\skp{    }{Transactions of the American Mathematical Society}{  }\else{Trans. AMS}\fi}" }

@preamble{ "\newcommand{\ieeetit        }{\if\lName1\skp{    }{{IEEE} Transactions on Information Theory}{          }\else{{IEEE} Trans. Inf. Th.}\fi}" }

@preamble{ "\newcommand{\iscs          }{\if\lName1\skp{    }{International Series in Computer Science}{           }\else{Int. Ser. Comp. Sci.}\fi}" }

@preamble{ "\newcommand{\tocl          }{\if\lName1\skp{    }{Theory of Computing Library}{                        }\else{Th. Comp. Lib.}\fi}" }

@STRING{cryptology	= "Journal of Cryptology"}

@inproceedings{JefferyZ23,
  author       = {Stacey Jeffery and
                  Sebastian Zur},
  editor       = {Barna Saha and
                  Rocco A. Servedio},
  title        = {Multidimensional Quantum Walks},
  booktitle    = {Proceedings of the 55th Annual {ACM} Symposium on Theory of Computing,
                  {STOC} 2023, Orlando, FL, USA, June 20-23, 2023},
  pages        = {1125--1130},
  publisher    = {{ACM}},
  year         = {2023},
  url          = {https://doi.org/10.1145/3564246.3585158},
  doi          = {10.1145/3564246.3585158},
  timestamp    = {Tue, 07 May 2024 20:08:57 +0200},
  biburl       = {https://dblp.org/rec/conf/stoc/JefferyZ23.bib},
  bibsource    = {dblp computer science bibliography, https://dblp.org}
}

@article{aaronson2004QLowerBndCollisionAndElementDistinct,
	title={Quantum lower bounds for the collision and the element distinctness problems},
	author={Aaronson, Scott and Shi, Yaoyun},
	journal={\jacm},
	volume={51},
	number={4},
	pages={595--605},
	year={2004},
	doi = {10.1145/1008731.1008735},
}

@article{ajtai2005EDlowerbound,
	title = {A non-linear time lower bound for {B}oolean branching programs},
	author = {Mikl\'os Ajtai},
	journal = {\toc},
	volume = {1},
	pages = {149--176},
	year = {2005},
        doi = {10.1109/SFFCS.1999.814578}}

@article{ambainis2004QWalkForElementDist,
	author = {Ambainis, Andris},
	title = {Quantum Walk Algorithm for Element Distinctness},
	journal = {\siamjc},
	year = {2007},
	volume = {37},
	number = {1},
	pages = {210--239},
	doi = {10.1137/S0097539705447311},
eprinttype ={extra-open},
eprint  = {https://arxiv.org/abs/quant-ph/0311001},
	note = {Earlier version in FOCS'04.},	
}

@article{ambainis2010VTSearch,
	author    = {Andris Ambainis},
	title     = {Quantum Search with Variable Times},
	journal = {\tocsys},
	volume = {47},
	pages     = {786--807},
	year      = {2010},
	doi       = {10.1007/s00224-009-9219-1},
eprinttype ={extra-open},
eprint  = {https://arxiv.org/abs/quant-ph/0609168}
}

@inproceedings{ambainis2019QuadSpeedupFindingMarkedQW,
	author = {Ambainis, Andris and Gily\'{e}n, Andr\'{a}s and Jeffery, Stacey and Kokainis, Martins},
	title = {Quadratic Speedup for Finding Marked Vertices by Quantum Walks},
	year = {2020},
	booktitle = {\stoc{52nd}},
	pages = {412–424},
	numpages = {13},
	doi = {10.1145/3357713.3384252},
eprinttype ={extra-open},
eprint  = {https://arxiv.org/abs/1903.07493},
}

@inproceedings{apers2019UnifiedFrameworkQWSearch,
	title	={A Unified Framework of Quantum Walk Search},
	author  ={Apers, Simon and Gily{\'e}n, Andr{\'a}s and Jeffery, Stacey},
	year	={2020},
	booktitle = {\stacs{38th}},
	pages = {6:1--6:13},
	doi = {10.4230/LIPIcs.STACS.2021.6},
eprinttype ={extra-open},
eprint  = {https://arxiv.org/abs/1912.04233},
}

@article{apers2022quadratic,
	title={Quadratic speedup for spatial search by continuous-time quantum walk},
	author={Apers, Simon and Chakraborty, Shantanav and Novo, Leonardo and Roland, J{\'e}r{\'e}mie},
	journal={Physical review letters},
	volume={129},
	number={16},
	pages={160502},
	year={2022},
	publisher={APS},
        doi = {10.1103/PhysRevLett.129.160502}
}

@article{atia2021welded,
	title = {Improved Upper Bounds for the Hitting Times of Quantum Walks},
	author = {Atia, Yosi and Chakraborty, Shantanav},
	year = {2021},
	journal = {\pra},
	volume = {104},
	pages = {032215},
	doi = {10.1103/PhysRevA.104.032215},
eprinttype ={extra-open},
eprint  = {https://arxiv.org/abs/2005.04062}
}

@inproceedings{belovs2012LG,
	author = {Belovs, Aleksandrs},
	title = {Span programs for functions with constant-sized 1-certificates},
	booktitle = {\stoc{44th}},
	year = {2012},
	pages = {77--84},
	doi = {10.1145/2213977.2213985},	
}

@inproceedings{belovs2012kDist,
	Author = {Belovs, Aleksandrs},
	Booktitle = {\focs{53rd}},
	pages = {207--216},
	Title = {Learning-graph-based quantum algorithm for $k$-distinctness},
	Year = {2012},
	doi = {10.1109/FOCS.2012.18},
eprinttype ={extra-open},
eprint  = {https://arxiv.org/abs/1205.1534},
}

@inproceedings{belovs2013TimeEfficientQW3Distintness,
	 author       = {Aleksandrs Belovs and
                  Andrew M. Childs and
                  Stacey Jeffery and
                  Robin Kothari and
                  Fr{\'{e}}d{\'{e}}ric Magniez},
	title = {Time-Efficient Quantum Walks for 3-Distinctness},
	booktitle = {\icalp{40th}},
	year = {2013},
	pages = {105--122},
	numpages = {18},
	doi = {10.1007/978-3-642-39206-1_10},
}

@misc{belovs2013ElectricWalks,
	author = {Belovs, Aleksandrs},
	title = {Quantum walks and electric networks},
	year = {2013},	
    DOI = {10.48550/arXiv.1302.3143}
}

@article{bernstein1997BValg,
	author = {Bernstein, Ethan and Vazirani, Umesh},
	year = {1997},
	title = {Quantum Complexity Theory},
	journal = {\siamjc},
	volume = {26},
	issue = {5},
	pages = {1411--1473},
	doi = {10.1137/S0097539796300921}
}

@INCOLLECTION{brassard2002AmpAndEst,
	AUTHOR    = {Gilles Brassard and Peter H{\o}yer and Michele Mosca and Alain Tapp},
	TITLE     = {Quantum Amplitude Amplification and Estimation},
	BOOKTITLE = {Quantum Computation and Quantum Information: A Millennium Volume},
	SERIES    = {Contemporary Mathematics Series},
	PUBLISHER = {AMS},
	VOLUME    = {305},
	PAGES     = {53--74},
	YEAR      = {2002},
	doi	= {10.1090/conm/305/05215},
eprinttype ={extra-open},
eprint  = {https://arxiv.org/abs/0005055},
}

@article{brassard1997collision,
	Author = {Brassard, Gilles and H{\o}yer, Peter and Tapp, Alain},
	Journal = {ACM SIGACT News},
eprinttype ={extra-open},
eprint  = {https://arxiv.org/abs/9705002},
	Pages = {14--19},
	Title = {Quantum Algorithm for the Collision Problem},
	Volume = {28},
	Year = {1997},
        doi = {10.1007/BFb0054319}}

@article{buhrman2001ElementDistinctness,
	author = {Buhrman, Harry and D{\"u}rr, Christoph and Heiligman, Mark and H{\o}yer, Peter and Magniez, Fr\'ed\'eric and Santha, Miklos and {\dutchPrefix{Wolf}{d}}e Wolf, Ronald},
	title = {Quantum Algorithms for Element Distinctness},
	journal = {\siamjc},
	volume = {34},
	number = {6},
	pages = {1324--1330},
	year = {2005},
	doi = {10.1137/S0097539702402780},
eprinttype ={extra-open},
eprint  = {https://arxiv.org/abs/quant-ph/0007016},
	note = {Earlier version in CCC'01.},
}

@inproceedings{buhrman2022limits,
    title={Limits of Quantum Speed-Ups for Computational Geometry and Other Problems: Fine-Grained Complexity via Quantum Walks},
    author={Buhrman, Harry and Loff, Bruno and Patro, Subhasree and Speelman, Florian},
    booktitle={13th Innovations in Theoretical Computer Science Conference (ITCS 2022)},
    year={2022},
    organization={Schloss Dagstuhl-Leibniz-Zentrum f{\"u}r Informatik},
    doi = {10.4230/LIPIcs.ITCS.2022.31}
}

@inproceedings{bun2018PolyMethodStrikesBack,
	title={The polynomial method strikes back: Tight quantum query bounds via dual polynomials},
	author={Bun, Mark and Kothari, Robin and Thaler, Justin},
	booktitle={\stoc{50th}},
	year={2018},
	doi = {10.1145/3188745.3188784},
eprinttype ={extra-open},
eprint  = {https://arxiv.org/abs/1710.09079},
}

@article{chandra1996ElectricalResAndCommute,
	title={The electrical resistance of a graph captures its commute and cover times},
	author={Chandra, Ashok K. and Raghavan, Prabhakar and Ruzzo, Walter L. and Smolensky, Roman and Tiwari, Prasoon},
	journal={Computational Complexity},
	volume={6},
	number={4},
	pages={312--340},
	year={1996},
	doi = {10.1007/BF01270385},
}

@inproceedings{childs2003ExpSpeedupQW,
	title={Exponential algorithmic speedup by a quantum walk},
	author={Childs, Andrew M. and Cleve, Richard and Deotto, Enrico and Farhi, Edward and Gutmann, Sam and Spielman, Daniel A.},
	booktitle={\stoc{35th}},
	pages={59--68},
	year={2003},
	doi={10.1145/780542.780552},
	 eprinttype ={extra-open},
eprint  = {https://arxiv.org/abs/quant-ph/0209131}
}

@article{childs2005quantum,
    title={Quantum algorithms for subset finding},
    author={Childs, Andrew M and Eisenberg, Jason M},
    journal={Quantum Information \& Computation},
    volume={5},
    number={7},
    pages={593--604},
    year={2005},
    publisher={Rinton Press, Incorporated Paramus, NJ},
    doi = {10.5555/2011656.2011663}
}

@misc{childs2013arXivTimeEfficientQW3Distintness,
	  author       = {Andrew M. Childs and
                  Stacey Jeffery and
                  Robin Kothari and
                  Fr{\'{e}}d{\'{e}}ric Magniez},
	title = {A Time-Efficient Quantum Walk for 3-Distinctness Using Nested Updates},
	year = {2013},
	DOI = {10.48550/arXiv.1302.7316}
}

@inproceedings{cornelissen2020SpanProgramTime,
	author = {Cornelissen, Arjan and Jeffery, Stacey and Ozols, Maris and Piedrafita, Alvaro},
	title = {Span programs and quantum time complexity},
	year = {2020},
	booktitle = {\mfcs{45th}},
	pages = {21:1--26:14},
eprinttype ={extra-open},
eprint  = {https://arxiv.org/abs/2005.01323},
	doi = {10.4230/LIPIcs.MFCS.2020.26},
}

@inproceedings{costello2016isogeny,
	author = {Costello, Craig and Longa, Patrick and Naehrig, Michael},
	title = {Efficient algorithms for supersingular isogeny Diffie-Hellman},
	year = {2016},
	booktitle = {Advances in Cryptology (CRYPTO 2016)}, 
	pages = {572--601},
        doi = {10.1007/978-3-662-53018-4_21}
}

@book{Janson2011RandomGraphs,
	Author = {Janson, Svante. and Luczak, Tomasz. and Rucinski, Andrzej},
	Isbn = {9781118030967},
	Publisher = {John Wiley \& Sons},
	Series = {Wiley Series in Discrete Mathematics and Optimization},
	Title = {Random Graphs},
	Url = {http://books.google.ca/books?id=RjnqVoB4VmUC},
	Year = {2011},
	doi = {10.1002/9781118032718},
	Bdsk-Url-1 = {http://books.google.ca/books?id=RjnqVoB4VmUC}}

@phdthesis{jeffery2014thesis,
	author = {Jeffery, Stacey},
	title = {Frameworks for Quantum Algorithms},
	year = {2014},
	school = {University of Waterloo},	
	url = {http://uwspace.uwaterloo.ca/handle/10012/8710},
}

@article{kaplan2009derandomized,
  title={Derandomized constructions of k-wise (almost) independent permutations},
  author={Kaplan, Eyal and Naor, Moni and Reingold, Omer},
  journal={\algor},
  volume={55},
  number={1},
  pages={113--133},
  year={2009},
  doi = {10.1007/s00453-008-9267-y},
  publisher={Springer}
}

@article{kitaev1996PhaseEst,
	author = {Kitaev, Alexei Y.},
	title = {Quantum measurements and the {A}belian stabilizer problem},
	journal= {ECCC},
	year= {1996},
	volume= {TR96-003},
   DOI = {10.48550/arXiv.quant-ph/9511026},
}

@article{kitaev1999quantum,
	title={Quantum np},
	author={Kitaev, Alexei},
	journal={Talk at AQIP},
	volume={99},
	year={1999}
}

@inproceedings{lee2011QQueryCompStateConv,
	author = {Lee, Troy and Mittal, Rajat and Reichardt, Ben W. and {\v{S}}palek, Robert and {\multiletter{Sz}}egedy, M\'{a}ri\'{o}},
	title = {Quantum Query Complexity of State Conversion},
	booktitle = {\focs{52nd}},
	year = {2011},
	pages = {344--353},
	doi = {10.1109/FOCS.2011.75},
eprinttype ={extra-open},
eprint  = {https://arxiv.org/abs/1011.3020}
}

@article{magniez2006SearchQuantumWalk,
	author = {Frédéric Magniez and Ashwin Nayak and Jérémie Roland and Miklos Santha},
	title = {Search via Quantum Walk},
	journal = {\siamjc},
	volume = {40},
	number = {1},
	pages = {142-164},
	year = {2011},
	doi = {10.1137/090745854},
eprinttype ={extra-open},
eprint  = {https://arxiv.org/abs/quant-ph/0608026},
	note = {Earlier version in STOC'07.},
}

@inproceedings{mande2020kDistLB,
	author =	{Nikhil S. Mande and Justin Thaler and Shuchen Zhu},
  	title =	{{Improved Approximate Degree Bounds for k-Distinctness}},
 	 booktitle =	{\tqc{15th}},
 	 pages =	{2:1--2:22},
  	year =	{2020},
  	volume =	{158},
  	doi =		{10.4230/LIPIcs.TQC.2020.2},
eprinttype ={extra-open},
eprint  = {https://arxiv.org/abs/2002.08389}
}

@inproceedings{szegedy2004QMarkovChainSearch, 
	author={Mario Szegedy}, 
	booktitle={\focs{45th}}, 
	title={Quantum speed-up of {M}arkov chain based algorithms}, 
	year={2004}, 
	pages={32-41}, 
	doi={10.1109/FOCS.2004.53}, 
eprinttype ={extra-open},
eprint  = {https://arxiv.org/abs/quant-ph/0401053},
}

@article{tani2009claw,
	author = {Tani, Seiichiro},
	title = {Claw finding algorithms using quantum walk},
	journal = {Theoretical Computer Science},
	volume = {410},
	number = {50},
	year = {2009},
	pages = {5285--5297},
        doi = {10.1016/j.tcs.2009.08.030}
}

\end{document}